\documentclass[a4paper]{article}
\usepackage{fullpage}
\usepackage{mathtools}

\usepackage{amsmath}
\DeclareMathOperator{\ADV}{ADV}
\DeclareMathOperator{\diag}{diag}

\DeclareMathOperator{\poly}{poly}
\DeclareMathOperator{\polylog}{polylog}
\DeclareMathOperator{\Span}{Span}
\DeclareMathOperator{\Vol}{Vol}
\DeclareMathOperator{\Var}{Var}
\DeclareMathOperator{\spec}{spec}

\usepackage{amsthm}
\usepackage{thmtools} 
\usepackage[bookmarksnumbered,linktocpage,hypertexnames=false,colorlinks=true,linkcolor=blue,urlcolor=blue,citecolor=blue,anchorcolor=green,breaklinks=true,pdfusetitle]{hyperref}
\usepackage[capitalize]{cleveref}
\newtheorem{theorem}{Theorem}[section]
\newtheorem{lemma}[theorem]{Lemma}
\newtheorem{definition}[theorem]{Definition}
\newtheorem{corollary}[theorem]{Corollary}
\newtheorem{remark}[theorem]{Remark}

\usepackage{algorithm}
\makeatletter
\newenvironment{breakablealgorithm}
  {
    \vspace{.7em}
     \refstepcounter{algorithm}
     \hrule height.8pt depth0pt \kern2pt
     \renewcommand{\caption}[2][\relax]{
       {\raggedright\textbf{\fname@algorithm~\thealgorithm} ##2\par}%
       \ifx\relax##1\relax 
         \addcontentsline{loa}{algorithm}{\protect\numberline{\thealgorithm}##2}%
       \else 
         \addcontentsline{loa}{algorithm}{\protect\numberline{\thealgorithm}##1}%
       \fi
       \kern2pt\hrule\kern2pt
     }
  }{
     \kern2pt\hrule\relax
    \vspace{.7em}
  }
\makeatother

\usepackage{amssymb}
\newcommand{\C}{\ensuremath{\mathbb{C}}}
\newcommand{\R}{\ensuremath{\mathbb{R}}}
\newcommand{\N}{\ensuremath{\mathbb{N}}}
\newcommand{\D}{\ensuremath{\mathcal{D}}}
\newcommand{\E}{\ensuremath{\mathbb{E}}}
\renewcommand{\P}{\ensuremath{\mathbb{P}}}

\newcommand{\Z}{\ensuremath{\mathbb{Z}}}

\newcommand{\ket}[1]{\ensuremath{\left|#1\right\rangle}}
\newcommand{\norm}[1]{\ensuremath{\left\|#1\right\|}}
\newcommand{\bra}[1]{\ensuremath{\left\langle#1\right|}}
\newcommand{\braket}[2]{\ensuremath{\left\langle#1\middle|#2\right\rangle}}

\newcommand{\tO}{\widetilde{O}}
\newcommand{\tTh}{\widetilde{\Theta}}
\newcommand{\tOm}{\widetilde{\Omega}}
\newcommand{\eps}{\varepsilon}

\usepackage{tikz}

\usepackage{xcolor}
\usepackage{pdfcolfoot} 

\usepackage{enumitem}
\usepackage{bbm}
\usepackage{multirow}
\usepackage{authblk}

\title{Amortized quantum walks and volume estimation algorithms}
\title{Improved quantum volume estimation with transducers and amortized quantum walks}
\author[1]{Arjan Cornelissen}
\author[2]{Simon Apers}
\author[3]{Sander Gribling}
\affil[1]{Department of Computer Science, Cambridge University, UK}
\affil[2]{IRIF, CNRS, Universit\'e Paris Cit\'e, France}
\affil[3]{Department of Econometrics \& Operations Research, Tilburg University, Netherlands}
\begin{document}
    \maketitle

    \begin{abstract}
        The volume estimation problem is a classic task in computational geometry. The development of randomized algorithms for this problem spurred the development of many influential algorithmic techniques related to Markov Chain Monte Carlo and simulated annealing, and the problem connects to several important geometrical results, like the KLS conjecture. In this work, we quantize the state-of-the-art $\widetilde{O}(d^{3.5}+d^3/\varepsilon^2)$-query randomized algorithm developed by Cousins and Vempala, and obtain a $\widetilde{O}(d^{3.5} + d^{1.75}/\varepsilon)$-query quantum algorithm, improving over the $\widetilde{O}(d^{3.5} + d^{2.25}/\varepsilon)$ state-of-the-art bound. Our key technical contribution is a framework for amortizing the cost of a quantum walk. The framework is based on the recent transducer toolkit introduced by Belovs, Jeffery and Yolcu. It is this amortized quantum walk framework that allows us to exploit the amortized analysis of the ball walk by Cousins and Vempala, thus overcoming the key barrier that previously barred its quantum implementation.
    \end{abstract}
    \thispagestyle{empty}


    {\small \tableofcontents}
    \thispagestyle{empty}


    \clearpage
    \pagenumbering{arabic}
    \section{Introduction}

The problem of estimating the volume of a convex body is a canonical problem that has received a lot of attention in computational geometry and theoretical computer science. The problem statement is as follows. Let $d$ be an integer and $R \geq 1$. We are given a convex body $K \subseteq \R^d$ that contains the unit ball and is contained in a ball of radius $R$ centered at the origin, that is, $B_d \subseteq K \subseteq RB_d$ where $B_d$ is the unit ball in $d$ dimensions. Now, given some relative error $0 < \varepsilon < 1$, the goal is to output an $\varepsilon$-precise relative estimate of $\Vol(K)$, i.e., we are asked to output a positive real $\widetilde{V} > 0$, such that
\[\left|\frac{\widetilde{V}}{\Vol(K)} - 1\right| \leq \varepsilon.\]
We assume to have access to $K$ through membership queries, i.e., we can take a point in space $x \in \R^d$, and ask an oracle $O_K$ whether $x$ is contained in $K$.
The randomized (resp.~quantum) query complexity of the volume estimation problem corresponds to the number of calls to $O_K$ made by a randomized (resp.~quantum) algorithm.

Following a long line of work, the state-of-the-art bound on the randomized query complexity is
\[
\tO(d^{3.5} + d^3/\varepsilon^2),
\]
where the tildes hide polylogarithmic dependencies on $d$, $1/\varepsilon$ and $R$.
This follows from combining a $\tO(d^{3.5})$-query algorithm by Jia, Laddha, Lee and Vempala \cite{jia2026reducing} for rounding the body with the $\tO(d^3/\varepsilon^2)$-query algorithm for volume estimation by Cousins and Vempala \cite{CV18}.
In line with earlier works, this last algorithm rephrases the problem as that of estimating a partition function, and then solves this problem using Markov chain Monte Carlo and simulated annealing.
A critical component of the algorithm in \cite{CV18} is an amortized analysis of the so-called \emph{speedy walk} on~$K$.

State-of-the-art on the quantum side is the algorithm by Cornelissen and Hamoudi \cite{cornelissen2023sublinear} with quantum query complexity
\[
\tO(d^{3.5} + d^{2.25}/\varepsilon).
\]
The algorithm combines quantum walks with quantum annealing and mean estimation, and it improves on the earlier quantum algorithm by Chakrabarti, Childs, Hung, Li, Wang and Wu \cite{chakrabarti2023quantum}.
Rather than building on \cite{CV18}, these algorithms effectively quantize an earlier and slower algorithm based on the hit-and-run walk \cite{lovasz2006simulated}.
The main obstacle preventing a quantization of \cite{CV18} has been the absence of a suitable amortized analysis in the quantum setting.

In this work, we overcome this obstacle and prove the following main theorem.
\begin{theorem}[Main result, informal version] \label{thm:intro-main}
There is a bounded-error quantum algorithm that estimates the volume of a convex body $K \subseteq \R^d$, satisfying $B_d \subseteq K \subseteq RB_d$, with a number of queries
\[\widetilde{O}\left(d^{3.5} + \frac{d^{1.75}}{\varepsilon}\right).\]
\end{theorem}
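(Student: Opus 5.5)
We will first reduce to the well-rounded case. Using the $\tO(d^{3.5})$-query rounding algorithm of Jia, Laddha, Lee and Vempala \cite{jia2026reducing}, which can be run classically (so its cost carries over verbatim), we obtain an affine map after which $K$ is well-rounded: it contains $B_d$ and most of its mass lies within radius $\tO(\sqrt{d})$. This accounts for the $d^{3.5}$ term. The remaining task is to estimate the volume of a well-rounded body to relative error $\varepsilon$ with $\tO(d^{1.75}/\varepsilon)$ queries. Following \cite{CV18}, we write $\Vol(K)$ as a telescoping product of ratios $Z(\sigma_{i+1}^2)/Z(\sigma_i^2)$ of Gaussian integrals $Z(\sigma^2)=\int_K e^{-\|x\|^2/(2\sigma^2)}dx$. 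We use a cooling schedule with $\tO(\sqrt d)$ phases, in which each ratio has relative variance $O(1/\sqrt d)$ under the corresponding Gaussian restricted to $K$.

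The quantum estimation layer follows \cite{cornelissen2023sublinear}. We replace classical sampling by coherent preparation of the stationary states $\ket{\pi_i}$ of the walks. Each ratio is then estimated with quantum mean estimation, which needs $\tO(\sqrt{\text{rel.\ var}}/\varepsilon_i)$ state-reflection uses. Allocating errors so that $\sum_i \varepsilon_i^2 \lesssim \varepsilon^2$ (equivalently $\varepsilon_i \approx \varepsilon/d^{1/4}$) gives a total of about $\tO(\sqrt d \cdot d^{-1/4}\cdot d^{1/4}/\varepsilon)=\tO(\sqrt d/\varepsilon)$ reflections, up to polylog factors. Quantum annealing, i.e.\ moving $\ket{\pi_i}\to\ket{\pi_{i+1}}$ along the schedule, uses the warm-start overlap between consecutive phases. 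The budget works out if each reflection about $\ket{\pi_i}$ costs $\tO(d^{1.25})$ queries, since then $\sqrt{d}\cdot d^{1.25}=d^{1.75}$.

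The core step, and the main obstacle, is implementing these reflections. Classically, the speedy/ball walk mixes in $\tO(d^2\sigma^2)$ steps, and \cite{CV18} obtain $\tO(d^2)$ only in an amortized sense: the number of membership queries per proper step varies with the local conductance, and this variation is averaged across the run. A quantum walk has spectral gap $\delta$ of the speedy walk, with $1/\sqrt\delta = \tO(d)$, but each walk operator must be implemented coherently with a local cost that varies by position. A naive implementation pays the worst-case local cost, which destroys the amortization. Here we invoke the amortized quantum walk framework built from the transducer toolkit of Belovs, Jeffery and Yolcu. A transducer for the walk whose per-vertex subroutine (rejection sampling until a proper step, with cost $\ell(x)$ roughly inverse local conductance) has transduction cost $\ell(x)$ yields a walk implementation whose total cost scales with an $\ell_2$-average of the local costs under $\pi$, rather than with the maximum. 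We then need to bound the stationary average of $\ell(x)$ using the classical amortized estimates of \cite{CV18}. The aim is to show that a phase-estimation-based reflection costs $\tO(1/\sqrt{\delta})\cdot\sqrt{\E_\pi[\ell]}$-type quantities amounting to $\tO(d^{1.25})$ per reflection; tracking the $\sigma^2$-dependence may instead give a product whose sum over the schedule is $\tO(d^{1.75})$.

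The hardest part is this step: proving the transducer composition gives cost governed by the stationary average, and adapting the conductance-based amortized analysis of \cite{CV18}, which concerns the classical speedy walk, to a spectral-gap statement about the ball walk with the per-step costs folded in. We would first try defining the walk on the speedy walk's state space and treating the rejection loop as a variable-time subroutine. We would check that the phase-estimation error caused by transducer approximation is polylog. Finally, we would combine the phases with error and failure-probability bookkeeping, using a median of means over the $\tO(\sqrt d)$ ratios.
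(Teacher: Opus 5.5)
Your top-level reduction is the paper's: rounding to $R=\tO(\sqrt d)$, then \cref{thm:vol-est-algo}, whose extra additive $\tO(d^2)$ is absorbed by $d^{3.5}$. However, the accounting by which you reach $d^{1.75}/\varepsilon$ contains two errors that happen to cancel.

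\emph{Relative variance.} A schedule with $\tO(\sqrt d)$ phases cannot have per-ratio relative variance $O(1/\sqrt d)$. By \cref{lem:relative-variance}, the step $\beta\to(1-\alpha)\beta$ has relative variance $\exp(\Var_{\pi_\xi}[\norm{x}^2]\,\alpha^2\beta^2)-1$ for some $\xi\in[(1-2\alpha)\beta,\beta]$. Whenever the bulk of the Gaussian lies in $K$, one has $\Var[\norm{x}^2]\approx d/(2\beta^2)$. This holds for every admissible $K$ once $\beta\gtrsim d$, and throughout if, e.g., $K=\sqrt d B_d$. The relative variance is then about $e^{d\alpha^2/2}-1=\Theta(1)$ for $\alpha\approx 1/\sqrt d$. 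Reaching $O(1/\sqrt d)$ would force $\alpha\lesssim d^{-3/4}$, i.e.\ $\tOm(d^{3/4})$ phases. With $\Theta(1)$ relative variance and $\varepsilon_i\approx\varepsilon/d^{1/4}$, each phase needs $\tO(d^{1/4}/\varepsilon)$ reflections. That is $\tO(d^{3/4}/\varepsilon)$ in total, not $\tO(\sqrt d/\varepsilon)$.

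\emph{Cost per reflection.} Your own ingredients give $\tO(d)$ per reflection, not $\tO(d^{1.25})$: the inverse square root of the gap is $\tO(d)$ and the stationary average of the per-vertex cost is constant. Nothing in the proposal produces the extra $d^{1/4}$. The correct count is $\tO(d^{3/4}/\varepsilon)\cdot\tO(d)$. The paper gets the $\tO(d)$ by rebalancing (\cref{thm:balancing-plus-minus}) the $O(1)$ cost on $\ket{\pi}$ against the $O(d/(\beta\delta^2))=\tO(d^2)$ cost on orthogonal states from \cref{thm:speedy-walk-reflection}. For small $\beta$ this becomes $\tO(d/\sqrt{\beta_j})$ per reflection, which sums over the schedule to the $d^{5/4}R/\varepsilon$ term.

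More importantly, the step you identify as the crux lacks the ideas that make it work. Below, $\ell_\delta$ is the paper's local conductance, the reciprocal of your $\ell$.

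(i) \emph{Coherent star states.} The Metropolis-filtered speedy walk has star states with a ``stay'' amplitude $\sqrt{1-\sum_{y}P(x,y)a(x,y)}$ that depends on all rejected proposals. A rejection loop run as a variable-time subroutine neither prepares this state nor is garbage-free, and $x$-dependent garbage destroys the interference the walk relies on. The paper instead uses the bipartite Glauber walk (\cref{def:bipartite-glauber,def:bipartite-speedy}), which loses only a constant factor in the gap (\cref{lem:metropolis-to-glauber-filter}). Its vertex star states are uniform over $(x+\delta B_d)\cap K_\zeta$. They are prepared and reflected by a garbage-free amplitude-amplification transducer (\cref{cor:ampl-ampl,lem:star-state-transducers}) at costs $1/\sqrt{\ell_\delta(x)}$ and $1/\ell_\delta(x)$, while the edge star states need no queries. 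The reflection through $\ket{\pi}$ is then the exact electrical-flow transducer of \cref{thm:refl-stationary}, not phase estimation. Hence the whole algorithm stays a single transducer, converted to a circuit only once at the end (\cref{thm:transducer-to-alg}). Converting each reflection separately to error $\eta$ would cost $W/\eta^2$ per call.

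(ii) \emph{Input-dependent costs.} Transduction cost depends on the input state, and the reflections are not only applied to the stationary state. Annealing applies the reflection about $\ket{\pi_{i+1}}$ to the orthogonal part of $\ket{\pi_i}$, and mean estimation applies it to reweighted states $\ket{\overline{\pi}(Y)}$. So you need bounds on $\E_{\pi_i}[1/\ell_{\delta_{i+1}}]$ and on $\sum_x|\overline{\pi}(x)|^2/(\pi(x)\ell_\delta(x))$ (\cref{lem:annealing-average-local-conductance,lem:speedy-refl-random-variable}), not just on $\E_{\pi}[1/\ell_\delta]$. The available bound on the former is $(\delta_{i+1}/\delta_i)^d$, which can be $e^{\Theta(\sqrt d)}$ for a single phase. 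This forces splitting each step-size increase into $\tO(d\ln(\delta_{i+1}/\delta_i))$ sub-steps (\cref{cor:anneal-small-steps}), which is the source of the additive $\tO(d^2)$.

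(iii) \emph{Further missing steps.} The speedy-walk stationary density is proportional to $e^{-\beta\norm{x}^2}\ell_\delta(x)$, not the Gibbs density. You therefore need the rejection-sampling transducer of \cref{thm:speedy-to-ball} to pass to the Gibbs state before estimating means. Finally, because the transducer structure theory is finite-dimensional, you need a grid discretization of the walk that preserves its conductance.
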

The algorithm is based on an amortized quantum walk framework that we develop to quantize~\cite{CV18}.
The key to this framework is a recent new concept known as a \textit{transducer}, which was introduced by Belovs, Jeffery and Yolcu~\cite{belovs2024taming} as a useful, idealized proxy for a quantum algorithm.
We summarize our new framework in \cref{sec:intro-transducers}, and we sketch the corresponding quantum volume estimation algorithm in \cref{sec:intro-volume}.


\subsection{Main results: Transducers and amortized quantum walks} \label{sec:intro-transducers}

As the key mathematical object throughout this work, we next introduce transducers.
A transducer is a unitary operation that
maps a \textit{source state} $\ket{\sigma}$ to a \textit{target state} $\ket{\tau}$, with the help of a \textit{catalyst state} $\ket{w}$ that is restored at the end of the computation, i.e.,
\[\ket{\sigma} \oplus \ket{w} \overset{U}{\mapsto} \ket{\tau} \oplus \ket{w}.\]
We say $U$ transduces $\ket{\sigma}$ to $\ket{\tau}$, and that $\norm{\ket{w}}^2$ is the \textit{transduction complexity} of this operation.
Modulo some technicality, $\ket{\tau}$ and $\ket{w}$ are uniquely determined by $U$ and $\ket{\sigma}$.
We can hence write $\ket{\sigma} \raisebox{-.2em}{$\overset{U}{\rightsquigarrow}$} \ket{\tau}$ and we can denote the complexity $W(U,\ket{\sigma}) := \norm{\ket{w}}^2$.

One can advocate that such a transducer $U$ effectively behaves as a classical Las Vegas algorithm, which in a sense is also an idealized proxy for the usual notion of bounded-error randomized algorithm:
\begin{itemize}
\item
First, we can \emph{turn any transducer into a bounded-error quantum algorithm} in the circuit model.
More precisely, we can map $\ket{\sigma}$ to a state $\varepsilon$-close in $\ell_2$-norm to $\ket{\tau}$ with $O(W(U,\ket{\sigma})/\varepsilon^2)$ calls to~$U$.
\item
Second, \emph{transducers are more powerful than regular quantum algorithms:} many bounded-error quantum subroutines can be implemented without error as a transducer.
As a consequence, if we do algorithmic composition on the transducer level, we can avoid the pileup of errors that is typical for composition with bounded-error quantum subroutines.
\item
Finally, and crucially to this work, the \emph{complexity of a transducer is input-dependent:} a single transducer $U$ when applied to different input states $\ket{\sigma_1}$ and $\ket{\sigma_2}$ may have different complexities $W(U,\ket{\sigma_1}) \neq W(U,\ket{\sigma_2})$.
This contrasts with usual complexity measures such as the circuit complexity of a quantum subroutine, and it is precisely this that opens the door for amortization.
\end{itemize}

\noindent
In this work we develop novel transducers for several key quantum algorithmic primitives, and we expect these to be of independent interest.

\subsubsection{Amplitude amplification transducer}

The first such construction that we present is an error-free transducer for amplitude amplification.
The construction generalizes an earlier transducer for Grover's algorithm by Cornelissen, Edenhofer, Schaeffer and Szab\'o \cite{cornelissen2026quantum}, and it was independently and concurrently discovered by Dubus, Ladeuze and Roland~\cite{dubus2026transducer}.
The amplitude amplification transducer assumes access to a subroutine $U$ that reflects through the initial state.

\begin{theorem}[Amplitude amplification transducer, informal version of \Cref{cor:ampl-ampl}] \label{thm:intro-AA}
Let
\[\ket{\psi} = \ket{\psi_0}\ket{0} + \ket{\psi_1}\ket{1}, \qquad \text{and} \qquad \ket{\overline{\psi}} = \frac{\norm{\ket{\psi_1}}}{\norm{\ket{\psi_0}}} \ket{\psi_0}\ket{0} - \frac{\norm{\ket{\psi_0}}}{\norm{\ket{\psi_1}}} \ket{\psi_1}\ket{1},\]
and suppose that we have a transducer $U$ that reflects through $\ket{\psi}$, i.e., $\ket{\psi} \raisebox{-.2em}{$\overset{U}{\rightsquigarrow}$} \ket{\psi}$ and $\ket{\perp} \raisebox{-.2em}{$\overset{U}{\rightsquigarrow}$} -\ket{\perp}$ for all $\ket{\perp}$ satisfying $\braket{\perp}{\psi} = 0$.
Then we can implement a transducer $C$ such that
\[
\ket{\psi} \overset{C}{\rightsquigarrow} \frac{\ket{\psi_1}\ket{1}}{\norm{\ket{\psi_1}}}, \qquad \text{with complexity} \qquad O\left(\frac{W(U,\ket{\psi}) + \norm{\ket{\psi_0}}^2W(U,\ket{\overline{\psi}})}{\norm{\ket{\psi_1}}}\right).
\]
\end{theorem}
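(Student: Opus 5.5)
The plan is to build $C$ as a \emph{transducer for a quantum walk / Grover-type rotation}, following the transduction toolkit of Belovs, Jeffery and Yolcu and the Grover transducer of Cornelissen et al. Write $p = \norm{\ket{\psi_1}}^2$, so that $\norm{\ket{\psi_0}}^2 = 1-p$. The two-dimensional space $\Span\{\ket{\psi},\ket{\overline{\psi}}\}$ coincides with $\Span\{\ket{\psi_0}\ket{0},\ket{\psi_1}\ket{1}\}$ and is invariant under both $U$ and the phase flip $Z$ on the flag qubit.

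The standard transducer construction is the following. Given a target state $\ket{\tau}$ and a unitary $S$ acting on source plus workspace, find a catalyst $\ket{w}$ such that
\[
S(\ket{\sigma}\oplus\ket{w}) = \ket{\tau}\oplus\ket{w}.
\]
Equivalently, $\ket{w}$ solves a linear system of the form $(I-S_{ww})\ket{w} = S_{w\sigma}\ket{\sigma}$.

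\textbf{Step 1: the walk operator.} I would take $S$ to be a composition of the reflection $U$ with the reflection $Z$ that marks $\ket{1}$, with the workspace a copy of the same register. This is the Szegedy/Grover iterate $G = UZ$ restricted to the invariant plane.

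On that plane, $G$ is a rotation by angle $2\theta$ with $\sin\theta = \sqrt{p}$. Its eigenvalues are $e^{\pm 2i\theta}$, so $I-G$ is invertible on the plane with $\norm{(I-G)^{-1}} = O(1/\theta) = O(1/\sqrt p)$.

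\textbf{Step 2: the catalyst.} The condition that the source $\ket{\psi}$ exits as $\ket{\psi_1}\ket{1}/\sqrt p$ becomes a linear equation whose solution is
\[
\ket{w} = (I-G)^{-1}(\text{something proportional to } \ket{\psi} - \ket{\psi_1}\ket{1}/\sqrt p).
\]
This is the analogue of the geometric series $\sum_t G^t$ in a Las Vegas argument. I would solve it explicitly in the basis $\{\ket{\psi_0}\ket{0}/\sqrt{1-p},\ \ket{\psi_1}\ket{1}/\sqrt p\}$. I expect this to show that $\ket{w}$ decomposes as $\alpha\ket{\psi} + \beta\ket{\overline{\psi}}$ with
\[
|\alpha|^2 = O(1/\sqrt p), \qquad |\beta|^2 = O\bigl((1-p)/\sqrt p\bigr) \text{ up to normalization of } \ket{\overline{\psi}}.
\]

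\textbf{Step 3: composition.} Since $U$ itself is only available as a transducer, I would invoke the composition theorem for transducers (from Belovs–Jeffery–Yolcu, presumably stated earlier in the paper). It says that the total complexity equals the cost of the outer walk plus the sum of $W(U,\cdot)$ over the states on which $U$ is invoked in the catalyst. Transduction complexity is quadratic in the input state and, for inputs in the plane, splits across the eigen-decomposition $\ket{\psi}$, $\ket{\perp}$ of $U$. This gives a total of roughly
\[
|\alpha|^2 W(U,\ket{\psi}) + |\beta|^2 W(U,\ket{\overline{\psi}}),
\]
since $\ket{\overline{\psi}}$ is, up to scaling, the component orthogonal to $\ket{\psi}$ in the plane.

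With the normalization $\norm{\ket{\overline{\psi}}}^2 = 1$, this yields
\[
O\!\left(\frac{W(U,\ket{\psi}) + (1-p)\,W(U,\ket{\overline{\psi}})}{\sqrt p}\right),
\]
which matches the claim with $\norm{\ket{\psi_1}} = \sqrt p$.

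\textbf{Main obstacle.} I expect the main difficulty to be Step 3's cross terms. The transduction complexity of $U$ on a superposition is not obviously additive over $\ket{\psi}$ and $\ket{\overline{\psi}}$; the catalysts may overlap. I would first try to prove additivity by exploiting that $U$ acts as $\pm 1$ on these two orthogonal components, so that their catalysts can be chosen orthogonal, for instance by running $U$ controlled on a fresh flag that separates them. Failing that, I would bound the cross terms by Cauchy–Schwarz, which only costs a constant factor.
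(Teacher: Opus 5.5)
Your proposal has a genuine gap in Steps~1--2. You never exhibit the unitary $S$ on public $\oplus$ private space, and the mechanism you sketch cannot deliver the claimed bound.

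Your formula $\ket{w}=(I-G)^{-1}(\cdots)$ amounts to taking $S_{ww}=G$ on the plane. That is inconsistent with unitarity. If the private-to-private block of a unitary maps some subspace unitarily onto itself, unitarity forces that subspace to split off as a direct summand of $S$. The catalyst component there is then decoupled from the public output, and since $G$ has no eigenvalue $1$ on the plane, it must vanish.

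Even setting this aside, the norm count fails. On the plane, $G$ is a rotation by $2\theta$ with $\sin\theta=\sqrt p$, so \emph{both} eigenvalues are at distance exactly $2\sqrt p$ from $1$. Hence $(I-G)/(2\sqrt p)$ is unitary there, and $\norm{(I-G)^{-1}\ket{\phi}}^2=\norm{\ket{\phi}}^2/(4p)$ for every $\ket{\phi}$ in the plane. Take $\ket{\phi}$ to be an input-independent multiple of $\ket{\psi}-\ket{\psi_1}\ket{1}/\sqrt p$, whose squared norm is $2-2\sqrt p$. This gives $\norm{\ket{w}}^2=\Theta(1/p)$, which is quadratically worse than claimed. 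Reaching $O(1/\sqrt p)$ would require rescaling by $p^{1/4}$, but that depends on the implicit input and is not available to a canonical transducer. Your Step~2 expectation $|\alpha|^2=O(1/\sqrt p)$ is therefore unsupported.

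The real obstacle, which the proposal does not confront, is this: $\norm{\ket{\psi_1}}$ varies with the input, yet $C$ must be one input-independent canonical transducer that is exact and garbage-free for all inputs simultaneously. No fixed Grover schedule can do that.

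The paper resolves this at the level of the adversary bound, in four steps.
\begin{enumerate}
\item It converts $U$ (via \Cref{thm:transducer-to-adv,thm:unidirectional-bidirectional}) into a feasible bidirectional solution $\{\ket{v_x^{\pm}},\ket{\overline{v}_x^{\pm}}\}$ for the reflection problem on $\ket{\psi_x}$ and $\ket{\overline{\psi}_x}$.
\item It combines these vectors, through tensor products and direct sums, with multiples of $\ket{\chi_x}=e^{-b_xt}\in L^2(0,\infty)$, where $a_x=\norm{\ket{\psi_x^0}}$ and $b_x=\norm{\ket{\psi_x^1}}$.
\item For each pair $(x,y)$, the constraints reduce to a $2\times 2$ linear system with solution $a_y/(b_x+b_y)$ and $a_x/(b_x+b_y)$. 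Such pairwise quantities are realizable as inner products of per-input vectors precisely because $[1/(b_x+b_y)]_{x,y}$ is a Gram matrix (\Cref{lem:ratio}). The norm $\norm{\ket{\chi_x}}^2=1/(2b_x)$ is what produces the $1/\norm{\ket{\psi_1}}$ factor (\Cref{thm:ampl-ampl-adv}).
\item \Cref{thm:adv-to-transducer} then turns this solution back into $C$.
\end{enumerate}

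By contrast, the cross-term issue you single out is harmless. The catalyst map is linear (\Cref{thm:transducer-catalyst}), and $\norm{\ket{w_1}+\ket{w_2}}^2\le 2\norm{\ket{w_1}}^2+2\norm{\ket{w_2}}^2$ costs only a constant factor.
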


We also describe a transducer $V$ that reflects through the good state, rather than preparing it:

\begin{align} \label{eq:AA-refl}
\begin{split}
\frac{\ket{\psi_1}}{\norm{\ket{\psi_1}}} \overset{V}{\rightsquigarrow} \frac{\ket{\psi_1}}{\norm{\ket{\psi_1}}}, & \qquad \text{with complexity} \qquad O\left(\frac{W(U,\ket{\psi})}{\norm{\ket{\psi_1}}^2}\right), \\
\ket{\perp} \overset{V}{\rightsquigarrow} -\ket{\perp}, & \qquad \text{with complexity} \qquad O\left(W(U,\ket{\perp}\ket{1})\right),
\end{split}
\end{align}
for all vectors $\ket{\perp}$ that satisfy $\braket{\perp}{\psi_1} = 0$.
Notably, the transducers $C$ and $V$ construct and reflect around the amplified state $\ket{\psi_1}/\norm{\ket{\psi_1}}$ without error.
This contrasts with the bounded-error amplitude amplification routines in the circuit model. 

Since amplitude amplification is a ubiquitous tool in quantum algorithm design, we can use this transducer construction to port many algorithmic primitives to the transducer setting.
In this work, we specifically work this out for quantum annealing (\Cref{thm:two-state-ampl}), quantum rejection sampling (\Cref{thm:rejection-sampling}), and quantum mean estimation (\Cref{thm:transducer-mean-est}), all of which rely on the amplitude amplification construction above, and all of which will be used in our quantum volume estimation algorithm.
We believe that the technique will lead to the porting of many more algorithmic primitives to the transducer framework in the near future.

It is interesting to contrast our amplitude amplification transducer with an earlier approach in~\cite{apers2026elfs}.
That work built an amplitude amplification transducer essentially by considering the history state of the zero-error Las Vegas algorithm for amplitude amplification from \cite{boyer1998tight}.
Our approach builds a transducer more directly from a solution to the adversary bound for a corresponding state conversion problem.
While both approaches yield a similar query complexity, the downside of our approach is that we lose time efficiency as compared to \cite{apers2026elfs}.
The important upside is that our approach does not produce garbage, in contrast to \cite{apers2026elfs}.
We will see that this last feature is necessary for our quantum walk applications.

\subsubsection{Quantum walk reflection transducer} \label{sec:intro-QW}

We now turn to quantum walks, which were already shown to have an intricate connection to transducers in \cite{belovs2024taming,belovs2026space,apers2026elfs}.
We start from an ergodic, reversible Markov chain $P$ on a finite state space~$\Omega$ with stationary distribution $\pi$ and spectral gap $\gamma(P) > 0$.
For any state $\omega \in \Omega$, we can then define the star state
\[
\ket{*_\omega}
= \sum_{\nu \in \Omega} \sqrt{P_{\omega,\nu}} \ket{\nu}.
\]
Writing $\ket{\omega,*_\omega} := \ket{\omega}\ket{*_\omega}$, the quantum walk operator is then
\[
\mathrm{SWAP} \cdot \bigoplus_{\omega \in \Omega} (2\ket{\omega,*_\omega}\bra{\omega,*_\omega}-I),
\]
where $\mathrm{SWAP} \ket{\omega,\nu} = \ket{\nu,\omega}$.


In this work, we use the quantum walk operator to construct a transducer that reflects around a quantum version of the stationary distribution,
\[
\ket{\pi}
:= \sum_{\omega \in \Omega} \sqrt{\pi(\omega)}\ket{\omega}.
\]
To understand the corresponding transduction complexity, we exploit the particularly elegant connection between a quantum walk and the electrical network associated to the Markov chain $P$.
In this context, we build on a key observation from Cornelissen~\cite{cornelissen2025quantum} that interprets a state
\[
\ket{\overline{\pi}}
=\sum_{\omega \in \Omega} \frac{\overline{\pi}(\omega)}{\sqrt{\pi(\omega)}}\ket{\omega}
\]
that is orthogonal to $\ket{\pi}$ as a demand vector on $\Omega$.
This demand vector induces an associated electric flow on the electrical network, and we show that its edge-flows dictate the catalyst state of the resulting transducer.
This yields the following statement.

\begin{theorem}[Quantum walk reflection transducer, informal version of \Cref{thm:refl-stationary}]
    \label{thm:refl-stationary-intro}
    For every~$\omega \in \Omega$, let $C_{\omega}$ and $U_{\omega}$ be transducers that construct and reflect through $\ket{*_{\omega}}$, respectively.
    We construct a transducer $U$ such that
    \begin{align*}
        \ket{\pi} \overset{U}{\rightsquigarrow} \ket{\pi}, & \quad \text{with complexity} \quad O\left(\sum_{\omega \in \Omega} \pi(\omega) S_\omega \right), \\
        \ket{\overline{\pi}} \overset{U}{\rightsquigarrow} -\ket{\overline{\pi}}, & \quad \text{with complexity} \quad O\left(\sum_{\omega \in \Omega} \frac{|\overline{\pi}(\omega)|^2}{\pi(\omega)} \left(\frac{S_\perp}{\gamma(P)} + S_\omega \right)\right),
    \end{align*}
    where $S_\omega = W(C_{\omega},\ket{0}) + W(U_{\omega}, \ket{*_{\omega}})$ and
    \[
        S_\perp = \sup \{W(U_{\omega}, \ket{\perp}) : \omega \in \Omega, \norm{\ket{\perp}} = 1, \braket{\perp}{*_{\omega}} = 0\}.
    \]
\end{theorem}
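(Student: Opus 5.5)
We plan to build $U$ as the standard Szegedy-type reflection, $U = R_{\mathcal{A}} R_{\mathcal{B}}$-style, but realized at the transducer level: a reflection through the span of the star states $\ket{\omega,*_\omega}$ (the space $\mathcal{A}$), composed with the SWAP, which plays the role of a reflection through $\mathcal{B} = \mathrm{SWAP}\,\mathcal{A}$. The reflection through $\mathcal{A}$ is $\bigoplus_\omega U_\omega$, controlled on the first register, so its transduction complexity on a state $\sum_\omega \ket{\omega}\ket{\phi_\omega}$ is $\sum_\omega W(U_\omega,\ket{\phi_\omega})$. The state $\ket{\pi}$ is mapped into the walk space by $\ket{\omega}\mapsto \ket{\omega,*_\omega}$ using $C_\omega$, giving $\ket{\pi_{\mathcal{A}}} := \sum_\omega \sqrt{\pi(\omega)}\ket{\omega,*_\omega}$. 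By reversibility this state is SWAP-invariant and lies in $\mathcal{A}\cap\mathcal{B}$.

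\textbf{Case $\ket{\pi}$.} No walk dynamics are needed. We prepare $\ket{\pi_{\mathcal{A}}}$, apply the controlled $U_\omega$ once (which fixes each $\ket{\omega,*_\omega}$), and unprepare. The catalyst is a superposition over $\omega$ with weights $\pi(\omega)$, giving complexity $O(\sum_\omega \pi(\omega)(W(C_\omega,\ket 0)+W(U_\omega,\ket{*_\omega}))) = O(\sum_\omega\pi(\omega)S_\omega)$. Transducer composition adds complexities linearly in squared-norm weights, which is the composition rule we would invoke from \cite{belovs2024taming}.

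\textbf{Case $\ket{\overline\pi}$.} We will not use phase estimation. Instead we exhibit the catalyst explicitly, following the adversary-bound/electric-flow viewpoint of \cite{cornelissen2025quantum}. Interpret $\overline\pi$ as a demand with $\sum_\omega \overline\pi(\omega)=0$ (this is orthogonality to $\ket\pi$). Let $f$ be the associated electric flow on the network with conductances $c_{\omega\nu}=\pi(\omega)P_{\omega\nu}$, and set
\[
\ket{w}=\sum_{\omega,\nu}\frac{f_{\omega\nu}}{\sqrt{c_{\omega\nu}}}\ket{\omega,\nu}.
\]
We then verify that $\ket{w}$ is, up to constants, antisymmetric under SWAP and orthogonal to each $\ket{\omega,*_\omega}$ within block $\omega$. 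Orthogonality follows from flow conservation: the inner product with $\ket{\omega,*_\omega}$ is $\sum_\nu f_{\omega\nu}/\sqrt{\pi(\omega)}$, which equals the demand term $\overline\pi(\omega)/\sqrt{\pi(\omega)}$. So one application of the walk combined with the star-state preparation converts $\ket{\overline\pi}\oplus\ket{w}$ into $-\ket{\overline\pi}\oplus\ket{w}$. The key identity is
\[
\ket{\overline\pi} \text{ (lifted via } C_\omega) = (I-\mathrm{SWAP}\cdot R_{\mathcal{A}})\ket{w},
\]
up to normalization, i.e.\ the boundary operator of the flow. This requires the source-side term $\sum_\omega \frac{|\overline\pi(\omega)|^2}{\pi(\omega)} S_\omega$ for preparing and reflecting the lifted demand, and the term $S_\perp\norm{\ket w}^2$ for applying $U_\omega$ to the components of $\ket{w}$ orthogonal to $\ket{*_\omega}$.

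The remaining, and main, obstacle is bounding $\norm{\ket w}^2=\sum f^2/c$, which is the energy of the flow: $\overline\pi^\top L^+\overline\pi$ with $L=\Pi(I-P)$. Writing $\overline\pi=\Pi^{1/2}\ket{\overline\pi}$, this equals $\bra{\overline\pi}(I-D)^{+}\ket{\overline\pi}$ with $D$ the discriminant $\Pi^{1/2}P\Pi^{-1/2}$. Since $\ket{\overline\pi}\perp\ket\pi$, the energy is at most $\norm{\ket{\overline\pi}}^2/\gamma(P)=\sum_\omega |\overline\pi(\omega)|^2/(\pi(\omega)\gamma(P))$. Multiplying by $S_\perp$ gives the first term of the bound.

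The delicate points we expect are, first, getting the signs and factors of $2$ right so that the walk operator, rather than an idealized reflection, is transduced exactly. Second, the transducer complexity of $U_\omega$ on a block component $\ket{\phi_\omega}\perp\ket{*_\omega}$ must be charged as $S_\perp\norm{\phi_\omega}^2$, which requires $W$ to be quadratic-homogeneous; we take this as the sup definition.
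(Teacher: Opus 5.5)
Your ingredients match the paper's. Your $\ket{w}$ is the paper's flow state $\ket{f^{\overline{\pi}}}$. The walk is $U=(2\Pi_{\mathcal{H}_1}-I)\,\mathrm{SWAP}$, where $\mathrm{SWAP}=2\Pi_{\mathcal{H}_2}-I$ is the reflection through the \emph{symmetric} subspace (not through $\mathrm{SWAP}\,\mathcal{H}_1$). The reflection through $\mathcal{H}_1$ is amortized by composing the $U_\omega$ in parallel, the lift uses the $C_\omega$, and the energy is bounded by $\norm{\ket{\overline{\pi}}}^2/\gamma(P)$; that accounting is fine. The first problem is that $\ket{w}$ is \emph{not} orthogonal to the $\ket{\omega,*_\omega}$. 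Your own inner-product computation shows $\Pi_{\mathcal{H}_1}\ket{w}=\mathcal{N}\ket{\overline{\pi}}$, and this is what makes the construction work. The catalyst is $\Pi_{\mathcal{H}_1^\perp}\ket{w}$, and since $\mathrm{SWAP}\ket{w}=-\ket{w}$,
\[
U\bigl(\mathcal{N}\ket{\overline{\pi}}+\Pi_{\mathcal{H}_1^\perp}\ket{w}\bigr)=-(2\Pi_{\mathcal{H}_1}-I)\ket{w}=-\mathcal{N}\ket{\overline{\pi}}+\Pi_{\mathcal{H}_1^\perp}\ket{w}.
\]
The order of the two reflections matters here. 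With $\mathrm{SWAP}\cdot R_{\mathcal{A}}$, as in your ``key identity'', you get $(I-\mathrm{SWAP}\,R_{\mathcal{A}})\ket{w}=-2\,\mathrm{SWAP}\,\mathcal{N}\ket{\overline{\pi}}$, which is not the lifted demand.

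The genuine gap is that this transducer has public and private spaces $\mathcal{H}_1$ and $\mathcal{H}_1^\perp$. These depend on the star states, i.e.\ on the input. A single walk step does not turn $\mathcal{N}\ket{\overline{\pi}}\oplus\ket{w}$ (two separate registers) into $-\mathcal{N}\ket{\overline{\pi}}\oplus\ket{w}$; it acts on their superposition inside one register. Canonical transducers, and the parallel and sequential composition results you invoke, need a fixed public/private split. Here the input-dependent object is the reflection through $\mathcal{H}_1$ itself, and it is moreover only available as the transducer $\bigoplus_\omega U_\omega$.

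The paper supplies the missing device in \Cref{thm:transducer-reflection-stationary}. Set $A=(H\otimes I)(I\oplus(2\Pi_{\mathcal{H}_1}-I))(H\otimes I)$ and $T=A^\dagger(U\oplus I)A$. Then $T$ transduces $\mathcal{N}\ket{\overline{\pi}}$ in a first copy of $\mathcal{H}$, with the catalyst $\Pi_{\mathcal{H}_1^\perp}\ket{w}$ held in a second, input-independent copy. The two extra reflections act on $\pm\tfrac{1}{\sqrt{2}}(\mathcal{N}\ket{\overline{\pi}}-\Pi_{\mathcal{H}_1^\perp}\ket{w})$, so the cost changes only by constants. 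After replacing the three reflections by $\bigoplus_\omega U_\omega$, the auxiliary register is still part of the input and output. \Cref{lem:transducer-reflection-stationary} removes it by passing to the adversary bound: the state-conversion problems with and without that unchanged register have identical constraints. Without this step, or an equivalent one, you do not yet have a well-defined (canonical) transducer to which your complexity accounting applies.
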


The crucial point here is that the complexity of the reflection transducer is \emph{amortized}: it charges the cost $S_\omega$ of preparing or reflecting around $\ket{*_\omega}$ averaged over the relevant distributions of the input state.
Notably, the transduction complexity $S_\perp$ of reflecting an orthogonal state $\ket{\perp}$ is not amortized, but this will be a trivial cost in our algorithms (see e.g.~\cref{eq:AA-refl} or \cref{thm:speedy-intro}).

To finalize this section, we again compare to a relevant prior result in \cite{apers2026elfs}.
There it was argued that a slightly altered quantum walk operator is a transducer that reflects around an electric flow state, rather than $\ket{\pi}$, and thus can be used to prepare that electric flow state.
The authors argued that this was a particular instance of a more ``effective gap transducer'' for reflecting around the intersection of two subspaces, and effectively our \cref{thm:refl-stationary-intro} is a different instantiation of that same transducer.


\subsection{Main results: Quantum volume estimation} \label{sec:intro-volume}

We now turn to our main application, which is an improved quantum walk algorithm for volume estimation.
Before going into its details, we give a more detailed exposition of the current state-of-the-art.


\subsubsection{State of the art}

In the deterministic model, Elekes \cite{elekes1986geometric} and independently B\'ar\'any and F\"uredi \cite{barany1987computing} showed that the query complexity is exponential in $d$ even in the regime where $\eps$ is constant. 
Conversely, if the dimension $d$ is constant, the query complexity is $\Theta(\varepsilon^{-(d-1)/2})$, where the big-$\Theta$-notation hides a constant factor that depends on $d$~\cite{cornelissen2025compute}.

With the additional power of the randomized model, the situation changes drastically. In a landmark result, Dyer, Frieze and Kannan developed a randomized algorithm for the volume estimation problem that makes $\widetilde{O}(d^{23}/\varepsilon^2)$ queries, thus proving a polynomial vs.\ exponential separation between the deterministic and randomized models~\cite{dyer1991random}.
A long line of research subsequently improved the polynomial dependence on $d$, which we represent in \Cref{tab:vol-est}.
Many of the algorithmic innovations in this line of work have become key techniques in Gibbs sampling and Markov Chain Monte Carlo methods, and it also led to the development of several deep geometrical results, like the recently progress on the KLS-conjecture~\cite{chen2021almost,klartag2022bourgain,jambulapati2022slightly,klartag2023logarithmic,letwin2026kls}.
This has motivated the continued search for algorithmic improvements for the volume estimation problem.

\begin{table}[!ht]
\centering
\begin{tabular}{r|ccl}
Model & Source & Result & New idea \\\hline
Randomized & \cite{dyer1991random} & $\widetilde{O}(d^{23}/\varepsilon^2)$ & Markov chain Monte Carlo and polynomial mixing \\
& \cite{lovasz1990mixing} & $\widetilde{O}(d^{16}/\varepsilon^4)$ & Better isoperimetry \\
& \cite{applegate1991sampling} & $\widetilde{O}(d^{10}/\varepsilon^2)$ & Logconcave sampling \\
& \cite{lovasz1991compute} & $\widetilde{O}(d^{10}\poly(1/\varepsilon))$ & Ball walk \\
& \cite{dyer1991computing} & $\widetilde{O}(d^8/\varepsilon^2)$ & Better error analysis \\
& \cite{lovasz1993random} & $\widetilde{O}(d^7/\varepsilon^2)$ & Localization lemma \\
& \cite{kannan1997random} & $\widetilde{O}(d^5/\varepsilon^2)$ & Speedy walk, isotropy \\
& \cite{lovasz2006simulated} & $\widetilde{O}(d^4/\varepsilon^2)$ & Annealing, hit-and-run walk \\
& \cite{CV18} & $\widetilde{O}(d^4 + d^3/\varepsilon^2)$ & Gaussian cooling in well-rounded setting \\
& \cite{jia2026reducing} & $\widetilde{O}(d^{3.5} + d^3/\varepsilon^2)$ & Improved rounding 
\\\hline
Quantum & \cite{chakrabarti2023quantum} & $\widetilde{O}(d^{3.5} + d^{2.5}/\varepsilon)$ & Quantization of hit-and-run walk  \\
& \cite{cornelissen2023sublinear} & $\widetilde{O}(d^{3.5} + d^{2.25}/\varepsilon)$ & Unbiased, non-destructive mean estimation \\
& \textbf{This work} & $\widetilde{O}(d^{3.5} + d^{1.75}/\varepsilon)$ & Amortized quantization of ball walk 
\end{tabular}
\caption{Historical overview of complexity bounds for estimating the volume of convex bodies, based on the tables presented in \cite{lovasz1993random,jia2026reducing}. The tildes hide factors that are polylogarithmic in $d$, $R$ and $1/\varepsilon$. 
}
\label{tab:vol-est}
\end{table}

Since the randomized setting allows one to substantially improve over deterministic approaches, a natural follow-up question is whether having \textit{quantum} access to the oracle could reduce the query complexity even further. In that direction, Chakrabarti, Childs, Hung, Li, Wang and Wu \cite{chakrabarti2023quantum} quantized the approach outlined by Lov\'asz and Vempala~\cite{lovasz2006simulated} based on the hit-and-run-walk, obtaining a quadratic speed-up in terms of its mixing time. Combined with a quantum mean estimation routine developed by Hamoudi and Magniez~\cite{hamoudi2019quantum} and an annealing procedure designed by Wocjan and Abeyesinghe~\cite{wocjan2008speedup}, this yielded a quantum algorithm for the volume estimation problem that runs in $\widetilde{O}(d^{3.5} + d^{2.5}/\varepsilon)$ queries.\footnote{The original version of the construction in \cite{chakrabarti2023quantum} contained an issue in the argument that ensures non-destructiveness of the mean-estimation routine~\cite[Lemma~4.3]{chakrabarti2023quantum}. This issue was partially resolved by subsequent developments of Cornelissen and Hamoudi~\cite{cornelissen2023sublinear}, but only in the univariate setting. It is our understanding that the correctness of the rounding part of the construction in \cite{chakrabarti2023quantum}, which relies on multivariate mean estimation, is not established as of now. However, this only concerns the $\widetilde{O}(d^3)$ additive term in the complexity statement, which we therefore replace with the classical $\widetilde{O}(d^{3.5})$ complexity of the rounding algorithm of \cite{jia2026reducing}. 

In the published version of their paper, the authors of \cite{chakrabarti2023quantum} moreover argue that the resolution to the KLS-conjecture implies an $\widetilde{O}(d^{2.5} + d^2/\varepsilon)$-query quantum algorithm. As noted before in \cite[Footnote~1]{cornelissen2025compute}, this claim seems to be unjustified as of now.\label{footnote:chakrabarti}} Cornelissen and Hamoudi subsequently improved this construction by making the mean-estimation routine unbiased, and obtained an $\widetilde{O}(d^{3.5} + d^{2.25}/\varepsilon)$-query quantum algorithm~\cite{cornelissen2023sublinear}. 
Prior to this work, this was the state-of-the-art quantum algorithm for the volume estimation problem.
Our work improves over these bounds by quantizing the algorithm by Cousins and Vempala \cite{CV18}, which we summarize next.

\paragraph{Cousins-Vempala volume estimation.}

Most randomized algorithms that estimate the volume of a convex body consist of two main steps: rounding the convex body, and estimating the volume of a well-rounded convex body.
The rounding step aims to find an affine transformation that places the convex body in a \textit{well-rounded} position. We call a convex body $K$ well-rounded if it satisfies $B_d \subseteq T(K)$ and $\E_{K}[\norm{x}^2] = \widetilde O(d)$. A well-rounded convex body has the property that all but an $\eps$-fraction of its volume is contained in  $R'B_d$, where $R' \in O(\sqrt{d}\log(1/\eps))$ (e.g.~via Markov's inequality on the random variable $\|x\|^2$ and~\cite[Corollary~2.9]{lovasz1993random}). For the volume estimation problem this allows us to assume $B_d \subseteq K \subseteq R' B_d$ for $R'\in \widetilde O(\sqrt{d})$. Recent work by Jia, Laddha, Lee and Vempala~\cite{jia2026reducing} provides a randomized rounding algorithm that finds such an affine transformation for convex bodies that satisfy $B_d \subseteq K \subseteq R B_d$.
Their algorithm uses $\tO(d^{3.5} \, \psi_d^2)$ queries, where $\psi_d$ is the KLS-constant in $d$ dimensions.
This constant was subsequently shown to satisfy $\psi_d \in O(\sqrt{\log(d)})$~\cite{klartag2023logarithmic}, thus yielding a rounding algorithm that makes $\widetilde{O}(d^{3.5})$ queries.\footnote{An earlier conference version \cite{jia2021reducing} incorrectly claimed a runtime of $\widetilde O(d^3\psi_d^2)$. The quantum algorithms in \cref{tab:vol-est} all use this rounding algorithm as a preprocessing step; we report here the corrected complexities. To the best of our knowledge, improving the $\widetilde O(d^{3.5})$ query complexity of \cite{jia2026reducing} to $\widetilde O(d^3)$ remains an open problem.}

The remaining step is to estimate the volume of the well-rounded convex body.
The core idea in all of the aforementioned works is to phrase this problem as a partition function estimation problem, and then use Markov chain Monte Carlo methods to estimate this partition function by sampling from the associated Gibbs distributions.
In more detail, the algorithm by Cousins and Vempala~\cite{CV18} considers the Gibbs distribution $\pi_\beta$ on convex body $K$ with energy Hamiltonian the norm squared:
\[
\pi_\beta(x)
= \frac{1}{Z(\beta)} e^{-\beta \|x\|^2} \mathbbm{1}_{x \in K}.
\]
Here $\beta$ denotes the inverse temperature, and $Z(\beta) = \int_K e^{-\beta \|x\|^2} \;\mathrm{d}x$ the partition function.
Noting that $Z(0) = \Vol(K)$, it suffices to estimate this partition function.

To do so, first note that if $\beta^* \in \widetilde\Omega(d)$ then $\int_K e^{-\beta \|x\|^2} \;\mathrm{d}x \approx \int_{\R^d} e^{-\beta \|x\|^2} \;\mathrm{d}x$ and so $Z(\beta^*)$ can be estimated without queries.
It thus suffices to estimate $Z(0)/Z(\beta^*)$ to relative precision $\varepsilon$.
This is done by defining an inverse cooling schedule $(\beta_j)_{j=0}^\ell$ with $\beta_0 = \beta^*$ and $\beta_\ell = 0$, and using it to rewrite
\[
\frac{Z(0)}{Z(\beta^*)}
= \prod_{j=0}^{\ell-1} \frac{Z(\beta_{j+1})}{Z(\beta_j)}.
\]
The Markov chain Monte Carlo angle then comes from noting that
\[
\frac{Z(\beta')}{Z(\beta)}
= \E_{\pi_\beta}\left[ e^{-(\beta'-\beta) \|x\|^2} \right],
\]
and so it is possible to estimate each of these ratios by sampling from the Gibbs distribution at the appropriate temperatures.
\cite{CV18} subsequently uses the \textit{speedy walk} to mix rapidly to these Gibbs distributions.

In the following sections, we detail our quantum walk implementation of the speedy walk and how we use it to speed up the preparation of the Gibbs states.

\subsubsection{Quantum speedy walk}

The key primitive in \cite{CV18} is to sample from the Gibbs distribution $\pi_\beta$, which they do using the so-called speedy walk:

\begin{definition}[Speedy walk] \label{def:speedy_intro}
    Write $f_{\beta}(x) = \exp(-\beta \|x\|^2)\mathbbm{1}_{x \in K}$. At $x \in K$, do:
    \begin{enumerate}[nosep]
        \item Select $y \in (x + \delta B_d) \cap K$ uniformly at random, for $\delta \in \widetilde\Theta(1/\sqrt{(1+\beta) d})$.
        \item Transition to $y$ with probability $\min\{1,f_{\beta}(y)/f_{\beta}(x)\}$.
    \end{enumerate}
\end{definition}

Under appropriate conditions, this walk has a large spectral gap $\widetilde\Omega(\min\{1,\beta\}/d^2)$.
A first downside however is that its stationary distribution is slightly off: it is proportional to $f(x) \ell_\delta(x)$ rather than to~$f(x)$, where~$\ell_\delta(x)$ is the so-called local conductance
\[
\ell_\delta(x)
= \frac{\Vol((x+\delta B_d) \cap K)}{\Vol(\delta B_d)}.
\]
\cite{CV18} however show that a simple rejection sampling routine can turn this stationary into the appropriate one.
A second downside is that it might be hard to implement the speedy walk.
Indeed, the naive algorithm of sampling uniformly from $x+\delta B_d$ and postselecting on being inside $K$ will have success probability $\ell_\delta(x)$, which can be exponentially small, see \Cref{fig:local-conductance} for an illustration.

\begin{figure}
    \centering
    \includegraphics[width=0.5\linewidth]{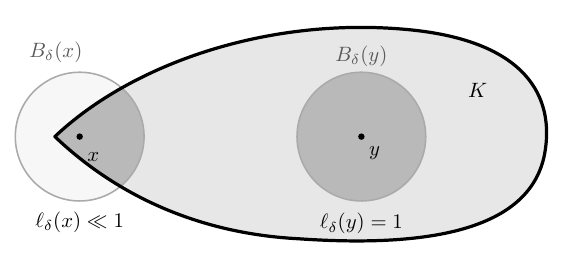}
    \caption{Illustration of local conductance: at points $y$ in the interior, the local conductance equals $1$, whereas $\ell_\delta(x) \ll 1$ at points $x$ close to corners of the convex body $K$.}
    \label{fig:local-conductance}
\end{figure}

The way around this is precisely where amortization comes in: averaged over any initial distribution that is sufficiently ``warm'' with respect to the stationary, the mean local conductance is constant.
In the classical setting, by a concentration argument, this easily suffices to argue that the total algorithm can effectively charge a constant cost per step of the speedy walk.

\paragraph{Our contribution.}
In the quantum case, we use our amortized quantum walk framework to argue likewise.
However, since the speedy walk is defined on a continuous state space, we first introduce an appropriate discretization.
We show that there is a \emph{discretized speedy walk} that walks on the grid $\zeta\mathbb{Z}^d \cap K$ for a sufficiently small grid spacing $\zeta=\zeta(d,R,\eps)>0$, and whose spectral gap and stationary distribution we can relate to the original speedy walk. We remark that we only require the existence of a sufficiently small $\zeta$, its value does not appear in the \textit{query} complexity of the final algorithm; we merely use it to make the state space discrete and finite.  
While related discretizations had been studied before, to the best of our knowledge none of these captured the precise quantities that we need.

After this discretization, we can describe how we implement a transducer for the \emph{quantum speedy walk}, a quantum walk based on the (discretized) speedy walk.
As discussed in \cref{sec:intro-QW}, the key operation is the preparation and reflection around the star state $\ket{*_x}$ for $x \in K$, which is a superposition over the steps that the discrete speedy walk would take from $x$, weighted by their probabilities.
For step 1 of the speedy walk (\cref{def:speedy_intro}), we invoke our amplitude amplification transducer (\cref{thm:intro-AA}), where it is important that this transducer does not produce garbage.
For the Metropolis filter in step 2, however, we run into the difficulty that quantumly we cannot simply reject and undo an operation after conditioning on some outcome (see e.g.~\cite{temme2011quantum,lemieux2020efficient} for some discussion on this). 
To remedy this, we modify the speedy walk to act as a two-step process on a bipartite graph, where one step is to select the proposal point, and the second is to do the transition. 
We prove that the spectral gap of this bipartite speedy walk is the same as the spectral gap of the original speedy walk, up to constants, and we show that we can implement the corresponding star-state construction and reflection transducers efficiently.
The same procedure can be applied to any Markov chain and removes the challenges posed by the Metropolis filter (a related idea appeared in e.g.~\cite{apers23ustcon}).
This yields the following theorem.

\begin{theorem}[Quantum speedy walk transducer, Informal version of \Cref{lem:star-state-transducers}] \label{thm:speedy-intro}
For the (bipartite, discretized) speedy walk, we can implement the star-state construction and reflection transducers $C_x$ and $U_x$ at $x \in K$, each making a single membership oracle query, such that
\begin{align*}
\ket{0} \overset{C_x}{\rightsquigarrow} \ket{*_x}, & \qquad \text{with complexity} \qquad O\left(1/\sqrt{\ell_{\delta}(x)}\right), \\
\ket{*_x} \overset{U_x}{\rightsquigarrow} \ket{*_x}, & \qquad \text{with complexity} \qquad O\left(1/\ell_{\delta}(x)\right), \\
\ket{\perp} \overset{U_x}{\rightsquigarrow} -\ket{\perp}, & \qquad \text{with complexity} \qquad O\left(\norm{\ket{\perp}}^2\right),
\end{align*}
for all $\ket{\perp}$ such that $\braket{\perp}{*_x} = 0$.
\end{theorem}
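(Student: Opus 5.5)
We build the star-state transducers directly from the amplitude amplification transducer of \Cref{thm:intro-AA} and its reflection variant \eqref{eq:AA-refl}. In the bipartite, discretized speedy walk, a step from $x$ first selects a grid point $y \in (x+\delta B_d)\cap \zeta\Z^d$ uniformly, conditioned on $y \in K$. The Metropolis filter is applied as a separate transition on the other side of the bipartition. So at a vertex $x$ on the ``current point'' side, the star state is
\[\ket{*_x} \propto \sum_{y \in (x+\delta B_d)\cap\zeta\Z^d\cap K}\ket{y}.\]
We define $\ket{\psi} = \ket{\psi_0}\ket{0}+\ket{\psi_1}\ket{1}$ as the uniform superposition over grid points of $x+\delta B_d$, with one membership query computing the flag $\mathbbm{1}_{y\in K}$ into the last qubit. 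Then $\ket{\psi_1}\ket{1}/\norm{\ket{\psi_1}} = \ket{*_x}\ket{1}$, and by the discretization $\norm{\ket{\psi_1}}^2 \approx \ell_\delta(x)$ up to constants.

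The reflection $U$ through $\ket{\psi}$ is conjugation of the query-free reflection through the uniform ball state by the flag-computing query. Because $U$ is an exact unitary, it is a transducer with complexity equal to the squared norm of its input: $W(U,\ket{\psi})=1$ and $W(U,\ket{\overline\psi}) = \norm{\ket{\overline\psi}}^2$. The constant-overhead query bookkeeping ensures each transducer uses a single query. Plugging into \Cref{thm:intro-AA} gives
\[W(C_x,\ket{0}) = O\left(\frac{1+\norm{\ket{\psi_0}}^2\norm{\ket{\overline\psi}}^2}{\norm{\ket{\psi_1}}}\right).\]
A direct computation gives $\norm{\ket{\overline\psi}}^2 = \norm{\ket{\psi_1}}^2+\norm{\ket{\psi_0}}^2 = 1$, so the bound is $O(1/\norm{\ket{\psi_1}}) = O(1/\sqrt{\ell_\delta(x)})$.

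For $U_x$ we apply the reflection transducer $V$ of \eqref{eq:AA-refl}. On $\ket{*_x}$ its complexity is $O(W(U,\ket{\psi})/\norm{\ket{\psi_1}}^2) = O(1/\ell_\delta(x))$. On $\ket{\perp}$ orthogonal to $\ket{*_x}$ it is $O(W(U,\ket{\perp}\ket{1})) = O(\norm{\ket{\perp}}^2)$, since $U$ is a plain unitary. Two technicalities remain. First, $\ket{\perp}$ may have support on grid points outside $x+\delta B_d$ or outside $K$; we embed those into the orthogonal complement of $\ket{\psi}$ so that $U$ negates them. Second, we must tag the flag qubit to $\ket{1}$ on input and verify it on output.

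For the vertices on the other side of the bipartition (proposal points), the star state is a two-outcome superposition of accepting or rejecting the Metropolis move. Its amplitudes $\sqrt{\min\{1,f_\beta(y)/f_\beta(x)\}}$ are computable without queries, since $y\in K$ is already certified. Hence it is prepared and reflected exactly with constant complexity, which fits within the stated bounds.

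The main obstacle is making the discretization fit the amplitude amplification framework exactly. We must check that the grid count ratio equals $\ell_\delta(x)$ up to a constant factor for sufficiently small $\zeta$; this should follow from the discretization lemma, noting that $x$ itself lies in the body so the ratio is nonzero. We must also ensure that no garbage is left by the flag qubit, so that the catalyst states behave as required when plugged into \Cref{thm:refl-stationary-intro}.
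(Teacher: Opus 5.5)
Your proposal is correct and follows essentially the same route as the paper. One flag query turns the query-free uniform state on $(x+\delta B_d)_\zeta$ into $\ket{\psi_0}\ket{0}+\ket{\psi_1}\ket{1}$, and the exact two-query reflection through this state is a constant-complexity bidirectional transducer. \Cref{cor:ampl-ampl} then yields $C_x$ and $U_x$ with the stated costs, after comparing $\ell_{\delta,\zeta}(x)$ with $\ell_\delta(x)$ via \Cref{lem:discrete-local-conductance-proximity}.

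For that comparison, the fact that the ratio is nonzero is not enough. You need the uniform lower bound $\ell_\delta\geq(2R)^{-d}$ from \Cref{lem:local-conductance-lipschitz}, which turns the uniform additive error into a constant factor.

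There is also one slip outside the statement itself. The edge-side star states of the bipartite walk use the Glauber weights $\sqrt{f(y)/(f(x)+f(y))}$, not Metropolis weights. The symmetry $a_G(x,y)+a_G(y,x)=1$ is what makes the walk on unordered pairs $\{x,y\}$ reversible.
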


We can then combine this with our quantum walk reflection transducer (\cref{thm:refl-stationary-intro}) to yield a transducer for reflection around the stationary of the discretized speedy walk, with a cost that is appropriately amortized.

\begin{theorem}[{Quantum speedy walk reflection transducer, informal version of \cref{thm:speedy-walk-reflection}}]
\label{thm:intro-speedy-walk-reflection}
Let $\sigma_\beta$ be the stationary distribution of the discretized speedy walk.
We construct a transducer $U$ such that
\begin{align*}
\ket{\sigma_\beta} \overset{U}{\rightsquigarrow} \ket{\sigma_\beta}, & \qquad \text{with complexity} \qquad O\left(1\right), \\
\ket{\overline{\sigma}} \overset{U}{\rightsquigarrow} -\ket{\overline{\sigma}}, & \qquad \text{with complexity} \qquad \tO\left(\frac{d^2 \, \norm{\ket{\overline{\sigma}}}^2}{\min\{1,\beta\}} + \int_K \frac{|\overline{\sigma}(x)|^2}{\sigma_\beta(x)\ell_\delta(x)} \;\mathrm{d}x\right),
\end{align*}
for all $\ket{\overline{\sigma}}$ such that $\braket{\overline{\sigma}}{\sigma_\beta} = 0$.
\end{theorem}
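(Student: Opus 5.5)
The plan is to instantiate \cref{thm:refl-stationary-intro} with the (bipartite, discretized) speedy walk $P$ on the grid $\zeta\Z^d \cap K$, and to feed it the star-state transducers $C_x$ and $U_x$ from \cref{thm:speedy-intro}. Then the two complexity bounds follow by evaluating the quantities $S_x$ and $S_\perp$ and the spectral gap $\gamma(P)$.

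First, from \cref{thm:speedy-intro} we read off
\[
S_x = W(C_x,\ket{0}) + W(U_x,\ket{*_x}) = O\bigl(1/\sqrt{\ell_\delta(x)} + 1/\ell_\delta(x)\bigr) = O\bigl(1/\ell_\delta(x)\bigr),
\]
using $\ell_\delta(x) \le 1$. For unit-norm orthogonal inputs, $S_\perp = O(1)$. For the bipartite walk one must also handle the vertices on the proposal side of the bipartition. I would check that the star states there are prepared and reflected with comparable cost, namely a Metropolis-filter rotation with $O(1)$ complexity, so that they contribute at most $O(1/\ell_\delta)$ as well.

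Second, I would bound the amortized sums.

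\textbf{Stationary input.} The complexity of reflecting $\ket{\sigma_\beta}$ is
\[
O\Bigl(\sum_x \sigma_\beta(x)\, S_x\Bigr) = O\Bigl(\sum_x \sigma_\beta(x)/\ell_\delta(x)\Bigr).
\]
The stationary distribution of the speedy walk is proportional to $f_\beta(x)\ell_\delta(x)$, so $\sigma_\beta(x)/\ell_\delta(x) \propto f_\beta(x)$. The sum therefore equals $\int_K f_\beta \big/ \int_K f_\beta \ell_\delta$, which is the reciprocal of the average local conductance under $\pi_\beta$. This needs a lemma stating that the mean local conductance under $\pi_\beta$ is $\Omega(1)$ for $\delta \in \widetilde\Theta(1/\sqrt{(1+\beta)d})$, as in \cite{CV18}. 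I would transfer that lemma to the discretization, using that $\zeta$ can be taken arbitrarily small so that sums approximate integrals, and conclude $O(1)$.

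\textbf{Orthogonal input.} The bound splits into two terms:
\[
\sum_x \frac{|\overline\sigma(x)|^2}{\sigma_\beta(x)}\cdot\frac{S_\perp}{\gamma(P)} = \frac{\norm{\ket{\overline\sigma}}^2}{\gamma(P)}, \qquad \sum_x \frac{|\overline\sigma(x)|^2}{\sigma_\beta(x)}\, S_x = \sum_x \frac{|\overline\sigma(x)|^2}{\sigma_\beta(x)\,\ell_\delta(x)}.
\]
The second term, after the continuum identification, is the stated integral. For the first term I need $\gamma(P) = \widetilde\Omega(\min\{1,\beta\}/d^2)$. This follows from three facts: the continuous speedy walk has this gap, the discretized walk inherits it up to constants, and the bipartite version preserves it up to constants, as claimed in the text preceding \cref{thm:speedy-intro}.

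I expect the main obstacle to be the spectral-gap transfer and the local-conductance averaging in the discretized, bipartite setting, rather than the transducer bookkeeping. For the bipartite gap I would first try writing $P_{\mathrm{bip}}^2$ restricted to one side as the original chain (possibly lazy), which gives $\gamma(P_{\mathrm{bip}}) \gtrsim \gamma(P)/2$. For the discretization, I would compare Dirichlet forms and stationary measures cell by cell, with errors vanishing as $\zeta \to 0$.
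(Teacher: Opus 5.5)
There is a genuine gap in your first step. The transducer produced by \cref{thm:refl-stationary} reflects through the stationary state of the chain you feed it, and for the bipartite speedy walk that state is not $\ket{\sigma_\beta}$.

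The bipartite walk lives on $K_\zeta \sqcup E$. By \cref{lem:stationary-anti-bipartite}, its stationary distribution is $\pi' = (\sigma_\beta + \sigma_\beta P')/2$, which puts half of its mass on the edge-vertices $E$. So what you actually get is a transducer $V$ acting as $2\ket{\pi'}\bra{\pi'} - I$. Write $\ket{\pi'} = (\ket{\sigma_\beta} + \ket{\sigma_E})/\sqrt{2}$, with $\ket{\sigma_E}$ the normalized restriction to $E$. Then $V\ket{\sigma_\beta} = \ket{\sigma_E}$. So $V$ does not fix $\ket{\sigma_\beta}$ at all, and your sum $\sum_x \sigma_\beta(x) S_x$ is, up to a factor $2$, the cost of reflecting $\ket{\pi'}$, not $\ket{\sigma_\beta}$.

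Your orthogonal side is fine: every $\ket{\overline{\sigma}}$ supported on $K_\zeta$ with $\braket{\overline{\sigma}}{\sigma_\beta} = 0$ is orthogonal to $\ket{\pi'}$, and your cost for it is correct. You also cannot fix this by running the construction on the non-bipartite chain on $K_\zeta$, whose stationary is $\sigma_\beta$. Its star states carry the rejection amplitude on $\ket{x}$, whereas the transducers of \cref{thm:speedy-intro} are for the bipartite star states $\ket{*_x} \in \C^{E}$. A smaller point: the edge-side star states come from the Glauber filter, not the Metropolis one, and they do not depend on $K$, so they cost nothing.

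The missing idea, which is how the paper proceeds, is to also reflect through the $(-1)$-eigenvector $\pi'' = (\sigma_\beta - \sigma_\beta P')/2$ of the bipartite chain. Let $Z$ act as $I$ on $K_\zeta$ and $-I$ on $E$. It is input-independent, so $V' := ZVZ$ reflects through $\ket{\pi''} = Z\ket{\pi'} = (\ket{\sigma_\beta} - \ket{\sigma_E})/\sqrt{2}$, with $W(V',\ket{\psi}) = W(V,Z\ket{\psi})$. Since $\ket{\pi'} \perp \ket{\pi''}$, the sequential composition $U := -VV'$ is the reflection through $\Span\{\ket{\pi'},\ket{\pi''}\} = \Span\{\ket{\sigma_\beta},\ket{\sigma_E}\}$. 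Hence $U$ fixes $\ket{\sigma_\beta}$ and negates every $\ket{\overline{\sigma}}$ on $K_\zeta$ orthogonal to it.

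Be careful when you cost $U$ on $\ket{\sigma_\beta}$. Both steps apply $V$ to vectors with a $\ket{\pi''}$-component, and $\ket{\pi''}$ is a net-flow state for the bipartite chain. The generic term $\norm{\ket{\pi''}}^2/\gamma(P')$ in \cref{thm:refl-stationary} would then give $O(1/\gamma)$, not $O(1)$. To get $O(1)$, use that the electric flow of $\pi''$ is exactly $\mathcal{N}\ket{\pi''} \in \mathcal{H}_1$. This vector is antisymmetric, because $\pi'(v)P'(v,w) = \pi'(w)P'(w,v)$ and the sign of $\pi''$ alternates across the bipartition. Therefore the flow catalyst $\Pi_{\mathcal{H}_1^\perp}\ket{f^{\pi''}}$ in \cref{lem:net-flow-catalyst} vanishes, and only the amortized term $O(\sum_x \sigma_\beta(x) S_x) = O(1)$ remains.

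The rest of your plan matches the paper: the star-state costs, the average local conductance from \cite{CV18}, and the bipartite gap via $(P')^2$. The one difference is that the paper transfers the gap through conductance bounds on an enlarged body and an induced cube walk, rather than a direct Dirichlet-form comparison.
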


\subsubsection{Quantum annealing transducer}

In the quantum case, in order to actually prepare $\ket{\pi_\beta}$ (or rather its approximation $\ket{\sigma_\beta}$) instead of just reflecting through it, we take a similar approach to earlier works \cite{wocjan2008speedup,chakrabarti2023quantum,cornelissen2023sublinear}.
We build a quantum simulated annealing algorithm to prepare $\ket{\pi_\beta}$ by first preparing $\ket{\pi_{\beta^*}}$ for sufficiently large $\beta^*$ (which can be done without queries), and then use reflections around $\ket{\pi_{\beta_j}}$ for inverse temperatures $\beta_j$ to iteratively map $\ket{\pi_{\beta_j}}$ to $\ket{\pi_{\beta_{j+1}}}$ for increasing $j$.
The key point is that, so long as $|\braket{\pi_\beta}{\pi_{\beta'}}| \in \Omega(1)$, we only need a constant number of reflections through $\ket{\pi_\beta}$ and $\ket{\pi_{\beta'}}$ to map $\ket{\pi_\beta}$ to $\ket{\pi_{\beta'}}$, and we can do this exactly with our amplitude amplification transducer from \cref{thm:intro-AA}.
Since all of these reflections act on states in the subspace spanned by $\ket{\pi_\beta}$ and $\ket{\pi_{\beta'}}$, we can invoke our quantum speedy walk reflection transducer from \cref{thm:refl-stationary-intro} to analyze and amortize their costs.
Crucially it allows us to show that, similarly to \cite{CV18}, the amortized cost of our quantum speedy walk is constant.

Aside from the application to volume estimation, this quantum walk annealing algorithm already yields an improved algorithm for generating a uniform superposition over the convex body.
Notably, our result involves amortization and is ideally stated in the transducer framework, as we do in the following theorem.

\begin{theorem}[Informal version of \Cref{thm:uniform-sampling}]
\label{thm:sampling}
Let $B_d \subseteq K \subseteq R B_d$ be a convex body.
There exists a transducer that generates a quantum sample of the (discretized) uniform distribution over $K$ up to total-variation distance $\eta > 0$, with transduction complexity $\widetilde{O}\left((dR + d^2) \log(1/\eta)\right)$.
\end{theorem}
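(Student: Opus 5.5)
The plan is to run a quantum simulated annealing schedule along the Gaussian cooling sequence of \cite{CV18}, with every step realized as a transducer. I would fix inverse temperatures $\beta_0 = \beta^* > \beta_1 > \dots > \beta_\ell = 0$, with $\beta^* \in \tOm(d)$ and $\beta_{j+1} = \beta_j(1 - 1/\sqrt{d})$ while $\beta_j \geq 1/R^2$; I would then continue with $\beta_{j+1} = \beta_j(1-1/(R\sqrt{\beta_j})\cdot\ldots)$ or simply jump to $0$, following the CV18 schedule. This gives $\ell = \tO(\sqrt{d} + \ldots)$ phases. The first state $\ket{\sigma_{\beta^*}}$ is essentially a discretized Gaussian concentrated well inside $B_d \subseteq K$, so it can be prepared up to error $\eta/\ell$ without queries. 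By the standard CV18 computation, consecutive distributions satisfy $\norm{\pi_{\beta_j}/\pi_{\beta_{j+1}}}_{L^2(\pi_{\beta_{j+1}})} = O(1)$, so $|\braket{\sigma_{\beta_j}}{\sigma_{\beta_{j+1}}}| \in \Omega(1)$ after discretization.

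For each phase I would invoke the two-state amplification/annealing transducer (\Cref{thm:two-state-ampl}, built from \cref{thm:intro-AA}). It maps $\ket{\sigma_{\beta_j}}$ to $\ket{\sigma_{\beta_{j+1}}}$ using $O(1)$ reflections around each of the two states. All intermediate vectors lie in $\Span\{\ket{\sigma_{\beta_j}}, \ket{\sigma_{\beta_{j+1}}}\}$, so each reflection is applied either to the stationary state itself or to a vector $\ket{\overline{\sigma}}$ in this span that is orthogonal to the stationary state.

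I would bound the complexity of each such reflection with \cref{thm:intro-speedy-walk-reflection}. Applied to the stationary state, the reflection costs $O(1)$. Applied to $\ket{\overline{\sigma}}$, it costs $\tO(d^2/\min\{1,\beta\})$ times $\norm{\ket{\overline{\sigma}}}^2 = O(1)$, plus the amortized term $\int_K |\overline{\sigma}|^2/(\sigma_\beta \ell_\delta)$. For the amortized term, $\overline{\sigma}$ is a combination of $\sigma_{\beta_j}$ and $\sigma_{\beta_{j+1}}$ with $O(1)$ coefficients, so it is at most $O(\E_{\pi_{\beta}}[\rho^2/\ell_\delta])$, where $\rho$ is the density ratio of the neighbouring temperature. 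By warmness and the CV18 amortized local-conductance lemma (mean of $1/\ell_\delta$ under warm distributions is $O(1)$ for $\delta \in \tTh(1/\sqrt{(1+\beta)d})$), this is $O(1)$. The phase cost is therefore $\tO(d^2/\min\{1,\beta_j\})$.

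Summing over the schedule finishes the bound. For $\beta_j \geq 1$ the cost is $d^2$ per phase over $\tO(\sqrt d)$ phases. For $\beta_j < 1$, the CV18 schedule decreases $\beta$ geometrically with ratio $1 - \Theta(\max\{1/\sqrt d, \ldots\})$, and the bounded-diameter condition lets the walk mix in $\tO(R^2 d)$-type time rather than $d^2/\beta$. I expect the total to telescope to $\tO(dR + d^2)$, i.e. $\sum_j \sqrt{d^2/\beta_j}\cdot$(phase amplification). Here the square root comes from the transducer-to-circuit conversion applied once globally, not per phase, giving $\sqrt{\cdot}$ of the spectral factor per phase and hence $d/\sqrt{\beta}$, summing to $dR$ at $\beta \approx 1/R^2$.

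Finally, I would compose the errors: discretization and the TV/warmness error per phase are kept below $\eta/\ell$, and the $\log(1/\eta)$ factor comes from the number of Gaussian tail cutoffs and the $\beta^*$ choice.

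The main obstacle I expect is the precise amortized accounting. The transducer complexities of nested reflections multiply through the amplitude amplification transducer, and one must show that the $d^2/\beta$ gap term enters only with the small weight $\norm{\ket{\overline{\sigma}}}^2$ and effectively as $d/\sqrt{\beta}$, while the warmness argument controlling $1/\ell_\delta$ must survive discretization.
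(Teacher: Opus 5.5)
Your architecture is the paper's: Gaussian annealing with the two-state transducer of \Cref{thm:two-state-ampl}, with each reflection bounded by the amortized speedy-walk reflection transducer. But the step that should produce the quadratic speedup is missing, and the mechanism you propose for it does not exist.

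The reflection around $\ket{\sigma_\beta}$ costs $O(1)$ on $\ket{\sigma_\beta}$ and $\widetilde{O}(d^2/\min\{1,\beta\})$ on the orthogonal vector. \Cref{thm:two-state-ampl} charges the \emph{sum} of these costs divided by the overlap. Since the overlap is only guaranteed to be $\Omega(1)$, the weight $\norm{\ket{\overline{\sigma}}}^2$ is $\Theta(1)$ and suppresses nothing. So your phase cost really is $\widetilde{O}(d^2/\min\{1,\beta_j\})$, and the total is roughly $d^{2.5}+d^2R^2$; your own count already gives $d^{2.5}$ for $\beta_j\geq 1$.

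Converting to a circuit takes no square root. \Cref{thm:transducer-to-alg} uses $O(W/\epsilon^2)$ calls for error $\epsilon$, which is linear in $W$, and the statement concerns transduction complexity anyway.

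The missing ingredient is rebalancing (\Cref{thm:balancing-plus-minus}, \Cref{lem:rebalancing}). For a reflection problem you can trade the cost $W_+$ on the $+1$-eigenvector and $W_-$ on the $-1$-eigenvectors for $\alpha W_+$ and $W_-/\alpha$. Both costs depend only on $d,\beta,\delta$ and not on $K$, so the choice $\alpha=\sqrt{d/(\beta\delta^2)}$ is input-independent and makes both costs $O(\sqrt{d/(\beta\delta^2)})$. Only then does a phase cost $\widetilde{O}(d/\sqrt{\min\{1,\beta_j\}})$. Summing $d/\sqrt{\beta_j}$ over the second stage of the schedule (\Cref{lem:schedule-length}) then yields the $dR$ term.

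Two further steps also fail as written.

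First, your step size $\delta\in\widetilde{\Theta}(1/\sqrt{(1+\beta)d})$ changes with $\beta$. Under $\beta_{j+1}=\beta_j(1-1/\sqrt{d})$ it grows by a factor $1+\Theta(1/\sqrt{d})$ per phase. The amortized term for reflecting around the new state applied to the old one is then $\E_{x\sim\pi_{\beta_j,\delta_j}}[1/\ell_{\delta_{j+1}}(x)]$, which involves the ratio $\ell_{\delta_j}/\ell_{\delta_{j+1}}$. Warmness of the Gibbs distributions says nothing about this ratio. The only available bound is $(\delta_{j+1}/\delta_j)^d=e^{\Theta(\sqrt{d})}$ (\Cref{lem:step-size-update-warmness,lem:annealing-average-local-conductance}). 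The paper therefore interleaves $O(d\ln(\delta_{j+1}/\delta_j))$ intermediate step-size updates (\Cref{cor:anneal-small-steps}). This, not the spectral factor, is where the $d^2$ term comes from, and your accounting gives that term no source.

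Second, you cannot anneal down to $\beta=0$. The gap bound $\Omega(\delta^2\beta/d)$ degenerates there. Moreover, the speedy-walk stationary state has density proportional to $\ell_\delta$, which is guaranteed $\eta$-close to uniform only for $\delta$ polynomially small in $\eta$, i.e., at a cost polynomial in $1/\eta$. The paper stops at some $\beta\leq 1/(8R)^2$ and applies the rejection-sampling transducers of \Cref{thm:speedy-to-ball} (the $\gamma=1-1/(2d)$ rescaling of \cite{CV18}) and \Cref{cor:speedy-to-uniform}. This is what keeps the dependence on $\eta$ logarithmic.
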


This is a quantum analogue to \cite[Theorem~1.2]{CV18}, improving over their $\tO(d^2 R^2 + d^3)$ bound on the randomized query complexity.

While the transducer in \cref{thm:sampling} can of course be turned into a quantum algorithm in the usual circuit model, this would incur a naive $O(1/\varepsilon^2)$ multiplicative overhead, where $\varepsilon$ is the operator-norm error.
Better is to use the nice composition properties of transducers, which imply for instance that any algorithm for a decision problem that requires $N$ samples of this uniform superposition over $K$, can be run with $\widetilde{O}(N(dR + d^2)\log(1/\delta))$ queries with failure probability at most $\delta$. Thus, \Cref{thm:sampling} is to be interpreted as a quantum subroutine, in which it is more efficient than its classical counterpart developed in~\cite{CV18}.

\subsubsection{Quantum volume estimation}

For the well-rounded case, $R \in \tO(\sqrt{d})$, the volume estimation algorithm in \cite{CV18} has complexity $\tO(d^3/\varepsilon^2)$.
This stems essentially from a length-$\ell$ cooling schedule for $\ell$ between $\sqrt{d}$ and $d$, multiplied by a $\tO(d^3/(\ell\varepsilon^2))$ cost-per-temperature for mean estimation and speedy walk annealing subroutines.
Combining our quantum speedy walk, the quantum annealing transducer and our quantum mean estimation transducers, we can put a square root over the cost-per-temperature, but not over the schedule length.
Picking schedule length $\ell \in \tO(\sqrt{d})$ then yields a $\tO(d^{1.25}/\varepsilon)$ quantum cost-per-temperature.\footnote{\cite{CV18} do not establish warmness between subsequent Gibbs distributions for this shorter schedule, but our quantum annealing approach only requires a weaker notion of warmness (namely $|\braket{\pi_\beta}{\pi_{\beta'}}| \in \Omega(1)$) that is easier to prove.}
For general~$R$, the resulting complexity of our quantum volume estimation algorithm is summarized in the following theorem.
There is an additive $\widetilde{O}(d^2)$ overhead because we have to use the longer $\tO(d)$-length schedule for the annealing task.


\begin{theorem}[Informal version of \Cref{thm:vol-est-algo}]
\label{thm:result-vol-est-algo}
There is a bounded-error quantum algorithm that estimates the volume of a convex body $K \subseteq \R^d$ satisfying $B_d \subseteq K \subseteq RB_d$ with a number of queries that satisfies
\[\widetilde{O}\left(\frac{d^{1.75} + d^{1.25}R}{\varepsilon} + d^2\right).\]
\end{theorem}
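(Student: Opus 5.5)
We assemble the algorithm from the components already stated, following the Cousins--Vempala telescoping structure. First, after the $\tO(d^{3.5})$ rounding step (which is not part of this statement), assume $B_d \subseteq K \subseteq RB_d$. Choose $\beta^* \in \widetilde\Omega(d)$ so that $Z(\beta^*)$ equals the full Gaussian integral up to relative error $\varepsilon/2$; this costs no queries. Then fix an inverse cooling schedule $\beta_0=\beta^*>\beta_1>\dots>\beta_\ell=0$. We use two regimes. For annealing, a schedule of length $\tO(d)$ with consecutive ratios $\beta_{j+1}=\beta_j(1-1/\sqrt d)$ suffices for $|\braket{\pi_{\beta_j}}{\pi_{\beta_{j+1}}}|\in\Omega(1)$. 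For estimation, a coarser subsequence of length $\tO(\sqrt d)$ (with ratio $1-1/\sqrt d$ replaced appropriately by $1-\Theta(1)/\sqrt{d}$ in the regime $\beta\ge 1/R^2$) is used. For each coarse pair we need to check that the random variable $X=e^{-(\beta'-\beta)\|x\|^2}$ under $\pi_\beta$ has relative variance $\E[X^2]/\E[X]^2 \in O(1)$. This follows from log-concavity of $\beta\mapsto \beta^{d/2}Z(\beta)$, as in \cite{CV18}. The $O(1)$ overlap condition follows from the same computation with $\beta'$ replaced by $(\beta+\beta')/2$.

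Second, we prepare (approximations of) $\ket{\sigma_{\beta_j}}$ by the annealing transducer (\Cref{thm:two-state-ampl}), using the reflections of \cref{thm:intro-speedy-walk-reflection}. The key accounting step is as follows. When reflecting $\ket{\sigma_{\beta'}}$ or states in $\Span\{\ket{\sigma_\beta},\ket{\sigma_{\beta'}}\}$ about $\ket{\sigma_\beta}$, the orthogonal component $\ket{\overline\sigma}$ has density comparable to $\sigma_{\beta'}$. The integral term $\int |\overline\sigma|^2/(\sigma_\beta\ell_\delta)$ is then bounded by a warmness factor times $\E_{\sigma_{\beta'}}[1/\ell_\delta]$. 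Here I would use the Cousins--Vempala amortization bound: for warm distributions the average of $1/\ell_\delta$, or more precisely the weighted version after the rejection-sampling correction, is $\tO(1)$. The first term gives $\tO(d^2/\min\{1,\beta\})$ per step. Summed over the fine schedule, the regime $\beta\ge 1$ costs $\tO(d^2)$ in total. The regime $\beta<1$ costs $\tO(d^2/\beta)$ per step via the spectral gap. This is where $R$ enters: the gap there is really $\tO(1/(\beta^{-1}d)\cdot\ldots)$, and with the better $\delta$ choice we get a per-state preparation cost of $\tO(d^{1.5}+d R)$-type. We want an amortized cost-per-temperature of $\tO(d^{1.5}\cdot$ stuff$)$ which, after square roots from mean estimation, gives the per-temperature $\tO((d^{1.25}+d^{0.75}R)/\varepsilon')$.

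Third, at each coarse temperature, we run the mean-estimation transducer (\Cref{thm:transducer-mean-est}) on $X$ to relative precision $\varepsilon'=\varepsilon/\sqrt{\ell}$ with $\ell=\tO(\sqrt d)$, using unbiasedness so that errors add in variance across the product. This costs $O(1/\varepsilon')$ invocations of the preparation/reflection of $\ket{\sigma_{\beta_j}}$. Each invocation costs $\tO(\sqrt{T_j})$, where $T_j$ is the amortized walk cost, i.e.\ the square root of $d^2$-ish quantities with the $\sqrt{d}$ schedule giving $d^{1.25}$. Summing over $\tO(\sqrt d)$ temperatures gives $\tO(\sqrt d\cdot\sqrt d\cdot d^{0.75}/\varepsilon)=\tO(d^{1.75}/\varepsilon)$, plus the $R$-dependent term. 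The additive $d^2$ comes from the one-time fine annealing, which is reused non-destructively. Finally, we convert the composed transducer into a bounded-error algorithm, using the composition property so that the overall error stays constant. We also need to control the discretization and the bias between $\sigma_\beta$ and $\pi_\beta$ (the local conductance correction) via the rejection sampling transducer.

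The main obstacle is the amortization bound in the second step. We must show that the integral term in \cref{thm:intro-speedy-walk-reflection}, evaluated on the orthogonal components arising inside the annealing and mean-estimation transducers, is $\tO(1)$ times the squared norm. This requires those components to be warm with respect to $\sigma_\beta$ and requires a quantum-compatible version of the CV bound on $\E[1/\ell_\delta]$. I would first try bounding the weighted conductance integral directly by the CV lemma that the average local conductance under any $O(1)$-warm start is $\Omega(1)$, strengthened to an $L^2$ version via the same localization-type argument.
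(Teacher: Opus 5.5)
Your overall architecture matches the paper: a telescoping product over an $\tO(\sqrt d)$ schedule, rejection sampling from the speedy-walk stationary state to the Gibbs state, unbiased non-destructive mean estimation at precision $\varepsilon/\sqrt{\ell}$, and a final transducer-to-algorithm conversion. However, the two terms other than $d^{1.75}/\varepsilon$ are never derived, and the mechanisms you sketch for them would not give the claimed bound.

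\textbf{The $R$-term.} After rebalancing, reflecting about the state at $\beta_j$ costs $\sqrt{d/(\beta_j\delta_j^2)}$. Once $\delta_j$ saturates at $\tTh(1/\sqrt d)$, i.e.\ for $\beta_j\lesssim 1$, this is $\tTh(d/\sqrt{\beta_j})$; $R$ does not enter through the spectral gap. Suppose the ratio stays $1-\Theta(1/\sqrt d)$ all the way down to $\beta\approx 1/R^2$, as you describe it. Then $\sum_j\beta_j^{-1/2}$ over $[1/R^2,1]$ is $\Theta(\sqrt d\,R)$, and estimation costs $\tO(d^{1.75}R/\varepsilon)$. For $R=\tO(\sqrt d)$ that is the old $d^{2.25}/\varepsilon$.

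The paper instead decreases $\beta_j$ by $\beta_j\max\{1/(8\sqrt d),1/(8R\sqrt{\beta_j})\}$. Below $\beta=d/R^2$ it is $\sqrt\beta$ that drops by about $1/(16R)$ per step, so $\sum_j\beta_j^{-1/2}=O(R\log(\cdot))$ (\Cref{lem:schedule-length}), which gives $d^{1.25}R/\varepsilon$. On these larger steps, the bound $\Var_{\pi_\beta}[\|x\|^2]\le d/(2\beta^2)$ from log-concavity of $\beta^{d/2}Z(\beta)$ is too weak for constant relative variance and overlap. You need the Brascamp--Lieb bound $2R^2/\beta$ and Popoviciu's $R^4/4$ (\Cref{thm:variance-bounds}).

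You also leave open where the mean estimator's a priori inputs come from. \Cref{thm:transducer-mean-est} needs $\widetilde\sigma\ge\sigma$ and $|\widetilde m-\mu|\le 17\widetilde\sigma$ for $X_j=e^{(\beta_j-\beta_{j+1})\|x\|^2}$, whose scale is unknown. The paper gets these from a quantile estimate of $\|x\|^2$ (\Cref{thm:median-estimation}), which relies on the anti-concentration bound of \Cref{lem:anti-concentration}.

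\textbf{The additive $d^2$.} Overlap is not why annealing needs a finer schedule. Your own computation with $(\beta+\beta')/2$ already gives $\Omega(1)$ overlap on the coarse schedule (\Cref{lem:gibbs-overlap}), and the fine schedule you state (ratio $1-1/\sqrt d$ but length $\tO(d)$) is inconsistent.

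The real obstruction is the step size. For $\beta_j\gtrsim 1$, $\delta_j\propto 1/\sqrt{d\beta_j}$ grows as $\beta$ decreases. When you reflect about $\ket{\pi_{\beta',\delta'}}$, the component orthogonal to it that comes from $\ket{\pi_{\beta,\delta}}$ incurs, besides the gap term, $\E_{\pi_{\beta,\delta}}[1/\ell_{\delta'}]$. This is only bounded by $(\delta'/\delta)^d$ (\Cref{lem:step-size-update-warmness,lem:annealing-average-local-conductance}). Hence $\delta$ may only grow by factors $1+O(1/d)$, which requires $O(d\ln(\delta_{j+1}/\delta_j))$ sub-steps of cost $\tO(d)$ each (\Cref{cor:anneal-small-steps}). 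Since $\sum_j\ln(\delta_{j+1}/\delta_j)=\tO(1)$, this totals $\tO(d^2)$. A $\beta$-schedule of ratio $1-\Theta(1/d)$ in the range where $\delta$ varies would also work, but only for this reason.

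\textbf{The amortization step.} Your proposed route targets the wrong quantity. A lower bound $\E_\psi[\ell_\delta]=\Omega(1)$ for warm $\psi$ gives no upper bound on $\E_\psi[1/\ell_\delta]$; Jensen only yields $\E_\psi[1/\ell_\delta]\ge 1/\E_\psi[\ell_\delta]$. No $L^2$ or localization strengthening is needed.

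The paper uses an exact cancellation. The speedy-walk stationary density is proportional to $\ell_\delta(x)e^{-\beta\|x\|^2}$, so $\E_{\pi_{\beta,\delta}}[1/\ell_\delta]=1/\lambda$, where $\lambda\ge 1/2$ is the Gibbs-average local conductance \cite[Lemma~6.11]{CV18}. This combines with three further facts:
\begin{enumerate}[nosep]
\item $|\overline\pi(x)|^2/\pi(x)\le\psi(x)+\pi(x)$, so no warmness factor is needed;
\item $\ell_{\delta,r}/\ell_{\delta',r'}\le(\delta'/\delta)^d$;
\item $\ell_{\delta',r'}/\ell_{\delta,r}=O(1)$, by convexity rescaling plus \Cref{lem:volume-bound} for the clipping.
\end{enumerate}
Together these control every cross term.
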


Combined with the $\tO(d^{3.5})$ rounding procedure from \cite{jia2026reducing} that yields $R \in \widetilde O(\sqrt{d})$ this implies our main result, \cref{thm:intro-main}.



\subsection{Open questions}

\paragraph{Lower bounds.}
Both in the randomized and quantum setting there is a gap between the best known upper and lower bounds for volume estimation.
In the randomized setting, the best-known lower bound for volume estimation is by Rademacher and Vempala~\cite{rademacher2008dispersion}, who showed that it requires at least $\widetilde{\Omega}(d^2)$ membership queries for constant precision.
In very recent work, Vempala \cite{vempala2026nearly} showed that the same lower bound already holds for the problem of sampling uniformly from the convex body.
In the slightly different regime where $d$ is kept constant, the randomized query complexity was exactly characterized to be~$\Theta_d(1/\varepsilon^{2(d-1)/(d+3)})$~\cite{cornelissen2025compute}.

On the quantum side, in the constant-error regime, the current state-of-the-art lower bound is $\Omega(\sqrt{d})$, proved in \cite[Theorem~6.1]{chakrabarti2023quantum}.
They also show a lower bound of $\Omega(1/\varepsilon)$ in the restricted regime where $\varepsilon > 1/d$.
In the regime where $d$ is fixed, the quantum query complexity was completely characterized to be $\Theta_d(1/\varepsilon^{(d-1)/(d+1)})$~\cite{cornelissen2025compute}.

\paragraph{Hit-and-run walk.}
Very recently, Zhang~\cite{zhang2026hit} and Kook and Vempala~\cite{kook2026spectral} improved the analysis of the hit-and-run walk.
They showed that it mixes in $\widetilde{O}(d^2)$ steps to the uniform distribution, improving on the earlier $\tO(d^3)$ bound and matching the mixing time of the speedy walk.
If this bound can be extended to mixing to more general logconcave distributions, then this would likely improve the randomized volume estimation by Lov\'asz and Vempala~\cite{lovasz2006simulated} to match the current state-of-the-art randomized algorithm by Cousins and Vempala~\cite{CV18}, making $\widetilde{O}(d^{3.5} + d^3/\varepsilon^2)$ queries.
A simple back-of-the-envelope calculation then suggests that quantizing the resulting algorithm (essentially following \cite{chakrabarti2023quantum}) could also yield a $\widetilde{O}(d^{3.5} + d^{1.75}/\varepsilon)$-query quantum algorithm for volume estimation, matching the bound we obtain in this work.
While this approach might thus avoid the need for amortization, potentially yielding a conceptual simplification, it seems unlikely to further improve over our results without introducing additional ideas.


\paragraph{Improving our analysis.}
We expect that it should be possible to tighten the upper bound on the annealing cost for bigger updates of the step size $\delta$. To that end, one would have to improve the analysis of \Cref{lem:annealing-average-local-conductance}. If it were possible to replace the upper bound $(\delta'/\delta)^d$ by a constant, this would remove the additive $d^2$-overhead from the complexity statement in~\Cref{thm:result-vol-est-algo}.

To improve further, it seems one would have to additionally quadratically speed up the dependence on the schedule length. This way, one would obtain a full quadratic speed-up over the classical algorithm for partition function estimation, and in the context of volume estimation, this could potentially bring the complexity down to $d^{1.5}/\varepsilon$. Even though we don't see any inherent barriers to this approach, it also seems to require several non-trivial new ideas, which is why we leave this approach for future work.

Another interesting direction for future work would be to analyze the time complexity of this algorithm. In principle, it should be possible to analyze the time-efficiency of transducer compositions, as it was one of the motivating ideas behind developing the framework~\cite{belovs2024taming}. However, we derive some composition results from the adversary bound directly, and this seems to forgo any time complexity considerations. 


\paragraph{Quantum rounding.}
Finally, it would be interesting to see if the rounding part of the complexity can be sped up by quantizing the algorithm developed by Jia, Laddha, Lee and Vempala~\cite{jia2026reducing}.
While this was already considered in \cite{chakrabarti2023quantum} (see \cref{footnote:chakrabarti}), it seems largely open.
The most immediate technical bottleneck seems to be to develop a non-destructive \textit{multivariate} mean estimation routine, generalizing the non-destructive univariate quantum mean estimation used in this and prior works. We leave this for future work as well. 

\paragraph{Quantum subroutine composition.}
It is interesting to compare our amortized quantum walk framework with that of Jeffery \cite{jeffery2022quantum}.
The main reason that we cannot invoke that work is that it assumes unit cost for generating a superposition over neighboring states (i.e., for genenerating the \emph{star state}), while it amortizes the cost of making an edge transition.
For the speedy walk, however, the transition cost is trivial while the cost of generating the start state is precisely the one to be amortized.
Our framework allows us to amortize the cost of generating the star state.
While it seems likely that it can also amortize edge transition costs, we leave this for future work.

\paragraph{Continuous-space quantum walks and transducers.}
Similar to earlier work \cite{chakrabarti2023quantum}, we put a significant amount of effort into discretizing the state space, and bounding the impact on the relevant Markov chains.
We do this discretization \emph{before} the quantization so that we can keep our quantum walks and transducers finite-dimensional.
An alternative approach would be to do discretization \emph{after} quantization.
I.e., develop the theory of quantum walks, transducers and electrical networks directly in the continuous-space setting,\footnote{For quantum walks this was already done to some extent in \cite{chakrabarti2023quantum}.} and then discretize the resulting continuous-variable quantum algorithm into a finite-dimensional one.
While this is bound to yield an elegant quantization, it seems less clear how to properly handle the discretization and its impact.


        
    
\subsection{Organization}

\cref{part:QWs} of the paper develops our amortized quantum walk toolkit based on transducers.
In its \Cref{sec:preliminaries}, we fix the notation and we introduce the necessary background on quantum walks and transducers. In \Cref{sec:ampl-ampl}, we derive amplitude amplification on the level of transducers. Then, in \Cref{sec:quantum-walks-transducers}, we construct a transducer that reflects through the quantum state that encodes the stationary distribution of a random walk.

\cref{part:volume} of the paper applies our new toolkit to the volume estimation problem.
In \cref{sec:CV-overview} we describe the volume estimation problem and the algorithm of Cousins and Vempala.
In \cref{sec:MC-measurable,sec:discretization,sec:ball-walks} we develop tools for discretizing the algorithm.
In \cref{sec:transducer-ball-walk} our amortized quantum walk version of the ball walk in \Cref{sec:ball-walks}, and finally in \Cref{sec:vol-est} we develop our new quantum algorithm for volume estimation.
    
    \newpage
    \part{Transducers and amortized quantum walks} \label{part:QWs}

    \section{Preliminaries: transducers and random walks}
    \label{sec:preliminaries}

    \subsection{Notation}

    We take $\N = \{1,2,\dots\}$. Whenever we have an expression involving the symbol $\pm$, we interpret this as a pair of expressions where we put replace it with $+$ and $-$, respectively. If we use the symbol $\pm$ multiple times in the same expressions, we replace each with the same sign, and if we use $\mp$ instead, we use the opposite sign.
    For all $d \in \N$, let $B_d := \{x \in \R^d : \norm{x} \leq 1\}$ be the unit ball.

    Let $d \in \N$ and $f,g : \R_{\geq0}^d \supseteq \D \to \R$. We say that $f \in O(g)$ if there exists a $C,M > 0$ such that for all $x \in \D$ with $\norm{x} \geq M$, then $|f(x)| \leq C \cdot |g(x)|$. We write $f \in \Omega(g)$ if and only if $g \in O(f)$, and we write $\Theta(f) = O(f) \cap \Omega(f)$. Furthermore, we write $f \in \widetilde{O}(g)$, if there exists a $k \in \N$ such that $f \in O(g \cdot \prod_{j=1}^d\log^k(x_j))$. Similarly, we write $f \in \widetilde{\Omega}(g)$ if and only if $g \in \widetilde{O}(f)$, and $\widetilde{\Theta}(f) = \widetilde{O}(f) \cap \widetilde{\Omega}(f)$. If the dimension of the domain is not clear from context, we write $f \in O(g \cdot \polylog((x_j)_{j \in S}))$, for $S \subseteq [d]$, if there exist $k_j \in \N$, for every $j \in S$, such that $f \in O(g(x) \cdot \prod_{j \in S} \log^{k_j}(x_j))$. 

    When $\mathcal{H}$ is a Hilbert space, let $\mathcal{U}(\mathcal{H})$ be the set of unitary operators on $\mathcal{H}$. We write $\mathbbm{1}$ for the all-ones vector. For a measureable space $\Omega$, we write $\Delta_{\Omega}$ for the set of probability distributions over it.

    An undirected graph $G = (V,E)$ consists of a vertex set $V$ and an edge set $E \subseteq \binom{V}{2}$ of unordered pairs (in particular, $E$ contains no double edges). For all $v \in V$, we let $N(v) = \{w \in V : \{v,w\} \in E\}$ and we refer to $N(v)$ as the neighborhood of $v$.


    \subsection{Quantum adversary bound for state conversion}

    The state-conversion problem was first introduced by Lee, Mittal, Reichardt, \v{S}palek and Szegedy~\cite{lee2011quantum}, and captures the problem of converting one state into another, given access to an oracle. We consider a generalized version of the problem for arbitrary input oracles introduced by Belovs~\cite[Definition~7]{belovs2015variations}.


    \begin{definition}[\cite{lee2011quantum,belovs2015variations}]
        Let $\mathcal{V}_I$, $\mathcal{V}_O$ and $\mathcal{H}$ be Hilbert spaces, and $\D$ a set. For all $x \in \D$, let $\ket{\sigma_x} \in \mathcal{V}_I$, $\ket{\tau_x} \in \mathcal{V}_O$ and $O_x \in \mathcal{U}(\mathcal{H})$. Then, $P = \{(\ket{\sigma_x},\ket{\tau_x},O_x)\}_{x \in \D}$ is a state-conversion problem. Let $\mathcal{A}$ be a quantum query algorithm with state space $\mathcal{V}_I \oplus \mathcal{V}_O \oplus (\mathcal{H} \otimes \mathcal{W})$, where $\mathcal{W}$ is a Hilbert space, starting with an initial state that does not depend on $x$ and making queries to $O_x$. Then, $\mathcal{A}$ is said to solve the state-conversion problem with precision $\eta > 0$ if for all $x \in \D$, it maps $\ket{\sigma_x}$ to $\ket{\widetilde{\tau}_x}$, such that $\norm{\ket{\widetilde{\tau}_x} - \ket{\tau_x}} \leq \eta$.\footnote{Here, $\ket{\tau_x}$ is embedded in the state space in the canonical way, i.e., as $0 \oplus \ket{\tau_x} \oplus (0 \otimes 0)$.} We write $\mathsf{Q}_{\eta}(P)$ for the smallest number of queries to $O_x$ required for a quantum query algorithm to solve the state-conversion problem up to precision $\eta$, and $\mathsf{Q}(P) := \mathsf{Q}_{1/3}(P)$.
    \end{definition}

    A long line of work characterized the query complexity of state conversion in terms of a semi-definite program, referred to as the adversary bound, see~\cite{Amb00,spalek2006all,hoyer2007negative,reichardt2011reflections,belovs2015variations,belovs2023one} and the references therein. We use two formulations of the adversary bound, recently introduced by Belovs and Yolcu~\cite{belovs2023one}.

    \begin{definition}[\cite{belovs2023one}]
        \label{def:adversary-bounds}
        Let $P = \{(\ket{\sigma_x}, \ket{\tau_x}, O_x)\}_{x \in \D}$ be a state-conversion problem, with $O_x \in \mathcal{U}(\mathcal{H})$. We define the unidirectional adversary bound for $P$ as
        \begin{align*}
            \overrightarrow{\ADV}(P) := \min\quad & \max_{x \in \D}\norm{\ket{w_x}}^2, \\
            \text{s.t.}\quad & \bra{w_x}((I_{\mathcal{H}} - O_x^{\dagger}O_y) \otimes I_{\mathcal{W}})\ket{w_y} = \braket{\sigma_x}{\sigma_y} - \braket{\tau_x}{\tau_y}, & \forall x,y \in \D, \\
            & \ket{w_x} \in \mathcal{H} \otimes \mathcal{W}, \quad \mathcal{W} \text{ Hilbert space}, & \forall x \in \D,
        \end{align*}
        and we define the bidirectional adversary bound for $P$ as
        \begin{align*}
            \overleftrightarrow{\ADV}(P) := \min\quad & \max_{x \in \D} \max\left\{\norm{\ket{w_x^+}}^2, \norm{\ket{w_x^-}}^2\right\}, \\
            \text{s.t.}\quad & \bra{w_x^+}((I_{\mathcal{H}} - O_x^{\dagger}O_y) \otimes I_{\mathcal{W}})\ket{w_y^-} = \braket{\sigma_x}{\sigma_y} - \braket{\tau_x}{\tau_y}, & \forall x,y \in \D, \\
            & \ket{w_x^{\pm}} \in \mathcal{H} \otimes \mathcal{W}, \quad \mathcal{W} \text{ Hilbert space}, & \forall x \in \D.
        \end{align*}
    \end{definition}

    Belovs and Yolcu observed that the bidirectional version of the adversary bound captures having access to both forward and backward queries to the oracle, motivating the nomenclature for the unidirectional and bidirectional adversary bound. Specifically, they proved the following statement, and we provide the proof for convenience. 

    \begin{theorem}[{Based on \cite[Proposition~9.4]{belovs2023one}}]
        \label{thm:unidirectional-bidirectional}
        Let $P = \{(\ket{\sigma_x}, \ket{\tau_x}, O_x)\}_{x \in \D}$ be a state-conversion problem {with $O_x \in \mathcal{U}(\mathcal{H})$}, and let $P' = \{(\ket{\sigma_x}, \ket{\tau_x}, O_x \oplus O_x^{\dagger})\}_{x \in \D}$. Then, $\overleftrightarrow{\ADV}(P) = \overrightarrow{\ADV}(P')$.
    \end{theorem}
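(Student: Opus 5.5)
We prove the equality by showing both inequalities, each by an explicit transformation of feasible solutions that preserves the objective value. Throughout, the oracle of $P'$ acts on $\mathcal{H} \oplus \mathcal{H}$, and we write vectors in $(\mathcal{H}\oplus\mathcal{H})\otimes\mathcal{W}$ as pairs $(\ket{a},\ket{b})$ with $\ket{a},\ket{b}\in\mathcal{H}\otimes\mathcal{W}$. Write $\Delta_{xy} := \braket{\sigma_x}{\sigma_y} - \braket{\tau_x}{\tau_y}$. The key algebraic identity is
\[
I - O_x^\dagger O_y = -O_x^\dagger(O_x - O_y), \qquad I - O_xO_y^\dagger = (O_x - O_y)O_y^\dagger,
\]
together with the block-diagonal form $I - (O_x\oplus O_x^\dagger)^\dagger(O_y\oplus O_y^\dagger) = (I - O_x^\dagger O_y)\oplus(I - O_xO_y^\dagger)$.

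\emph{Step 1: $\overrightarrow{\ADV}(P') \le \overleftrightarrow{\ADV}(P)$.} Take a bidirectional solution $\{\ket{w_x^\pm}\}$ with value $A$. We seek $\ket{v_x} = (\ket{a_x},\ket{b_x})$ with
\[
\bra{a_x}(I-O_x^\dagger O_y)\ket{a_y} + \bra{b_x}(I-O_xO_y^\dagger)\ket{b_y} = \Delta_{xy}.
\]
The natural move is to symmetrize. First set $\ket{a_x} = \tfrac{1}{\sqrt2}(\ket{w_x^+}+\ket{w_x^-})$ and likewise with a minus sign, using an extra qubit in $\mathcal{W}$. This makes the quadratic form split into the symmetric part $\tfrac12(\bra{w^+_x}M\ket{w^-_y}+\bra{w^-_x}M\ket{w^+_y})$ plus diagonal terms $\bra{w^+_x}M\ket{w^+_y}$ that must be cancelled. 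The cancellation uses the second block: place $O_x\ket{w_x^\pm}$-type vectors in $\ket{b_x}$, since $\bra{w_x}O_x^\dagger(I - O_xO_y^\dagger)O_y\ket{w_y} = -\bra{w_x}(I-O_x^\dagger O_y)\ket{w_y}$. So the plan is $\ket{a_x} = \tfrac{1}{\sqrt2}(\ket{w_x^+}\ket{0} + \ket{w_x^-}\ket{1})$ and $\ket{b_x}$ built from $O_x\ket{w_x^\pm}$ with tensor labels chosen so that the $++$ and $--$ terms cancel while the cross terms add to $\bra{w_x^+}M\ket{w_y^-}$. The adjoint of the constraint supplies $\bra{w_x^-}M\ket{w_y^+} = \overline{\Delta_{yx}}=\Delta_{xy}$. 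Then $\norm{\ket{v_x}}^2 \le \norm{w_x^+}^2+\norm{w_x^-}^2 \le 2A$. To obtain exactly $A$ rather than $2A$, we rescale $\ket{w^+}\mapsto t\ket{w^+}$, $\ket{w^-}\mapsto t^{-1}\ket{w^-}$ and distribute weights carefully. Finding the exact gadget is the main obstacle. If exact equality proves elusive, we will accept constant factors, but first we try the form $\ket{a_x} = \ket{w_x^+}$, $\ket{b_x} = O_x\ket{w_x^-}$ directly. Using $\bra{w_x^-}O_x^\dagger(I-O_xO_y^\dagger)O_y\ket{w_y^-} = \bra{w_x^-}(O_x^\dagger O_y - I)\ket{w_y^-}$, we then check which combination yields $\bra{w_x^+}M\ket{w_y^-}$.

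\emph{Step 2: $\overleftrightarrow{\ADV}(P) \le \overrightarrow{\ADV}(P')$.} Given $\ket{v_x} = (\ket{a_x},\ket{b_x})$ with value $A$, set $\ket{w_x^+} = \ket{w_x^-} = \ket{a_x}\ket{0} + (O_x^\dagger\ket{b_x})\ket{1}$. Then
\[
\bra{w_x^+}(I - O_x^\dagger O_y)\ket{w_y^-} = \bra{a_x}(I-O_x^\dagger O_y)\ket{a_y} + \bra{b_x}O_x(I-O_x^\dagger O_y)O_y^\dagger\ket{b_y},
\]
and the last operator equals $O_xO_y^\dagger - I$, so a sign flip is needed. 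We fix this by taking $\ket{w_x^-}$ with $-O_x^\dagger\ket{b_x}$ in the second component, which gives exactly $\Delta_{xy}$ while both norms equal $\norm{\ket{v_x}}^2$. This direction is routine.

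This sign-flip trick also suggests the inverse in Step 1: given $\ket{w^\pm}$ one wants $\ket{a},\ket{b}$ with $\ket{w^\pm} = \ket{a}\ket{0}\pm O^\dagger\ket{b}\ket{1}$. This works only when $\ket{w^+}$ and $\ket{w^-}$ have this special form, so for general $\ket{w^\pm}$ we use the symmetrization $\ket{a}=\tfrac12(\ket{w^+}+\ket{w^-})$, $O^\dagger\ket{b} = \tfrac12(\ket{w^+}-\ket{w^-})$, with the cross terms handled by the polarization identity. We then bound the norm by $\tfrac12(\norm{w^+}^2+\norm{w^-}^2)\le A$ via the parallelogram law, which yields exactly $A$. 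Verifying that the leftover terms $\bra{w^+_x}M\ket{w^+_y}$ cancel is the crux.
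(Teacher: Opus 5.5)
The paper announces a proof of this statement but does not actually include one (it defers to Belovs--Yolcu), so I judge your proposal on its own. It is correct once pruned, but Step~1 as written never settles on a construction and contains dead ends you should delete.

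Your Step~2 is correct as it ends up. With $\ket{w_x^\pm}=\ket{a_x}\ket{0}\pm\bigl((O_x^\dagger\otimes I)\ket{b_x}\bigr)\ket{1}$, the bidirectional constraint holds exactly and $\norm{\ket{w_x^\pm}}^2=\norm{\ket{v_x}}^2$.

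In Step~1, the problems are these. Your opening identities are off by a sign: in fact $I-O_x^\dagger O_y=O_x^\dagger(O_x-O_y)$ and $I-O_xO_y^\dagger=(O_y-O_x)O_y^\dagger$, though you never use them. The tensor-label plan $\ket{a_x}=\tfrac{1}{\sqrt2}(\ket{w_x^+}\ket{0}+\ket{w_x^-}\ket{1})$ fails, and so does the trial $\ket{a_x}=\ket{w_x^+}$, $\ket{b_x}=O_x\ket{w_x^-}$. Both produce only the diagonal terms $\bra{w_x^\pm}M\ket{w_y^\pm}$ and no cross term, so neither can yield $\Delta_{xy}$. The construction in your last paragraph is the right one, and the ``crux'' you leave open is immediate.

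Put $M_{xy}:=(I-O_x^\dagger O_y)\otimes I$, $\ket{a_x}:=\tfrac12(\ket{w_x^+}+\ket{w_x^-})$, $\ket{c_x}:=\tfrac12(\ket{w_x^+}-\ket{w_x^-})$ and $\ket{b_x}:=(O_x\otimes I)\ket{c_x}$. Since $O_x^\dagger(I-O_xO_y^\dagger)O_y=-(I-O_x^\dagger O_y)$, the second block contributes $-\bra{c_x}M_{xy}\ket{c_y}$. Expanding bilinearly gives
\[
\bra{a_x}M_{xy}\ket{a_y}-\bra{c_x}M_{xy}\ket{c_y}=\tfrac12\bigl(\bra{w_x^+}M_{xy}\ket{w_y^-}+\bra{w_x^-}M_{xy}\ket{w_y^+}\bigr).
\]
The $++$ and $--$ terms cancel identically, without using any property of the solution. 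The first cross term equals $\Delta_{xy}$ by feasibility. The second equals $\Delta_{xy}$ by conjugating the constraint for the pair $(y,x)$, since $M_{yx}^\dagger=M_{xy}$ and $\overline{\Delta_{yx}}=\Delta_{xy}$. Hence the unidirectional constraint for $P'$ holds exactly.

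By the parallelogram law and unitarity of $O_x$,
\[
\norm{\ket{a_x}}^2+\norm{\ket{b_x}}^2=\tfrac12\bigl(\norm{\ket{w_x^+}}^2+\norm{\ket{w_x^-}}^2\bigr)\le\max_{\pm}\norm{\ket{w_x^\pm}}^2 .
\]
So no asymmetric rescaling and no constant-factor loss is needed. The apparent factor $2$ in your first attempt comes from the $\tfrac{1}{\sqrt2}$ normalization, which produces $2\Delta_{xy}$; halving the quadratic form fixes the constraint and the norm at once. Together with Step~2 this gives both inequalities, hence the equality.
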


    Cornelissen studied a specific type of state-conversion problems, referred to as \textit{reflection problems}, where we either act as identity or minus identity on the input states~\cite{cornelissen2025quantum}, i.e., where $\ket{\tau_x} \in \{\pm \ket{\sigma_x}\}$ for all $x \in \D$. For such state-conversion problems, constructing a feasible solution to the adversary bound can be simplified, requiring only a single vector per input to create a feasible solution for the bidirectional adversary bound. 

    \begin{theorem}[{Based on \cite[Theorem~4.4]{cornelissen2025quantum}}]
        \label{thm:balancing-plus-minus}
        Let $P = \{(\ket{\psi_x}, \ket{\psi_x}, O_x)\}_{x \in \D_+} \cup \{(\ket{\psi_x},-\ket{\psi_x},O_x)\}_{x \in \D_-}$ be a state-conversion problem. Suppose that we have $\{\ket{w_x^+}\}_{x \in \D_+} \cup \{\ket{w_x^-}\}_{x \in \D_-} \subseteq \mathcal{H} \otimes \mathcal{W}$, such that for all $(x,y) \in \D_+ \times \D_-$,
        \[\bra{w_x^+}(I - O_x^{\dagger}O_y)\ket{w_y^-} = 2\braket{\psi_x}{\psi_y}.\]
        Then, $\overleftrightarrow{\ADV}(P) \leq \sqrt{\max_{x \in \D_+} \norm{\ket{w_x^+}}^2 \cdot \max_{x \in \D_-} \norm{\ket{w_x^-}}^2}$.  
    \end{theorem}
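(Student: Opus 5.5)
We need to prove Theorem (balancing plus minus): given vectors satisfying the cross condition only on $\D_+\times\D_-$, we must produce a feasible bidirectional solution with the stated objective value.

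The plan is to write down explicit vectors $\ket{v_x^{\pm}}$ for every $x \in \D = \D_+ \cup \D_-$, obtained by rescaling the given $\ket{w_x^\pm}$ and placing them in extra orthogonal blocks of the workspace, and then check the bidirectional constraints
\[
\bra{v_x^+}(I - O_x^\dagger O_y)\ket{v_y^-} = \braket{\sigma_x}{\sigma_y} - \braket{\tau_x}{\tau_y}
\]
for all four sign combinations. First compute the right-hand sides. If $x,y \in \D_+$ or $x,y\in\D_-$, then $\braket{\sigma_x}{\sigma_y} - \braket{\tau_x}{\tau_y} = 0$. If $x \in \D_+, y \in \D_-$, it equals $\braket{\psi_x}{\psi_y}+\braket{\psi_x}{\psi_y} = 2\braket{\psi_x}{\psi_y}$. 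By the same computation, if $x\in\D_-, y\in\D_+$ it equals $2\braket{\psi_x}{\psi_y}$.

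Next set $\lambda = (\max_{x\in\D_-}\norm{\ket{w_x^-}}^2/\max_{x\in\D_+}\norm{\ket{w_x^+}}^2)^{1/4}$ and take $\mathcal{W}' = \mathcal{W}\otimes\C^2$. Define the vectors as follows:
\begin{itemize}
\item for $x\in\D_+$: $\ket{v_x^+} = \lambda\ket{w_x^+}\ket{0}$ and $\ket{v_x^-} = \lambda\ket{w_x^+}\ket{1}$;
\item for $y\in\D_-$: $\ket{v_y^-} = \lambda^{-1}\ket{w_y^-}\ket{0}$ and $\ket{v_y^+} = \lambda^{-1}\ket{w_y^-}\ket{1}$.
\end{itemize}
Since $O_x^\dagger O_y$ acts only on $\mathcal{H}$, the inner products split according to the $\C^2$ label.

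Now check each case.
\begin{itemize}
\item $(+,-)$, i.e.\ $x\in\D_+$, $y \in \D_-$: both vectors carry label $0$. The inner product is $\bra{w_x^+}(I-O_x^\dagger O_y)\ket{w_y^-} = 2\braket{\psi_x}{\psi_y}$, as required.
\item $(-,+)$, i.e.\ $x\in\D_-$, $y\in\D_+$: both vectors carry label $1$. The inner product is $\bra{w_x^-}(I-O_x^\dagger O_y)\ket{w_y^+}$. This equals the complex conjugate of the $(y,x)$ constraint, because $(I-O_x^\dagger O_y)^\dagger = I - O_y^\dagger O_x$. So it equals $\overline{2\braket{\psi_y}{\psi_x}} = 2\braket{\psi_x}{\psi_y}$, as required.
\item Same-sign pairs ($x,y\in\D_+$ or $x,y\in\D_-$): the two vectors carry labels $0$ and $1$, so the inner product is $0$, matching the right-hand side.
\end{itemize}

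Finally, the objective is at most $\max\{\lambda^2\max_{\D_+}\norm{w^+}^2,\ \lambda^{-2}\max_{\D_-}\norm{w^-}^2\}$. Both terms equal the geometric mean $\sqrt{\max_{\D_+}\norm{w^+}^2\cdot\max_{\D_-}\norm{w^-}^2}$, which gives the claimed bound.

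The only delicate step is the $(-,+)$ case, which relies on the adjoint relation and the conjugate symmetry of the right-hand side. We must also be careful that $\braket{\psi_x}{\psi_y}$ is conjugated consistently there.
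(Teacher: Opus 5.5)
Your proof is correct and is essentially the paper's proof. It uses the same two-block placement (your $\C^2$ label plays the role of the paper's $\oplus$), the same rescaling $\lambda^2 = \sqrt{\max_{x\in\D_-}\norm{\ket{w_x^-}}^2/\max_{x\in\D_+}\norm{\ket{w_x^+}}^2}$, and the same four-case check, while your use of $(I-O_x^\dagger O_y)^\dagger = I - O_y^\dagger O_x$ in the $(-,+)$ case justifies a step the paper only asserts.
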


    \begin{proof}
        For any $\alpha > 0$, we let
        \begin{align*}
            \ket{v_x^+} := \alpha\ket{w_x^+} \oplus 0, & \qquad \text{and} \qquad \ket{v_x^-} := 0 \oplus \alpha\ket{w_x^+}, \qquad \forall x \in \D_+, \\
            \ket{v_x^+} := 0 \oplus \frac{1}{\alpha}\ket{w_x^-}, & \qquad \text{and} \qquad \ket{v_x^-} := \frac{1}{\alpha}\ket{w_x^-} \oplus 0, \qquad \forall x \in \D_-.
        \end{align*}
        Then, we elementarily check that 
        \begin{align*}
            \forall x,y \in \D_+, \bra{v_x^+}(I - O_x^{\dagger}O_y)\otimes I_2\ket{v_y^-} &= 0 = \braket{\psi_x}{\psi_y} - \braket{\psi_x}{\psi_y}, \\
            \forall x,y \in \D_-, \bra{v_x^+}(I - O_x^{\dagger}O_y)\otimes I_2\ket{v_y^-} &= 0 = \braket{\psi_x}{\psi_y} - (-\bra{\psi_x})(-\ket{\psi_y}), \\
            \forall x \in \D_+, y \in \D_-, \bra{v_x^+}(I - O_x^{\dagger}O_y)\otimes I_2\ket{v_y^-} &= \bra{w_x^+}(I - O_x^{\dagger}O_y)\ket{w_y^-} = 2\braket{\psi_x}{\psi_y} = \braket{\psi_x}{\psi_y} - \bra{\psi_x}(-\ket{\psi_y}), \\
            \forall x \in \D_-, y \in \D_+, \bra{v_x^+}(I - O_x^{\dagger}O_y)\otimes I_2\ket{v_y^-} &= \bra{w_x^-}(I - O_x^{\dagger}O_y)\ket{w_y^+} = 2\braket{\psi_x}{\psi_y} = \braket{\psi_x}{\psi_y} - (-\bra{\psi_x})\ket{\psi_y}.
        \end{align*}
        Thus, $\{\ket{v_x^{\pm}}\}_{x \in \D_+ \cup \D_-}$ is a feasible solution to $\overleftrightarrow{\ADV}(P)$, for any choice $\alpha>0$. We balance the maximal costs of inputs in $\mathcal D_+$ and $\mathcal D_-$ by choosing
        \[\alpha^2 := \sqrt{\frac{\max_{x \in \D_-} \norm{\ket{w_x^-}}^2}{\max_{x \in \D_+} \norm{\ket{w_x^+}}^2}}.\qedhere\]
    \end{proof}

    We observe that in the proof of the above theorem we could freely rebalance the norms of the vectors. This is a well-known property of the adversary bound of Boolean functions $f:\mathcal D \subseteq \{0,1\}^n \to \{0,1\}$, where the costs of $f^{-1}(0)$ and $f^{-1}(1)$ can be balanced. Below we emphasize the similarity through a rebalancing lemma that captures both the reflection problem and the Boolean query complexity problem. 
    
    Indeed, we can rebalance the cost whenever the set $\mathcal D$ can be partitioned into two sets $\mathcal D_+$ and $\mathcal D_-$ such that the following relation holds:
    \begin{equation}\label{eq:balance-condition}
            \braket{\sigma_x}{\sigma_y} - \braket{\tau_x}{\tau_y} = 0 \qquad x,y \in \D^+ \text{ or } x,y \in D^-.
    \end{equation}
    
    \begin{lemma}[Rebalancing]
        \label{lem:rebalancing}
        Let $P = \{(\ket{\sigma_x}, \ket{\tau_x}, O_x)\}_{x \in \D}$ be a state-conversion problem, with $O_x \in \mathcal{U}(\mathcal{H})$. Assume $\mathcal D$ can be partitioned into two sets $\mathcal D_+$ and $\mathcal D_-$ such that \cref{eq:balance-condition} holds. Then, if $\{\ket{w_x^\pm}\}_{x \in D}$ is a feasible solution to  $\overleftrightarrow{\ADV}(P)$ and $\alpha>0$, then $\{\ket{w_x^\pm}\}_{x \in D}$ is also a feasible solution to $\overleftrightarrow{\ADV}(P)$ where we set $\ket{v_x^\pm} = \alpha \ket{w_x^\pm}$ for $x \in \D^+$ and $\ket{v_x^\pm} = \alpha^{-1} \ket{v_x^\pm}$ for $x \in \D^-$.
    \end{lemma}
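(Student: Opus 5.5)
We verify the constraints of $\overleftrightarrow{\ADV}(P)$ directly for the rescaled vectors $\ket{v_x^\pm}$. The statement has typos: it should read $\ket{v_x^\pm} = \alpha^{-1}\ket{w_x^\pm}$ for $x \in \D_-$, and the conclusion is that $\{\ket{v_x^\pm}\}_{x\in\D}$ is feasible. The only ingredient is bilinearity of the constraint in the pair $(\ket{w_x^+},\ket{w_y^-})$, combined with \cref{eq:balance-condition}. It is convenient to write $c_x := \alpha$ for $x \in \D_+$ and $c_x := \alpha^{-1}$ for $x \in \D_-$, so that $\ket{v_x^\pm} = c_x\ket{w_x^\pm}$. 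Since $\alpha > 0$ is real, no complex conjugation issues arise when pulling $c_x$ out of the bra. This gives
\[\bra{v_x^+}\bigl((I_{\mathcal{H}} - O_x^{\dagger}O_y)\otimes I_{\mathcal{W}}\bigr)\ket{v_y^-} = c_xc_y\,\bra{w_x^+}\bigl((I_{\mathcal{H}} - O_x^{\dagger}O_y)\otimes I_{\mathcal{W}}\bigr)\ket{w_y^-} = c_xc_y\bigl(\braket{\sigma_x}{\sigma_y} - \braket{\tau_x}{\tau_y}\bigr).\]

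We then split into cases according to which parts of the partition $x$ and $y$ lie in. If $x \in \D_+$ and $y \in \D_-$, or vice versa, then $c_xc_y = \alpha\cdot\alpha^{-1} = 1$. The right-hand side is then exactly $\braket{\sigma_x}{\sigma_y} - \braket{\tau_x}{\tau_y}$, as required. If $x,y$ lie in the same part, then $c_xc_y \in \{\alpha^2,\alpha^{-2}\}$ need not equal $1$. However, by \cref{eq:balance-condition} the quantity $\braket{\sigma_x}{\sigma_y} - \braket{\tau_x}{\tau_y}$ vanishes. Hence the scaled expression is $0$, which again equals the required right-hand side. The vectors still lie in $\mathcal{H}\otimes\mathcal{W}$ with the same $\mathcal{W}$, so all constraints hold and the rescaled family is feasible.

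Finally, we can record what the rescaling does to the objective. The maximum over $\D_+$ of $\max\{\norm{\ket{v_x^+}}^2,\norm{\ket{v_x^-}}^2\}$ is multiplied by $\alpha^2$, and the maximum over $\D_-$ is multiplied by $\alpha^{-2}$. Choosing $\alpha$ appropriately therefore yields the geometric-mean balancing used in the proof of \cref{thm:balancing-plus-minus}. There is no genuine obstacle in this argument. The only point needing care is the same-part case, which is precisely where hypothesis \cref{eq:balance-condition} is used, together with the reality of the scaling factor.
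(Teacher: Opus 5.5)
Your argument is correct. The paper states this lemma without a separate proof, and your verification is exactly the four-case check the paper carries out in the proof of \Cref{thm:balancing-plus-minus}: the constraint picks up the factor $c_xc_y$, which equals $1$ on pairs from different parts, and on pairs from the same part it multiplies a right-hand side that vanishes by \cref{eq:balance-condition}. You also read the typos correctly: the conclusion should be about $\{\ket{v_x^\pm}\}_{x\in\D}$, with $\ket{v_x^\pm} = \alpha^{-1}\ket{w_x^\pm}$ for $x \in \D_-$.
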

    
    Note that reflection problems clearly satisfy the premise of the rebalancing lemma, where $\D^{\pm} = \{x \in \D: \ket{\tau_x} = \pm \ket{\sigma_x}\}$. The usual query complexity adversary bound of Boolean functions $f:\mathcal D \subseteq \{0,1\}^n \to \{0,1\}$ corresponds to the setting where $\ket{\sigma_x} = \ket{0}$, $\ket{\tau_x} = \ket{f(x)}$ and $\D_+ = \{x : f(x)=1\}$ and $\D_- = \{x: f(x)=0\}$.

    \subsection{Transducers}

    Belovs, Jeffery and Yolcu~\cite{belovs2024taming} introduced transducers, a new and exciting way to turn solutions to the adversary bound into quantum algorithms. 

    \begin{definition}[{\cite[Section~5.1]{belovs2024taming}}]
        \label{def:transducer}
        Let $\mathcal{V}, \mathcal{H}$ be Hilbert spaces. A transducer is a unitary $U \in \mathcal{U}(\mathcal{V} \oplus \mathcal{H})$, with public space $\mathcal{V}$ and private space $\mathcal{H}$. If for a state $\ket{\sigma} \in \mathcal{V}$ and $\ket{w} \in \mathcal{H}$, we have $U(\ket{\sigma} \oplus \ket{w}) = \ket{\tau} \oplus \ket{w}$, then we write
        \[\ket{\sigma} \overset{U}{\rightsquigarrow} \ket{\tau} \qquad \text{with complexity} \qquad W(U,\ket{\sigma}) := \min \{\norm{\ket{w}}^2 : \ket{w} \in \mathcal{H}, \ket{\sigma} \oplus \ket{w} \overset{U}{\mapsto} \ket{\tau} \oplus \ket{w}\}.\]
        We refer to $W(U,\ket{\sigma})$ as the transduction complexity of $U$ applied to $\ket{\sigma}$.
    \end{definition}

    Transducers exhibit particularly nice linear-algebraic properties, as stated in the following theorem.

    \begin{theorem}[{\cite[Theorem~5.1]{belovs2024taming}}]
        \label{thm:transducer-catalyst}
        Let $\mathcal{V},\mathcal{H}$ be finite-dimensional Hilbert spaces. Let $U \in \mathcal{U}(\mathcal{V} \oplus \mathcal{H})$ be a transducer. Then, for every $\ket{\sigma} \in \mathcal{V}$, there exists a unique $\ket{\tau} \in \mathcal{V}$ such that $\ket{\sigma} \raisebox{-.2em}{$\overset{U}{\rightsquigarrow}$} \ket{\tau}$, and this mapping is unitary. Moreover, the state $\ket{w} \in \mathcal{H}$ such that $\ket{\sigma} \oplus \ket{w} \raisebox{-.2em}{$\overset{U}{\mapsto}$} \ket{\tau} \oplus \ket{w}$ is uniquely defined if we require it to be orthogonal to the $1$-eigenspace of $\Pi_{\mathcal{H}}U\Pi_{\mathcal{H}}$, where $\Pi_{\mathcal{H}}$ is the projection onto $\mathcal{H}$, and the operation mapping $\ket{\sigma}$ to $\ket{w}$ is linear.
    \end{theorem}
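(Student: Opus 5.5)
We would prove \cref{thm:transducer-catalyst} by writing $U$ in block form with respect to $\mathcal{V} \oplus \mathcal{H}$,
\[U = \begin{pmatrix} A & B \\ C & D \end{pmatrix}, \qquad A = \Pi_{\mathcal{V}}U\Pi_{\mathcal{V}},\ B = \Pi_{\mathcal{V}}U\Pi_{\mathcal{H}},\ C = \Pi_{\mathcal{H}}U\Pi_{\mathcal{V}},\ D = \Pi_{\mathcal{H}}U\Pi_{\mathcal{H}},\]
so that the transduction condition $U(\ket{\sigma}\oplus\ket{w}) = \ket{\tau}\oplus\ket{w}$ becomes the pair of equations
\[(I-D)\ket{w} = C\ket{\sigma}, \qquad \ket{\tau} = A\ket{\sigma} + B\ket{w}.\]
The plan is therefore: first show that the first equation always has a solution $\ket{w}$, unique modulo $\ker(I-D)$; then show that $\ket{\tau}$ does not depend on which solution is picked; and finally show that $\ket{\sigma}\mapsto\ket{\tau}$ is unitary.

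For solvability we need $C\ket{\sigma} \in \mathrm{range}(I-D) = \ker(I-D^\dagger)^\perp$, using finite dimensionality. Let $\ket{u}\in\mathcal{H}$ satisfy $D^\dagger\ket{u} = \ket{u}$. Since $U^\dagger$ is unitary, $\norm{U^\dagger\ket{u}}^2 = \norm{B^\dagger\ket{u}}^2 + \norm{D^\dagger\ket{u}}^2 = \norm{\ket{u}}^2$, which forces $B^\dagger\ket{u}=0$. Hence $U^\dagger \ket{u} = \ket{u}$, i.e.\ $U\ket{u}=\ket{u}$, and in particular $C^\dagger\ket{u} = 0$. Thus $\braket{u}{C\sigma}=0$, and a solution exists. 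The same norm argument applied to $U$ itself shows that $D\ket{u}=\ket{u}$ implies $B\ket{u}=0$ and $U\ket{u}=\ket{u}$. Consequently, two solutions $\ket{w},\ket{w'}$ differ by an element of $\ker(I-D)$, which is annihilated by $B$, so $\ket{\tau}$ is uniquely determined.

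Taking $\ket{w}$ orthogonal to $\ker(I-D)$ (the $1$-eigenspace of $\Pi_{\mathcal{H}}U\Pi_{\mathcal{H}}$) selects the unique solution $\ket{w} = (I-D)^+C\ket{\sigma}$, given by the pseudoinverse, and this is linear in $\ket{\sigma}$. It follows that $\ket{\tau} = (A + B(I-D)^+C)\ket{\sigma}$ is linear as well. For unitarity, we use that $U$ preserves norms:
\[\norm{\ket{\tau}}^2 + \norm{\ket{w}}^2 = \norm{\ket{\sigma}}^2 + \norm{\ket{w}}^2,\]
so the map is an isometry on the finite-dimensional space $\mathcal{V}$, and hence unitary.

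The main obstacle is the range argument: showing that $C\ket{\sigma}$ lies in the range of $I-D$, or equivalently that $1$-eigenvectors of $D$ or $D^\dagger$ are genuine fixed points of $U$ that decouple from $\mathcal{V}$. The norm-saturation argument above is what we would use to handle it.
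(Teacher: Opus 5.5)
Your proof is correct. The paper gives no proof of its own here, since it quotes the result from \cite{belovs2024taming}, and your block-matrix argument is the standard self-contained one: solve $(I-D)\ket{w}=C\ket{\sigma}$ using $\mathrm{range}(I-D)=\ker(I-D^\dagger)^\perp$, observe that $B$ annihilates $\ker(I-D)$, take the solution in $\ker(I-D)^\perp$, and finish with norm preservation.

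There is one slip to repair. The $\mathcal{V}$-component of $U^\dagger(0\oplus\ket{u})$ is $C^\dagger\ket{u}$, not $B^\dagger\ket{u}$. With your definition, $B^\dagger=\Pi_{\mathcal{H}}U^\dagger\Pi_{\mathcal{V}}$ kills $\ket{u}\in\mathcal{H}$ trivially, so ``$B^\dagger\ket{u}=0$'' carries no information. The identity should read $\norm{C^\dagger\ket{u}}^2+\norm{D^\dagger\ket{u}}^2=\norm{\ket{u}}^2$, which gives $C^\dagger\ket{u}=0$ directly, and that is exactly what solvability needs.
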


    Transducers can be turned into quantum algorithms, as witnessed by the following result.

    \begin{theorem}[{\cite[Theorem~5.5]{belovs2024taming}}]
        \label{thm:transducer-to-alg}
        Let $U \in \mathcal{U}(\mathcal{V} \oplus \mathcal{H})$ be a transducer and $K \in \N$. Then, there is a quantum quantum algorithm $\mathcal{A}_K$ that makes $K$ calls to $U$ such that for all $\ket{\sigma} \in \mathcal{V}$,
        \[\ket{\sigma} \overset{U}{\rightsquigarrow} \ket{\tau} \qquad \Rightarrow \qquad \norm{\mathcal{A}_K\ket{\sigma} - \ket{\tau}} \leq 2\sqrt{\frac{W(U,\ket{\sigma})}{K}}.\]
    \end{theorem}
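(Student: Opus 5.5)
The algorithm $\mathcal{A}_K$ runs $U$ repeatedly while the catalyst register $\mathcal{H}$ starts in $\ket{0}$ and is never reset. To analyse it, I fix $\ket{\sigma}\in\mathcal{V}$ with target $\ket{\tau}$ and catalyst $\ket{w}$ attaining $W(U,\ket{\sigma})=\norm{\ket{w}}^2$, so $U(\ket{\sigma}\oplus\ket{w}) = \ket{\tau}\oplus\ket{w}$. A single application of $U$ to $\ket{\sigma}\oplus 0$ is too crude, since the error $\norm{\ket{w}}$ does not shrink with $K$. Instead I will split the input over $K$ time slots with a clock register.

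The construction is as follows. I adjoin a clock register $\ket{t}$, $t\in\{0,\dots,K\}$, and write the public input as $\ket{\sigma}\ket{0}_{\mathrm{clk}}$. First apply a unitary $A$, acting only on the clock, mapping $\ket{0}$ to $\frac{1}{\sqrt K}\sum_{t=1}^{K}\ket{t}$. Then, for $t=1,\dots,K$, apply $U$ controlled on the clock being $t$, acting on $\mathcal{V}\oplus\mathcal{H}$. The crucial point is that the slices with different $t$ share the same private register but form a cyclic chain: the $\mathcal{H}$-part left behind by slice $t$ is passed on as the initial $\mathcal{H}$-content of slice $t+1$. Concretely, I use the standard trick of \cite{belovs2024taming}: interleave with a shift $S:\ket{h}\ket{t}\mapsto\ket{h}\ket{t+1}$ on the private register, so that the state
\[
\ket{\Phi}=\frac{1}{\sqrt K}\sum_{t}\bigl(\ket{\sigma}\ket{t}\oplus \ket{w}\ket{t}\bigr)
\]
evolves into $\frac{1}{\sqrt K}\sum_t \ket{\tau}\ket{t} \oplus \frac{1}{\sqrt K}\sum_t\ket{w}\ket{t+1}$. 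Finally apply $A^\dagger$ on the clock.

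The cleaner equivalent analysis is by telescoping. Each controlled step maps the "ideal" state, where the catalyst $\frac{1}{\sqrt K}\ket{w}$ sits in the correct slot, to the next ideal state. The actual run starts with the catalyst empty instead of $\frac{1}{\sqrt K}\ket{w}$ at the first slot. By unitarity, the discrepancy between the actual and ideal states is preserved in norm, and at the end one extra piece $\frac{1}{\sqrt K}\ket{w}$ sits in slot $K+1$. The discrepancy therefore consists of two contributions: the missing catalyst at the start and the leftover catalyst at the end. Each has norm $\norm{\ket{w}}/\sqrt K$. By the triangle inequality, the output is within $2\norm{\ket{w}}/\sqrt K = 2\sqrt{W(U,\ket{\sigma})/K}$ of $\ket{\tau}\otimes$(clock state). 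Applying $A^\dagger$ and discarding the clock, which returns to $\ket{0}$ on the ideal component, gives the claimed bound. The algorithm uses exactly $K$ calls to $U$, one per slot.

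The main obstacle is to arrange the bookkeeping so that the ideal state is exactly invariant except at the two boundary slots, i.e.\ so that the output catalyst of slice $t$ really serves as the input catalyst of slice $t+1$. This works because $\ket{w}$ is a fixed point of the transduction on the private part, which is precisely the defining property of a transducer. Linearity, via \cref{thm:transducer-catalyst}, ensures that one $\mathcal{A}_K$ works for all $\ket{\sigma}$ simultaneously.
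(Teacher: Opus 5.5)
Your telescoping paragraph is the right argument and is essentially the proof of the cited \cite[Theorem~5.5]{belovs2024taming}; the paper itself does not prove it. You spread $\ket{\sigma}$ uniformly over $K$ clock slots that all share one private register, and apply $U$ slot by slot, so that the catalyst left by slot $t$ is consumed by slot $t+1$. You then pay $\norm{\ket{w}}/\sqrt{K}$ for the catalyst missing at the start plus $\norm{\ket{w}}/\sqrt{K}$ for the one left over at the end.

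The displayed state $\ket{\Phi}$, however, is not equivalent to that paragraph and should be discarded. It places $\frac{1}{\sqrt{K}}\ket{w}$ in each of the $K$ private slots, which is a catalyst of total norm $\norm{\ket{w}}$. Its distance from the actual initial state (empty private register) is therefore $\norm{\ket{w}}$, and an analysis based on it gives no $1/\sqrt{K}$ gain. Only a single piece $\frac{1}{\sqrt{K}}\ket{w}$, passed along the chain, is ever needed.

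The cleanest bookkeeping avoids the shift altogether:
\begin{itemize}
\item work on $(\mathcal{V}\otimes\C^{K+1})\oplus\mathcal{H}$ with one unsliced private register;
\item let the $t$-th call act as $U$ on $(\mathcal{V}\otimes\ket{t})\oplus\mathcal{H}$ and as the identity on the other public slots;
\item take the ideal state $\frac{1}{\sqrt{K}}\sum_{t=1}^{K}\ket{\sigma}\ket{t}\oplus\frac{1}{\sqrt{K}}\ket{w}$, which the $K$ calls map to $\frac{1}{\sqrt{K}}\sum_{t=1}^{K}\ket{\tau}\ket{t}\oplus\frac{1}{\sqrt{K}}\ket{w}$.
\end{itemize}
Your triangle-inequality bound then applies verbatim. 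If you keep the shift version instead, the private clock needs an extra value $K+1$, which your range $\{0,\dots,K\}$ does not provide.
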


    In the query setting, transducers can make calls to a query oracle $O_x \in \mathcal{U}(\mathcal{H})$. Belovs, Jeffery and Yolcu show that such 
    transducers can always be cast into a canonical form: a 
    \textit{canonical transducer}.\footnote{We note that the definition used in \cite{belovs2024taming} is slightly more general, as it allows for the subdivision of the catalyst space into two parts. We do not 
    need this additional property for our purposes. Additionally, we deviate from \cite{belovs2024taming} by restricting the space of oracles to be some subset of $\mathcal{U}(\mathcal{H})$, rather than all such unitary operators. This will be more convenient for the presentation of our results later on.}

    \begin{definition}[Canonical transducer, {\cite[Section~7.1]{belovs2024taming}}]
        Let $\mathcal{V}, \mathcal{H}, \mathcal{W}$ be Hilbert spaces, and let $O := \{O_x\}_{x \in \D} \subseteq \mathcal{U}(\mathcal{H})$ be a subset of unitary operations referred to as oracles.
        \begin{enumerate}[nosep]
            \item A unidirectional canonical transducer $T$ is a function mapping $O$ to transducers as $T(O_x) := U(I_{\mathcal{V}} \oplus (O_x \otimes I_{\mathcal{W}}))$, where $U \in \mathcal{U}(\mathcal{V} \oplus (\mathcal{H} \otimes \mathcal{W}))$ does not depend on the input $x$.
            \item A bidirectional canonical transducer $T$ is defined as $T(O_x) := U(I_{\mathcal{V}} \oplus ((O_x \oplus O_x^{\dagger}) \otimes I_{\mathcal{W}}))$, where $U \in \mathcal{U}(\mathcal{V} \oplus (\mathcal{H}^{\oplus 2} \otimes \mathcal{W}))$ does not depend on the input $x$.
        \end{enumerate}
    \end{definition}

    From the above definition, it is apparent that a canonical transducer requires an oracle $O_x$ to be a well-defined transducer. Typically, the input oracle $O_x$ is easily derived from context (e.g., in the volume estimation problem, it is always the membership oracle $O_K$). Therefore, we will often leave the oracle that we use in the definition of a canonical transducer implicit.
    Similarly, we often also leave implicit the dependence on the set $\D$ on which the source and target states depend. That is, if we let $T$ be a canonical transducer that maps $\ket{\psi}$ to $\ket{\varphi}$, we often mean that for all $x \in \D$, where $\D$ is some set to be inferred from context, $T(O_x)$ is a canonical transducer with respect to an implicit oracle $O_x$, that maps $\ket{\psi_x}$ to $\ket{\varphi_x}$.

    Note that canonical transducers make an explicit distinction between the public space $\mathcal{V}$, and the private space $\mathcal{H} \otimes \mathcal{W}$ or $\mathcal{H}^{\oplus 2} \otimes \mathcal{W}$. These spaces are not allowed to depend on the implicit input $x \in \D$. 

    Finally, observe that all bidirectional canonical transducers are also unidirectional canonical transducers, but with respect to a different set of oracles, i.e., $\{O_x \oplus O_x^{\dagger}\}_{x \in \D}$ instead of $\{O_x\}_{x \in \D}$. Thus, all results that hold for unidirectional canonical transducers also hold for bidirectional canonical transducers.

    Belovs, Jeffery and Yolcu showed that feasible solutions to the unidirectional adversary bound can be turned into canonical transducers~\cite[Section~8.1]{belovs2024taming}.

    \begin{theorem}[{\cite[Section~8.1]{belovs2024taming}}]
        \label{thm:adv-to-transducer}
        Let $P = \{(\ket{\sigma_x}, \ket{\tau_x}, O_x)\}_{x \in \D}$ be a state-conversion problem, where for all $x \in \D$, $\ket{\sigma_x},\ket{\tau_x} \in \mathcal{V}$ and $O_x \in \mathcal{U}(\mathcal{H})$. Let $\{\ket{w_x}\}_{x \in \D} \subseteq \mathcal{H} \otimes \mathcal{W}$ be a feasible solution to the unidirectional adversary bound for $P$. Then, there exists a unidirectional canonical transducer $T$ with input-independent unitary $U \in \mathcal{U}(\mathcal{V} \oplus (\mathcal{H} \otimes \mathcal{W}))$, such that for all $x \in \D$, $T(O_x) := U(I_{\mathcal{V}} \oplus (O_x \otimes I_{\mathcal{W}}))$ transduces $\ket{\sigma_x}$ to $\ket{\tau_x}$ with catalyst state $\ket{w_x}$. In other words, for all $x \in \D$,
        \[\ket{\sigma_x} \overset{T_x}{\rightsquigarrow} \ket{\tau_x} \qquad \text{with complexity} \qquad W(T_x, \ket{\sigma_x}) \leq \norm{\ket{w_x}}^2.\]
    \end{theorem}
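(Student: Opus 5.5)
The plan is to construct the input-independent unitary $U$ directly from the feasible solution, following the Gram-matrix / isometry-extension argument of \cite{belovs2024taming}. First, rewrite the feasibility constraint of the unidirectional adversary bound. For all $x,y \in \D$, define the vectors $\ket{a_x} := \ket{\sigma_x} \oplus \ket{w_x}$ and $\ket{b_x} := \ket{\tau_x} \oplus (O_x \otimes I_{\mathcal{W}})\ket{w_x}$ in $\mathcal{V} \oplus (\mathcal{H} \otimes \mathcal{W})$. Expanding the constraint $\bra{w_x}((I - O_x^{\dagger}O_y)\otimes I)\ket{w_y} = \braket{\sigma_x}{\sigma_y} - \braket{\tau_x}{\tau_y}$ and rearranging gives
\[\braket{a_x}{a_y} = \braket{\sigma_x}{\sigma_y} + \braket{w_x}{w_y} = \braket{\tau_x}{\tau_y} + \bra{w_x}(O_x^{\dagger}O_y \otimes I)\ket{w_y} = \braket{b_x}{b_y}.\]
So the two families $\{\ket{a_x}\}$ and $\{\ket{b_x}\}$ have identical Gram matrices.

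Next, use the standard fact that equal Gram matrices yield a partial isometry. The map $V: \Span\{\ket{a_x}\} \to \Span\{\ket{b_x}\}$, defined by $\ket{a_x} \mapsto \ket{b_x}$ and extended linearly, is well-defined and isometric: any linear relation $\sum_x c_x\ket{a_x} = 0$ forces $\norm{\sum_x c_x \ket{b_x}}^2 = \norm{\sum_x c_x\ket{a_x}}^2 = 0$. The two spans then have equal dimension, so $V$ extends to a unitary $\widetilde{U}$ on the whole space $\mathcal{V}\oplus(\mathcal{H}\otimes\mathcal{W})$. If $\mathcal{W}$ is infinite-dimensional we first restrict to a finite-dimensional subspace containing everything relevant, or we note that the orthogonal complements of the two spans can be identified; we may also enlarge $\mathcal{W}$ if needed to make this identification.

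Then set $U := \widetilde{U}^{\dagger}$ composed appropriately. We want $U(I_{\mathcal V}\oplus (O_x\otimes I))(\ket{\sigma_x}\oplus\ket{w_x}) = \ket{\tau_x}\oplus\ket{w_x}$. To get this, apply the previous step with the roles of the vectors chosen accordingly. Take $\ket{a_x} := \ket{\sigma_x}\oplus (O_x\otimes I)\ket{w_x}$ and $\ket{b_x} := \ket{\tau_x}\oplus\ket{w_x}$; the same computation shows their Gram matrices agree. Now let $U$ be a unitary extension of $\ket{a_x}\mapsto\ket{b_x}$. Since $(I\oplus(O_x\otimes I))(\ket{\sigma_x}\oplus\ket{w_x}) = \ket{a_x}$, we get $T(O_x)(\ket{\sigma_x}\oplus\ket{w_x}) = \ket{\tau_x}\oplus\ket{w_x}$. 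This is exactly a transduction with catalyst $\ket{w_x}$, and by \cref{def:transducer} its complexity is at most $\norm{\ket{w_x}}^2$. The unitary $U$ does not depend on $x$, since it is built once from the whole family.

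The only delicate point is the extension of the isometry to a unitary on the full space, which requires the orthogonal complements of the two spans to have matching dimension. In finite dimension this is automatic. Otherwise, I would handle it by tensoring $\mathcal{W}$ with an extra register, or by padding $\mathcal{W}$ to an infinite-dimensional separable space, where complements of the equal-dimensional spans can always be matched.
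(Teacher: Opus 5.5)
Your final construction is correct and is the standard argument from \cite{belovs2024taming}, which the paper cites rather than reproves; it is exactly the computation in the paper's proof of \Cref{thm:transducer-to-adv} run in reverse. You pair $\ket{\sigma_x}\oplus(O_x\otimes I_{\mathcal{W}})\ket{w_x}$ with $\ket{\tau_x}\oplus\ket{w_x}$, note that feasibility makes the two Gram matrices coincide, and extend the resulting isometry to a unitary $U$. When $\Span\{\ket{a_x},\ket{b_x}\}$ is finite-dimensional, as in the paper's setting, you can do the extension inside that span and let $U$ act as the identity on its orthogonal complement, so the given $\mathcal{W}$ need not be enlarged.

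You should, however, delete your first pairing. For $\ket{\sigma_x}\oplus\ket{w_x}$ versus $\ket{\tau_x}\oplus(O_x\otimes I_{\mathcal{W}})\ket{w_x}$, the Gram entries differ by $2\bigl(\braket{\sigma_x}{\sigma_y}-\braket{\tau_x}{\tau_y}\bigr)$. The middle equality in your first display is therefore false, being the feasibility constraint with the wrong sign, and the vague ``$U:=\widetilde{U}^{\dagger}$ composed appropriately'' step goes with it. Only your second computation is valid.
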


    The above theorem strictly speaking requires $\ket{\sigma_x}$ and $\ket{\tau_x}$ to be elements of the same Hilbert space~$\mathcal{V}$. In general, whenever $\ket{\sigma_x} \in \mathcal{V}_I$ and $\ket{\tau_x} \in \mathcal{V}_O$, we take $\mathcal{V} := \mathcal{V}_I \oplus \mathcal{V}_O$, and embed $\ket{\sigma_x}$ and $\ket{\tau_x}$ into $\mathcal{V}$ in the canonical way.

    We note here that there is also a reverse connection between canonical transducers and feasible solutions to the adversary bound. As far as we know, this has not been explicitly stated before, but it follows immediately from the observations in \cite[Section~8.1]{belovs2024taming}.

    \begin{theorem}
        \label{thm:transducer-to-adv}
        Let $\D$ be a set, $U \in \mathcal{U}(\mathcal{V} \oplus (\mathcal{H} \otimes \mathcal{W}))$, and $\{O_x\}_{x \in \D} \subseteq \mathcal{U}(\mathcal{H})$ be a set of oracles. Suppose that we have a unidirectional canonical transducer $T$ such that for all $x \in \D$, $T(O_x) := U(I_{\mathcal{V}} \oplus (O_x \otimes I_{\mathcal{W}}))$ implements the mapping $\ket{\sigma_x} \oplus \ket{w_x} \mapsto \ket{\tau_x} \oplus \ket{w_x}$. Then, $\{\ket{w_x}\}_{x \in \D}$ forms a feasible solution to the unidirectional adversary bound for the state conversion problem $\{(\ket{\sigma_x}, \ket{\tau_x}, O_x)\}_{x \in \D}$.
    \end{theorem}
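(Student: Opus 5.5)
We verify the adversary constraint directly from the defining identity of the transducer, using only unitarity of $U$. Fix $x,y \in \D$ and write $T_x := U(I_{\mathcal{V}} \oplus (O_x \otimes I_{\mathcal{W}}))$. The hypothesis says
\[
T_x(\ket{\sigma_x} \oplus \ket{w_x}) = \ket{\tau_x} \oplus \ket{w_x}, \qquad T_y(\ket{\sigma_y} \oplus \ket{w_y}) = \ket{\tau_y} \oplus \ket{w_y}.
\]
Since $U$ is input-independent and unitary, we can strip it off. Applying $U^{\dagger}$ to both identities gives
\[
(I_{\mathcal{V}} \oplus (O_x \otimes I))(\ket{\sigma_x} \oplus \ket{w_x}) = U^{\dagger}(\ket{\tau_x} \oplus \ket{w_x}),
\]
and the analogous identity for $y$. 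The left-hand side equals $\ket{\sigma_x} \oplus (O_x\otimes I)\ket{w_x}$.

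Next, take the inner product of the $x$-identity with the $y$-identity. Unitarity of $U^{\dagger}$ makes the right-hand sides give
\[
\braket{\tau_x}{\tau_y} + \braket{w_x}{w_y},
\]
while the left-hand sides give
\[
\braket{\sigma_x}{\sigma_y} + \bra{w_x}(O_x^{\dagger}O_y \otimes I)\ket{w_y}.
\]
Equating the two and rearranging yields
\[
\bra{w_x}((I - O_x^{\dagger}O_y)\otimes I_{\mathcal{W}})\ket{w_y} = \braket{\sigma_x}{\sigma_y} - \braket{\tau_x}{\tau_y}.
\]
This is exactly the constraint in \cref{def:adversary-bounds}.

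Finally, the catalysts lie in $\mathcal{H}\otimes\mathcal{W}$ with $\mathcal{W}$ a fixed Hilbert space, as the definition requires. Hence $\{\ket{w_x}\}_{x\in\D}$ is feasible, with objective value $\max_x \norm{\ket{w_x}}^2$.

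The only point needing care is the embedding when $\ket{\sigma_x}$ and $\ket{\tau_x}$ live in different spaces $\mathcal{V}_I$ and $\mathcal{V}_O$. We embed both canonically into $\mathcal{V} = \mathcal{V}_I\oplus\mathcal{V}_O$, and inner products are preserved under this embedding. I expect no real obstacle; the argument is a direct computation, essentially the converse reading of \cite[Section~8.1]{belovs2024taming}.
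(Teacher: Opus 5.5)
Your proposal is correct and follows essentially the same argument as the paper: both take the inner product of the two transducer identities, use unitarity of $U$ (you apply $U^{\dagger}$ first, the paper inserts $U^{\dagger}U$), and rearrange to obtain the adversary constraint $\bra{w_x}((I - O_x^{\dagger}O_y)\otimes I_{\mathcal{W}})\ket{w_y} = \braket{\sigma_x}{\sigma_y} - \braket{\tau_x}{\tau_y}$.
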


    \begin{proof}
        For all $x \in \D$, we observe that
        \[\ket{\sigma_x} \oplus \ket{w_x} \overset{T(O_x)}{\mapsto} \ket{\sigma_x} \oplus O_x\ket{w_x} \overset{U}{\mapsto} \ket{\tau_x} \oplus \ket{w_x},\]
        and as such, for any $x,y \in \D$, we have
        \[\braket{\tau_x}{\tau_y} + \braket{w_x}{w_y} = (\bra{\sigma_x} \oplus \bra{w_x}(O_x^{\dagger} \otimes I_{\mathcal{W}}))U^{\dagger}U(\ket{\sigma_y} \oplus (O_y \otimes I_{\mathcal{W}})\ket{w_y}) = \braket{\sigma_x}{\sigma_y} + \bra{w_x}(O_x^{\dagger}O_y \otimes I_{\mathcal{W}})\ket{w_y}.\]
        By rearranging the left- and right-hand sides, we recover the constraint in the unidirectional adversary bound, as defined in \Cref{def:adversary-bounds}, from which we conclude that $\{\ket{w_x}\}_{x \in \D}$ is indeed a feasible solution.
    \end{proof}

    We make use of several results about transduction complexity from \cite{belovs2024taming}. We start by a characterization of the transduction complexity of the inverse of a transducer.

    \begin{theorem}[{Transducer inverse \cite[Proposition~9.1]{belovs2024taming}}]
        \label{thm:inverse-transducer}
        Let $T$ be a bidirectional canonical transducer, mapping $\ket{\sigma}$ to $\ket{\tau}$. Then, we can construct a bidirectional canonical transducer $T^{\dagger}$ that acts as
        \[\ket{\tau} \overset{T^{\dagger}}{\rightsquigarrow} \ket{\sigma}, \qquad \text{with complexity} \qquad W(T^{\dagger},\ket{\tau}) = W(T, \ket{\sigma}).\]
    \end{theorem}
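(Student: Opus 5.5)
The plan is to go through the adversary-bound characterization of canonical transducers, using \Cref{thm:transducer-to-adv} in one direction and \Cref{thm:adv-to-transducer} in the other.

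Write $T(O_x) = U(I_{\mathcal{V}} \oplus ((O_x \oplus O_x^{\dagger}) \otimes I_{\mathcal{W}}))$. For each input $x$, let $\ket{w_x}$ be the optimal catalyst, so that $\ket{\sigma_x} \oplus \ket{w_x} \mapsto \ket{\tau_x} \oplus \ket{w_x}$ with $\norm{\ket{w_x}}^2 = W(T(O_x),\ket{\sigma_x})$. By \Cref{thm:transducer-to-adv}, applied with oracle set $\{O_x \oplus O_x^{\dagger}\}$, these vectors satisfy
\[\bra{w_x}((I - \widetilde O_x^{\dagger}\widetilde O_y)\otimes I)\ket{w_y} = \braket{\sigma_x}{\sigma_y} - \braket{\tau_x}{\tau_y},\]
where $\widetilde O_x := O_x \oplus O_x^{\dagger}$.

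Next I construct catalysts for the reversed problem $\{(\ket{\tau_x},\ket{\sigma_x},\widetilde O_x)\}$, whose right-hand side is the negation of the one above. The natural candidate is $\ket{w'_x} := (\widetilde O_x \otimes I)\ket{w_x}$, possibly composed with the swap $S$ of the two direct summands of $\mathcal{H}^{\oplus 2}$. The key identity is that $S\widetilde O_x S = \widetilde O_x^{\dagger}$, since $S$ exchanges $O_x$ and $O_x^{\dagger}$.

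Setting $\ket{w'_x} := (S\widetilde O_x\otimes I)\ket{w_x}$, we compute
\[\bra{w'_x}(I-\widetilde O_x^{\dagger}\widetilde O_y)\ket{w'_y} = \bra{w_x}\widetilde O_x^{\dagger}S(I - \widetilde O_x^{\dagger}\widetilde O_y)S\widetilde O_y\ket{w_y} = \bra{w_x}\widetilde O_x^{\dagger}(I-\widetilde O_x\widetilde O_y^{\dagger})\widetilde O_y\ket{w_y} = \bra{w_x}(\widetilde O_x^{\dagger}\widetilde O_y - I)\ket{w_y}.\]
This equals $\braket{\tau_x}{\tau_y}-\braket{\sigma_x}{\sigma_y}$, which is exactly the constraint for the reversed problem. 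Since $S\widetilde O_x$ is unitary, $\norm{\ket{w'_x}} = \norm{\ket{w_x}}$.

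Then \Cref{thm:adv-to-transducer} gives a unidirectional canonical transducer with respect to $\{\widetilde O_x\}$, which is precisely a bidirectional canonical transducer $T^{\dagger}$. It maps $\ket{\tau_x}$ to $\ket{\sigma_x}$ with complexity at most $\norm{\ket{w_x}}^2 = W(T,\ket{\sigma_x})$.

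For equality I apply the same argument symmetrically: $(T^{\dagger})^{\dagger}$ has complexity at most $W(T^{\dagger},\ket{\tau})$. This only yields equality if the construction is canonical, that is, if the inverse of the inverse recovers $T$. Alternatively, I would construct $T^{\dagger}$ concretely as $(I\oplus (S\otimes I))\,T(O_x)^{\dagger}\,(I\oplus(S\otimes I))$, which is again of canonical bidirectional form. Its fixed catalysts correspond bijectively, via the norm-preserving map $S\widetilde O_x$, to those of $T$, so the minima coincide exactly.

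The main obstacle is this bookkeeping. I need to check that the explicit conjugated inverse is genuinely canonical, meaning the input-independent unitary sits on the correct side of the oracle call. I also need the catalyst correspondence to be a bijection, so that the minimal norms are equal rather than merely bounded in one direction.
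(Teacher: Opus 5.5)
Your adversary-bound argument is correct as far as it goes, but it only proves the inequality $W(T^{\dagger},\ket{\tau_x})\le W(T,\ket{\sigma_x})$, and the fix you propose for equality does not work as written. On the inequality: the catalysts of $T$ are feasible for $\{(\ket{\sigma_x},\ket{\tau_x},\widetilde O_x)\}$. Your use of $S\widetilde O_xS=\widetilde O_x^{\dagger}$ correctly shows that $(S\widetilde O_x\otimes I)\ket{w_x}$ is feasible for the reversed problem. \Cref{thm:adv-to-transducer} then gives a bidirectional canonical transducer with that inequality, which is all the paper ever uses. Two caveats apply. The transducer you get is tailored to the family of states you feed into the adversary bound, so you must include every state you later apply $T^{\dagger}$ to. And, as you note yourself, it gives no reverse inequality, because the newly built transducer may admit shorter catalysts.

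The equality step fails because your explicit operator is not in canonical form. Since $S\widetilde O_x^{\dagger}=\widetilde O_xS$,
\[(I\oplus(S\otimes I))\,T(O_x)^{\dagger}\,(I\oplus(S\otimes I))=(I\oplus(\widetilde O_x\otimes I))\,(I\oplus(S\otimes I))\,U^{\dagger}\,(I\oplus(S\otimes I)),\]
so the oracle acts \emph{after} the input-independent unitary. Also, the catalysts of this operator correspond to those of $T$ via $S\otimes I$, not via $S\widetilde O_x$.

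The missing step is one more conjugation by the private-space unitary $I\oplus(\widetilde O_x\otimes I)$. This leaves the public action unchanged and only moves catalysts unitarily. It yields the canonical transducer
\[T^{\dagger}(O_x):=(I\oplus(S\otimes I))\,U^{\dagger}\,(I\oplus(S\otimes I))\,(I\oplus(\widetilde O_x\otimes I))=(I\oplus(S\otimes I))\,U^{\dagger}\,(I\oplus M_x),\qquad M_x:=S\widetilde O_x\otimes I.\]
Since $S\widetilde O_xS\widetilde O_x=\widetilde O_x^{\dagger}\widetilde O_x=I$, we have $M_x^2=I$. A direct check then shows that $U(\ket{\sigma}\oplus(\widetilde O_x\otimes I)\ket{w})=\ket{\tau}\oplus\ket{w}$ holds if and only if $T^{\dagger}(O_x)(\ket{\tau}\oplus M_x\ket{w})=\ket{\sigma}\oplus M_x\ket{w}$. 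So $\ket{w}\mapsto M_x\ket{w}$ is a norm-preserving bijection of catalysts, and $W(T^{\dagger},\ket{\tau})=W(T,\ket{\sigma})$ for every state. Your vectors $\ket{w_x'}$ are exactly these catalysts, which is why your Gram computation worked. Applying this recipe twice returns $U$ itself, since $(I\oplus(S\otimes I))^2=I$; this is the involutivity you were worried about.

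The paper gives no proof of its own and cites \cite[Proposition~9.1]{belovs2024taming}. There the point is the direct one: $T(O_x)(\ket{\sigma}\oplus\ket{w})=\ket{\tau}\oplus\ket{w}$ immediately gives $T(O_x)^{\dagger}(\ket{\tau}\oplus\ket{w})=\ket{\sigma}\oplus\ket{w}$ with the \emph{same} catalyst. What remains is the reordering above. This direct route makes the adversary detour unnecessary. It gives exact equality for all input states simultaneously, and it produces an explicit inverse built from $U^{\dagger}$, which keeps the time-efficiency information that the construction through \Cref{thm:adv-to-transducer} discards.
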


    We will also make use of two composition theorems for canonical transducers.

    \begin{theorem}[Parallel composition~{\cite[Proposition~9.4]{belovs2024taming}}]
        \label{thm:parallel-composition}
        Let $T^{(1)}, \dots, T^{(n)}$ be undirectional canonical transducers, transducing $|\sigma_x^{(j)}\rangle$ into $|\tau_x^{(j)}\rangle$, for all $j \in [n]$. Then, we can construct a undirectional canonical transducer $T$ that acts as
        \[\bigoplus_{j=1}^n \ket{\sigma^{(j)}} \overset{T}{\rightsquigarrow} \bigoplus_{j=1}^n \ket{\tau^{(j)}}, \qquad \text{with complexity} \qquad W\left(T, \bigoplus_{j=1}^n \ket{\sigma^{(j)}}\right) \leq \sum_{j=1}^n W\left(T^{(j)}, \ket{\sigma^{(j)}}\right).\]
    \end{theorem}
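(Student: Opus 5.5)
The plan is to build the composite transducer from the individual ones by running them side by side on orthogonal blocks of a single space. The natural route goes through the correspondence between canonical transducers and feasible solutions of the unidirectional adversary bound. By \Cref{thm:transducer-to-adv}, each $T^{(j)}(O_x) = U^{(j)}(I_{\mathcal{V}_j} \oplus (O_x \otimes I_{\mathcal{W}_j}))$ comes with catalyst states $\ket{w_x^{(j)}} \in \mathcal{H} \otimes \mathcal{W}_j$. We may take these to be the minimal ones, so that $\|\ket{w_x^{(j)}}\|^2 = W(T^{(j)}, \ket{\sigma_x^{(j)}})$. They satisfy
\[\bra{w_x^{(j)}}((I - O_x^{\dagger}O_y)\otimes I_{\mathcal{W}_j})\ket{w_y^{(j)}} = \braket{\sigma_x^{(j)}}{\sigma_y^{(j)}} - \braket{\tau_x^{(j)}}{\tau_y^{(j)}}\]
for all $x,y \in \D$.

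Next we glue these solutions together. Set $\mathcal{W} := \bigoplus_{j=1}^n \mathcal{W}_j$, so that $\mathcal{H} \otimes \mathcal{W} \cong \bigoplus_j (\mathcal{H} \otimes \mathcal{W}_j)$, and define $\ket{w_x} := \bigoplus_j \ket{w_x^{(j)}}$. The operator $(I - O_x^{\dagger}O_y) \otimes I_{\mathcal{W}}$ is block diagonal with respect to this decomposition. Hence
\[\bra{w_x}((I - O_x^{\dagger}O_y)\otimes I_{\mathcal{W}})\ket{w_y} = \sum_j \bigl(\braket{\sigma_x^{(j)}}{\sigma_y^{(j)}} - \braket{\tau_x^{(j)}}{\tau_y^{(j)}}\bigr).\]
The right-hand side is exactly $\braket{\sigma_x}{\sigma_y} - \braket{\tau_x}{\tau_y}$ for the direct-sum states $\ket{\sigma_x} = \bigoplus_j \ket{\sigma_x^{(j)}}$ and $\ket{\tau_x} = \bigoplus_j \ket{\tau_x^{(j)}}$. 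So $\{\ket{w_x}\}$ is feasible for the unidirectional adversary bound of the combined state-conversion problem. Applying \Cref{thm:adv-to-transducer} then yields a unidirectional canonical transducer $T$ that transduces $\bigoplus_j \ket{\sigma^{(j)}}$ to $\bigoplus_j \ket{\tau^{(j)}}$ with complexity at most $\|\ket{w_x}\|^2 = \sum_j \|\ket{w_x^{(j)}}\|^2$, which is the claimed bound.

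There is also a direct construction that avoids the adversary bound altogether. Take $U := \bigoplus_j U^{(j)}$, after reordering the spaces $\bigoplus_j(\mathcal{V}_j \oplus (\mathcal{H}\otimes\mathcal{W}_j))$ into $\bigl(\bigoplus_j \mathcal{V}_j\bigr) \oplus (\mathcal{H}\otimes\mathcal{W})$. Then $T(O_x) = U(I \oplus (O_x \otimes I_{\mathcal{W}}))$ equals $\bigoplus_j T^{(j)}(O_x)$ up to this permutation. It therefore maps $\ket{\sigma_x} \oplus \ket{w_x}$ to $\ket{\tau_x} \oplus \ket{w_x}$ blockwise, and by the definition of transduction complexity as a minimum over catalysts, $W(T,\ket{\sigma_x}) \le \sum_j \|\ket{w_x^{(j)}}\|^2$.

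I expect the only real care to be needed in the bookkeeping. The regrouping of $\bigoplus_j(\mathcal{H}\otimes \mathcal{W}_j)$ as $\mathcal{H}\otimes\mathcal{W}$ must be done so that a single oracle call $O_x \otimes I_{\mathcal{W}}$ acts correctly on every block at once, which holds because tensor product distributes over direct sum. The canonical form must also be preserved, so the unitary must be input-independent; this is immediate, since each $U^{(j)}$ is. If the public spaces $\mathcal{V}_j$ differ, they are simply direct-summed. One should also allow the minimum defining $W$ to be achieved, which holds in finite dimensions by \Cref{thm:transducer-catalyst}; otherwise we can take catalysts achieving the complexity up to arbitrary slack.
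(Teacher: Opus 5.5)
Your proposal is correct. The paper does not prove this statement itself; it imports it from \cite[Proposition~9.4]{belovs2024taming}. Your second, direct construction is the standard direct-sum argument behind that result: you conjugate $\bigoplus_j U^{(j)}$ by the regrouping $\bigoplus_j(\mathcal{H}\otimes\mathcal{W}_j)\cong\mathcal{H}\otimes\bigl(\bigoplus_j\mathcal{W}_j\bigr)$, so that a single $O_x\otimes I_{\mathcal{W}}$ serves all blocks, and you use the direct sum of the catalysts.

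Your first route, via \Cref{thm:transducer-to-adv}, gluing the feasible solutions, and then \Cref{thm:adv-to-transducer}, is also valid. It mirrors how the paper handles other constructions, such as \Cref{cor:ampl-ampl} and the last step of \Cref{lem:transducer-reflection-stationary}. It is, however, weaker in what it delivers.
\begin{itemize}
\item The adversary route only guarantees the existence of some input-independent unitary. Any information about how $T$ is implemented is lost, including the time-efficiency that the paper itself notes it forgoes whenever it passes through the adversary bound.
\item The direct construction gives $T(O_x)$ explicitly as the block-diagonal sum of the $T^{(j)}(O_x)$, with catalyst exactly $\bigoplus_j\ket{w_x^{(j)}}$.
\item The direct construction even gives equality in the complexity bound. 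The fixed vectors of a block-diagonal unitary in the private space are the direct sums of the blocks' fixed vectors, so the minimal-norm catalyst splits blockwise.
\end{itemize}

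One small wording fix: \Cref{thm:transducer-to-adv} does not supply the catalysts. Their existence, and the fact that the minimum defining $W$ is attained, come from \Cref{thm:transducer-catalyst}. \Cref{thm:transducer-to-adv} only certifies that those catalysts satisfy the adversary constraints.
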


    \begin{theorem}[Sequential composition~{\cite[Proposition~9.9]{belovs2024taming}}]
        \label{thm:sequential-composition}
        Let $T^{(1)}, \dots, T^{(n)}$ be undirectional canonical transducers with the same public and private space, transducing $|\psi^{(j-1)}\rangle$ into $|\psi^{(j)}\rangle$, for all $j \in [n]$. Then, we can construct a unidirectional canonical transducer $T$ that acts as
        \[\ket{\psi^{(0)}} \overset{T}{\rightsquigarrow} \ket{\psi^{(n)}}, \qquad \text{with complexity} \qquad W\left(T, \ket{\psi^{(0)}}\right) \leq \sum_{j=1}^n W\left(T^{(j)}, \ket{\psi^{(j-1)}}\right).\]
    \end{theorem}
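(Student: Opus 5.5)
The cleanest route is through the adversary-bound correspondence, Theorems~\ref{thm:transducer-to-adv} and~\ref{thm:adv-to-transducer}, rather than gluing the unitaries directly. For each $j \in [n]$, let $T^{(j)}(O_x) = U_j(I \oplus (O_x \otimes I))$. Let $\ket{w_x^{(j)}}$ be the optimal catalyst, i.e.\ the one attaining $W(T^{(j)}, |\psi_x^{(j-1)}\rangle)$; Theorem~\ref{thm:transducer-catalyst} guarantees that this minimum is attained. Theorem~\ref{thm:transducer-to-adv} then says that, for each $j$, the family $\{\ket{w_x^{(j)}}\}_x$ is feasible for the unidirectional adversary bound of the state-conversion problem $\{(|\psi_x^{(j-1)}\rangle, |\psi_x^{(j)}\rangle, O_x)\}_x$:
\[
\bra{w_x^{(j)}}\bigl((I - O_x^\dagger O_y)\otimes I\bigr)\ket{w_y^{(j)}} = \braket{\psi_x^{(j-1)}}{\psi_y^{(j-1)}} - \braket{\psi_x^{(j)}}{\psi_y^{(j)}}.
\]

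Next, define the concatenated catalyst $\ket{w_x} := \bigoplus_{j=1}^n \ket{w_x^{(j)}} \in \mathcal{H} \otimes (\mathcal{W} \otimes \C^n)$. Since $(I - O_x^\dagger O_y) \otimes I$ acts blockwise on this direct sum, the left-hand side for $\ket{w_x}$ is the sum of the $n$ left-hand sides above. The right-hand sides telescope to $\braket{\psi_x^{(0)}}{\psi_y^{(0)}} - \braket{\psi_x^{(n)}}{\psi_y^{(n)}}$. Hence $\{\ket{w_x}\}_x$ is feasible for the unidirectional adversary bound of $\{(|\psi_x^{(0)}\rangle, |\psi_x^{(n)}\rangle, O_x)\}_x$, and $\norm{\ket{w_x}}^2 = \sum_j W(T^{(j)}, |\psi_x^{(j-1)}\rangle)$.

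Finally, Theorem~\ref{thm:adv-to-transducer} turns this feasible solution into a unidirectional canonical transducer $T$ with catalyst $\ket{w_x}$. It maps $|\psi_x^{(0)}\rangle$ to $|\psi_x^{(n)}\rangle$ with $W(T, |\psi_x^{(0)}\rangle) \le \norm{\ket{w_x}}^2$, which is the claimed bound. The shared public space is used in the telescoping, where all the $|\psi^{(j)}\rangle$ must live in one space so that the inner products match. The shared private space is used to make the direct sum fit the form $\mathcal{H} \otimes \mathcal{W}'$.

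The main subtlety is the input-independence bookkeeping. The optimal catalysts must be chosen consistently for each $x$, which is fine because feasibility only needs some catalyst per $x$ that satisfies the fixed-point equation. One should also check that Theorem~\ref{thm:transducer-to-adv} applies with the minimizing catalysts: any catalyst realizing the transduction satisfies the fixed-point equation used in its proof, so it does. An alternative, more constructive route would be a direct circuit in which $U$ routes each $T^{(j)}$'s output into the next stage, as in \cite{belovs2024taming}. I expect the adversary route to be shorter and fully rigorous given the stated results, so I will use it.
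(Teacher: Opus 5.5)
Your argument is correct. The blockwise action of $(I-O_x^{\dagger}O_y)\otimes I$ on $\bigoplus_j\ket{w_x^{(j)}}$, together with the telescoping right-hand sides, gives a feasible solution of squared norm $\sum_j W(T^{(j)},|\psi_x^{(j-1)}\rangle)$, and \Cref{thm:adv-to-transducer} turns it into the required canonical transducer.

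The paper gives no proof of its own here and defers to \cite{belovs2024taming}. As far as I recall, the argument there is a direct construction rather than a detour through the adversary bound. Write $T^{(j)}(O_x)=U_j(I_{\mathcal V}\oplus(O_x\otimes I_{\mathcal W}))$. Let $\widehat U_j$ and $\widehat O_j$ act as $U_j$ and $O_x\otimes I_{\mathcal W}$ on $\mathcal V$ and the $j$th copy of the private space, and as the identity on the other copies. Since $\widehat O_j$ commutes with $\widehat U_i$ for $i<j$, one gets $\widehat U_n\widehat O_n\cdots\widehat U_1\widehat O_1=\widehat U_n\cdots\widehat U_1\,\bigl(I_{\mathcal V}\oplus(O_x\otimes I_{\mathcal W\otimes\C^n})\bigr)$. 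This is already canonical, and it maps $|\psi^{(0)}\rangle\oplus\bigoplus_j\ket{w^{(j)}}$ to $|\psi^{(n)}\rangle\oplus\bigoplus_j\ket{w^{(j)}}$, so it uses exactly your catalyst.

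The direct version gives two things yours does not.
\begin{itemize}
\item The unitary is explicit, namely the product of the given ones, so time complexity adds. This is the point of \cite{belovs2024taming}, and it is exactly what the paper says it gives up whenever it composes via the adversary bound.
\item Its transduction action is the composition of those of the $T^{(j)}$ on the entire public space, so the bound holds for every source state at once.
\end{itemize}

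Your $U$ comes from a Gram-matrix identity. It is therefore nonconstructive and only controlled on the states placed in the state-conversion problem. That is enough for the statement as written. However, the paper repeatedly applies a single composite to several states. Examples are $A$ applied to both $\ket{\varphi}$ and $\ket{\overline{\varphi}}$ in \Cref{thm:two-state-ampl}, and $C^{\dagger}UC$ applied to $\ket{\pi}$ and all net-flow states in \Cref{lem:refl-stationary-electrical-network}. To cover those, you would index the problem by (input, source state) pairs. This works because the identity in \Cref{thm:transducer-to-adv} and your telescoping hold for cross pairs as well.

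A small correction: your direct sum does not need a common private space, since $\bigoplus_j(\mathcal H\otimes\mathcal W_j)=\mathcal H\otimes\bigoplus_j\mathcal W_j$.
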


    We note that the referenced theorems suggest a full pipeline for converting feasible solutions to the bidirectional adversary bound into quantum algorithms. Indeed, we first turn it into a feasible solution of the unidirectional adversary bound with both forward and backward oracle access, using \Cref{thm:unidirectional-bidirectional}. Then, we turn the resulting feasible solution into a set of transducers using \Cref{thm:adv-to-transducer}. Finally, we use \Cref{thm:transducer-to-alg} to turn these transducers into quantum algorithms. Note that each of the transducers produced in \Cref{thm:adv-to-transducer} makes exactly one query to the input oracles, and therefore the final algorithm makes $K$ queries. Finally, observe that it suffices to take $K$ as $\Theta(\ADV(P)/\eta^2)$ to obtain an approximation of the final state $\ket{\tau_x}$ with precision $\eta > 0$.
    In particular, we have $\mathsf{Q}(P) \in \Theta(\ADV(P))$. 
    

    \subsection{Two-subspace transducer}

    We state 
    one of the main technical results from the recent work by Apers, Roland and Zhang~\cite{apers2026elfs}.

    \begin{theorem}[{\cite[Theorem~1.1]{apers2026elfs}}]
        \label{thm:two-subspace-transducer}
        Let $\Pi$ and $\Delta$ be projectors on subspaces of a Hilbert space $\mathcal{H}$. Then, $U = (2\Pi - I)(2\Delta - I)$ is a transducer on $\mathrm{Ker}(\Pi)$, with catalyst space $\mathrm{Im}(\Pi)$, acting as
        \begin{align*}
            \ket{\psi} \overset{U}{\rightsquigarrow} \ket{\psi}, & \qquad \text{with catalyst} \qquad 0, \\
            \ket{\overline{\psi}} \overset{U}{\rightsquigarrow} -\ket{\overline{\psi}}, & \qquad \text{with catalyst} \qquad (\Pi - \Pi\Delta\Pi)^+\Delta\ket{\overline{\psi}},
        \end{align*}
        for any $\ket{\psi} \in \mathrm{Ker}(\Pi) \cap \mathrm{Ker}(\Delta)$ and $\ket{\overline{\psi}} \in \mathrm{Ker}(\Pi) \cap (\mathrm{Ker}(\Pi) \cap \mathrm{Ker}(\Delta))^{\perp}$.
    \end{theorem}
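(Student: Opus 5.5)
The plan is to verify both transductions directly from \Cref{def:transducer}, by checking that $U(\ket{\sigma}\oplus\ket{w}) = \ket{\tau}\oplus\ket{w}$ for the stated catalysts. Throughout, the public space is $\mathrm{Ker}(\Pi)$ and the private space is $\mathrm{Im}(\Pi)$. I will work in finite dimensions so that pseudo-inverses and ranges behave as usual. I write $\Delta' := I - \Delta$ for the projector onto $\mathrm{Ker}(\Delta)$.

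\emph{The $+1$ case.} Take $\ket{\psi} \in \mathrm{Ker}(\Pi)\cap\mathrm{Ker}(\Delta)$ and catalyst $0$. Since $\Delta\ket{\psi}=0$, we get $(2\Delta-I)\ket{\psi} = -\ket{\psi}$. Since $\Pi\ket{\psi}=0$, applying $(2\Pi - I)$ to $-\ket{\psi}$ returns $\ket{\psi}$. Hence $U\ket{\psi}=\ket{\psi}$, as required.

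\emph{The $-1$ case: reduction.} Take $\ket{\overline{\psi}}$ as in the statement and any $\ket{w}\in\mathrm{Im}(\Pi)$, and set $\ket{v} := \ket{\overline{\psi}}+\ket{w}$. The requirement $U\ket{v} = -\ket{\overline{\psi}}+\ket{w}$ is equivalent to $(2\Delta - I)\ket{v} = (2\Pi-I)(-\ket{\overline{\psi}}+\ket{w})$. Using $\Pi\ket{\overline{\psi}}=0$ and $\Pi\ket{w}=\ket{w}$, the right-hand side equals $\ket{\overline{\psi}}+\ket{w} = \ket{v}$. So the condition becomes $\Delta\ket{v}=\ket{v}$, i.e.\ $\Delta'\ket{v} = 0$. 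The task is therefore to exhibit $\ket{w}\in\mathrm{Im}(\Pi)$ with $\Delta'(\ket{\overline{\psi}}+\ket{w}) = 0$, and to show that the stated formula is such a $\ket{w}$.

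\emph{The $-1$ case: the projected equation.} I first solve the weaker equation $\Pi\Delta'\ket{v}=0$. For $\ket{w}\in\mathrm{Im}(\Pi)$ it reads $(\Pi-\Pi\Delta\Pi)\ket{w} = \Pi\Delta'\Pi\ket{w} = -\Pi\Delta'\ket{\overline{\psi}} = \Pi\Delta\ket{\overline{\psi}}$. To solve it, put $A := \Delta'\Pi$, so that $A^\dagger A = \Pi\Delta'\Pi$ and $\mathrm{range}(A^\dagger A) = \mathrm{range}(A^\dagger) = \mathrm{range}(\Pi\Delta')$. The right-hand side $-\Pi\Delta'\ket{\overline{\psi}}$ lies in this range. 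Hence $\ket{w} := (\Pi-\Pi\Delta\Pi)^+\Pi\Delta\ket{\overline{\psi}}$ solves the equation. This vector lies in $\mathrm{Im}(\Pi)$ and agrees with the stated catalyst, because the pseudo-inverse of an operator supported on $\mathrm{Im}(\Pi)$ absorbs the leading $\Pi$.

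\emph{The $-1$ case: upgrading to the full equation.} This is the step I expect to be the crux, since it is where the hypothesis $\ket{\overline{\psi}} \perp \mathrm{Ker}(\Pi)\cap\mathrm{Ker}(\Delta)$ must enter. The vector $\Delta'\ket{v}$ lies in $\mathrm{Im}(\Delta') = \mathrm{Ker}(\Delta)$, and by the projected equation it also lies in $\mathrm{Ker}(\Pi)$. So $\Delta'\ket{v}\in \mathrm{Ker}(\Pi)\cap\mathrm{Ker}(\Delta)$. Then
\[
\norm{\Delta'\ket{v}}^2 = \braket{v}{\Delta' v} = \braket{\overline{\psi}}{\Delta' v} + \bra{w}\Pi\Delta'\ket{v}.
\]
The first term vanishes by the orthogonality hypothesis on $\ket{\overline{\psi}}$, and the second vanishes by the projected equation. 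Hence $\Delta'\ket{v}=0$, which completes the $-1$ case.

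Finally, by \Cref{thm:transducer-catalyst}, all of these identities together determine the action of $U$ as a transducer on the whole public space $\mathrm{Ker}(\Pi)$. Every $\ket{\sigma}\in\mathrm{Ker}(\Pi)$ decomposes into its component in $\mathrm{Ker}(\Pi)\cap\mathrm{Ker}(\Delta)$ plus an orthogonal remainder, and linearity handles general inputs.
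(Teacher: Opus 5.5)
Your argument is correct. The paper does not prove this theorem itself; it imports it from \cite{apers2026elfs}. The only related argument it gives is the self-contained check of the special case in \Cref{lem:net-flow-catalyst}, with $\Pi = \Pi_{\mathcal{H}_1^\perp}$ and $\Delta = \Pi_{\mathcal{H}_2^\perp}$.

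That check rests on the same reduction you derived: $\ket{w}\in\mathrm{Im}(\Pi)$ is a valid catalyst exactly when $\ket{\overline{\psi}}+\ket{w}\in\mathrm{Im}(\Delta)$. Then $2\Delta-I$ fixes the vector and $2\Pi-I$ flips only its public part.

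Where you differ is in how that vector is obtained.
\begin{itemize}
\item The paper exhibits it explicitly as the electric flow $\ket{f^{\overline{\pi}}}$. This flow lies in the antisymmetric space $\mathcal{H}_2^\perp$, and its $\mathcal{H}_1$-component is $\mathcal{N}\ket{\overline{\pi}}$ by \Cref{lem:electrical-network-properties}. Existence therefore comes from the combinatorial fact that zero-sum demands on a connected graph admit flows.
\item You get existence abstractly. You solve the projected equation $(\Pi-\Pi\Delta\Pi)\ket{w}=\Pi\Delta\ket{\overline{\psi}}$ via the range identity for $A^\dagger A$. You then use $\ket{\overline{\psi}}\perp\mathrm{Ker}(\Pi)\cap\mathrm{Ker}(\Delta)$ to kill the residual $(I-\Delta)\ket{v}$.
\end{itemize}

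Each route buys something different. Yours gives the general statement together with the exact pseudo-inverse formula. Since every valid catalyst must satisfy that projected equation, it also shows your catalyst has minimum norm. The flow route gives the interpretation $\norm{\ket{w}}^2 = R_{\mathrm{eff}}(G,r;\overline{\pi})-\norm{\ket{\overline{\pi}}}^2$, which is what the paper needs downstream.

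Your restriction to finite dimensions is appropriate rather than a loss. The range identity can fail when $\Pi-\Pi\Delta\Pi$ does not have closed range, and the paper only applies the result on finite-dimensional spaces.
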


    Apers, Roland and Zhang explore this transducer in the context of Markov chains and quantum walks.
    In \Cref{sec:quantum-walks-transducers}, we build on this intuition and show how one can construct a transducer that reflects through the stationary distribution of a random walk.

    \subsection{Markov chains, random walks and electrical networks}
    \label{subsec:prelims-markov-chains}

    We briefly recall the definition of Markov processes and their relation to random walks on graphs. We base our exposition on the book by Levin and Peres~\cite{levin2017markov}. To simplify notation, we restrict to the setting where the state space is finite, but this is not a fundamental restriction.

    \begin{definition}[{See e.g.~\cite[Sections~1.1-1.5]{levin2017markov}}]
        \label{def:Markov-process}
        Let $\Omega$ be a finite set and $P \in \R_{\geq 0}^{\Omega \times \Omega}$ such that $P\mathbbm{1} = \mathbbm{1}$. Then, $(\Omega,P)$ is a Markov process with state space $\Omega$ and probability-transition matrix $P$. A stationary distribution is a probability distribution $\pi \in \Delta_{\Omega}$ such that $\pi P = \pi$. We say that the Markov process is irreducible, if for any two states $\omega_1, \omega_2 \in \Omega$, there exits a $t \in \N$ such that $(P^t)(\omega_1, \omega_2) > 0$.
    \end{definition}

    Irreducible Markov processes have the property that they have a unique stationary distribution.

    \begin{theorem}[{See e.g.~\cite[Corollary~1.17]{levin2017markov}}]
        \label{thm:irreducible-implies-unique-stationary}
        An irreducible Markov process has a unique stationary distribution with positive probability everywhere.
    \end{theorem}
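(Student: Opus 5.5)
The statement is the standard fact that an irreducible finite Markov chain has a unique stationary distribution, and that this distribution is strictly positive. The plan is to prove existence, positivity and uniqueness in that order, using only linear algebra on the finite state space $\Omega$.

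\emph{Existence.} Since $P\mathbbm{1} = \mathbbm{1}$, the value $1$ is an eigenvalue of $P$. Hence $1$ is also an eigenvalue of $P^\top$, so there is a nonzero row vector $\mu$ with $\mu P = \mu$. This $\mu$ need not be nonnegative. To get a probability vector, I would instead take the Cesàro averages $\frac{1}{t}\sum_{s=0}^{t-1} e_{\omega} P^s$ for a fixed starting state $\omega$. These lie in the compact simplex $\Delta_\Omega$, so they have a convergent subsequence. Any limit point $\pi$ satisfies $\pi P - \pi = \lim \frac{1}{t}(e_\omega P^t - e_\omega) = 0$, so $\pi$ is a stationary distribution. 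An alternative with the same outcome is the Levin--Peres return-time construction, $\pi(\nu) \propto \E_\omega[\text{number of visits to } \nu \text{ before returning to } \omega]$.

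\emph{Positivity.} Let $\pi$ be any stationary distribution. Pick $\omega_1$ with $\pi(\omega_1) > 0$, which exists because $\pi$ sums to $1$. For any state $\omega_2$, irreducibility gives a $t$ with $P^t(\omega_1,\omega_2) > 0$. Then
\[
\pi(\omega_2) = (\pi P^t)(\omega_2) \geq \pi(\omega_1)\, P^t(\omega_1,\omega_2) > 0 .
\]

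\emph{Uniqueness.} Here I would show that every real row vector $\mu$ with $\mu P = \mu$ is a multiple of $\pi$. The cleanest route is to argue on right eigenvectors first and then transfer by dimension. Suppose $Ph = h$ for a real vector $h$, and let $\omega_1$ be a state where $h$ attains its maximum $M$. The equation $h(\omega_1) = \sum_\nu P(\omega_1,\nu)h(\nu)$ expresses $M$ as a convex combination of values that are all at most $M$. So $h(\nu) = M$ whenever $P(\omega_1,\nu) > 0$. Iterating this along paths, and using irreducibility (via $P^t h = h$ for each $t$), gives $h \equiv M$. Thus the right $1$-eigenspace is one-dimensional. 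Since $P$ and $P^\top$ have equal geometric multiplicities for each eigenvalue (the rank of $P - I$ equals the rank of its transpose), the left $1$-eigenspace is also one-dimensional. The normalization $\sum_\omega \pi(\omega) = 1$ then pins down $\pi$ uniquely.

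The only step needing care is this transfer from the right eigenspace to the left one. A direct alternative, which avoids the transfer, works with the left eigenvector itself. If $\mu$ and $\pi$ are both stationary, then $\mu - c\pi$ is stationary for every scalar $c$. Choose $c$ so that $\mu - c\pi \geq 0$ everywhere with equality at some state. If $\mu - c\pi$ is not identically zero, its total mass is positive and it can be rescaled to a stationary distribution. By the positivity step, that distribution is strictly positive everywhere, contradicting the zero entry. Hence $\mu = c\pi$, and uniqueness follows.
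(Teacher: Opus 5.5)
Your proof is correct in all three parts. The paper gives no argument of its own here. It simply cites Levin--Peres, where existence and positivity come together from the explicit return-time construction ($\pi(\nu)$ proportional to the expected number of visits to $\nu$ before returning to a fixed state). Uniqueness there comes from the maximum principle for harmonic functions combined with the equality of row and column rank of $P - I$, and your first uniqueness argument is exactly that route.

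You differ from the cited source in existence and positivity. Your Ces\`aro-average and compactness argument is softer, and it does not even use irreducibility. It does, however, yield no formula for $\pi$, whereas the return-time construction also gives $\pi(\omega) = 1/(\text{expected return time to } \omega)$.

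In exchange, your positivity step applies to \emph{every} stationary distribution rather than only to a constructed one. That is exactly what makes your second uniqueness argument work. Subtracting $c\pi$ with $c = \min_\omega \mu(\omega)/\pi(\omega)$ gives a nonnegative stationary vector with a zero entry, and such a vector must vanish. This bypasses the transpose/rank step entirely. The same argument works verbatim for any real $\mu$ with $\mu P = \mu$, not only for probability vectors.
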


    We will only be considering irreducible Markov processes in this work, and so we often implicitly use this fact when we refer to the stationary distribution of the Markov process. We proceed by looking at a particularly elegant subset of Markov processes, known as reversible Markov chains.

    \begin{definition}[{See e.g.~\cite[Section~1.6]{levin2017markov}}]
        \label{def:reversible-random-walk}
        Let $(\Omega,P)$ be an irreducible Markov process. It is reversible with respect to a distribution $\pi$ if for any two states $\omega_1, \omega_2 \in \Omega$, we have
        \begin{equation}
            \label{eq:detailed balance}
            \pi(\omega_1)P(\omega_1,\omega_2) = \pi(\omega_2)P(\omega_2,\omega_1).
        \end{equation}
        We refer to this condition as the detailed balance condition.  \end{definition}
        If $P$ is reversible with respect to a distribution $\pi$, then $\pi$ is a stationary distribution of $P$.

    We consider a particular set of Markov processes known as random walks. In particular, we consider the following class of random walks over electric networks. 

    \begin{definition}[{See e.g.~\cite[Section~9.1]{levin2017markov}}]
        \label{def:random-walk-electrical-network}
        Let $G = (\Omega,E)$ be a connected undirected graph, and let $r : E \to \R_{>0}$ be a resistance function. We refer to $(G,r)$ as an electrical network. We define $M = (\Omega,P)$ as
        \[P(\omega_1,\omega_2) := \begin{cases}
            \frac{r_{\omega_1}}{r_{\omega_1\omega_2}}, & \text{if } \omega_1\omega_2 \in E, \\
            0, & \text{otherwise},
        \end{cases} \quad \text{where} \quad r_{\omega_1} := \left[\sum_{\omega_2 \in N(\omega_1)} \frac{1}{r_{\omega_1\omega_2}}\right]^{-1}, \quad \text{and} \quad R := \left[\sum_{\omega \in \Omega} \frac{1}{r_{\omega}}\right]^{-1}.\]
    \end{definition}

    Random walks over connected electrical networks are always irreducible and reversible.

    \begin{theorem}[{See e.g.~\cite[Section~9.1]{levin2017markov}}]
        \label{thm:electrical-network-walk}
        Let $G = (\Omega,E)$ be a connected undirected graph, and $r : E \to \R_{>0}$ be a resistance function. The random walk over $G$ with resistance function $r$ is irreducible, reversible, and the stationary distribution is $\pi_{\omega} = R/r_{\omega}$.
    \end{theorem}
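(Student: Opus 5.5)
The plan is to verify each claim directly from \Cref{def:random-walk-electrical-network}, and then invoke the remark after \Cref{def:reversible-random-walk} that a chain reversible with respect to $\pi$ has $\pi$ as a stationary distribution.

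First, we check that $(\Omega,P)$ is a Markov process in the sense of \Cref{def:Markov-process}. All entries of $P$ are nonnegative because resistances are positive. For each $\omega_1$ we compute
\[
\sum_{\omega_2} P(\omega_1,\omega_2) = r_{\omega_1}\sum_{\omega_2 \in N(\omega_1)} \frac{1}{r_{\omega_1\omega_2}} = 1,
\]
by the definition of $r_{\omega_1}$. This needs $N(\omega_1)\neq\emptyset$, which holds by connectivity as long as $|\Omega|\ge 2$; the one-vertex case is degenerate and we exclude it.

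Irreducibility comes from connectivity. For any $\omega_1,\omega_2$ we take a path $\omega_1=v_0,v_1,\dots,v_t=\omega_2$ in $G$. Each step has $P(v_{i},v_{i+1}) = r_{v_i}/r_{v_iv_{i+1}}>0$, so $P^t(\omega_1,\omega_2)>0$. When $\omega_1=\omega_2$ we use a path that goes to a neighbor and back, which gives $t=2$.

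Next, we set $\pi_\omega := R/r_\omega$. This is positive, and $\sum_\omega \pi_\omega = R\sum_\omega 1/r_\omega = 1$ by the definition of $R$, so $\pi$ is a probability distribution. For detailed balance, if $\omega_1\omega_2\in E$ then
\[
\pi(\omega_1)P(\omega_1,\omega_2) = \frac{R}{r_{\omega_1}}\cdot\frac{r_{\omega_1}}{r_{\omega_1\omega_2}} = \frac{R}{r_{\omega_1\omega_2}},
\]
and this is symmetric in $\omega_1,\omega_2$ since edges are unordered. If $\omega_1\omega_2\notin E$, both sides vanish.

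Hence the chain is reversible with respect to $\pi$, so $\pi$ is stationary. By \Cref{thm:irreducible-implies-unique-stationary} it is the unique stationary distribution. There is no real obstacle here; the only point needing care is the degenerate single-vertex case, which we set aside.
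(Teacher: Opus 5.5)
Your proof is correct and follows the standard argument. The paper gives no proof of its own and only cites \cite[Section~9.1]{levin2017markov}, where the claim is established exactly as you do it: checking row sums, getting irreducibility from paths in the connected graph, and verifying the symmetric detailed-balance identity $\pi(\omega_1)P(\omega_1,\omega_2) = R/r_{\omega_1\omega_2}$ (in conductance language, $c(x,y)/c_G$). Setting aside the one-vertex graph is justified, since there $N(\omega)=\emptyset$, so $r_\omega$ is undefined and the walk itself does not exist.
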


    In fact, the previous implication is indeed an equivalence.

    \begin{theorem}[{See e.g.~\cite[Section~9.1]{levin2017markov}}]
        \label{thm:walk-electrical-network}
        Let $M = (\Omega, P)$ be an irreducible, reversible Markov process. Then, there is an electrical network $(G,r)$ such that $M$ is the random walk on $G$ with resistance function $r$, and $R = 1$.
    \end{theorem}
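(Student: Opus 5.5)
The plan is to build the electrical network directly from the reversible chain by setting edge conductances equal to the stationary edge flows. Let $\pi$ be the unique stationary distribution of $M$, which is positive everywhere by \Cref{thm:irreducible-implies-unique-stationary}; reversibility means $\pi$ satisfies \eqref{eq:detailed balance}.

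\textbf{Step 1: the graph.} Let $G = (\Omega, E)$ with
\[
E = \bigl\{\{\omega_1,\omega_2\} : \omega_1 \neq \omega_2,\ P(\omega_1,\omega_2) > 0\bigr\}.
\]
By detailed balance and positivity of $\pi$, $P(\omega_1,\omega_2) > 0$ holds if and only if $P(\omega_2,\omega_1) > 0$. So $E$ is a well-defined undirected edge set, and irreducibility makes $G$ connected.

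\textbf{Step 2: the resistances.} For $\{\omega_1,\omega_2\} \in E$, set
\[
r_{\omega_1\omega_2} := \frac{1}{\pi(\omega_1)P(\omega_1,\omega_2)},
\]
which is symmetric in $\omega_1,\omega_2$ by \eqref{eq:detailed balance}. Then
\[
r_{\omega_1}^{-1} = \sum_{\omega_2 \in N(\omega_1)} \pi(\omega_1)P(\omega_1,\omega_2) = \pi(\omega_1)\bigl(1 - P(\omega_1,\omega_1)\bigr),
\]
and so
\[
\frac{r_{\omega_1}}{r_{\omega_1\omega_2}} = \frac{P(\omega_1,\omega_2)}{1-P(\omega_1,\omega_1)}.
\]
This matches $P$ exactly only when $P$ has no self-loops.

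\textbf{Step 3: handling self-loops (the main obstacle).} \Cref{def:random-walk-electrical-network} uses simple graphs with no loops, so diagonal mass $P(\omega,\omega) > 0$ cannot be represented as it stands. I would first try to read the statement in the convention where loops $\{\omega,\omega\}$ are allowed in $E$, with resistance $1/(\pi(\omega)P(\omega,\omega))$. With that convention the sums run over all $\omega_2$, giving $r_\omega = 1/\pi(\omega)$ and $P(\omega_1,\omega_2) = r_{\omega_1}/r_{\omega_1\omega_2}$ exactly. Otherwise I would restrict the claim to chains with zero diagonal, or note that the paper's subsequent uses only need the off-diagonal structure. I would flag this explicitly, since the literal statement in the paper fails for lazy chains under a loop-free edge set.

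\textbf{Step 4: normalization.} With loops allowed,
\[
R = \Bigl[\sum_{\omega} \pi(\omega)\Bigr]^{-1} = 1.
\]
This is consistent with \Cref{thm:electrical-network-walk}, which gives stationary distribution $\pi_\omega = R/r_\omega = \pi(\omega)$.
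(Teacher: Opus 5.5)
Your construction is exactly the argument in the cited section of Levin--Peres: conductances $\pi(\omega_1)P(\omega_1,\omega_2)$ on the support of $P$, with loops $\{\omega,\omega\}$ included when $P(\omega,\omega)>0$, so that $r_\omega = 1/\pi(\omega)$ and $R=1$; that source explicitly allows loops, so your proof is correct and takes the same route. Your caveat is also right: under the paper's convention $E \subseteq \binom{V}{2}$ the statement literally fails for chains with a positive diagonal entry, but this is harmless because the only chain the paper actually feeds into it (via \Cref{thm:refl-stationary}) is the bipartite speedy walk of \Cref{def:bipartite-speedy}, which has zero diagonal.
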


    For electrical networks, we have a linear-algebraic formulation of the effective resistance. 
    We base our exposition on \cite{cornelissen2025quantum}, but a similar treatment can be found in \cite{levin2017markov}, and sources mentioned therein. 

    \begin{definition}[{\cite[Definition~2.8]{cornelissen2025quantum}}]
        \label{def:effective-resistance}
        Let $(G = (V,E), r)$ be an electrical network. We arbitrarily associate a direction to every edge $E$, and we write $N_+(v)$ for the outgoing edges at $v$ and $N_-(v)$ for the incoming ones. Let $f : E \to \C$ be a flow, and let $\delta_f \in \C^V$ be its net-flow, defined as $\delta_f(v) = \sum_{e \in N_+(v)} f_e - \sum_{e \in N_-(v)} f_e$. Let $\C^E$ be the flow space, and we write $\ket{f} = \sum_{e \in E} f_e\sqrt{r_e}\ket{e}$. The energy of a flow is defined as $\norm{\ket{f}}^2$, and for any net-flow $\delta \in \C^V$, we define the effective resistance as
        \[R_{\mathrm{eff}}(G,r;\delta) := \min_{\substack{f : E \to \C \\ \delta_f = \delta}} \norm{\ket{f}}^2.\]
    \end{definition}

    Let us make the linear-algebraic formulation more concrete. Based on the arbitrarily assigned direction to each edge, let $B \in \R^{V \times E}$ be the resistance weighted vertex-edge incidence matrix: $B_{v,(u,v)} = r_{u,v}^{-1/2}$, $B_{v,(v,u)}=-r_{u,v}^{-1/2}$, and zero elsewhere. We define the weighted Laplacian $L = B B^T$. The effective resistance of a netflow $\delta \in \R^V$ is then $\delta^T L^+ \delta$, where $(\cdot)^+$ denotes the Moore-Penrose pseudo-inverse.

    Finally, we define the notions of spectral gap and conductance of Markov processes.

    \begin{definition}[{See e.g.~\cite[Section~7.2]{levin2017markov}}] \label{def:conductance-gap}
        Let $(\Omega,P)$ be an irreducible, reversible Markov process, and let $\lambda$ be the second-largest eigenvalue of $P$. Then we write $\gamma(P) := 1 - \lambda$ for the spectral gap of the Markov process. We also write $\varphi(P)$ for the conductance, i.e.,
        \[\varphi(P) := \min_{S \subseteq \Omega} \frac{\sum_{(x,y) \in S \times (\Omega \setminus S)} \pi(x)P(x,y)}{\min\{\pi(S), \pi(\Omega \setminus S)\}}, \qquad \text{where} \qquad \pi(A) := \sum_{x \in A} \pi(x).\]
    \end{definition}
    We record the following variational characterization of $\gamma(P)$ using the Dirichlet form associated to $P$ and its stationary distribution. 
    \begin{lemma}[{See e.g.~\cite[Lemma~13.7]{levin2017markov}}]
        \label{lem:gap-dirichlet}
        Let $(\Omega,P)$ be an irreducible, reversible Markov process with stationary distribution $\pi$, then 
        \[
        \gamma(P) = \frac12 \min_{\substack{f:\Omega \to \R,\\ \Var_\pi(f) \neq 0}} \frac{\sum_{x,y \in \Omega} (f(x)-f(y))^2 \pi(x) P(x,y)}{\Var_\pi(f)}.
        \]
    \end{lemma}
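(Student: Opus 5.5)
The plan is to pass to the Hilbert space $L^2(\pi)$, with inner product $\langle f,g\rangle_\pi := \sum_{x\in\Omega}\pi(x) f(x) g(x)$, and show the claim is the Courant--Fischer characterization of the second-largest eigenvalue of $P$ acting on functions via $(Pf)(x) = \sum_y P(x,y) f(y)$.

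First I will show that $P$ is self-adjoint on $L^2(\pi)$. The detailed balance condition \eqref{eq:detailed balance} gives
\[\langle f, Pg\rangle_\pi = \sum_{x,y}\pi(x)P(x,y)f(x)g(y) = \langle Pf,g\rangle_\pi.\]
Equivalently, $D_\pi^{1/2} P D_\pi^{-1/2}$ is a real symmetric matrix with the same spectrum as $P$. Hence $P$ has real eigenvalues $1=\lambda_1\ge\lambda_2\ge\dots$ with an $L^2(\pi)$-orthonormal eigenbasis. Since $P\mathbbm{1}=\mathbbm{1}$, we may take the first eigenvector to be $\mathbbm{1}$. By irreducibility and \Cref{thm:irreducible-implies-unique-stationary} (Perron--Frobenius), the eigenvalue $1$ is simple. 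Thus $\lambda=\lambda_2$ is the maximum of the Rayleigh quotient $\langle f,Pf\rangle_\pi/\langle f,f\rangle_\pi$ over nonzero $f\perp_\pi \mathbbm{1}$, and so
\[\gamma(P)=\min_{f\perp_\pi\mathbbm{1},\,f\neq 0}\frac{\langle f,(I-P)f\rangle_\pi}{\langle f,f\rangle_\pi}.\]

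Next I will compute the Dirichlet form. Expand $\frac12\sum_{x,y}(f(x)-f(y))^2\pi(x)P(x,y)$ into three terms:
\begin{itemize}
\item The $f(x)^2$ term gives $\frac12\sum_x\pi(x)f(x)^2$, using $\sum_y P(x,y)=1$.
\item The $f(y)^2$ term gives $\frac12\sum_y\pi(y)f(y)^2$, using stationarity $\sum_x\pi(x)P(x,y)=\pi(y)$.
\item The cross term is $-\langle f,Pf\rangle_\pi$.
\end{itemize}
Together these give $\langle f,(I-P)f\rangle_\pi$.

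Finally I will remove the orthogonality constraint. Both the Dirichlet sum and $\Var_\pi(f)$ are invariant under $f\mapsto f+c\mathbbm{1}$. Replacing $f$ by $f-\E_\pi f$ makes it $\pi$-orthogonal to $\mathbbm{1}$, and then $\Var_\pi(f)=\langle f,f\rangle_\pi$. Also, $\Var_\pi(f)\neq0$ holds exactly when $f$ is non-constant, since $\pi>0$ everywhere; this is precisely the condition $f-\E_\pi f\neq 0$. So the two minimizations coincide, giving the stated formula.

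The only delicate point is simplicity of the eigenvalue $1$, which irreducibility supplies. The rest is routine bookkeeping.
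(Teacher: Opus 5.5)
Your proof is correct and is the standard argument behind the cited \cite[Lemma~13.7]{levin2017markov}; the paper itself only cites that lemma and gives no proof. It consists of the identity $\frac12\sum_{x,y}(f(x)-f(y))^2\pi(x)P(x,y)=\langle f,(I-P)f\rangle_\pi$, the Rayleigh-quotient characterization of $\lambda_2$ for the $\pi$-self-adjoint $P$ on the $\pi$-orthogonal complement of $\mathbbm{1}$, and centering $f$ to drop the constraint. Your brief appeal to uniqueness of the stationary distribution for simplicity of the eigenvalue $1$ is also sound: for reversible $P$, a non-constant $f$ with $Pf=f$ would make $x\mapsto\pi(x)(1+\epsilon f(x))$, for small $\epsilon>0$, a positive stationary measure whose normalization differs from $\pi$.
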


    The spectral gap and conductance are related according to the well-known Cheeger inequalities.

    \begin{theorem}[Cheeger's inequalities~(see e.g.~{\cite[Equation~(13.6)]{levin2017markov})}]
        \label{thm:conductance-spectral-gap}
        For any reversible Markov process, we have
        \[\frac{\varphi(P)^2}{2} \leq \gamma(P) \leq 2\varphi(P).\]
    \end{theorem}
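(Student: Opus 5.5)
The statement has two inequalities, and I would prove each from the variational characterization of \Cref{lem:gap-dirichlet}. Throughout I write $\mathcal{E}(f) := \frac12\sum_{x,y}(f(x)-f(y))^2\pi(x)P(x,y)$ and $Q(S,T) := \sum_{x\in S,y\in T}\pi(x)P(x,y)$. By reversibility, $Q(S,\Omega\setminus S) = Q(\Omega\setminus S,S)$.

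\emph{Upper bound $\gamma(P)\le 2\varphi(P)$.} Let $S$ be the set attaining the minimum in the definition of $\varphi(P)$, and plug $f=\mathbbm{1}_S$ into \Cref{lem:gap-dirichlet}. Then $\Var_\pi(f)=\pi(S)\pi(\Omega\setminus S)$. The Dirichlet sum only counts pairs $(x,y)$ with exactly one endpoint in $S$, so $\mathcal{E}(f)=Q(S,\Omega\setminus S)$. This gives
\[
\gamma(P)\le \frac{Q(S,\Omega\setminus S)}{\pi(S)\pi(\Omega\setminus S)} \le \frac{2\,Q(S,\Omega\setminus S)}{\min\{\pi(S),\pi(\Omega\setminus S)\}} = 2\varphi(P),
\]
where the middle step uses $\max\{\pi(S),\pi(\Omega\setminus S)\}\ge 1/2$.

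\emph{Lower bound $\varphi(P)^2/2\le\gamma(P)$.} This is the main work, and I would follow the standard eigenvector-truncation argument.

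1. Take a real eigenvector $f$ of $P$ for the eigenvalue $\lambda=1-\gamma$. It is orthogonal to constants in $\ell^2(\pi)$, so it takes both signs. Replacing $f$ by $-f$ if needed, I assume $\pi(\{f>0\})\le 1/2$.

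2. Set $g=\max(f,0)$. On $\{f>0\}$ we have $(Pg)(x)\ge (Pf)(x)=\lambda f(x)=\lambda g(x)$, hence $((I-P)g)(x)\le\gamma g(x)$. Multiplying by $\pi(x)g(x)$ and summing (the terms off the support of $g$ vanish) yields $\mathcal{E}(g)=\langle g,(I-P)g\rangle_\pi\le\gamma\norm{g}_\pi^2$.

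3. Use the co-area formula to lower bound $A:=\sum_{x,y}\pi(x)P(x,y)\,|g(x)^2-g(y)^2|$. Writing $|g(x)^2-g(y)^2|=\int_0^\infty|\mathbbm{1}_{g(x)^2>t}-\mathbbm{1}_{g(y)^2>t}|\,\mathrm{d}t$, we get $A=\int_0^\infty 2Q(S_t,\Omega\setminus S_t)\,\mathrm{d}t$ with $S_t=\{g^2>t\}$. Each $S_t\subseteq\{f>0\}$ satisfies $\pi(S_t)\le 1/2$, so $Q(S_t,\Omega\setminus S_t)\ge\varphi\,\pi(S_t)$. Integrating gives $A\ge 2\varphi\norm{g}_\pi^2$.

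4. Upper bound $A$ by Cauchy--Schwarz:
\[
A\le\sqrt{\sum\pi P\,(g(x)-g(y))^2}\,\sqrt{\sum\pi P\,(g(x)+g(y))^2}\le\sqrt{2\mathcal{E}(g)}\,\sqrt{4\norm{g}_\pi^2},
\]
where the second factor uses $(a+b)^2\le 2a^2+2b^2$ and stationarity.

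5. Combining steps 2–4 gives $2\varphi\norm{g}^2\le\sqrt{8\gamma}\,\norm{g}^2$, i.e. $\gamma\ge\varphi^2/2$.

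The delicate points are the pointwise inequality in step 2 on the positive support of $f$, and the constant bookkeeping in steps 3–4. In particular, step 3 must count ordered pairs so that $|\mathbbm{1}_{g(x)^2>t}-\mathbbm{1}_{g(y)^2>t}|$ sums to exactly $2Q(S_t,\Omega\setminus S_t)$.
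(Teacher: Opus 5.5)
Your proof is correct. You use the test function $\mathbbm{1}_S$ in \Cref{lem:gap-dirichlet} for the upper bound, and for the lower bound the positive part of a second eigenfunction together with the co-area identity and Cauchy--Schwarz, with the constants tracked correctly to give $\varphi\le\sqrt{2\gamma}$. The paper does not prove this statement; it only cites Levin--Peres, and your argument is the standard Jerrum--Sinclair/Lawler--Sokal proof given there. The one implicit input, $\lambda<1$ so that $f\perp\mathbbm{1}$, holds because the paper's definition of reversibility includes irreducibility.
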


    The linear-algebraic perspective again helps us to relate the effective resistance of a netflow to the spectral gap of the random walk. Indeed, first observe that the random walk transition matrix $P$ satisfies the identity $L = D(I-P)$, where $D=\diag(\pi)$ and $\pi$ is the stationary distribution (cf.~\cite[Theorem 2.11]{cornelissen2025quantum}). This implies that the spectrum of $I-P$ satisfies $\spec(I-P) = \spec(D^{-1}L) = \spec(D^{-1/2} L D^{-1/2})$. The spectrum of $I-P$ also takes the form $\{0,\gamma(P),...\}$. We can thus characterize $\gamma(P)^{-1}$ by considering the largest eigenvalue of the pseudoinverse of $D^{-1/2} L D^{-1/2}$. That is, 
    \[\gamma(P)^{-1} = \lambda_{\max}(D^{1/2} L^+ D^{1/2}) = \max_{u \neq 0} \frac{u^T D^{1/2} L^+ D^{1/2} u}{\norm{u}^2} = \max_{\delta \neq 0} \frac{\delta^T L^+ \delta}{\norm{D^{-1/2}\delta}^2} = \max_{\delta \neq 0} \frac{R_{\mathrm{eff}}(G,r;\delta)}{\norm{D^{-1/2}\delta}^2}.\]
    This allows us to relate effective resistances to the spectral gap of the walk, as shown in the following theorem.
    \begin{theorem}[{\cite[Lemma~3.2]{cornelissen2025quantum}}]
        \label{thm:effective-resistance-spectral-gap}
        Let $(\Omega,P)$ be a Markov process, and let $(G,r)$ be the associated electrical network. Let $D = \diag(\pi)$. Then, for any net-flow $\delta \in \C^{\Omega}$, we have
        \[R_{\mathrm{eff}}(G,r;\delta) \leq \frac{\norm{D^{-1/2}\delta}^2}{\gamma(P)}.\]
    \end{theorem}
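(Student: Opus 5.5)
The plan is to use the linear-algebraic derivation given just above the statement, and to handle the net-flows that are not orthogonal to the stationary direction separately. First, if $\sum_{\omega} \delta(\omega) \neq 0$, then no flow has net-flow $\delta$: summing $\delta_f(v)$ over all vertices counts each edge flow once positively and once negatively, so $\sum_v \delta_f(v) = 0$. The minimum in \Cref{def:effective-resistance} is therefore over an empty set. We must either adopt the convention $R_{\mathrm{eff}} = +\infty$ and exclude this case, or read the statement for net-flows with $\sum_v \delta(v) = 0$; the latter is what the paper's later uses need. So we will restrict to $\delta \perp \mathbbm{1}$.

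For such $\delta$, the first step is to show $R_{\mathrm{eff}}(G,r;\delta) = \delta^* L^+ \delta$ with $L = BB^T$, as the text asserts. Writing $f$ in weighted coordinates $g_e = f_e\sqrt{r_e}$, the net-flow constraint becomes $Bg = \delta$ and the energy is $\norm{g}^2$. The minimum-norm solution is $g = B^+\delta$. It is feasible because $G$ is connected, so $\mathrm{Im}(B) = \mathrm{Im}(L) = \mathbbm{1}^\perp \ni \delta$. Its energy is $\delta^* (B^+)^* B^+ \delta = \delta^* L^+ \delta$. Complex $\delta$ pose no extra difficulty, since $L$ is real symmetric.

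Next, we use $L = D(I-P)$, which follows from detailed balance, $\pi(\omega_1)P(\omega_1,\omega_2) = 1/r_{\omega_1\omega_2}$ with $R=1$ as in \Cref{thm:walk-electrical-network}. Then $M := D^{-1/2} L D^{-1/2} = D^{1/2}(I-P)D^{-1/2}$ is symmetric PSD with the same spectrum as $I-P$. Its kernel is spanned by $D^{1/2}\mathbbm{1} = \sqrt{\pi}$, and its smallest nonzero eigenvalue is $\gamma(P)$. Because $\ker L = \Span(\mathbbm{1})$, we have $L^+ = \Pi D^{-1/2} M^+ D^{-1/2}\Pi$, where $\Pi$ is the orthogonal projector onto $\mathbbm{1}^\perp$.

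The main obstacle is here. Pseudo-inverses do not commute with the non-orthogonal congruence by $D^{1/2}$, so the naive identity $L^+ = D^{-1/2}M^+D^{-1/2}$ is false. I would avoid computing $L^+$ exactly and use the variational form instead:
\[
\delta^* L^+ \delta = \sup_{x} \bigl(2\,\mathrm{Re}\,\langle x,\delta\rangle - x^* L x\bigr).
\]
Substituting $x = D^{-1/2}y$ turns the objective into $2\,\mathrm{Re}\,\langle y, D^{-1/2}\delta\rangle - y^* M y$. Since $\delta \perp \mathbbm{1}$, the vector $u := D^{-1/2}\delta$ satisfies $\langle \sqrt{\pi}, u\rangle = \sum_\omega \delta(\omega) = 0$, so $u \perp \ker M$. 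The supremum then equals $u^* M^+ u \le \norm{u}^2/\gamma(P)$, because $M^+$ has largest eigenvalue $1/\gamma(P)$ on $(\ker M)^\perp$. This gives $R_{\mathrm{eff}}(G,r;\delta) \le \norm{D^{-1/2}\delta}^2/\gamma(P)$.

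One could also bypass $L^+$ entirely by exhibiting an explicit flow, taking the potential $p = D^{-1/2}M^+u$ and the flow $f = B^T p$ suitably weighted, and bounding its energy. However, the variational route above is cleaner, and I would try it first.
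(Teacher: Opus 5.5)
Your proof is correct. It follows the outline of the derivation the paper gives just before the statement: $R_{\mathrm{eff}}(G,r;\delta)=\delta^*L^+\delta$, then $L=D(I-P)$, then $M=D^{-1/2}LD^{-1/2}$ having the spectrum of $I-P$. It differs at the decisive step, and there your version is the sound one.

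The paper writes $\gamma(P)^{-1}=\lambda_{\max}(D^{1/2}L^+D^{1/2})$ and maximizes the Rayleigh quotient over all $\delta\neq 0$. This tacitly uses $M^+=D^{1/2}L^+D^{1/2}$, which is exactly the identity you flag as false. The two operators have ranges $\sqrt{\pi}^{\perp}$ and $(D^{-1/2}\mathbbm{1})^{\perp}$, and these differ unless $\pi$ is uniform.

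The failure is genuine, not just a gap in justification. Take the path $1$--$2$--$3$ with $r_{12}=6$ and $r_{23}=3$, so that $R=1$, $\pi=(1/6,1/2,1/3)$ and $\gamma(P)=1$. The vector $\delta=(8,-3,-14)$ gives $\delta^TL^+\delta/\norm{D^{-1/2}\delta}^2=1089/990=11/10>\gamma(P)^{-1}$. Hence $\lambda_{\max}(D^{1/2}L^+D^{1/2})>\gamma(P)^{-1}$.

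This $\delta$ has nonzero sum, so the theorem itself survives. It does show that your restriction to $\delta\perp\mathbbm{1}$ is essential, not merely a convention. It also matches how the paper defines net-flows (cf.\ \Cref{lem:electrical-network-properties}). On $D^{-1/2}(\mathbbm{1}^{\perp})=\sqrt{\pi}^{\perp}$ the quadratic forms of $D^{1/2}L^+D^{1/2}$ and $M^+$ coincide, and that is precisely what your variational computation establishes.

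Finally, the formula you already derived, $L^+=\Pi D^{-1/2}M^+D^{-1/2}\Pi$, finishes the proof in one line. For $\delta\perp\mathbbm{1}$ it gives $\delta^*L^+\delta=u^*M^+u$ with $u=D^{-1/2}\delta\perp\sqrt{\pi}$. So the variational step is a valid alternative but not needed.
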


    \section{Amplitude amplification with transducers}
    \label{sec:ampl-ampl}

    In this section, we give an amplitude amplification construction on the level of transducers. We start by introducing a basic version of the construction in \Cref{subsec:ampl-ampl-base-case}. Then, we provide two applications in \Cref{subsec:ampl-ampl-two-state-rotation,subsec:ampl-ampl-rejection-sampling}.

    Our constructions are to be contrasted with \cite[Theorem~6.1]{apers2026elfs}, where Apers, Roland and Zhang also develop an amplitude amplification construction on the level of transducers. The benefit of the construction presented in this section over theirs is that this construction does not produce a garbage state. This is necessary for the implementation of the quantum walk operator, which we will see in \Cref{sec:quantum-walks-transducers}.
    The downside of our construction is that we do not currently know if it is possible to implement this transducer time-efficiently, in contrast with the transducer constructed in \cite{apers2026elfs}. We leave this for future work.

    \subsection{Base case}
    \label{subsec:ampl-ampl-base-case}

    The simplest instantiation of the amplitude amplification problem is where we start in the state $\ket{\psi_0}\ket{0} + \ket{\psi_1}\ket{1}$, we can reflect through this state, and we wish to prepare $\ket{\psi_1}$. The construction we present in this section is a generalization of the routine that prepares a uniform superposition over the solutions in Grover's search, as introduced by Cornelissen, Edenhofer, Schaeffer and Szab\'o~\cite{cornelissen2026quantum}. Concurrently, it was also independently discovered by Dubus, Ladeuze and Roland~\cite{dubus2026transducer}.

    We start by recalling a particular class of positive semidefinite matrices  which we then use to construct a feasible solution to the bidirectional adversary bound for an amplitude-amplification state-conversion problem. 

    \begin{lemma}
        \label{lem:ratio}
        Let $\D$ be a set, and $(b_x)_{x \in \D} \subseteq \R_{>0}$. Let $A \in \R^{\D \times \D}$ be defined by $A[x,y] = 1/(b_x + b_y)$. Then $A \succeq 0$.
    \end{lemma}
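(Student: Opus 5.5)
We write the entries of $A$ as an integral of products and read off a Gram matrix. For every $x \in \D$ we have $b_x > 0$, so for all $x,y \in \D$
\[
A[x,y] = \frac{1}{b_x + b_y} = \int_0^\infty e^{-(b_x+b_y)t}\,\mathrm{d}t = \int_0^\infty e^{-b_x t} e^{-b_y t}\,\mathrm{d}t,
\]
and the integral converges absolutely because $b_x + b_y > 0$. Define $g_x(t) := e^{-b_x t}$ for $t \ge 0$. Each $g_x$ lies in $L^2([0,\infty))$, since $\int_0^\infty g_x(t)^2\,\mathrm{d}t = 1/(2b_x) < \infty$. The identity above then reads $A[x,y] = \langle g_x, g_y\rangle_{L^2([0,\infty))}$, so $A$ is the Gram matrix of the family $(g_x)_{x\in\D}$.

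Positive semidefiniteness follows directly. Take any finitely supported real vector $c \in \R^{\D}$. Then
\[
c^T A c = \sum_{x,y} c_x c_y \int_0^\infty g_x(t) g_y(t)\,\mathrm{d}t = \int_0^\infty \Bigl(\sum_x c_x g_x(t)\Bigr)^2\,\mathrm{d}t \ge 0,
\]
where the finite sum may be exchanged with the integral.

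If $\D$ is infinite, we read $A \succeq 0$ as saying that every finite principal submatrix is positive semidefinite. The argument above gives exactly this, because it treats an arbitrary finite support. Equivalently, the Gram representation supplies explicit vectors $\ket{g_x}$ with $\braket{g_x}{g_y} = A[x,y]$, and this is the form in which we expect the lemma to be used later, namely to build feasible adversary vectors.

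None of these steps is difficult. The only point that needs checking is that the integrals converge, and this is guaranteed by the strict positivity of the $b_x$.
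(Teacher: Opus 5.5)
Your proof is correct and is the same argument as the paper's. The paper also takes $\ket{\chi_x} = e^{-b_x t} \in L^2(0,\infty)$ and reads off $A$ as their Gram matrix via $\int_0^\infty e^{-(b_x+b_y)t}\,\mathrm{d}t = 1/(b_x+b_y)$; your remarks on convergence and on reading $A \succeq 0$ through finite principal submatrices when $\D$ is infinite are valid additions.
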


    \begin{proof}
        For each $x \in \D$, let $\ket{\chi_x} =  e^{-b_x t} \in L^2(0,\infty)$. Then $A[x,y] = \braket{\chi_x}{\chi_y} =  \int_{0}^\infty e^{-(b_x+b_y)t}  \;\mathrm{d}t = \frac{1}{b_x+b_y}$, implying that $A$ is a Gram matrix and hence positive semidefinite.
    \end{proof}


    \begin{theorem}[Amplitude-amplification adversary solution]
        \label{thm:ampl-ampl-adv}
        Let $\D$ be a set, and for all $x \in \D$, we write $\ket{\psi_x^0},\ket{\psi_x^1} \in \mathcal{H}$, such that for all $x \in \D$, $\norm{\ket{\psi_x^0}}^2 + \norm{\ket{\psi_x^1}}^2 = 1$ and $\norm{\ket{\psi_x^1}} > 0$. Then, we write
        \[\ket{\psi_x} = \ket{\psi_x^0}\ket{0} + \ket{\psi_x^1}\ket{1}, \qquad \text{and} \qquad \ket{\overline{\psi}_x} = \frac{\norm{\ket{\psi_x^1}}}{\norm{\ket{\psi_x^0}}}\ket{\psi_x^0}\ket{0} - \frac{\norm{\ket{\psi_x^0}}}{\norm{\ket{\psi_x^1}}}\ket{\psi_x^1}\ket{1}.\]
        Suppose we have a feasible solution $\{\ket{v_x^{\pm}}, \ket{\overline{v}_x^{\pm}}\}_{x \in \D}$ to the bidirectional adversary bound for the state-conversion problem $\{(\ket{\psi_x},\ket{\psi_x},O_x),(\ket{\overline{\psi_x}},-\ket{\overline{\psi}_x}, O_x)\}_{x \in \D}$. Then, we can construct a feasible solution $\{\ket{w_x^{\pm}},\ket{\overline w_x^{\pm}}\}_{x \in \D}$ to the bidirectional adversary bound for the state-conversion problem
        \[\left\{\left(\ket{\psi_x}, \frac{\ket{\psi_x^1}}{\norm{\ket{\psi_x^1}}}\ket{1}, O_x\right),\left(\ket{\overline\psi_x}, -\frac{\ket{\psi_x^0}}{\norm{\ket{\psi_x^0}}}\ket{0}, O_x\right)\right\}_{x \in \D}\]
        with
        \begin{align*}
            \norm{\ket{w_x^{\pm}}}^2 &= \frac{(1+2\norm{\ket{\psi_x^1}}) \norm{\ket{v_x^{\pm}}}^2 + \norm{\ket{\psi_0^x}}^2\norm{\ket{\overline{v}_x^{\pm}}}^2}{4\norm{\ket{\psi_x^1}}} \leq \frac{\norm{\ket{v_x^{\pm}}}^2 + \norm{\ket{\overline{v}_x^{\pm}}}^2}{4\norm{\ket{\psi_x^1}}} + \frac{\norm{\ket{v_x^\pm}}^2}2, \\
            \norm{\ket{\overline w_x^{\pm}}}^2 &= \frac{(1+2\norm{\ket{\psi_x^1}}) \norm{\ket{\overline v_x^{\pm}}}^2 + \norm{\ket{\psi_0^x}}^2\norm{\ket{v_x^{\pm}}}^2}{4\norm{\ket{\psi_x^1}}} \leq \frac{\norm{\ket{v_x^{\pm}}}^2 + \norm{\ket{\overline{v}_x^{\pm}}}^2}{4\norm{\ket{\psi_x^1}}} + \frac{\norm{\ket{\overline v_x^\pm}}^2}2.
        \end{align*}
    \end{theorem}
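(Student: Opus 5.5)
The plan is to build the new witnesses as tensor products of the old witnesses with auxiliary ``Gram'' vectors from \Cref{lem:ratio}, so that the inner products produce $x,y$-dependent coefficients like $1/(p_x+p_y)$. Throughout, write $p_x := \norm{\ket{\psi_x^1}}$ and $q_x := \norm{\ket{\psi_x^0}}$. Introduce the normalized good and bad states $\ket{g_x} := \ket{\psi_x^1}\ket{1}/p_x$ and $\ket{b_x} := \ket{\psi_x^0}\ket{0}/q_x$. Then
\[\ket{\psi_x} = q_x\ket{b_x} + p_x\ket{g_x}, \qquad \ket{\overline{\psi}_x} = p_x\ket{b_x} - q_x\ket{g_x}.\]
The new problem maps $\ket{\psi_x}\mapsto\ket{g_x}$ and $\ket{\overline\psi_x}\mapsto -\ket{b_x}$, i.e.\ it is a rotation by angle $\arcsin p_x$ in the plane $\Span\{\ket{b_x},\ket{g_x}\}$. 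With $B_{xy} := \braket{b_x}{b_y}$ and $G_{xy} := \braket{g_x}{g_y}$, all right-hand sides of the new constraints are bilinear in $(B_{xy},G_{xy})$ with coefficients depending only on $p_x,q_x,p_y,q_y$.

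Next, I tabulate what the old solution provides. Write $\Delta_{xy} := (I - O_x^\dagger O_y)\otimes I$. Feasibility for the reflection problem gives:
\begin{itemize}
\item $\bra{v_x^+}\Delta_{xy}\ket{v_y^-} = 0$ and $\bra{\overline v_x^+}\Delta_{xy}\ket{\overline v_y^-} = 0$;
\item $\bra{v_x^+}\Delta_{xy}\ket{\overline v_y^-} = 2\braket{\psi_x}{\overline\psi_y} = 2(q_xp_yB_{xy} - p_xq_yG_{xy})$;
\item $\bra{\overline v_x^+}\Delta_{xy}\ket{v_y^-} = 2(p_xq_yB_{xy} - q_xp_yG_{xy})$.
\end{itemize}
So only the mixed terms carry information, which is exactly like \Cref{thm:balancing-plus-minus}. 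I will make the ansatz
\[\ket{w_x^\pm} = \ket{v_x^\pm}\otimes\ket{\alpha_x^\pm} + \ket{\overline v_x^\pm}\otimes\ket{\beta_x^\pm},\]
and similarly for $\ket{\overline w_x^\pm}$ with the roles of $v,\overline v$ swapped. Here $\ket{\alpha},\ket{\beta}$ live in an auxiliary space, such as a direct sum of $\C$ (for rank-one scalar factors) and $L^2(0,\infty)$ (for functions $t\mapsto c\,e^{-p_xt}$). Then each constraint reduces to scalar identities of the form
\[\lambda_{xy}(q_xp_yB - p_xq_yG) + \mu_{xy}(p_xq_yB - q_xp_yG) = \text{required RHS}.\]

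For each of the four pair types, I solve this $2\times 2$ linear system in $(\lambda_{xy},\mu_{xy})$ by matching the $B$ and $G$ coefficients. For the $(\psi,\psi)$ pair, for example, the required right-hand side is $q_xq_yB + (p_xp_y-1)G$. I then check that each solution splits as a rank-one part plus a multiple of $1/(p_x+p_y)$. For this I expect identities like $1-p_xp_y-q_xq_y = \tfrac12\bigl((p_x-p_y)^2+(q_x-q_y)^2\bigr)$ and the relation $p^2+q^2=1$ to make the denominators factor as $p_x+p_y$. Such kernels are realized as Gram inner products: the $1/(p_x+p_y)$ part via \Cref{lem:ratio} with $b_x = p_x$, and the rank-one part via a $\C$ coordinate. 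The $\pm$ asymmetry of the bidirectional bound lets me put different scalings on the $+$ and $-$ sides if needed.

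Finally, I compute norms. The diagonal of the $1/(p_x+p_y)$ Gram kernel is $1/(2p_x)$, which after the natural $\tfrac12$ normalization gives the denominator $4p_x$. The rank-one parts contribute the $2p_x\norm{\ket{v_x^\pm}}^2$ term, and the cross-type piece contributes $q_x^2\norm{\ket{\overline v_x^\pm}}^2$; together these give the stated exact formulas. The stated upper bounds then follow from $q_x\le 1$ and $2p_x/(4p_x) = 1/2$. The main obstacle is the algebra in the previous step: finding explicit $(\lambda_{xy},\mu_{xy})$ for all four pair types simultaneously that admit a \emph{consistent} factorization $\lambda_{xy} = \braket{\alpha_x^+}{\beta_y^-}$, etc., with the same vectors reused across constraints. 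If this fails, I would first try the rotation picture directly: express the target map as an integral over $t$ of reflection witnesses weighted by $e^{-p t}$, mimicking the Grover transducer of~\cite{cornelissen2026quantum}, and read off the $\alpha,\beta$ from there.
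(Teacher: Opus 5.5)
Your route is the paper's: tensor the old witnesses with Gram vectors from \Cref{lem:ratio}, use that only the mixed products $\bra{v_x^+}\Delta_{xy}\ket{\overline v_y^-}$ and $\bra{\overline v_x^+}\Delta_{xy}\ket{v_y^-}$ survive, and match the coefficients of $B_{xy}$ and $G_{xy}$. However, the proposal stops at the step that actually constitutes the proof. You never produce auxiliary vectors that satisfy all four constraint types simultaneously, and you allow that this might fail, so the claimed norms are not yet justified.

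The missing idea is to decouple the constraints by putting the pieces into mutually orthogonal summands. Arrange them so that in each summand exactly one constraint type pairs a $v$ with a $\overline v$; all same-letter pairings vanish by feasibility of the old solution. With $\braket{\chi_x}{\chi_y}=1/(p_x+p_y)$ as in \Cref{lem:ratio}, the paper takes
\begin{align*}
\sqrt2\ket{w_x^+} &= \ket{v_x^+}\otimes\ket{\chi_x}\oplus\ket{\overline v_x^+}\otimes q_x\ket{\chi_x}\oplus\ket{v_x^+},\\
\sqrt2\ket{w_x^-} &= \ket{\overline v_x^-}\otimes q_x\ket{\chi_x}\oplus\ket{v_x^-}\otimes\ket{\chi_x}\oplus\ket{v_x^-},\\
\sqrt2\ket{\overline w_x^+} &= \ket{\overline v_x^+}\otimes\ket{\chi_x}\oplus\ket{v_x^+}\otimes(-q_x\ket{\chi_x})\oplus\ket{\overline v_x^+},\\
\sqrt2\ket{\overline w_x^-} &= \ket{v_x^-}\otimes(-q_x\ket{\chi_x})\oplus\ket{\overline v_x^-}\otimes\ket{\chi_x}\oplus\ket{\overline v_x^-}.
\end{align*}
In $\bra{w_x^+}\Delta_{xy}\ket{\overline w_y^-}$ the three summands pair $(v,v)$, $(\overline v,\overline v)$, $(v,\overline v)$. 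So only the untensored third summand contributes, giving $\braket{\psi_x}{\overline\psi_y}$. This is exactly the required value, because the targets $\ket{g_x}$ and $-\ket{b_y}$ are orthogonal; the case $\bra{\overline w_x^+}\Delta_{xy}\ket{w_y^-}$ is symmetric.

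In the same-type constraints only the first two summands contribute. One gets $\bra{w_x^+}\Delta_{xy}\ket{w_y^-}=\bigl(q_y\braket{\psi_x}{\overline\psi_y}+q_x\braket{\overline\psi_x}{\psi_y}\bigr)/(p_x+p_y)$. This equals the required $q_xq_yB_{xy}-(1-p_xp_y)G_{xy}$ by the identity $p_xq_y^2+p_yq_x^2=(p_x+p_y)(1-p_xp_y)$. Similarly, $\bra{\overline w_x^+}\Delta_{xy}\ket{\overline w_y^-}=-\bigl(q_x\braket{\psi_x}{\overline\psi_y}+q_y\braket{\overline\psi_x}{\psi_y}\bigr)/(p_x+p_y)$, which equals $(p_xp_y-1)B_{xy}+q_xq_yG_{xy}$. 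Reading off the norms with $\norm{\ket{\chi_x}}^2=1/(2p_x)$ gives exactly the stated formulas.

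Two smaller corrections to your heuristics. First, the solutions do not split as ``rank-one plus $1/(p_x+p_y)$'' within a single constraint. The same-type constraints are solved by pure Cauchy-type kernels $q_y/(p_x+p_y)$ and $q_x/(p_x+p_y)$, and the cross-type ones by the constants $\tfrac12$ and $0$. Hence the $\tfrac12\norm{\ket{v_x^\pm}}^2$ term in the norm comes only from the cross constraints. The identity that makes the denominators factor is the one above, not the one you anticipated. Second, for each $x$ the target map on $\Span\{\ket{b_x},\ket{g_x}\}$ is $\ket{b_x}\mapsto -p_x\ket{b_x}+q_x\ket{g_x}$, $\ket{g_x}\mapsto q_x\ket{b_x}+p_x\ket{g_x}$. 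This is a reflection rather than a rotation, which is harmless since only Gram matrices enter the constraints.
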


    \begin{proof}
    
	We first introduce some notation. For each $x \in \D$, let $a_x := \norm{\ket{\psi_x^0}}$ and $b_x := \norm{\ket{\psi_x^1}}$. 
    Note that $a_x^2+b_x^2=1$. We then have 
	\[
	\ket{\overline{\psi}_x} = \frac{b_x}{a_x} \ket{\psi_x^0}\ket{0} - \frac{a_x}{b_x}\ket{\psi_x^1}\ket{1}.
	\]
    
    Now let $\{\ket{v_x^{\pm}}, \ket{\overline{v}_x^{\pm}}\}_{x \in \D}$ be a feasible solution to the bidirectional adversary bound for the state-conversion problem $\{(\ket{\psi_x},\ket{\psi_x},O_x),(\ket{\overline{\psi_x}},-\ket{\overline{\psi}_x}, O_x)\}_{x \in \D}$. Then for any pair $x,y \in \D$ we have
    \begin{subequations} \label{eq:solprops}
    \begin{align}
    	\bra{v_x^+}(I-O_x^* O_y) \ket{v_y^-} &= 0,\\
    	\bra{v_x^+}(I-O_x^* O_y) \ket{\overline v_y^-} &= 2 \braket{\psi_x}{\overline\psi_y},\\
    	\bra{\overline v_x^+}(I-O_x^* O_y) \ket{\overline v_y^-} &= 0,\\
    	\bra{\overline v_x^+}(I-O_x^* O_y) \ket{v_y^-} &= 2 \braket{\overline\psi_x}{\psi_y}.
    \end{align}
    \end{subequations}
    We consider solutions of the form 
    \begin{align*} 
  	\sqrt{2}\ket{w_x^+} = \ket{v_x^+} \otimes \ket{\alpha_x^+} \oplus \ket{\overline{v}_x^+} \otimes \ket{\beta_x^+} \oplus  \ket{v_x^+}, \quad &\text{and} \quad \sqrt{2}\ket{w_x^-} = \ket{\overline{v}_x^-} \otimes \ket{\alpha_x^-} \oplus \ket{v_x^-} \otimes \ket{\beta_x^-} \oplus  \ket{v_x^-}, \\
  	\sqrt{2}\ket{\overline w_x^+} = \ket{\overline v_x^+} \otimes \ket{\overline \beta_x^+} \oplus \ket{v_x^+} \otimes \ket{\overline \alpha_x^+} \oplus \ket{\overline v_x^+}, \quad &\text{and} \quad \sqrt{2}\ket{\overline w_x^-} = \ket{v_x^-} \otimes \ket{\overline \beta_x^-} \oplus \ket{\overline v_x^-} \otimes \ket{\alpha_x^-} \oplus  \ket{\overline v_x^-},
    \end{align*}
        for vectors $\ket{\alpha_x^{\pm}}, \ket{\beta_x^{\pm}},\ket{\overline\alpha_x^{\pm}}, \ket{\overline\beta_x^{\pm}}$ that we will determine later. The relations in \Cref{eq:solprops} imply the following:
    \begin{subequations} \label{eq:solprops2}
    \begin{align}
    	\bra{w_x^+}(I-O_x^* O_y) \ket{w_y^-} &= \braket{\alpha_x^+}{\alpha_y^-} \braket{\psi_x}{\overline\psi_y} + \braket{\beta_x^+}{\beta_y^-}\braket{\overline\psi_x}{\psi_y}, \label{eq:a} \\
    	\bra{w_x^+}(I-O_x^* O_y) \ket{\overline w_y^-} &=  \braket{\psi_x}{\overline \psi_y}, \label{eq:b} \\
    	\bra{\overline w_x^+}(I-O_x^* O_y) \ket{\overline w_y^-} &= \braket{\overline \alpha_x^+}{\overline \alpha_y^-} \braket{\psi_x}{\overline\psi_y} + \braket{\overline \beta_x^+}{\overline \beta_y^-}\braket{\overline\psi_x}{\psi_y}, \label{eq:c} \\
    	\bra{\overline w_x^+}(I-O_x^* O_y) \ket{w_y^-} &= \braket{\overline \psi_x}{\psi_y}.\label{eq:d} 
    \end{align}
    \end{subequations}
	The constraints of the bidirectional adversary bound for the state-conversion problem 
\[
        \{(\ket{\psi_x}, \frac{\ket{\psi_x^1}}{\norm{\ket{\psi_x^1}}}\ket{1}, O_x),(\ket{\overline\psi_x}, -\frac{\ket{\psi_x^0}}{\norm{\ket{\psi_x^0}}}\ket{0}, O_x)\}_{x \in \D}\]
        	allow us to determine the vectors $\ket{\alpha_x^{\pm}}, \ket{\beta_x^{\pm}},\ket{\overline\alpha_x^{\pm}}, \ket{\overline\beta_x^{\pm}}$. Note that \cref{eq:b,eq:d} precisely correspond to the inner product of the input states minus the inner product of the target states, since the latter are orthogonal. The right hand side of \cref{eq:a} should equal \[
        	\braket{\psi_x}{\psi_y} - \frac{1}{b_x b_y} \braket{\psi_x^1}{\psi_y^1} = \braket{\psi_x^0}{\psi_y^0} + \left(1-\frac{1}{b_x b_y}\right) \braket{\psi_x^1}{\psi_y^1}. 
\]
Expressing $\braket{\psi_x}{\overline \psi_y}$ and $\braket{\overline \psi_x}{\psi_y}$ in terms of $\braket{\psi_x^0}{\psi_y^0}$ and $\braket{\psi_x^1}{\psi_y^1}$ gives rise to the linear system of equations
\[
\begin{bmatrix}
\frac{b_y}{a_y} & \frac{b_x}{a_x} \\ -\frac{a_y}{b_y} & -\frac{a_x}{b_x} 
\end{bmatrix} \begin{bmatrix} \braket{\alpha_x^+}{\alpha_y^-} \\ \braket{\beta_x^+}{\beta_y^-}
\end{bmatrix} = \begin{bmatrix}
1 \\ 1 - \frac{1}{b_x b_y}
\end{bmatrix}
\]
whose solution is $\braket{\alpha_x^+}{\alpha_y^-} = \frac{a_y}{b_x+b_y}$ and $\braket{\beta_x^+}{\beta_y^-} = \frac{a_x}{b_x+b_y}$. Indeed, the first equation is easy to verify and for the second we have 
\[
-\frac{a_y}{b_y} \frac{a_y}{b_x+b_y}  -\frac{a_x}{b_x} \frac{a_x}{b_x+b_y} = - \frac{\frac{1-b_y^2}{b_y} + \frac{1-b_x^2}{b_x}}{b_x+b_y} = 1 - \frac{\frac{1}{b_y} + \frac{1}{b_x}}{b_x+b_y} = 1 - \frac{1}{b_x b_y}.
\]
To construct $\ket{\alpha_x^\pm}$ and $\ket{\beta_x^\pm}$ with the desired inner products, we use \Cref{lem:ratio} to define $\ket{\chi_x}$ and set 
\[
\ket{\alpha_x^+} = \ket{\chi_x}, \quad \ket{\alpha_x^-} = a_x \ket{\chi_x}, \quad \ket{\beta_x^+} = a_x \ket{\chi_x}, \quad \ket{\beta_x^-} = \ket{\chi_x}.
\]
For \cref{eq:c} we proceed similarly. We require 
\[
\braket{\overline \alpha_x^+}{\overline \alpha_y^-} \braket{\psi_x}{\overline\psi_y} + \braket{\overline \beta_x^+}{\overline \beta_y^-}\braket{\overline\psi_x}{\psi_y} = \braket{\overline \psi_x}{\overline \psi_y} - \frac{1}{a_x a_y} \braket{\psi_x^0}{\psi_y^0} = \left(\frac{b_x b_y-1}{a_x a_y}\right) \braket{\psi_x^0}{\psi_y^0}+ \left(\frac{a_x a_y}{b_x b_y}\right) \braket{\psi_x^1}{\psi_y^1},
\]
which results in the linear system 
\[
\begin{bmatrix}
\frac{b_y}{a_y} & \frac{b_x}{a_x} \\ -\frac{a_y}{b_y} & -\frac{a_x}{b_x} 
\end{bmatrix} \begin{bmatrix} \braket{\overline\alpha_x^+}{\overline\alpha_y^-} \\ \braket{\overline \beta_x^+}{\overline \beta_y^-}
\end{bmatrix} = \begin{bmatrix}
\frac{b_x b_y-1}{a_x a_y} \\ \frac{a_x a_y}{b_x b_y}
\end{bmatrix}
\]
whose solution is $\braket{\overline \alpha_x^+}{\overline\alpha_y^-} = \frac{-a_x}{b_x+b_y}$ and $\braket{\overline\beta_x^+}{\overline\beta_y^-} = \frac{-a_y}{b_x+b_y}$. Indeed, let us verify both equations. First, 
\[
\frac{b_y}{a_y} \frac{-a_x}{b_x+b_y} + \frac{b_x}{a_x}\frac{-a_y}{b_x+b_y} = \frac{1}{a_x a_y} \frac{b_y(b_x^2-1) + b_x(b_y^2-1)}{b_x+b_y} = \frac{b_x b_y-1}{a_x a_y},  
\]
and second, 
\[
\frac{-a_y}{b_y} \frac{-a_x}{b_x+b_y} + \frac{-a_x}{b_x}\frac{-a_y}{b_x+b_y} = a_x a_y \left(\frac{\frac{1}{b_y} + \frac{1}{b_x}}{b_x+b_y}\right) = \frac{a_x a_y}{b_x b_y}.
\]
The construction of vectors with the desired inner products is similar to before:
\[
\ket{\overline \alpha_x^+} = -a_x \ket{\chi_x}, \quad \ket{\overline \alpha_x^-} = \ket{\chi_x}, \quad \ket{\overline\beta_x^+} = \ket{\chi_x}, \quad \ket{\overline \beta_x^-} = -a_x\ket{\chi_x}.
\]

We finally observe that the norms of the constructed solution agree with the conclusion of the theorem.
    \end{proof}


    Now that we have constructed a transducer that prepares the amplified state, we can additionally wonder how hard it is to reflect around it. In the circuit model, a typical approach is to run the amplification circuit in reverse, reflect around the original state, and then run the amplification circuit again. In the transducer picture, though, this would require analyzing the transduction complexity of the amplification transducer on all the states that are not in the two-dimensional subspace spanned by $\ket{\psi}$ and $\ket{\overline{\psi}}$ in the previous theorem, and this appears to be a tall order.
    Instead, we provide a different construction directly on the level of adversary-bound solutions, without relying on the intuitive approach from the circuit model, in the following theorem.

    \begin{theorem}[Reflection through amplified state]
        \label{thm:refl-amplified-state}
        Let $\D$ be a set, and $x \in \D$. We take the vectors $\ket{\psi_x^0},\ket{\psi_x^1} \in \mathcal{H}$, such that $\norm{\ket{\psi_x^0}}^2 + \norm{\ket{\psi_x^1}}^2 = 1$ and $\norm{\ket{\psi_x^1}} > 0$. Let
        \[\ket{\psi_x} = \ket{\psi_x^0}\ket{0} + \ket{\psi_x^1}\ket{1}, \qquad \text{and} \qquad \ket{\overline{\psi}_x} = \frac{\norm{\ket{\psi_x^1}}}{\norm{\ket{\psi_x^0}}} \ket{\psi_x^0}\ket{0} - \frac{\norm{\ket{\psi_x^0}}}{\norm{\ket{\psi_x^1}}} \ket{\psi_x^1}\ket{1}.\]
        We also let $|\widetilde{\psi}_x\rangle:= |\widetilde{\psi}_x^0\rangle\ket{0}$ and $|\widehat{\psi}_x\rangle := |\widehat{\psi}_x^1\rangle\ket{1}$ with $\braket{\widehat{\psi}_x^1}{\psi^1_x} = 0$, and we write $\varepsilon_x := \norm{\ket{\psi_x^1}}$.
        
        Suppose we have a feasible solution $\{\ket{v_x^{\pm}}, \ket{\overline{v}_x^{\pm}}, \ket{\widetilde{v}_x^{\pm}}, \ket{\widehat{v}_x^{\pm}}\}_{x \in \D}$ to the bidirectional adversary bound for the state-conversion problem
        \[P := \{(\ket{\psi_x},\ket{\psi_x},O_x), (\ket{\overline{\psi}_x},-\ket{\overline{\psi}_x},O_x),(|\widetilde{\psi}_x\rangle,-|\widetilde{\psi}_x\rangle,O_x),(|\widehat{\psi}_x\rangle,-|\widehat{\psi}_x\rangle,O_x)\}_{x \in \D}.\]
        Then, we can construct a feasible solution $\{\ket{(w_x^0)^{\pm}}, \ket{(w_x^1)^{\pm}}, \ket{\widetilde{w}_x^{\pm}}, \ket{\widehat{w}_x^{\pm}}\}_{x \in \D}$ to the state-conversion problem 
        \[P' := \left\{\left(\frac{\ket{\psi_x^0}\ket{0}}{\sqrt{1-\varepsilon_x^2}},-\frac{\ket{\psi_x^0}\ket{0}}{\sqrt{1-\varepsilon_x^2}},O_x\right), \left(\frac{\ket{\psi_x^1}\ket{1}}{\varepsilon_x}, \frac{\ket{\psi_x^1}\ket{1}}{\varepsilon_x}, O_x\right), (|\widetilde{\psi}_x\rangle, -|\widetilde{\psi}_x\rangle, O_x), (|\widehat{\psi}_x\rangle, -|\widehat{\psi}_x\rangle, O_x)\right\}_{x \in \D},\]
        such that
        \[\norm{\ket{(w_x^0)^{\pm}}}^2 = 0, \quad \norm{\ket{(w_x^1)^{\pm}}}^2 = \frac{\norm{\ket{v_x^{\pm}}}^2}{\varepsilon_x^2}, \quad \norm{\ket{\widetilde{w}_x^{\pm}}}^2 = 0, \quad \text{and} \quad \norm{\ket{\widehat{w}_x^{\pm}}}^2 = \norm{\ket{\widehat{v}_x^{\pm}}}^2.\]
    \end{theorem}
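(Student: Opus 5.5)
The plan is to write down an explicit feasible solution to the bidirectional adversary bound for $P'$ that reuses the given vectors for $P$, and then check every constraint pair by pair. We take
\[
\ket{(w_x^0)^{\pm}} := 0, \qquad \ket{(w_x^1)^{\pm}} := \frac{1}{\varepsilon_x}\ket{v_x^{\pm}}, \qquad \ket{\widetilde{w}_x^{\pm}} := 0, \qquad \ket{\widehat{w}_x^{\pm}} := \ket{\widehat{v}_x^{\pm}}.
\]
The norms in the statement then hold by inspection. The vectors $\ket{\overline{v}_x^{\pm}}$ and $\ket{\widetilde{v}_x^{\pm}}$ are not needed.

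Note that $P'$ is itself a reflection problem, with $\D_+'$ consisting of the inputs $\ket{\psi_x^1}\ket{1}/\varepsilon_x$ and $\D_-'$ consisting of the other three families. So for every pair of inputs $(s,t)$ of $P'$ we must check that
\[
\bra{w_s^+}(I-O_x^{\dagger}O_y)\ket{w_t^-} = \braket{\sigma_s}{\sigma_t} - \braket{\tau_s}{\tau_t}.
\]
The right-hand side is $0$ whenever $s,t$ are both in $\D_+'$ or both in $\D_-'$. It equals $2\braket{\sigma_s}{\sigma_t}$ when exactly one of them is in $\D_+'$.

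We sort the pairs as follows.
\begin{enumerate}
\item Pairs in which either vector is zero, i.e.\ one of the inputs is $\ket{\psi_x^0}\ket{0}/\sqrt{1-\varepsilon_x^2}$ or $|\widetilde{\psi}_x\rangle$. The left-hand side is $0$, so we need the right-hand side to vanish. If both inputs are in $\D_-'$, it vanishes automatically. If the other input is $\ket{\psi_y^1}\ket{1}/\varepsilon_y$, the two source states have orthogonal ancilla registers $\ket{0}$ and $\ket{1}$, so the right-hand side is again $0$.
\item Pairs $(\ket{\psi_x^1}\ket{1}/\varepsilon_x,\ \ket{\psi_y^1}\ket{1}/\varepsilon_y)$, both in $\D_+'$. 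The left-hand side is $\varepsilon_x^{-1}\varepsilon_y^{-1}\bra{v_x^+}(I-O_x^{\dagger}O_y)\ket{v_y^-}$. Since both $\ket{\psi_x}$ and $\ket{\psi_y}$ are $+$-inputs of $P$, feasibility for $P$ makes this $0$, as required.
\item Pairs $(|\widehat{\psi}_x\rangle, |\widehat{\psi}_y\rangle)$, both in $\D_-'$. These are also both $-$-inputs of $P$, so the left-hand side is $0$, as required.
\item The mixed pairs $(\ket{\psi_x^1}\ket{1}/\varepsilon_x,\ |\widehat{\psi}_y\rangle)$. Feasibility for $P$ gives
\[
\bra{v_x^+}(I-O_x^{\dagger}O_y)\ket{\widehat{v}_y^-} = \braket{\psi_x}{\widehat{\psi}_y} + \braket{\psi_x}{\widehat{\psi}_y} = 2\braket{\psi_x^1}{\widehat{\psi}_y^1},
\]
where the last step uses that $|\widehat{\psi}_y\rangle$ lives in the $\ket{1}$-branch. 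Dividing by $\varepsilon_x$ gives exactly $2\bra{\psi_x^1}\bra{1}\,|\widehat{\psi}_y\rangle/\varepsilon_x$, which is the required right-hand side.
\item The reversed mixed pairs $(|\widehat{\psi}_x\rangle,\ \ket{\psi_y^1}\ket{1}/\varepsilon_y)$, using $\bra{\widehat{v}_x^+}$ and $\ket{v_y^-}/\varepsilon_y$. These are handled symmetrically to the previous case.
\end{enumerate}

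The only step needing care is the mixed cases, specifically the identity $\braket{\psi_x}{\widehat{\psi}_y} = \braket{\psi_x^1}{\widehat{\psi}_y^1}$. Note that this holds across different $x,y$ purely because of the ancilla structure. The hypothesis $\braket{\widehat{\psi}_x^1}{\psi_x^1}=0$ is not needed for feasibility; it only ensures that $P'$ is a genuine reflection problem on consistent states. All remaining checks are bookkeeping of signs.
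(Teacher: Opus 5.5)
Your proposal is correct and is essentially the paper's proof: the paper makes the same choice $\ket{(w_x^0)^{\pm}}=\ket{\widetilde{w}_x^{\pm}}=0$, $\ket{(w_x^1)^{\pm}}=\ket{v_x^{\pm}}/\varepsilon_x$, $\ket{\widehat{w}_x^{\pm}}=\ket{\widehat{v}_x^{\pm}}$, and carries out the same entry-by-entry check, in which only the $(v,v)$, $(v,\widehat{v})$, $(\widehat{v},v)$ and $(\widehat{v},\widehat{v})$ pairings are nontrivial. One small refinement to your closing remark: the orthogonality $\braket{\widehat{\psi}_x^1}{\psi_x^1}=0$ is not just a consistency assumption on $P'$, since it is already forced by feasibility of $P$ at $x=y$, where $I-O_x^{\dagger}O_x=0$ gives $2\braket{\psi_x}{\widehat{\psi}_x}=0$.
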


    \begin{proof}
        We define the vectors
        \[\ket{(w_x^0)^{\pm}} := 0, \quad \ket{(w_x^1)^{\pm}} := \frac{\ket{v_x^{\pm}}}{\varepsilon_x}, \quad \ket{\widetilde{w}_x^{\pm}} := 0, \quad \text{and} \quad \ket{\widehat{w}_x^{\pm}} := \ket{\widehat{v}_x^{\pm}}.\]
        The claims on the norms are immediate, so it remains to prove that this is a feasible solution to the bidirectional adversary bound for $P'$. For any $x,y \in \D$, we can now compute the inner products of $\bra{\cdot}(I - O_x^{\dagger}O_y)\ket{\cdot}$ as in the following table.
        \begin{center}
            \begin{tabular}{r|cccc}
                & $\ket{(w_y^0)^-}$ & $\ket{(w_y^1)^-}$ & $\ket{\widetilde{w}_y^-}$ & $\ket{\widehat{w}_y^-}$ \\\hline
                $\bra{(w_x^0)^+}$ & $0$ & $0$ & $0$ & $0$ \\
                $\bra{(w_x^1)^+}$ & $0$ & $A_1$ & $0$ & $A_2$ \\
                $\bra{\widetilde{w}_x^+}$ & $0$ & $0$ & $0$ & $0$ \\
                $\bra{\widehat{w}_x^+}$ & $0$ & $A_3$ & $0$ & $A_4$
            \end{tabular}
        \end{center}
        We compute the four remaining entries as
        \begin{align*}
            A_1 &= \frac{\bra{v_x^+}(I - O_x^{\dagger}O_y)\ket{v_y^-}}{\varepsilon_x\varepsilon_y} = \frac{\braket{\psi_x}{\psi_y} - \braket{\psi_x}{\psi_y}}{\varepsilon_x\varepsilon_y} = 0 = \frac{\braket{\psi_x^1}{\psi_y^1} - \braket{\psi_x^1}{\psi_y^1}}{\varepsilon_x\varepsilon_y}, \\
            A_2 &= \frac{\bra{v_x^+}(I - O_x^{\dagger}O_y)\ket{\widehat{v}_y^-}}{\varepsilon_x} = \frac{2\langle\psi_x|\widehat{\psi}_y\rangle}{\varepsilon_x} = \frac{2\langle\psi_x^1|\widehat{\psi}_y\rangle}{\varepsilon_x}, \\
            A_3 &= \frac{\bra{\widehat{v}_x^+}(I - O_x^{\dagger}O_y)\ket{v_y^-}}{\varepsilon_y} = \frac{2\langle\widehat{\psi}_x|\psi_y\rangle}{\varepsilon_y} = \frac{2\langle\widehat{\psi}_x|\psi_y^1\rangle}{\varepsilon_y}, \\
            A_4 &= \bra{\widehat{v}_x^+}(I - O_x^{\dagger}O_y)\ket{\widehat{v}_y^-} = \langle\widehat{\psi}_x|\widehat{\psi}_y\rangle - \langle\widehat{\psi}_x|\widehat{\psi}_y\rangle.\qedhere
        \end{align*}
    \end{proof}

    As before, we can use \Cref{thm:unidirectional-bidirectional,thm:adv-to-transducer,thm:transducer-to-adv} to convert this feasible solution to the bidirectional adversary bound into a feasible solution to the unidirectional adversary bound with inverse queries, and then into a canonical transducer. We formalize this in the following corollary.

    \begin{corollary}[Amplitude amplification and reflection transducer]
        \label{cor:ampl-ampl}
        Let $\mathcal{V}$ be a Hilbert space and $\ket{\psi_0}, \ket{\psi_1} \in \mathcal{V}$, such that $\norm{\ket{\psi_0}}^2 + \norm{\ket{\psi_1}}^2 = 1$ and $\norm{\ket{\psi_1}} > 0$. We write
        \[\ket{\psi} = \ket{\psi_0}\ket{0} + \ket{\psi_1}\ket{1}, \qquad \text{and} \qquad \ket{\overline{\psi}} = \frac{\norm{\ket{\psi_1}}}{\norm{\ket{\psi_0}}}\ket{\psi_0}\ket{0} - \frac{\norm{\ket{\psi_0}}}{\norm{\ket{\psi_1}}}\ket{\psi_1}\ket{1}.\]
        Suppose we have a bidirectional canonical transducer $R$ that acts as a reflection through $\ket{\psi}$ on $\mathcal{V} \otimes \C^2$. Then, we can implement bidirectional canonical transducers $C$ and $U$ that act as
        \begin{align*}
            \ket{\psi} \overset{C}{\rightsquigarrow} \frac{\ket{\psi_1}\ket{1}}{\norm{\ket{\psi_1}}}, & \qquad \text{with complexity} \qquad O\left(\frac{W(R,\ket{\psi}) + \norm{\ket{\psi_0}}^2W(R,\ket{\overline{\psi}})}{\norm{\ket{\psi_1}}}\right), \\
            \frac{\ket{\psi_1}}{\norm{\ket{\psi_1}}} \overset{U}{\rightsquigarrow} \frac{\ket{\psi_1}}{\norm{\ket{\psi_1}}}, & \qquad \text{with complexity} \qquad O\left(\frac{W(R,\ket{\psi})}{\norm{\ket{\psi_1}}^2}\right), \\
            \ket{\perp} \overset{U}{\rightsquigarrow} -\ket{\perp}, & \qquad \text{with complexity} \qquad O\left(W(R, \ket{\perp}\ket{1})\right),
        \end{align*}
        for all $\ket{\perp} \in \mathcal{V}$ satisfying $\braket{\perp}{\psi_1} = 0$.
    \end{corollary}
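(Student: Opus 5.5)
The plan is to pass through the adversary-bound picture, in three steps: extract an adversary solution from $R$, feed it into \Cref{thm:ampl-ampl-adv} and \Cref{thm:refl-amplified-state}, and convert the outputs back into transducers.

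\textbf{Step 1: from $R$ to a bidirectional solution.} Write $R(O_x) = U_R(I \oplus ((O_x\oplus O_x^\dagger)\otimes I))$. By \Cref{thm:transducer-to-adv}, applied with the oracle set $\{O_x\oplus O_x^\dagger\}$, the catalysts of $R$ form a feasible solution to $\overrightarrow{\ADV}$ for the reflection problem with this doubled oracle. Here the domain is indexed by pairs $(x,\ket{\phi})$ with $\ket{\phi}\in\{\ket{\psi_x},\ket{\overline{\psi}_x}\}\cup\{\ket{\perp}\ket{1}\}\cup\{\ket{\widetilde\psi}\ket{0}\}$. By \Cref{thm:unidirectional-bidirectional} (read in reverse, $w^+=w^-=w$), this gives a feasible solution to $\overleftrightarrow{\ADV}$ for the same problem with oracle $O_x$, with $\norm{\ket{v^\pm}}^2 = W(R,\cdot)$.

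Since only $\norm{\ket{w}}^2$ of each catalyst matters, I take minimal catalysts. By the linearity in \Cref{thm:transducer-catalyst}, the catalyst of $\ket{\perp}\ket{1}$ and of $\ket{\widetilde\psi}\ket{0}$ is well defined, and \Cref{thm:transducer-to-adv} applies on the whole family simultaneously. This is what supplies the $\widehat v,\widetilde v$ vectors required in \Cref{thm:refl-amplified-state}.

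\textbf{Step 2: the construction transducer $C$.} Apply \Cref{thm:ampl-ampl-adv} to obtain $\ket{w_x^\pm}$ with
\[
\norm{\ket{w_x^\pm}}^2 = \frac{(1+2\norm{\ket{\psi_1}})W(R,\ket{\psi}) + \norm{\ket{\psi_0}}^2 W(R,\ket{\overline\psi})}{4\norm{\ket{\psi_1}}}.
\]
Then convert back: use \Cref{thm:unidirectional-bidirectional} to get a unidirectional solution for $O_x\oplus O_x^\dagger$ with norms $\max\{\norm{\ket{w^+}}^2,\norm{\ket{w^-}}^2\}$ (via direct sum), and \Cref{thm:adv-to-transducer} to get a bidirectional canonical transducer $C$. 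This transducer maps $\ket{\psi}$ to $\ket{\psi_1}\ket{1}/\norm{\ket{\psi_1}}$ with complexity of the stated order, since $1+2\norm{\ket{\psi_1}}\le 3$.

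\textbf{Step 3: the reflection transducer $U$.} Apply \Cref{thm:refl-amplified-state}, with $\ket{\widehat\psi}$ ranging over all $\ket{\perp}\ket{1}$ with $\ket{\perp}\perp\ket{\psi_1}$. The resulting norms are $W(R,\ket{\psi})/\norm{\ket{\psi_1}}^2$ on the good state, $0$ on the $\ket{0}$-branch vectors, and $W(R,\ket{\perp}\ket{1})$ on $\ket{\perp}\ket{1}$. Converting as in Step 2 yields $U$.

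\textbf{Main obstacle.} The constructions must be input-dependent in the per-input costs and must not take a maximum over inputs. The adversary bound takes a max over the domain, but \Cref{thm:adv-to-transducer} guarantees catalyst $\ket{w_x}$ per input, so per-input norms pass through. The delicate point is the bidirectional-to-unidirectional conversion in \Cref{thm:unidirectional-bidirectional}, whose proof we do not see. It must preserve per-input norms up to a constant, e.g.\ by taking $\ket{w^+}\oplus\ket{w^-}$ suitably arranged. I would verify this directly: set $\ket{u_x} = \ket{w_x^+}\oplus\ket{w_x^-}$, choose the embedding so that the cross terms of $O_x\oplus O_x^\dagger$ reproduce the bidirectional constraint, and then apply the rebalancing of \Cref{lem:rebalancing} only where needed.

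Identifying $\ket{\psi_1}/\norm{\ket{\psi_1}}$ with $\ket{\psi_1}\ket{1}/\norm{\ket{\psi_1}}$ is a trivial relabeling of the public space.
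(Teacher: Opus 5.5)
Your proposal is correct and follows essentially the same route as the paper. You extract the catalysts of $R$ via \Cref{thm:transducer-to-adv}, pass to a bidirectional solution with \Cref{thm:unidirectional-bidirectional}, apply \Cref{thm:ampl-ampl-adv} (respectively \Cref{thm:refl-amplified-state}), and convert back with \Cref{thm:unidirectional-bidirectional,thm:adv-to-transducer}. For $U$ you are more explicit than the paper, which only says it follows along similar lines: you put the states $\ket{\perp}\ket{1}$ into the domain to supply the vectors $\ket{\widehat{v}^{\pm}}$.

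One correction to how you describe the per-input conversions: they are neither $w^+=w^-=w$ nor a plain direct sum.
\begin{itemize}
\item Unidirectional to bidirectional: for $\ket{w_x}=\ket{a_x}\oplus\ket{b_x}$, take $\ket{w_x^{\pm}}=\ket{a_x}\oplus(\pm O_x^{\dagger}\ket{b_x})$. This gives $\norm{\ket{w_x^{\pm}}}^2=\norm{\ket{w_x}}^2$.
\item Bidirectional to unidirectional: take $\ket{u_x}=\tfrac12(\ket{w_x^+}+\ket{w_x^-})\oplus\tfrac12 O_x(\ket{w_x^+}-\ket{w_x^-})$. This uses that $\bra{w_x^-}(I-O_x^{\dagger}O_y)\ket{w_y^+}$ also equals the right-hand side, by Hermiticity, and gives $\norm{\ket{u_x}}^2=\tfrac12(\norm{\ket{w_x^+}}^2+\norm{\ket{w_x^-}}^2)$.
\end{itemize}
These are exactly the per-input norm bounds your argument needs.
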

    
    \begin{proof}

        Let $\D$ be the implicit set of input labels. Thus, for every $x \in \D$, we have states $\ket{\psi_x^0},\ket{\psi_x^1} \in \mathcal{V}$ such that $\norm{\ket{\psi_x^0}}^2 + \norm{\ket{\psi_x^1}}^2 = 1$ and $\norm{\ket{\psi_x^1}} > 0$, and we have an oracle $O_x \in \mathcal{U}(\mathcal{H})$. We write
        \[\ket{\psi_x} = \ket{\psi_x^0}\ket{0} + \ket{\psi_x^1}\ket{1}, \qquad \text{and} \qquad \ket{\overline{\psi}_x} = \frac{\norm{\ket{\psi_x^1}}}{\norm{\ket{\psi_x^0}}}\ket{\psi_x^0}\ket{0} - \frac{\norm{\ket{\psi_x^0}}}{\norm{\ket{\psi_x^1}}}\ket{\psi_x^1}\ket{1}.\]
        Now, by assumption, we have a canonical transducer $R$, such that for all $x \in \D$,
        \[\ket{\psi_x} \overset{R(O_x \oplus O_x^{\dagger})}{\rightsquigarrow} \ket{\psi_x}, \qquad \text{and} \qquad \ket{\overline{\psi}_x} \overset{R(O_x \oplus O_x^{\dagger})}{\rightsquigarrow} -\ket{\overline{\psi}_x}.\]

        We now use \Cref{thm:transducer-to-adv} to turn this into a feasible solution $\{\ket{w_x}, \ket{\overline{w}_x}\}_{x \in \D}$ to the unidirectional adversary bound for the state-conversion problem
        \[\{(\ket{\psi_x},\ket{\psi_x},O_x \oplus O_x^{\dagger}), (\ket{\overline{\psi}_x}, -\ket{\overline{\psi}_x}, O_x \oplus O_x^{\dagger})\}_{x \in \D}.\]
        Next, we use \Cref{thm:unidirectional-bidirectional} to turn this into a feasible solution $\{\ket{w_x^{\pm}}, \ket{\overline{w}_x^{\pm}}\}_{x \in \D}$ for the bidirectional adversary bound of the state-conversion problem
        \[\{(\ket{\psi_x}, \ket{\psi_x}, O_x), (\ket{\overline{\psi}_x}, -\ket{\overline{\psi}_x}, O_x)\}_{x \in \D}.\]
        Then, we use \Cref{thm:ampl-ampl-adv} to construct a feasible solution $\{\ket{v_x^{\pm}}, \ket{\overline{v}_x^{\pm}}\}_{x \in \D}$ for the bidirectional adversary bound of the state-conversion problem
        \[\left\{\left(\ket{\psi_x}, \frac{\ket{\psi_x^1}\ket{1}}{\norm{\ket{\psi_x^1}}}, O_x\right), \left(\ket{\overline{\psi}_x}, -\frac{\ket{\psi_x^0}\ket{0}}{\norm{\ket{\psi_x^0}}}, O_x\right)\right\}.\]
        Now, we consider the simpler state-conversion problem where we only consider the states $\ket{\psi_x}$, and not the states $\ket{\overline{\psi}_x}$. We now obtain the feasible solution $\{\ket{v_x^{\pm}}\}_{x \in \D}$ for its bidirectional adversary bound. Moreover, we can remove the extra register $\ket{1}$ from the output, since it does not impact the Gram-matrix in the constraints of the bidirectional adversary bound, and so we obtain that $\{\ket{v_x^{\pm}}\}_{x \in \D}$ is a feasible solution to the bidirectional adversary bound of the state-conversion problem
        \[\left\{\left(\ket{\psi_x}, \frac{\ket{\psi_x^1}}{\norm{\ket{\psi_x^1}}}, O_x\right)\right\}_{x \in \D}.\]
        We now use \Cref{thm:unidirectional-bidirectional} to convert this into a feasible solution $\{\ket{v_x}\}_{x \in \D}$ to the unidirectional adversary bound for the state-conversion problem
        \[\left\{\left(\ket{\psi_x}, \frac{\ket{\psi_x^1}}{\norm{\ket{\psi_x^1}}}, O_x \oplus O_x^{\dagger}\right)\right\}_{x \in \D}.\]
        Finally, we use \Cref{thm:adv-to-transducer} to turn this feasible solution into a bidirectional canonical transducer $C$ that acts as claimed in the theorem statement. The complexity statement follows from checking the resulting norms, as for all $x \in \D$,
        \begin{align*}
            W(C,\ket{\psi_x}) &\leq \norm{\ket{v_x}}^2 \leq \max_{\pm} \norm{\ket{v_x^{\pm}}}^2 \in O\left(\max_{\pm} \frac{\norm{\ket{w_x^{\pm}}}^2 + \norm{\ket{\psi_x^0}}^2\norm{\ket{\overline{w}_x^{\pm}}}^2}{\norm{\ket{\psi_x^1}}}\right) \\
            &\subseteq O\left(\frac{\norm{\ket{w_x}}^2 + \norm{\ket{\psi_x^0}}^2\norm{\ket{\overline{w}_x}}^2}{\norm{\ket{\psi_x^1}}}\right) \subseteq O\left(\frac{W(R,\ket{\psi_x}) + \norm{\ket{\psi_x^0}}^2W(R, \ket{\overline{\psi}_x})}{\norm{\ket{\psi_x^1}}}\right).
        \end{align*}
        The construction for $U$ follows along similar lines, but relying on \Cref{thm:refl-amplified-state} instead of \Cref{thm:ampl-ampl-adv}.
    \end{proof}

    \subsection{Annealing transducer from two state-reflection transducers}
    \label{subsec:ampl-ampl-two-state-rotation}

    
    We can interpret the transducer we constructed in the previous section as Grover's iterate, in the setting where we assume that one of the two reflections is trivial, i.e., requires no queries and can therefore be implemented for free. In this section, we generalize the setting and consider a situation where both reflections are costly.

    To that end, suppose that we are given a state $\ket{\varphi}$, and we want to turn it into $\ket{\chi}$, given access to reflections through both states. Assume that $\braket{\varphi}{\chi} \neq 0$, i.e., they have non-zero overlap. We define
    \[\ket{\overline{\chi}} = \frac{\ket{\varphi} - \braket{\chi}{\varphi}\ket{\chi}}{\norm{\ket{\varphi} - \braket{\chi}{\varphi}\ket{\chi}}}, \qquad \text{and} \qquad \ket{\overline{\varphi}} = -\frac{\ket{\chi} - \braket{\varphi}{\chi}\ket{\varphi}}{\norm{\ket{\chi} - \braket{\varphi}{\chi}\ket{\varphi}}} \cdot \frac{\braket{\chi}{\varphi}}{|\braket{\chi}{\varphi}|}.\]
    See \Cref{fig:two-refl-ampl-ampl} for a pictorial representation of these states.
    Our goal is to construct a transducer that maps $\ket{\varphi}$ to $\ket{\chi}$.

    \begin{figure}[!ht]
        \centering
        \begin{tikzpicture}
            \draw (0,0) circle[radius=1];
            \draw[->] (0,0) to (0,1) node[above] {$\ket{\chi}$};
            \draw[->] (0,0) to (1,0) node[right] {$\ket{\overline{\chi}}$};
            \draw[->] (0,0) to ({cos(30)},{sin(30)}) node[right] {$\ket{\varphi}$};
            \draw[->] (0,0) to ({cos(-60)},{sin(-60)}) node[below] {$\ket{\overline{\varphi}}$};
        \end{tikzpicture}
        \caption{Pictorial overview of the states. We assume to start in $\ket{\varphi}$, and we wish to prepare $\ket{\chi}$.}
        \label{fig:two-refl-ampl-ampl}
    \end{figure}


    \begin{theorem}
        \label{thm:two-state-ampl}
        Let $\ket{\varphi},\ket{\chi} \in \mathcal{H}$, and suppose that we have bidirectional canonical transducers $R_{\ket{\varphi}}$ and $R_{\ket{\chi}}$ that reflect through $\ket{\varphi}$ and $\ket{\chi}$, respectively. Then, we can construct a bidirectional canonical transducer $U$ that maps $\ket{\varphi}$ to $\ket{\chi}$, with transduction complexity
        \begin{align*}
            W(U,\ket{\varphi}) &\in O\left(\frac{W(R_{\ket{\chi}}, \ket{\chi}) + W(R_{\ket{\varphi}},\ket{\varphi}) + (1 - |\braket{\chi}{\varphi}|^4)W(R_{\ket{\chi}},\ket{\overline{\chi}}) + (1 - |\braket{\chi}{\varphi}|^2)W(R_{\ket{\varphi}},\ket{\overline{\varphi}})}{|\braket{\chi}{\varphi}|}\right) \\
            &\subseteq O\left(\frac{W(R_{\ket{\chi}},\ket{\chi}) + W(R_{\ket{\chi}},\ket{\overline{\chi}}) + W(R_{\ket{\varphi}},\ket{\varphi}) + W(R_{\ket{\varphi}},\ket{\overline{\varphi}})}{|\braket{\chi}{\varphi}|}\right).
        \end{align*}
    \end{theorem}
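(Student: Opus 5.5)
The plan is to work at the level of feasible solutions to the bidirectional adversary bound, in the same way as the proof of \Cref{thm:ampl-ampl-adv}, and only at the end convert back to a canonical transducer. The conversion chain is the one used in the proof of \Cref{cor:ampl-ampl}: \Cref{thm:transducer-to-adv}, then \Cref{thm:unidirectional-bidirectional}, then \Cref{thm:adv-to-transducer}.

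\textbf{Step 1: two families of reflection vectors.} Apply \Cref{thm:transducer-to-adv} and \Cref{thm:unidirectional-bidirectional} to $R_{\ket{\varphi}}$ and to $R_{\ket{\chi}}$. This gives bidirectional solutions $\{\ket{v_x^{\pm}},\ket{\overline v_x^{\pm}}\}$ for the reflection problem $\{(\ket{\varphi_x},\ket{\varphi_x}),(\ket{\overline\varphi_x},-\ket{\overline\varphi_x})\}$. It also gives $\{\ket{u_x^{\pm}},\ket{\overline u_x^{\pm}}\}$ for the reflection problem $\{(\ket{\chi_x},\ket{\chi_x}),(\ket{\overline\chi_x},-\ket{\overline\chi_x})\}$. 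Their squared norms are bounded by the corresponding transduction complexities. As in \eqref{eq:solprops}, these vectors satisfy, for instance, $\bra{v_x^+}(I-O_x^\dagger O_y)\ket{\overline v_y^-}=2\braket{\varphi_x}{\overline\varphi_y}$ and $\bra{u_x^+}(I-O_x^\dagger O_y)\ket{\overline u_y^-}=2\braket{\chi_x}{\overline\chi_y}$, with all diagonal-type blocks equal to zero.

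\textbf{Step 2: planar parametrization.} Write $c_x:=|\braket{\chi_x}{\varphi_x}|$ and $s_x:=\sqrt{1-c_x^2}$, and fix phases so that $\ket{\varphi_x}=c_x\ket{\chi_x}+s_x\ket{\overline\chi_x}$ and $\ket{\overline\varphi_x}=s_x\ket{\chi_x}-c_x\ket{\overline\chi_x}$ (up to phases). Every cross inner product then reduces to the Gram data of $\{\ket{\chi_x},\ket{\overline\chi_y}\}$ together with $c_x,s_x$. This mirrors how the earlier proof reduced everything to $\braket{\psi^0_x}{\psi^0_y}$ and $\braket{\psi^1_x}{\psi^1_y}$.

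\textbf{Step 3: ansatz and inner-product system.} Take the target catalyst for the problem $\{(\ket{\varphi_x},\ket{\chi_x},O_x)\}$ to be a direct sum of tensor products
\[
\ket{w_x^{+}} = \ket{u_x^+}\otimes\ket{\alpha_x^+}\oplus\ket{\overline u_x^+}\otimes\ket{\beta_x^+}\oplus\ket{v_x^+}\otimes\ket{\gamma_x^+}\oplus\ket{\overline v_x^+}\otimes\ket{\delta_x^+},
\]
with the analogous cross-paired form for $\ket{w_x^-}$, as in the proof of \Cref{thm:ampl-ampl-adv}. The right-hand side $\braket{\varphi_x}{\varphi_y}-\braket{\chi_x}{\chi_y}$ is expanded in the planar coordinates. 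The ansatz then reduces to a small linear system for the Gram entries $\braket{\alpha_x^+}{\alpha_y^-}$, and so on. I expect the solutions to be rational functions of $c_x,c_y$ of the form $(\text{poly})/(c_x+c_y)$. Such functions are realizable by \Cref{lem:ratio} with $b_x=c_x$, using scalar multiples of $\ket{\chi_x}=e^{-c_xt}$. The norms then pick up a factor $1/c_x$, and each reflection vector is weighted by the coefficient of the component it acts on. This should produce the weights $(1-c^4)$ on $W(R_{\ket\chi},\ket{\overline\chi})$ and $(1-c^2)=s^2$ on $W(R_{\ket\varphi},\ket{\overline\varphi})$.

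\textbf{Step 4 and the expected obstacle.} The main obstacle is Step 3. Unlike the base case, the $\chi$-reflection is not a free projector, so cross terms between the $u$-block and the $v$-block do not vanish. These terms give expressions such as $\bra{u_x^+}(I-O_x^\dagger O_y)\ket{v_y^-}$, which are not controlled by either reflection solution alone.

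My first attempt is to place the $u$-parts and $v$-parts in orthogonal direct summands, so that these cross terms never appear. All required inner products would then have to be produced by $\chi$-terms and $\varphi$-terms separately, splitting the target Gram matrix as a sum of a $\chi$-Gram and a $\varphi$-Gram. This is in the spirit of the two-reflection Grover iterate $R_{\ket\chi}R_{\ket\varphi}$, which rotates by a fixed angle in the plane.

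If the resulting system has no PSD-realizable solution, the fallback is a two-stage approach. First, use $R_{\ket\varphi}$ to simulate a "marked" structure: regard $\ket{\varphi}$ as $c\ket{\chi}\ket{1}+s\ket{\overline\chi}\ket{0}$ via a flag implemented by $R_{\ket\chi}$. Second, apply \Cref{cor:ampl-ampl} to this flagged state, and finally uncompute the flag by sequential composition (\Cref{thm:sequential-composition}). The cost of the flag, on the $\overline\chi$ component, carries the $(1-c^4)$ factor.

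The second displayed bound in the statement then follows trivially from the first, since $1-|\braket{\chi}{\varphi}|^4\le1$ and $1-|\braket{\chi}{\varphi}|^2\le 1$.
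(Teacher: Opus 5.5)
Your main route is different from the paper's, and it does work. As written, though, Step~3 is only an expectation, so the proposal is not yet a proof. It becomes one once the explicit solution below is written down.

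The obstacle you raise in Step~4 does not arise. Your displayed ansatz already puts the $R_{\ket{\chi}}$-witnesses and the $R_{\ket{\varphi}}$-witnesses in orthogonal summands, so no term $\bra{u_x^+}(I-O_x^{\dagger}O_y)\ket{v_y^-}$ ever appears. The only question is whether the target Gram matrix splits, and it does.

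Absorb the phase of $\braket{\chi_x}{\varphi_x}$ into $\ket{\chi_x}$; the paper's $U$ also outputs $\ket{\chi}$ only up to this phase. Write $c_x=|\braket{\chi_x}{\varphi_x}|$ and $s_x=\sqrt{1-c_x^2}$, so that $\ket{\varphi_x}=c_x\ket{\chi_x}+s_x\ket{\overline\chi_x}$ and $\ket{\overline\varphi_x}=s_x\ket{\chi_x}-c_x\ket{\overline\chi_x}$. Let $\ket{e_x}\colon t\mapsto e^{-c_xt}$ be the vectors of \Cref{lem:ratio}, and take
\begin{align*}
\sqrt{2}\ket{w_x^+}&=\ket{u_x^+}\otimes\ket{e_x}\oplus s_x\ket{\overline{u}_x^+}\otimes\ket{e_x}\oplus\bigl(-\ket{v_x^+}\otimes\ket{e_x}\bigr)\oplus\bigl(-s_x\ket{\overline{v}_x^+}\otimes\ket{e_x}\bigr),\\
\sqrt{2}\ket{w_x^-}&=s_x\ket{\overline{u}_x^-}\otimes\ket{e_x}\oplus\ket{u_x^-}\otimes\ket{e_x}\oplus s_x\ket{\overline{v}_x^-}\otimes\ket{e_x}\oplus\ket{v_x^-}\otimes\ket{e_x}.
\end{align*}
Each reflection constraint equals twice the corresponding cross inner product. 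With that, the adversary constraint becomes
\[\braket{\varphi_x}{\varphi_y}-\braket{\chi_x}{\chi_y}=\frac{s_y\braket{\chi_x}{\overline\chi_y}+s_x\braket{\overline\chi_x}{\chi_y}-s_y\braket{\varphi_x}{\overline\varphi_y}-s_x\braket{\overline\varphi_x}{\varphi_y}}{c_x+c_y}.\]
This holds for all $x,y$. To check it, expand both sides in the four Gram blocks of $\{\ket{\chi},\ket{\overline\chi}\}$ and compare coefficients; for $\braket{\chi_x}{\chi_y}$, for example, it reduces to $(c_xc_y-1)(c_x+c_y)=-(c_xs_y^2+c_ys_x^2)$.

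Since $\norm{\ket{e_x}}^2=1/(2c_x)$, you get
\[\norm{\ket{w_x^{\pm}}}^2=\frac{1}{4c_x}\bigl(\norm{\ket{u_x^{\pm}}}^2+s_x^2\norm{\ket{\overline{u}_x^{\pm}}}^2+\norm{\ket{v_x^{\pm}}}^2+s_x^2\norm{\ket{\overline{v}_x^{\pm}}}^2\bigr).\]
With $c=c_x$, the weight on both $W(R_{\ket{\chi}},\ket{\overline\chi})$ and $W(R_{\ket{\varphi}},\ket{\overline\varphi})$ is therefore $1-c^2$, not $1-c^4$ as you guessed. This still gives the stated bound, because $1-c^2\le 1-c^4\le 2(1-c^2)$. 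Finish with \Cref{thm:unidirectional-bidirectional,thm:adv-to-transducer}, as in \Cref{cor:ampl-ampl}.

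The paper instead solves no new system. It builds $A=(H\otimes I)(I\oplus(-R_{\ket{\chi}}))(H\otimes I)$. This maps $\ket{\varphi}\ket{0}$ to $\ket{\psi}:=\braket{\overline\chi}{\varphi}\ket{\overline\chi}\ket{0}+\braket{\chi}{\varphi}\ket{\chi}\ket{1}$, and maps $\ket{\overline\varphi}\ket{0}$ to the companion state $\ket{\overline\psi}$ of \Cref{cor:ampl-ampl}. It then reflects through $\ket{\psi}$ with $AR_{\ket{\varphi}}A^{\dagger}$, using \Cref{thm:inverse-transducer,thm:sequential-composition}. Finally it applies \Cref{cor:ampl-ampl} to get $C$ and sets $U=CA$.

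The factor $1-c^4=(1-c^2)(1+c^2)$ comes from $R_{\ket{\chi}}$ acting on a $\ket{\overline\chi}$-component in two places:
\begin{itemize}
\item inside the cost on $\ket{\psi}$, with weight $1-c^2$;
\item inside the cost on $\ket{\overline\psi}$, with weight $c^2$, multiplied by the factor $1-c^2$ that \Cref{cor:ampl-ampl} puts on the $\ket{\overline\psi}$ term.
\end{itemize}

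Your fallback is this route. What your sketch leaves out is the conjugated reflection $AR_{\ket{\varphi}}A^{\dagger}$ and its cost on $\ket{\overline\psi}=A\ket{\overline\varphi}\ket{0}$; that is where $W(R_{\ket{\varphi}},\ket{\overline\varphi})$ enters. No final uncomputation is needed, since the amplified state is already $\ket{\chi}\ket{1}$.

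The paper's route is modular and reuses \Cref{thm:ampl-ampl-adv} as a black box. Your direct route is self-contained, gives an explicit catalyst, and achieves the same complexity up to a factor~$2$.
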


    \begin{proof}
        First, we consider the operation $A := (H \otimes I)(I \oplus (-R_{\ket{\chi}}))(H \otimes I)$, and we observe that it acts as
        \begin{align*}
            A : \begin{bmatrix}
                \ket{\varphi} \\ 0
            \end{bmatrix} \overset{H \otimes I}{\mapsto} \frac{1}{\sqrt{2}}\begin{bmatrix}
                \ket{\varphi} \\
                \ket{\varphi}
            \end{bmatrix} &\overset{I \oplus (-R_{\ket{\chi}})}{\mapsto} \frac{1}{\sqrt{2}} \begin{bmatrix}
                \braket{\chi}{\varphi}\ket{\chi} + \braket{\overline{\chi}}{\varphi}\ket{\overline{\chi}} \\
                -\braket{\chi}{\varphi}\ket{\chi} + \braket{\overline{\chi}}{\varphi}\ket{\overline{\chi}}
            \end{bmatrix} \overset{H \otimes I}{\mapsto} \begin{bmatrix}
                \braket{\overline{\chi}}{\varphi}\ket{\overline{\chi}} \\
                \braket{\chi}{\varphi}\ket{\chi}
            \end{bmatrix}, \\
            A : \begin{bmatrix}
                \ket{\overline{\varphi}} \\ 0
            \end{bmatrix} \overset{H \otimes I}{\mapsto} \frac{1}{\sqrt{2}}\begin{bmatrix}
                \ket{\overline{\varphi}} \\
                \ket{\overline{\varphi}}
            \end{bmatrix} &\overset{I \oplus (-R_{\ket{\chi}})}{\mapsto} \frac{1}{\sqrt{2}} \begin{bmatrix}
                \braket{\chi}{\overline{\varphi}}\ket{\chi} + \braket{\overline{\chi}}{\overline{\varphi}}\ket{\overline{\chi}} \\
                -\braket{\chi}{\overline{\varphi}}\ket{\chi} + \braket{\overline{\chi}}{\overline{\varphi}}\ket{\overline{\chi}}
            \end{bmatrix} \overset{H \otimes I}{\mapsto} \begin{bmatrix}
                \braket{\overline{\chi}}{\overline{\varphi}}\ket{\overline{\chi}} \\
                \braket{\chi}{\overline{\varphi}}\ket{\chi}
            \end{bmatrix}.
        \end{align*}
        We write $\ket{\psi_0} = \braket{\overline{\chi}}{\varphi}\ket{\overline{\chi}}$ and $\ket{\psi_1} = \braket{\chi}{\varphi}\ket{\chi}$, and we observe that $\norm{\ket{\psi_1}} = |\braket{\chi}{\varphi}|$ and $\norm{\ket{\psi_0}} = \sqrt{1 - |\braket{\chi}{\varphi}|^2}$. Then, plugging in the definitions of $\ket{\overline{\varphi}}$ yields
        \[\braket{\chi}{\overline{\varphi}}\ket{\chi} = -\frac{1 - |\braket{\varphi}{\chi}|^2}{\norm{\ket{\chi} - \braket{\varphi}{\chi}\ket{\varphi}}} \cdot \frac{\braket{\chi}{\varphi}}{|\braket{\chi}{\varphi}|}\ket{\chi} = -\frac{\norm{\ket{\psi_0}}}{\norm{\ket{\psi_1}}}\ket{\psi_1},\]
        and similarly,
        \[\braket{\overline{\chi}}{\overline{\varphi}}\ket{\overline{\chi}} = \frac{\braket{\varphi}{\chi}\braket{\overline{\chi}}{\varphi}}{\norm{\ket{\chi} - \braket{\varphi}{\chi}\ket{\varphi}}} \cdot \frac{\braket{\chi}{\varphi}}{|\braket{\chi}{\varphi}|}\ket{\overline{\chi}} = \frac{|\braket{\chi}{\varphi}|}{\sqrt{1 - |\braket{\chi}{\varphi}|^2}} \cdot \braket{\overline{\chi}}{\varphi}\ket{\overline{\chi}} = \frac{\norm{\ket{\psi_1}}}{\norm{\ket{\psi_0}}} \ket{\psi_0}.\]
        Thus, $A$ implements the mapping
        \[A : \ket{\varphi}\ket{0} \mapsto \underbrace{\ket{\psi_0}\ket{0} + \ket{\psi_1}\ket{1}}_{=: \ket{\psi}}, \qquad \text{and} \qquad A : \ket{\overline{\varphi}}\ket{0} \mapsto \underbrace{\frac{\norm{\ket{\psi_1}}}{\norm{\ket{\psi_0}}} \ket{\psi_0}\ket{0} - \frac{\norm{\ket{\psi_0}}}{\norm{\ket{\psi_1}}}\ket{\psi_1}\ket{1}}_{=: \ket{\overline{\psi}}}.\]
        
        We observe that $A$ is the sequential composition of $H \otimes I$, $I \oplus (-R_{\ket{\chi}})$ and $H \otimes I$, so we use \Cref{thm:sequential-composition} to compute the transduction complexity of $A$. Since $H \otimes I$ is an input-independent unitary operation, its transduction complexity is $0$, and so it remains to analyze the transduction complexity of $I \oplus (-R_{\ket{\chi}})$. This is a parallel composition of $I$ and $-R_{\ket{\chi}}$, and so we use \Cref{thm:parallel-composition} to analyze its transduction complexity. Thus, we find that
        \begin{align*}
            W(A,\ket{\varphi}) &\leq W\left(R_{\ket{\chi}}, \frac{1}{\sqrt{2}}\ket{\varphi}\right) = \frac12W(R_{\ket{\chi}}, \ket{\varphi}) = \frac12(|\braket{\chi}{\varphi}|^2 W(R_{\ket{\chi}}, \ket{\chi}) + (1-|\braket{\chi}{\varphi}|^2)W(R_{\ket{\chi}}, \ket{\overline{\chi}})), \\
            W(A,\ket{\overline{\varphi}}) &\leq W\left(R_{\ket{\chi}}, \frac{1}{\sqrt{2}} \ket{\overline{\varphi}}\right) = \frac12 W(R_{\ket{\chi}}, \ket{\overline{\varphi}}) \leq \frac12((1-|\braket{\chi}{\varphi}|^2)W(R_{\ket{\chi}}, \ket{\chi}) + |\braket{\chi}{\varphi}|^2 W(R_{\ket{\chi}}, \ket{\overline{\chi}})),
        \end{align*}
        where we used the linearity of the mapping from initial state to witness vector.
        
        Next, we observe that the operation $R := AR_{\ket{\varphi}}A^{\dagger}$ acts as
        \[R : \ket{\psi} \overset{A^{\dagger}}{\mapsto} \ket{\varphi} \overset{R_{\ket{\varphi}}}{\mapsto} \ket{\varphi} \overset{A}{\mapsto} \ket{\psi}, \qquad \text{and} \qquad R : \ket{\overline{\psi}} \overset{A^{\dagger}}{\mapsto} \ket{\overline{\varphi}} \overset{R_{\ket{\varphi}}}{\mapsto} -\ket{\overline{\varphi}} \overset{A}{\mapsto} -\ket{\overline{\psi}}.\]
        This is a sequential composition of three transducers, so we can analyze its total transduction complexity using \Cref{thm:sequential-composition}. We use \Cref{thm:inverse-transducer} to express the transduction complexity of $A^{\dagger}$ in terms of that of $A$. Thus, we obtain that
        \begin{align*}
            W(R,\ket{\psi}) &\leq 2W(A,\ket{\varphi}) + W(R_{\ket{\varphi}}, \ket{\varphi}) \\
            &\leq |\braket{\chi}{\varphi}|^2 W(R_{\ket{\chi}}, \ket{\chi}) + (1-|\braket{\chi}{\varphi}|^2)W(R_{\ket{\chi}}, \ket{\overline{\chi}}) + W(R_{\ket{\varphi}},\ket{\varphi}), \\
            W(R,\ket{\overline{\psi}}) &\leq 2W(A,\ket{\overline{\varphi}}) + W(R_{\ket{\varphi}}, \ket{\overline{\varphi}}) \\
            &\leq (1-|\braket{\chi}{\varphi}|^2) W(R_{\ket{\chi}}, \ket{\chi}) + |\braket{\chi}{\varphi}|^2 W(R_{\ket{\chi}}, \ket{\overline{\chi}}) + W(R_{\ket{\varphi}},\ket{\overline{\varphi}}).
        \end{align*}

        Finally, note that $R$ satisfies the premise of \Cref{cor:ampl-ampl}, and so we can construct a bidirectional canonical transducer $C$ that maps $\ket{\psi}$ to $\ket{\chi}$. Finally, we let $U := CA$, i.e., a sequential composition of $A$ and $C$. We now observe that $U$'s transduction action is
        \[\ket{\varphi} \overset{A}{\rightsquigarrow} \ket{\psi} \overset{C}{\rightsquigarrow} \ket{\chi},\]
        and using \Cref{thm:sequential-composition}, the transduction complexity can be upper bounded by $W(U, \ket{\varphi}) \leq W(A, \ket{\varphi}) + W(C, \ket{\psi})$, which by plugging in the upper bounds for both terms derived earlier yields the claimed transduction complexity from the theorem statement.
    \end{proof}

    We remark here that there is another way to obtain a similar result. 
    One can take the probabilistic annealing algorithm from \cite[Algorithm~7]{cornelissen2023sublinear}, and observe that it is a Las Vegas algorithm, in the sense introduced in \cite{belovs2023one}. Thus, using black-box conversion techniques of Las Vegas algorithms into transducers, as developed in \cite{belovs2024taming}, one can obtain a transducer that maps $\ket{\varphi} \ket{0}$ to $\ket{\chi} \ket{\Gamma_{\varphi,\chi}}$, for some garbage state \ket{\Gamma_{\varphi,\chi}}, with $O((1 + 1/|\braket{\varphi}{\chi}|) \cdot \max\{W(R_{\ket{\chi}},\ket{\chi}), W(R_{\ket{\chi}},\ket{\varphi}), W(R_{\ket{\varphi}},\ket{\chi}), W(R_{\ket{\varphi}},\ket{\varphi})\})$ transduction complexity. This approach does not require inverse queries, whereas the one from the previous theorem does. However, it does generate garbage, and in the rejection-sampling routine that we focus on next, the construction presented here is significantly better in the setting where $\ket{\varphi}$ and $\ket{\chi}$ are very close together.

    \subsection{Rejection sampling and mean estimation with transducers}
    \label{subsec:ampl-ampl-rejection-sampling}

    In this subsection, we port the mean estimation results from \cite{cornelissen2023sublinear} to the transducer setting. That is, we substitute the reflections through the q-sample, which is called in several places in \cite[Algorithms~3,4 and 5]{cornelissen2023sublinear}, by a corresponding transducer construction.

    The intricacy with this approach is that the transduction complexity of these reflections can depend on the state we apply the transducer on. Thus, to favorably analyze the cost of these mean estimation routines in the transducer setting, we must understand on which states these reflection transducers are applied throughout the routines in \cite{cornelissen2023sublinear}. This is the objective of this subsection.

    The main new ingredient that allows us to port the mean estimation results of \cite{cornelissen2023sublinear} is a rejection sampling routine in the transducer setting. Suppose that we have a probability distribution $\pi$ over a finite set $\Omega$. Suppose that we have a function $a : \Omega \to [0,1]$ that to every element $\omega \in \Omega$ assigns an \textit{acceptance probability} $a(\omega)$. Now, \textit{rejection sampling} is the canonical procedure that samples from the posterior distribution $\pi^{(a)}(\omega) := \pi(\omega)a(\omega)/A$, where $A = \E[a]$ is the average acceptance probability.

    In the following theorem, we build a transducer that maps $\ket{\pi}$, i.e., a q-sample of the probability distribution $\pi$, to $\ket{\pi^{(a)}}$. Throughout, we assume that the acceptance probabilities $a$ can be computed without making any queries, i.e., with transduction complexity $0$. We also assume that we have a bidirectional canonical transducer $U$ that reflects through $\ket{\pi}$, implicitly using some oracle. We then construct new bidirectional canonical transducers that prepare and reflect around the quantum state encoding the distribution generated by the rejection sampling routine, using the same implicit oracles.

    \begin{theorem}[Rejection sampling transducer]
        \label{thm:rejection-sampling}
        Let $\pi$ be a distribution over a finite set $\Omega$, and let $a : \Omega \to [0,1]$. We define the following states in $\C^{\Omega}$ as
        \[\ket{\pi} = \sum_{\omega \in \Omega} \sqrt{\pi(\omega)}\ket{\omega}, \qquad \text{and} \qquad \ket{\pi^{(a)}} = \sum_{\omega \in \Omega} \sqrt{\frac{\pi(\omega)a(\omega)}{A}}\ket{\omega}, \qquad \text{where} \qquad A = \underset{\omega \sim \pi}{\E} [a(\omega)].\]
        Now, we define the vectors
        \[\ket{\overline{\pi}} := \sum_{\omega \in \Omega} \sqrt{\pi(\omega)} \cdot \frac{A - a(\omega)}{\sqrt{A(1-A)}} \cdot \ket{\omega}, \qquad \text{and} \qquad \ket{\overline{\pi}^{(a)}} = \sum_{\omega \in \Omega} \frac{\overline{\pi}^{(a)}(\omega)}{\sqrt{\pi^{(a)}(\omega)}} \ket{\omega}.\]
        Suppose that we have a bidirectional canonical transducer $U$ that reflects through $\ket{\pi}$. Then, we can build bidirectional canonical transducers $C$ and $V$ that act as
        \begin{align*}
            \ket{\pi} \overset{C}{\rightsquigarrow} \ket{\pi^{(a)}}, & \qquad \text{with complexity} \qquad O\left(\frac{W(U,\ket{\pi}) + (1-A) \cdot W(U,\ket{\overline{\pi}})}{\sqrt{A}}\right), \\
            \ket{\pi^{(a)}} \overset{V}{\rightsquigarrow} \ket{\pi^{(a)}}, & \qquad \text{with complexity} \qquad O\left(\frac{W(U,\ket{\pi})}{A}\right), \\
            \ket{\overline{\pi}^{(a)}} \overset{V}{\rightsquigarrow} -\ket{\overline{\pi}^{(a)}}, & \qquad \text{with complexity} \qquad O\left(A \cdot W\left(U,\sum_{\omega \in \Omega} \frac{\overline{\pi}^{(a)}(\omega)}{\sqrt{\pi(\omega)}}\right)\right),
        \end{align*}
        for all states $\ket{\overline{\pi}^{(a)}} \in \C^{\Omega}$ satisfying $\braket{\overline{\pi}^{(a)}}{\pi^{(a)}} = 0$.
    \end{theorem}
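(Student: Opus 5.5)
The plan is to reduce the statement to \Cref{cor:ampl-ampl}. The first step is to build an input-independent unitary $G_a$ on $\C^\Omega \otimes \C^2$ that acts as $\ket{\omega}\ket{0} \mapsto \ket{\omega}(\sqrt{1-a(\omega)}\ket{0} + \sqrt{a(\omega)}\ket{1})$. Since $a$ is computable without queries, $G_a$ has transduction complexity $0$. Applying it to $\ket{\pi}\ket{0}$ gives
\[\ket{\psi} = \ket{\psi_0}\ket{0}+\ket{\psi_1}\ket{1}, \qquad \ket{\psi_1} = \sum_\omega \sqrt{\pi(\omega)a(\omega)}\ket{\omega}.\]
Then $\norm{\ket{\psi_1}} = \sqrt{A}$ and $\ket{\psi_1}/\norm{\ket{\psi_1}} = \ket{\pi^{(a)}}$.

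The reflection through $\ket{\psi}$ is $R := G_a (U \otimes \ket{0}\bra{0} - I\otimes\ket{1}\bra{1}) G_a^\dagger$, composed via \Cref{thm:sequential-composition,thm:parallel-composition}. The part acting as $-I$ on the $\ket{1}$ branch of the ancilla is free. The key computation is that $G_a^\dagger$ maps $\ket{\overline{\psi}}$ to $\ket{\overline{\pi}}\ket{0}$ exactly. To verify this, apply $G_a$ to $\ket{\overline{\pi}}\ket{0}$. The $\ket{0}$ component is $\sum_\omega \sqrt{\pi(\omega)(1-a(\omega))}\,\frac{A-a(\omega)}{\sqrt{A(1-A)}}\ket{\omega}$. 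This should match $\sqrt{A/(1-A)}\,\ket{\psi_0}$, which holds only if the coefficient identity is arranged right. I would therefore first check the identity $\sqrt{1-a}(A-a)\propto$ the desired coefficient. If it fails as written, I would note that $\ket{\overline\pi}$ is chosen precisely so that $G_a(\ket{\overline\pi}\ket 0)$ lies in $\Span\{\ket\psi,\ket{\overline\psi}\}$, and argue linearity-wise: the components in $\ket{\pi}$ and $\ket{\overline{\pi}}$ give the costs.

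With this in hand, $W(R,\ket{\psi}) \le W(U,\ket{\pi})$ and $W(R,\ket{\overline{\psi}}) \le W(U,\ket{\overline{\pi}})$, since conjugation by the free unitary $G_a$ adds no cost. Plugging into \Cref{cor:ampl-ampl} with $\norm{\ket{\psi_0}}^2 = 1-A$ gives the bound for $C$, after precomposing with $\ket{\pi}\mapsto\ket{\pi}\ket{0}$ and discarding the $\ket{1}$ flag, which does not alter Gram matrices. The bound for $V$ on $\ket{\pi^{(a)}}$ also follows directly, giving $W(U,\ket{\pi})/A$.

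For the orthogonal case, \Cref{cor:ampl-ampl} gives complexity $O(W(R,\ket{\overline{\pi}^{(a)}}\ket{1}))$. I would compute
\[G_a^\dagger(\ket{\overline{\pi}^{(a)}}\ket 1) = \sum_\omega \frac{\overline\pi^{(a)}(\omega)}{\sqrt{\pi^{(a)}(\omega)}}\ket\omega\bigl(\sqrt{a(\omega)}\ket 0-\sqrt{1-a(\omega)}\ket 1\bigr).\]
Only the $\ket{0}$ part is routed through $U$. Its vector is $\sum_\omega \overline\pi^{(a)}(\omega)\sqrt{a(\omega)}/\sqrt{\pi(\omega)a(\omega)/A}\,\ket\omega = \sqrt{A}\sum_\omega \frac{\overline\pi^{(a)}(\omega)}{\sqrt{\pi(\omega)}}\ket\omega$. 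Since $W$ is quadratic in the input, as the catalyst is linear by \Cref{thm:transducer-catalyst}, this yields $A\cdot W(U,\cdot)$ as claimed.

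The main obstacle is the exact bookkeeping showing that $G_a^\dagger\ket{\overline\psi}$ equals $\ket{\overline\pi}\ket0$, with the right normalization. This is what makes the second term in $C$'s bound carry the factor $(1-A)$ rather than a worse one.
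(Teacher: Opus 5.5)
Your reduction is the paper's: $G_a$ is the paper's $O^{(a)}$, and $R=G_a(U\otimes\ket{0}\bra{0}-I\otimes\ket{1}\bra{1})G_a^\dagger$ is its reflection through $\ket{\psi}$. Your treatment of $\ket{\psi}$ and of the orthogonal case of $V$ also matches the paper.

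\textbf{The key identity you single out is false.} In general $G_a^\dagger\ket{\overline{\psi}}\neq\ket{\overline{\pi}}\ket{0}$. A norm count shows this: $\norm{\ket{\overline{\psi}}}=1$, while $\norm{\ket{\overline{\pi}}}^2=\Var_{\pi}[a]/(A(1-A))<1$ unless $a$ is $\{0,1\}$-valued on the support of $\pi$. Explicitly, take the completion $\ket{\omega}\ket{1}\mapsto\ket{\omega}(\sqrt{a(\omega)}\ket{0}-\sqrt{1-a(\omega)}\ket{1})$ that you use later. Then the $\ket{1}$-branch of $G_a^\dagger\ket{\overline{\psi}}$ is $\sum_\omega\sqrt{\pi(\omega)a(\omega)(1-a(\omega))/(A(1-A))}\,\ket{\omega}$, which is generally nonzero.

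\textbf{Your fallback is also false.} The $\ket{0}$-branch of $G_a(\ket{\overline{\pi}}\ket{0})$ is $\sum_\omega\sqrt{\pi(\omega)(1-a(\omega))}\,(A-a(\omega))\ket{\omega}/\sqrt{A(1-A)}$. This is not proportional to $\ket{\psi_0}$ unless $a$ is constant where $\pi(1-a)>0$. Hence $G_a(\ket{\overline{\pi}}\ket{0})\notin\Span\{\ket{\psi},\ket{\overline{\psi}}\}$ in general. As written, your cost bound for $R$ on $\ket{\overline{\psi}}$ therefore rests on an identity that fails.

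\textbf{The fix.} Use the principle you already apply to $\ket{\overline{\pi}^{(a)}}\ket{1}$: only the $\ket{0}$-branch is routed through $U$, and the $\ket{1}$-branch only sees the free $-I$. By sequential and parallel composition, this gives $W(R,\ket{\overline{\psi}})\le W\bigl(U,(I\otimes\bra{0})G_a^\dagger\ket{\overline{\psi}}\bigr)$. That projection is exactly $\ket{\overline{\pi}}$, by the coefficient identity $\sqrt{A/(1-A)}\,(1-a)-\sqrt{(1-A)/A}\,a=(A-a)/\sqrt{A(1-A)}$; this is precisely what the paper verifies. The extra $\ket{1}$-component is harmless. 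Also $\braket{\pi}{\overline{\pi}}=0$, so $U$ sends that branch to $-\ket{\overline{\pi}}$, and $R$ transduces $\ket{\overline{\psi}}$ to $-\ket{\overline{\psi}}$. With this replacement your argument goes through.

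\textbf{On the factor $(1-A)$.} It comes from $\norm{\ket{\psi_0}}^2=1-A$ in \Cref{cor:ampl-ampl}, not from this bookkeeping. The bookkeeping only determines which vector appears inside $W(U,\cdot)$.
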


    \begin{proof}
        For every $\omega \in \Omega$, we consider the following two-dimensional unitary $O_{\omega}^{(a)}$, acting as
        \[O_{\omega}^{(a)} : \ket{0} \mapsto \sqrt{1-a(\omega)}\ket{0} + \sqrt{a(\omega)}\ket{1}, \qquad \text{and} \qquad \ket{1} \mapsto \sqrt{a(\omega)}\ket{0} - \sqrt{1-a(\omega)}\ket{1}.\]
        We let $O^{(a)} = \sum_{\omega \in \Omega} \ket{\omega}\bra{\omega} \otimes O^{(a)}_\omega$ be the vector sum of all these unitaries. Then, if we apply it to $\ket{\pi}$, we obtain
        \[\ket{\pi}\ket{0} \overset{O^{(a)}}{\mapsto} \underbrace{\sum_{\omega \in \Omega} \sqrt{\pi(\omega)(1-a(\omega))}\ket{\omega}}_{=: \ket{\pi_0}}\ket{0} + \underbrace{\sum_{\omega \in \Omega} \sqrt{\pi(\omega)a(\omega)}\ket{\omega}}_{=: \ket{\pi_1}}\ket{1},\]
        and so we find that $\norm{\ket{\pi_0}}^2 = 1 - A$ and $\norm{\ket{\pi_1}}^2 = A$. We now map these states to the premise of \Cref{cor:ampl-ampl}, and so we write $\ket{\psi} = \ket{\pi_0}\ket{0} + \ket{\pi_1}\ket{1}$, and $\ket{\overline{\psi}} = \sqrt{A/(1-A)}\ket{\pi_0}\ket{0} - \sqrt{(1-A)/A}\ket{\pi_1}\ket{1}$. Then, observe that
        \begin{align*}
            (I \otimes \bra{0})(O^{(a)})^{\dagger}\ket{\overline{\psi}} &= \sum_{\omega \in \Omega} \sqrt{\pi(\omega)} \cdot \left(\sqrt{\frac{A}{1-A}}(1-a(\omega)) - \sqrt{\frac{1-A}{A}}a(\omega)\right) \ket{\omega} \\
            &= \sum_{\omega \in \Omega} \sqrt{\pi(\omega)} \cdot \frac{A - a(\omega)}{\sqrt{A(1-A)}} \cdot \ket{\omega} = \ket{\overline{\pi}}.
        \end{align*}

        Now, we can construct a transducer that reflects through $\ket{\psi}$. The idea is to first apply $O^{(a)}$ in reverse, then use $U$ to reflect through $\ket{\pi}$ on the $\ket{0}$-branch, and then apply $O^{(a)}$ again. Using parallel composition (\Cref{thm:parallel-composition}), we then obtain that
        \[W(R,\ket{\psi}) = W(U,\ket{\pi}), \qquad \text{and} \qquad W(R,\ket{\overline{\psi}}) = W(U,\ket{\overline{\pi}}).\]
        
        Now, we obtain transducers $C$ and $V$ from \Cref{cor:ampl-ampl}. For the transduction complexity of $V$ applied to the state $\ket{\overline{\pi}^{(a)}}$, we observe that the state that $U$ acts on is
        \[(I \otimes \bra{0})(O^{(a)})^{\dagger}\ket{\overline{\pi}^{(a)}}\ket{1} = \sum_{\omega \in \Omega} \frac{\overline{\pi}^{(a)}(\omega)\sqrt{a(\omega)}}{\sqrt{\pi^{(a)}(\omega)}} \ket{\omega} = \sqrt{A} \cdot \sum_{\omega \in \Omega} \frac{\overline{\pi}^{(a)}(\omega)}{\sqrt{\pi(\omega)}}.\qedhere\]
    \end{proof}

    Note in particular that if the average acceptance probability $A$ is known beforehand, we can use the rebalancing construction from \Cref{lem:rebalancing} to rebalance the transduction complexities in $V$ to get rid of the dependence on $A$ altogether. However, if we do not know $A$ (or a good approximation thereof) beforehand, then we cannot use the same trick, because we cannot choose a rebalancing parameter that depends on the input.

    This construction allows us to port the unbiased, non-destructive mean estimation routine of a bounded variable, as presented in \cite[Algorithm~3]{cornelissen2023sublinear} into the transducer framework. 

    \begin{theorem}[Bounded mean estimation transducer]
        \label{thm:transducer-bdd-mean-est}
        Let $\pi$ be a probability distribution over a finite set $\Omega$, and let $X : \Omega \to [0,1]$ be a bounded random variable. We write
        \[\ket{\pi} = \sum_{\omega \in \Omega} \sqrt{\pi(\omega)}\ket{\omega}, \qquad \ket{\overline{\pi}} = \sum_{\omega \in \Omega} \sqrt{\pi(\omega)} \cdot \frac{\mu - X(\omega)}{\sqrt{\mu(1-\mu)}} \ket{\omega}, \qquad \text{where} \qquad \mu = \underset{\omega \sim \pi}{\E} \left[X(\omega)\right].\]
        Suppose we have a bidirectional canonical transducer $U$ that reflects through $\ket{\pi}$. Let $t \geq 1$ and $\epsilon \in (0,1)$. Then, we can implement a bidirectional canonical transducer $V := \mathtt{BddMeanEst}(X,U,t,\epsilon)$ that maps
        \[\ket{\pi}\ket{0} \overset{V}{\rightsquigarrow} \ket{\pi}\ket{\widetilde{\mu}}, \qquad \text{with complexity} \qquad O\left((W(U,\ket{\pi}) + W(U,\ket{\overline{\pi}})) \cdot t \cdot \polylog(t,1/\epsilon)\right),\]
        such that measuring a subset of the qubits of $\ket{\widetilde{\mu}}$ in the computational basis yields a random variable $\widetilde{\mu}$ that satisfies
        \[|\E[\widetilde{\mu}] - \mu| \leq \epsilon, \qquad \text{and} \qquad \Var[\widetilde{\mu}] \leq \frac{91\mu}{t^2} + \epsilon.\]
    \end{theorem}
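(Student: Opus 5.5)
We follow \cite[Algorithm~3]{cornelissen2023sublinear} step by step. Each reflection through the q-sample there becomes a call to a transducer. We then bound the total transduction complexity with the sequential and parallel composition theorems (\Cref{thm:sequential-composition,thm:parallel-composition}). The central observation is the following. Whenever Algorithm~3 reflects through $\ket{\pi}$ (or through a state built from $\ket{\pi}$ and $X$), the state it acts on lies in the span of $\ket{\pi}$ and $\ket{\overline{\pi}}$, tensored with control and phase-estimation registers. The reason is that the algorithm only ever uses the operator $O^{(X)}$ (the analogue of $O^{(a)}$ with $a := X$) together with the reflection through $\ket{\pi}$. As in the proof of \Cref{thm:rejection-sampling}, conjugating $I \oplus (-U)$ by $O^{(X)}$ gives a transducer $R$ reflecting through $\ket{\psi} = O^{(X)}\ket{\pi}\ket{0}$. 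It satisfies $W(R,\ket{\psi}) = W(U,\ket{\pi})$ and $W(R,\ket{\overline{\psi}}) = W(U,\ket{\overline{\pi}})$, where $\ket{\overline{\psi}}$ is exactly as in \Cref{cor:ampl-ampl} with $A = \mu$.

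With this in hand, the Grover iterate $G = R\cdot(I\otimes Z)$ acts invariantly on the two-dimensional space $\Span\{\ket{\psi},\ket{\overline{\psi}}\}$. By the linearity of the catalyst map (\Cref{thm:transducer-catalyst}), any unit vector $\alpha\ket{\psi}+\beta\ket{\overline{\psi}}/\norm{\ket{\overline{\psi}}}$-type combination therefore has complexity at most $O(W(U,\ket{\pi})+W(U,\ket{\overline{\pi}}))$. A small normalization check is needed here: $\ket{\overline{\psi}}$ has norm $1$ by construction, so this is consistent with the complexities already stated. Algorithm~3 consists of controlled powers $G^{2^j}$ inside phase estimation, with $O(t\,\polylog(t,1/\epsilon))$ applications of $G$ in total. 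The remaining parts of the algorithm are sampling a random number of iterations, the quantum Fourier transform, and the uncomputation that restores $\ket{\pi}$. These are input-independent unitaries or further calls to $R$ and $R^\dagger$, handled by \Cref{thm:inverse-transducer}. Each call to $R$, in any branch of the control registers, acts on a state of norm at most $1$ inside the invariant plane, and parallel composition sums the squared-norm-weighted costs across branches. Sequential composition then gives the bound $O((W(U,\ket{\pi})+W(U,\ket{\overline{\pi}}))\cdot t\,\polylog(t,1/\epsilon))$.

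The statistical guarantees, namely the bias at most $\epsilon$ and the variance at most $91\mu/t^2+\epsilon$, carry over verbatim from \cite{cornelissen2023sublinear}. This works because the transducer implements exactly the same unitary map on the public space as the circuit there, so the output state $\ket{\pi}\ket{\widetilde{\mu}}$ is identical. We also need non-destructiveness, i.e.\ that the first register is restored to $\ket{\pi}$ exactly. This is part of the correctness of Algorithm~3, which in its ideal (error-free) form returns $\ket{\pi}$ unentangled from the outcome. Some care is needed where Algorithm~3 uses measurements; these should be deferred to the end, so that the transducer acts unitarily and only the final measurement of a subset of qubits of $\ket{\widetilde{\mu}}$ is performed.

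The main obstacle will be verifying rigorously that every reflection call in Algorithm~3 (including those in the ``restoration'' step, which may use amplitude estimation run in reverse) acts only on states in the invariant plane, with bounded total weight. If some step reflects on a state outside the plane, for instance a state not of the form $O^{(X)}(\cdot)\ket{0}$, its cost would involve $W(U,\ket{\perp})$ for arbitrary $\ket{\perp}$, which is not controlled by the statement. I would first try to argue invariance directly from the structure of the algorithm. Every operation in it is a polynomial in $O^{(X)}$, $R$, and a reflection about the last qubit, applied to $\ket{\pi}\ket{0}$, and the plane is stable under all of these.
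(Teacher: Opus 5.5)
\textbf{Gap: exact non-destructiveness.} Your reduction of every reflection call to the plane $\Span\{\ket{\psi},\ket{\overline{\psi}}\}$ is fine. It matches the last step of the paper's proof, which simply asserts it, so the obstacle you flag is not where the difficulty lies. The gap is your claim that exact non-destructiveness is ``part of the correctness of Algorithm~3 in its ideal (error-free) form'', and that everything carries over verbatim once measurements are deferred. No such exact circuit exists in \cite{cornelissen2023sublinear}, for two reasons.
\begin{itemize}
\item The uncomputation after amplitude estimation (\cite[Lemma~B.1]{cornelissen2023sublinear}) restores the input only by \emph{fixed-point} amplitude amplification that post-selects on the uncomputation succeeding. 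This is inherently approximate and costs an extra $\log(1/\epsilon)$.
\item The non-destructive coin flip (\cite[Algorithm~3, Line~4]{cornelissen2023sublinear}) is a Las Vegas repeat-until-success loop. Its number of reflection calls is random and unbounded, and only constant in expectation.
\end{itemize}
Implementing ``the same unitary as the circuit'' therefore gives a state merely close to $\ket{\pi}\ket{\widetilde{\mu}}$, not the exact transduction $\ket{\pi}\ket{0}\rightsquigarrow\ket{\pi}\ket{\widetilde{\mu}}$ in the statement. Deferring the measurements of the Las Vegas loop also forces you to truncate it, which either destroys exactness or gives up the constant cost.

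\textbf{Missing ingredients.} You need the two transducer-level replacements the paper makes.
\begin{itemize}
\item Replace the fixed-point amplification in the restoration step by the exact, garbage-free amplitude-amplification transducer of \Cref{cor:ampl-ampl}. This makes the restoration exact and removes the $\log(1/\epsilon)$ factor. It is exactly the point where a transducer does something no finite circuit does.
\item Convert the Las Vegas coin flip into a transducer. Use the observation of \cite{belovs2023one} that Las Vegas algorithms yield feasible adversary solutions, which \cite{belovs2024taming} turns into transducers. The result has constant transduction complexity (its expected cost), restores the input exactly, and leaves garbage that is absorbed into $\ket{\widetilde{\mu}}$.
\end{itemize}
Only after these replacements is every component exactly non-destructive. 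Sequential composition then gives the claimed map, and the bias and variance bounds can be taken from \cite[Theorem~2.4]{cornelissen2023sublinear}.
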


    \begin{proof}
        We describe here how one adapts the approach taken in \cite[Algorithm~3]{cornelissen2023sublinear}. First, we consider the non-destructiveness of each of its components. For the non-destructive amplitude estimation part called in \cite[Algorithm~3; Line~1]{cornelissen2023sublinear}, we observe that the analysis relies on \cite[Lemma~B.1]{cornelissen2023sublinear}. This lemma uses fixed-point amplitude amplification to post-select on the uncomputation being successful. We realize that this is exactly an instance of amplitude amplification that we can make exact in the transducer setting using \Cref{cor:ampl-ampl}. Thus, we can remove the non-exactness of this lemma in the transducer setting (and additionally remove a $\log(1/\epsilon)$-factor in this part), and make the non-destructive amplitude amplification step exact.

        Next, we focus on the non-destructive coin flip called in \cite[Algorithm~3; Line~4]{cornelissen2023sublinear}. Observe that this routine is a Las Vegas algorithm with constant cost, since the number of queries it makes is a random variable that is constant in expectation. We use the observation from Belovs and Yolcu~\cite{belovs2023one} that such algorithms can be turned into feasible solutions to the adversary bound, and hence into transducers using the work by Belovs, Jeffery and Yolcu~\cite{belovs2024taming}. Thus, we obtain a transducer that produces a measurement of the coin flip, some additional garbage, and restores the initial state with constant transduction complexity. In the theorem statement, the garbage is absorbed in the resulting state $\ket{\widetilde{\mu}}$.

        The amplitude-to-phase conversion step in \cite[Algorithm~3; Line~7]{cornelissen2023sublinear} is already non-destructive by design, and we observe that the unitary we use for the unbiased phase estimation algorithm used in \cite[Algorithm~3; Line~8]{cornelissen2023sublinear} is also non-destructive by design. Moreover, the linear amplitude amplification part is undone at the end of the computation, so we don't need it to be non-destructive. Thus, by a sequential composition of these components, the entire algorithm is now implemented as a non-destructive transducer.
        
        Consequently, similar to \Cref{thm:rejection-sampling}, we realize that the algorithm acts exclusively on the 2D-subspace spanned by $\ket{\pi}$ and $\ket{\overline{\pi}}$. As such, the total transduction complexity incurred per call to the reflection oracle through $\ket{\pi}$ is $O(W(U,\ket{\pi}) + W(U,\ket{\overline{\pi}}))$. Thus, we deduce the complexity statement and the resulting claims on the expectation and variance of $\widetilde{\mu}$ from \cite[Theorem~2.4]{cornelissen2023sublinear}.
    \end{proof}

    Using the bounded mean-estimation routine as a building block, we 
    port several more estimation routines to the transducer framework. We start by designing a quantile estimator, inspired by \cite[Algorithm~5]{cornelissen2023sublinear}.

    \begin{theorem}[Quantile estimator]
        \label{thm:quantile-est}
        Let $\pi$ be a probability distribution over a finite set $\Omega$, and let $X : \Omega \to [0,M]$ be a (univariate) random variable, with $M > 0$. We define the quantile function $Q_X : [0,1] \to [0,M]$ as
        \[Q_X(p) := \sup_{t \in [0,M]} \underset{x \sim \pi}{\P} \left[X(x) \leq p\right].\]
        Now, let $p \in (0,1)$, $\epsilon \in (0,\min\{p,1-p\})$ and $\Delta > 0$, such that $Q_X(p + \epsilon) - Q_X(p - \epsilon) > \Delta$. Suppose we have a bidirectional canonical transducer $U$ that reflects through $\ket{\pi}$. Then, for $\rho > 0$, we can construct a bidirectional canonical transducer $V := \mathtt{QuantileEst}(X,U,p,\epsilon,M,\Delta,\rho)$ that acts as
        \[\ket{\pi}\ket{0} \overset{V}{\rightsquigarrow} \ket{\pi}\ket{\widetilde{q}_p}, \; \text{with complexity} \; \widetilde{O}\left(\left(W(U,\ket{\pi} + \sup_{t \in [0,M]} W\left(U,\ket{\overline{\pi}\left(X_t\right)}\right)\right) \cdot \frac{1}{\epsilon}\left(1 + \log\frac{M}{\Delta}\right) \log\frac{1}{\rho}\right),\]
        where we take $X_t := 1/4 + \mathbbm{1}(X \leq t)/2$, and for any random variable $Y : \Omega \to [0,1]$, we write
        \[\ket{\overline{\pi}(Y)} = \sum_{\omega \in \Omega} \sqrt{\pi(\omega)} \cdot \frac{\mu_Y - Y(\omega)}{\sqrt{\mu_Y(1 - \mu_Y)}} \ket{\omega}, \qquad \text{where} \qquad \mu_Y = \underset{\omega \sim \pi}{\E} \left[Y(\omega)\right].\]
        such that a computational basis measurement on a subset of the qubits of $\ket{\widetilde{q}_p}$ yields a value $q_p$ such that with probability at least $1-\rho$, $\P_{x \sim \pi}[X(x) \leq q_p] \in [p-3\epsilon,p+3\epsilon]$.
    \end{theorem}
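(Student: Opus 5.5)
We follow the binary-search quantile estimator of \cite[Algorithm~5]{cornelissen2023sublinear}, replacing each call to a bounded mean estimator by the transducer $\mathtt{BddMeanEst}$ from \Cref{thm:transducer-bdd-mean-est}. For a threshold $t \in [0,M]$, the variable $X_t = 1/4 + \mathbbm{1}(X \leq t)/2$ takes values in $[1/4,3/4]$. Its mean $\mu_{X_t} = 1/4 + \P_{\pi}[X \leq t]/2$ therefore lies in $[1/4,3/4]$, so $\mu_{X_t}(1-\mu_{X_t})$ is bounded below by a constant. Estimating $\P_\pi[X \le t]$ to additive precision $\epsilon$ thus reduces to estimating $\mu_{X_t}$ to precision $\epsilon/2$.

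Taking $t \approx 1/\epsilon$ in \Cref{thm:transducer-bdd-mean-est} gives standard deviation $O(\epsilon)$, at transduction cost $O((W(U,\ket{\pi}) + W(U,\ket{\overline{\pi}(X_t)}))\,\epsilon^{-1}\polylog(1/\epsilon))$. We then boost the success probability of the comparison ``is $\P_\pi[X\le t]$ above $p$?'' to $1-\rho'$ via a median of $O(\log(1/\rho'))$ parallel copies. By \Cref{thm:parallel-composition}, these copies compose with complexities adding. Each copy returns $\ket{\pi}$ intact, since $\mathtt{BddMeanEst}$ is non-destructive. The median and the comparison are input-independent classical reversible computations, so they add zero cost.

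The outer loop is a binary search on $t$ over $[0,M]$. We stop once the interval has length at most $\Delta$, which takes $O(1+\log(M/\Delta))$ rounds. In each round we compare the estimated CDF at the midpoint against $p$, and we keep the midpoints coherently in registers. Every round is the transducer above, applied to $\ket{\pi}$ with a register-controlled threshold. Controlling on the classical register is a direct sum over branches, so by \Cref{thm:parallel-composition} the cost of each branch is bounded by the supremum over $t$. The rounds are then chained by \Cref{thm:sequential-composition}, which is applicable because each round restores $\ket{\pi}$ in the public register.

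We set $\rho' = \rho/O(1+\log(M/\Delta))$ and take a union bound over the rounds. Correctness then follows as in \cite{cornelissen2023sublinear}. The assumption $Q_X(p+\epsilon)-Q_X(p-\epsilon) > \Delta$ ensures that any threshold whose estimated CDF comparisons are each correct to within $\pm\epsilon$ ends in an interval whose CDF values lie in $[p-3\epsilon,p+3\epsilon]$. Multiplying the per-round cost by the number of rounds and the repetition factor gives the stated complexity, with the polylog factors absorbed into $\widetilde{O}$.

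The main obstacle is making adaptivity compatible with transducers. Later thresholds depend on earlier measurement outcomes, but transducer composition requires us to keep everything coherent and to apply each mean-estimation transducer to the same public state $\ket{\pi}$. We handle this by controlling each round on the outcome registers, which are the garbage in $\ket{\widetilde{q}_p}$, and by bounding the resulting direct-sum cost with the $\sup_t$ term. The one subtle point is that the catalyst of a controlled branch depends on $t$ only through $\ket{\overline{\pi}(X_t)}$. This holds because, by the proof of \Cref{thm:transducer-bdd-mean-est}, $U$ is only ever applied inside the span of $\ket{\pi}$ and $\ket{\overline{\pi}(X_t)}$.
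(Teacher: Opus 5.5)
Your cost accounting is sound: the reduction to $\mathtt{BddMeanEst}$ on $X_t$, the median amplification, and the coherent control of the adaptive thresholds with the $\sup_t$ bound all work. The gap is that the binary search you describe does not certify its output.

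Write $F(s) := \P_{x\sim\pi}[X(x)\le s]$ and $I := \{s : F(s)\in[p-\epsilon,p+\epsilon]\}$. The hypothesis only says that $I$ is an interval of length $>\Delta$; outside $I$, the step function $F$ may jump arbitrarily. Two-way comparisons against $p$ that are accurate to $\pm\epsilon$ only maintain $F(\mathrm{lo})<p+\epsilon$ and $F(\mathrm{hi})>p-\epsilon$. These give no lower bound on $F(\mathrm{lo})$ and no upper bound on $F(\mathrm{hi})$.

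Concretely, take $p=1/2$ and $\epsilon=1/16$. Let $X$ take the value $a$ with probability $c$ and the value $b$ with probability $1-c$, where $0<a<b\le M$ and $b-a>\Delta$.
If $c=17/32$, even exact comparisons send every midpoint in $[a,b)$ to $\mathrm{hi}$. Your search then ends with $\mathrm{lo}<a\le\mathrm{hi}$ and $F(\mathrm{lo})=0$.
If $c=15/32$, the search ends with $\mathrm{lo}<b\le\mathrm{hi}$ and $F(\mathrm{hi})=1$.
Both values lie outside $[5/16,11/16]=[p-3\epsilon,p+3\epsilon]$. So your claim that the final interval has all its CDF values in $[p-3\epsilon,p+3\epsilon]$ is false, and no fixed choice of endpoint works in both cases.

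The missing idea is the paper's three-way test with early termination. At the midpoint $m$:
if the estimate $\hat F(m)$ lies in $[p-2\epsilon,p+2\epsilon]$, stop and output $m$, since then $F(m)\in[p-3\epsilon,p+3\epsilon]$;
if $\hat F(m)<p-2\epsilon$, then $F(m)<p-\epsilon$, so $m$ lies to the left of $I$ and you set $\mathrm{lo}=m$;
symmetrically, if $\hat F(m)>p+2\epsilon$, set $\mathrm{hi}=m$.
This keeps $I\subseteq[\mathrm{lo},\mathrm{hi}]$, so the $\Delta$-hypothesis is used only to bound the number of rounds. The search interval cannot shrink below $|I|>\Delta$, so within $O(1+\log(M/\Delta))$ rounds some midpoint falls in $I$, where the estimate necessarily lands in the window.

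Alternatively, you could keep your search and spend two extra estimates on the endpoints of the final interval. At least one endpoint lies in $I$, so you output one whose estimate is in the window.

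A minor point: besides $t\approx 1/\epsilon$, you must take the precision parameter of $\mathtt{BddMeanEst}$ to be $O(\epsilon^2)$. This controls both its bias and the additive term in its variance bound, and it only costs polylogarithmic factors.
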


    \begin{proof}
        We follow the approach outlined in \cite[Algorithm~5]{cornelissen2023sublinear}. We run binary search over the interval $[0,M]$, where at every value $t \in [0,M]$ we compute an $\epsilon$-approximation of $\P_{x \sim \pi} \left[X(x) \leq t\right]$. We stop whenever the estimated value is in the interval $[p-2\epsilon,p+2\epsilon]$. Since the interval of $t$'s for which we are guaranteed to obtain such an outcome is at least $\Delta$, we can find such a value in at most $T \in O(1 + \log(M/\Delta))$ iterations of binary search.

        It remains to describe the approximation routine. We use \Cref{thm:transducer-bdd-mean-est} with the random variable $X_t(x) = 1/4 + \mathbbm{1}(X(x) \leq t)/2$, repetition parameter $t' = 60/\epsilon$ and precision parameter $\epsilon' = \epsilon^2/32$. The resulting bidirectional canonical transducer $\mathtt{BddMeanEst}(X_t, U, t', \epsilon')$ now produces a random variable $\widetilde{\mu}_t$ that satisfies
        \[|\E[\widetilde{\mu}_t] - \mu_t| \leq \epsilon' \leq \frac{\epsilon}{32}, \quad \text{and} \quad \Var[\widetilde{\mu}_t] \leq \frac{91}{(t')^2} + \epsilon' \leq \frac{\epsilon^2}{16}, \quad \text{where} \quad \mu_t = \E[X_t] = \frac14 + \frac{\underset{x \sim \pi}{\P}[X(x) \leq t]}{2}.\]
        Thus, using Chebyshev's inequality, we obtain that
        \[\P\left[\left|\left(2\widetilde{\mu}_t - \frac12\right) - \underset{x \sim \pi}{\P}[X(x) \leq t]\right| \leq \epsilon\right] = \P\left[|\widetilde{\mu}_t - \mu_t| \geq \frac{\epsilon}{2}\right] \leq \P\left[|\widetilde{\mu}_t - \E[\widetilde{\mu}_t]| \leq \frac{15\epsilon}{32}\right] < \frac13.\]
        
        We repeat the above process $O(\log(T/\rho))$ times and take the median of the estimated probabilities to obtain an $\epsilon$-precise estimate of $\P_{x \sim \pi}[X(x) \leq t]$ with probability at least $1-\rho/T$. By a union bound, the probability that each iteration succeeds is at least $1-\rho$. Finally, observe from \Cref{thm:transducer-bdd-mean-est} that the cost of one estimation procedure is $\widetilde{O}(W(U,\ket{\pi} + W(U,\ket{\overline{\pi}(X_t)}))/\epsilon)$. Thus, this procedure produces an output $\widetilde{q}_p \in [0,M]$ such that $\P_{x \sim \pi} [X(x) \leq \widetilde{q}_p] \in [p-3\epsilon,p+3\epsilon]$ with probability at least $1-\rho$, and with the claimed complexity.
    \end{proof}

    We note that it is at the moment not obvious why we use $X_t = 1/4 + \mathbbm{1}(X \leq t)/2$ rather than simply $X_t = \mathbbm{1}(X \leq t)$. However, it does not hurt the above approach more than a constant factor that is hidden in the big-$O$-notation, and it turns out that we will need this modification for technical reasons in the volume estimation algorithm in~\Cref{sec:vol-est}.

    Finally, we also port the unbiased, non-destructive mean-estimation routine, presented in \cite[Algorithm~4]{cornelissen2023sublinear}. We generalize the statement slightly to the setting where we can only reflect through a q-sample that merely approximates the target probability distribution up to some small total-variation distance $\eta \geq 0$.  
    We will need this more general version in the volume estimation algorithm, which we present in \Cref{sec:vol-est}.

    \begin{theorem}[Mean estimation]
        \label{thm:transducer-mean-est}
        Let $\pi,\widetilde{\pi}$ be distributions over a finite state space $\Omega$, with $\norm{\pi-\widetilde{\pi}}_{\mathrm{TV}} \leq \eta$ for $\eta \geq 0$. Let $X : \Omega \to \R$ be a random variable. We write
        \[\ket{\widetilde{\pi}} = \sum_{\omega \in \Omega} \sqrt{\widetilde{\pi}(\omega)} \ket{\omega}, \qquad \mu_X = \underset{\omega \sim \pi}{\E} \left[X(\omega)\right], \qquad \text{and} \qquad \widetilde{\mu}_X = \underset{\omega \sim \widetilde{\pi}}{\E} \left[X(\omega)\right].\]
        For any random variable $Y : \Omega \to [0,1]$, we define the state
        \[\ket{\overline{\pi}(Y)} = \sum_{\omega \in \Omega} \sqrt{\widetilde{\pi}(\omega)} \cdot \frac{\widetilde{\mu}_Y - Y(\omega)}{\sqrt{\widetilde{\mu}_Y(1-\widetilde{\mu}_Y)}} \ket{\omega}.\]
        Suppose we have a bidirectional canonical transducer $U$ that reflects through $\ket{\widetilde{\pi}}$. Suppose that $\Var_{\omega \sim \pi}[X] \leq \widetilde{\sigma}^2$, and $\widetilde{m} \in [0,M]$ such that $|\widetilde{m} - \mu_X| \leq 17\widetilde{\sigma}$. Let $t \geq 1$ and $\epsilon \in (0,1)$. We write $k = \lceil \log(580)/\epsilon\rceil$, and for all $j \in [k]$, $a_j = 2^{2j}\widetilde{\sigma}$. We also let $a_{-1} = 0$, $B_j = \min\{1,580/2^{2j}\}$, and for all $j \in \{0, \dots, k\}$,
        \[Y_j^{\pm}(\omega) := \frac{B_j}{4} + \left(1 - \frac{B_j}{2}\right) \cdot \pm\frac{X(\omega) - \widetilde{m}}{a_j} \cdot \mathbbm{1}_{\pm(X(\omega) - \widetilde{m}) \in (a_{j-1},a_j]}.\]
        Then, we can construct a bidirectional canonical transducer $V := \mathtt{MeanEst}(X,U,\widetilde{\sigma},\widetilde{m},t,\epsilon)$ that acts as
        \[\ket{\widetilde{\pi}}\ket{0} \overset{V}{\rightsquigarrow} \ket{\widetilde{\pi}}\ket{\overline{\mu}_X}, \qquad \text{with complexity} \qquad \widetilde{O}\left(\sum_{j=0}^k (W(U,\ket{\widetilde{\pi}}) + W(U,\ket{\overline{\pi}(Y_j^{\pm})}) \cdot t \cdot \polylog(t,1/\epsilon)\right),\]
        such that measuring $\ket{\overline{\mu}_X}$ on a subset of the qubits produces a random variable $\overline{\mu}_X$ that satisfies
        \[|\E[\overline{\mu}_X] - \mu_X| \leq \left(\epsilon + 2^{14} \cdot \frac{\eta}{\epsilon}\right)\widetilde{\sigma}, \qquad \text{and} \qquad \Var[\overline{\mu}_X] \leq \left(\frac{1}{t^2} + 2^{32} \cdot \frac{\eta}{\epsilon^2}\right)\widetilde{\sigma}^2.\]
    \end{theorem}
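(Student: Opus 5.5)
The plan follows \cite[Algorithm~4]{cornelissen2023sublinear}, with each call to a bounded mean estimator replaced by the transducer of \Cref{thm:transducer-bdd-mean-est}. The algorithm splits the centered variable $X - \widetilde{m}$ into dyadic bands $(a_{j-1},a_j]$ on the positive and negative side. For each $j \in \{0,\dots,k\}$ and each sign, the random variable $Y_j^{\pm}$ is bounded in $[0,1]$ by construction. The shift $B_j/4$ and scaling $(1-B_j/2)$ keep its mean in $[B_j/4, 1-B_j/4]$, so that $\widetilde{\mu}_Y(1-\widetilde{\mu}_Y)$ is bounded away from zero at the appropriate scale. We then call $V_j^{\pm} := \mathtt{BddMeanEst}(Y_j^{\pm}, U, t_j, \epsilon_j)$, with repetition parameters $t_j$ chosen as in \cite{cornelissen2023sublinear}: roughly $t_j \propto t \cdot 2^{-j/2}$ up to polylogarithmic factors, so that the sum over $j$ stays $\widetilde{O}(t)$. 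We output $\widetilde{m} + \sum_j \pm a_j (\widetilde{\mu}_j^{\pm} - B_j/4)/(1-B_j/2)$.

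The transducer $V$ is the sequential composition (\Cref{thm:sequential-composition}) of the $2(k+1)$ transducers $V_j^{\pm}$, followed by an input-independent classical post-processing, which has zero cost. Each $V_j^{\pm}$ maps $\ket{\widetilde{\pi}}\ket{0}$ to $\ket{\widetilde{\pi}}\ket{\widetilde{\mu}_j^{\pm}}$ and restores $\ket{\widetilde{\pi}}$, so every stage again starts from $\ket{\widetilde{\pi}}$ tensored with fresh ancillas. The outputs are written in separate registers, and we can view each stage as a parallel composition with the identity on the earlier output registers (\Cref{thm:parallel-composition}). Summing the costs from \Cref{thm:transducer-bdd-mean-est} then gives exactly the claimed complexity $\widetilde{O}(\sum_j (W(U,\ket{\widetilde{\pi}}) + W(U,\ket{\overline{\pi}(Y_j^{\pm})}))\, t\,\polylog(t,1/\epsilon))$.

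For the statistical guarantees, first suppose $\eta = 0$. The unbiasedness of each $\widetilde{\mu}_j^{\pm}$ up to $\epsilon_j$, together with the variance bound $91\mu/t_j^2 + \epsilon_j$, gives the bias and variance bounds by the same computation as \cite[Theorem~3.? / Algorithm~4 analysis]{cornelissen2023sublinear}. That computation rests on three facts: independence across stages (each stage's randomness comes from measuring its own register, and the state is restored exactly), Chebyshev applied to the tails beyond $a_k$ using $\Var[X] \le \widetilde{\sigma}^2$ and $|\widetilde{m}-\mu_X|\le 17\widetilde{\sigma}$, and summation over $j$. The truncation at $k=\lceil\log(580)/\epsilon\rceil$ contributes bias at most $\epsilon\widetilde{\sigma}$.

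The main obstacle is the case $\eta>0$. The transducers estimate means under $\widetilde{\pi}$ rather than $\pi$, so each band mean shifts by at most $a_j\eta$ in absolute value. Summed naively over $j$, these shifts are dominated by $a_k = 2^{2k}\widetilde{\sigma}$, which would be far worse than $\eta/\epsilon$. My first attempt is to exploit that the band indicator restricts $|X-\widetilde{m}| \le a_j$ and to bound the TV-induced error of band $j$ more carefully. The shift on band $j$ is at most $\sum_\omega |\pi(\omega)-\widetilde{\pi}(\omega)|\cdot|X-\widetilde{m}|\mathbbm{1}_{\text{band}}$, which is at most $\min\{a_j\eta,\ \text{mass of the band}\cdot a_j\}$, and the band mass is at most $\widetilde{\sigma}^2/a_{j-1}^2$ by Chebyshev. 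Summing $\min\{a_j\eta, \widetilde{\sigma}^2/a_{j-1}\}$ over $j$ splits at the scale $a_j \approx \widetilde{\sigma}/\sqrt{\eta}$, and combined with the cutoff $a_k \approx \widetilde{\sigma}/\epsilon$ this should give an $O(\eta/\epsilon)\widetilde{\sigma}$ bias. The variance inflation follows similarly, from $\Var$ under $\widetilde{\pi}$ versus $\pi$ of the bounded band variables, which yields the $\eta/\epsilon^2$ term. Tracking the explicit constants $2^{14}$ and $2^{32}$ is routine but tedious.
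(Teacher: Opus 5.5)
Your overall architecture matches the paper: the bands $Y_j^\pm$, one $\mathtt{BddMeanEst}$ transducer per band, sequential composition restoring $\ket{\widetilde\pi}$, independent outputs, and the recombination $\widetilde m+\sum_j a_j(\widetilde\mu_j^+-\widetilde\mu_j^-)$. However, two concrete steps fail.

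\textbf{The repetition schedule.} Your choice $t_j\propto t\,2^{-j/2}$ breaks the variance bound. Since $Y_j^\pm\ge B_j/4$ pointwise and $\mu_{Y_j^\pm}\le 1160/2^{2j}$, every band has $a_j^2\widetilde\mu_{Y_j^\pm}$ of order $\widetilde\sigma^2$ (with $a_j=2^j\widetilde\sigma$, see below). The guarantee $\Var[\overline\mu_j^\pm]\le 91\widetilde\mu_{Y_j^\pm}/t_j^2+\epsilon_j$ therefore charges each band at least of order $\widetilde\sigma^2/t_j^2$. With decreasing $t_j$ the total is $\Theta(2^k\widetilde\sigma^2/t^2)=\Theta(\widetilde\sigma^2/(\epsilon t^2))$, and no polylogarithmic factor repairs this. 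The paper instead uses the same $t'=2^{11}t\sqrt{k+1}$ for every band, together with $\epsilon'=\epsilon^2/(2^{27}t^2)$. The $\epsilon_j$ must be this small because they get multiplied by $a_j^2\le 4^k\widetilde\sigma^2=O(\widetilde\sigma^2/\epsilon^2)$. The $\sqrt{k+1}=O(\sqrt{\log(1/\epsilon)})$ overhead is already covered by the $t\polylog(t,1/\epsilon)$ that the claimed complexity charges to each band, so shrinking $t_j$ gains nothing.

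\textbf{The $\eta$-dependence.} Your ``main obstacle'' comes from reading $a_j=2^{2j}\widetilde\sigma$ and $k=\lceil\log(580)/\epsilon\rceil$ literally. As its constants show, the paper's proof works with $a_j=2^j\widetilde\sigma$ and $k=\lceil\log_2(580/\epsilon)\rceil$: the factor $2$ comes from $a_{j-1}=a_j/2$, and it uses $580=2(1+17^2)$, $\sum_j 4a_j\le 9280\widetilde\sigma/\epsilon$ and $a_k\ge 580\widetilde\sigma/\epsilon$. This is the same cutoff $a_k\asymp\widetilde\sigma/\epsilon$ you invoke at the end. With these parameters, the naive argument you dismissed is exactly the paper's:
\begin{itemize}
\item Since $Y_j^\pm\in[0,1]$, we have $|\widetilde\mu_{Y_j^\pm}-\mu_{Y_j^\pm}|\le\eta$. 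The bias is then at most $\sum_j 4a_j(\epsilon'+\eta)+\E_\pi[(X-\widetilde m)^2]/a_k\le 9280(\epsilon'+\eta)\widetilde\sigma/\epsilon+\epsilon\widetilde\sigma/2$.
\item The $\eta/\epsilon^2$ variance term comes from $\widetilde\mu_{Y_j^\pm}\le 1160/2^{2j}+\eta$ inside $91\widetilde\mu_{Y_j^\pm}/t'^2$. Weighted by $a_j^2$ and summed, this gives $O(4^k\eta/t'^2)\widetilde\sigma^2$.
\end{itemize}

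\textbf{Your refined per-band bound is false.} Chebyshev controls $\pi(\text{band})$ but not $\widetilde\pi(\text{band})$. Take $\pi$ a point mass with $X=\widetilde m$, and let $\widetilde\pi$ move mass $\eta$ to a point with $X-\widetilde m=a_k$. Band $k$ has $\pi$-mass $0$, so your bound predicts no shift, yet the band-$k$ term moves by $a_k\eta$. Under the literal parameters this shift is exponentially large in $1/\epsilon$, so no sharper analysis could rescue that reading. The $\eta/\epsilon$ comes precisely from the truncation level $a_k\asymp\widetilde\sigma/\epsilon$.
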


    \begin{proof}
        We follow the construction in \cite[Algorithm~4]{cornelissen2023sublinear}. For all $j \in \{0, \dots, k\}$ and $\omega \in \Omega$, we easily observe that $0 \leq Y_j^{\pm}(\omega) \leq 1$. For all $j \in [k]$, we have
        \begin{align*}
            \frac{1}{2^{2j}} &\leq \frac{B_j}{4} \leq \mu_{Y_j^{\pm}} \leq \frac{B_j}{4} + \underset{\omega \sim \pi}{\E} \left[\pm\frac{X(\omega) - \widetilde{m}}{a_j} \cdot \mathbbm{1}_{\pm(X(\omega) - \widetilde{m}) \in (a_{j-1},a_j]}\right] \\
            &\leq \frac{580}{2^{2j}} + 2\underset{\omega \sim \pi}{\E} \left[\frac{(X(\omega) - \widetilde{\mu})^2}{a_j^2}\right] \leq \frac{580}{2^{2j}} + \frac{2\underset{\omega \sim \pi}{\Var}[X(\omega)] + 2|\mu_X - \widetilde{m}|^2}{a_j^2} \leq \frac{580}{2^{2j}} + \frac{580\widetilde{\sigma}^2}{2^{2j}\widetilde{\sigma}^2} = \frac{1160}{2^{2j}}.
        \end{align*}
        Consequently, we have that $1/2^{2j} \leq \widetilde{\mu}_{Y_j^{\pm}} \leq 1160/2^{2j} + \eta$.
        
        We now use \Cref{thm:transducer-bdd-mean-est} to obtain a bidirectional canonical transducers $V_j^{\pm} := \mathtt{MeanEst}(Y_j^{\pm}, U, t', \epsilon')$, with $t' = 2^{11}t\sqrt{k+1}$ and $\epsilon' = \epsilon^2/(2^{27}t^2)$. Then, the transduction complexity is as claimed. We obtain outcome random variables $\overline{\mu}_j^{\pm}$, and we define
        \[\widetilde{\mu} := \widetilde{m} + \sum_{j=0}^k a_j\left(\widetilde{\mu}_j^+ - \widetilde{\mu}_j^-\right), \qquad \text{where} \qquad \widetilde{\mu}_j^{\pm} = \frac{\overline{\mu}_j^{\pm} - \frac{B_j}{4}}{1 - \frac{B_j}{2}}.\]
        We observe that
        \begin{align*}
            |\E[\widetilde{\mu}] - \mu| &\leq \sum_{j=0}^k a_j\left(\frac{\left|\E[\overline{\mu}_j^+] - \mu_{Y_j^+}\right|}{1 - \frac{B_j}{2}} + \frac{\left|\E[\overline{\mu}_j^-] - \mu_{Y_j^-}\right|}{1 - \frac{B_j}{2}}\right) + \underset{\omega \sim \pi}{\E}[|X(\omega) - \widetilde{m}| \cdot \mathbbm{1}_{|X(\omega) - \widetilde{m}| \geq a_k}] \\
            &\leq \sum_{j=0}^k 4a_j(\epsilon' + \eta) + \underset{\omega \sim \pi}{\E} \left[\frac{(X(\omega) - \widetilde{m})^2}{a_k}\right] \leq \frac{9280}{\epsilon} \left(\frac{\epsilon^2}{2^{27}} + \eta\right)\widetilde{\sigma} + \frac{290\epsilon}{580} \widetilde{\sigma} < \left(2^{14} \cdot \frac{\eta}{\epsilon} + \epsilon\right)\widetilde{\sigma}.
        \end{align*}
        Finally, since all the random variables are independent, we have
        \begin{align*}
            \Var[\widetilde{\mu}] &= \sum_{j=0}^k a_j^2\left(\Var[\widetilde{\mu}_j^+] + \Var[\widetilde{\mu}_j^-]\right) \leq 4\sum_{j=0}^k a_j^2\left(\Var[\overline{\mu}_j^+] + \Var[\overline{\mu}_j^-]\right) \leq 4\sum_{j=0}^k a_j^2 \cdot \left(\frac{91(\widetilde{\mu}_{Y_j^+} + \widetilde{\mu}_{Y_j^-})}{(t')^2} + 2\epsilon'\right) \\
            &\leq 8\sum_{j=0}^k 2^{2j}\widetilde{\sigma}^2\left(\frac{105560}{2^{2j}(t')^2} + \frac{91\eta}{(t')^2} + \frac{\epsilon^2}{18560}\right) = 8\widetilde{\sigma}^2 \cdot \left(\frac{105560(k+1)}{(t')^2} + \frac{91 \cdot 4^{k+1}\eta}{(t')^2} + 4^{k+1} \cdot \frac{\epsilon^2}{2^{27}t^2}\right) \\
            &< \left(\frac{1}{t^2} + 2^{32} \cdot \frac{\eta}{\epsilon^2}\right)\widetilde{\sigma}^2.\qedhere
        \end{align*}
    \end{proof}


    \section{Reflection through the stationary state with transducers}
    \label{sec:quantum-walks-transducers}

    In this section, we construct a transducer that reflects through the stationary distribution of a random walk on an electrical network, given transducers that construct and reflect around the star states at every vertex. The transducer we build is based on the quantum walk construction introduced by Apers, Roland and Zhang~\cite{apers2026elfs}.


    \subsection{Transducers for electrical networks}

    We start by defining several Hilbert spaces associated to a walk on an electrical network.

    \begin{definition}
        Let $(G = (V,E),r)$ be an electrical network with $R = 1$. For all $v \in V$, we write
        \[\ket{*_v} := \sum_{w \in N(v)} \sqrt{\frac{r_v}{r_{vw}}}\ket{w} \in \C^V.\]
        Next, we define the spaces $\mathcal{H}_1, \mathcal{H}_2 \subseteq (C^V)^{\otimes 2}$ as
        \[\mathcal{H}_1 := \Span\{\ket{v}\ket{*_v}\}, \qquad \text{and} \qquad \mathcal{H}_2 := \Span\{\ket{v,w} + \ket{w,v} : v,w \in V\}.\]
        We also define the isometry $\mathcal{N} : \C^V \to (\C^{V})^{ \otimes 2}$ as $\mathcal{N} : \ket{v} \mapsto \ket{v}\ket{*_v}$ so that $\mathcal{H}_1 = \mathrm{Im}(\mathcal{N})$.
    \end{definition}
    
    Note that measuring $\ket{*_v}$ in the computational basis amounts to sampling a step of the random walk at vertex $v \in V$, as defined in \Cref{def:random-walk-electrical-network}. Moreover, observe from \Cref{thm:electrical-network-walk} that this random walk is reversible, with $\pi_v = 1/r_v$. It turns out that the stationary distribution exactly captures the intersection between the two spaces $\mathcal{H}_1$ and $\mathcal{H}_2$ we just defined, as shown in the following lemma.
    
    \begin{lemma}
        \label{lem:electrical-network-properties}
        Let $(G = (V,E), r)$ be an electrical network with $R = 1$. For every edge $vw \in E$ we associate an arbitrary positive direction, and we write $\C^E := \mathrm{span}\{\ket{v,w},\ket{w,v} : vw \in E\}$. For all $f : E \to \C$, we define
        \[\ket{f} := \sum_{vw \in E} f_{vw}\sqrt{r_{vw}}(\ket{v,w} - \ket{w,v}) \in \C^E.\]
        For every net-flow $\overline{\pi} \in \C^V$, i.e., $\sum_{v \in V} \overline{\pi}_v = 0$, we let $f^{\overline{\pi}}$ be the minimum-energy flow that has $\delta_f = \overline{\pi}$, and we write
        \[\ket{\overline{\pi}} := \sum_{v \in V} \frac{\overline{\pi}_v}{\sqrt{\pi_v}}\ket{v}.\]
        Then,
        \begin{enumerate}[nosep]
            \item $\mathcal{N}(\C^V) = \mathcal{H}_1$.
            \item $\mathcal{H}_1 \cap \mathcal{H}_2 = \Span\{\mathcal{N}\ket{\pi}\}$.
            \item $\mathcal{H}_2^{\perp} \cap \C^E = \Span\{\ket{f} : f : E \to \C\}$.
            \item If $f : E \to \C$, then $\Pi_{\mathcal{H}_1}\ket{f} = \sum_{v \in V} \delta_f(v)/\sqrt{\pi_v}\ket{v}\ket{*_v}$.
            \item $\mathcal{H}_1^{\perp} \cap \mathcal{H}_2^{\perp} \cap \C^E$ is the circulation space, corresponding to all flows $\ket{f}$ with $\delta_f = 0$ (zero net-flow).
            \item $\mathcal{H}_2^{\perp} \cap (\mathcal{H}_1^{\perp} \cap \mathcal{H}_2^{\perp})^{\perp} \cap \C^E = \Span\{\ket{f^{\overline{\pi}}} : \overline{\pi} \in \C^V \text{ a net-flow}\}$.
            \item $\Pi_{\mathcal{H}_1}\ket{f^{\overline{\pi}}} = \mathcal{N}\ket{\overline{\pi}}$.
        \end{enumerate}
    \end{lemma}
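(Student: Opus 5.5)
The plan is to verify the seven items in order, each by a short linear-algebra computation in the standard basis of $(\C^V)^{\otimes 2}$, with $\pi_v = 1/r_v$ from \Cref{thm:electrical-network-walk} (using $R=1$). Item 1 is immediate, since $\mathcal{N}$ is an isometry and $\mathcal{H}_1$ was defined as its image; isometry follows because the vectors $\ket{v}\ket{*_v}$ are orthonormal, as $\norm{\ket{*_v}}^2 = r_v\sum_w 1/r_{vw} = 1$.

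For item 2, I write a general element of $\mathcal{H}_1$ as $\sum_v c_v \ket{v}\ket{*_v}$. Membership in $\mathcal{H}_2$, the symmetric subspace, amounts to invariance under SWAP, which means $c_v\sqrt{r_v/r_{vw}} = c_w\sqrt{r_w/r_{vw}}$ for every edge $vw$. Hence $c_v\sqrt{r_v}$ is constant on the connected graph, so $c_v \propto 1/\sqrt{r_v} = \sqrt{\pi_v}$, giving $\mathcal{N}\ket{\pi}$.

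For item 3, I restrict to $\C^E$, where $\mathcal{H}_2^\perp$ is the antisymmetric part. This part is spanned by the vectors $\ket{v,w}-\ket{w,v}$ for $vw \in E$, which is exactly the set of all $\ket{f}$. Item 4 is the key calculation. The coefficient $\bra{v}\bra{*_v}\ket{f}$ collects the term $f_{vw}\sqrt{r_{vw}}\sqrt{r_v/r_{vw}}$ from each outgoing edge and the term with the opposite sign from each incoming edge. This gives $\sqrt{r_v}\,\delta_f(v) = \delta_f(v)/\sqrt{\pi_v}$, which yields the claimed formula.

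Items 5–7 follow from item 4. By item 4, $\ket{f} \perp \mathcal{H}_1$ if and only if $\delta_f = 0$, which gives item 5. Since $\norm{\ket{f}}^2 = 2\sum_e |f_e|^2 r_e$, the inner product on flow vectors is twice the energy inner product. The orthogonal complement of the circulations inside the flow space therefore consists of the flows that are energy-orthogonal to all circulations. These are exactly the minimum-energy (potential/electric) flows $f^{\overline{\pi}}$, by the standard characterization of electric flows, giving item 6. Item 7 is then item 4 applied with $\delta_{f^{\overline{\pi}}} = \overline{\pi}$.

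I expect the main obstacle to be item 6: the identification of the orthocomplement of circulations with minimum-energy flows. It needs the first-order optimality argument, namely that a minimizer $f^{\overline{\pi}}$ is energy-orthogonal to every circulation, together with a dimension count showing that these flows span the whole complement as $\overline{\pi}$ ranges over net-flows.
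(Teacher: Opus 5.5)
Your proposal is correct and follows the paper's proof essentially item by item. Items 1–4 are direct coefficient computations, including the same SWAP-invariance/detailed-balance argument for item 2 and the same outgoing/incoming edge bookkeeping for item 4; items 5–7 are then read off from item 4 together with the characterization of minimum-energy flows as exactly the flows orthogonal to all circulations. For item 6 you can skip the dimension count: if $\ket{g}$ is orthogonal to every circulation $c$, then $\norm{\ket{g}+\ket{c}}^2 = \norm{\ket{g}}^2 + \norm{\ket{c}}^2$, so $g$ is the unique minimizer for its net flow, i.e.\ $g = f^{\delta_g}$ directly.
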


    \begin{proof}
        The first claim is clear. For the second one, suppose that $\mathcal{N}\ket{\psi} \in \mathcal{H}_2$. We write $\ket{\psi} = \sum_{v \in V} \psi_v\ket{v}$, and let $p_{vw} = r_v/r_{vw}$ be the transition probability from $v$ to $w$. We use detailed balance to show for all $v,w \in V$ that
        \[\psi_v \cdot \sqrt{p_{vw}} = \psi_v \cdot \sqrt{\frac{r_v}{r_{vw}}} = \psi_w \cdot \sqrt{\frac{r_w}{r_{vw}}} = \psi_w \cdot \sqrt{p_{wv}} = \psi_w \cdot \sqrt{\frac{\pi_w}{\pi_v}} \cdot \sqrt{p_{vw}}.\]
        Here the second equality follows from containment of $\mathcal{N}\ket{\psi} \in \mathcal{H}_2$, implying that $\bra{v,w}\mathcal{N}\ket{\psi} = \bra{w,v}\mathcal{N}\ket{\psi}$.
        Hence, we find that $\psi_v/\psi_w = \sqrt{\pi_w/\pi_v}$, from which we deduce that $\psi_v \propto \sqrt{\pi_v}$.

        For the third item, observe that $\mathcal{H}_2^{\perp}$ is the anti-symmetric subspace, so restricting that to $\C^E$ restricts the space to embeddings of flows $f : E \to \C$. 
        
        For the fourth item, we observe that for any $f : E \to \C$, we have for all $v \in V$,
        \[(\bra{v}\bra{*_v})\ket{f} = \sum_{w \in N^+(v)} \sqrt{\frac{r_v}{r_{vw}}} \cdot f_{vw}\sqrt{r_{vw}} - \sum_{w \in N^-(v)} \sqrt{\frac{r_v}{r_{vw}}} \cdot f_{vw}\sqrt{r_{vw}} = \frac{\delta_f(v)}{\sqrt{\pi_v}}.\]

        For the fifth item, observe that a flow $f : E \to \C$ is a circulation if and only if $\delta_f \equiv 0$. Using items 3 and 4 yields the statement.

        For the sixth item, observe that a flow is a minimum-energy flow for some net-flow $\overline{\pi}$ if and only if it is orthogonal to all circulations. Combining items 3 and 5 yields the result.

        Item 7 follows directly from item 4.
    \end{proof}

    As the stationary distribution is found by taking the intersection of two subspaces, we can construct a transducer that reflects through it using the transducer designed by Apers, Roland and Zhang, see~\Cref{thm:two-subspace-transducer}. We analyze the resulting catalyst state and transduction complexity in the following lemma.

    \begin{lemma}
        \label{lem:net-flow-catalyst}
        Let $(G = (V,E),r)$ be an electrical network with $R = 1$. The operator $U = (2\Pi_{\mathcal{H}_1} - I)(2\Pi_{\mathcal{H}_2} - I)$ is a transducer with public and private spaces $\mathcal{H}_1$ and $\mathcal{H}_1^{\perp}$, respectively, that acts as
        \begin{align*}
            \mathcal{N}\ket{\pi} \overset{U}{\rightsquigarrow} \mathcal{N}\ket{\pi}, & \quad \text{with catalyst} \quad 0, & \text{and complexity} \quad & W(U,\ket{\pi}) = 0, \\
            \mathcal{N}\ket{\overline{\pi}} \overset{U}{\rightsquigarrow} -\mathcal{N}\ket{\overline{\pi}}, & \quad \text{with catalyst} \quad \Pi_{\mathcal{H}_1^{\perp}}\ket{f^{\overline{\pi}}}, & \text{and complexity} \quad & W(U,\ket{\overline{\pi}}) = R_{\mathrm{eff}} (G,r;\overline{\pi}) - \norm{\ket{\overline{\pi}}}^2,
        \end{align*}
        for all net-flows $\overline{\pi} \in \C^V$.
    \end{lemma}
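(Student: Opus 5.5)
We instantiate \Cref{thm:two-subspace-transducer} with $\Pi := \Pi_{\mathcal{H}_1^{\perp}}$ and $\Delta := \Pi_{\mathcal{H}_2^{\perp}}$. Since $2\Pi_{\mathcal{H}_1^\perp}-I = -(2\Pi_{\mathcal{H}_1}-I)$ and likewise for $\mathcal{H}_2$, the two signs cancel and $(2\Pi-I)(2\Delta-I) = U$. The theorem then tells us that $U$ is a transducer on $\mathrm{Ker}(\Pi) = \mathcal{H}_1$ with catalyst space $\mathrm{Im}(\Pi)=\mathcal{H}_1^{\perp}$. The states reflected with eigenvalue $+1$ are those in $\mathrm{Ker}(\Pi)\cap\mathrm{Ker}(\Delta) = \mathcal{H}_1\cap\mathcal{H}_2$, which by item 2 of \Cref{lem:electrical-network-properties} is $\Span\{\mathcal{N}\ket{\pi}\}$, with catalyst $0$. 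This gives the first line of the statement.

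For the second line, a net-flow $\overline{\pi}$ satisfies $\sum_v\overline{\pi}_v=0$, which is equivalent to $\langle \pi|\overline{\pi}\rangle = 0$ for $\ket{\overline\pi}=\sum_v \overline\pi_v/\sqrt{\pi_v}\ket v$. Hence $\mathcal{N}\ket{\overline{\pi}}\in\mathcal{H}_1\cap(\mathcal{H}_1\cap\mathcal{H}_2)^\perp$, and the theorem gives reflection with catalyst $\ket{w} = (\Pi-\Pi\Delta\Pi)^+\Delta\,\mathcal{N}\ket{\overline{\pi}}$. We aim to show $\ket{w} = \Pi_{\mathcal{H}_1^\perp}\ket{f^{\overline{\pi}}}$.

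Rather than computing the pseudo-inverse directly, we verify the transducer equation. Write $\ket{f}:=\ket{f^{\overline\pi}}$; by items 3 and 6 it lies in $\mathcal{H}_2^\perp$, and by item 7, $\Pi_{\mathcal{H}_1}\ket{f} = \mathcal{N}\ket{\overline{\pi}}$. Thus
\[
\mathcal{N}\ket{\overline\pi}\oplus\ket{w} = \Pi_{\mathcal{H}_1}\ket f + \Pi_{\mathcal{H}_1^\perp}\ket f = \ket f .
\]
Applying $2\Pi_{\mathcal{H}_2}-I$ sends $\ket f\mapsto -\ket f$, and then $2\Pi_{\mathcal{H}_1}-I$ gives $-\Pi_{\mathcal{H}_1}\ket f + \Pi_{\mathcal{H}_1^\perp}\ket f = -\mathcal{N}\ket{\overline\pi}\oplus\ket w$. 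This confirms that $\ket w$ is a valid catalyst.

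To match the canonical (minimal) catalyst of \Cref{thm:transducer-catalyst}, we must show $\ket w$ is orthogonal to the $1$-eigenspace of $\Pi U\Pi$ restricted to $\mathcal{H}_1^\perp$. I expect this to be the main obstacle. The $1$-eigenvectors of $\Pi_{\mathcal{H}_1^\perp}(2\Pi_{\mathcal{H}_2}-I)$ on $\mathcal{H}_1^\perp$ (up to the sign conventions) are exactly the vectors of $\mathcal{H}_1^\perp\cap\mathcal{H}_2^\perp$, which is larger than the circulations of item 5 since it also contains vectors outside $\C^E$. Our $\ket w$ is orthogonal to all of these. Indeed, $\ket f$ is orthogonal to circulations by item 6, and it is supported in $\C^E$, so $\Pi_{\mathcal{H}_1^\perp}\ket f = \ket f - \Pi_{\mathcal{H}_1}\ket f$ is orthogonal to any $c\in\mathcal{H}_1^\perp\cap\mathcal{H}_2^\perp$, because $\langle c|\Pi_{\mathcal{H}_1}\ket f\rangle = 0$. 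The remaining fact needed is $\langle c|f\rangle=0$ for every such $c$, and it reduces to showing that the $\C^E$-part of $c$ is itself a circulation, by items 3 to 5. Alternatively, one can simply cite that \Cref{thm:two-subspace-transducer} already gives the minimal catalyst and identify the two by uniqueness.

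Finally, for the complexity we use Pythagoras in the decomposition $\ket f = \Pi_{\mathcal{H}_1}\ket f+\Pi_{\mathcal{H}_1^\perp}\ket f$:
\[
\norm{\ket w}^2 = \norm{\ket f}^2 - \norm{\mathcal{N}\ket{\overline\pi}}^2 .
\]
The antisymmetric embedding doubles the energy, so $\norm{\ket f}^2 = 2\sum_e |f_e|^2 r_e$ relative to \Cref{def:effective-resistance}, while $\mathcal{N}$ is an isometry. Up to this normalization convention, which we would track so that $\norm{\ket{f^{\overline\pi}}}^2=R_{\mathrm{eff}}(G,r;\overline\pi)$ (for instance by normalizing the embedding by $1/\sqrt2$), we obtain $W = R_{\mathrm{eff}}(G,r;\overline\pi)-\norm{\ket{\overline\pi}}^2$.
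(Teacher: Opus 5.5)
Your route is the paper's: instantiate \Cref{thm:two-subspace-transducer} with $\Pi=\Pi_{\mathcal{H}_1^\perp}$ and $\Delta=\Pi_{\mathcal{H}_2^\perp}$, check $U\ket{f^{\overline\pi}}$ directly using $\ket{f^{\overline\pi}}=\mathcal{N}\ket{\overline\pi}+\Pi_{\mathcal{H}_1^\perp}\ket{f^{\overline\pi}}$, and finish with Pythagoras. Your minimality discussion is a correct addition, since the direct computation alone only gives $W\le\norm{\Pi_{\mathcal{H}_1^\perp}\ket{f^{\overline\pi}}}^2$. The step you leave open is routine. $\Pi_{\C^E}$ commutes with $\Pi_{\mathcal{H}_2}$ (the swap preserves $\C^E$) and with $\Pi_{\mathcal{H}_1}$ (since $\mathcal{H}_1\subseteq\C^E$). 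Hence for $c\in\mathcal{H}_1^\perp\cap\mathcal{H}_2^\perp$, the vector $\Pi_{\C^E}c$ lies in $\mathcal{H}_1^\perp\cap\mathcal{H}_2^\perp\cap\C^E$, which is the circulation space by item~5.

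The gap is your last step. The factor $2$ you spotted is genuine, and no renormalization of the flow embedding removes it. Indeed $\Pi_{\mathcal{H}_1}(\ket{f^{\overline\pi}}/\sqrt2)=\mathcal{N}\ket{\overline\pi}/\sqrt2$, so the rescaled vector no longer satisfies the identity $\mathcal{N}\ket{\overline\pi}\oplus\ket w=\ket f$ on which your verification rests. Rescaling back by linearity, the catalyst of $\mathcal{N}\ket{\overline\pi}$ is again $\Pi_{\mathcal{H}_1^\perp}\ket{f^{\overline\pi}}$, with the unnormalized embedding forced by item~7. You have identified the unique minimal catalyst, so the complexity is intrinsic to $U$ and the source state: $W=\norm{\ket{f^{\overline\pi}}}^2-\norm{\ket{\overline\pi}}^2=2R_{\mathrm{eff}}(G,r;\overline\pi)-\norm{\ket{\overline\pi}}^2$, with $R_{\mathrm{eff}}$ as in \Cref{def:effective-resistance}.

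A small example confirms this. Take two vertices $a,b$ joined by one edge of resistance $2$, which gives $R=1$ and $\pi_a=\pi_b=\tfrac12$. For $\overline\pi=(1,-1)$, the state $\mathcal{N}\ket{\overline\pi}=\sqrt2(\ket{a,b}-\ket{b,a})$ is mapped by $U$ to its negative with catalyst $0$. Meanwhile $R_{\mathrm{eff}}=2$ and $\norm{\ket{\overline\pi}}^2=4$, so the stated formula would give $-2$.

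The equality as written therefore holds only if $R_{\mathrm{eff}}$ is redefined as the minimal $\norm{\ket f}^2$ in the antisymmetric embedding. The paper's final line drops the same factor. By contrast, \Cref{thm:refl-stationary} correctly uses $\norm{\ket{f^{\overline\pi}}}^2=2R_{\mathrm{eff}}$, and only an $O(R_{\mathrm{eff}})$ bound is needed downstream. So you should prove and state the corrected identity, rather than the original one ``up to convention''.
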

    
    
    
    \begin{proof}
    The lemma follows from the general two-subspace transducer in \Cref{thm:two-subspace-transducer} with $\Pi = \Pi_{\mathcal{H}_1^{\perp}}$ and $\Delta = \Pi_{\mathcal{H}_2^{\perp}}$.
    For this particular case however, we can also provide a simpler, self-contained proof.
    First, we trivially have that
    \[
    U \mathcal{N} \ket{\pi}
    = \mathcal{N} \ket{\pi},
    \]
    and this proves the first equation.
    Then, for $\ket{\bar\pi}$ that satisfies $\braket{\bar\pi}{\pi}=0$ and has corresponding electric flow vector $\ket{f^{\bar\pi}}$, we have
        \[
    U(\mathcal{N}\ket{\bar\pi} + \Pi_{\mathcal{H}_1^\perp} \ket{f^{\bar\pi}})
    = U \ket{f^{\bar\pi}}
    = -(2\Pi_{\mathcal{H}_1}-I) \ket{f^{\bar\pi}}
    = -\mathcal{N}\ket{\bar\pi} + \Pi_{\mathcal{H}_1^\perp} \ket{f^{\bar\pi}}.
    \]
    It remains to note that
    \[
    \| \Pi_{\mathcal{H}_1^\perp} \ket{f^{\bar\pi}} \|^2
    = \| \ket{f^{\bar\pi}} \|^2 - \| \mathcal{N}\ket{\bar\pi} \|^2
    = R_{\mathrm{eff}} (G,r;\overline{\pi}) - \norm{\ket{\overline{\pi}}}^2. \qedhere
    \]
    \end{proof}

    The construction in \Cref{lem:net-flow-catalyst} generates a transducer that reflects through the stationary state. However, similarly to the construction in \cite{apers2026elfs}, the public and private spaces, $\mathcal{H}_1$ and $\mathcal{H}_1^{\perp}$, respectively, depend on the star-states $\ket{*_v}$. Thus, if we want to implement a canonical transducer that only accesses the graph structure through an oracle, then we must make sure that the public and private spaces are input-independent. This is the objective of the following theorem.





    \begin{theorem}
        \label{thm:transducer-reflection-stationary}
        Let $(G = (V,E), r)$ be an electrical network with $R = 1$. Let $\mathcal{H} = (\C^V)^{\otimes 2}$ and $U = (2\Pi_{\mathcal{H}_1} - I)(2\Pi_{\mathcal{H}_2} - I) \in \mathcal{U}(\mathcal{H})$ be the transducer constructed in \Cref{lem:net-flow-catalyst}. Now, let $A \in \mathcal{U}(\mathcal{H}^{\oplus 2})$ be defined as
        \[A = (H \otimes I_{\mathcal{H}})(I_{\mathcal{H}} \oplus (2\Pi_{\mathcal{H}_1} - I_{\mathcal{H}}))(H \otimes I_{\mathcal{H}}), \qquad \text{where} \qquad H = \frac{1}{\sqrt{2}}\begin{bmatrix}
            1 & 1 \\ 1 & -1
        \end{bmatrix}.\]
        Next, let $T = A^{\dagger}(U \oplus I_{\mathcal{H}})A$. Then, $T$ is a transducer with public space $\mathcal{H}$ and private space $\mathcal{H}$, and acts as
        \begin{align*}
            \mathcal{N}\ket{\pi} \overset{T}{\rightsquigarrow} \mathcal{N}\ket{\pi}, & \quad \text{with catalyst} \quad 0, & \text{and complexity} \quad & W(U,\ket{\pi}) = 0, \\
            \mathcal{N}\ket{\overline{\pi}} \overset{T}{\rightsquigarrow} -\mathcal{N}\ket{\overline{\pi}}, & \quad \text{with catalyst} \quad \Pi_{\mathcal{H}_1^{\perp}}\ket{f^{\overline{\pi}}}, & \text{and complexity} \quad & W(U,\ket{\overline{\pi}}) = R_{\mathrm{eff}} (G,r;\overline{\pi}) - \norm{\ket{\overline{\pi}}}^2,
        \end{align*}
        for all net-flows $\overline{\pi} \in \C^V$.
    \end{theorem}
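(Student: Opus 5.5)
The plan is to compute $A$ explicitly as a $2\times 2$ block operator on $\mathcal{H}^{\oplus 2}$. Once that is done, the claimed action of $T$ reduces to the catalyst identity of \Cref{lem:net-flow-catalyst}. Throughout, write $\Pi_1 := \Pi_{\mathcal{H}_1}$ and $\Pi_1^{\perp} := I_{\mathcal{H}} - \Pi_1$. Multiplying out the three factors gives
\[A = \frac12\begin{bmatrix} 1 & 1 \\ 1 & -1\end{bmatrix}\begin{bmatrix} I & 0 \\ 0 & 2\Pi_1 - I\end{bmatrix}\begin{bmatrix} 1 & 1 \\ 1 & -1\end{bmatrix} = \begin{bmatrix} \Pi_1 & \Pi_1^{\perp} \\ \Pi_1^{\perp} & \Pi_1 \end{bmatrix}.\]
This block matrix is Hermitian, and since $A$ is unitary it is therefore an involution, so $A^{\dagger} = A$. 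Concretely, $A$ swaps the $\mathcal{H}_1^{\perp}$-components of the two copies of $\mathcal{H}$ and leaves their $\mathcal{H}_1$-components in place.

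With $A$ in hand, I take the first copy of $\mathcal{H}$ as public space and the second copy as private space; neither depends on the star states. I then track a general input $\ket{\psi} \oplus \ket{w}$ with $\ket{\psi} \in \mathcal{H}_1$ and $\ket{w} \in \mathcal{H}_1^{\perp}$. Applying $A$ sends it to $(\ket{\psi} + \ket{w}) \oplus 0$. The operator $U \oplus I$ then acts on the first copy only. For the second stage I use \Cref{lem:net-flow-catalyst}:
\begin{itemize}
\item For $\ket{\psi} = \mathcal{N}\ket{\pi}$ with $\ket{w} = 0$, the state is fixed.
\item For $\ket{\psi} = \mathcal{N}\ket{\overline{\pi}}$ with $\ket{w} = \Pi_{\mathcal{H}_1^{\perp}}\ket{f^{\overline{\pi}}}$, the state $\mathcal{N}\ket{\overline{\pi}} + \Pi_{\mathcal{H}_1^{\perp}}\ket{f^{\overline{\pi}}}$ is mapped to $-\mathcal{N}\ket{\overline{\pi}} + \Pi_{\mathcal{H}_1^{\perp}}\ket{f^{\overline{\pi}}}$.
\end{itemize}
In both cases the output again has an $\mathcal{H}_1$-part and an $\mathcal{H}_1^{\perp}$-part. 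Applying $A^{\dagger} = A$ moves the $\mathcal{H}_1^{\perp}$-part back into the private copy. The result is $\mathcal{N}\ket{\pi} \oplus 0$ and $(-\mathcal{N}\ket{\overline{\pi}}) \oplus \Pi_{\mathcal{H}_1^{\perp}}\ket{f^{\overline{\pi}}}$ respectively.

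This verifies \Cref{def:transducer}: the public state is transformed as claimed and the private catalyst is returned unchanged. The catalysts coincide with those of \Cref{lem:net-flow-catalyst}, so the complexities are $0$ and $\norm{\Pi_{\mathcal{H}_1^{\perp}}\ket{f^{\overline{\pi}}}}^2 = R_{\mathrm{eff}}(G,r;\overline{\pi}) - \norm{\ket{\overline{\pi}}}^2$, exactly as computed in that lemma. These are upper bounds on $W(T,\cdot)$, which suffices.

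The main obstacle is bookkeeping: checking that the catalyst of \Cref{lem:net-flow-catalyst} really lies in $\mathcal{H}_1^{\perp}$, so that $A$ relocates it cleanly into the private register and back. This holds by construction, since the catalyst is defined as a projection onto $\mathcal{H}_1^{\perp}$. No further estimate is needed.
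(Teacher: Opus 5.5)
Your proof is correct and follows essentially the same route as the paper: you push the two input--catalyst pairs through $A$, then through $U \oplus I_{\mathcal{H}}$ (invoking \Cref{lem:net-flow-catalyst}), then through $A^{\dagger}$. Your closed form $A = \begin{bmatrix} \Pi_{\mathcal{H}_1} & \Pi_{\mathcal{H}_1^{\perp}} \\ \Pi_{\mathcal{H}_1^{\perp}} & \Pi_{\mathcal{H}_1}\end{bmatrix} = A^{\dagger}$ is a tidier packaging of the paper's step-by-step Hadamard computation, and it covers both $\pm\mathcal{N}\ket{\overline{\pi}}$ cases at once.
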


    \begin{proof}
        We first analyze the action of $A$. We write $X = I_{\mathcal{H}} \oplus (2\Pi_{\mathcal{H}_1} - I_{\mathcal{H}})$, and observe that
        \begin{align*}
            A : \begin{bmatrix}
                \mathcal{N}\ket{\pi} \\
                0
            \end{bmatrix} \overset{H \otimes I}{\mapsto} \frac{1}{\sqrt{2}}\begin{bmatrix}
                \mathcal{N}\ket{\pi} \\
                \mathcal{N}\ket{\pi}
            \end{bmatrix} &\overset{X}{\mapsto} \frac{1}{\sqrt{2}}\begin{bmatrix}
                \mathcal{N}\ket{\pi} \\
                \mathcal{N}\ket{\pi}
            \end{bmatrix} \overset{H \otimes I}{\mapsto} \begin{bmatrix}
                \mathcal{N}\ket{\pi} \\
                0
            \end{bmatrix}, \qquad \text{and} \\
            A : \begin{bmatrix}
                \pm\mathcal{N}\ket{\overline{\pi}} \\
                \Pi_{\mathcal{H}_1^{\perp}}\ket{f^{\overline{\pi}}}
            \end{bmatrix} \overset{H \otimes I}{\mapsto} \frac{1}{\sqrt{2}}\begin{bmatrix}
                \pm\mathcal{N}\ket{\overline{\pi}} + \Pi_{\mathcal{H}_1^{\perp}}\ket{f^{\overline{\pi}}} \\
                \pm\mathcal{N}\ket{\overline{\pi}} - \Pi_{\mathcal{H}_1^{\perp}}\ket{f^{\overline{\pi}}}
            \end{bmatrix} &\overset{X}{\mapsto} \frac{1}{\sqrt{2}}\begin{bmatrix}
                \pm\mathcal{N}\ket{\overline{\pi}} + \Pi_{\mathcal{H}_1^{\perp}}\ket{f^{\overline{\pi}}} \\
                \pm\mathcal{N}\ket{\overline{\pi}} + \Pi_{\mathcal{H}_1^{\perp}}\ket{f^{\overline{\pi}}}
            \end{bmatrix} \overset{H \otimes I}{\mapsto} \begin{bmatrix}
                \pm\mathcal{N}\ket{\overline{\pi}} + \Pi_{\mathcal{H}_1^{\perp}}\ket{f^{\overline{\pi}}} \\
                0
            \end{bmatrix}.
        \end{align*}
        Next, we use \Cref{lem:net-flow-catalyst} to analyze the action of $T$, to find that
        \begin{align*}
            T: \begin{bmatrix}
                \mathcal{N}\ket{\pi} \\
                0
            \end{bmatrix} \overset{A}{\mapsto} \begin{bmatrix}
                \mathcal{N}\ket{\pi} \\
                0
            \end{bmatrix} &\overset{U \oplus I_{\mathcal{H}}}{\mapsto} \begin{bmatrix}
                \mathcal{N}\ket{\pi} \\
                0
            \end{bmatrix} \overset{A^{\dagger}}{\mapsto} \begin{bmatrix}
                \mathcal{N}\ket{\pi} \\
                0
            \end{bmatrix}, \qquad \text{and} \\
            T : \begin{bmatrix}
                \mathcal{N}\ket{\overline{\pi}} \\
                \Pi_{\mathcal{H}_1^{\perp}}\ket{f^{\overline{\pi}}} 
            \end{bmatrix} \overset{A}{\mapsto} \begin{bmatrix}
                \mathcal{N}\ket{\overline{\pi}} + \Pi_{\mathcal{H}_1^{\perp}}\ket{f^{\overline{\pi}}} \\
                0
            \end{bmatrix} &\overset{U \oplus I_{\mathcal{H}}}{\mapsto} \begin{bmatrix}
                -\mathcal{N}\ket{\overline{\pi}} + \Pi_{\mathcal{H}_1^{\perp}}\ket{f^{\overline{\pi}}} \\
                0
            \end{bmatrix} \overset{A^{\dagger}}{\mapsto} \begin{bmatrix}
                -\mathcal{N}\ket{\overline{\pi}} \\
                \Pi_{\mathcal{H}_1^{\perp}}\ket{f^{\overline{\pi}}}
            \end{bmatrix}.\qedhere
        \end{align*}
    \end{proof}

    We note that $T$ only depends on the star states of the electrical network through the three reflections through $\mathcal{H}_1$ that it makes in total. We see in the next section how we can construct a canonical transducer that performs this reflection.

    \subsection{Amortized reflection through the stationary state}

    The construction from the previous section produces a transducer that reflects through the stationary distribution, making three calls to the reflection through $\mathcal{H}_1$. Remember that $\mathcal{H}_1$ is the space spanned by the states $\ket{v}\ket{*_v}$, for all $v \in V$, and as such, can be written as
    \begin{equation}
        \label{eq:star-state-reflection=decomposition}
        \mathcal{H}_1 = \bigoplus_{v \in V} \Span\{\ket{*_v}\}, \qquad \text{and so} \qquad 2\Pi_{\mathcal{H}_1} - I = \bigoplus_{v \in V} (2\ket{*_v}\bra{*_v} - I).
    \end{equation}
    As such, we can construct a transducer that performs the reflection through $\mathcal{H}_1$ as a parallel composition~\cite[Proposition~9.4]{belovs2024taming}, amortizing the cost over all the vertices. We make this construction explicit in the following lemma.

    Throughout, we will assume that electrical networks can depend on the implicit input only through the edge set and the resistances. That is, in an electrical network $(G = (V,E),r)$, only $E$ and $r$ may depend on the implicit input, but $V$ may not.

    \begin{lemma}
        \label{lem:transducer-reflection-stationary}
        Let $(G = (V,E),r)$ be an electrical network. Suppose that for all $v \in V$, we have a bidirectional canonical transducer $U_v$, acting on $\C^V$, that reflects through $\ket{*_v}$. Then, we can construct a bidirectional canonical transducer $U$, acting on $(\C^V)^{\otimes 2}$ that acts as
        \begin{align*}
            \mathcal{N}\ket{\pi} \overset{U}{\rightsquigarrow} \mathcal{N}\ket{\pi}, & \qquad \text{with complexity} \qquad O\left(\sum_{v \in V} \pi_vW(U_v, \ket{*_v})\right), \\
            \mathcal{N}\ket{\overline{\pi}} \overset{U}{\rightsquigarrow} -\mathcal{N}\ket{\overline{\pi}}, & \qquad \text{with complexity} \qquad O\left(\sum_{v \in V} \left(\frac{|\overline{\pi}_v|^2}{\pi_v} W(U_v, \ket{*_v}) + W(U_v, (\bra{v} \otimes I_V)\Pi_{\mathcal{H}_1^{\perp}}\ket{f^{\overline{\pi}}})\right)\right),
        \end{align*}
        for all net-flows $\overline{\pi} \in \C^V$.
    \end{lemma}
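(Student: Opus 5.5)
The plan is to start from the transducer $T = A^{\dagger}(U_0 \oplus I_{\mathcal{H}})A$ of \Cref{thm:transducer-reflection-stationary}, where $U_0 = (2\Pi_{\mathcal{H}_1}-I)(2\Pi_{\mathcal{H}_2}-I)$. In $T$ the only input-dependent pieces are the three applications of $2\Pi_{\mathcal{H}_1}-I$: one inside $A$, one inside $U_0$, and one inside $A^{\dagger}$. The reflection $2\Pi_{\mathcal{H}_2}-I$ through the symmetric subspace is input-independent, so it has transduction complexity $0$. Write $R_1 := 2\Pi_{\mathcal{H}_1}-I$. By \eqref{eq:star-state-reflection=decomposition}, $R_1 = \bigoplus_{v} (2\ket{*_v}\bra{*_v}-I)$. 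We therefore implement $R_1$ as the parallel composition (\Cref{thm:parallel-composition}) of the given transducers $U_v$ acting on the blocks $\ket{v}\otimes \C^V$. This gives a bidirectional canonical transducer whose cost on a state $\sum_v \ket{v}\ket{\phi_v}$ is at most $\sum_v W(U_v,\ket{\phi_v})$. We then substitute this transducer for each occurrence of $R_1$ in $T$, combining the pieces with sequential composition (\Cref{thm:sequential-composition}), the inverse result (\Cref{thm:inverse-transducer}) for $A^{\dagger}$, and parallel composition again for $U_0\oplus I$ and $I\oplus R_1$. The public and private spaces are now the fixed space $\mathcal{H}$, so the result is a genuine canonical transducer.

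To bound the complexity, we track the states on which each copy of $R_1$ acts, using the explicit trajectories computed in the proof of \Cref{thm:transducer-reflection-stationary}.

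For the input $\mathcal{N}\ket{\pi}$, the first copy (inside $A$) acts on $\tfrac{1}{\sqrt2}\mathcal{N}\ket{\pi}=\tfrac1{\sqrt2}\sum_v\sqrt{\pi_v}\ket{v}\ket{*_v}$. Its cost is therefore $\tfrac12\sum_v \pi_v W(U_v,\ket{*_v})$, using the scaling $W(U,c\ket\sigma)=|c|^2W(U,\ket\sigma)$ that follows from linearity of the catalyst. The second copy, inside $U_0$, acts on $(2\Pi_{\mathcal{H}_2}-I)\mathcal{N}\ket{\pi}=\mathcal{N}\ket{\pi}$, since $\mathcal{N}\ket\pi\in\mathcal{H}_2$ by \Cref{lem:electrical-network-properties}. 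The third copy is the mirror image of the first. Summing the three gives $O(\sum_v\pi_vW(U_v,\ket{*_v}))$.

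For the input $\mathcal{N}\ket{\overline\pi}$ with catalyst $\Pi_{\mathcal{H}_1^\perp}\ket{f^{\overline\pi}}$, the copy inside $A$ acts on $\tfrac1{\sqrt2}(\mathcal{N}\ket{\overline\pi} - \Pi_{\mathcal{H}_1^\perp}\ket{f^{\overline\pi}})$. Its block at vertex $v$ is
\[
\tfrac{1}{\sqrt2}\Big(\tfrac{\overline\pi_v}{\sqrt{\pi_v}}\ket{*_v} - (\bra v\otimes I)\Pi_{\mathcal{H}_1^\perp}\ket{f^{\overline\pi}}\Big).
\]
The two summands are orthogonal: the first is parallel to $\ket{*_v}$, and the second is orthogonal to it. The copy inside $U_0$ acts on $(2\Pi_{\mathcal{H}_2}-I)(\mathcal{N}\ket{\overline\pi}+\Pi_{\mathcal{H}_1^\perp}\ket{f^{\overline\pi}}) = (2\Pi_{\mathcal{H}_2}-I)\ket{f^{\overline\pi}} = -\ket{f^{\overline\pi}}$, because $\ket{f^{\overline\pi}}$ is antisymmetric. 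This again splits as $\mathcal{N}\ket{\overline\pi}$ plus $\Pi_{\mathcal{H}_1^\perp}\ket{f^{\overline\pi}}$, with the same blockwise decomposition. The copy inside $A^\dagger$ is analogous to the first.

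In every case we must bound $W(U_v, a\ket{*_v}+\ket{\perp_v})$ with $\ket{\perp_v}\perp\ket{*_v}$. By linearity of the catalyst map (\Cref{thm:transducer-catalyst}) and $\norm{x+y}^2\le 2\norm{x}^2+2\norm{y}^2$, this is at most $2|a|^2W(U_v,\ket{*_v})+2W(U_v,\ket{\perp_v})$. Plugging in $|a|^2 = |\overline\pi_v|^2/\pi_v$ gives the claimed bound
\[
O\Big(\sum_v \tfrac{|\overline\pi_v|^2}{\pi_v}W(U_v,\ket{*_v}) + W\big(U_v,(\bra v\otimes I)\Pi_{\mathcal{H}_1^\perp}\ket{f^{\overline\pi}}\big)\Big).
\]
The private catalyst of $T$ itself costs nothing extra, because $T$ itself makes no queries; only the substituted $R_1$ copies carry catalysts.

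The main obstacle I expect is the composition step. The sequential composition theorem is stated for a chain of transducers, each acting on the public output of the previous one, whereas here the intermediate states also involve the catalyst register $\Pi_{\mathcal{H}_1^\perp}\ket{f^{\overline\pi}}$ of the outer transducer $T$. My approach is to view $T$ not as a transducer but as a plain circuit on $\mathcal{H}^{\oplus2}$ with three oracle-dependent gates. I would unfold the catalyst into the public space and compose the full states $\mathcal{N}\ket{\overline\pi}\oplus\Pi_{\mathcal{H}_1^\perp}\ket{f^{\overline\pi}}$ through the chain, then declare the second copy of $\mathcal{H}$ private at the end. The catalyst of the composite is then the direct sum of the three inner catalysts together with $\Pi_{\mathcal{H}_1^\perp}\ket{f^{\overline\pi}}$. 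That last term adds only $\norm{\Pi_{\mathcal{H}_1^\perp}\ket{f^{\overline\pi}}}^2 = R_{\mathrm{eff}}-\norm{\ket{\overline\pi}}^2$. This is query-free, but it would have to be either absorbed into the stated bound or justified as not counted. I would need to check how the paper intends this, possibly by arguing that it is dominated by the $\Pi_{\mathcal{H}_1^\perp}$ terms appearing in the $W(U_v,\cdot)$ costs, via a lower bound $W(U_v,\ket{\perp})\gtrsim\norm{\ket\perp}^2$.
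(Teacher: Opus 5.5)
Your setup matches the paper's up to the final step: parallel-compose the $U_v$ into a transducer for $2\Pi_{\mathcal{H}_1}-I$, substitute it for the three occurrences in $T$, track the states each copy acts on, and split every block into its $\ket{*_v}$-component and an orthogonal remainder. The gap is the point you leave open at the end, namely the catalyst $\ket{c_{\overline{\pi}}} := \Pi_{\mathcal{H}_1^{\perp}}\ket{f^{\overline{\pi}}}$ of the outer transducer $T$.

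By the definition of transduction complexity, every catalyst in a private register counts, whether or not an oracle acts on it. So declaring the second copy of $\mathcal{H}$ private adds $\norm{\ket{c_{\overline{\pi}}}}^2 = R_{\mathrm{eff}}(G,r;\overline{\pi})-\norm{\ket{\overline{\pi}}}^2$, which is not in the stated bound. It also fails to give a canonical transducer, since no oracle acts on that register. Your remark that it ``costs nothing because $T$ makes no queries'' is the right intuition, but it needs an argument.

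The fix you suggest, a lower bound $W(U_v,\ket{\perp})\gtrsim\norm{\ket{\perp}}^2$, is false in general. If $\ket{*_v}$ does not depend on the input, as for $\ket{*_{\{x,y\}}}$ in the bipartite speedy walk, then $U_v$ can be an input-independent unitary with complexity $0$ on every state. If all $U_v$ are of this kind, the lemma asserts complexity $0$, while your extra term $\norm{\ket{c_{\overline{\pi}}}}^2$ is in general nonzero.

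The missing idea, which is how the paper closes the proof, is to keep $\ket{c_{\overline{\pi}}}$ in the public space and remove it at the level of the adversary bound.
\begin{itemize}
\item Sequential composition gives a canonical transducer $T'$ whose catalyst consists only of the three inner catalysts. It solves the state-conversion problem $P$ with source/target pairs $(\mathcal{N}\ket{\pi}\oplus 0,\ \mathcal{N}\ket{\pi}\oplus 0)$ and $(\mathcal{N}\ket{\overline{\pi}}\oplus\ket{c_{\overline{\pi}}},\ -\mathcal{N}\ket{\overline{\pi}}\oplus\ket{c_{\overline{\pi}}})$.
\item \Cref{thm:transducer-to-adv,thm:unidirectional-bidirectional} turn $T'$ into a feasible solution of the bidirectional adversary bound for $P$, whose witnesses are exactly these inner catalysts.
\item Let $P'$ be the same problem with the second register dropped. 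The right-hand sides $\braket{\sigma_x}{\sigma_y}-\braket{\tau_x}{\tau_y}$ of $P$ and $P'$ coincide, because $\ket{c_{\overline{\pi}}}$ appears unchanged in source and target. For two $\overline{\pi}$-type instances, the term $\braket{c_x}{c_y}$ occurs in both Gram matrices and cancels, so both sides are $0$. For a mixed pair, the $c$-component is paired with $0$, leaving twice the inner product of the $\mathcal{N}$-components in both problems.
\item Hence the same witnesses are feasible for $P'$. Converting back via \Cref{thm:unidirectional-bidirectional,thm:adv-to-transducer} gives a canonical $U$ whose complexity is just the sum of the three inner costs you computed, with no $\norm{\ket{c_{\overline{\pi}}}}^2$ term.
\end{itemize}
Equivalently, a private register on which no oracle acts contributes identical terms $\braket{k_x}{k_y}$ to both sides of the unitarity identity in \Cref{thm:transducer-to-adv}, so it drops out of the adversary constraints and can be discarded.
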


    \begin{proof}
        Let $\mathcal{H} = (\C^V)^{\otimes 2}$. We start by observing that $2\Pi_{\mathcal{H}_1} - I_{\mathcal{H}}$ can be written as a direct sum over reflections through star states, as in \Cref{eq:star-state-reflection=decomposition}. Thus, whenever we want to apply $2\Pi_{\mathcal{H}_1} - I_{\mathcal{H}}$, we use~\Cref{thm:parallel-composition} to compose the $U_v$'s in parallel and construct a bidirectional canonical transducer $R$ that for all $\ket{\psi} \in \mathcal{H}$ acts as
        \[\ket{\psi} \overset{R}{\rightsquigarrow} (2\Pi_{\mathcal{H}_1} - I_{\mathcal{H}})\ket{\psi}, \qquad \text{with complexity} \qquad \sum_{v \in V} W(U_v, (\bra{v} \otimes I_V)\ket{\psi}).\]
        
        Next, we observe from \Cref{thm:transducer-reflection-stationary} that the operation $T$ makes three calls to $2\Pi_{\mathcal{H}_1} - I_{\mathcal{H}}$. Thus, using sequential composition, i.e., \Cref{thm:sequential-composition}, we construct a bidirectional canonical transducer $T'$ that acts as
        \begin{subequations}
        \begin{align}
            \label{eq:reflection-transduction-a}
            \begin{bmatrix}
                \mathcal{N}\ket{\pi} \\
                0
            \end{bmatrix} \overset{T'}{\rightsquigarrow} \begin{bmatrix}
                \mathcal{N}\ket{\pi} \\
                0
            \end{bmatrix}, & \qquad \text{with complexity} \qquad A_1 \\
            \label{eq:reflection-transduction-b}
            \begin{bmatrix}
                \mathcal{N}\ket{\overline{\pi}} \\
                \Pi_{\mathcal{H}_1^{\perp}}\ket{f^{\overline{\pi}}}
            \end{bmatrix} \overset{T'}{\rightsquigarrow} \begin{bmatrix}
                -\mathcal{N}\ket{\pi} \\
                \Pi_{\mathcal{H}_1^{\perp}}\ket{f^{\overline{\pi}}}
            \end{bmatrix}, & \qquad \text{with complexity} \qquad A_2.
        \end{align}
        \end{subequations}
        By checking what states the reflections through $\mathcal{H}_1$ act on when they are called, we can analyze the witness complexities, to obtain
        \begin{align*}
            A_1 &\leq W\left(R,\frac{1}{\sqrt{2}}\mathcal{N}\ket{\pi}\right) + W\left(R, \mathcal{N}\ket{\pi}\right) + W\left(R, \frac{1}{\sqrt{2}}\mathcal{N}\ket{\pi}\right) = 2W(R, \mathcal{N}\ket{\pi}) \leq \sum_{v \in V} \pi_vW(U_v, \ket{*_v}), \\
            A_2 & \leq W\left(R, \frac{1}{\sqrt{2}}(\mathcal{N}\ket{\overline{\pi}} - \Pi_{\mathcal{H}_1^{\perp}}\ket{f^{\overline{\pi}}})\right) + W(R, \mathcal{N}\ket{\overline{\pi}} + \Pi_{\mathcal{H}_1^{\perp}}\ket{f^{\overline{\pi}}}) + W\left(R, \frac{1}{\sqrt{2}}(-\mathcal{N}\ket{\overline{\pi}} + \Pi_{\mathcal{H}_1^{\perp}}\ket{f^{\overline{\pi}}})\right) \\
            &\in O\left(W(R, \mathcal{N}\ket{\overline{\pi}}) + W(R, \Pi_{\mathcal{H}_1^{\perp}}\ket{f^{\overline{\pi}}})\right) \subseteq O\left(\sum_{v \in V} \left(\frac{|\overline{\pi}_v|^2}{\pi_v} W(U_v, \ket{*_v}) + W(U_v, (\bra{v} \otimes I_V)\Pi_{\mathcal{H}_1^{\perp}}\ket{f^{\overline{\pi}}})\right)\right).
        \end{align*}

        Finally, let $\D$ be the implicit set of inputs. That is, for every input $x \in \D$, $T'$ calls an oracle $O_x$ and its inverse. We observe from \Cref{eq:reflection-transduction-a,eq:reflection-transduction-b} that $T'$ is a bidirectional canonical transducer that solves the state-conversion problem:
        \[P := \left\{\left(\begin{bmatrix}
            \mathcal{N}_x\ket{\pi_x} \\
            0
        \end{bmatrix}, \begin{bmatrix}
            \mathcal{N}_x\ket{\pi_x} \\
            0
        \end{bmatrix}, O_x\right), \left(\begin{bmatrix}
            \mathcal{N}_x\ket{\overline{\pi}}_x \\
            \Pi_{(\mathcal{H}_1^{\perp})_x}\ket{f_x^{\overline{\pi}}}
        \end{bmatrix}, \begin{bmatrix}
            -\mathcal{N}_x\ket{\overline{\pi}}_x \\
            \Pi_{(\mathcal{H}_1^{\perp})_x}\ket{f_x^{\overline{\pi}}}
        \end{bmatrix}, O_x\right)\right\}_{x \in \D}.\]
        Thus, through \Cref{thm:transducer-to-adv,thm:unidirectional-bidirectional}, we can construct a feasible solution to the bidirectional adversary bound for $P$ from $T'$. Now, we consider the following state-conversion problem:
        \[P' := \{(\mathcal{N}_x\ket{\pi_x},\mathcal{N}_x\ket{\pi_x}, O_x), (\mathcal{N}_x\ket{\overline{\pi}}_x, -\mathcal{N}_x\ket{\overline{\pi}}_x, O_x)\}_{x \in \D}.\]
        A straightforward computation shows 
        that $P$ and $P'$ have the same constraints in their adversary bounds. Thus, their feasible regions are identical, and so the feasible solution that we constructed for the bidirectional adversary bound for $P$ is also a feasible solution to the bidirectional adversary bound for $P'$. We now use \Cref{thm:unidirectional-bidirectional,thm:adv-to-transducer} to turn this back into the bidirectional canonical transducer $U$ from the theorem statement.
    \end{proof}

    The transducer we constructed in the previous lemma acts on $\mathcal{N}\ket{\pi}$, rather than on $\ket{\pi}$ directly. This is typically fine if we just consider a single quantum walk, since $\mathcal{N}$ is merely an isometry, and for most intents and purposes one can simply ignore the second register storing the star state corresponding to the vertex in the first register. However, in the specific application we consider in this work, i.e., the volume estimation problem, we consider several different quantum walks, which consequently have different star states, preventing us from simply ignoring them.
    
    The solution we present here is to use a transducer that acts as $\mathcal{N}$, alongside the transducers that reflect through the star states. This is an additional assumption we make on the access model to our quantum walk, which might not always be easy to justify. However, we will be able to justify it for the volume estimation application later, in~\Cref{sec:ball-walks}.

    \begin{lemma}[Stationary state reflection]
        \label{lem:refl-stationary-electrical-network}
        Let $(G = (V,E),r)$ be an electrical network. For all $v \in V$, let $C_v$ and $U_v$ be bidirectional canonical transducers that construct and reflect through $\ket{*_v}$. Then, we can construct a bidirectional canonical transducer $V$ that reflects through $\ket{\pi}$ acting, for all net-flows $\overline{\pi} \in \C^V$, as
        \begin{align*}
            \ket{\pi} \overset{V}{\rightsquigarrow} \ket{\pi}, & \quad \text{with compl.} \quad O\left(\sum_{v \in V} \pi_v (W(C_v,\ket{0}) + W(U_v,\ket{*_v}))\right), \\
            \ket{\overline{\pi}} \overset{V}{\rightsquigarrow} -\ket{\overline{\pi}}, & \quad \text{with compl.} \quad O\left(\sum_{v \in V} \left(\frac{|\overline{\pi}_v|^2}{\pi_v} (W(C_v,\ket{0}) + W(U_v, \ket{*_v})) + W(U_v, (\bra{v} \otimes I_V)\Pi_{\mathcal{H}_1^{\perp}}\ket{f^{\overline{\pi}}})\right)\right).
        \end{align*}
    \end{lemma}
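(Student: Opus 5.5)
The plan is to sandwich the transducer $U$ from \Cref{lem:transducer-reflection-stationary} between a transducer implementing $\mathcal{N}$ and its inverse. Set $V := N^{\dagger} U N$, with public space $\C^V$ (padded to $(\C^V)^{\otimes 2}$ by a fixed ancilla register $\ket{0}$).

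\textbf{Step 1: a transducer for $\mathcal{N}$.} The map $\ket{v}\ket{0} \mapsto \ket{v}\ket{*_v}$ is block diagonal: on the block labelled by $v$ it is the construction map $\ket{0} \mapsto \ket{*_v}$. So, by \Cref{thm:parallel-composition}, I take the parallel composition of the $C_v$ over $v \in V$, controlled on the first register. This gives a bidirectional canonical transducer $N$ with
\[
\ket{\psi}\ket{0} \overset{N}{\rightsquigarrow} \mathcal{N}\ket{\psi}, \qquad W(N,\ket{\psi}\ket{0}) \le \sum_{v\in V} |\psi_v|^2\, W(C_v,\ket{0}).
\]
To justify the last inequality, I use \Cref{thm:transducer-catalyst}: the catalyst depends linearly on the input, so the catalyst of $C_v$ on $\psi_v\ket{0}$ is $\psi_v$ times the catalyst on $\ket{0}$, and its squared norm scales by $|\psi_v|^2$.

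\textbf{Step 2: the inverse.} By \Cref{thm:inverse-transducer}, $N^{\dagger}$ maps $\mathcal{N}\ket{\psi}$ back to $\ket{\psi}\ket{0}$ with the same complexity as in Step 1.

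\textbf{Step 3: composition and bookkeeping.} Sequential composition (\Cref{thm:sequential-composition}) of $N$, $U$ and $N^{\dagger}$ gives a transducer $V$ with
\[
\ket{\pi}\ket{0} \overset{V}{\rightsquigarrow} \ket{\pi}\ket{0}, \qquad \ket{\overline{\pi}}\ket{0} \overset{V}{\rightsquigarrow} -\ket{\overline{\pi}}\ket{0}.
\]
For the second mapping I use linearity: $N^{\dagger}$ applied to $-\mathcal{N}\ket{\overline{\pi}}$ costs the same as on $\mathcal{N}\ket{\overline{\pi}}$, because complexity is invariant under a global sign. The cost of $V$ on $\ket{\pi}$ is at most
\[
2\sum_{v} \pi_v\, W(C_v,\ket{0}) + O\Big(\sum_{v} \pi_v\, W(U_v,\ket{*_v})\Big).
\]
The cost on $\ket{\overline{\pi}}$ is at most
\[
2\sum_{v} \frac{|\overline{\pi}_v|^2}{\pi_v}\, W(C_v,\ket{0}) + \text{(the bound on $U$ from \Cref{lem:transducer-reflection-stationary})}.
\]
Here I used $\ket{\overline{\pi}} = \sum_v \overline{\pi}_v/\sqrt{\pi_v}\,\ket{v}$, so the squared amplitude on $\ket{v}$ is $|\overline{\pi}_v|^2/\pi_v$. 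Summing these terms gives exactly the claimed bounds.

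\textbf{Main obstacle: canonicity.} The composed operator must be a bidirectional canonical transducer whose public and private spaces do not depend on the input. \Cref{thm:sequential-composition} requires the components to have the same public and private spaces. I would handle this by padding all private spaces to a common direct sum and embedding $\C^V$ into $(\C^V)^{\otimes 2}$ via the ancilla $\ket{0}$.

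If a cleaner route is needed, I would argue at the adversary-bound level, as in the proof of \Cref{lem:transducer-reflection-stationary}:
\begin{itemize}
\item convert the composed transducer to a bidirectional adversary solution using \Cref{thm:transducer-to-adv,thm:unidirectional-bidirectional};
\item observe that the state-conversion problem with inputs $\ket{\pi}\ket{0}$ and $\ket{\overline{\pi}}\ket{0}$ has the same Gram constraints as the one with inputs $\ket{\pi}$ and $\ket{\overline{\pi}}$, so the ancilla can be dropped;
\item convert back to a canonical transducer via \Cref{thm:adv-to-transducer}.
\end{itemize}
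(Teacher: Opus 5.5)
Your proposal is correct and matches the paper's proof. The paper likewise builds a transducer for $\mathcal{N}$ as a parallel composition of the $C_v$, sets $V = C^{\dagger} U C$ with $U$ from \Cref{lem:transducer-reflection-stationary}, and sums the costs via sequential composition and the inverse-transducer theorem; your factor of $2$ for the two applications of $C$ and your remarks on canonicity and dropping the ancilla are extra care that the paper absorbs into the $O(\cdot)$ or leaves implicit.
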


    \begin{proof}
        For all $v \in V$, we say that $C_v$ transduces $\ket{0}$ to $\ket{*_v}$, where $\ket{0}$ is an arbitrary reference state. We take a parallel composition of all the construction transducers $C_v$, to to obtain a bidirectional canonical transducer $C$ that acts for all $\ket{\psi} \in \C^V$ as
        \[\ket{\psi} \overset{C}{\rightsquigarrow} \mathcal{N}\ket{\psi}, \qquad \text{with complexity} \qquad \sum_{v \in V} |\braket{v}{\psi}|^2W(C_v, \ket{0}).\]
        
        Now, we let $U$ be the bidirectional canonical transducer constructed in \Cref{lem:transducer-reflection-stationary}, and we let $U$ be the sequential composition $V = C^{\dagger}UC$. Then, by \Cref{thm:sequential-composition}, we obtain that
        \begin{align*}
            \ket{\pi} \overset{U}{\rightsquigarrow} \ket{\pi}, & \qquad \text{with complexity} \qquad \sum_{v \in V} \pi_vW(C_v,\ket{0}) + W(U,\mathcal{N}\ket{\pi}), \\
            \ket{\overline{\pi}} \overset{U}{\rightsquigarrow} -\ket{\overline{\pi}}, & \qquad \text{with complexity} \qquad \sum_{v \in V} \frac{|\overline{\pi}_v|^2}{\pi_v}W(C_v, \ket{0}) + W(U, \mathcal{N}\ket{\overline{\pi}}).
        \end{align*}
        The result now follows by plugging in the complexity expressions from \Cref{lem:transducer-reflection-stationary}.
    \end{proof}

    Next, we can use the relation between the electrical network and irreducible, reversible random walks to rephrase the above result in terms of random walks only. Similarly as before, we assume that the state-transition matrix $P$ depends on the implicit input, but the state space $\Omega$ does not.

    \begin{theorem}
        \label{thm:refl-stationary}
        Let $P$ be an irreducible, reversible Markov chain over a finite state space $\Omega$ with stationary distribution $\pi$ and spectral gap $\gamma(P)$. Let $\psi$ be an arbitrary distribution over $\Omega$, and we write for any net-flow $\overline{\pi} \in \C^{\Omega}$,
        \[\ket{\pi} := \sum_{\omega \in \Omega} \sqrt{\pi(\omega)}\ket{\omega}, \qquad \text{and} \qquad \ket{\overline{\pi}} := \sum_{\omega \in \Omega} \frac{\overline{\pi}(\omega)}{\sqrt{\pi(\omega)}}\ket{\omega}.\]
        For every $\omega \in \Omega$, let $\ket{*_{\omega}} \in \C^{\Omega}$ be the star state, and let $C_{\omega}$ and $U_{\omega}$ be bidirectional canonical transducers that construct and reflect through $\ket{*_{\omega}}$, respectively. Then, we can construct a bidirectional canonical transducer $U$ that reflects through $\ket{\pi}$ with complexities
        \begin{align*}
            \ket{\pi} \overset{U}{\rightsquigarrow} \ket{\pi}, & \quad \text{with complexity} \quad O\left(\sum_{\omega \in \Omega} \pi(\omega)(W(C_{\omega},\ket{0}) + W(U_{\omega}, \ket{*_{\omega}}))\right), \\
            \ket{\overline{\pi}} \overset{U}{\rightsquigarrow} -\ket{\overline{\pi}}, & \quad \text{with complexity} \quad O\left(\frac{M\norm{\ket{\overline{\pi}}}^2}{\gamma(P)} + \sum_{\omega \in \Omega} \frac{|\overline{\pi}(\omega)|^2}{\pi(\omega)} (W(C_{\omega},\ket{0}) + W(U_{\omega},\ket{*_{\omega}}))\right),
        \end{align*}
        where
        \begin{equation}
            \label{eq:M}
            M := \sup \{W(U_{\omega}, \ket{\perp}) : \omega \in \Omega, \ket{\perp} \in \C^{\Omega} : \norm{\ket{\perp}} = 1, \braket{\perp}{*_{\omega}} = 0\}.
        \end{equation}
    \end{theorem}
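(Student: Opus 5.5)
The plan is to reduce to \Cref{lem:refl-stationary-electrical-network}. First, \Cref{thm:walk-electrical-network} gives an electrical network $(G,r)$ on $\Omega$ with $R=1$ whose random walk is exactly $P$. By \Cref{thm:electrical-network-walk} its stationary distribution is $\pi_\omega = 1/r_\omega$, and its star states $\sum_{\nu} \sqrt{r_\omega/r_{\omega\nu}}\ket{\nu} = \sum_\nu \sqrt{P(\omega,\nu)}\ket{\nu}$ coincide with the $\ket{*_\omega}$ of the statement. Hence $\ket{\pi}$ and $\ket{\overline{\pi}}$ are exactly the objects of that lemma. Applying it with the given $C_\omega,U_\omega$ yields a bidirectional canonical transducer $U$. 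Its complexity on $\ket{\pi}$ is already the claimed one, and its complexity on $\ket{\overline{\pi}}$ has the claimed amortized sum plus the extra term $\sum_{\omega} W(U_\omega, (\bra{\omega}\otimes I)\Pi_{\mathcal{H}_1^\perp}\ket{f^{\overline{\pi}}})$. So the remaining work is bounding this extra term by $O(M\norm{\ket{\overline{\pi}}}^2/\gamma(P))$.

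For this, write $\ket{\phi_\omega} := (\bra{\omega}\otimes I)\Pi_{\mathcal{H}_1^\perp}\ket{f^{\overline{\pi}}}$. Since $\Pi_{\mathcal{H}_1^\perp} = \bigoplus_\omega (I - \ket{*_\omega}\bra{*_\omega})$, each $\ket{\phi_\omega}$ is orthogonal to $\ket{*_\omega}$. By linearity of the catalyst map (\Cref{thm:transducer-catalyst}), complexities scale quadratically, so $W(U_\omega,\ket{\phi_\omega}) \le M\norm{\ket{\phi_\omega}}^2$ by the definition \eqref{eq:M}. Summing over $\omega$ gives
\[\sum_\omega W(U_\omega,\ket{\phi_\omega}) \le M\norm{\Pi_{\mathcal{H}_1^\perp}\ket{f^{\overline{\pi}}}}^2 \le M\norm{\ket{f^{\overline{\pi}}}}^2.\]
The vector $\ket{f}$ of \Cref{lem:electrical-network-properties} contains both orientations $\ket{v,w}-\ket{w,v}$, so $\norm{\ket{f^{\overline{\pi}}}}^2 = 2R_{\mathrm{eff}}(G,r;\overline{\pi})$.

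Next, I would invoke \Cref{thm:effective-resistance-spectral-gap} with $D=\diag(\pi)$. This gives $R_{\mathrm{eff}}(G,r;\overline{\pi}) \le \norm{D^{-1/2}\overline{\pi}}^2/\gamma(P) = \norm{\ket{\overline{\pi}}}^2/\gamma(P)$, which closes the bound.

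The main obstacle is bookkeeping rather than any deep step. One issue is that the target subspace of \Cref{lem:electrical-network-properties} and of the transducer is exactly the set of net-flows, i.e.\ $\sum_\omega \overline{\pi}(\omega)=0$, which is equivalent to $\braket{\overline{\pi}}{\pi}=0$. Another is that the statement's $\ket{\overline{\pi}}$ matches the lemma's normalization. A third is that the unused distribution $\psi$ in the statement plays no role. I would also double-check that the quadratic scaling of $W$ in the input state is legitimate: the minimal catalyst of $c\ket{\perp}$ is $c$ times that of $\ket{\perp}$, so $W(U_\omega, c\ket{\perp}) = |c|^2 W(U_\omega,\ket{\perp})$. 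Finally, I would check that no constant is lost beyond the $O(\cdot)$ when passing through the three calls to $R$ in \Cref{lem:transducer-reflection-stationary}.
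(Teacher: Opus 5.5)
Your proposal is correct and follows the paper's proof essentially step for step: you reduce to \Cref{lem:refl-stationary-electrical-network} via \Cref{thm:walk-electrical-network}, then bound the extra catalyst term by $M\norm{\ket{f^{\overline{\pi}}}}^2 = 2MR_{\mathrm{eff}}(G,r;\overline{\pi}) \le 2M\norm{\ket{\overline{\pi}}}^2/\gamma(P)$ using \Cref{thm:effective-resistance-spectral-gap}. Your added justifications, namely that each block $\ket{\phi_\omega}$ is orthogonal to $\ket{*_\omega}$ and that $W$ scales quadratically under rescaling of the input, are exactly the steps the paper leaves implicit when it writes the first inequality.
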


    \begin{proof}
        We start by observing from \Cref{thm:walk-electrical-network} that we can turn the walk into an electrical network $(G = (\Omega,E),r)$, with $R = 1$. Thus, we can use~\Cref{lem:refl-stationary-electrical-network} to construct a bidirectional canonical transducer $U$ that reflects through $\ket{\pi}$. The transduction complexity of $U$ acting on $\ket{\pi}$ follows immediately. For the complexity on $\ket{\overline{\pi}}$, it suffices to analyze that
        \begin{align*}
            \sum_{\omega \in \Omega} W(U_{\omega}, (\bra{\omega} \otimes I_{\Omega})\Pi_{\mathcal{H}_1^{\perp}}\ket{f^{\overline{\pi}}}) &\leq M\sum_{\omega \in \Omega} \norm{(\bra{\omega} \otimes I_{\Omega})\Pi_{\mathcal{H}_1^{\perp}}\ket{f^{\overline{\pi}}}}^2 \leq M\norm{\ket{f^{\overline{\pi}}}}^2 = 2MR_{\mathrm{eff}}(G,r;\overline{\pi}) \\
            &\leq \frac{2M\norm{\ket{\overline{\pi}}}^2}{\gamma(P)}.
        \end{align*}
        In the final steps, we used \Cref{def:effective-resistance,thm:effective-resistance-spectral-gap}.
    \end{proof}

    \subsection{Metropolis filters via bipartite walks} \label{sec:bipartite}

    Let $(\Omega,P)$ be a Markov process on a finite state space $\Omega$. Given a target distribution $\pi$ on $\Omega$, the Markov process can be modified so that the new process has stationary distribution $\pi$. The Metropolis-Hastings framework provides a principled way to do so \cite{metropolis53,hastings70}. At a state $x \in \Omega$, it uses $P$ to generate a proposal $y$, and then accepts the proposal with a probability $a(x,y)$. If the proposal is rejected, the chain remains at $x$. This leads to the Markov process $(\Omega,P')$ where 
    \[
    P'(x,y) = \begin{cases} P(x,y)a(x,y) & \text{ if } y \neq x, \\
    1-\sum_{y \neq x} P(x,y) a(x,y) & \text{ if } y=x. \end{cases}
    \]
    We refer to $a$ as the \textit{filter}. A quantum walk based on $P'$ uses reflections through the star states $\ket{*_x}$ defined as 
    \[
    \ket{*_x} = \sum_{y \in \Omega, y \neq x} \sqrt{P(x,y)a(x,y)}\ket{y} + \sqrt{1-\sum_{y \neq x} P(x,y) a(x,y)}\ket{x}.
    \]
    Preparing or reflecting through such a state is challenging due to the amplitude of $\ket{x}$ in $\ket{*_x}$. Informally, it requires us to be able to forget rejected proposals. 

    In the remainder of this section we discuss two canonical filters, the Metropolis and Glauber  filters. These filters behave similarly in terms of parameters such as the spectral gap or conductance of the Markov chain. We argue, however, that the Glauber filter is particularly suitable for quantization. Indeed, we show how to turn a Markov chain with a Glauber filter into a bipartite walk with similar spectral properties, easy to prepare star states, but which does not require a filter. In later sections, we use this construction to quantize the speedy walk over a convex body. 

    
    We note that \cite{apers23ustcon} (based on \cite[arXiv v2]{kosowski2012}) constructed a quantum walk based on a similar bipartite construction, for unweighted graphs. They observed a connection to Glauber filters for the associated two-step process. Here we generalize this to arbitrary graphs, and we connect the spectral properties of the bipartite walk to that of the filtered walk.

    For general filters, \cite{Claudon26metropolis} construct a Markov process on the extended state space $\Omega^2$ with desirable spectral properties, and a quantum circuit that implements the associated Szegedy walk operator.

    \subsubsection{Basic properties of Metropolis and Glauber filters} \label{sec:basic properties metropolis-glauber}
    
    In the Metropolis-Hastings framework, two canonical choices are the Metropolis filter $a_M$ and Glauber filter~$a_G$:
    \[
    a_M(x,y) = \min\left\{1,\frac{\pi(y)P(y,x)}{\pi(x) P(x,y)}\right\}, \qquad a_G(x,y) = \frac{1}{1+\frac{\pi(x)P(x,y)}{\pi(y) P(y,x)}} = \frac{\pi(y) P(y,x)}{\pi(y) P(y,x)+\pi(x)P(x,y)}.
    \]
    Both filters are functions of the ratio $r(x,y) = \frac{\pi(y)P(y,x)}{\pi(x) P(x,y)}$: we have $a_M(x,y) = \min\{1,r(x,y)\}$ and $a_G(x,y) = \frac{1}{1+1/r(x,y)}$. More generally, if $g:\R_{>0}\to [0,1]$ is such that $g(r) = r \cdot g(1/r)$, and $a(x,y) = g(r(x,y))$, then $P'$ satisfies the detailed-balance equation with respect to $\pi$ (\cref{eq:detailed balance}) and therefore $\pi$ is the stationary distribution of $P'$. 
    For the Metropolis and Glauber filters, the corresponding functions are 
    \[
    g_M(r) = \min\{1,r\}, \qquad g_G(r) = \frac{r}{r+1}.
    \]
    These functions satisfy 
    \[
    g_G(r) \leq g_M(r) \leq 2 g_G(r). 
    \]
    As a result, if we let $P_M(x,y) = P(x,y) a_M(x,y)$ and $P_G(x,y) = P(x,y) = a_G(x,y)$, then $P_M(x,y)$ and $P_G(x,y)$ are within a constant factor of each other for all $y \neq x$. Combined with the fact that $P_M$ and $P_G$ have the same stationary distribution $\pi$, this implies via \cref{def:conductance-gap,lem:gap-dirichlet} that their conductances and spectral gaps are comparable as well:
    \[
    \varphi(P_G) \leq \varphi(P_M) \leq 2 \varphi(P_G), \qquad \gamma(P_G) \leq \gamma(P_M) \leq 2 \gamma(P_G).
    \]
    
    \subsubsection{Glauber filters via bipartite walks}

    We now focus on the Glauber filter $a_G$. It has the additional property that 
    \[
        a_G(x,y) = 1-a_G(y,x).
    \]
    Hence, a proposal to move from $x$ to $y$ leads to accepting $y$ with probability $a_G(x,y)$, and a move from $y$ to $x$ leads to rejecting $x$ (thus remaining at $y$) with the same probability. In other words, both the proposals ``$x$ to $y$'' and ``$y$ to $x$'' lead to the same distribution on the outcomes $x$ and $y$. Informally, we exploit this property to ``forget'' the proposal. 
    Formally, we consider the following bipartite walk:

    \begin{definition}[Bipartite Glauber walk] \label{def:bipartite-glauber}
        Let $(\Omega,P)$ be an irreducible Markov process, and $\pi$ a distribution whose support is $\Omega$. We then define
        \[
        E_P = \{\{x\}:P(x,x)>0\} \cup \{\{x,y\}: x,y \in \Omega, x \neq y, P(x,y)+P(y,x)>0\},
        \]
        and set $\Omega' = \Omega \sqcup E_P$. We define a bipartite Markov chain $P'$ on $\Omega'$ as follows:\footnote{In both cases, when $x=y$ we have $\{x,y\} = \{x\}$. A proposal to move from $\{x\}$ to $x$ results in $x$ with probability $1$.} \\
        On the one hand, at $x \in \Omega$, do:
        \begin{enumerate}[nosep]
            \item Select a point $y \in \Omega$ according to $P(x,y)$.
            \item Output $\{x,y\} \in E_P$.
        \end{enumerate}
        On the other hand, at $\{x,y\} \in E_P$, do: 
                \begin{enumerate}[nosep]
            \item Output $y$ with probability $a_G(x,y) = \frac{\pi(y) P(y,x)}{\pi(y) P(y,x)+\pi(x)P(x,y)}$, 
            \item Else output $x$.
            \end{enumerate}
    \end{definition}

We note that the star states associated to $P'$ take the following form for $x \in \Omega, \{x\},\{x,y\} \in E_P$:
\begin{align*}
    \ket{*_x} &= \sum_{y \in \Omega} \sqrt{P(x,y)} \ket{\{x,y\}},\\
    \ket{*_{\{x\}}} &= \ket{x},\\
    \ket{*_{\{x,y\}}} &= \sqrt{1-a_G(x,y)} \ket{x} + \sqrt{a_G(x,y)} \ket{y} = \sqrt{a_G(y,x)} \ket{x} + \sqrt{1-a_G(y,x)}\ket{y},
\end{align*}
where for $\ket{*_{\{x,y\}}}$ we emphasize the symmetry due to $a_G(x,y)+a_G(y,x)=1$.

We now relate the spectral properties of $P'$ to those of $P_G$.

    \begin{lemma}
        \label{lem:metropolis-to-glauber-filter} \label{lem:stationary-anti-bipartite} 
        Let $(\Omega,P)$ be an irreducible Markov process, $\pi$ a distribution whose support is $\Omega$, and let $(\Omega',P')$ be the bipartite Glauber walk associated to $P$ and $\pi$ via \cref{def:bipartite-glauber}.  Then, $(P')^2|_{\Omega} = P_G$. We view $\pi$ as a distribution on~$\Omega'$ and let $\pi'= (\pi + \pi P')/2$. Then $P'$ is reversible with respect to $\pi'$. 
        
        Moreover, if $P'$ is irreducible, its spectral gap is at least half of the spectral gap of $P_G$, $\pi'$ is the unique stationary distribution of $P'$, and $\overline{\pi}' = \frac{\pi - \pi P'}{2}$ is the unique $(-1)$-eigenvector of $P'$.
    \end{lemma}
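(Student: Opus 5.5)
We verify the claims in order, by direct computation with the transition rules of \cref{def:bipartite-glauber}.

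\textbf{Step 1: $(P')^2|_\Omega = P_G$.} Start at $x \in \Omega$. The walk moves to $\{x,y\}$ with probability $P(x,y)$, then to $y$ with probability $a_G(x,y)$ and back to $x$ otherwise. Hence for $y \neq x$,
\[(P')^2(x,y) = P(x,y)\,a_G(x,y) = P_G(x,y).\]
The diagonal entries agree because both matrices are row-stochastic on $\Omega$. The self-loop edges $\{x\}$ return deterministically to $x$, which is consistent with this.

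\textbf{Step 2: reversibility.} Extend $\pi$ by zero to $\Omega'$. Then $\pi P'$ is supported on $E_P$, with
\[(\pi P')(\{x,y\}) = \pi(x)P(x,y) + \pi(y)P(y,x)\]
for $x \neq y$, and $(\pi P')(\{x\}) = \pi(x)P(x,x)$. Since $P'$ is bipartite, detailed balance only needs checking across the edges $x \leftrightarrow \{x,y\}$, where it reads
\[\tfrac12\pi(x)P(x,y) = \tfrac12(\pi P')(\{x,y\})\,P'(\{x,y\},x).\]
The right-hand side equals $\tfrac12\bigl(\pi(x)P(x,y)+\pi(y)P(y,x)\bigr)\,(1-a_G(y,x))$. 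Using $1-a_G(y,x) = a_G(x,y) = \pi(x)P(x,y)/\bigl(\pi(x)P(x,y)+\pi(y)P(y,x)\bigr)$, this is exactly $\tfrac12\pi(x)P(x,y)$. Self-loop edges are checked the same way. The total mass is $\tfrac12 + \tfrac12 = 1$.

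\textbf{Step 3: the $-1$-eigenvector.} Since $P'$ is bipartite with parts $\Omega$ and $E_P$, flipping the sign on one part maps a $\lambda$-eigenvector to a $(-\lambda)$-eigenvector. Concretely, $\overline{\pi}' = (\pi - \pi P')/2$ satisfies $\overline{\pi}' P' = (\pi P' - \pi (P')^2)/2$. Here $\pi(P')^2 = \pi P_G = \pi$, because $P_G$ satisfies detailed balance with respect to $\pi$. Hence $\overline{\pi}' P' = -\overline{\pi}'$. Under irreducibility, uniqueness of $\pi'$ follows from \cref{thm:irreducible-implies-unique-stationary}. Uniqueness of the $-1$-eigenvector then follows from the sign-flip correspondence, which gives $\dim\ker(P'+I) = \dim\ker(P'-I) = 1$.

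\textbf{Step 4: spectral gap (main step).} Let $D' = \diag(\pi')$. The symmetrization $S = D'^{1/2}P'D'^{-1/2}$ has off-diagonal block form with blocks $B$ and $B^T$. Therefore $S^2 = BB^T \oplus B^TB$, and the $\Omega$-block is similar to $(P')^2|_\Omega = P_G$. The nonzero eigenvalues of $BB^T$ and $B^TB$ coincide, so the eigenvalues of $P'$ are $\pm\sqrt{\mu}$ with $\mu \in \spec(P_G)$, plus possibly zeros. Note $\mu \geq 0$ automatically, since $BB^T \succeq 0$.

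The second-largest eigenvalue of $P'$ is therefore $\sqrt{\lambda_2(P_G)}$, where $\lambda_2(P_G) = 1-\gamma(P_G)$, or $0$ if that is larger. Using $1-\sqrt{1-\gamma} \geq \gamma/2$ for $\gamma \in [0,1]$ gives $\gamma(P') \geq \gamma(P_G)/2$.

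This step is the main obstacle: one has to make sure the value $-1$ and the zero eigenvalues do not interfere, and state which spectral gap is meant. Here the gap is defined via the second-largest eigenvalue, as in \cref{def:conductance-gap}, so negative eigenvalues such as $-1$ are irrelevant.
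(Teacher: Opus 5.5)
Your proof is correct and follows the paper's argument essentially step for step: the two-step computation on $\Omega$, detailed balance checked across the bipartition, the identity $\bigl(\frac{\pi\pm\pi P'}{2}\bigr)P'=\pm\frac{\pi\pm\pi P'}{2}$ via $\pi P_G=\pi$, and the off-diagonal block symmetrization giving $\gamma(P')=1-\sqrt{1-\gamma(P_G)}\geq\gamma(P_G)/2$. One harmless slip in Step 2: by \cref{def:bipartite-glauber}, $P'(\{x,y\},x)=1-a_G(x,y)=a_G(y,x)=\pi(x)P(x,y)/\bigl(\pi(x)P(x,y)+\pi(y)P(y,x)\bigr)$, but you swapped the arguments of $a_G$ twice (writing $P'(\{x,y\},x)=1-a_G(y,x)$ and $a_G(x,y)=\pi(x)P(x,y)/(\cdots)$); the two swaps cancel, so your conclusion stands.
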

    \begin{proof}
        Let $x,y \in \Omega$, with $x \neq y$. Then,
        \[(P')^2(x,y) = P'({x,\{x,y\}}) P'({\{x,y\},y}) = P(x,y) a_G(x,y) = P_G(x,y),
        \]
        and 
        \begin{align*}
            (P')^2(x,x) &= P'({x,\{x\}}) P'({\{x\},x}) +\sum_{y \neq x} P'({x,\{x,y\}}) P'({\{x,y\},x}) \\
            &= P(x,x) + \sum_{y \neq x} P(x,y) (1-a_G(x,y)) = P_G(x,x).
        \end{align*}
        Hence, the submatrix of $(P')^2$ indexed by rows and columns from $\Omega$ equals $P_G$. 

        We now view $\pi$ as a distribution on $\Omega'$ (by setting $\pi$ equal to zero on $E_P$), then we have
        \[\left(\frac{\pi \pm \pi P'}{2}\right) P' = \frac{\pi P' \pm \pi (P')^2}{2} = \frac{\pi P' \pm \pi P_G}{2} = \frac{\pi P' \pm \pi}{2} = \pm\frac{\pi \pm \pi P'}{2}.\]
        Hence, $\pi' \coloneqq \frac{\pi + \pi P'}{2}$ is a stationary distribution of $P'$, and $\frac{\pi - \pi P'}{2}$ is an eigenvector corresponding to eigenvalue~$-1$. We verify that $P'$ is reversible with respect to $\pi'$.  First, note that for all $x \neq y \in \Omega$, we have 
        \[
        \pi'(x) = \pi(x)/2, \qquad \pi'(\{x\}) = \pi(x)P(x,x)/2, \qquad \pi'(\{x,y\}) = \frac{\pi(x) P(x,y) + \pi(y)P(y,x)}{2}.
        \]
        We then verify detailed balance with respect to $\pi'$. Since $P'$ is a bipartite walk, it suffices to consider the following two combinations of states in $\Omega'$. First,  
        \[
        \pi'(x) P'(x,\{x\}) = \frac{\pi(x) P(x,x)}{2} = \pi'(\{x\}) P'(\{x\},x).
        \]
        Second, 
        \begin{align*}
        \pi'(x) P'(x,\{x,y\}) &= \frac{\pi(x) P(x,y)}{2} \\
        &=  \frac{\pi(x) P(x,y)+\pi(y)P(y,x)}{2} \frac{\pi(x) P(x,y)}{\pi(x) P(x,y)+\pi(y)P(y,x)} \\
        &= \pi'(\{x,y\}) P'(\{x,y\},x).
        \end{align*}
        Hence, $P'$ is reversible with respect to $\pi'$. Now, let $D$ be the diagonal matrix with the entries of the stationary distribution $\pi'$ of $P'$ on the diagonal. Since $P'$ is reversible with respect to $\pi'$, the matrix $D^{1/2} P' D^{-1/2}$ is symmetric and thus of the form $\begin{bmatrix}
            0 & B \\ B^T & 0 
        \end{bmatrix}$ for some matrix $B$. We obtain that $D^{1/2}(P')^2D^{-1/2}$ is block-diagonal with two blocks $BB^T$ and $B^T B$, which have the same non-zero eigenvalues. As the top-left block of $(P')^2$ is similar to $P_G$, so is $BB^T$, from which we obtain $\mathrm{spec}(P') = \{0\} \cup \{\pm \sqrt{\lambda}: \lambda \in \mathrm{spec}(P_G), \lambda>0\}$. Assuming now that $P'$ is irreducible, the eigenvalues $1$ and $-1$ have multiplicity $1$, and we find that the spectral gap of $P'$  satisfies
        \[\gamma(P') = 1 - \sqrt{1-\gamma(P_G)} \geq 1-\left(1-\frac{\gamma(P_G)}{2}\right) = \frac{\gamma(P_G)}{2}. \qedhere\]
    \end{proof}





    

    \newpage
    \part{Quantum volume estimation} \label{part:volume}

    We now turn to the problem of estimating the volume of a convex body. Our quantum algorithm will closely follow the structure of the prior classical state-of-the-art due to Cousins and Vempala~\cite{CV18}. We therefore first present their approach in more detail in \Cref{sec:CV-overview}. As we will see, their approach uses a random walk on a continuous state space. In contrast, the results in the first part of this work are most conveniently stated for random walks on a discrete and finite state space. The convenience stems from the fact that the quantum walks and transducers we constructed in the previous sections consequently act on a Hilbert space of finite dimension. This makes our results fit neatly within the existing transducer framework, which relies on this property for its structural results, see, e.g., \Cref{thm:transducer-catalyst}. In \Cref{sec:MC-measurable} we state the necessary preliminaries about random walks on measurable state spaces, and in \Cref{sec:discretization} we show how to discretize the speedy walk. We show that discretization is benign in the sense that the discretized speedy walk has a comparable spectral gap and similar stationary distribution (when interpreted as a function on on $\R^d$). We formally state and analyze our (discretized) speedy walk in \Cref{sec:ball-walks}. We develop the necessary transducers to quantize the speedy walk in \Cref{sec:transducer-ball-walk}. We then combine all ingredients into our quantum volume estimation algorithm in \Cref{sec:vol-est}.

    \section{Algorithm overview} \label{sec:CV-overview}

  We now give a more detailed overview of the randomized volume estimation algorithm for convex bodies developed by Cousins and Vempala~\cite{CV18}, and how we modify it in this work. We refer to \Cref{sec:intro-volume} for a high-level overview. 


    \subsection{Partition function estimation and cooling schedule.}

    As mentioned before, the approach of Cousins and Vempala fits within the partition function estimation framework. Concretely, they use the following partition function and associated Gibbs distributions. 

    \begin{definition}[Partition functions and Gibbs distributions] \label{def:part-gibbs}
        Let $d \in \N$, and $K \subseteq \R^d$ be a convex body. For any inverse temperature $\beta \in [0,\infty]$, the Gibbs distribution $\pi_{\beta}$ is a distribution over $K$ with density function proportional to $x \mapsto \exp(-\beta\norm{x}^2)$. The partition function $Z : [0,\infty] \to \R_{\geq 0}$ computes the normalization factor of this distribution, i.e.,
        \[Z(\beta) := \int_K \exp(-\beta\norm{x}^2) \;\mathrm{d}x.\]
    \end{definition}


    
    The Gibbs distribution $\pi_{\beta}$ is thus a Gaussian distribution with covariance matrix $I/(2\beta)$ that is conditioned on being inside~$K$.
    This implies that at infinite-temperature (i.e., $\beta=0$) we have $Z(0) = \Vol(K)$, and so we can solve the volume estimation problem by estimating $Z(0)$. It turns out to be easier to estimate ratios of partition functions and thus instead of estimating $Z(0)$ directly, we will estimate $Z(0)/Z(\beta^*)$ up to relative precision $\epsilon$ for a value of $\beta^*$ for which $Z(\beta^*)$ is easy to compute. 
    Concretely,  whenever $B_d \subseteq K$ and $\beta^* \in \Omega(d\sqrt{\ln(1/\varepsilon)})$, then all but an $\varepsilon$-fraction of the mass of the Gaussian distribution is contained in $K$~\cite[Lemma~7.9]{CV18}, and so we know a good estimate of $Z(\beta^*)$.
    
    Given that estimate, it suffices to estimate the ratio between these two values $Z(0)/Z(\beta^*)$ up to relative precision $\varepsilon$.    
    To estimate this ratio, the typical idea is to define a decreasing sequence of inverse temperatures $(\beta_j)_{j=0}^{\ell}$, referred to as an \textit{inverse cooling schedule}, with $\beta_0 = \beta^*$ and $\beta_{\ell} = 0$. We then use this schedule to write the quantity $Z(0)/Z(\beta^*)$ as a finite telescoping product, as
    \begin{equation}
        \label{eq:telescope}
        \frac{Z(0)}{Z(\beta^*)} = \frac{Z(\beta_{\ell})}{Z(\beta_0)} = \prod_{j=0}^{\ell-1} \frac{Z(\beta_{j+1})}{Z(\beta_j)}.
    \end{equation}
    An elementary argument shows that every factor can 
    be written as an expectation value over the Gibbs distribution at a given inverse temperature. That is, for all $j \in \{0, \dots, \ell-1\}$,
    \[\underset{x \sim \pi_{\beta_j}}{\E} \left[X_{\beta_{j+1},\beta_j}(x)\right] = \frac{Z(\beta_{j+1})}{Z(\beta_j)}, \qquad \text{where} \qquad X_{\beta',\beta}(x) := \exp(-(\beta' - \beta)\norm{x}^2).\]

    We then use Monte Carlo sampling on these random variables to estimate each factor in the telescoping product in \Cref{eq:telescope} individually, up to some relative error $\varepsilon_j$ for the $j$th factor. For this to be efficient, it is important to establish a bound on the relative variance of these random variables.
    Deviating from~\cite{CV18}, in this work we choose the inverse cooling schedule $(\beta_j)_{j=0}^{\ell}$ such that the relative variance of each of these random variables is constant. That is, we set 
    \[\beta_{j+1} = \left(1 - \max\left\{\frac{1}{2\sqrt{d}}, \frac{1}{2R\sqrt{\beta_j}}\right\}\right)\beta_j, \qquad \text{where} \qquad \beta_0 = \beta^*,\]
    and we stop whenever $\beta$ becomes negative, say in the $\ell$th iteration, and then set $\beta_{\ell} = 0$ instead. We then prove (in \Cref{lem:relative-variance}) that for all $j \in \{0, \dots, \ell-1\}$, we have
    \[\frac{\underset{x \sim \pi_{\beta_j}}{\Var} \left[X_{\beta_{j+1},\beta_j}(x)\right]}{\underset{x \sim \pi_{\beta_j}}{\E} \left[X_{\beta_{j+1},\beta_j}(x)\right]^2} \in O(1).\]
    For technical reasons that we will elaborate on later, 
    \cite{CV18} chooses a longer inverse cooling schedule, i.e., with smaller gaps between consecutive $\beta$'s. Our schedule is significantly shorter: our length satisfies $\ell \in \widetilde{O}(\sqrt{d})$, as compared to $\widetilde{O}(d)$ in their work~\cite[Figure~2]{CV18}.

    We now have a high-level recipe for the randomized volume estimation algorithm developed in \cite{CV18}: for each of the factors in the inverse cooling schedule, gather $\Theta(1/\varepsilon_j^2)$ samples from the Gibbs distribution at the given inverse temperature, with precisions $\varepsilon_j$ to be fixed later. By Chebyshev's inequality, we then obtain an unbiased estimate of each of the factors with relative error $\varepsilon_j$, which becomes an estimate of the telescoping product with relative error $\varepsilon$ satisfying $\varepsilon^2 := \sum_{j=0}^{\ell-1} \varepsilon_j^2$.\footnote{Strictly speaking, this equality only holds whenever the right-hand side is at most a constant. However, our choice of $\varepsilon_j$'s will ensure this later on.} The total number of queries made by the entire randomized algorithm is then analyzed as
    \begin{equation}
        \label{eq:cost-formula}
        \sum_{j=0}^{\ell-1} \; \underbrace{\text{no.\ samples}}_{\in O(1/\varepsilon_j^2)} \; \times \; \text{queries per sample}.
    \end{equation}

    \subsection{Continuous space random walks: the ball walk and speedy walk.}
    At this point, the key challenge is to sample from the Gibbs distributions $\pi_\beta$ at the inverse temperature~$\beta$ prescribed by the inverse cooling schedule. To that end, we use a Markov chain $M^{\beta}$ on a continuous state space that converges to it. We will work with the following formulations of the ball and speedy walk on $K$. 

    \begin{definition}[{Ball walk}] \label{def:ball}
        Let $d \in \N$, and $K \subseteq \R^d$ be a convex body. Let $\beta \geq 0$ and $\delta > 0$. We write $f(x) = \exp(-\beta\norm{x}^2)$. At point $x \in K$, do:
        \begin{enumerate}[nosep]
            \item Select a point $y \in x + \delta B_d$ uniformly at random.
            \item Accept $y$ with probability ${q}_{xy} := \min\{1, f(y)/f(x)\}$. Else stay in $x$.
        \end{enumerate}
    \end{definition}
    The stationary distribution of the ball walk is the distribution $\pi_\beta$ on $K$ whose density is proportional to $f$. To see this, note that the proposal distribution in step 1 is symmetric in $x$ and $y$, and hence step 2 consists of the Metropolis filter for the stationary distribution $\pi_\beta$, see \cref{sec:basic properties metropolis-glauber}. Unfortunately, the ball walk does not mix rapidly: its conductance is poor due to its inability to escape corners of the convex body. 

    To overcome this issue, we consider a modification of the ball walk known as the \textit{speedy walk}. The speedy walk was first analyzed in \cite{kannan1997random} in the setting of a uniform target distribution on $K$. Cousins and Vempala extended the analysis to Gaussian target distributions restricted to $K$.  Importantly, for the speedy walk, one \textit{can} prove rapid mixing. Indeed, 
    its {conductance} satisfies $\Omega(\delta\sqrt{\beta/d})$~\cite[Lemma~6.6]{CV18}\footnote{Strictly speaking, this conductance bound only holds as long as $K \subseteq 4\sqrt{d/(2\beta)}B_d$. Thus, to make use of the result, Cousins and Vempala clip the convex body at this radius at every step of the inverse cooling schedule, and then argue that only an inverse exponential probability mass is outside of it. We do the same in our approach.}. 

    \begin{definition}[{Speedy walk}] \label{def:speedy}
        Let $d \in \N$, and $K \subseteq \R^d$ be a convex body. Let $\beta \geq 0$ and $\delta > 0$. We write $f(x) = \exp(-\beta\norm{x}^2)$. At point $x \in K$, do:
        \begin{enumerate}[nosep]
            \item Select a point $y \in (x + \delta B_d) \cap K$ uniformly at random.
            \item Accept $y$ with probability $\overline{q}_{xy} := \min\{1, f(y)/f(x)\}$. Else stay in $x$.
        \end{enumerate} 
        Let $P_{\beta,\delta}$ be the Markov transition kernel and $\pi_{\beta,\delta}$ the stationary distribution of this walk.   \end{definition}

         Note that when $x$ is at least distance $\delta$ away from the boundary of $K$, the proposal distributions of the ball walk and speedy walk are identical.       
    However, for $x$ close to the boundary, the proposal distributions can differ significantly, which changes the stationary distribution. To characterize the stationary distribution of the speedy walk, we require the following notion of 
    \textit{local conductance}, cf.~\cite[Page~1243]{CV18}.
    
    \begin{definition}[Local conductance]
        \label{def:avg-local-conductance}
        Let $d \in \N$, $K \subseteq \R^d$ be a convex body and $\delta > 0$. Then, for all $x \in K$, we define the \textit{local conductance} as
        \[\ell_{\delta}(x) := \frac{\Vol((x + \delta B_d) \cap K)}{\Vol(\delta B_d)}.\]
    \end{definition}
    By \cite[Lemma 3.1]{CV18}, the stationary distribution $\pi_{\beta,\delta}$ of the speedy walk is supported on $K$ and has a density proportional to $\ell_{\delta,r}(x) f(x)$. One can show this by again interpreting the second step as a Metropolis filter (cf.~\cref{sec:basic properties metropolis-glauber}).

    \paragraph{Rejection sampling to recover the Gibbs distribution from the speedy walk stationary.} First note that the local conductance at $x \in K$ is $1$ whenever $x$ is at least $\delta$ away from the boundary of $K$, and it can only become very small in sharp corners. Cousins and Vempala introduce an elegant and efficient rejection sampling routine that samples ``outward'': 
    for any point $x \in K$ sampled from the speedy walk distribution, they accept the point $y := x/(1-1/(2d))$ with probability $f_{\beta}(y)/f_{\beta}(x)$.
    If $\delta$ is chosen sufficiently small, then the outcome of this rejection sampling routine is sufficiently close to the Gibbs distribution $\pi_\beta$. More precisely, \cite[Lemma~6.13]{CV18} shows that it suffices to choose $\delta \in \tTh(\min\{1,1/\sqrt{2\beta}\}/\sqrt{d})$.
    
    \subsection{Annealing to obtain a warm start} By combining the aforementioned conductance bound of the speedy walk with Cheeger's inequality~(\Cref{thm:conductance-spectral-gap}), the same choice of $\delta \in \tTh(\min\{1,1/\sqrt{2\beta}\}/\sqrt{d})$ 
    ensures that the spectral gap $\gamma$ of the speedy walk satisfies $\gamma \in \tOm(\min\{\beta,1\}/d^2)$.
    Following a result by Lov\'asz and Simonovits~\cite[Corollary~1.5]{lovasz1993random}, the actual mixing time scales as $O(\log(M)/\gamma)$.
    Here $M$ is the \emph{warmness} of the initial distribution, which is defined as the supremum of the pointwise ratio between the initial and target density.
    We call an initial distibution with bounded warmness $M \in O(1)$ a \emph{warm start}.
    Obtaining a sufficiently warm start for a Markov chain is often a challenging task. In our setting we use a 
     process known as \textit{annealing}, where we use the stationary distribution at inverse temperature $\beta_{j-1}$ as a warm start for the walk at inverse temperature $\beta_j$. 
    
    To ensure warmness of two consecutive speedy walk distributions, we require the inverse cooling schedule to be slower compared to what is needed to ensure constant relative variance, as in Cousins and Vempala. Quantitatively, we slowly change $\beta_{j-1}$ to $\beta_j$ in $C_j \in O(\min\{d,\beta_jR^2\}\ln(\beta_{j-1}/\beta_j) + d\ln(\delta_j/\delta_{j-1}))$ intermediate steps, to maintain constant warmness in each step.
    Thus, in the $j$th step of the inverse cooling schedule, we first spend $\widetilde{O}(C_j/\gamma_j)$ Markov chain steps to obtain the first sample from $\pi_{\beta_j}$, after which we spend $\widetilde{O}(1/\gamma_j)$ steps of the Markov chain per subsequent sample. Building on \Cref{eq:cost-formula}, we obtain that the total number of Markov chain steps in the entire algorithm to obtain an $\varepsilon$-error relative approximation to $Z(0)/Z(\beta^*)$ becomes (neglecting polylogarithmic factors)
    \begin{equation}
        \label{eq:randomized-cost}
        \sum_{j=0}^{\ell-1} \left(\min\{d,\beta_jR^2\}\ln\frac{\beta_j}{\beta_{j+1}} + d\ln\frac{\delta_{j+1}}{\delta_j} + \frac{1}{\varepsilon_j^2}\right) \cdot \frac{1}{\gamma_j}, \qquad \text{where} \qquad \varepsilon^2 = \sum_{j=0}^{\ell-1} \varepsilon_j^2.
    \end{equation}
    \noindent
    Optimizing over the choice of the precision parameters $\varepsilon_j$ yields $\varepsilon_j \propto \varepsilon/\gamma_j^{1/4}$. Plugging in all the quantities then leads to a total number of speedy walk steps that satisfies $\tO((d^3 + d^2R^2)/\varepsilon^2)$, matching the complexity obtained by Cousins and Vempala~\cite{CV18}.

    \paragraph{Classical amortization and Las Vegas algorithms.}
    It remains to analyze the actual cost, i.e., number of queries, required to implement a single step of the speedy walk. A naive computation shows that selecting a point from $(x + \delta B_d) \cap K$ costs roughly $1/\ell_{\delta}(x)$ queries, because that is the expected number of random samples from $x + \delta B_d$ one would have to take to hit a point in $K$. Clearly, for points $x \in K$ that lie close to corners of the convex body, this cost can be excessively high (up to exponential in $d$).
    
    To get around this, Cousins and Vempala first argue that \emph{on average} (w.r.t.~the distributions encountered in the algorithm) one will not find itself in one of these corners very often.
    In fact, the expected or \textit{amortized} cost of taking a step of the speedy walk throughout the algorithm is only constant~\cite[Lemmas~6.11 and 6.12]{CV18}.
    Since each step of the speedy walk is effectively a Las Vegas algorithm with constant expected cost, Cousins and Vempala then compose these to obtain a global bound of $\tO(d^3/\varepsilon^2)$ on the total expected cost.
    Finally, a simple Markov inequality allows them to truncate the resulting algorithm to a bounded-error Monte Carlo algorithm with worst-case cost $\tO(d^3/\varepsilon^2)$.

\paragraph{Amortized quantum walks.} 

    Quantumly, we use the transducer framework to perform amortization. Neglecting polylogarithmic factors, we obtain two quadratic speed-ups compared to the expression in \Cref{eq:randomized-cost}, from which we obtain that the quantum query complexity is upper bounded (neglecting polylogarithmic factors) by
\begin{equation}
\sum_{j=0}^{\ell-1} \left(d\ln\frac{\delta_{j+1}}{\delta_j} + \frac{1}{\varepsilon_j}\right) \cdot \frac{1}{\sqrt{\gamma_j}}, \qquad \text{where} \qquad \varepsilon^2 = \sum_{j=0}^{\ell-1} \varepsilon_j^2.
\end{equation}
By setting $\varepsilon_j = \varepsilon/\sqrt{d}$, the total precision indeed becomes $\varepsilon$, and the resulting query complexity can be analyzed to be $\widetilde{O}(d^2 + (d^{7/4} + d^{5/4}R)/\varepsilon)$, obtaining the complexity from \Cref{thm:result-vol-est-algo}.

From the final expression, we can see why it is important that we shortened the inverse cooling schedule by speeding up the inverse cooling. Indeed, we only obtain quadratic speed-ups for the individual steps of the cooling schedule, and not over the number of steps. Thus, by making the steps bigger and doing more work per step, we allow for a bigger quantum speed-up in terms of the number of queries.

    \section{Preliminaries: Markov chains on measurable spaces} \label{sec:MC-measurable}

    We start by formally introducing Markov chains on measurable spaces. We follow the exposition by Meyn and Tweedie~\cite{meyn2012markov}.

    \begin{definition}[see, e.g., {\cite[Section~3.4]{meyn2012markov}}]
        Let $(\Omega,\Sigma)$ be a measurable space, where $\Sigma$ is a countably-generated $\sigma$-algebra, that is, a $\sigma$-algebra generated by a countable set family. A \textit{Markov kernel} is a function $P : \Omega \times \Sigma \to [0,1]$, such that:
        \begin{enumerate}[nosep]
            \item For any $A \in \Sigma$, the map $x \mapsto P(x,A)$ is $\Sigma$-measurable.
            \item For every $x \in \Omega$, $A \mapsto P(x,A)$ is a probability measure on $\Sigma$.
        \end{enumerate}
        For any $t \in \N$, we inductively define the kernel $P^t$ by
        \[
        P^t(x,A)
        = \int_\Omega P(x,\mathrm{d}y) P^{t-1}(y,A),
        \]
        with $P^0(x,A) = \delta_x(A)$ (which equals $1$ if $x \in A$ and $0$ otherwise).
        Finally, for a probability measure $q : \Sigma \to [0,1]$ we define
        \[
        (qP)(A) = \int_{\Omega} P(x,A) \;\mathrm{d}q(x).
        \]
    \end{definition}

    The requirement that $\Sigma$ is countably generated is important for various results in \cite{meyn2012markov}, including for the irreducibility results we make use of in this work. This means that we cannot apply our results in general to Lebesgue-measurable sets, since its $\sigma$-algebra does not satisfy this condition. Instead, we will merely make use of Borel $\sigma$-algebras, which suffices for our purposes.


    \begin{definition}[irreducibility, see, e.g., {\cite[Section~4.2]{meyn2012markov}}]
        Let $(\Omega,\Sigma,\mu)$ be a measure space, where $\Sigma$ is countably-generated, and let $P$ be a Markov kernel. We say that $P$ is $\mu$-irreducible if for any $x \in \Omega$ and $A \in \Sigma$ with $\mu(A) > 0$, there is a $t \in \N$ such that $P^t(x,A) > 0$.
    \end{definition}

    Contrary to discrete Markov chains with a finite state space, $\mu$-irreducible random walks do not necessarily have a stationary measure (compare to \Cref{thm:irreducible-implies-unique-stationary}). However, if a stationary measure exists, then irreducibility does guarantee uniqueness. 

    \begin{theorem}[see, e.g., {\cite[Proposition~10.1.1 and Theorem~10.4.9]{meyn2012markov}}]
        Let $(\Omega,\Sigma,\mu)$ be a measure space where $\Sigma$ is countably-generated, and let $P$ be a $\mu$-irreducible Markov kernel. Then, if $\pi$ is a $\Sigma$-measurable stationary probability measure, i.e., $\pi P = \pi$, then $\pi$ is unique, and $\pi$ and $\mu$ are equivalent measures.
    \end{theorem}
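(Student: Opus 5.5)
The plan is to reduce everything to the resolvent kernel $K_{1/2}(x,A) := \sum_{t \geq 1} 2^{-t} P^t(x,A)$. By definition of $\mu$-irreducibility, $K_{1/2}(x,A) > 0$ for every $x \in \Omega$ and every $A \in \Sigma$ with $\mu(A) > 0$. Moreover, any stationary measure $\pi$ (i.e.\ $\pi P = \pi$) satisfies $\pi P^t = \pi$ for all $t$, and hence $\pi K_{1/2} = \pi$ by monotone convergence.

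The first step is to show that every nonzero positive finite invariant measure $\nu$ dominates $\mu$. Let $\mu(A) > 0$. Then
\[\nu(A) = \int_\Omega K_{1/2}(x,A)\,\mathrm{d}\nu(x) > 0,\]
because the integrand is strictly positive everywhere and $\nu(\Omega) > 0$. Hence $\mu \ll \nu$, and in particular $\mu \ll \pi$.

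The second step proves uniqueness via the Jordan decomposition. Let $\pi_1,\pi_2$ be stationary probability measures and set $\nu := \pi_1 - \pi_2$. This is a finite signed measure with $\nu K_{1/2} = \nu$ and $\nu(\Omega) = 0$. Write $\nu = \nu^+ - \nu^-$ with $\nu^\pm$ mutually singular, supported on a Hahn decomposition $S^+ \sqcup S^-$. Since $K_{1/2}$ is a positive operator, $\nu^+ K_{1/2} \geq (\nu K_{1/2})^+ = \nu^+$ setwise. Both sides have the same total mass, because $K_{1/2}$ is a Markov kernel (it preserves total mass). A positive measure that dominates another of equal total mass must coincide with it, so $\nu^+ K_{1/2} = \nu^+$, and similarly for $\nu^-$. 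If both $\nu^+$ and $\nu^-$ were nonzero, then by the first step (applied to $K_{1/2}$, whose invariant measures are exactly what that step uses) both would dominate $\mu$. This gives $\mu(S^-) = 0$ since $\nu^+(S^-) = 0$, and $\mu(S^+) = 0$ likewise, so $\mu(\Omega) = 0$. I take $\mu$ to be nontrivial, since otherwise irreducibility is vacuous; so this is a contradiction. Hence one of $\nu^\pm$ vanishes, and $\nu(\Omega) = 0$ then forces $\nu = 0$, i.e.\ $\pi_1 = \pi_2$.

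The third step, and the main obstacle, is the reverse absolute continuity $\pi \ll \mu$. This is false for an arbitrary irreducibility measure: shrinking $\mu$, for example to a Dirac mass at a state reachable from everywhere, preserves $\mu$-irreducibility. So the intended reading must be that $\mu$ is a \emph{maximal} irreducibility measure, as in Meyn--Tweedie. For the Borel/Lebesgue settings used later in the paper, this is the case for the speedy-walk kernels with respect to Lebesgue measure on $K$.

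To handle this step, I would invoke the existence of a maximal irreducibility measure $\psi$ \cite[Proposition~4.2.2]{meyn2012markov}. It can be taken to be $\psi = \mu K_{1/2}$, and it has the property that $\psi(A) = 0$ implies $\psi(\{x : K_{1/2}(x,A) > 0\}) = 0$. Under the maximality assumption we have $\psi \equiv \mu$, so it suffices to show $\pi \ll \psi$. Suppose $\psi(A) = 0$ but $\pi(A) > 0$. Consider the set $\bar A := \{x : K_{1/2}(x,A) > 0\}$, which is $\psi$-null. From $\pi(A) = \int K_{1/2}(x,A)\,\mathrm{d}\pi(x)$ we get $\pi(\bar A) > 0$. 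The complement $\Omega \setminus \bar A$ is a set from which $A$ is never reached; I would show it is absorbing, and then obtain a contradiction. Concretely, I would use that for a maximal $\psi$ the set $\{x : K_{1/2}(x,\Omega\setminus\bar A)>0\}$ is everything, while $\pi$-stationarity applied to $\Omega\setminus \bar A$ forces the mass of $\bar A$ to leak into a set that cannot return to $A$. I expect this bookkeeping, in the style of Meyn--Tweedie Theorem~10.4.9, to be the delicate part. If it resists, I would simply cite that theorem for this direction.
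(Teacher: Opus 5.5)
Your Steps 1 and 2 are correct and self-contained. Strict positivity of $K_{1/2}(x,A)$ whenever $\mu(A)>0$ gives $\mu\ll\nu$ for every nonzero finite $K_{1/2}$-invariant $\nu$. The Jordan-decomposition argument ($\nu^+K_{1/2}\ge(\nu K_{1/2})^+=\nu^+$, together with conservation of mass) is sound.

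You are also right about both defects in the statement itself. Uniqueness needs $\mu\neq0$: with $\mu=0$ and $P(x,\cdot)=\delta_x$, every probability measure is stationary. And $\pi\ll\mu$ is false for a general irreducibility measure. For example, take $\Omega=[0,1]$, $P(x,\cdot)=\lambda$ for all $x$, $\pi=\lambda$ and $\mu=\lambda|_{[0,1/2]}$. Meyn--Tweedie state the equivalence for the maximal irreducibility measure $\psi$, and the paper's transcription drops this hypothesis.

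This is not cosmetic. The proof of \Cref{thm:discretization-conductance} uses exactly the direction $\pi\ll\mu$ (``since $\pi(\Omega_j)>0$ \dots\ $\mu(\Omega_j)>0$''). For the (cube) speedy walks with Lebesgue $\mu$, maximality does hold. The proposals have Lebesgue densities and rejection is an atom at the current point, so $\lambda P\ll\lambda$, and hence $\psi\sim\lambda K_{1/2}\ll\lambda\ll\psi$.

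Your Step 3, however, is only a plan, and its last sentence has the logic backwards. Stationarity does not force the mass of $\bar A$ to leak into $\Omega\setminus\bar A$; it forbids this. It is irreducibility that forces every point to leak. The step closes in a few lines without any recurrence theory:
\begin{itemize}
\item For $x\notin\bar A$ we have $\int P(x,\mathrm{d}y)\,K_{1/2}(y,A)=\sum_{t\ge1}2^{-t}P^{t+1}(x,A)\le 2K_{1/2}(x,A)=0$. Since $K_{1/2}(\cdot,A)>0$ on $\bar A$, this gives $P(x,\bar A)=0$. So $\Omega\setminus\bar A$ is absorbing, and $P^t(x,\Omega\setminus\bar A)=1$ there for all $t$.
\item Evaluating $\pi P^t=\pi$ on $\Omega\setminus\bar A$ yields $\int_{\bar A}P^t(x,\Omega\setminus\bar A)\,\mathrm{d}\pi(x)=0$ for every $t$. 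Hence $\int_{\bar A}K_{1/2}(x,\Omega\setminus\bar A)\,\mathrm{d}\pi(x)=0$.
\item Since $\psi(\bar A)=0$ and $\psi\neq0$, we have $\psi(\Omega\setminus\bar A)>0$. So the integrand is strictly positive everywhere, which forces $\pi(\bar A)=0$.
\item Finally, $\pi(A)=\int_{\bar A}K_{1/2}(x,A)\,\mathrm{d}\pi(x)=0$.
\end{itemize}
Maximality is used only to get $\psi(\bar A)=0$.

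The paper gives no argument beyond the citation. The cited route goes through Meyn--Tweedie's recurrence theory: an irreducible chain with an invariant probability is recurrent, and a recurrent chain has an invariant measure that is unique up to scaling and equivalent to $\psi$. Your route is elementary and never uses countable generation. It also makes clear that uniqueness and $\mu\ll\pi$ hold for any nonzero irreducibility measure, and that only $\pi\ll\mu$ needs maximality. The citation gives more (recurrence, and uniqueness up to scaling among $\sigma$-finite invariant measures), but the paper uses none of it.
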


    Next, we follow the exposition by Rudolf~\cite{rudolf2011explicit} to generalize the concept of reversibility to the measure-space setting (compare with \Cref{def:reversible-random-walk}).

    \begin{definition}[see, e.g., {\cite[Definition~3.4]{rudolf2011explicit}}]
        Let $(\Omega,\Sigma)$ be a measurable space where $\Sigma$ is countably-generated, and $P$ a Markov kernel. Let $\pi : \Sigma \to [0,1]$ be a probability measure. We say that $P$ is reversible with respect to $\pi$ if for all $A,B \in \Sigma$,
        \[\int_A P(x,B) \;\mathrm{d}\pi(x) = \int_B P(x,A) \;\mathrm{d}\pi(x).\]
    \end{definition}

    As in the discrete setting, reversibility with respect to a probability measure means that the measure is stationary.

    \begin{corollary}[see, e.g., {\cite[Section~3]{rudolf2011explicit}}]
        Let $(\Omega,\Sigma)$ be a measurable space where $\Sigma$ is countably-generated, and $P$ is a reversible Markov kernel w.r.t.\ a probability measure $\pi : \Sigma \to [0,1]$. Then, $\pi$ is a stationary measure for $P$.
    \end{corollary}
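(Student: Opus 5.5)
The plan is to show that $\pi P = \pi$ as measures on $\Sigma$, directly from the reversibility identity, by choosing one of the two sets to be the whole space.

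Fix an arbitrary $A \in \Sigma$. By definition,
\[(\pi P)(A) = \int_{\Omega} P(x,A) \;\mathrm{d}\pi(x).\]
We apply reversibility with the pair $(\Omega, A)$, i.e., taking the first set to be $\Omega$ and the second to be $A$. This gives
\[\int_{\Omega} P(x,A)\;\mathrm{d}\pi(x) = \int_A P(x,\Omega)\;\mathrm{d}\pi(x).\]
Since $A' \mapsto P(x,A')$ is a probability measure for every $x \in \Omega$, we have $P(x,\Omega) = 1$. The right-hand side therefore equals $\int_A 1 \;\mathrm{d}\pi(x) = \pi(A)$. Hence $(\pi P)(A) = \pi(A)$ for every $A \in \Sigma$, which is exactly stationarity.

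The only point needing care is that both integrals are well-defined. Measurability of $x \mapsto P(x,A)$ is part of the definition of a Markov kernel. The integrand takes values in $[0,1]$ and $\pi$ is a probability measure, so the integrals are finite. Consequently there is no real obstacle here: the statement is a one-line consequence of the definition, obtained by specializing one set in the reversibility identity to $\Omega$.
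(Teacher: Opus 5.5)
Your proposal is correct and is exactly the paper's argument: apply the reversibility identity to the pair $(\Omega, A)$ and use $P(x,\Omega) = 1$ to get $(\pi P)(A) = \pi(A)$. Your added remark that both integrals are well defined is a harmless extra that the paper leaves implicit.
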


    \begin{proof}
        For any $A \in \Sigma$, we have
        $(\pi P)(A) = \int_{\Omega} P(x,A) \;\mathrm{d}\pi(x) = \int_A P(x,\Omega) \mathrm{d}\pi(x) = \int_A \mathrm{d}\pi(x) = \pi(A)$. 
    \end{proof}

    From here, one could proceed with a more careful build-up of the properties of random walks on measure spaces, akin to~\Cref{subsec:prelims-markov-chains}. However, for our purposes, we only require the \textit{conductance} of these types of Markov chains. As such, we forgo setting up the rest of this theory here, and refer the interested reader to \cite[Section~3]{rudolf2011explicit}.






    \begin{definition}[{See, e.g., \cite[Section~3.5]{rudolf2011explicit}}]
        Let $(\Omega,\Sigma,\mu)$ be a measure space where $\Sigma$ is countably-generated, and $P$ a Markov kernel that is $\mu$-irreducible and reversible w.r.t.\ the probability measure $\pi : \Sigma \to [0,1]$. Then, the \textit{conductance} of $P$ is defined to be\footnote{We use a slightly different denominator than \cite{rudolf2011explicit}, to match the definition we use in the discrete setting, i.e., \Cref{def:conductance-gap}. Consequently, our definition only matches Rudolf's definition up to a multiplicative factor of at most $2$.}
        \[\varphi(P) = \inf_{\substack{A \in \Sigma \\ 0 < \pi(A) < 1}} \frac{\int_A P(x,\Omega \setminus A) \;\mathrm{d}\pi(x)}{\min\{\pi(A), \pi(\Omega \setminus A)\}}.\]
    \end{definition}


    \subsection{Induced Markov chain}


    Now that we have defined Markov chains and random walks over measurable spaces, we consider their discretizations.
    One useful approach that we will use is the Markov kernel induced on a finite partition of the state space.

    \begin{definition}[{Induced Markov kernel, see, e.g.~\cite[Section~4.6.1]{aldous2002reversible}}]
        \label{def:canonical-discretization}
        Let $(\Omega, \Sigma, \mu)$ be a measure space where $\Sigma$ is countably-generated, and $P$ be a reversible, $\mu$-irreducible Markov kernel with stationary probability measure $\pi$. Let $\Omega = \Omega_1 \sqcup \cdots \sqcup \Omega_n$ be a $\Sigma$-measurable finite partition of $\Omega$. Let $\Omega' = \{j \in [n] : \pi(\Omega_j) > 0\}$. For all $i,j \in \Omega'$, we define
        \[P'_{ij} := \frac{1}{\pi(\Omega_i)}\int_{\Omega_i} P(x,\Omega_j) \;\mathrm{d}\pi(x).\]
        Then, $(\Omega',P')$ is the induced discretization of $P$ w.r.t.\ $(\Omega_j)_{j=1}^n$.
    \end{definition}

    The benefit of this way of discretizing a reversible, irreducible Markov kernel is that lower bounds on the spectral gap and conductance carry over from the continuous to the discrete setting. In this work, we use this property of the conductance, for which we provide a theorem statement below.
    It is essentially a variation on \cite[Proposition 4.44]{aldous2002reversible}.

    \begin{theorem}
        \label{thm:discretization-conductance}
        Let $(\Omega, \Sigma, \mu)$ be a measure space where $\Sigma$ is countably-generated, $P$ be a reversible, $\mu$-irreducible Markov kernel with stationary probability measure $\pi$. Let $(\Omega',P')$ be the induced discretization w.r.t.\ $(\Omega_j)_{j=1}^n$. Then, $P'$ is irreducible, $P'$ is reversible w.r.t.\ $\pi'$, where for all $i \in \Omega'$, $\pi_i' = \pi(\Omega_i)$, and $\varphi(P') \geq \varphi(P)$.
    \end{theorem}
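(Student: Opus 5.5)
We prove the three claims in turn, working directly from \Cref{def:canonical-discretization}.

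\emph{Reversibility.} Write $\pi'_i = \pi(\Omega_i)$ for $i \in \Omega'$. Detailed balance reads
\[\pi'_i P'_{ij} = \int_{\Omega_i} P(x,\Omega_j)\,\mathrm{d}\pi(x).\]
Reversibility of $P$ with respect to $\pi$, applied with $A = \Omega_i$ and $B = \Omega_j$, shows this is symmetric in $i$ and $j$. So $\pi'_i P'_{ij} = \pi'_j P'_{ji}$.

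\emph{Stochasticity.} We check that $P'$ is a genuine transition matrix on $\Omega'$. Since $P(x,\cdot)$ is a probability measure, $\sum_{j=1}^n P(x,\Omega_j) = 1$, so $\sum_{j=1}^n P'_{ij} = 1$. The indices $j \notin \Omega'$ contribute nothing: for such $j$ we have $\pi(\Omega_j)=0$, and
\[\int_{\Omega_i} P(x,\Omega_j)\,\mathrm{d}\pi(x) \le (\pi P)(\Omega_j) = \pi(\Omega_j) = 0.\]
Hence $\sum_{j\in\Omega'} P'_{ij} = 1$. Together with $\sum_i \pi'_i = 1$, this makes $\pi'$ a probability distribution that is stationary for $P'$.

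\emph{Conductance.} Let $S \subseteq \Omega'$ with $0<\pi'(S)<1$, and lift it to $A = \bigcup_{i\in S}\Omega_i \in \Sigma$. Then $\pi(A) = \pi'(S)$. Up to a $\pi$-null set, $\Omega\setminus A$ is $\bigcup_{j\in\Omega'\setminus S}\Omega_j$, so $\pi(\Omega\setminus A) = \pi'(\Omega'\setminus S)$. For the edge flow, split $P(x,\Omega\setminus A)$ over the pieces $\Omega_j$ with $j \notin S$:
\[\sum_{i\in S,\, j\in \Omega'\setminus S} \pi'_i P'_{ij} = \int_A P(x,\Omega\setminus A)\,\mathrm{d}\pi(x).\]
The terms with $j\notin\Omega'$ vanish by the null-set bound above. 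Consequently the conductance ratio of $S$ under $P'$ equals the ratio of $A$ under $P$, which is at least $\varphi(P)$. Taking the minimum over $S$ gives $\varphi(P')\ge\varphi(P)$.

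\emph{Irreducibility.} This is the only step needing care. The plan is to deduce it from the conductance bound, which holds for every reversible stationary setup. Suppose $P'$ were reducible. Then some nonempty proper $S \subsetneq \Omega'$ has no edge leaving it, so its conductance ratio is $0$ (and $0<\pi'(S)<1$ because $\pi'>0$ on $\Omega'$). That would force $\varphi(P)=0$.

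To get the contradiction, $\varphi(P)>0$ must follow from $\mu$-irreducibility. Take $A$ with $0<\pi(A)<1$ and $\int_A P(x,\Omega\setminus A)\,\mathrm{d}\pi = 0$. Then $P(x,\Omega\setminus A)=0$ for $\pi$-a.e.\ $x\in A$. Iterating, the chain started from $\pi$-almost every point of $A$ never enters $\Omega\setminus A$, which has positive $\mu$-measure because $\pi$ and $\mu$ are equivalent. This contradicts $\mu$-irreducibility, up to the technicality that irreducibility is stated for every $x$ while we only control $\pi$-a.e.\ $x$.

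If that a.e.\ subtlety is troublesome, we argue at the discrete level instead. For $i\in S$, $j\notin S$, pick $x \in \Omega_i$ outside a $\pi$-null set. By irreducibility $P^t(x,\Omega_j)>0$ for some $t$. Decomposing the path through the partition pieces and integrating over $x\in\Omega_i$ with respect to $\pi$ then yields a positive $P'$-path from $i$ to $j$.

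The key identity throughout is that lifting $S$ to a union of cells preserves both the cut value and the stationary masses exactly.
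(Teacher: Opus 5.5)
Your reversibility and conductance steps are the paper's argument: you lift $S\subseteq\Omega'$ to $A=\bigcup_{i\in S}\Omega_i$, which preserves the cut value and the stationary masses exactly. Your checks that $P'$ is stochastic and that the $\pi$-null cells $j\notin\Omega'$ contribute nothing fill in details the paper leaves implicit.

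\textbf{Irreducibility: a genuinely different route.} The paper fixes $i,j$ with $(P')^t_{ij}=0$ for all $t$. It asserts that this gives $\int_{\Omega_i}P^t(x,\Omega_j)\,\mathrm{d}\pi(x)=0$ for every $t$, then discards the countably many null sets $S_t$ to find one point of $\Omega_i$ that never reaches $\Omega_j$. You instead take a closed class $S$ of $P'$ and reuse the one-step cut identity from the conductance step to get zero outflow from its lift $A$, then iterate inside the continuous chain.

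The advantage is that you only need the one-step correspondence between $P'$ and $P$. The paper's $t$-step transfer is stated without proof and needs exactly the domination argument your iteration also needs, so spell it out. The measure $B\mapsto\int_A P(x,B)\,\mathrm{d}\pi(x)$ is dominated by $\pi$ by stationarity. Hence $\int_A P^{t+1}(x,\Omega\setminus A)\,\mathrm{d}\pi(x)\le\int_A P(x,\Omega\setminus A)\,\mathrm{d}\pi(x)+\int_A P^{t}(y,\Omega\setminus A)\,\mathrm{d}\pi(y)$, and all these integrals vanish by induction.

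\textbf{First correction.} The sentence ``$\varphi(P)>0$ must follow from $\mu$-irreducibility'' is false on general state spaces, because the infimum defining $\varphi(P)$ can be $0$ without being attained. For example, take any irreducible nearest-neighbour chain on $\N$ reversible with respect to $\pi(n)\propto n^{-3}$. The outflow of $\{n,n+1,\dots\}$ is at most $\pi(n)$, while its mass is of order $n^{-2}$, so $\varphi(P)=0$. Your argument neither proves nor needs $\varphi(P)>0$. What it proves is that no $A$ with $0<\pi(A)<1$ has \emph{exactly} zero outflow. That suffices, because the lift of $S$ has zero outflow on the nose.

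\textbf{Second correction.} The ``a.e.\ technicality'' is not an obstacle. Since $\pi(A)>0$, a $\pi$-a.e.\ statement on $A$ produces at least one point $x$ with $P^t(x,\Omega\setminus A)=0$ for all $t$. Because $\mu$-irreducibility is required at every $x$, that single point already gives the contradiction, using $\mu(\Omega\setminus A)>0$ from $\pi(\Omega\setminus A)>0$ and the equivalence of $\pi$ and $\mu$. This is exactly how the paper finishes, by noting its set $S$ is nonempty.

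Your discrete-level fallback is essentially the paper's argument run forwards. It needs the same two steps: a countable-union argument to fix a single $t$, and the same domination to show that a positive path term forces every $P'$-step along it to be positive.
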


    \begin{proof}
        We start by proving irreducibility by contradiction. Suppose that $P'$ is not irreducible. Then, there exist $i,j \in \Omega'$ such that for all $t \in \N$, $(P')^t_{ij} = 0$. Thus, for all $t \in \N$, we have
        \[\int_{\Omega_i} P^t(x,\Omega_j) \;\mathrm{d}\pi(x) = 0,\]
        and so for all $t \in \N$, the set $S_t = \{x \in \Omega_i : P^t(x,\Omega_j) > 0\}$ satisfies $\pi(S_t) = 0$. This implies that $S := \Omega_i \setminus \cup_{t=1}^{\infty} S_t$ satisfies $\pi(S) > 0$ and so $S$ is non-empty. For $x \in S$, we have $P^t(x,\Omega_j) = 0$ for all $t \in \N$. Since $\pi(\Omega_j) > 0$ and $\pi$ and $\mu$ are equivalent measures, we also have $\mu(\Omega_j) > 0$. This contradicts the $\mu$-irreducibility of $P$, and therefore $P'$ must be irreducible.
        
        Next, we prove that $P'$ is reversible w.r.t.\ $\pi'$. Indeed, using the reversibility of $P$, we find for all $i,j \in \Omega'$,
        \[\pi'_iP'_{ij} = \pi(\Omega_i) \cdot \frac{1}{\pi(\Omega_i)} \int_{\Omega_i} P(x,\Omega_j) \;\mathrm{d}\pi(x) = \pi(\Omega_j) \cdot \frac{1}{\pi(\Omega_j)}\int_{\Omega_j} P(x,\Omega_i) \;\mathrm{d}\pi(x) = \pi'_jP'_{ji}.\]
        
        Finally, we relate prove the relation between their conductances. Let $J \subseteq \Omega'$, and $A = \sqcup_{j \in J} \Omega_j$. Then, we obtain that
        \[\frac{\sum_{j \in J, k \in \Omega' \setminus J} \pi'_jP'_{jk}}{\min\{\pi'(J), \pi'(\Omega' \setminus J)\}} = \frac{\int_A P(x,\Omega \setminus A) \;\mathrm{d}\pi(x)}{\min\{\pi(A), \pi(\Omega \setminus A)\}} \geq \varphi(P).\]
        Since this holds for any $J \subseteq \Omega'$, we can take the minimum on the left-hand side and conclude that $\varphi(P') \geq \varphi(P)$.
    \end{proof}


    \subsection{Perturbations of random walk transition probabilities}


    Even though the induced discretization provides a direct link between Markov processes on measurable spaces and their discretizations on finite state spaces, they are typically not very practical to implement directly because the transition probabilities can be hard to calculate. Thus, we resort to approximating them instead, which perturbs the chain, and so we require that key Markov chain properties are robust against these perturbations. We start by characterizing the effect on the stationary distribution.

    \begin{theorem}
        \label{thm:robustness-diameter}
        Suppose that $(\Omega,P)$ and $(\Omega,P')$ are reversible, irreducible random walks on a finite state space with stationary distributions $\pi$, $\pi'$. For all $x,y \in \Omega$, suppose that $P_{xy} \neq 0 \Leftrightarrow P'_{xy} \neq 0$, and let $D$ be the diameter of the underlying graph defining the random walk. Let $0 < \eta < 1/(32D)$, and suppose that for all $x,y \in \Omega$ where $P_{xy} \neq 0$, we have
        \[\left|\frac{P_{xy}'}{P_{xy}} - 1\right| \leq \eta.\]
        Then, for all $x \in \Omega$, we have
        \[\norm{P_{x\cdot} - P'_{x\cdot}}_{\mathrm{TV}} \leq \frac{\eta}{2}, \qquad \text{and} \qquad \left|\frac{\pi_x'}{\pi_x} - 1\right| < 3D\eta.\]
    \end{theorem}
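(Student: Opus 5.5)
The first claim is a one-line computation. For fixed $x$, both $P_{x\cdot}$ and $P'_{x\cdot}$ are probability vectors, so
\[\norm{P_{x\cdot}-P'_{x\cdot}}_{\mathrm{TV}} = \tfrac12\sum_y |P'_{xy}-P_{xy}| = \tfrac12\sum_{y : P_{xy}\neq 0} P_{xy}\left|\tfrac{P'_{xy}}{P_{xy}}-1\right| \leq \tfrac12\sum_y P_{xy}\,\eta = \tfrac{\eta}{2}.\]
Here I use that the supports of $P_{x\cdot}$ and $P'_{x\cdot}$ coincide, so terms with $P_{xy}=0$ vanish.

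For the stationary distributions I use detailed balance, which expresses ratios of stationary probabilities along edges. Since both chains are reversible and irreducible, for every edge $xy$ (that is, $P_{xy}\neq0$, equivalently $P_{yx}\neq 0$ by reversibility with positive $\pi$) we have $\pi_y/\pi_x = P_{xy}/P_{yx}$ and $\pi'_y/\pi'_x = P'_{xy}/P'_{yx}$. Define $\rho_x := \pi'_x/\pi_x$. Then
\[\frac{\rho_y}{\rho_x} = \frac{P'_{xy}/P_{xy}}{P'_{yx}/P_{yx}} \in \left[\frac{1-\eta}{1+\eta},\frac{1+\eta}{1-\eta}\right].\]
Fix two vertices $x,z$ and a path between them of length at most $D$. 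Multiplying along the path gives
\[\left(\frac{1-\eta}{1+\eta}\right)^{D} \leq \frac{\rho_z}{\rho_x} \leq \left(\frac{1+\eta}{1-\eta}\right)^{D}.\]
Because $\eta<1/(32D)$, I will bound $\left(\frac{1+\eta}{1-\eta}\right)^D \le \exp\bigl(D\ln\frac{1+\eta}{1-\eta}\bigr) \le \exp(2.1 D\eta)\le 1+2.2D\eta$, and similarly $\left(\frac{1-\eta}{1+\eta}\right)^D\ge 1-2.1D\eta$. So all $\rho$ lie within a multiplicative window $[m, m(1+cD\eta)]$ with $c\approx 2.2$, where $m:=\min_x\rho_x$.

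Next comes normalization. Since $\sum_x \pi_x\rho_x = \sum_x\pi'_x = 1$ and $\sum_x\pi_x = 1$, the $\pi$-weighted average of $\rho$ equals $1$. Hence $m\le 1\le \max_x\rho_x \le m(1+cD\eta)$. It follows that every $\rho_x$ lies in $[1/(1+cD\eta),\,1+cD\eta]$, so $|\rho_x-1|\le cD\eta < 3D\eta$.

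The main obstacle is bookkeeping of constants. I must make sure the logarithm and exponential estimates with $D\eta<1/32$ give a constant strictly below $3$, which $2.2$ does comfortably. I also need to justify that $P_{yx}\neq0$ whenever $P_{xy}\neq0$, so that the path can be traversed with every ratio defined; this follows from reversibility together with $\pi>0$ (\Cref{thm:irreducible-implies-unique-stationary}).
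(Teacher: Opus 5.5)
Your proof is correct and follows the paper's argument. It uses the same total-variation computation, and the same detailed-balance bound on $\rho_y/\rho_x$ per edge, multiplied along a path of length at most $D$.

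The only difference is the normalization step. The paper writes $\pi'_x/\pi_x$ as an average of the pairwise ratios. To be valid, that average should use weights $\pi'_y$ and the ratio $\frac{\pi'_x}{\pi_x}\cdot\frac{\pi_y}{\pi'_y}$; as printed, it uses $\pi_y$ and $\frac{\pi'_x}{\pi_x}\cdot\frac{\pi'_y}{\pi_y}$. Your min/max argument from $\sum_x \pi_x\rho_x = 1$ is just as short and avoids that slip. You also check the constant explicitly ($2.2<3$), whereas the paper simply asserts $\bigl(\frac{1+\eta}{1-\eta}\bigr)^D < 1+3D\eta$.
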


    \begin{proof}
        First, let $x \in \Omega$. Then,
        \[\norm{P_{x\cdot} - P'_{x\cdot}}_{\mathrm{TV}} = \frac12\sum_{y \in \Omega} |P_{xy} - P'_{xy}| = \frac12 \sum_{\substack{y \in \Omega \\ P_{xy} > 0}} P_{xy} \left|\frac{P'_{xy}}{P_{xy}} - 1\right| \leq \frac12 \sum_{\substack{y \in \Omega \\ P_{xy} > 0}} P_{xy} \cdot \eta = \frac{\eta}{2}.\]
        
        Next, let $x,y \in \Omega$ such that $P_{xy} > 0$ and $P'_{xy} > 0$. Then,
        \[\frac{\pi'_x}{\pi_x} \cdot \frac{\pi_y}{\pi'_y} = \frac{\pi_y}{\pi_x} \cdot \frac{\pi'_x}{\pi'_y} = \frac{P_{xy}}{P_{yx}} \cdot \frac{P'_{yx}}{P'_{xy}} = \frac{P_{xy}}{P'_{xy}} \cdot \frac{P'_{yx}}{P_{yx}} \in \left[\frac{1-\eta}{1+\eta}, \frac{1+\eta}{1-\eta}\right].\]
        Hence, for any pair $x,y \in \Omega$, let $x = x_0, \dots, x_\ell = y$ be the shortest path from $x$ to $y$ such that for any $j \in [\ell]$, $P_{x_{j-1}x_j} > 0$. We find that $\ell \leq D$, and so
        \[\frac{\pi'_x}{\pi_x} \cdot \frac{\pi_y}{\pi'_y} = \prod_{j=1}^{\ell} \left(\frac{\pi'_{x_{j-1}}}{\pi_{x_{j-1}}} \cdot \frac{\pi_{x_j}}{\pi'_{x_j}}\right) \in \left[\left(\frac{1-\eta}{1+\eta}\right)^{\ell}, \left(\frac{1+\eta}{1-\eta}\right)^{\ell}\right] \subseteq \left[\left(\frac{1-\eta}{1+\eta}\right)^D, \left(\frac{1+\eta}{1-\eta}\right)^D\right].\]
        We use the 
        inequalities $((1+\eta)/(1-\eta))^D < 1+3D\eta$ and $((1-\eta)/(1+\eta))^D > 1-3D\eta$, and so for all $x,y \in \Omega$, we have
        \[\left|\frac{\pi'_x}{\pi_x} \cdot \frac{\pi_y}{\pi'_y} - 1\right| < 3D\eta.\]
        Thus, for all $x \in \Omega$,
        \[\left|\frac{\pi'_x}{\pi_x} - 1\right| = \left|\sum_{y \in \Omega} \pi_y \cdot \frac{\pi'_x}{\pi_x} \cdot \frac{\pi'_y}{\pi_y} - 1\right| \leq \sum_{y \in \Omega} \pi_y \left|\frac{\pi'_x}{\pi_x} \cdot \frac{\pi'_y}{\pi_y} - 1\right| < 3D\eta.\qedhere\]
    \end{proof}

    Next, we analyze how the conductance behaves under perturbations of the transition probabilities. We do this in general, so we can apply it both to the continuous and discrete settings.

    \begin{theorem}
        \label{thm:conductance-robustness}
        Let $(\Omega,\Sigma,\mu)$ be a measure space, where $\Sigma$ is countably generated, and let $(\Omega,P)$ and $(\Omega,P')$ be random walks with stationary measures $\pi$ and $\pi'$, such that $\pi' \ll \pi$. Suppose that for every $x \in \Omega$, we have
        \[\norm{P(x,\cdot) - P'(x,\cdot)}_{\mathrm{TV}} \leq \eta, \qquad \text{and} \qquad \left|\frac{\mathrm{d}\pi'}{\mathrm{d}\pi} - 1\right| \leq \xi < \frac14 \text{ a.s.\ on } \mathrm{supp}(\pi').\]
        Then, $\varphi(P') \geq \varphi(P)(1-\xi)/(2(1+\xi)) - \eta > 3\varphi(P)/10 - \eta$.
    \end{theorem}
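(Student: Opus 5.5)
Fix a measurable set $A$ with $0<\pi'(A)<1$, and write $\rho := \mathrm{d}\pi'/\mathrm{d}\pi$, so that $1-\xi \le \rho \le 1+\xi$ holds $\pi'$-a.s. The plan is to lower bound the numerator $\int_A P'(x,\Omega\setminus A)\,\mathrm{d}\pi'(x)$ by $\pi'$-ergodic flow under $P$, minus a total-variation error. We then convert everything to $\pi$-quantities, which brings in $\varphi(P)$, and finally convert the denominator back to $\pi'$.

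\textbf{Step 1 (perturb the kernel).} For every $x$ we have $P'(x,\Omega\setminus A) \ge P(x,\Omega\setminus A) - \eta$ by the total-variation bound. Integrating over $A$ against $\pi'$ gives
\[
\int_A P'(x,\Omega\setminus A)\,\mathrm{d}\pi'(x) \;\ge\; \int_A P(x,\Omega\setminus A)\,\mathrm{d}\pi'(x) - \eta\,\pi'(A).
\]
The error term $\eta\,\pi'(A)$ only contributes $\eta\,\pi'(A)/\min\{\pi'(A),\pi'(\Omega\setminus A)\}$ after dividing. This is at most $\eta$ when $\pi'(A)\le 1/2$. When $\pi'(A)>1/2$, I would instead use reversibility of $P'$, i.e.\ symmetry of the ergodic flow, to swap $A$ and $\Omega\setminus A$. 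This reduces to the case $\pi'(A)\le 1/2$, where the error is exactly $\eta$.

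\textbf{Step 2 (change measure).} Using $\rho \ge 1-\xi$, we get $\int_A P(x,\Omega\setminus A)\,\mathrm{d}\pi'(x) \ge (1-\xi)\int_{A\cap \mathrm{supp}(\pi')} P(x,\Omega\setminus A)\,\mathrm{d}\pi(x)$. The main obstacle is the region where $\pi'$ vanishes but $\pi$ does not. There $\rho=0$, which is not within $\xi$ of $1$, yet the statement only controls $\rho$ on $\mathrm{supp}(\pi')$.

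To handle this, I would first show that $\pi(\Omega\setminus\mathrm{supp}(\pi'))$ must be zero. Let $S=\mathrm{supp}(\pi')$. Then
\[
1 = \pi'(S) = \int_S \rho\,\mathrm{d}\pi \le (1+\xi)\,\pi(S),
\]
so $\pi(S)\ge 1/(1+\xi)$, but this alone does not force $\pi(S)=1$. If that fails, I would fall back on replacing $A$ by $A\cup(\Omega\setminus S)$ or $A\cap S$, which leaves the $\pi'$-quantities unchanged, and choosing whichever version is favorable. I expect the slack factor $1/2$ in the claimed bound to absorb this loss.

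\textbf{Step 3 (apply $\varphi(P)$ and convert back).} By definition, $\int_A P(x,\Omega\setminus A)\,\mathrm{d}\pi \ge \varphi(P)\min\{\pi(A),\pi(\Omega\setminus A)\}$. Next, $\pi(A)\ge \pi'(A)/(1+\xi)$ since $\rho\le 1+\xi$ on $S$, and similarly for $\Omega\setminus A$. Combining these gives a numerator of at least
\[
\frac{1-\xi}{1+\xi}\,\varphi(P)\min\{\pi'(A),\pi'(\Omega\setminus A)\},
\]
up to the support correction from Step 2, which costs the extra factor $1/2$. Dividing by $\min\{\pi'(A),\pi'(\Omega\setminus A)\}$ and taking the infimum over $A$ yields the first inequality.

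The final numeric bound is a plug-in. For $\xi<1/4$ we have $(1-\xi)/(2(1+\xi)) > (3/4)/(5/2) = 3/10$, which gives $\varphi(P') > 3\varphi(P)/10 - \eta$.
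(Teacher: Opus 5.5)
You follow the paper's outline: perturb the numerator by $\eta$ in total variation, reduce to $\pi'(A)\le 1/2$ by symmetry of the ergodic flow, change measure with $\rho\ge 1-\xi$, then invoke $\varphi(P)$. But Step 2 is not finished. For the case $\pi(\Omega\setminus S)>0$ you only say you expect the factor $1/2$ to absorb the support correction. That is not proved, and it is not how the problem resolves.

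\textbf{The missing step.} You may assume $A\subseteq S$ \emph{before} Step 1, at no cost. The quantities $\pi'(A)$, $\pi'(\Omega\setminus A)$ and the flow $\int_A P'(x,\Omega\setminus A)\,\mathrm{d}\pi'(x)$ are unchanged when $A$ is altered by a $\pi'$-null set $E$. This is because stationarity of $\pi'$ gives $\int_\Omega P'(x,E)\,\mathrm{d}\pi'(x)=\pi'(E)=0$.

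\textbf{Why the order matters.} Suppose you run Step 1 on the original $A$ and only then restrict the integral to $A\cap S$. The resulting quantity $\int_{A\cap S}P(x,\Omega\setminus A)\,\mathrm{d}\pi$ is not the $P$-flow out of any set; it omits the flow from $A\cap S$ into $A\setminus S$. So $\varphi(P)$ cannot be applied to it.

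\textbf{With $A\subseteq S$ the rest goes through.} Step 2 loses nothing. Step 3 is valid verbatim: $\rho=0$ off $S$, so $\rho\le 1+\xi$ holds $\pi$-a.e.\ on all of $\Omega$. Hence $\pi(B)\ge\pi'(B)/(1+\xi)$ for every $B$, including $B=\Omega\setminus A\not\subseteq S$. Also $0<\pi(A)<1$ follows from $\pi'\ll\pi$. You obtain $\varphi(P')\ge\frac{1-\xi}{1+\xi}\varphi(P)-\eta$, which is stronger than the statement.

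\textbf{Comparison with the paper.} The paper's proof does not address the support issue at all. Its first display $|\pi'(A)-\pi(A)|\le\xi\pi(A)$ and its final inequality both use $|\rho-1|\le\xi$ on all of $A$, which is exactly what fails on $A\setminus S$. This case genuinely occurs in the paper's application, where $\pi'=\pi^C_{\beta,\delta,\zeta}$ lives on $M_{K,\zeta}$ while $\pi=\hat\pi_{\beta,\delta}$ lives on the larger $\hat K$. The same reduction to $A\subseteq S$ repairs the paper's argument.

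The paper's factor $1/2$ comes from a cruder denominator estimate: it uses $\pi(\Omega\setminus A)=1-\pi(A)\ge 1-\pi'(A)/(1-\xi)$ together with $\pi'(A)\le 1/2$. You instead convert $\pi(A)$ and $\pi(\Omega\setminus A)$ separately via $\rho\le 1+\xi$. So your route, once completed, gives the sharper constant $(1-\xi)/(1+\xi)$; there is no loss for the $1/2$ to absorb.
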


    \begin{proof}
        Let $A \in \Sigma$ such that $0 < \pi'(A) \leq 1/2$. Then, we use the Radon-Nikodym theorem to conclude that
        \[|\pi'(A) - \pi(A)| = \left|\int_A \;\mathrm{d}\pi' - \int_A \;\mathrm{d}\pi\right| = \left|\int_A \left(\frac{\mathrm{d}\pi'}{\mathrm{d}\pi}(x) - 1\right) \;\mathrm{d}\pi(x)\right| \leq \int_A \left|\frac{\mathrm{d}\pi'}{\mathrm{d}\pi} - 1\right| \;\mathrm{d}\pi \leq \xi\pi(A).\]
        Consequently, $(1-\xi)\pi(A) \leq \pi'(A) \leq (1+\xi)\pi(A)$, and
        \[\pi(\Omega \setminus A) = 1 - \pi(A) \geq 1 - \frac{\pi'(A)}{1-\xi} \geq 1 - \frac{1}{2(1-\xi)} = \frac{1 - 2\xi}{2(1-\xi)} > \frac{\pi'(A)}{2(1 - \xi)} > \frac{\pi'(A)}{2(1+\xi)}.\]
        Thus, $\pi'(A) \leq 2(1+\xi)\min\{\pi(A), \pi(\Omega \setminus A)\}$, and so we observe that
        \begin{align*}
            &\frac{1}{\pi'(A)} \int_A P'(x,\Omega \setminus A) \;\mathrm{d}\pi'(x) \geq \frac{1}{\pi'(A)} \int_A (P(x,\Omega \setminus A) - \eta) \;\mathrm{d}\pi'(x) = \frac{1}{\pi'(A)} \int_A P(x,\Omega \setminus A) \;\mathrm{d}\pi'(x) - \eta \\
            &\qquad \geq \frac{1}{2(1+\xi)\min\{\pi(A), \pi(\Omega \setminus A)\}} \int_A p(x,\Omega \setminus A) \frac{\mathrm{d}\pi'}{\mathrm{d}\pi}(x) \;\mathrm{d}\pi(x) - \eta \geq \frac{\varphi(P)(1-\xi)}{2(1+\xi)} - \eta.
        \end{align*}
        As this holds for any $A \in \Sigma$ for which $0 < \pi'(A) < 1/2$, it also holds for $\Omega \setminus A \in \Sigma$ by reversibility. Thus, taking the minimum over $A$ on the left-hand side yields the desired inequality.
    \end{proof}

    \section{Discretization of random walks over $\R^d$} \label{sec:discretization}

    When a Markov process is defined on a \textit{metric} space, rather than a measurable space, one can use smoothness properties of the Markov kernel's measures to implement more practical discretizations of the Markov process with bounded perturbations of the transition probabilities. In this work, it suffices to consider only the Euclidean space $\R^d$, but what we do here could be extended to Riemannian manifolds as well.

    \subsection{Smoothness conditions on sets and functions}
    
    We focus on random walks over sets $\Omega \subseteq \R^d$ of a particular type, namely that are Jordan-measurable.

    \begin{definition}[Jordan-measurable sets~{\cite[Section~11.2.3, Definition~5]{zorich2016mathematical}}]
        A set $\Omega \subseteq \R^d$ is Jordan-measurable if it is bounded and its boundary is a measure-zero set.
    \end{definition}

    We also consider functions that satisfy a specific smoothness condition, which we refer to as piecewise Lipschitz-continuity. Slight variations of this notion exist, the closest matching definition seems to be \cite[Assumption~2.2]{mikkelsen2018degrees}. See also \cite[Definition~3]{leobacher2022exception} for a related notion and its various versions studied in the literature.

    \begin{definition}[Piecewise Lipschitz-contiuous functions]
        Let $\Omega \subseteq \R^d$ be Jordan-measurable, and $f : \Omega \to \C$. Let $n \in \N$, and $\Omega = \Omega_1 \sqcup \cdots \sqcup \Omega_n$ be a finite partition, where for every $j \in [n]$, $\Omega_j$ is also Jordan-measurable. Suppose that $f|_{\Omega_j}$ is Lipschitz-continuous, that is, there is an $L > 0$ such that for $x,y \in \Omega_j$, we have $|f(x) - f(y)| \leq L\norm{x-y}$. Then, we say that $f$ is piecewise $L$-Lipschitz-continuous on the partition $(\Omega_j)_{j=1}^n$.
    \end{definition}

    We prove several useful properties of piecewise Lipschitz-continuous functions.

    \begin{lemma}
        \label{lem:lipschitz-properties}
        Let $\Omega \subseteq \R^d$ be Jordan-measurable, and let $f,g : \Omega \to \C$ be piecewise Lipschitz-continuous. Then, $f+g$ and $fg$ are also piecewise Lipschitz continuous. Moreover, if $f$ is globally bounded, i.e., $f : \Omega \to [a,b]$ for $0 < a < b$, then $1/f$ is also piecewise Lipschitz continuous.
    \end{lemma}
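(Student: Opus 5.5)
The natural approach is to pass to a common refinement of the two partitions and then use elementary Lipschitz estimates piece by piece. Let $f$ be piecewise $L_f$-Lipschitz on $(\Omega_i)_{i=1}^n$ and $g$ piecewise $L_g$-Lipschitz on $(\Omega'_j)_{j=1}^m$. Consider the family $\{\Omega_i \cap \Omega'_j\}_{i,j}$, discarding empty sets. Each intersection is bounded. Moreover $\partial(A\cap B) \subseteq \partial A \cup \partial B$, so its boundary is contained in a finite union of measure-zero sets. Hence this is a finite partition of $\Omega$ into Jordan-measurable pieces, and the restriction of $f$ (respectively $g$) to each piece is still $L_f$- (respectively $L_g$-) Lipschitz. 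From here on, all statements are checked on a single piece $P$.

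For the sum, on $P$ we have $|(f+g)(x)-(f+g)(y)| \le (L_f+L_g)\norm{x-y}$, which is immediate. For the product, write $f(x)g(x)-f(y)g(y) = f(x)(g(x)-g(y)) + g(y)(f(x)-f(y))$. This gives a Lipschitz bound $\sup_P|f|\,L_g + \sup_P|g|\,L_f$, provided $f$ and $g$ are bounded on $P$.

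Boundedness is the one genuine point to address, since the definition does not assume it. I would argue as follows. A Lipschitz function on a bounded set is bounded: fix $x_0\in P$; then $|f(x)| \le |f(x_0)| + L_f\,\mathrm{diam}(P)$, and $\mathrm{diam}(P) < \infty$ because $\Omega$ is Jordan-measurable and hence bounded. So $\sup_P |f|$ and $\sup_P |g|$ are finite on every piece. Taking the maximum of the resulting constants over the finitely many pieces yields a single Lipschitz constant for $fg$.

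For the reciprocal, assume $f:\Omega\to[a,b]$ with $a>0$, and keep the original partition. On each $\Omega_i$,
\[
|1/f(x) - 1/f(y)| = \frac{|f(y)-f(x)|}{|f(x)f(y)|} \le \frac{L_f}{a^2}\norm{x-y},
\]
so $1/f$ is piecewise $L_f/a^2$-Lipschitz on the same partition.

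The only step that needs any care is verifying that the common refinement still consists of Jordan-measurable sets; it follows from the boundary inclusion stated above.
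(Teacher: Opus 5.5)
Your proof is correct and takes the same route as the paper: restrict to the finitely many intersections of the two partitions and apply the standard Lipschitz estimates on each piece. You also make explicit two points the paper leaves implicit, namely that these intersections are again Jordan-measurable (via $\partial(A\cap B)\subseteq\partial A\cup\partial B$), and that $f$ and $g$ are bounded on each piece because $\Omega$ is bounded, which the product estimate needs.
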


    \begin{proof}
        We observe that both $f$ and $g$ are Lipschitz on the intersections of the elements of their partitions, of which there are finitely many. Thus, the statement follow directly from applying the properties of Lipschitz functions to these restrictions.
    \end{proof}

    Next, we introduce some notation for the rounding of vectors, sets, functions and densities in $\R^d$ to a grid with a given spacing $\zeta > 0$.

    \begin{definition}[Rounding]
        \label{def:grid-rounding}
        Let $\zeta > 0$. For all $x \in \R^d$, we write
        \[x_{\zeta} = \zeta \cdot \mathrm{round}(x/\zeta).\]
        Here, $\mathrm{round}$ rounds each of the entries of a $d$-dimensional vector individually to the nearest integer. Conversely, we write
        \[M_{y,\zeta} = \{x \in \R^d : x_\zeta = y_\zeta\}.\]
        For a set $\Omega \subseteq \R^d$, we define
        \[\Omega_\zeta = \zeta \Z^d \cap \Omega, \qquad \text{and} \qquad M_{\Omega,\zeta} = \bigcup_{x \in \Omega_\zeta} M_{x,\zeta}.\]
        For a function $f : \Omega \to X$, we define $f_\zeta : \Omega_{\zeta} \to X$ as $f_\zeta(x) = f(x)$. If $0 \leq f \in L^1(\Omega)$ is continuous, then we denote by $\pi_f$ the probability distribution on $\Omega$ with density $f$, i.e., $\pi_f(x) = f(x)/N_f$, where $N_f = \int_{\Omega} f(x)\;\mathrm{d}x$. Similarly, we let $\pi_{f,\zeta}$ be the normalized distribution on $\Omega_{\zeta}$ of $f_{\zeta}$, i.e., $(\pi_{f,\zeta})_x = f(x)/N_{f,\zeta}$ where $N_{f,\zeta} = \sum_{x \in \Omega_{\zeta}} f(x)$.
    \end{definition}

    \begin{figure}[htb]
    \centering
    \includegraphics[width=.5\textwidth]{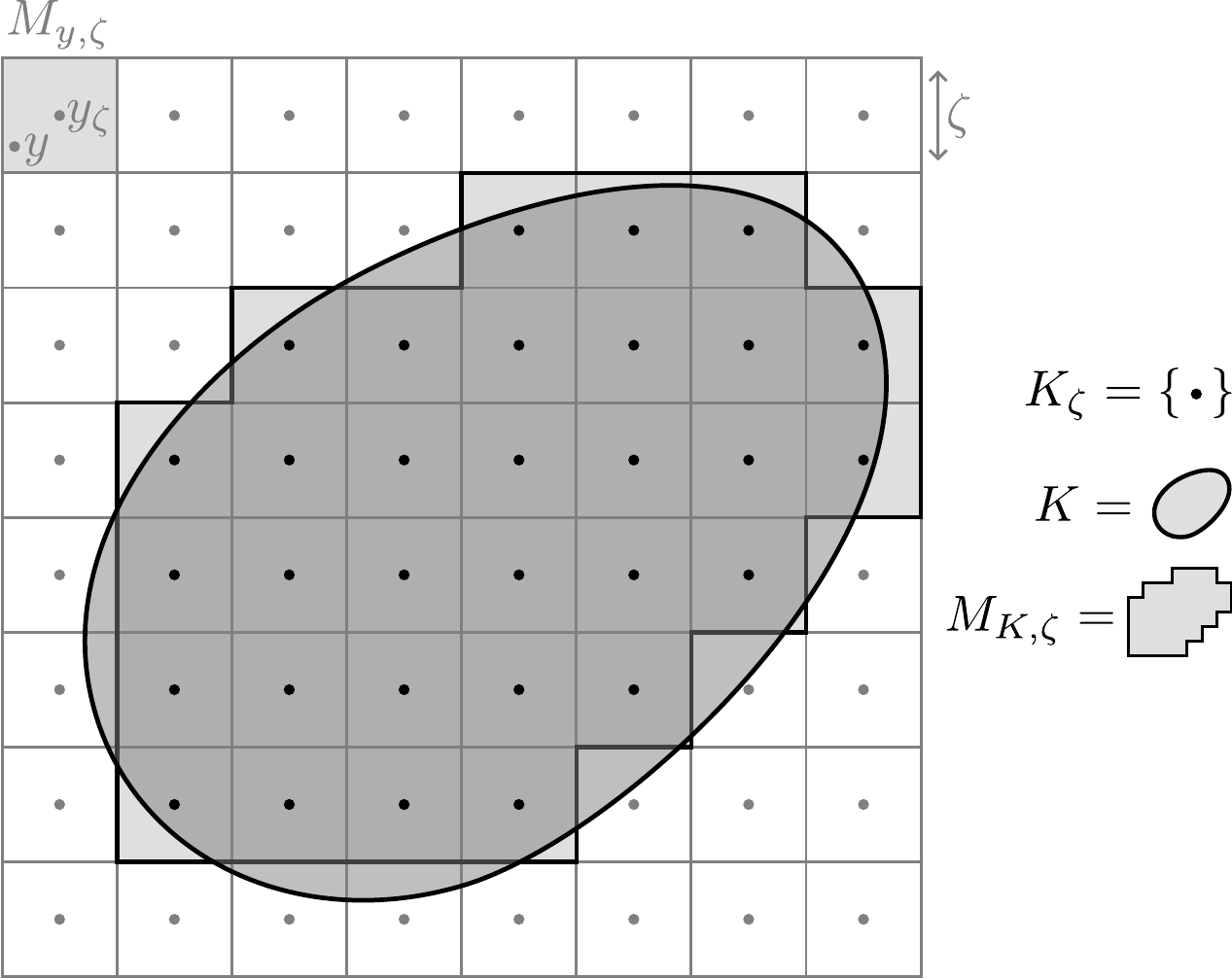}
    \caption{Illustration of the concepts in \Cref{def:grid-rounding}.}
    \label{fig:discretization}
    \end{figure}

    Now, we prove a useful approximation of integrals using finite summations of functions that are piecewise Lipschitz-continuous.

    \begin{lemma}[Approximation of integrals of piecewise Lipschitz continuous functions]
        \label{lem:smoothness}
        Let $\Omega \subseteq \R^d$, $L > 0$ and $f:\Omega \to \C$ piecewise $L$-Lipschitz continuous on the partition $(\Omega_j)_{j=1}^n$. 
        For any $\zeta > 0$, let
        \[C_{\zeta} := \sum_{j=1}^n \Vol(\{x \in \R^d : \mathrm{dist}(x,\partial \Omega_j) < \zeta\}).\]
        Then, 
        \begin{align*}
            \left|\int_{\Omega} f(x) \;\mathrm{d}x - \zeta^d \sum_{x \in \Omega_\zeta} f(x)\right| &\leq \int_{\Omega \cup M_{\Omega,\zeta}} |f(x)\mathbbm{1}_{x \in \Omega} - f_\zeta(x)\mathbbm{1}_{x \in M_{\Omega,\zeta}}| \;\mathrm{d}x \\
            &\leq 2C_{\sqrt{d}\zeta}\|f\|_\infty + L\sqrt{d}\Vol(\Omega) \zeta \downarrow 0, \qquad (\zeta \downarrow 0).
        \end{align*}
    \end{lemma}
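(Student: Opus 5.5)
The plan is to write the Riemann sum as an integral over the union of grid cells and compare pointwise. Each $x \in \Omega_\zeta$ carries the cell $M_{x,\zeta}$, an axis-aligned cube of side $\zeta$ and volume $\zeta^d$ centered at $x$. Distinct grid points have disjoint cells, so
\[\zeta^d \sum_{x \in \Omega_\zeta} f(x) = \int_{M_{\Omega,\zeta}} f_\zeta(x_\zeta)\;\mathrm{d}x.\]
Here I read ``$f_\zeta(x)\mathbbm{1}_{x\in M_{\Omega,\zeta}}$'' in the statement as the piecewise constant extension $x \mapsto f(x_\zeta)$. With this identity, the first inequality is just the triangle inequality for the integral of the difference $f\mathbbm{1}_\Omega - f(\cdot_\zeta)\mathbbm{1}_{M_{\Omega,\zeta}}$ over $\Omega \cup M_{\Omega,\zeta}$.

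For the second inequality, split the domain $\Omega \cup M_{\Omega,\zeta}$ into two parts: the \emph{bad} set $B$ of points within distance $\sqrt{d}\zeta$ of some $\partial\Omega_j$, and its complement. Any $x$ and its rounding $x_\zeta$ satisfy $\norm{x - x_\zeta} \leq \sqrt{d}\zeta/2$.
\begin{itemize}
\item If $x \notin B$, then $x$ and $x_\zeta$ lie in the same piece $\Omega_j$. Indeed, the segment between them cannot cross $\partial\Omega_j$, since otherwise some boundary point would lie within distance $\sqrt{d}\zeta/2$ of $x$. In particular $x \in \Omega$ if and only if $x_\zeta \in \Omega$, that is, if and only if $x \in M_{\Omega,\zeta}$. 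So both indicators agree, and by the Lipschitz property on $\Omega_j$ the integrand is at most $L\norm{x-x_\zeta} \leq L\sqrt{d}\zeta$. Integrating over $\Omega$ gives at most $L\sqrt{d}\,\Vol(\Omega)\zeta$. Points of $M_{\Omega,\zeta}\setminus\Omega$ never occur here, because $x_\zeta \in \Omega$ and $x \notin \Omega$ would force $x \in B$.
\item On $B$, the integrand is bounded by $|f(x)| + |f(x_\zeta)| \leq 2\norm{f}_\infty$. Since $\Vol(B) \leq C_{\sqrt{d}\zeta}$ by a union bound over $j$, this part contributes at most $2C_{\sqrt{d}\zeta}\norm{f}_\infty$.
\end{itemize}
Adding the two parts gives the stated bound.

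The main obstacle is the limit $\zeta \downarrow 0$, namely showing $C_{\sqrt{d}\zeta} \to 0$. The sets $\{\mathrm{dist}(\cdot,\partial\Omega_j) < \varepsilon\}$ decrease, as $\varepsilon \downarrow 0$, to the closure $\overline{\partial\Omega_j} = \partial\Omega_j$. Each such set is bounded because $\Omega_j$ is bounded, so it has finite measure, and continuity from above of Lebesgue measure gives the limit $\Vol(\partial\Omega_j) = 0$ by Jordan measurability. Summing over the finitely many $j$ gives $C_{\sqrt{d}\zeta} \to 0$. The term $L\sqrt{d}\,\Vol(\Omega)\zeta$ tends to zero trivially, and $\norm{f}_\infty < \infty$ since $f$ is Lipschitz on each of finitely many bounded pieces.
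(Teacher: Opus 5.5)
Your proposal is correct and follows essentially the same argument as the paper: you rewrite the Riemann sum as $\int_{M_{\Omega,\zeta}} f(x_\zeta)\,\mathrm{d}x$, bound the integrand by $2\norm{f}_\infty$ on a $\sqrt{d}\zeta$-boundary layer of total volume at most $C_{\sqrt{d}\zeta}$ and by the Lipschitz estimate elsewhere, and conclude via continuity from above of Lebesgue measure together with $\Vol(\partial\Omega_j)=0$. The only cosmetic difference is the order of the split: the paper works per piece with $\Omega_j \cap M_{\Omega_j,\zeta}$ and $\Omega_j \,\Delta\, M_{\Omega_j,\zeta}$ and then places the symmetric difference inside the boundary layer, whereas you define the layer first and show that its complement keeps $x$ and $x_\zeta$ in the same piece, which even gives the slightly sharper constant $\sqrt{d}\zeta/2$.
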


    \begin{proof}
        We have
        \[\left|\int_{\Omega} f(x) \;\mathrm{d}x - \zeta^d \sum_{x \in \Omega_\zeta} f(x)\right| = \left|\int_{\Omega} f(x) \;\mathrm{d}x - \int_{M_{\Omega,\zeta}} f_\zeta(x) \;\mathrm{d}x\right| \leq \int_{\Omega \cup M_{\Omega,\zeta}} |f(x)\mathbbm{1}_{x \in \Omega} - f_\zeta(x)\mathbbm{1}_{x \in M_{\Omega,\zeta}}| \;\mathrm{d}x.\]
        This establishes the first inequality. Next, we write
        \begin{align*}
            &\int_{\Omega \cup M_{\Omega,\zeta}} |f(x)\mathbbm{1}_{x \in \Omega} - f_\zeta(x)\mathbbm{1}_{x \in M_{\Omega,\zeta}}| \;\mathrm{d}x \leq \sum_{j=1}^n \int_{\Omega_j \cup M_{\Omega_j,\zeta}} |f(x)\mathbbm{1}_{x \in \Omega} - f_\zeta(x)\mathbbm{1}_{x \in M_{\Omega,\zeta}}| \;\mathrm{d}x \\
            &\qquad \leq \sum_{j=1}^n \left(\int_{\Omega_j \cap M_{\Omega_j,\zeta}} |f(x) - f_\zeta(x)| \;\mathrm{d}x + 2\norm{f}_{\infty}\Vol(\Omega_j \; \Delta \; M_{\Omega_j,\zeta})\right).
        \end{align*}
        First observe that, by construction, every $x \in \Omega_j \; \Delta \; M_{\Omega_j,\zeta}$ satisfies $\mathrm{dist}(x,\partial\Omega_j) \leq \sqrt{d} \zeta$. We now bound the two terms separately. The previous observation together with the assumption on the sets $\Omega_j$ implies $\sum_{j=1}^n \Vol(\Omega_j \; \Delta \; M_{\Omega_j,\zeta}) \leq C_{\sqrt{d}\zeta}$, which bounds the second term. For the first term, we use the Lipschitz property to conclude that for each $j \in [n]$,
        \begin{align*}
            \int_{\Omega_j \cap M_{\Omega_j,\zeta}} |f(x) - f_\zeta(x)| \;\mathrm{d}x &\leq L \int_{\Omega_j \cap M_{\Omega_j,\zeta}} \norm{x - x_\zeta} \;\mathrm{d}x \leq L \sqrt{d}\zeta \Vol(\Omega_j).
        \end{align*}
        Summing over $j$ and using $\Vol(\Omega) = \sum_{j=1}^n \Vol(\Omega_j)$ establishes the inequality.
        
        Finally, as every $\Omega_j$ is Jordan-measurable, the volume of its boundary is $0$. Since for any decreasing sequence $(\zeta_i)_{i \in \N}$, we have
        \[\partial \Omega_j = \bigcap_{i=1}^{\infty} \{x \in \R^d : \mathrm{dist}(x,\partial \Omega_j) < \zeta_i\},\]
        we obtain that the sequence of volumes of the sets on the right-hand side must decrease to $0$.
    \end{proof}

    \subsection{Discretization of walks with a Metropolis filter}

    We consider a general form of a random walk with a Metropolis filter.

    \begin{definition}[Naive discretization of Metropolis walks]
        \label{def:naive-discretization}
        Let $\Omega \subseteq \R^d$. Let $f : \Omega \to \R_{>0}$, and for all $x \in \Omega$, let $p_x : \Omega \to \R_{\geq0}$ not identically zero, such that for any $x,y \in \Omega$, we have $p_x(y) = p_y(x)$. Then, we let $(\Omega,P)$ be the Markov process with proposal density $(p_x)_{x \in \Omega}$ and target density $f$ that at $x \in \Omega$ performs the following operations:
        \begin{enumerate}[nosep]
            \item Select a proposal $y$ according to $\pi_{p_x}$.
            \item Transition to $y$ with probability $a_{xy} := \min\{1,f(y)/f(x)\}$, else stay in $x$.
        \end{enumerate}
        We also consider its \textit{naive discretization} $(\Omega_{\zeta},P_{\zeta})$, for all $x,y \in \Omega_{\zeta}$ with $x \neq y$ defined as $(P_{\zeta})_{x,y} = (\pi_{p_x,\zeta})_y \cdot a_{x,y}$.
    \end{definition}

    Intuitively, when the proposal distributions and density are sufficiently smooth, then for decreasing choices of the grid spacing parameter, its naive discretization should introduce a vanishing perturbation. The following theorem makes that intuition precise.

    \begin{theorem}
        \label{thm:naive-discretization}
        Let $\Omega \subseteq \R^d$ be Jordan-measurable, $0 < a < b$ and $r > 0$. Let $L > 0$ and for all $x \in \Omega$, let $p_x : \Omega \to \{0\} \cup [a,b]$ be piecewise $L$-Lipschitz continuous on a partition $(\Omega^{(x)}_j)_{j=1}^n$, such that
        \[\sup_{x \in \Omega} C^{(x)}_{\zeta} \downarrow 0, \qquad (\zeta \downarrow 0).\]
        Moreover, suppose that for all $x \in \Omega$, $p_x$ is not identically $0$, and for all $x,y \in \Omega$, $p_x(y) = p_y(x)$. Suppose that $\ell : \Omega \to [a,b]$, defined as $\ell(x) := N_{\pi_x}$, and $f : \Omega \to [a,b]$ are piecewise Lipschitz continuous and globally bounded too. Let $(\Omega,P)$ be the Markov process with proposal distributions $(p_x)_{x \in \Omega}$ and density $f$, and $(\Omega_{\zeta},P_{\zeta})$ its naive discretization. Then, $(\Omega,P)$ and $(\Omega_{\zeta},P_{\zeta})$ are reversible w.r.t.\ $\pi_{f \cdot \ell}$ and $\pi_{f \cdot \ell, \zeta}$, respectively. $f \cdot \ell$ is also piecewise Lipschitz continuous and globally bounded, hence so is $\pi_{f \cdot \ell}$. Furthermore,
        \[\sup_{\substack{x,y \in \Omega_{\zeta} \\ (P_{\zeta})_{x,y} > 0}} \left|(P_{\zeta})_{x,y} - \zeta^d\pi_{p_x}(y)a_{x,y}\right| \downarrow 0, \qquad \text{and} \qquad \sup_{x \in \Omega_{\zeta}} \left|\pi_{f \cdot \ell,\zeta}(x) - \zeta^d\pi_{f \cdot \ell}(x)\right| \downarrow 0, \qquad (\zeta \downarrow 0).\]
    \end{theorem}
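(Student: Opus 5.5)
The plan is to split the statement into three parts: reversibility of the two chains, regularity of $f\cdot\ell$, and the two uniform convergence claims. All three reduce to \Cref{lem:smoothness}, applied uniformly over the starting point $x$. Throughout I read $\ell(x)=N_{p_x}=\int_\Omega p_x(y)\,\mathrm{d}y$, the normalization of the proposal density. The hypothesis $p_x\geq a$ on its support together with $\ell\geq a$ ensures this normalization is bounded away from zero.

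\emph{Reversibility.} In the continuous chain, the off-diagonal transition density from $x$ to $y$ is $p_x(y)a_{xy}/\ell(x)$. Therefore
\[
f(x)\ell(x)\cdot \frac{p_x(y)a_{xy}}{\ell(x)} = p_x(y)\min\{f(x),f(y)\}.
\]
This expression is symmetric in $x,y$ by the assumption $p_x(y)=p_y(x)$, so the chain is reversible with respect to $\pi_{f\cdot\ell}$; the diagonal (rejection) mass is irrelevant for detailed balance. For the discretization, $(P_\zeta)_{xy}=p_x(y)a_{xy}/N_{p_x,\zeta}$, where $N_{p_x,\zeta}=\sum_{z\in\Omega_\zeta}p_x(z)$. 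The same computation shows that the weight $f(x)N_{p_x,\zeta}$ makes the discrete chain reversible. The statement claims the weight $f\cdot\ell$ at grid points, that is $f(x)\ell(x)$. I would check which normalization is intended. The cleanest reading is that $\ell$ should be replaced by $\zeta^dN_{p_x,\zeta}$ in the discrete stationary distribution, and that this converges uniformly to $\ell(x)$ by \Cref{lem:smoothness}. I would prove reversibility with the discrete normalization and then absorb the difference into the convergence statement. Regularity of $f\cdot\ell$, and hence of its normalized density $\pi_{f\cdot\ell}$, follows from \Cref{lem:lipschitz-properties}, since both factors are piecewise Lipschitz and valued in $[a,b]$.

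\emph{Convergence of transition probabilities.} By \Cref{lem:smoothness} applied to $p_x$,
\[
|\ell(x)-\zeta^dN_{p_x,\zeta}|\le 2C^{(x)}_{\sqrt d\zeta}\,b+L\sqrt d\,\Vol(\Omega)\zeta .
\]
The right-hand side tends to $0$ uniformly in $x$ by the hypothesis $\sup_xC^{(x)}_\zeta\downarrow0$. Since $\ell\geq a$, the reciprocals $1/(\zeta^dN_{p_x,\zeta})$ converge uniformly to $1/\ell(x)$. We have $(P_\zeta)_{xy}=\zeta^d p_x(y)a_{xy}\big/\bigl(\zeta^dN_{p_x,\zeta}\bigr)$, while the target quantity is $\zeta^d\pi_{p_x}(y)a_{xy}=\zeta^dp_x(y)a_{xy}/\ell(x)$. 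Using $p_x\le b$ and $a_{xy}\le1$, the difference is at most
\[
b\,\bigl|\zeta^d/(\zeta^dN_{p_x,\zeta})-\zeta^d/\ell(x)\bigr|,
\]
which tends to $0$ uniformly over grid pairs.

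\emph{Convergence of stationary distributions.} Apply \Cref{lem:smoothness} to $g=f\cdot\ell$ (or to the discrete variant $x\mapsto f(x)\zeta^dN_{p_x,\zeta}$, which differs from $g$ by a uniformly vanishing amount). This gives $\zeta^dN_{g,\zeta}\to N_g\ge a^2\Vol(\Omega)>0$. Then
\[
\pi_{g,\zeta}(x)-\zeta^d\pi_g(x)=\zeta^d g(x)\bigl(1/(\zeta^dN_{g,\zeta})-1/N_g\bigr),
\]
which is at most $b^2\zeta^d$ times a vanishing quantity, so it also tends to $0$ uniformly.

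The main obstacle is \emph{uniformity in $x$}. \Cref{lem:smoothness} gives a bound depending on the partition of each $p_x$. The hypothesis $\sup_xC^{(x)}_\zeta\downarrow0$ and a common Lipschitz constant $L$ are what control it, together with a uniform partition count if the constant in the lemma involves $n$. The second delicate point is matching the discrete stationary normalization with the continuous $\ell$, as discussed above.
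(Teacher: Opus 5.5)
Your proposal is correct and follows the paper's proof. Reversibility comes from the symmetric kernel $p_x(y)\min\{f(x),f(y)\}$ (using $\ell(x)=N_{p_x}$), and \Cref{lem:smoothness}, applied uniformly in $x$ via $\sup_x C^{(x)}_\zeta\downarrow 0$, controls both $N_{p_x}-\zeta^d N_{p_x,\zeta}$ and $N_{f\cdot\ell}-\zeta^d N_{f\cdot\ell,\zeta}$.

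You are also right that the discrete chain is reversible with respect to $f(x)N_{p_x,\zeta}$ rather than $f(x)\ell(x)$. These are not proportional in general, and the paper's ``along similar lines'' glosses over this. Your fix uses the discrete normalization and absorbs the discrepancy through the uniform convergence $\zeta^d N_{p_x,\zeta}\to\ell(x)$. This matches how the paper later uses the result, with $\ell_{\delta,\zeta}$ in place of $\ell_\delta$.
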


    \begin{proof}
        First, we verify reversibility of the continuous walk. We observe that for any disjoint Borel sets $A,B \subseteq \Omega$,
        \begin{align*}
            &\int_A P(x,B) \;\mathrm{d}\pi_{f \cdot \ell}(x) = \int_{A \times B} \pi_{p_x}(y)a_{x,y} \;\mathrm{d}(\pi_{f \cdot \ell}(x),y) = \frac{1}{N_{f \cdot \ell}} \int_{A \times B} \ell(x)\frac{p_x(y)}{N_{p_x}}\min\{f(x),f(y)\} \;\mathrm{d}(x,y) \\
            &\; = \frac{1}{N_{f \cdot \ell}} \int_{A \times B} \ell(y)\frac{p_y(x)}{N_{p_y}}\min\{f(x),f(y)\} \;\mathrm{d}(x,y) = \int_{A \times B} \pi_{p_y}(x)a_{y,x} \;\mathrm{d}(x,\pi_{f \cdot \ell}(y)) = \int_B P(x,A) \;\mathrm{d}\pi_{f \cdot \ell}(x).
        \end{align*}
        If $A$ and $B$ overlap, then we have an additional term accounting for the probability that we start in their intersection and reject, but this term is also symmetric. Finally, we can prove reversibility of the discrete walk along similar lines.
        
        Finally, observe that since $f$ and $\ell$ are bounded, $f \cdot \ell$ is also piecewise Lipschitz continuous. Thus, we use \Cref{lem:smoothness} to observe that
        \[\sup_{x \in \Omega_{\zeta}} |N_{p_x} - \zeta^d N_{p_x,\zeta}| \downarrow 0, \qquad \text{and} \qquad |N_{f \cdot \ell} - \zeta^d N_{f \cdot \ell,\zeta}| \downarrow 0, \qquad (\zeta \downarrow 0).\qedhere\]
    \end{proof}

    The above theorem provides a sufficient condition for Markov process discretizations to have bounded perturbations from their continuous counterpart, and it provides us with the tools we need to analyze the stationary distributions and transition probabilities of the discretizations we consider in the remainder of this work.

    The only additional property we need from our discretizations is that the conductances of our discretizations are similar to their continuous counterparts. The conditions in the above theorem are not enough to guarantee this property, and thus we require a more application-specific approach.

    \section{Speedy walk and discretization}
    \label{sec:ball-walks}

    In this section, we consider the speedy walk on convex bodies as in Cousins and Vempala~\cite{CV18}. We collect some of its properties in \cref{sec:cont-speedy}. As our main technical contribution, we show in \cref{sec:disc-speedy} that the grid-discretization of the speedy walk (cf.~\cref{def:naive-discretization}) yields a good approximation of the continuous speedy walk. We finally state the bipartite discretized speedy walk in \cref{sec:bip-speedy}, which we ultimately use in our quantum walk.

    \subsection{Properties of the continuous speedy walk} \label{sec:cont-speedy}

    Recalling the speedy walk and local conductance as defined in \cref{def:speedy,def:avg-local-conductance}, we observe for future reference that the local conductance is uniformly lower bounded, and Lipschitz.

    \begin{lemma}
        \label{lem:local-conductance-lipschitz}
        Let $d \in \N$, $R \geq 1$ and $B_d \subseteq K \subseteq RB_d$ be a convex body. Let $0 < \delta \leq 2R$. For any $x \in K$, $K \cap (x + \delta B_d)$ contains a ball of radius $\delta/(2R)$, and consequently $\ell_{\delta}$ is globally bounded as $1/(2R)^d \leq \ell_{\delta}(x) \leq 1$. Moreover, $\ell_{\delta}$ is $L$-Lipschitz on $K$ where $L$ solely depends on $\delta$.
    \end{lemma}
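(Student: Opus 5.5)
The proof has three parts: the ball containment, the resulting lower bound on $\ell_\delta$, and the Lipschitz property.

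\emph{Ball containment.} Fix $x \in K$ and use the convex hull of $x$ and the unit ball. Since $B_d \subseteq K$ and $K$ is convex, for every $\lambda \in [0,1]$ the set $(1-\lambda)x + \lambda B_d$, a ball of radius $\lambda$ centred at $(1-\lambda)x$, lies in $K$. Choose $\lambda := \delta/(2R)$, which is in $(0,1]$ because $\delta \le 2R$. The centre $c := (1-\lambda)x$ then satisfies $\norm{c - x} = \lambda\norm{x} \le \lambda R = \delta/2$, using $x \in K \subseteq RB_d$. Any point $z$ of the ball $c + \lambda B_d$ therefore satisfies
\[\norm{z - x} \le \delta/2 + \lambda \le \delta/2 + \delta/(2R) \le \delta,\]
since $R \ge 1$. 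Hence $c + \lambda B_d \subseteq K \cap (x + \delta B_d)$, which is a ball of radius $\delta/(2R)$.

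\emph{Bounds on $\ell_\delta$.} Comparing volumes, $\ell_\delta(x) \ge \Vol(\lambda B_d)/\Vol(\delta B_d) = (\lambda/\delta)^d = 1/(2R)^d$. The upper bound $\ell_\delta(x) \le 1$ is immediate from the definition.

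\emph{Lipschitz property.} For $x, y \in K$,
\[|\ell_\delta(x) - \ell_\delta(y)| \le \frac{\Vol\big((x+\delta B_d)\,\Delta\,(y+\delta B_d)\big)}{\Vol(\delta B_d)},\]
because intersecting both balls with $K$ can only shrink their symmetric difference. It remains to bound the symmetric difference of two translated balls. Every point of $(x+\delta B_d)\setminus(y+\delta B_d)$ lies within distance $\norm{x-y}$ of the boundary of $x+\delta B_d$. That shell has volume
\[\Vol(\delta B_d) - \Vol\big((\delta - \norm{x-y})B_d\big) \le \Vol(\delta B_d)\cdot d\norm{x-y}/\delta,\]
by the mean value theorem applied to $r \mapsto r^d$ (the bound holds trivially when $\norm{x-y} \ge \delta$). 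The same holds with $x$ and $y$ swapped, so
\[|\ell_\delta(x) - \ell_\delta(y)| \le 2d\norm{x-y}/\delta.\]
This gives $L = 2d/\delta$, which depends only on $\delta$ once the dimension $d$ is fixed. Alternatively, one can bound the symmetric difference by a cylinder of length $\norm{x-y}$ over a $(d-1)$-ball of radius $\delta$.

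The only step needing care is the choice of $\lambda$ and the centre $c$ in the containment argument. The Lipschitz estimate is routine geometry. The lemma's phrase ``solely depends on $\delta$'' is slightly loose, since $L$ also depends on $d$; I would state $L = 2d/\delta$ explicitly.
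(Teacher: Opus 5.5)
Your proof is correct and follows the paper's argument. You use the same choice $\lambda=\delta/(2R)$ and the same ball $(1-\lambda)x+\lambda B_d$ for the containment and for $\ell_\delta(x)\ge(2R)^{-d}$, and you get the Lipschitz property by bounding the symmetric difference of two translated $\delta$-balls. Your shell estimate via the mean value theorem, giving $L=2d/\delta$, is a cleaner version of the paper's surface-area bound. The paper's displayed constant works out to $2d$, which appears to be a typo dropping the factor $1/\delta$. You are also right that $L$ depends on $d$ as well.
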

    
    \begin{proof}
        We set $\alpha = \delta/(2R)$, and observe that $(1-\alpha)x + \alpha B_d \subseteq K$ by convexity. For any $y \in B_d$, we have
        \[\norm{x - (1-\alpha)x - \alpha y} = \alpha\norm{x - y} \leq \alpha(\norm{x} + \norm{y}) \leq \frac{\delta(R + 1)}{2R} \leq \delta,\]
        and so $(1-\alpha)x + \alpha B_d \subseteq x + \delta B_d$. As such, we have
        \[\ell_{\delta}(x) = \frac{\Vol(K \cap (x + \delta B_d))}{\Vol(\delta B_d)} \geq \frac{\Vol((1-\alpha)x + \alpha B_d)}{\Vol(\delta B_d)} = \left(\frac{\alpha}{\delta}\right)^d = \frac{1}{(2R)^d}.\]
        
        Next, let $x,y \in K$. Observe that
        \begin{align*}
            |\ell_{\delta}(x) - \ell_{\delta}(y)| &= \frac{\Vol(((x + \delta B_d) \Delta (y + \delta B_d)) \cap K)}{\Vol(\delta B_d)} \leq \frac{2}{\Vol(\delta B_d)} \Vol(((x-y) + \delta B_d) \setminus (\delta B_d)) \\
            &\leq \frac{2\mathrm{Area}(\partial \delta B_d) \cdot \delta}{\Vol(\delta B_d)} \cdot \norm{x-y}.\qedhere
        \end{align*}
    \end{proof}


    Cousins and Vempala~\cite[Theorem 6.6]{CV18} showed that the conductance of the speedy walk applied to a convex body $B_d \subseteq K \subseteq \sqrt{8d/\beta}B_d$ and Gaussian density $\mathcal N(0,(2\beta)^{-1} I)$ is $\Omega(\delta \sqrt{\beta/d})$ as long as the step-size satisfies $\delta \leq 1/(8\sqrt{2\beta d})$. Here, we record a version that allows for slightly larger convex bodies, contained in $R B_d$. Later, we use radii $R$ of size roughly $\sqrt{d/\beta} (1+\log(1/\eps)/d)$ in our $\eps$-volume estimation algorithm, whereas their version only allows radii of size $O(\sqrt{d/\beta})$.

    \begin{theorem}[Slight generalization of {\cite[Theorem~6.6]{CV18}}]
        \label{thm:speedy-walk-conductance}
         Let $d \in \N$, and $B_d \subseteq K \subseteq R B_d$ be a convex body. The speedy walk $P_{\beta,\delta}$ on $K$, defined in \Cref{def:speedy}, with parameters $\beta \geq 0$ and $0 < \delta \leq \min\{1/(4\beta R),1/\sqrt{2\beta}\}$
        has conductance 
        $\Omega(\delta\sqrt{\beta/d})$.
    \end{theorem}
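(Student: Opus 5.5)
The plan is to follow the proof of \cite[Theorem~6.6]{CV18} step by step. At each point where the assumption $K \subseteq \sqrt{8d/\beta}B_d$ or $\delta \leq 1/(8\sqrt{2\beta d})$ is used, I would replace it by the weaker hypotheses $K \subseteq RB_d$ and $\delta \leq \min\{1/(4\beta R), 1/\sqrt{2\beta}\}$.

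The conductance argument in \cite{CV18} has the usual two parts.

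\textbf{Part one: one-step overlap.} If two points $x,y \in K$ with $\norm{x-y} \lesssim \delta/\sqrt{d}$ both have non-negligible local conductance, then $\norm{P_{\beta,\delta}(x,\cdot) - P_{\beta,\delta}(y,\cdot)}_{\mathrm{TV}} \leq 1 - c$ for an absolute constant $c>0$. This splits into two estimates.
\begin{itemize}
\item \emph{Proposal overlap.} The uniform proposal distributions on $(x+\delta B_d)\cap K$ and $(y+\delta B_d)\cap K$ overlap by a constant fraction. This is purely geometric and does not involve $R$ or $\beta$.
\item \emph{Metropolis filter.} The acceptance probability $\min\{1, f(z)/f(x)\}$ must be bounded below by a constant for most proposals $z$. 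Here I would compute
\[
\frac{f(z)}{f(x)} = \exp\bigl(-\beta(2\langle x, z-x\rangle + \norm{z-x}^2)\bigr).
\]
With $\norm{z-x} \leq \delta$ and $\norm{x} \leq R$, we get $2\beta|\langle x, z-x\rangle| \leq 2\beta R\delta \leq 1/2$, using $\delta \leq 1/(4\beta R)$. We also get $\beta\norm{z-x}^2 \leq \beta\delta^2 \leq 1/2$, using $\delta \leq 1/\sqrt{2\beta}$. So the acceptance probability is at least $e^{-1}$ for every proposal, uniformly over $K$.
\end{itemize}
This is exactly where the original proof relied on $R \approx \sqrt{d/\beta}$ together with $\delta \approx 1/\sqrt{\beta d}$: their product gives $\beta R \delta = O(1)$. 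Our hypothesis imposes this product bound directly.

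\textbf{Part two: isoperimetry.} I would use the isoperimetric inequality for the log-concave density proportional to $\ell_\delta(x)e^{-\beta\norm{x}^2}$ restricted to $K$. For any partition $K = S_1 \sqcup S_2 \sqcup S_3$ with $d(S_1,S_2) \geq t$, we have $\pi(S_3) \gtrsim t\sqrt{\beta}\min\{\pi(S_1),\pi(S_2)\}$. This is the $\beta$-strongly log-concave isoperimetry, with Gaussian factor $\sqrt{2\beta}$. It is independent of $R$, which matters because the radius dependence must not enter here.

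Two points need care in this step.
\begin{itemize}
\item \emph{Log-concavity of $\ell_\delta$.} The function $\ell_\delta$ is log-concave by Brunn--Minkowski, since it is a convolution of indicators of convex sets. So the speedy-walk stationary distribution is still $\beta$-strongly log-concave.
\item \emph{Points of small local conductance.} The speedy walk's stationary density already absorbs $\ell_\delta$, as in \cite{CV18}, and I would reuse their treatment of these points unchanged.
\end{itemize}

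\textbf{Combining the parts.} The standard argument then gives conductance $\gtrsim c\,(\delta/\sqrt{d})\sqrt{\beta}$. Take $t = \delta/\sqrt{d}$, let $S_1, S_2$ be the points that escape a cut with probability below $c/2$, and let $S_3$ be the rest.

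\textbf{Main obstacle.} I expect the hardest part to be checking that no other step of \cite{CV18} secretly uses $R \lesssim \sqrt{d/\beta}$. Examples would be a bound on the fraction of mass in low-local-conductance regions, or the comparison between the speedy walk and the ball walk, which might use $\delta\sqrt{d} \lesssim 1/\sqrt{\beta}$. If such a dependence appears, I would try to re-derive that estimate from the pointwise acceptance bound $e^{-1}$ above. That bound holds uniformly on $K$, and it replaces any averaged Gaussian concentration argument the original proof may use.
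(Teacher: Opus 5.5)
Your proposal is correct and follows the paper's proof: the paper also defers to the argument of \cite[Theorem~6.6]{CV18}. It notes that $K \subseteq RB_d$ enters only through the Metropolis acceptance bound of their Lemma~6.2, which it re-derives as $f(x)/f(y) \geq \exp(-\beta(2\delta\norm{y} + \delta^2)) \geq e^{-1}$ from $\delta \leq 1/(4\beta R)$ and $\delta \leq 1/\sqrt{2\beta}$, the same computation as yours. The paper simply asserts that no other step of \cite{CV18} depends on $R$, so it does not carry out the check you flag as the main obstacle either.
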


    \begin{proof}
    The proof of Theorem~6.6 in \cite{CV18} relies on the inclusion $K \subseteq RB_d$ through their Lemma 6.2 which lower bounds the acceptance probability of the metropolis filter by $1/e$. Our bound on $\delta$ results in the same estimate. Indeed, 
        for any $x,y \in K \cap RB_d$ with $\norm{x-y} \leq \delta$, we have
        \begin{align*}
            \frac{\exp(-\beta\norm{x}^2)}{\exp(-\beta\norm{y}^2)} &= \exp(-\beta(\norm{x}^2 - \norm{y}^2) \geq \exp(-\beta((\norm{y} + \delta)^2 - \norm{y}^2) = \exp(-\beta(2\delta\norm{y} + \delta^2)) \\
            &\geq \exp\left(-\beta\left(\frac{2\norm{y}}{4\beta R} + \frac{1}{2\beta}\right)\right) \geq \exp\left(-\left(\frac12 + \frac12\right)\right) = \frac1e.\qedhere
        \end{align*}
    \end{proof}

    Finally, we analyze the total variation distance from the speedy-walk distribution to the Gibbs distribution $\pi_\beta$, similarly as in Cousins and Vempala~\cite{CV18}.

    \begin{lemma}
        \label{lem:avg-local-conductance}
        Let $d \in \N$, and $B_d \subseteq K \subseteq \R^d$ be a convex body. Let $\beta \geq 0$ and $0 < \delta \leq \min\{1,1/\sqrt{2\beta}\}/(2^{18}\sqrt{d})$. Then,
        \[\norm{\pi_{\beta,\delta} - \pi_{\beta}}_{\mathrm{TV}} \leq \frac14.\]
    \end{lemma}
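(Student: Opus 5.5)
The speedy-walk stationary distribution has density $\pi_{\beta,\delta}(x) = \ell_\delta(x)\pi_\beta(x)/\E_{\pi_\beta}[\ell_\delta]$. I plan to compute the total variation distance to $\pi_\beta$ directly:
\[
\norm{\pi_{\beta,\delta}-\pi_\beta}_{\mathrm{TV}} = \frac12\int_K \pi_\beta(x)\left|\frac{\ell_\delta(x)}{\E_{\pi_\beta}[\ell_\delta]}-1\right|\mathrm{d}x \leq \frac{1-\E_{\pi_\beta}[\ell_\delta]}{\E_{\pi_\beta}[\ell_\delta]}.
\]
The inequality uses $\ell_\delta \le 1$: the integrand splits as $|\ell_\delta-\E\ell_\delta|/\E\ell_\delta$, and $\E_{\pi_\beta}|\ell_\delta-\E\ell_\delta| \le 2(1-\E\ell_\delta)$. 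So it suffices to show that the average local conductance satisfies $\E_{\pi_\beta}[\ell_\delta] \geq 4/5$, i.e.\ $1-\E_{\pi_\beta}[\ell_\delta]\le 1/5$.

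To bound the average local conductance, I follow \cite[Lemmas~6.11--6.12]{CV18}. Write $1-\ell_\delta(x) = \P_{u\sim B_d}[x+\delta u\notin K]$ and swap the order of integration:
\[
1-\E_{\pi_\beta}[\ell_\delta] = \E_{u\sim B_d}\,\pi_\beta\bigl(\{x\in K : x+\delta u\notin K\}\bigr).
\]
For a fixed direction $u$, the set $\{x\in K: x+\delta u\notin K\}$ equals $K\setminus(K-\delta u)$, a thin shell near the boundary. A uniform $u \sim B_d$ has norm $\Theta(1)$ but component along any fixed normal of order $1/\sqrt d$. So the relevant displacement is effectively of size $\delta/\sqrt d$ times the boundary "area", averaged over normal directions. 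This is the standard mechanism giving $O(\delta\sqrt d)$ times the Gaussian-weighted surface measure of $K$ relative to its mass.

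To bound that surface term, I would use that the density $e^{-\beta\|x\|^2}$ is logconcave and $B_d\subseteq K$. Consider the scaled body $(1-t)K$, which lies inside $K$ at distance at least $t$ from $\partial K$ because $B_d \subseteq K$. This gives $\pi_\beta(\text{$s$-shell})\lesssim (s/t)\cdot$(mass comparison between $K$ and $(1-t)K$). The comparison $Z_K(\beta)\le (1-t)^{-d}\cdot\sup(\text{density ratio})\,Z_{(1-t)K}(\beta)$ holds since $e^{-\beta\|(1-t)^{-1}y\|^2}\le e^{-\beta\|y\|^2}$, which gives factor $(1-t)^{-d}$. Taking $t=1/d$ keeps that factor constant. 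The Gaussian part needs the scale $\min\{1,1/\sqrt{2\beta}\}$: for large $\beta$, the mass concentrates within radius $O(\sqrt{d/\beta})$, and there the effective curvature scale is $1/\sqrt\beta$ instead of $1$. So I would replace $B_d$ by the ball of radius $\min\{1,1/\sqrt{2\beta}\}$ around the relevant point, or equivalently rescale coordinates. This yields $1-\E\ell_\delta \le C\,\delta\sqrt d/\min\{1,1/\sqrt{2\beta}\}$ for an absolute constant $C$. With $\delta\le\min\{1,1/\sqrt{2\beta}\}/(2^{18}\sqrt d)$, this is at most $C/2^{18}\le 1/5$.

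The main obstacle is this Gaussian-weighted boundary estimate, specifically getting the $1/\sqrt d$ gain from averaging over directions $u$ rather than losing a factor $\sqrt d$. I would first try to invoke \cite[Lemma~6.12]{CV18} verbatim, since it states precisely that $\E_{\pi_\beta}[\ell_\delta]\ge 1-O(\delta\sqrt{d}\max\{1,\sqrt{\beta}\})$ under these hypotheses. If that fails, I would reproduce their one-dimensional argument along each line in direction $u$, using logconcavity of the restricted density on each chord.
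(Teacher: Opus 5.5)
Your proposal is correct and has the same skeleton as the paper's proof. Both reduce the statement to a lower bound on the average local conductance $\lambda=\E_{\pi_\beta}[\ell_\delta]$ imported from Cousins and Vempala, and both pass from $\lambda$ to total variation using only $\ell_\delta\le 1$.

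The routes differ in that passage. The paper bounds the overlap, $\braket{\pi_{\beta,\delta}}{\pi_\beta}=\E_{\pi_\beta}[\sqrt{\ell_\delta}]/\sqrt{\lambda}\ge\sqrt{\lambda}$ (since $\sqrt{\ell_\delta}\ge\ell_\delta$). It then uses $\norm{\cdot}_{\mathrm{TV}}\le\sqrt{1-|\braket{\cdot}{\cdot}|^2}$, which gives $\norm{\pi_{\beta,\delta}-\pi_\beta}_{\mathrm{TV}}\le\sqrt{1-\lambda}$. This requires $\lambda\ge 15/16$, and the constant $2^{18}$ in the hypothesis is calibrated exactly for that. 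Your direct computation, $\norm{\pi_{\beta,\delta}-\pi_\beta}_{\mathrm{TV}}=\E_{\pi_\beta}[(\ell_\delta/\lambda-1)_+]\le(1-\lambda)/\lambda$, is linear rather than square-root in $1-\lambda$. It needs only $\lambda\ge 4/5$, and from the same input $\lambda\ge 15/16$ it gives the stronger bound $1/15$. Both conversions are two-line arguments; the paper's simply stays in the overlap language it reuses in \Cref{lem:overlap}.

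There are two caveats about the second half of your plan.

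First, the explicit bound the paper takes from \cite[Lemmas~6.9 and 6.10]{CV18} is $\lambda\ge 1-32\bigl(\delta\sqrt{d}/\min\{1,1/\sqrt{2\beta}\}\bigr)^{1/2}$. It has a square root and a concrete constant, unlike the linear $O(\delta\sqrt{d}\max\{1,\sqrt{\beta}\})$ form you attribute to Lemma~6.12. Because the statement has explicit constants, your final step ``$C/2^{18}\le 1/5$'' with an unspecified $C$ does not close the argument on its own. The explicit bound does close it: it gives $1-\lambda\le 32\cdot 2^{-9}=1/16$.

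Second, your self-contained fallback is not yet a proof. You assert, rather than derive, the comparison between $\pi_\beta$ of the directional shells $K\setminus(K-\delta u)$ and the mass of $K\setminus(1-t)K$. Obtaining the $1/\sqrt{d}$ gain from averaging over $u$, while controlling how $e^{-\beta\norm{x}^2}$ varies across a shell, is exactly the nontrivial content of the cited lemmas. So rely on the citation there.
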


    \begin{proof}
        We use \cite[Lemmas~6.9 and 6.10]{CV18} to observe that
        \[\lambda := \frac{\int_K \ell_\delta(x)\exp(-\beta\norm{x}^2) \;\mathrm{d}x}{\int_K \exp(-\beta\norm{x}^2) \;\mathrm{d}x} \geq 1 - 32\left(\frac{\min\{1,1/\sqrt{2\beta}\}\sqrt{d}}{2^{18}\sqrt{d}\min\{1/1/\sqrt{2\beta}\}}\right)^{1/2} \geq \frac{15}{16}.\]
        Using that $\ell_\delta(x) \in [0,1]$, we find that
        \[\braket{\pi_{\beta,\delta}}{\pi_{\beta}} = \frac{\int_K \sqrt{\ell_{\delta}(x)}\exp(-\beta\norm{x}^2) \;\mathrm{d}x}{\sqrt{\int_K \ell_{\delta}(x)\exp(-\beta\norm{x}^2) \;\mathrm{d}x \cdot \int_K \exp(-\beta\norm{x}^2) \;\mathrm{d}x}} \geq \frac{\lambda}{\sqrt{\lambda}} \geq \sqrt{\frac{15}{16}}.\]
        Thus, we conclude that
        \[\norm{\pi_{\beta,\delta} - \pi_{\beta}}_{\mathrm{TV}} \leq \sqrt{1 - |\braket{\pi_{\beta,\delta}}{\pi_{\beta}}|^2} \leq \sqrt{1 - \frac{15}{16}} = \frac14.\qedhere\]
    \end{proof}

    \subsection{Discretization of the speedy walk} \label{sec:disc-speedy}

    Now that we have introduced the speedy walk in the continuous setting, we turn to its discretization.
    We will ultimately implement the naive discretization from \Cref{def:naive-discretization} that yields a walk on $K_\zeta$
    that picks a proposal from $(x+\delta B_d) \cap K_\zeta$ rather than from $(x+\delta B_d) \cap K$, and then applies the same filter as the speedy walk. The black-box result in \Cref{thm:naive-discretization} implies that the resulting transition probabilities and stationary distributions are close when evaluated on the grid points $K_{\zeta}$.

    Proving a lower bound on the conductance of the discretized speedy walk is more subtle, though. To that end, we start from a walk on a slightly enlarged version of the convex body, i.e., $\hat{K} := (1+\zeta\sqrt{d})K$. We know a lower bound on its conductance from \Cref{thm:speedy-walk-conductance}. Next, we introduce a continuous-space approximation of the speedy walk on $M_{K,\zeta}$, which we call the cube speedy walk, in \Cref{subsubsec:cube-speedy-walk}.
    We show that this walk induces a discrete walk on $K_\zeta$ in \Cref{subsubsec:induced-cube-speedy-walk}, which in turn approximates our final grid speedy walk in \Cref{subsubsec:discrete-speedy-walk}.
    This yields the following chain of approximations of the conductances:
    \[
    \begin{gathered}
        \text{Speedy walk} \\
        \hat{P}_{\beta,\hat{\delta}} \text{ on } \hat{K} \\
        \varphi(\hat{P}_{\beta,\delta})
    \end{gathered}
    \quad\raisebox{-1.8em}{$\lesssim$}\quad
    \begin{gathered}
        \text{Cube speedy walk} \\
        P^C_{\beta,\delta,\zeta} \text{ on } M_{K,\zeta} \\
        \varphi(P^C_{\beta,\delta,\zeta})
    \end{gathered}
    \quad\raisebox{-1.8em}{$\lesssim$}\quad
    \begin{gathered}
        \text{Induced cube speedy walk} \\
        P^I_{\beta,\delta,\zeta} \text{ on } K_\zeta \\
        \varphi(P^I_{\beta,\delta,\zeta})
    \end{gathered}
    \quad\raisebox{-1.8em}{$\lesssim$}\quad
    \begin{gathered}
        \text{Grid speedy walk} \\
        P_{\beta,\delta,\zeta} \text{ on } K_\zeta \\
        \varphi(P_{\beta,\delta,\zeta})
    \end{gathered}
    \]

    \subsubsection{Cube speedy walk}
    \label{subsubsec:cube-speedy-walk}

    First, we consider a version of the speedy walk where the steps are taken based on the nearest grid point: we transition from $x$ to $y$ only if $\norm{x_\zeta - y_\zeta} \leq \delta$ and $x_{\zeta},y_{\zeta} \in K_{\zeta}$. The resulting chain walks over $M_{K,\zeta}$, rather than over~$K$. Since $M_{K,\zeta}$ lives in a world that comprises exclusively of small $d$-dimensional cubes, we refer to this walk as the cube speedy walk.

    \begin{definition}[Cube speedy walk]
        Let $d \in \N$, $K \subseteq \R^d$ a convex body. Let $\beta \geq 0$, $\delta > 0$ and $f(x) = \exp(-\beta\norm{x}^2)$. At $x \in M_{K,\zeta}$, do:
        \begin{enumerate}[nosep]
            \item Select $y \in (x_\zeta + M_{\delta B_d,\zeta}) \cap M_{K,\zeta}$ uniformly at random.
            \item Transition from $x$ to $y$ with probability $a_{x,y} = \min\{1,f(y)/f(x)\}$. Else, stay in $x$.
        \end{enumerate}
        Let $P^C_{\beta,\delta,\zeta}$ be the Markov transition kernel, and $\pi^C_{\beta,\delta,\zeta}$ be the stationary distribution of this walk.
    \end{definition}

    We can define the \textit{cube local conductance} $\ell^C_{\delta,\zeta}(x)$ on $M_{K,\zeta}$ in the same way as the conductance on a convex body.

    \begin{definition}[Cube local conductance]
        Let $d \in \N$, $K \subseteq \R^d$ a convex body. Let $\delta > 0$ and $\zeta > 0$. Then, for all $x \in M_{K, \zeta}$,
        \[\ell^C_{\delta,\zeta}(x) = \frac{\Vol((x_{\zeta} + M_{\delta B_d,\zeta}) \cap M_{K,\zeta})}{\Vol(M_{\delta B_d,\zeta})}.\]
    \end{definition}

    Note that the cube local conductance is the normalization constant of the proposal distribution of the cube speedy walk. Thus, it follows directly from \Cref{thm:naive-discretization} that $\pi^C_{\beta,\delta,\zeta}(x)
    \propto \ell^C_{\delta,\zeta}(x) f(x)$. 

    We now relate the properties of the cube speedy walk to the regular speedy walk. To that end, we first of all observe that cube versions of convex sets are contained in tiny dilations.

    \begin{lemma}
        \label{lem:minecraft-dilation-containment}
        Let $d \in \N$, $\eta > 0$, $c \in \R^d$ and $c + B_d/\eta \subseteq K$ be a convex body. Let $0 < \zeta < 1/(\eta\sqrt{d})$. Then, $$(1-\eta\zeta\sqrt{d})(K-c) \subseteq M_{K,\zeta}-c \subseteq (1+\eta\zeta\sqrt{d})(K-c).$$ Consequently,
        \[\left|\Vol(M_{K,\zeta}) - \Vol(K)\right| \leq \left(\left(1 + \eta\zeta\sqrt{d}\right)^d - 1\right) \cdot \Vol(K) \downarrow 0, \qquad (\zeta \downarrow 0).\]
    \end{lemma}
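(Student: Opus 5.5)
We translate so that $c=0$; then $B_d/\eta\subseteq K$, and both inclusions become statements about $K$ and its scalings $(1\pm\eta\zeta\sqrt d)K$. The basic geometric fact we use is that every cell $M_{y,\zeta}$ is a cube of side $\zeta$ centred at the grid point $y$. Hence every point of the cell is within distance $\sqrt d\,\zeta/2 \le \sqrt d\,\zeta$ of its centre. Write $t:=\eta\zeta\sqrt d<1$.

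\emph{Outer inclusion.} Let $x\in M_{K,\zeta}$. Then $x_\zeta\in K$ and $\norm{x-x_\zeta}\le \sqrt d\zeta$, so $x = x_\zeta + u$ with $\norm{u}\le \sqrt d\,\zeta$. Since $B_d/\eta\subseteq K$, the vector $u/t$ has norm at most $1/\eta$ and therefore lies in $K$. Writing $x = (1+t)\bigl(\tfrac{1}{1+t}x_\zeta + \tfrac{t}{1+t}\cdot \tfrac{u}{t}\bigr)$ expresses $x/(1+t)$ as a convex combination of two points of $K$. Convexity gives $x\in(1+t)K$.

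\emph{Inner inclusion.} Let $z\in(1-t)K$, say $z=(1-t)k$ with $k\in K$. We need $z_\zeta\in K$, since that is equivalent to $z\in M_{K,\zeta}$. Now $z_\zeta = (1-t)k + v$ with $\norm v\le\sqrt d\zeta$. Writing $v = t\cdot (v/t)$ with $v/t\in B_d/\eta\subseteq K$ shows that $z_\zeta=(1-t)k+t(v/t)$ is a convex combination of points of $K$, so $z_\zeta\in K$. The hypothesis $\zeta<1/(\eta\sqrt d)$ is exactly what makes $t<1$, so that both convex combinations are legitimate.

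\emph{Volume bound.} From $(1-t)K\subseteq M_{K,\zeta}\subseteq (1+t)K$ and homogeneity of volume, $\Vol(M_{K,\zeta})$ lies in $[(1-t)^d,(1+t)^d]\Vol(K)$. We then use $1-(1-t)^d\le (1+t)^d-1$, which holds by convexity of $s\mapsto s^d$ (equivalently, $(1+t)^d+(1-t)^d\ge 2$). This gives the stated bound, and the right-hand side tends to $0$ as $\zeta\downarrow0$ for fixed $d,\eta$. The only delicate points are bookkeeping (the translation by $c$, and the half-diagonal being at most $\sqrt d\zeta$); there is no real obstacle.
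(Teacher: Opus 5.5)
Your proposal is correct and follows essentially the same route as the paper: both inclusions come from $\|x-x_\zeta\|\le\sqrt d\,\zeta$ combined with $\sqrt d\,\zeta B_d\subseteq t(K-c)$ and convexity. You also spell out the volume consequence via $(1+t)^d+(1-t)^d\ge 2$, which the paper's proof leaves implicit. One small bookkeeping remark: the grid $\zeta\Z^d$ is not translation invariant, so $M_{K,\zeta}-c\neq M_{K-c,\zeta}$ in general, and you should carry $c$ along rather than literally replacing $K$ by $K-c$ (e.g.\ write $x-c=(x_\zeta-c)+t\cdot u/t$). Your argument then goes through verbatim.
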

    
    \begin{proof}
        Let $A = M_{K,\zeta} \Delta K$ be the symmetric difference of the sets $M_{K,\zeta}$ and $K$. By construction, for any $x \in A$ there exists a $y \in \partial K$ with $\|x-y\|_\infty \leq \zeta$, and hence by H\"older's inequality $\|x-y\| \leq \zeta \sqrt{d}$. This implies that $M_{K,\zeta} - c \subseteq (K-c) + \zeta \sqrt{d}B_d \subseteq (1+\eta\zeta \sqrt{d})(K-c)$, where the last inclusion uses $B_d/\eta \subseteq K-c$ and the convexity of $K$.
        
        We now show that $(1-\eta\zeta \sqrt{d})(K-c) \subseteq M_{K,\zeta} - c$. Let $x \in K$. Using H\"older's inequality, we obtain that
        \[\norm{(c+(1-\eta\zeta\sqrt{d})(x-c))_{\zeta} - (c+(1-\eta\zeta\sqrt{d})(x-c))} \leq \zeta\sqrt{d}.\]
        Therefore, we obtain that
        \[(c+(1-\eta\zeta\sqrt{d})(x-c))_{\zeta} \in c+(1-\eta\zeta\sqrt{d})(x-c) + \zeta\sqrt{d}B_d = (1-\eta\zeta\sqrt{d})x + \eta\zeta\sqrt{d}(c + B_d/\eta)\subseteq K,\]
        where the last inclusion uses $c+B_d/\eta \subseteq K$ and the convexity of $K$. This shows that $c+(1-\eta\zeta\sqrt{d})(x-c) \in M_{K,\zeta}$, and so $(1-\eta\zeta\sqrt{d})(K-c) \subseteq M_{K,\zeta}-c$.
    \end{proof}

    
        
    
    We 
    use this lemma to bound the difference between the local conductances of the regular and cube speedy walks, respectively.

    \begin{lemma}
        \label{lem:local-conductance-lb}
        Let $d \in \N$, $R \geq 1$ and $B_d \subseteq K \subseteq RB_d$ be a convex body. Let $0 < \delta < 2R$. Then,
        \[\sup_{x \in K \cap M_{K,\zeta}} \left|\ell^C_{\delta,\zeta}(x) - \ell_\delta(x)\right| \downarrow 0, \qquad (\zeta \downarrow 0).\]
        Consequently, $\ell^C_{\delta,\zeta}$ is globally bounded, for sufficiently small choices of $\zeta$.
    \end{lemma}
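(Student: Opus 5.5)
We will compare the two local conductances by controlling, uniformly in $x$, both the numerator and the denominator of $\ell^C_{\delta,\zeta}(x)$ against those of $\ell_\delta(x)$. For the denominator, $M_{\delta B_d,\zeta}$ is the cube version of the ball $\delta B_d$. This ball contains $0 + B_d/(1/\delta)$, so \Cref{lem:minecraft-dilation-containment} with $c=0$ and $\eta = 1/\delta$ gives $|\Vol(M_{\delta B_d,\zeta}) - \Vol(\delta B_d)| \le ((1+\zeta\sqrt d/\delta)^d - 1)\Vol(\delta B_d)\downarrow 0$. This bound is independent of $x$.

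For the numerator, we will sandwich the set $S^C_x := (x_\zeta + M_{\delta B_d,\zeta})\cap M_{K,\zeta}$ between two dilations of $S_x := (x+\delta B_d)\cap K$. We use three inclusions.
\begin{itemize}
\item Applying \Cref{lem:minecraft-dilation-containment} to $K$ with $c=0$ and $\eta=1$ gives $(1-\zeta\sqrt d)K \subseteq M_{K,\zeta}\subseteq (1+\zeta\sqrt d)K$.
\item Applying it to the ball gives $(1-\zeta\sqrt d/\delta)\delta B_d \subseteq M_{\delta B_d,\zeta} \subseteq (1+\zeta\sqrt d/\delta)\delta B_d$.
\item Since $\|x-x_\zeta\|\le \zeta\sqrt d$, the translate satisfies $x_\zeta + M_{\delta B_d,\zeta} \subseteq x + (\delta+2\zeta\sqrt d)B_d$, and conversely it contains $x + (\delta - 2\zeta\sqrt d)B_d$.
\end{itemize}
Combining these, $S^C_x \subseteq (x+(\delta+2\zeta\sqrt d)B_d)\cap(1+\zeta\sqrt d)K$ and $S^C_x \supseteq (x+(\delta-2\zeta\sqrt d)B_d)\cap(1-\zeta\sqrt d)K$. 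It therefore suffices to show that
\[
\Vol\big((x+\delta^+ B_d)\cap K^+\big) - \Vol\big((x+\delta^- B_d)\cap K^-\big)\downarrow 0
\]
uniformly in $x\in K$, where $\delta^\pm=\delta\pm 2\zeta\sqrt d$ and $K^\pm=(1\pm\zeta\sqrt d)K$. We split this difference into the two symmetric differences below.
\begin{itemize}
\item The ball part is bounded by $\Vol(\delta^+B_d)-\Vol(\delta^-B_d) = ((\delta^+)^d-(\delta^-)^d)\Vol(B_d)\downarrow 0$.
\item The body part is bounded by $\Vol\big((K^+\setminus K^-)\cap(x+\delta^+B_d)\big) \le \Vol(K^+)-\Vol(K^-) = ((1+\zeta\sqrt d)^d-(1-\zeta\sqrt d)^d)\Vol(K)$. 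This tends to $0$ because $\Vol(K)\le \Vol(RB_d)<\infty$.
\end{itemize}
Both bounds are uniform in $x$. The sandwiched quantities also bracket $\Vol(S_x)$, since $S_x$ lies between the inner and outer sets. Hence $|\Vol(S^C_x)-\Vol(S_x)|\downarrow 0$ uniformly.

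Finally, we write $\ell^C-\ell = \big(\Vol(S^C_x)-\Vol(S_x)\big)/\Vol(M_{\delta B_d,\zeta}) + \Vol(S_x)\big(1/\Vol(M_{\delta B_d,\zeta}) - 1/\Vol(\delta B_d)\big)$. Both terms tend to $0$ uniformly, because $\Vol(S_x)\le\Vol(\delta B_d)$ and the denominators are bounded away from $0$ for small $\zeta$.

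For the consequence, \Cref{lem:local-conductance-lipschitz} gives $\ell_\delta\ge (2R)^{-d}$. So once the uniform error is at most $(2R)^{-d}/2$, we get $\ell^C_{\delta,\zeta}\ge (2R)^{-d}/2$ on $K\cap M_{K,\zeta}$, and trivially $\ell^C\le 1$.

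The main obstacle is the translate inclusion. The point $x\in K\cap M_{K,\zeta}$ is replaced by $x_\zeta$, and we must check that the $\zeta\sqrt d$ shift and the rounding in $M_{\delta B_d,\zeta}$ only enlarge or shrink the radius by $O(\zeta\sqrt d)$. If the crude inner inclusion $x+\delta^-B_d\subseteq x_\zeta+M_{\delta B_d,\zeta}$ needs care at cube boundaries, we would instead argue as in the proof of \Cref{lem:minecraft-dilation-containment}: every point whose rounding lies in $\delta B_d$ is covered, and every point within distance $\delta-\zeta\sqrt d$ of $x_\zeta$ rounds into $\delta B_d$.
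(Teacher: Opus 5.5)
Your proof is correct, and it takes a different route from the paper's.

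\textbf{The paper's route.} It first reduces to grid points $x\in K_\zeta$. It does this using the Lipschitz continuity of $\ell_\delta$ from \Cref{lem:local-conductance-lipschitz}, together with the observation $\ell^C_{\delta,\zeta}(x)=\ell^C_{\delta,\zeta}(x_\zeta)$. At grid points, the numerator of $\ell^C_{\delta,\zeta}(x)$ is exactly the Riemann sum $\zeta^d\,|((x+\delta B_d)\cap K)_\zeta|$. The paper then applies the generic integral-approximation \Cref{lem:smoothness} to the indicator of $(x+\delta B_d)\cap K$. For this it asserts, without proof, that the boundary-layer volumes $C_\zeta$ are uniformly small in $x$. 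It also glosses over the difference between the normalisations $\Vol(M_{\delta B_d,\zeta})$ and $\Vol(\delta B_d)$.

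\textbf{Your route.} You work at arbitrary $x$ and absorb the shift $x\mapsto x_\zeta$ into an $O(\zeta\sqrt d)$ change of radius. You then sandwich both $S_x$ and $S^C_x$ between $(x+\delta^-B_d)\cap K^-$ and $(x+\delta^+B_d)\cap K^+$, using only \Cref{lem:minecraft-dilation-containment} and convexity.

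\textbf{Trade-offs.} Your argument gives an explicit, self-contained rate that is uniform in $x$. It treats the denominator honestly and does not rely on the Lipschitz property of $\ell_\delta$. In effect, you prove the uniform control of $C_\zeta$ that the paper only asserts. The paper's version is shorter and follows the same ``$\zeta\downarrow 0$ via \Cref{lem:smoothness}'' pattern used throughout \Cref{sec:disc-speedy}.

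\textbf{Two small remarks.}
\begin{itemize}
\item The ``main obstacle'' you flag is already settled by your second bullet. Since $\|x-x_\zeta\|\le \zeta\sqrt d/2$ and $(1\mp\zeta\sqrt d/\delta)\,\delta B_d=(\delta\mp\zeta\sqrt d)B_d$, both translate inclusions follow immediately.
\item Your lower bound on $\ell^C_{\delta,\zeta}$ is stated only on $K\cap M_{K,\zeta}$. To get global boundedness on all of $M_{K,\zeta}$, add that $\ell^C_{\delta,\zeta}(x)=\ell^C_{\delta,\zeta}(x_\zeta)$ with $x_\zeta\in K_\zeta\subseteq K\cap M_{K,\zeta}$.
\end{itemize}
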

    
    \begin{proof}
        First, we observe from \Cref{lem:local-conductance-lipschitz} that $\ell_{\delta}$ is Lipschitz and globally bounded, and so
        \[\sup_{x \in K} |\ell_{\delta}(x) - \ell_{\delta}(x_{\zeta})| \downarrow 0, \qquad (\zeta \downarrow 0).\]
        Next, observe that for all $x \in M_{K,\zeta}$, $\ell^C_{\delta,\zeta}(x) = \ell^C_{\delta,\zeta}(x_{\zeta})$, so it remains to prove that the statement from the lemma holds for all $x \in K_{\zeta}$. To that end, observe that,
        \[\sup_{x \in K_{\zeta}} |\ell^C_{\delta,\zeta}(x) - \ell_{\delta}(x)| = \sup_{x \in K_{\zeta}} \left|\zeta^d \cdot \sum_{y \in ((x + \delta B_d) \cap K)_{\zeta}} 1 - \int_{(x + \delta B_d) \cap K} \;\mathrm{d}y\right| \downarrow 0, \qquad (\zeta \downarrow 0),\]
        where in the final step we used \Cref{lem:smoothness}, applied to the indicator function on $(x + \delta B_d) \cap K$, which is indeed piecewise Lipschitz and the quantities $C_{\zeta}$ are uniformly bounded across all choices for $x$.
    \end{proof}

    We can now use these observations to relate the stationary distributions to one another.

    \begin{lemma}
        \label{lem:warmness-discretization}
        Let $d \in \N$, $R \geq 1$ and $B_d \subseteq K \subseteq RB_d$ be a convex body. Let $\beta \geq 0$, $0 < \delta < 2R$ and $\zeta > 0$. Then,
        \[\sup_{x \in K \cap M_{K,\zeta}} \left|\pi^C_{\beta,\delta,\zeta}(x) - \pi_{\beta,\delta}(x)\right| \downarrow 0, \qquad (\zeta \downarrow 0).\]
        Consequently, $\pi^C_{\beta,\delta,\zeta}$ is globally bounded for sufficiently small choices of $\zeta$, and $\pi^C_{\beta,\delta,\zeta}(M_{K,\zeta} \setminus K) \downarrow 0$ and $\pi_{\beta,\delta}(K \setminus M_{K,\zeta}) \downarrow 0$ as $\zeta \downarrow 0$.
    \end{lemma}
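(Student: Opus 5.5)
The plan is to use the explicit density formulas for both stationary distributions and compare numerators and normalizing constants separately. By \cite[Lemma~3.1]{CV18}, $\pi_{\beta,\delta}(x) = \ell_\delta(x)f(x)/N$ with $N = \int_K \ell_\delta f\,\mathrm{d}x$. By the remark after the definition of cube local conductance (which uses \Cref{thm:naive-discretization}), $\pi^C_{\beta,\delta,\zeta}(x) = \ell^C_{\delta,\zeta}(x)f(x)/N^C$ with $N^C = \int_{M_{K,\zeta}} \ell^C_{\delta,\zeta} f\,\mathrm{d}x$.

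\textbf{Step 1: the numerators converge uniformly.} For $x \in K\cap M_{K,\zeta}$,
\[
|\ell^C_{\delta,\zeta}(x)f(x) - \ell_\delta(x)f(x)| \leq \norm{f}_\infty \sup_{K\cap M_{K,\zeta}}|\ell^C_{\delta,\zeta}-\ell_\delta|,
\]
and this tends to $0$ by \Cref{lem:local-conductance-lb}. Here $f\le 1$.

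\textbf{Step 2: the normalizers converge.} Split
\[
|N^C - N| \leq \int_{K\cap M_{K,\zeta}} |\ell^C f - \ell f| + \int_{M_{K,\zeta}\setminus K} \ell^C f + \int_{K\setminus M_{K,\zeta}} \ell f .
\]
The first term is at most $\Vol(K)$ times the uniform bound from Step 1. The last two terms are bounded by $\Vol(M_{K,\zeta}\,\Delta\, K)$, since $\ell,\ell^C,f\le 1$. \Cref{lem:minecraft-dilation-containment}, applied with $c=0$ and $\eta=1$ (valid because $B_d\subseteq K$), sandwiches $M_{K,\zeta}$ between $(1\mp\zeta\sqrt d)K$. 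Hence this symmetric difference has volume at most $((1+\zeta\sqrt d)^d-(1-\zeta\sqrt d)^d)\Vol(K)\downarrow 0$.

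\textbf{Step 3: combine and derive the consequences.} By \Cref{lem:local-conductance-lipschitz}, $N$ is bounded below by a positive constant, namely $(2R)^{-d}\int_K f > 0$. So $N^C \ge N/2$ for small $\zeta$, and
\[
|a/N^C - b/N| \le |a-b|/N^C + b|N-N^C|/(NN^C)
\]
gives the uniform convergence claimed in the statement. The consequences follow from the same bounds:
\begin{itemize}
\item Global boundedness holds because $\pi^C \le 1/N^C \le 2/N$.
\item For the mass outside $K$, $\pi^C(M_{K,\zeta}\setminus K)\le \Vol(M_{K,\zeta}\setminus K)/N^C \to 0$.
\item Likewise $\pi_{\beta,\delta}(K\setminus M_{K,\zeta}) \le \Vol(K\setminus M_{K,\zeta})/N \to 0$.
\end{itemize}

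The main obstacle is a bookkeeping issue: the two densities live on different domains, $K$ versus $M_{K,\zeta}$. This is why the normalizer comparison must handle the symmetric difference separately rather than appeal directly to \Cref{thm:naive-discretization}. \Cref{lem:minecraft-dilation-containment} is exactly what controls that piece. One should also check that \Cref{lem:local-conductance-lb} is uniform over $x$, which it is as stated.
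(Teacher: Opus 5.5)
Your proposal is correct and follows essentially the same route as the paper: densities proportional to $\ell_\delta f$ and $\ell^C_{\delta,\zeta} f$, uniform convergence of the numerators via \Cref{lem:local-conductance-lb}, control of the symmetric difference $K \,\Delta\, M_{K,\zeta}$ via \Cref{lem:minecraft-dilation-containment}, and from these the convergence of the normalizers and the stated consequences. You make explicit the bookkeeping that the paper leaves implicit, namely the three-term split of $|N^C - N|$, the lower bound on $N$ from \Cref{lem:local-conductance-lipschitz}, and the quotient estimate.
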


    \begin{proof}
        We observe from \Cref{thm:naive-discretization} that $\pi_{\beta,\delta}(x)$ and $\pi^C_{\beta,\delta,\zeta}(x)$ are proportional to the densities given by $\exp(-\beta\norm{x}^2)\ell_\delta(x)$ and $\exp(-\beta\norm{x}^2)\ell^C_{\delta,\zeta}(x)$, respectively. We find from \Cref{lem:local-conductance-lb} that they are both globally bounded, and their difference vanishes uniformly on $K \cap M_{K,\zeta}$ as $\zeta \downarrow 0$. By \Cref{lem:minecraft-dilation-containment}, we find that the volume of the symmetric difference between $K$ and $M_{K,\zeta}$ vanishes as well. Thus, the normalization factors of both probability densities are $\Theta(1)$ as $\zeta \downarrow 0$, and their difference vanishes. Consequently, both stationaries are globally bounded, and thus the probability on the symmetric difference of $K$ and $M_{K,\zeta}$ vanishes as well.
    \end{proof}


    Now, we consider the convex body $\hat K = (1+\zeta\sqrt{d})K$. Let $\hat P_{\beta,\delta}$ and $\hat{P}^C_{\beta,\delta,\zeta}$ be the speedy walk and cube speedy walk on $\hat K$, with stationary distributions $\hat{\pi}_{\beta,\delta,\zeta}$ and $\hat{\pi}^C_{\beta,\delta,\zeta}$, respectively. We first prove warmness of the stationary distribution and closeness of the Markov transition kernels of the cube speedy walk to this dilated speedy walk.

    \begin{lemma}
        \label{lem:enlarged-body}
        Let $d \in \N$, $R \geq 1$ and $B_d \subseteq K \subseteq RB_d$ be a convex body. Let $\beta \geq 0$, $0 < \delta < 2R$ and $\zeta > 0$. Then,
        \[\sup_{x \in M_{K,\zeta}} \left|\pi^C_{\beta,\delta,\zeta}(x) - \hat\pi_{\beta,\delta}(x)\right| \downarrow 0, \qquad \text{and} \qquad \sup_{x \in M_{K,\zeta}} \norm{P^C_{\beta,\delta,\zeta}(x,\cdot) - \hat P_{\beta,\delta}(x,\cdot)}_{\mathrm{TV}} \downarrow 0, \qquad (\zeta \downarrow 0).\]
    \end{lemma}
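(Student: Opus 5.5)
We compare the two chains directly on $M_{K,\zeta}$. Both live on (subsets of) $\R^d$ with the same target density $f(x)=\exp(-\beta\norm{x}^2)$, and they differ only in the proposal sets. The cube walk proposes uniformly from $S^C_x := (x_\zeta + M_{\delta B_d,\zeta}) \cap M_{K,\zeta}$. The dilated speedy walk proposes uniformly from $\hat S_x := (x+\delta B_d)\cap \hat K$.

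By \Cref{lem:minecraft-dilation-containment} with $c=0$ and $\eta=1$ (valid because $B_d\subseteq K$), we have $M_{K,\zeta}\subseteq \hat K$. So every point of $M_{K,\zeta}$ is a valid state of $\hat P_{\beta,\delta}$, and the supremum in the statement makes sense.

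\textbf{Transition kernels.} For a Metropolis walk whose proposal is uniform on a set $S_x$, the kernel off the diagonal has density $a_{x,y}\mathbbm{1}_{S_x}(y)/\Vol(S_x)$. The rejection mass equals one minus the integral of that density. Hence the TV distance between two such kernels at the same $x$, with the same filter, is at most $\norm{\mathrm{Unif}(S^C_x)-\mathrm{Unif}(\hat S_x)}_{\mathrm{TV}}$, up to a factor $2$. That quantity is in turn bounded by $\Vol(S^C_x\,\Delta\,\hat S_x)/\min\{\Vol(S^C_x),\Vol(\hat S_x)\}$.

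The denominator is bounded below uniformly. By \Cref{lem:local-conductance-lipschitz}, $\Vol(\hat S_x)\ge \Vol(\delta B_d)/(2\hat R)^d$. By \Cref{lem:local-conductance-lb}, the same type of bound holds for $S^C_x$ once $\zeta$ is small.

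For the numerator, we sandwich the sets. Every point of $S^C_x$ lies within distance $\delta+2\sqrt d\zeta$ of $x$, and lies in $M_{K,\zeta}\subseteq\hat K$. Conversely, any $y\in\hat S_x$ with $\mathrm{dist}(y,\partial(x+\delta B_d))>2\sqrt d\zeta$ and $y\in(1-\zeta\sqrt d)K$ belongs to $S^C_x$; here we use the lower inclusion of \Cref{lem:minecraft-dilation-containment}. Therefore $S^C_x\,\Delta\,\hat S_x$ is contained in the union of a spherical shell of width $O(\sqrt d\zeta)$ around $\partial(x+\delta B_d)$ and the layer $\hat K\setminus(1-\zeta\sqrt d)K$. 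Both volumes are bounded independently of $x$ and tend to $0$ as $\zeta\downarrow0$; the second uses \Cref{lem:minecraft-dilation-containment} again. This gives the uniform TV bound.

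\textbf{Stationary densities.} By \Cref{thm:naive-discretization}, $\pi^C_{\beta,\delta,\zeta}\propto f\,\ell^C_{\delta,\zeta}$ on $M_{K,\zeta}$, and $\hat\pi_{\beta,\delta}\propto f\,\hat\ell_\delta$ on $\hat K$, where $\hat\ell_\delta(x)=\Vol(\hat S_x)/\Vol(\delta B_d)$. The normalisation $\Vol(M_{\delta B_d,\zeta})$ differs from $\Vol(\delta B_d)$ by $o(1)$, again by \Cref{lem:minecraft-dilation-containment} applied to the ball. Combined with the symmetric-difference bound above, this gives $\sup_{x\in M_{K,\zeta}}|\ell^C_{\delta,\zeta}(x)-\hat\ell_\delta(x)|\downarrow 0$.

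For the normalising constants: $\int_{M_{K,\zeta}} f\ell^C$ and $\int_{\hat K}f\hat\ell$ differ by at most $\Vol(M_{K,\zeta})\cdot\sup|\ell^C-\hat\ell|$ plus $\Vol(\hat K\setminus M_{K,\zeta})$. Both terms vanish. Both constants are also bounded below by a positive constant independent of $\zeta$, since $f\ge e^{-\beta \hat R^2}$ and the $\ell$'s are bounded below. Since the unnormalised densities converge uniformly and the normalisers converge and stay bounded away from $0$, the normalised densities converge uniformly on $M_{K,\zeta}$.

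\textbf{Main obstacle.} The step that needs the most care is the geometric sandwich for $S^C_x$. The grid rounding $x\mapsto x_\zeta$ shifts the ball centre, and the cube-ball $M_{\delta B_d,\zeta}$ is itself only a $\sqrt d\zeta$-approximation of $\delta B_d$, so these errors must be tracked together so that the resulting bounds are uniform in $x$.
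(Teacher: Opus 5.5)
Your proof is correct, but it is organised differently from the paper's.

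\textbf{The paper's route.} For the stationary densities, the paper passes through an intermediate object: the cube speedy walk $\hat P^C_{\beta,\delta,\zeta}$ on the dilated body.
\begin{itemize}
\item It invokes \Cref{lem:warmness-discretization} for $\hat K$ to get $\hat\pi^C_{\beta,\delta,\zeta}\approx\hat\pi_{\beta,\delta}$.
\item It then compares $\pi^C_{\beta,\delta,\zeta}$ with $\hat\pi^C_{\beta,\delta,\zeta}$ via the bound $\sup_x|\hat\ell^C_{\delta,\zeta}(x)-\ell^C_{\delta,\zeta}(x)|\le(\Vol(M_{\hat K,\zeta})-\Vol(M_{K,\zeta}))/\Vol(\delta B_d)$.
\item The kernel statement is dispatched by appeal to \Cref{thm:naive-discretization} and \Cref{lem:local-conductance-lb}, plus the observation that both walks use the same filter.
\end{itemize}

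\textbf{Your route.} You compare $P^C_{\beta,\delta,\zeta}$ with $\hat P_{\beta,\delta}$, and $\ell^C_{\delta,\zeta}$ with $\hat\ell_\delta$, directly through one explicit sandwich. The inclusions $(1-\zeta\sqrt d)K\subseteq M_{K,\zeta}\subseteq\hat K$, together with a shell of width $O(\sqrt d\,\zeta)$ around $\partial(x+\delta B_d)$, bound $\Vol(S^C_x\,\Delta\,\hat S_x)$ uniformly in $x$. That single estimate then drives both the kernel total-variation bound and the local-conductance comparison, and hence the normalising constants.

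\textbf{What each buys.} The paper's argument is shorter because it recycles earlier lemmas. However, \Cref{lem:warmness-discretization} is a statement about a fixed body as $\zeta\downarrow0$, and the paper applies it to $\hat K=(1+\zeta\sqrt d)K$, which moves with $\zeta$. So it implicitly needs that lemma to hold uniformly over the family of dilated bodies. Its kernel step is also only sketched, and it writes $K$ where $\hat K$ is meant. Your argument uses only non-asymptotic inclusions and the explicit lower bound of \Cref{lem:local-conductance-lipschitz} on $\hat K$ (with radius $(1+\zeta\sqrt d)R$). All your bounds are therefore explicit and uniform in $x$, and you sidestep the moving-body issue entirely.

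\textbf{A small simplification.} Since $\norm{\mathrm{Unif}(A)-\mathrm{Unif}(B)}_{\mathrm{TV}}\le\Vol(A\,\Delta\,B)/\max\{\Vol(A),\Vol(B)\}$, you only need the lower bound on $\Vol(\hat S_x)$. The appeal to \Cref{lem:local-conductance-lb} for $\Vol(S^C_x)$ is therefore unnecessary.
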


    \begin{proof}
        Observe from \Cref{lem:minecraft-dilation-containment} that $M_{K,\zeta} \subseteq \hat K$. We also trivially find that $M_{K,\zeta} \subseteq M_{\hat{K},\zeta}$. Thus, we deduce from \Cref{lem:warmness-discretization} that
        \[\sup_{x \in M_{K,\zeta}} |\hat{\pi}^C_{\beta,\delta,\zeta}(x) - \hat{\pi}_{\beta,\delta}(x)| \downarrow 0, \qquad (\zeta \downarrow 0).\]
        Thus, it remains to relate the cube speedy walk distributions to one another on $M_{K,\zeta}$. To that end, observe from \Cref{thm:naive-discretization} that they are proportional to the densities $\exp(-\beta\norm{x}^2)\ell^C_{\delta,\zeta}(x)$ and $\exp(-\beta\norm{x}^2)\hat{\ell}^C_{\delta,\zeta}(x)$, respectively. We observe that
        \begin{align*}
            \sup_{x \in M_{K,\zeta}} \left|\hat\ell^C_{\delta,\zeta}(x) - \ell^C_{\delta,\zeta}(x)\right| &= \sup_{x \in M_{K,\zeta}} \frac{\Vol(M_{(x_{\zeta} + \delta B_d) \cap (\hat K \setminus K),\zeta})}{\Vol(\delta B_d)} \leq \frac{\Vol(M_{\hat{K} \setminus K,\zeta})}{\Vol(\delta B_d)} \\
            &= \frac{\Vol(M_{\hat{K},\zeta}) - \Vol(M_{K,\zeta})}{\Vol(\delta B_d)} \downarrow 0, \qquad (\zeta \downarrow 0),
        \end{align*}
        where we used \Cref{lem:minecraft-dilation-containment} in the final step. As such, both densities are globally bounded, their difference vanishes on $M_{K,\zeta}$, and the volume of the symmetric difference between their state spaces does too. Thus, the difference between their normalization constants vanishes too, and we establish the first of the statements in the lemma.

        For the second statement, let $x \in M_{K,\zeta}$. We observe that the proposal distributions for both walks select a point uniformly from $(x + \delta B_d) \cap K$ and $M_{(x_{\zeta} + \delta B_d) \cap K,\zeta}$, respectively. We observe from \Cref{thm:naive-discretization} that their symmetric difference vanishes uniformly, and from \Cref{lem:local-conductance-lb}, we see that the difference between their normalization constants vanishes uniformly too. Hence their total-variation distance vanishes uniformly as well. Since the acceptance probabilities are globally bounded and the same for both walks, the total-variation distance bound extends to $P^C_{\beta,\delta,\zeta}(x,\cdot)$ and $\hat{P}_{\beta,\delta}(x,\cdot)$, completing the proof.
    \end{proof}
    
    Finally, we use the slightly enlarged convex body to prove a lower bound on the conductance of the cube speedy walk.
    
    \begin{theorem}
        \label{thm:conductance-minecraft-walk}
        Let $d \in \N$, $\beta \geq 0$, $0 < \delta < \min\{1/(4\beta r), 1/\sqrt{2\beta}\}$, $\zeta > 0$, and $B_d \subseteq K \subseteq \R^d$ be a convex body. Then, for a sufficiently small choice of $\zeta$, the conductance of the cube speedy walk satisfies $\varphi(P^C_{\beta,\delta,\zeta}) \in \Omega(\delta\sqrt{\beta/d})$.
    \end{theorem}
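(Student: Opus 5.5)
The plan is to transfer the conductance bound for the continuous speedy walk on the enlarged body $\hat K = (1+\zeta\sqrt{d})K$ to the cube speedy walk, using the robustness result \Cref{thm:conductance-robustness}. Concretely, I view both $\hat P_{\beta,\delta}$ and $P^C_{\beta,\delta,\zeta}$ as Markov kernels on the common measurable space $\hat K$ with its Borel $\sigma$-algebra. The cube speedy walk lives on $M_{K,\zeta} \subseteq \hat K$ by \Cref{lem:minecraft-dilation-containment}, applied with $c=0$ and $\eta=1$ since $B_d\subseteq K$. On $\hat K \setminus M_{K,\zeta}$ I extend $P^C$ by the same transition rule as $\hat P$, or equivalently restrict attention to sets $A \subseteq M_{K,\zeta}$. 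This extension is harmless: $\pi^C$ assigns zero mass there, and $\hat\pi_{\beta,\delta}(\hat K\setminus M_{K,\zeta}) \downarrow 0$ by \Cref{lem:minecraft-dilation-containment} together with global boundedness of $\hat \pi_{\beta,\delta}$.

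First, I lower bound $\varphi(\hat P_{\beta,\delta})$. Here $B_d \subseteq \hat K \subseteq (1+\zeta\sqrt d) r B_d$, where I read $r$ in the statement as the outer radius $R$. The hypothesis on $\delta$ is strict, so for $\zeta$ small enough it still satisfies $\delta \leq \min\{1/(4\beta(1+\zeta\sqrt d)r), 1/\sqrt{2\beta}\}$. Then \Cref{thm:speedy-walk-conductance} gives $\varphi(\hat P_{\beta,\delta}) \geq c_0\,\delta\sqrt{\beta/d}$ for an absolute constant $c_0>0$.

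Second, I verify the two hypotheses of \Cref{thm:conductance-robustness}.

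\textbf{Total variation.} \Cref{lem:enlarged-body} gives $\sup_{x\in M_{K,\zeta}}\norm{P^C(x,\cdot)-\hat P(x,\cdot)}_{\mathrm{TV}} =: \eta_\zeta \downarrow 0$.

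\textbf{Density ratio.} \Cref{lem:enlarged-body} also gives uniform convergence $\sup_{M_{K,\zeta}}|\pi^C - \hat\pi| \downarrow 0$. Since $\hat\pi$ is bounded below on $\hat K$ uniformly in $\zeta$, this converts into $\xi_\zeta := \sup_{\mathrm{supp}(\pi^C)} |\mathrm{d}\pi^C/\mathrm{d}\hat\pi - 1| \downarrow 0$. The lower bound on $\hat\pi$ holds because the Gaussian factor is at least $e^{-\beta(2R)^2}$, the local conductance is at least $(4R)^{-d}$ by \Cref{lem:local-conductance-lipschitz}, and the normalization is bounded above. Absolute continuity $\pi^C \ll \hat \pi$ is immediate from the densities. \Cref{thm:conductance-robustness} then yields
\[
\varphi(P^C_{\beta,\delta,\zeta}) \geq \tfrac{3}{10}\,c_0\,\delta\sqrt{\beta/d} - \eta_\zeta .
\]
Choosing $\zeta$ small enough that $\eta_\zeta \leq \tfrac{3}{20}c_0\delta\sqrt{\beta/d}$ and $\xi_\zeta<1/4$ gives the claim. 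The case $\beta=0$ is trivial, since the bound is then $0$.

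The main obstacle I expect is that the chains live on different state spaces, $M_{K,\zeta}$ versus $\hat K$. A set $A\subseteq M_{K,\zeta}$ has $\hat K\setminus A \supsetneq M_{K,\zeta}\setminus A$, so the escape probabilities and the $\min\{\pi(A),\pi(\Omega\setminus A)\}$ terms must be compared carefully. The cleanest route is to work with the restriction of $\hat P$ to $M_{K,\zeta}$: moves leaving $M_{K,\zeta}$ become self-loops, which changes $\hat P(x,\cdot)$ by at most $\hat P(x,\hat K\setminus M_{K,\zeta})$ in total variation. That quantity tends to zero uniformly, because the proposal density is bounded by $1/(\ell_\delta(x)\Vol(\delta B_d))$ and $\Vol(\hat K\setminus M_{K,\zeta})\downarrow0$. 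The escape-mass bound only improves, since the complement is smaller, and the $\pi$ terms are handled by the vanishing mass outside $M_{K,\zeta}$. I would check that these error terms fold into $\eta_\zeta$ and $\xi_\zeta$ before applying \Cref{thm:conductance-robustness}.
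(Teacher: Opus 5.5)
Your proposal is correct and follows the paper's proof. You apply \Cref{thm:speedy-walk-conductance} to the speedy walk on $\hat K=(1+\zeta\sqrt{d})K$, use \Cref{lem:enlarged-body} together with the global (in particular, lower) boundedness of $\hat\pi_{\beta,\delta}$ to get uniform convergence of the total-variation distances and of the density ratio $\pi^C_{\beta,\delta,\zeta}/\hat\pi_{\beta,\delta}$, and conclude with \Cref{thm:conductance-robustness}.

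You also supply details the paper leaves implicit: reading $r$ as the outer radius $R$, using the strict bound on $\delta$ to absorb the factor $1+\zeta\sqrt{d}$ in the radius of $\hat K$, and handling the mismatch between $M_{K,\zeta}$ and $\hat K$. The last point needs no separate restriction argument. Applying \Cref{thm:conductance-robustness} on the common space $\hat K$ already covers it, because the total-variation bound accounts for the $\hat P_{\beta,\delta}$-mass that leaves $M_{K,\zeta}$.
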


    \begin{proof}
        Let $\zeta > 0$. We know from \Cref{lem:minecraft-dilation-containment} that $M_{K,\zeta} \subseteq \hat K$, and so the stationary measures satisfy $\pi_{\beta,\delta,\zeta}^C \ll \hat{\pi}_{\beta,\delta}$. We recall from \Cref{thm:naive-discretization} that $\hat{\pi}_{\beta,\delta}$ is globally bounded, and hence we observe from \Cref{lem:enlarged-body}, that
        \[\sup_{x \in M_{K,\zeta}} \left|\frac{\pi^C_{\beta,\delta,r,\zeta}(x)}{\hat\pi_{\beta,\delta}(x)} - 1\right| \downarrow 0, \qquad \text{and} \qquad \sup_{x \in M_{K,\zeta}} \norm{P^C_{\beta,\delta,\zeta}(x,\cdot) - \hat P_{\beta,\delta}(x,\cdot)}_{\mathrm{TV}} \downarrow 0, \qquad (\zeta \downarrow 0).\]
        From \Cref{thm:conductance-robustness}, we then obtain that for a sufficiently small choice of $\zeta$, $\varphi(P^C_{\beta,\delta,\zeta}) \in \Omega(\varphi(\hat P_{\beta,\zeta})) \subseteq \Omega(\delta\sqrt{\beta/d})$, where in the final step we used \Cref{thm:speedy-walk-conductance}.
    \end{proof}

    \subsubsection{Induced cube speedy walk}
    \label{subsubsec:induced-cube-speedy-walk}

    Next, we consider the induced cube speedy walk on $(M_{K,\zeta})_\zeta = K_\zeta$, obtained by taking the chain induced by $P^C_{\beta,\delta,\zeta}$ on the cubes $(M_{x,\zeta})_{x \in \R^d}$ (cf.~\cref{def:canonical-discretization}).

    \begin{definition}[Induced cube speedy walk]
        Let $d \in \N$ and $K \subseteq \R^d$ be a convex body. Let $\beta \geq 0$, $\delta > 0$, $\zeta > 0$ and $f(x) = \exp(-\beta\norm{x}^2)$. We define the induced cube speedy walk on $K$ with inverse temperature $\beta$, step size $\delta$ and grid spacing $\zeta$, that at $x \in K_{\zeta}$ does:
        \begin{enumerate}[nosep]
            \item Select an element from $y \in (x + \delta B_d) \cap K_\zeta$ uniformly at random.
            \item Transition to $y$ with probability
            \[a^I_{x,y} := \frac{1}{\pi^C_{\beta,\delta,\zeta}(M_{x,\zeta})} \int_{M_{x,\zeta}} \int_{M_{y,\zeta}} \min\left\{1, \frac{f(y')}{f(x')}\right\} \;\mathrm{d}y' \cdot \pi^C_{\beta,\delta,\zeta}(x') \;\mathrm{d}x'.\]
            Else stay in $x$.
        \end{enumerate}
        We refer to the stationary distribution as $\pi^I_{\beta,\delta,\zeta}$.
    \end{definition}

    This walk's conductance can be canonically lower bounded using the results from the previous section.

    \begin{corollary}
        \label{cor:induced-conductance-bound}
        Let $d \in \N$, $\beta \geq 0$, $R > 0$ $0 < \delta < \min\{1/(4\beta R), 1/\sqrt{2\beta}\}$, $\zeta > 0$, and $B_d \subseteq K \subseteq R B_d$ be a convex body. The stationary distribution of the induced cube speedy walk satisfies for all $x \in K_\zeta$, $\pi^I_{\beta,\delta,\zeta}(x) = \hat{\pi}_{\beta,\delta,\zeta}(M_{x,\zeta})$. For a sufficiently small choice of $\zeta$, the conductance of the induced cube speedy walk is $\Omega(\delta\sqrt{\beta/d})$.
    \end{corollary}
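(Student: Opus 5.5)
The plan is to recognize the induced cube speedy walk as the induced discretization (\Cref{def:canonical-discretization}) of the cube speedy walk $P^C_{\beta,\delta,\zeta}$ with respect to the finite partition $(M_{x,\zeta})_{x \in K_\zeta}$ of $M_{K,\zeta}$. Both claims should then follow from \Cref{thm:discretization-conductance} together with \Cref{thm:conductance-minecraft-walk}.

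First, I will verify that the two-step description in the definition reproduces $P'_{xy} = \frac{1}{\pi^C(M_{x,\zeta})}\int_{M_{x,\zeta}} P^C(x',M_{y,\zeta})\,\mathrm{d}\pi^C(x')$ for $x \neq y$.

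For $x' \in M_{x,\zeta}$, the cube walk proposes uniformly from $(x_\zeta + M_{\delta B_d,\zeta}) \cap M_{K,\zeta}$. Since $x'_\zeta = x$, this set is the union of the cubes $M_{y,\zeta}$ with $y \in (x+\delta B_d)\cap K_\zeta$. These cubes all have volume $\zeta^d$, so the cube selected is uniform over $(x+\delta B_d)\cap K_\zeta$, and the point within it is uniform. Averaging the acceptance probability $\min\{1,f(y')/f(x')\}$ over $y' \in M_{y,\zeta}$ and over $x'$ weighted by $\pi^C$ gives exactly $a^I_{x,y}$, up to the normalization $\zeta^{-d}$ hidden in the uniform density on the cube. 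The diagonal (rejection) terms agree because both rows sum to one. Irreducibility of $P^C$ with respect to Lebesgue measure on $M_{K,\zeta}$ is inherited from the speedy-walk structure on the connected union of cubes, and all cells carry positive mass, so $\Omega' = K_\zeta$.

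\Cref{thm:discretization-conductance} then gives that the induced chain is irreducible and reversible with stationary distribution $\pi'_x = \pi^C_{\beta,\delta,\zeta}(M_{x,\zeta})$, and that $\varphi(P^I) \ge \varphi(P^C) \in \Omega(\delta\sqrt{\beta/d})$ by \Cref{thm:conductance-minecraft-walk}, for sufficiently small $\zeta$. The stationary formula in the statement is written in terms of $\hat\pi$. I read this as the cube stationary $\pi^C$; if the hat is intended, it agrees up to a vanishing error by \Cref{lem:enlarged-body}, and I would note this. Uniqueness follows from irreducibility via \Cref{thm:irreducible-implies-unique-stationary}.

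The main obstacle is the bookkeeping in identifying the transition probabilities. In particular, I need to check that the normalization in $a^I$, where the inner integral carries an implicit $\zeta^{-d}$ from uniform sampling within $M_{y,\zeta}$, matches the induced kernel. I also need to confirm that the measurability and irreducibility hypotheses of \Cref{thm:discretization-conductance} hold with Borel $\sigma$-algebras.
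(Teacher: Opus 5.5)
Your proposal is correct and follows the paper's own argument, which consists solely of applying \Cref{thm:discretization-conductance} to the cube speedy walk with the partition $(M_{x,\zeta})_{x \in K_\zeta}$ and then invoking \Cref{thm:conductance-minecraft-walk}. You are right that identifying the transition probabilities needs $a^I_{x,y}$ read with the missing $\zeta^{-d}$ normalization, and that the exact stationary identity holds for $\pi^C_{\beta,\delta,\zeta}(M_{x,\zeta})$, not for $\hat\pi_{\beta,\delta,\zeta}$, which agrees with it only asymptotically as $\zeta \downarrow 0$.
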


    \begin{proof}
        This follows from applying \Cref{thm:discretization-conductance} to \Cref{thm:conductance-minecraft-walk}.
    \end{proof}

    \subsubsection{Discretized speedy walk}
    \label{subsubsec:discrete-speedy-walk}

    The last remaining issue is that we cannot efficiently implement the induced cube speedy walk. Instead, we can merely efficiently approximate its transition probabilities. To that end, we consider the discretized speedy walk.

    \begin{definition}[Discretized speedy walk]
        \label{def:discretized-speedy}
        Let $d \in \N$ and $K \subseteq \R^d$ be a convex body. Let $\beta \geq 0$, $\delta > 0$, $\zeta > 0$ and $f(x) = \exp(-\beta\norm{x}^2)$.
        We define the discretized speedy walk on $K$ with inverse temperature~$\beta$, step size $\delta$ and grid spacing $\zeta$ as the walk that at $x \in K_\zeta$ does:
        \begin{enumerate}[nosep]
            \item Select an element from $y \in (K \cap (x + \delta B_d))_\zeta$ uniformly at random.
            \item Transition to $y$ with probability $a_{x,y} = \min\{1, f(y)/f(x)\}$. Else, stay in $x$.
        \end{enumerate}
        We refer to the probability transition matrix as $P_{\beta,\delta,\zeta}$ and to the stationary distribution as $\pi_{\beta,\delta,\zeta}$.
    \end{definition}

    This particular walk again introduces a notion of local conductance.

    \begin{definition}
        Let $d \in \N$, $\delta > 0$, $\zeta > 0$, $K \subseteq \R^d$ be a convex body. Then, for all $x \in K_\zeta$,
        \[\ell_{\delta,\zeta}(x) := \frac{|(x + \delta B_d) \cap K_\zeta|}{|(\delta B_d)_\zeta|}.\]
    \end{definition}
    
    The stationary distribution $\pi_{\beta,\delta,\zeta}$ is proportional to $\exp(-\beta\norm{x}^2)\ell_{\delta,\zeta}(x)$. We characterize how much this notion of local conductance differs from the regular notion. 
    
    \begin{lemma}[Local conductance approximation]
        \label{lem:discrete-local-conductance-proximity}
        Let $d \in \N$, $R \geq 1$, $0 < \delta < 2R$, $B_d \subseteq K \subseteq RB_d$. Then,
        \[\sup_{x \in K_\zeta} \left|\ell_\delta(x) - \ell_{\delta,\zeta}(x)\right| \downarrow 0, \qquad (\zeta \downarrow 0).\]
    \end{lemma}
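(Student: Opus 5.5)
The plan is to write both local conductances as ratios and control numerator and denominator separately, uniformly in $x \in K_\zeta$. Multiplying and dividing by $\zeta^d$ gives
\[\ell_{\delta,\zeta}(x) = \frac{\zeta^d\,|(x+\delta B_d)\cap K_\zeta|}{\zeta^d\,|(\delta B_d)_\zeta|}, \qquad \ell_\delta(x) = \frac{\Vol((x+\delta B_d)\cap K)}{\Vol(\delta B_d)}.\]
The discrete quantities are Riemann-type sums of indicator functions. We therefore apply \Cref{lem:smoothness} twice: once to the indicator of $\delta B_d$, and once to the indicator of $S_x := (x+\delta B_d)\cap K$. Indicators are piecewise $0$-Lipschitz on the trivial partition, so the Lipschitz term $L\sqrt{d}\Vol(\Omega)\zeta$ drops out. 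What remains is $2C_{\sqrt d\zeta}$, the volume of the $\sqrt d\zeta$-neighbourhood of the boundary of the relevant set.

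The denominator is immediate. By \Cref{lem:smoothness}, $|\zeta^d|(\delta B_d)_\zeta| - \Vol(\delta B_d)| \to 0$, and $\Vol(\delta B_d)>0$ is fixed.

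The numerator needs one adjustment: the grid $\zeta\Z^d$ is not centred at $x$. However, $x\in K_\zeta$ is itself a grid point, so $S_x\cap\zeta\Z^d = x + ((\delta B_d\cap(K-x))\cap\zeta\Z^d)$ and the relevant lattice is just a translate. The error is then bounded by the volume of $\{y : \mathrm{dist}(y,\partial S_x) < \sqrt d\zeta\}$. This set is contained in $\{y : \mathrm{dist}(y,\partial(x+\delta B_d))<\sqrt d\zeta\} \cup \{y\in RB_d+\sqrt d \zeta B_d : \mathrm{dist}(y,\partial K)<\sqrt d\zeta\}$.

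\textbf{Main obstacle.} The key step is bounding this neighbourhood volume uniformly in $x$. For the ball part the volume is $\Vol(\delta B_d)((1+\sqrt d\zeta/\delta)^d-(1-\sqrt d\zeta/\delta)^d)$, which is independent of $x$ and tends to $0$. For the $K$ part we use that $K$ is a fixed convex body containing $B_d$, so its boundary neighbourhood lies in $(1+\sqrt d\zeta)K\setminus(1-\sqrt d\zeta)K$. This is the dilation argument of \Cref{lem:minecraft-dilation-containment} with $\eta=1$ and $c=0$. Its volume is at most $((1+\sqrt d\zeta)^d-(1-\sqrt d\zeta)^d)\Vol(K) \to 0$, again independent of $x$. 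Together these give $\sup_{x}|\zeta^d|S_x\cap\zeta\Z^d| - \Vol(S_x)|\to 0$.

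Finally, we combine numerator and denominator via $|a/b-a'/b'| \le |a-a'|/b + a'|b-b'|/(bb')$. Here $a' = \Vol(S_x) \le \Vol(\delta B_d)$ is bounded and $b,b'$ are bounded away from $0$ for small $\zeta$, so the supremum over $x\in K_\zeta$ tends to $0$. The lower bound $\ell_\delta\ge(2R)^{-d}$ from \Cref{lem:local-conductance-lipschitz} is not even needed for this absolute-difference statement, though it will make the relative version immediate afterwards.
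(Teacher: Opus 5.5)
Your proof is correct and is essentially the paper's argument. The paper notes that $\ell_{\delta,\zeta}=\ell^C_{\delta,\zeta}$ on $K_\zeta$ and invokes \Cref{lem:local-conductance-lb}, whose proof is the same Riemann-sum estimate from \Cref{lem:smoothness} applied to the indicator of $(x+\delta B_d)\cap K$ with uniformly controlled boundary neighbourhoods. Your version is more careful: it handles the two normalising denominators explicitly, which the paper's displayed equality silently drops, and it actually proves the uniform-in-$x$ bound on $C_{\sqrt d\zeta}$ that the paper only asserts.
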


    \begin{proof}
        We note that $\ell_{\delta,\zeta}(x) = \ell^C_{\delta,\zeta}(x)$, and so the result follows directly from \Cref{lem:local-conductance-lb}.
    \end{proof}

    To lower bound the conductance of the discretized speedy walk, we relate it to the conductance of the induced cube speedy walk.

    \begin{theorem}[Conductance approximation]
        \label{thm:discretized-speedy-conductance}
        Let $d \in \N$, $\beta \geq 0$, $R > 0$, $0 < \delta < \min\{1/(4\beta R), 1/\sqrt{2\beta}\}$, $\zeta > 0$ and $B_d \subseteq K \subseteq R B_d$ be a convex body. For a sufficiently small choice of $\zeta$, the conductance of the discretized speedy walk on $K$ with step size $\zeta$ satisfies $\varphi(P_{\beta,\delta,\zeta}) \in \Omega(\delta\sqrt{\beta/d})$.
    \end{theorem}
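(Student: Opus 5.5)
We compare the discretized speedy walk $P_{\beta,\delta,\zeta}$ with the induced cube speedy walk $P^I_{\beta,\delta,\zeta}$. By \Cref{cor:induced-conductance-bound}, the latter already has conductance $\Omega(\delta\sqrt{\beta/d})$ for sufficiently small $\zeta$. The plan is to verify the two hypotheses of \Cref{thm:conductance-robustness}, with $P := P^I_{\beta,\delta,\zeta}$ and $P' := P_{\beta,\delta,\zeta}$ on the common finite state space $K_\zeta$ with counting measure. Concretely, we need
\[
\sup_{x\in K_\zeta}\norm{P^I(x,\cdot)-P_{\beta,\delta,\zeta}(x,\cdot)}_{\mathrm{TV}} \downarrow 0
\qquad\text{and}\qquad
\sup_{x\in K_\zeta}\left|\frac{\pi_{\beta,\delta,\zeta}(x)}{\pi^I_{\beta,\delta,\zeta}(x)}-1\right| \downarrow 0
\]
as $\zeta\downarrow0$. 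Once both hold, choosing $\zeta$ small enough that $\xi<1/4$ and $\eta$ is a small constant fraction of $\delta\sqrt{\beta/d}$ gives $\varphi(P_{\beta,\delta,\zeta}) \geq \tfrac{3}{10}\varphi(P^I)-\eta \in \Omega(\delta\sqrt{\beta/d})$. Since $\delta,\beta,d$ are fixed while $\zeta\downarrow0$, any vanishing perturbation eventually falls below this threshold.

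For the transition kernels, both walks share the same proposal step: uniform over $(x+\delta B_d)\cap K_\zeta$, which coincides with $(K\cap(x+\delta B_d))_\zeta$. They differ only in the acceptance probability, $a_{x,y}$ versus $a^I_{x,y}$. The quantity $a^I_{x,y}$ is the average of $\min\{1,f(y')/f(x')\}$ over $x'\in M_{x,\zeta}$ and $y'\in M_{y,\zeta}$, weighted by $\pi^C_{\beta,\delta,\zeta}$ (with the $y'$-integral normalized by $\zeta^d$, which is implicit in the definition). The map $(x',y')\mapsto\min\{1,f(y')/f(x')\}$ is Lipschitz on the bounded set $RB_d\times RB_d$, and $\norm{x'-x},\norm{y'-y}\le\sqrt d\zeta$. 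Hence $|a^I_{x,y}-a_{x,y}|\le L'\sqrt d\zeta$ uniformly. The TV distance between the rows is then at most this quantity, because the proposal distributions coincide and rejection mass goes to $x$ in both walks.

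For the stationary distributions, \Cref{cor:induced-conductance-bound} gives $\pi^I(x)=\hat\pi(M_{x,\zeta})$, or equivalently $\pi^C(M_{x,\zeta})$, up to the uniform approximation of \Cref{lem:enlarged-body}. Since $\pi^C$ is globally bounded and close to $\pi_{\beta,\delta}$ (\Cref{lem:warmness-discretization}), and $\pi_{\beta,\delta}$ has density proportional to $f\ell_\delta$, which is Lipschitz by \Cref{lem:local-conductance-lipschitz}, we get $\pi^I(x)=\zeta^d\pi_{\beta,\delta}(x)(1+o(1))$ uniformly. On the other side, $\pi_{\beta,\delta,\zeta}(x)\propto f(x)\ell_{\delta,\zeta}(x)$. \Cref{lem:discrete-local-conductance-proximity} gives $\ell_{\delta,\zeta}\to\ell_\delta$ uniformly, and \Cref{thm:naive-discretization} or \Cref{lem:smoothness} gives $\zeta^dN_{f\ell_{\delta,\zeta},\zeta}\to N_{f\ell_\delta}$. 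Together these yield $\pi_{\beta,\delta,\zeta}(x)=\zeta^d\pi_{\beta,\delta}(x)(1+o(1))$. Dividing the two expressions gives the required ratio bound.

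The main obstacle is that the convergence statements proved earlier are additive, whereas \Cref{thm:conductance-robustness} needs multiplicative closeness of the stationary distributions. The plan is to convert additive to multiplicative using the global lower bounds: $\ell_\delta\ge(2R)^{-d}$ by \Cref{lem:local-conductance-lipschitz}, $f\ge e^{-\beta R^2}$ on $K$, and, for small $\zeta$, $\ell_{\delta,\zeta}$ and $\ell^C_{\delta,\zeta}$ are at least half of these bounds. These keep both densities (scaled by $\zeta^{-d}$) bounded away from zero uniformly, so that uniform additive errors become uniform relative errors. A secondary point is to check that the $\pi^C$-weighting in $a^I$ does not spoil the Lipschitz argument. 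It does not, because $a^I$ is a convex combination of values that are each within $O(\sqrt d\zeta)$ of $a_{x,y}$.
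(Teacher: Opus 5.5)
Your argument is correct, but it compares the stationary distributions by a different route than the paper.

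\textbf{The paper's route.} It also starts from $\sup|a^I_{x,y}-a_{x,y}|\downarrow 0$ via Lipschitz continuity. It then uses $a_{x,y}\ge e^{-2\beta R^2}$ to upgrade this to multiplicative closeness of the off-diagonal transition probabilities, and invokes \Cref{thm:robustness-diameter}. Because the hop-diameter is $D\in O(R/\delta)$ uniformly in $\zeta$, detailed balance along shortest paths gives both the row-wise TV bound and $|\pi^I_{\beta,\delta,\zeta}/\pi_{\beta,\delta,\zeta}-1|<3D\eta\downarrow 0$. It never uses what either stationary distribution looks like. This is generic: it works for any reversible perturbation with multiplicatively close transitions. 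Its cost is the $\zeta$-uniform diameter bound, which the paper states without detailed justification.

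\textbf{Your route.} You avoid diameters and multiplicative acceptance bounds. In exchange you rely on the closed forms $\pi_{\beta,\delta,\zeta}\propto f\ell_{\delta,\zeta}$ and $\pi^I_{\beta,\delta,\zeta}(x)=\pi^C_{\beta,\delta,\zeta}(M_{x,\zeta})$ with $\pi^C_{\beta,\delta,\zeta}\propto f\ell^C_{\delta,\zeta}$.

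\textbf{One step to tighten.} \Cref{lem:warmness-discretization} controls $\pi^C_{\beta,\delta,\zeta}-\pi_{\beta,\delta}$ only on $K\cap M_{K,\zeta}$. For a boundary cube that sticks out of $K$, the leftover piece is an additive $O(\zeta^d)$ error, not $o(\zeta^d)$. So that step must genuinely go through $\hat\pi_{\beta,\delta}$ and \Cref{lem:enlarged-body}.

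\textbf{A simpler version.} $\ell^C_{\delta,\zeta}$ is constant on $M_{x,\zeta}$, equal to $\ell_{\delta,\zeta}(x)$. Hence the local-conductance factors cancel exactly, and
\[
\frac{\pi^I_{\beta,\delta,\zeta}(x)}{\pi_{\beta,\delta,\zeta}(x)}=\frac{g(x)}{\sum_{y\in K_\zeta}w_y\,g(y)},\qquad g(y):=\frac{1}{\zeta^d f(y)}\int_{M_{y,\zeta}}f(y')\,\mathrm{d}y',\qquad w_y:=\frac{\ell_{\delta,\zeta}(y)f(y)}{\sum_{z\in K_\zeta}\ell_{\delta,\zeta}(z)f(z)}.
\]
Since $\log f=-\beta\norm{\cdot}^2$ varies by $O(\beta R\sqrt{d}\,\zeta)$ across each cube, $g$ takes values in $[1-\theta,1+\theta]$ with $\theta\in O(\beta R\sqrt{d}\,\zeta)$. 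Because $w$ is a probability vector, the ratio lies in $[(1-\theta)/(1+\theta),(1+\theta)/(1-\theta)]$. This gives an explicit relative error with no detour through $\pi_{\beta,\delta}$, \Cref{lem:enlarged-body}, or the $(2R)^{-d}$ lower bound on $\ell_\delta$.
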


    \begin{proof}
        Let $f : (RB_d)^2 \to \R$ be defined as $f(x,y) = \exp(-\beta(\norm{x}^2-\norm{y}^2))$.
        Then, it satisfies
        \begin{align*}
            \left|f(x,y) - f(x',y')\right| &\leq |\beta(\norm{x}^2 - \norm{y}^2) - \beta(\norm{x'}^2 - \norm{y'}^2)| \\
            &\leq \beta (|\norm{x} - \norm{x'}| \cdot (\norm{x} + \norm{x'}) + |\norm{y} - \norm{y'}| \cdot (\norm{y} + \norm{y'})) \\
            &\leq 4\beta R \norm{(x,y) - (x',y')},
        \end{align*}
        so $f$ is $4\beta R$-Lipschitz, and thus so is $\min\{1,f\}$. Let $x \in K_\zeta$ and $y \in (x + \delta B_d) \cap K_\zeta$.
        Now recall $a_{x,y} = \min\{1, f(y)/f(x)\}$ and 
        \[a^I_{x,y} := \frac{1}{\pi^C_{\beta,\delta,\zeta}(M_{x,\zeta})} \int_{M_{x,\zeta}} \int_{M_{y,\zeta}} \min\left\{1, \frac{f(y')}{f(x')}\right\} \;\mathrm{d}y' \cdot \pi^C_{\beta,\delta,\zeta}(x') \;\mathrm{d}x'.\]
        Since $f$ is piecewise Lipschitz continuous, we obtain that the inner integral converges uniformly over the choice of $x'$ and $y$ to $\min\{1,f(y)/f(x')\}$ as $\zeta \downarrow 0$. Similarly, as both $\pi^C_{\beta,\delta,\zeta}$ and $1/f$ are piecewise Lipschitz continuous, the entire integral converges uniformly to $\min\{1,f(y)/f(x)\}$. As such,
        \[\sup_{\substack{x,y \in K_{\zeta} \\ \norm{x-y} < \delta}} |a^I_{x,y} - a_{x,y}| \downarrow 0, \qquad (\zeta \downarrow 0).\]
        Since $f(x,y) \geq \exp(-\beta(\norm{x}^2 + \norm{y}^2)) \geq \exp(-2\beta R^2) \in \Omega(1)$, we have $a_{x,y} \in \Omega(1)$, and so
        \[\sup_{\substack{x,y \in K_{\zeta} \\ \norm{x-y} < \delta}} \left|\frac{a^I_{x,y}}{a_{x,y}} - 1\right| \downarrow 0, \qquad (\zeta \downarrow 0).\]

        Next, observe that the diameter of the walk satisfies $D \in O(R/\delta)$ for sufficiently small $\zeta$, and so constant when $\zeta \downarrow 0$. Thus, from \Cref{thm:robustness-diameter}, we observe that
        \[\sup_{x \in K_{\zeta}} \left|\frac{\pi^I_{\beta,\delta,\zeta}(x)}{\pi_{\beta,\delta,\zeta}(x)} - 1\right| \downarrow 0, \qquad \text{and} \qquad \sup_{x \in K_{\zeta}} \norm{P^I_{\beta,\delta,\zeta}(x,\cdot) - P_{\beta,\delta,\zeta}(x,\cdot)}_{\mathrm{TV}} \downarrow 0, \qquad (\zeta \downarrow 0).\]
        Therefore, by \Cref{thm:conductance-robustness}, we obtain that the conductances of both walks are related for small enough choices of $\zeta$, i.e., $\varphi(P_{\beta,\delta,\zeta}) \geq 3\varphi(P^I_{\beta,\delta,\zeta})/10 - \eta$, where $\eta \downarrow 0$ as $\zeta \downarrow 0$. Finally, using \Cref{cor:induced-conductance-bound}, we conclude that for sufficiently small $\zeta$, the conductance satisfies $\varphi(P_{\beta,\delta,\zeta}) \in \Omega(\delta\sqrt{\beta/d})$.
    \end{proof}

    \subsection{The bipartite speedy walk} \label{sec:bip-speedy}

    To quantize the speedy walk, we will use the machinery developed in \cref{sec:bipartite}. As observed before, the discretized speedy walk from \cref{def:discretized-speedy} employs a Metropolis filter. In \cref{sec:bipartite} we have seen that using a Glauber filter instead changes the walk's conductance and spectral gap by at most a factor~$2$. We then instantiate the bipartite Glauber walk from \cref{def:bipartite-glauber} to the speedy walk with a Glauber filter, leading to the bipartite speedy walk.

    \begin{definition}[Bipartite discretized speedy walk]
        \label{def:bipartite-speedy}
        Let $d \in \N$ and $K \subseteq \R^d$ be a convex body. Let $\beta \geq 0$, $\delta > 0$ and $\zeta > 0$. We write $f(x) = \exp(-\beta\norm{x}^2)$. Define 
        \[
        E = \{\{x,y\}: x,y \in K_\zeta, \|y-x\| \leq \delta\},
        \]
        We then define a bipartite Markov chain on $K_\zeta \sqcup E$ as follows: \\ 
        On the one hand, at point $x \in K_\zeta$, do:
        \begin{enumerate}[nosep]
            \item Select a point $y \in (x + \delta B_d) \cap K_\zeta$ uniformly at random.
            \item Output $\{x,y\} \in E$.
        \end{enumerate}
        On the other hand, at $\{x,y \} \in E$, do:
        \begin{enumerate}[nosep]
            \item Output $x$ with probability $\frac{f(x)}{f(x)+f(y)}$ and $y$ with probability $\frac{f(y)}{f(x)+f(y)}$.
        \end{enumerate}
    \end{definition}

    We record the following observations. First, from now on we let $P$ denote the transition kernel of the bipartite discretized speedy walk. Second, \Cref{lem:metropolis-to-glauber-filter} shows that the spectral gap of $P$ is at most a factor two smaller than that of the discretized speedy walk with Glauber filter. Hence, from \Cref{thm:discretized-speedy-conductance} and Cheeger's inequality~(\Cref{thm:conductance-spectral-gap}) it follows that for a sufficiently small choice of the spacing parameter~$\zeta$, $\gamma(P) \in \Omega(\delta^2\beta/d)$ whenever $0 < \delta < \min\{1/(4\beta R), 1/\sqrt{2\beta}\}$ provided $K \subseteq R B_d$.
    Third, in order to quantize the bipartite speedy walk, we show in the next section that we can reflect through the star states associated to $P$, which are as follows. 
    For all $x \in K_\zeta$, we have
    \[\ket{*_x} := \frac{1}{\sqrt{|(x + \delta B_d) \cap K_\zeta|}} \sum_{y \in (x + \delta B_d) \cap K_\zeta} \ket{\{x,y\}},\]
    and for all $x,y \in K_\zeta$ with $\norm{x-y} < \delta$, we have
    \[\ket{*_{\{x,y\}}} := \frac{\sqrt{\exp(-\beta\norm{x}^2)}\ket{x} + \sqrt{\exp(-\beta\norm{y}^2)}\ket{y}}{\sqrt{\exp(-\beta\norm{x}^2) + \exp(-\beta\norm{y}^2)}}.\]
    Finally, for all $x,y \in K_\zeta$ with $\norm{x-y} < \delta$, we write
    \[\mathcal{N}_{\beta,\delta,\zeta} : \ket{x}\ket{\perp} \mapsto \ket{x}\ket{*_x}, \qquad \text{and} \qquad \mathcal{N}_{\beta,\delta,\zeta} : \ket{\{x,y\}}\ket{\perp} \mapsto \ket{\{x,y\}}\ket{*_{\{x,y\}}}.\]

    \begin{remark}
        We observe that $\ket{*_{\{x,y\}}}$ does not depend on the specific convex body $K$ that forms the input to our problem. As such, we can implement operations that prepare and reflect around $\ket{*_{\{x,y\}}}$ without making any queries to $K$, and consequently with transduction complexity $0$. Thus, it remains to construct transducers that prepare and reflect around the star states $\ket{*_x}$, for all $x \in K_\zeta$.
    \end{remark}

    \section{Implementation of the speedy walk with transducers} \label{sec:transducer-ball-walk}

    In this section, we provide transducers that construct and reflect through the stationary states of the discretized bipartite speedy walk from the previous section.
    
    We assume to have access to the convex body through a (strong) membership oracle, i.e., a subroutine that given a point $x \in \R^d$ answers whether $x \in K$. To deal with discretization considerations, we concretely assume to have access to such an oracle evaluated on a finite grid with spacing $\zeta > 0$ that we can choose arbitrarily. That is, we assume to have access to $O_{K,\zeta}$, that for all $x \in (RB_d)_\zeta$ acts as
    \[O_{K,\zeta} : \ket{x} \mapsto (-1)^{x \in K}\ket{x}.\]

    For a probability distribution $p$ over $(RB_d)_{\zeta}$, we write its corresponding quantum state as
    \[\ket{p} = \sum_{x \in (RB_d)_{\zeta}} \sqrt{p_x}\ket{x} \in \C^{(RB_d)_{\zeta}}.\]
    Whenever we have a distribution that is defined on a subset of $(RB_d)_{\zeta}$, we generically extend it with $0$'s on its complement.

    \subsection{Reflections through stationary distributions of the speedy walk}

    In this section, we provide transducers that construct and reflect through the star states of the bipartite discretized speedy walk. We start with a state-preparation and -reflection transducer for $\ket{*_x}$.

    \begin{lemma}
        \label{lem:star-state-transducers}
        Let $d \in \N$, $R \geq 1$ and $K \subseteq RB_d$ be a convex body, accessible through a membership oracle on a regular grid with sufficiently small spacing parameter $\zeta > 0$. Let $\beta \geq 0$, $\delta > 0$ and $x \in K_\zeta$. Then, we can construct bidirectional canonical transducers $C_x$ and $U_x$ that act as
        \begin{align*}
            \ket{0} \overset{C_x}{\rightsquigarrow} \ket{*_x}, & \qquad \text{with complexity} \qquad O\left(\frac{1}{\sqrt{\ell_{\delta}(x)}}\right), \\
            \ket{*_x} \overset{U_x}{\rightsquigarrow} \ket{*_x}, & \qquad \text{with complexity} \qquad O\left(\frac{1}{\ell_{\delta}(x)}\right), \\
            \ket{\perp} \overset{U_x}{\rightsquigarrow} -\ket{\perp}, & \qquad \text{with complexity} \qquad O\left(\norm{\ket{\perp}}^2\right),
        \end{align*}
        for all $\ket{\perp} \in \C^{K_\zeta}$ satisfying $\braket{\perp}{*_x} = 0$.
    \end{lemma}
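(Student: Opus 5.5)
The plan is to obtain both transducers from \Cref{cor:ampl-ampl}, applied to the state obtained by marking which points of the $\delta$-ball lie in $K$. Fix $x \in K_\zeta$ and set $N := |(\delta B_d)_\zeta|$. Let $G_x$ be the input-independent unitary that prepares the uniform superposition over the grid ball around $x$,
\[
\ket{u_x} := \frac{1}{\sqrt{N}} \sum_{z \in (x + \delta B_d)_\zeta} \ket{\{x,z\}}.
\]
Using one (bidirectional) call to $O_{K,\zeta}$ together with a standard compute–copy–uncompute gadget, write the membership bit into a flag register. This maps $\ket{u_x}$ to $\ket{\psi} = \ket{\psi_0}\ket{0} + \ket{\psi_1}\ket{1}$, where $\ket{\psi_1} = N^{-1/2}\sum_{z \in (x+\delta B_d)\cap K_\zeta} \ket{\{x,z\}}$. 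Hence $\norm{\ket{\psi_1}}^2 = \ell_{\delta,\zeta}(x)$ and $\ket{\psi_1}/\norm{\ket{\psi_1}} = \ket{*_x}$. By \Cref{lem:discrete-local-conductance-proximity} and the uniform lower bound of \Cref{lem:local-conductance-lipschitz}, for sufficiently small $\zeta$ we have $\ell_{\delta,\zeta}(x) \geq \ell_\delta(x)/2$ uniformly in $x$. It therefore suffices to prove the bounds with $\ell_{\delta,\zeta}(x)$ in place of $\ell_\delta(x)$.

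The next step is a bidirectional canonical transducer $R$ that reflects through $\ket{\psi}$. Let $S$ be the marking circuit $\ket{0}\mapsto\ket{\psi}$, built from $G_x$ and the oracle. Take $R := S(2\ket{0}\bra{0}-I)S^\dagger$, read as a sequential composition (\Cref{thm:sequential-composition}) of unitaries that make $O(1)$ calls to $O_{K,\zeta}$ and its inverse. Such a circuit, viewed as a transducer with one query per call, has transduction complexity $O(\norm{\cdot}^2)$ on every input. This is because each query acts on the full input state with at most unit norm, so the catalyst is the input state itself, as in the pipeline after \Cref{thm:sequential-composition}. This gives $W(R,\ket{\psi}) = O(1)$, $W(R,\ket{\overline{\psi}}) = O(\norm{\ket{\overline{\psi}}}^2)$ and $W(R,\ket{\perp}\ket{1}) = O(\norm{\ket{\perp}}^2)$.

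Now apply \Cref{cor:ampl-ampl} to $R$. For $C_x$ we first map $\ket{0}$ to $\ket{\psi}$ at zero cost via $G_x$ plus marking; only the marking uses the oracle, at cost $O(1)$. We then append the transducer $C$, whose complexity is
\[
O\!\left(\frac{1 + \norm{\ket{\psi_0}}^2\,\norm{\ket{\overline\psi}}^2}{\norm{\ket{\psi_1}}}\right).
\]
Since $\norm{\ket{\overline\psi}}^2 = \norm{\ket{\psi_1}}^2 + \norm{\ket{\psi_0}}^2 \leq 1$, this is $O(1/\sqrt{\ell_{\delta,\zeta}(x)})$. The register $\ket{1}$ and the label $x$ are dropped as in the proof of \Cref{cor:ampl-ampl}. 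For $U_x$ we use the transducer $U$ of \Cref{cor:ampl-ampl}, which gives $O(W(R,\ket\psi)/\norm{\ket{\psi_1}}^2) = O(1/\ell_{\delta,\zeta}(x))$ on $\ket{*_x}$ and $O(W(R,\ket\perp\ket1)) = O(\norm{\ket\perp}^2)$ on orthogonal states. Orthogonal vectors supported outside the ball around $x$ are handled trivially as $-I$ with zero cost. Finally, to get a single query per transducer, convert to an adversary solution and back via \Cref{thm:transducer-to-adv,thm:unidirectional-bidirectional,thm:adv-to-transducer}; each canonical transducer produced there makes exactly one query.

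I expect the main obstacle to be the middle step: rigorously justifying the cost of $R$ on arbitrary inputs, in particular the $O(\norm{\ket\perp}^2)$ bound on states orthogonal to $\ket{\psi}$. I would handle this by exhibiting the unidirectional adversary solution directly, taking $\ket{w} = $ (input state before each query) summed over the $O(1)$ query positions. \Cref{thm:transducer-to-adv} confirms feasibility and gives norms linear in $\norm{\cdot}^2$. A secondary subtlety is that the star state uses $K_\zeta$ while the complexity is stated in $\ell_\delta$, which is resolved by the uniform convergence argument above.
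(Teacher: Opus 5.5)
Your proposal is correct and follows essentially the same route as the paper: prepare the query-free uniform superposition over the grid ball around $x$, and mark membership with the oracle to get $\ket{\psi_x}$. You then turn the exact $O(1)$-query reflection through $\ket{\psi_x}$ into a bidirectional canonical transducer with cost $O(\norm{\cdot}^2)$ on every input, and read off $C_x$ and $U_x$ from \Cref{cor:ampl-ampl}. For the reflection step the paper cites \cite[Theorem~10.3]{belovs2024taming}, whereas you write out the pre-query-state adversary solution, which should be a direct sum over query positions rather than a literal sum. Your use of the uniform bound $\ell_\delta(x) \geq (2R)^{-d}$ from \Cref{lem:local-conductance-lipschitz} to upgrade the additive convergence in \Cref{lem:discrete-local-conductance-proximity} to $\ell_{\delta,\zeta}(x) \geq \ell_\delta(x)/2$ is a small piece of rigor the paper leaves implicit.
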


    \begin{proof}
        Let $E$ as in \Cref{def:bipartite-speedy}. We start with the state $\ket{N_x} \in \C^E$, defined as
        \[\ket{N_x} = \frac{1}{\sqrt{|(\delta B_d)_\zeta|}} \sum_{y \in (x + \delta B_d)_\zeta} \ket{\{x,y\}}.\]
        We note that $\ket{N_x}$ is independent of $K$, so we can prepare it without making any queries, i.e., with transduction complexity $0$. Now, observe that with one call to $O_{K,\zeta}$, we can implement the following operation:
        \[O_{K,\zeta} : \ket{N_x}\ket{0} \mapsto \underbrace{\frac{1}{\sqrt{|(\delta B_d)_\zeta|}} \sum_{y \in ((x + \delta B_d) \setminus K)_\zeta} \ket{\{x,y\}}}_{=: \ket{\psi_0^x}}\ket{0} + \underbrace{\frac{1}{\sqrt{|(\delta B_d)_\zeta|}} \sum_{y \in ((x + \delta B_d) \cap K)_\zeta} \ket{\{x,y\}}}_{=: \ket{\psi_1^x}}\ket{1} =: \ket{\psi_x}.\]
        Thus, with one call to $O_{K,\zeta}$ and one to its inverse, we can reflect through $\ket{\psi_x}$. Since this operation is exact, we can use \cite[Theorem~10.3]{belovs2024taming} and turn it back into a bidirectional canonical transducer $R_x$ that reflects through $\ket{\psi_x}$, i.e., that acts as
        \begin{align*}
            \ket{\psi_x} \overset{R_x}{\rightsquigarrow} \ket{\psi_x}, & \qquad \text{with complexity} \qquad 2, \\
            \ket{\overline{\psi}_x} \overset{R_x}{\rightsquigarrow} -\ket{\overline{\psi}_x}, & \qquad \text{with complexity} \qquad 2,
        \end{align*}
        where we write
        \[\ket{\overline{\psi}_x} := \frac{\norm{\ket{\psi_1^x}}}{\norm{\ket{\psi_0^x}}} \ket{\psi_0^x}\ket{0} - \frac{\norm{\ket{\psi_0^x}}}{\norm{\ket{\psi_1^x}}} \ket{\psi_1^x}\ket{1}.\]
        We check that $\norm{\ket{\psi_1^x}} = \sqrt{\ell_{\delta,\zeta}(x)}$, and
        \[\frac{\ket{\psi_1^x}}{\norm{\ket{\psi_1^x}}} = \frac{1}{\sqrt{|((x + \delta B_d) \cap K)_\zeta|}} \sum_{((x + \delta B_d) \cap K)_\zeta} \ket{x} = \ket{*_x}.\]
        
        Next, we use \Cref{cor:ampl-ampl} to obtain bidirectional canonical transducers $C_x'$ and $U_x$ from $R_x$. $U_x$ has the right transduction action already. We obtain $C_x$ from $C_x'$ by sequential composition, as
        \[C_x : \ket{0} \mapsto \ket{N_x}\ket{0} \overset{O_K}{\rightsquigarrow} \ket{\psi_x} \overset{C_x'}{\rightsquigarrow} \ket{*_x}.\]

        Finally, for the transduction complexities, we observe that
        \[W(C_x,\ket{0}) \in O(1 + 1/\norm{\ket{\psi_1^x}}) = O(1 + 1/\sqrt{\ell_{\delta,\zeta}(x)}) \subseteq O(1/\sqrt{\ell_{\delta}(x)}),\]
        where we used \Cref{lem:discrete-local-conductance-proximity}. Using the same lemma, we obtain similarly for $U_x$ that
        \[W(U_x,\ket{*_x}) \in O(1/\norm{\psi_1^x}^2) = O(1 / \ell_{\delta,\zeta}(x)) \subseteq O(1 / \ell_{\delta}(x)).\]
        Finally, the transduction complexity for $\ket{\perp}$ follows directly from \Cref{cor:ampl-ampl}.
    \end{proof}

    The above construction provides all the ingredients to construct a transducer that reflects through the stationary distribution of the bipartite discretized speedy walk.

    
    \begin{theorem}[Speedy walk distribution reflection]
        \label{thm:speedy-walk-reflection}
        Let $d \in \N$, $R \geq 1$ and $B_d \subseteq K \subseteq RB_d$ be a convex body, accessible through a membership oracle on a regular grid with sufficiently small spacing parameter $\zeta > 0$. Let $\beta \geq 0$ and $0 < \delta \leq \min\{1/(4\beta R), 1/(4096\sqrt{2d\beta}), 1/(4096\sqrt{d})\}$. Let $\pi_{\beta,\delta,\zeta}$ be the stationary distribution of the discretized speedy walk on $K_\zeta$. Then, we can implement a bidirectional canonical transducer $U$ that acts as
        \begin{align*}
            \ket{\pi_{\beta,\delta,\zeta}} \overset{U}{\rightsquigarrow} \ket{\pi_{\beta,\delta,\zeta}}, & \qquad \text{with complexity} \qquad O\left(1\right), \\
            \ket{\overline{\pi}} \overset{U}{\rightsquigarrow} -\ket{\overline{\pi}}, & \qquad \text{with complexity} \qquad O\left(\frac{d\norm{\ket{\overline{\pi}}}^2}{\beta\delta^2} + \sum_{x \in K_{\zeta}} \frac{|\overline{\pi}(x)|^2}{\pi_{\beta,\delta,\zeta}(x)\ell_{\delta,\zeta}(x)} \;\mathrm{d}x\right),
        \end{align*}
        for all $\ket{\overline{\pi}} \in \C^{K_\zeta}$ such that $\braket{\overline{\pi}}{\pi_{\beta,\delta}} = 0$, where we write $\overline{\pi} \in \C^{K_{\zeta}}$ such that
        \[\ket{\overline{\pi}} := \sum_{x \in K_{\zeta}} \frac{\overline{\pi}(x)}{\sqrt{\pi_{\beta,\delta,\zeta}(x)}} \ket{x}.\]
    \end{theorem}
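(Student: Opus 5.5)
The plan is to instantiate \Cref{thm:refl-stationary} on the bipartite discretized speedy walk $P$ of \Cref{def:bipartite-speedy}, feeding in the star-state transducers of \Cref{lem:star-state-transducers}. For the edge vertices $\{x,y\}\in E$ the transducers are free, by the remark after \Cref{def:bipartite-speedy}. The result is a transducer reflecting through $\ket{\pi'}$, the stationary state of $P$ on $K_\zeta\sqcup E$. By \Cref{lem:stationary-anti-bipartite}, $\pi'=(\pi+\pi P')/2$, where $\pi$ is the stationary distribution of the Glauber-filtered discretized speedy walk.

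The first thing to check is that this $\pi$ coincides with $\pi_{\beta,\delta,\zeta}$. This holds because both the Metropolis and the Glauber filter satisfy detailed balance with respect to the same target, namely $f\cdot\ell_{\delta,\zeta}$, as noted in \Cref{sec:basic properties metropolis-glauber}. So $\pi'$ restricted to $K_\zeta$ equals $\pi_{\beta,\delta,\zeta}/2$, with the remaining mass $1/2$ sitting on $E$.

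To pass from $\ket{\pi'}$ to $\ket{\pi_{\beta,\delta,\zeta}}$ on $\C^{K_\zeta}$, I would conjugate by the map $\ket{\pi}\mapsto\ket{\pi'}$. Concretely:
\begin{itemize}
\item Apply the parallel composition $C$ of the $C_x$. This sends $\ket{x}$ to $\ket{x}\ket{*_x}$, and so maps $\ket{\pi}$ to $\mathcal{N}\ket{\pi}$ at amortized cost $\sum_x\pi(x)/\ell_\delta(x)^{1/2}=O(1)$.
\item Note that $\mathcal N\ket{\pi}$ rotated by a Hadamard-type combination with its swap gives $\ket{\pi'}$. This is similar to the $A$ construction in \Cref{thm:transducer-reflection-stationary}.
\end{itemize}
Equivalently, and more simply, I would use \Cref{lem:transducer-reflection-stationary} together with \Cref{lem:refl-stationary-electrical-network} directly on the bipartite electrical network, restricting the input to states supported on the $K_\zeta$ side. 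The $E$-side contribution of $\mathcal{N}\ket{\pi'}$ is free.

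Next come the costs. On $\ket{\pi}$, the cost is $\sum_x\pi(x)\bigl(\ell_\delta(x)^{-1/2}+\ell_\delta(x)^{-1}\bigr)\lesssim\sum_x\pi_{\beta,\delta,\zeta}(x)/\ell_\delta(x)$. Since $\pi_{\beta,\delta,\zeta}\propto f\ell_{\delta,\zeta}$ and \Cref{lem:discrete-local-conductance-proximity} lets us replace $\ell_{\delta,\zeta}$ by $\ell_\delta$ up to a vanishing error, this sum is proportional to $\sum_x f(x)$ divided by $\sum_x f(x)\ell_{\delta,\zeta}(x)$. That ratio is the inverse of the average local conductance under $\pi_\beta$. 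By the Cousins--Vempala bound used in \Cref{lem:avg-local-conductance}, the average is at least $15/16$ for the stated range of $\delta$, after discretizing via \Cref{lem:smoothness}. So the cost on $\ket{\pi}$ is $O(1)$.

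On $\ket{\overline{\pi}}$, the orthogonality and net-flow condition transfers because $\overline{\pi}$ is supported on $K_\zeta$ and sums to zero. The first term of \Cref{thm:refl-stationary} is $M\norm{\ket{\overline{\pi}}}^2/\gamma(P)$ with $M=O(1)$ by \Cref{lem:star-state-transducers}. Here $\gamma(P)\in\Omega(\delta^2\beta/d)$, via \Cref{thm:discretized-speedy-conductance}, Cheeger's inequality, the Glauber comparison, and \Cref{lem:metropolis-to-glauber-filter}. This yields $d\norm{\ket{\overline{\pi}}}^2/(\beta\delta^2)$. The second term is the sum in the statement, using $\pi'=\pi/2$ on $K_\zeta$ and absorbing the $\ell^{-1/2}$ term into the $\ell^{-1}$ term.

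I expect the main obstacle to be the bookkeeping between the three state spaces. First, the walk is bipartite, so $P'$ has a $-1$ eigenvalue. The gap used in \Cref{thm:effective-resistance-spectral-gap} must therefore be interpreted through the net-flow/effective-resistance bound, which only needs $\gamma(P_G)$ on the demand side. I would argue the resistance bound directly, by comparing with the electrical network of $P_G$ via $L=D(I-P)$. Second, the discretization error in $\ell_{\delta,\zeta}$ versus $\ell_\delta$ must be handled. Third, the hypothesis on $\delta$ is needed both to make $\pi_\beta$-averaged local conductance constant and to ensure the gap bound holds.
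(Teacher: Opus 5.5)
\textbf{Gap: passing from the bipartite stationary state to $\ket{\pi_{\beta,\delta,\zeta}}$.} Write $\ket{\pi}:=\ket{\pi_{\beta,\delta,\zeta}}$ and let $V$ be the reflection through the bipartite stationary state $\ket{\pi'}$ given by \Cref{thm:refl-stationary} or \Cref{lem:refl-stationary-electrical-network}. Your ``more simple'' route applies $V$ to inputs supported on the $K_\zeta$ side, and this does not reflect through $\ket{\pi}$.

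Since $\pi'=\pi/2$ on $K_\zeta$, we have $\ket{\pi}=(\ket{\pi'}+Z\ket{\pi'})/\sqrt{2}$ with $Z:=I_{K_\zeta}\oplus(-I_E)$. Here $Z\ket{\pi'}$ is exactly the state of the net flow $(\pi-\pi P')/2$, the $(-1)$-eigenvector of $P'$ from \Cref{lem:metropolis-to-glauber-filter}. It is orthogonal to $\ket{\pi'}$, so $V$ flips it, and $V\ket{\pi}=\sqrt{2}\ket{\pi'}-\ket{\pi}$ is supported entirely on $E$. Your cost paragraph treats $\ket{\pi}$ as the $+1$ state of $V$, which it is not.

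The conjugation route in your bullets is not worked out either. You would need an implementable unitary $W$ with $W\ket{\pi}=\ket{\pi'}$ whose action and cost on every $K_\zeta$-supported $\ket{\overline{\pi}}\perp\ket{\pi}$ you can control. The Hadamard-type combination $\mathcal{N}\ket{\psi}\mapsto(\mathcal{N}\ket{\psi}+\mathrm{SWAP}\,\mathcal{N}\ket{\psi})/\sqrt{2}$ does send $\mathcal{N}\ket{\pi}$ to $\mathcal{N}\ket{\pi'}$. For general $\overline{\pi}$, however, the swapped part of $\mathcal{N}\ket{\overline{\pi}}$ has coefficients proportional to $\overline{\pi}(x)/\pi(x)$, which vary over the endpoints of an edge. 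So it leaves $\mathcal{H}_1$, the public space on which \Cref{lem:transducer-reflection-stationary} is guaranteed to act.

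The missing idea, which is what the paper does, is to avoid any such $W$. Set $V':=ZVZ$; this costs nothing extra and reflects through $Z\ket{\pi'}$. Since $\ket{\pi'}\perp Z\ket{\pi'}$, the product $U:=-VV'$ is the reflection through $\mathrm{span}\{\ket{\pi'},Z\ket{\pi'}\}$. This span contains $\ket{\pi}$ and is orthogonal to every $K_\zeta$-supported $\ket{\overline{\pi}}\perp\ket{\pi}$, because $\braket{\pi'}{\overline{\pi}}=\braket{\pi}{\overline{\pi}}/\sqrt{2}=0$. Hence $U$ acts as required, and on $\ket{\overline{\pi}}$ it costs $2W(V,\ket{\overline{\pi}})$, which gives your second bound.

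\textbf{Further points.} First, bipartiteness matters exactly here, not in the gap bound. \Cref{thm:effective-resistance-spectral-gap} only uses $\gamma(P')=1-\lambda_2(P')$; the eigenvalue $-1$ contributes only $1/2$ to the pseudoinverse. \Cref{lem:metropolis-to-glauber-filter} already gives $\gamma(P')\geq\gamma(P_G)/2$, so no separate resistance comparison with $P_G$ is needed.

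Second, with $U=-VV'$ the cost on $\ket{\pi}$ involves $V$ acting on $Z\ket{\pi'}$. The crude bound $M\norm{\cdot}^2/\gamma(P')$ of \Cref{thm:refl-stationary} would give $1/\gamma(P')$ there, not $O(1)$. Use the finer \Cref{lem:refl-stationary-electrical-network} instead. The electric flow for the demand $(\pi-\pi P')/2$ is the potential flow $f_{x,e}=\pi'(x)P'(x,e)$, whose embedding $\ket{f}$ equals $\mathcal{N}Z\ket{\pi'}\in\mathcal{H}_1$. So the catalyst $\Pi_{\mathcal{H}_1^\perp}\ket{f}$ vanishes, and the cost is $O(\E_{x\sim\pi}[1/\ell_{\delta,\zeta}(x)])=O(1)$.

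Finally, a minor point: the hypothesis has $4096=2^{12}$ rather than $2^{18}$, so \Cref{lem:avg-local-conductance} and its $15/16$ do not apply verbatim. In this range the underlying Cousins--Vempala estimate still gives average local conductance at least $1/2$, which is all you need.
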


    \begin{proof}
        Let $P$ be the probability transition matrix of the discretized speedy walk, defined in~\Cref{def:discretized-speedy}, and $P'$ be the probability transition matrix of the bipartite speedy walk, defined in~\Cref{def:bipartite-speedy}. \Cref{lem:metropolis-to-glauber-filter} shows that the stationary distribution of the bipartite version is $\pi' := (\pi_{\beta,\delta,\zeta} + \pi_{\beta,\delta,\zeta}P')/2$. 
        \Cref{lem:star-state-transducers} provides 
        transducers $C_x$ and $U_x$ that construct and reflect through the star state $\ket{*_x}$ at $x \in K_{\zeta}$. Thus, we can use \Cref{thm:refl-stationary} to construct a bidirectional canonical transducer $V$ that acts as
        \begin{align*}
            \ket{\pi'} \overset{V}{\rightsquigarrow} \ket{\pi'}, & \qquad \text{with complexity} \qquad O\left(\underset{x \sim \pi_{\beta,\delta,\zeta}}{\E} \left[\frac{1}{\ell_{\delta,\zeta}(x)}\right]\right), \\
            \ket{\overline{\pi}} \overset{V}{\rightsquigarrow} -\ket{\overline{\pi}}, & \qquad \text{with complexity} \qquad O\left(\frac{\norm{\ket{\overline{\pi}}}^2}{\gamma(P')} + \sum_{x \in K_{\zeta}} \frac{|\overline{\pi}(x)|^2}{\pi_{\beta,\delta,\zeta}(x)\ell_{\delta,\zeta}(x)}\right).
        \end{align*}
        Here, we used that $M = 1$, and $1/\sqrt{\ell_{\delta,\zeta}(x)} \leq 1/\ell_{\delta,\zeta}(x)$, and the fact that $\ket{\overline{\pi}}$ being orthogonal to $\ket{\pi}$ guarantees it is orthogonal to $\ket{\pi'}$ too.

        Next, we let $V' = ZVZ$, where $Z$ acts on as $I$ on $K$ and $-I$ on $E$. This conjugation now reflects through $\pi'' = (\pi_{\beta,\delta,\zeta} - \pi_{\beta,\delta,\zeta}P')/2$, with the same transduction complexities.
        
        Thus, the combined action of successive applications of $V$ and $V'$ is a reflection through the 2D-subspace spanned by $\ket{\pi'}$ and $\ket{\pi''}$. We observe that $\ket{\pi}$ is in this subspace, and any net-flow state $\ket{\overline{\pi}}$ on $K_{\zeta}$ is orthogonal to it. Thus, by sequential composition (\Cref{thm:sequential-composition}), we let $U = -VV'$, and we observe that it acts as
        \begin{align*}
            \ket{\pi_{\beta,\delta,\zeta}} \overset{U}{\rightsquigarrow} \ket{\pi_{\beta,\delta,\zeta}}, & \qquad \text{with complexity} \qquad O\left(\underset{x \sim \pi_{\beta,\delta,\zeta}}{\E} \left[\frac{1}{\ell_{\delta,\zeta}(x)}\right]\right), \\
            \ket{\overline{\pi}} \overset{U}{\rightsquigarrow} -\ket{\overline{\pi}}, & \qquad \text{with complexity} \qquad O\left(\frac{\norm{\ket{\overline{\pi}}}^2}{\gamma(P')} + \sum_{x \in K_{\zeta}} \frac{|\overline{\pi}(x)|^2}{\pi_{\beta,\delta,\zeta}(x)\ell_{\delta,\zeta}(x)}\right).
        \end{align*}

        It remains to analyze the transduction complexities. To that end, recall from \Cref{thm:naive-discretization} and \Cref{lem:local-conductance-lipschitz} that both $\pi_{\beta,\delta}$ and $\ell_{\delta}$ are globally bounded and Lipschitz continuous on $K$. By \Cref{lem:lipschitz-properties}, so is their ratio. By the triangle inequality, we observe that
        \[\left|\underset{\pi_{\beta,\delta}}{\E}\left[\frac{1}{\ell_{\delta}(x)}\right] - \underset{\pi_{\beta,\delta,\zeta}}{\E}\left[\frac{1}{\ell_{\delta,\zeta}(x)}\right]\right| \leq \left|\int_K \frac{\pi_{\beta,\delta}(x)}{\ell_{\delta}(x)} \;\mathrm{d}x - \zeta^d\sum_{x \in K_{\zeta}} \frac{\pi_{\beta,\delta}(x)}{\ell_{\delta}(x)}\right| + \sum_{x \in K_{\zeta}} \left|\frac{\zeta^d\pi_{\beta,\delta}(x)}{\ell_{\delta}(x)} - \frac{\pi_{\beta,\delta,\zeta}(x)}{\ell_{\delta,\zeta}(x)}\right|.\]
        By \Cref{lem:smoothness}, the first term vanishes as $\zeta \downarrow 0$. Moreover, recall from \Cref{thm:naive-discretization} and \Cref{lem:discrete-local-conductance-proximity} that 
        \[\sup_{x \in K_{\zeta}} |\zeta^d\pi_{\beta,\delta}(x) - \pi_{\beta,\delta,\zeta}(x)| \downarrow 0, \qquad \text{and} \qquad \sup_{x \in K_{\zeta}} |\ell_{\delta,\zeta}(x) - \ell_{\delta}(x)| \downarrow 0, \qquad (\zeta \downarrow 0),\]
        respectively. Thus, for sufficiently small choices of $\zeta$, the expectation values are equal up to constants. We observe from \cite[Lemma~6.11]{CV18} that the continuous expectation value is constant, completing the proof of the first transduction complexity.
        
        Finally, 
        \Cref{lem:metropolis-to-glauber-filter} shows that $\gamma(P') \geq \gamma(P)/2$, which by the Cheeger inequalities (\Cref{thm:conductance-spectral-gap}) is at least $\Omega(\varphi(P)^2)$. The latter is 
        $\Omega(\delta^2\beta/d)$ by~\Cref{thm:discretized-speedy-conductance}. Plugging in these expressions yields the second transduction complexity.
    \end{proof}


    We finally apply this construction to specific orthogonal states $\ket{\overline{\pi}}$ that we will use later, namely those that come from probability distributions that are different from $\pi_{\beta,\delta}$.

    \begin{corollary}[Speedy walk distribution reflection for distributions]
        \label{cor:speedy-walk-refl-distrs}
        Let $d \in \N$, $R \geq 1$, and $B_d \subseteq K \subseteq RB_d$ be a convex body, accessible through a membership oracle on a regular grid of sufficiently small spacing parameter $\zeta > 0$. Let $\beta \geq 0$, $0 < \delta \leq \min\{1/(4\beta R), 1/(4096\sqrt{2d\beta}), 1/(4096\sqrt{d})\}$, and $\psi$ a piecewise Lipschitz continuous probability distribution on $K$, with corresponding renormalized discretization $\psi_\zeta$ on $K_{\zeta}$. We write
        \[\ket{\overline{\pi}} := \ket{\psi_{\zeta}} - \braket{\psi_{\zeta}}{\pi_{\beta,\delta,\zeta}}\ket{\pi_{\beta,\delta,\zeta}}.\]
        Then, we can implement a bidirectional canonical transducer $U$ that implements
        \begin{align*}
            \ket{\pi_{\beta,\delta,\zeta}} \overset{U}{\rightsquigarrow} \ket{\pi_{\beta,\delta,\zeta}}, \quad \text{with complexity} \quad & O\left(1\right), \\
            \ket{\overline{\pi}} \overset{U}{\rightsquigarrow} -\ket{\overline{\pi}}, \quad \text{with complexity} \quad & O\left(\frac{d}{\beta\delta^2} + \underset{x \sim \psi}{\E} \left[\frac{1}{\ell_{\delta}(x)}\right]\right).
        \end{align*}
    \end{corollary}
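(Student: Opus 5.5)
The plan is to apply \Cref{thm:speedy-walk-reflection} directly, taking $U$ to be the transducer constructed there. The first transduction complexity, $O(1)$ on $\ket{\pi_{\beta,\delta,\zeta}}$, is then immediate. By construction $\ket{\overline{\pi}}$ is orthogonal to $\ket{\pi_{\beta,\delta,\zeta}}$, so the theorem applies to it. What remains is to bound the two terms of the second complexity for this particular $\ket{\overline{\pi}}$.

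For the first term, since $\ket{\overline{\pi}}$ is the projection of the unit vector $\ket{\psi_\zeta}$ onto the orthogonal complement of $\ket{\pi_{\beta,\delta,\zeta}}$, we have $\norm{\ket{\overline{\pi}}}^2 = 1 - |\braket{\psi_\zeta}{\pi_{\beta,\delta,\zeta}}|^2 \leq 1$. This gives the $O(d/(\beta\delta^2))$ contribution.

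For the second term, write $\ket{\overline{\pi}} = \sum_x \overline{\pi}(x)/\sqrt{\pi_{\beta,\delta,\zeta}(x)}\ket{x}$. The coefficient of $\ket{x}$ is $\sqrt{\psi_\zeta(x)} - c\sqrt{\pi_{\beta,\delta,\zeta}(x)}$ with $c = \braket{\psi_\zeta}{\pi_{\beta,\delta,\zeta}} \in [0,1]$. Hence $|\overline{\pi}(x)|^2/\pi_{\beta,\delta,\zeta}(x)$ equals the squared coefficient, which is at most $2\psi_\zeta(x) + 2c^2\pi_{\beta,\delta,\zeta}(x)$. The sum is therefore bounded by
\[2\sum_{x \in K_\zeta} \frac{\psi_\zeta(x)}{\ell_{\delta,\zeta}(x)} + 2\sum_{x\in K_\zeta}\frac{\pi_{\beta,\delta,\zeta}(x)}{\ell_{\delta,\zeta}(x)}.\]
The second sum is the expectation $\E_{\pi_{\beta,\delta,\zeta}}[1/\ell_{\delta,\zeta}]$. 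The proof of \Cref{thm:speedy-walk-reflection} already showed it is $O(1)$ for sufficiently small $\zeta$ via \cite[Lemma~6.11]{CV18}. Since $\ell_\delta \leq 1$, this $O(1)$ is absorbed into $\E_\psi[1/\ell_\delta] \geq 1$.

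The main step, which I expect to be the only real work, is to show that $\sum_x \psi_\zeta(x)/\ell_{\delta,\zeta}(x)$ converges to $\E_{x\sim\psi}[1/\ell_\delta(x)]$ as $\zeta\downarrow 0$, so that for sufficiently small $\zeta$ it is at most twice that value. I would mimic the discretization argument in the proof of \Cref{thm:speedy-walk-reflection}, applying the triangle inequality to
\[\Bigl|\int_K \frac{\psi(x)}{\ell_\delta(x)}\,\mathrm{d}x - \sum_{x\in K_\zeta}\frac{\psi_\zeta(x)}{\ell_{\delta,\zeta}(x)}\Bigr|.\]
The split has three parts. First, $\psi/\ell_\delta$ is piecewise Lipschitz and globally bounded, by \Cref{lem:local-conductance-lipschitz} (since $\ell_\delta \geq (2R)^{-d}$) and \Cref{lem:lipschitz-properties}. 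So \Cref{lem:smoothness} makes $\zeta^d\sum_{x\in K_\zeta}\psi(x)/\ell_\delta(x)$ converge to the integral. Second, the renormalization constant of $\psi_\zeta$ satisfies $\zeta^d N_{\psi,\zeta}\to 1$, again by \Cref{lem:smoothness}. Third, $\sup_{x}|\ell_{\delta,\zeta}(x)-\ell_\delta(x)|\to 0$ by \Cref{lem:discrete-local-conductance-proximity}. Together with the uniform lower bound on $\ell_\delta$, this makes $1/\ell_{\delta,\zeta}$ uniformly close to $1/\ell_\delta$ on $K_\zeta$. Combining the three limits yields the claimed complexity $O(d/(\beta\delta^2) + \E_\psi[1/\ell_\delta])$.
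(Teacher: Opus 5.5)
Your proposal is correct and takes the same route as the paper. You plug this particular $\ket{\overline{\pi}}$ into \Cref{thm:speedy-walk-reflection}, bound $|\overline{\pi}(x)|^2/\pi_{\beta,\delta,\zeta}(x)$ pointwise by a combination of $\psi_\zeta(x)$ and $\pi_{\beta,\delta,\zeta}(x)$ (the paper states $\psi_\zeta(x)+\pi_{\beta,\delta,\zeta}(x)$, you use a factor-$2$ version), and replace the discrete sum over $\psi_\zeta$ by $\E_{x\sim\psi}[1/\ell_\delta(x)]$ via that theorem's discretization argument; the only difference is that you spell out this limit step (renormalization $\zeta^d N_{\psi,\zeta}\to 1$, uniform convergence of $1/\ell_{\delta,\zeta}$ to $1/\ell_\delta$ using $\ell_\delta\geq(2R)^{-d}$, and \Cref{lem:smoothness} applied to $\psi/\ell_\delta$), which the paper only cites.
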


    \begin{proof}
        For all $x \in K_{\zeta}$, we have
        \[\overline{\pi}(x) := \sqrt{\psi_{\zeta}(x)\pi_{\beta,\delta,\zeta}(x)} - \braket{\psi_{\zeta}}{\pi_{\beta,\delta,\zeta}}\pi_{\beta,\delta,\zeta}(x), \qquad \text{and therefore} \qquad \frac{|\overline{\pi}(x)|^2}{\pi_{\beta,\delta,\zeta}(x)} \leq \psi_{\zeta}(x) + \pi_{\beta,\delta,\zeta}(x).\]
        Thus, when we plug these quantities into \Cref{thm:speedy-walk-reflection}, we readily obtain the constant transduction complexity on $\ket{\pi_{\beta,\delta,\zeta}}$. For the transduction complexity on $\ket{\overline{\pi}}$, we use the same argument as in \Cref{thm:speedy-walk-reflection} to show that we can replace the discrete summation over $\psi_{\zeta}$ into an expectation over its continuous counterpart.
    \end{proof}

    \subsection{Annealing of speedy walk distributions} \label{sec:annealing-speedy}

    In this section, we consider clipped versions of the convex body $K$, i.e., we consider $K \cap rB_d$ for some $r \geq 1$. Note that $K \cap rB_d$ is again a convex body, so we can use the transducers constructed in the previous section on this convex body directly. To distinguish between their stationary distributions, we write $\pi_{\beta,\delta,r}$ for the speedy walk distribution $\pi_{\beta,\delta}$ on $K \cap rB_d$. Similarly, we denote their discretizations by $\pi_{\beta,\delta,r,\zeta}$ on $(K \cap rB_d)_\zeta$.
    
    We develop a transducer that maps $\ket{\pi_{\beta,\delta,r,\zeta}}$ to $\ket{\pi_{\beta',\delta',r',\zeta}}$, for different choices of inverse temperatures~$\beta$ and $\beta'$, step sizes $\delta$ and $\delta'$, and clipping radii $r$ and $r'$, and for a sufficiently small grid spacing parameter $\zeta > 0$. Informally, if each of the parameters are ``close to each other'', then the distributions~$\pi_{\beta,\delta,r}$ and $\pi_{\beta',\delta',r'}$ are sufficiently similar such that the overlap of these states is at least constant. This allows us to use the reflection oracles constructed in the previous section to map one to the other.

    To characterize what it means for values $\beta$ and $\beta'$ to be considered ``close'', we investigate how much the partition function changes between $\beta$ and $\beta'$. To that end, we will use the log-partition function. 

    \begin{definition}[Log-partition function] \label{def:log-part}
        Let $d \in \N$, and $K \subseteq \R^d$ be a convex body. Then, the \textit{log-partition function} is $L : [0,\infty] \to \R$, defined as $L(\beta) = \ln(Z(\beta))$.
    \end{definition}

    The log-partition function is directly connected to the random variable $\norm{x}^2$ where $x \sim \pi_\beta$: we can express its expectation value and variance in terms of derivatives of the log-partition function. 

    \begin{lemma}
        \label{lem:log-part-derivatives}
        Let $d \in \N$, $R \geq 1$ and $B_d \subseteq K \subseteq RB_d$ be a convex body. Let $\beta \geq 0$ and $H(x) = \norm{x}^2$. Then,
        \[L'(\beta) = -\underset{x \sim \pi_{\beta}}{\E} \left[H(x)\right], \qquad \text{and} \qquad L''(\beta) = \underset{x \sim \pi_{\beta}}{\Var} \left[H(x)\right].\]
    \end{lemma}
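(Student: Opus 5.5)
The plan is to differentiate under the integral sign, since $K$ is a compact set. By definition $Z(\beta) = \int_K e^{-\beta H(x)} \,\mathrm{d}x$ with $H(x) = \norm{x}^2$. We have $B_d \subseteq K \subseteq RB_d$, so $0 \le H \le R^2$ on $K$ and $\Vol(K) \in (0,\infty)$. Hence $Z(\beta) \in [e^{-\beta R^2}\Vol(K), \Vol(K)]$ is strictly positive, and $L = \ln Z$ is well defined on $[0,\infty)$.

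The first step is to justify exchanging derivative and integral. For $\beta$ in a neighbourhood of any $\beta_0 \ge 0$, the partial derivatives in $\beta$ of the integrand are $-H(x)e^{-\beta H(x)}$ and $H(x)^2 e^{-\beta H(x)}$. These are bounded in absolute value by $R^2$ and $R^4$ respectively, uniformly in $x \in K$ and $\beta \ge 0$. Since $K$ has finite volume, the dominated convergence theorem (equivalently, the Leibniz rule) gives
\[Z'(\beta) = -\int_K H(x) e^{-\beta H(x)} \,\mathrm{d}x, \qquad Z''(\beta) = \int_K H(x)^2 e^{-\beta H(x)}\,\mathrm{d}x.\]
At $\beta = 0$ these are understood as one-sided derivatives. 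This step is the only analytic subtlety, and the boundedness of $K$ makes it routine.

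The second step applies the chain rule. First, $L'(\beta) = Z'(\beta)/Z(\beta)$. Since $\pi_\beta$ has density $e^{-\beta H}/Z(\beta)$ on $K$, this equals $-\E_{x\sim\pi_\beta}[H(x)]$. Next,
\[L''(\beta) = \frac{Z''(\beta)}{Z(\beta)} - \left(\frac{Z'(\beta)}{Z(\beta)}\right)^2 = \E_{\pi_\beta}[H^2] - \E_{\pi_\beta}[H]^2 = \Var_{\pi_\beta}[H],\]
which completes the proof.
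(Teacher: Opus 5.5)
Your proposal is correct and matches the paper's proof: both compute $Z'$ and $Z''$ by differentiating under the integral sign, then apply the chain rule to $L=\ln Z$. Your dominated-convergence justification (using $0\le H\le R^2$ on the bounded set $K$), the positivity of $Z$, and the remark about one-sided derivatives at $\beta=0$ are details the paper leaves implicit.
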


    \begin{proof}
        We observe that
        \[-L'(\beta) = -\frac{Z'(\beta)}{Z(\beta)} = \int_K \frac{-\frac{\partial}{\partial \beta} \exp(-\beta H(x))}{Z(\beta)} \;\mathrm{d}x = \int_K H(x) \cdot \frac{\exp(-\beta H(x))}{Z(\beta)} \;\mathrm{d}x = \underset{x \sim \pi_{\beta}}{\E} \left[H(x)\right].\]
        Similarly, we have
        \begin{align*}
            L''(\beta) &= \frac{Z''(\beta)}{Z(\beta)} - \left(\frac{Z'(\beta)}{Z(\beta)}\right)^2 = \int_K H(x)^2 \cdot \frac{\exp(-\beta H(x))}{Z(\beta)} \;\mathrm{d}x - \underset{x \sim \pi_{\beta}}{\E} \left[H(x)\right]^2 \\
            &= \underset{x \sim \pi_{\beta}}{\E} \left[H(x)^2\right] - \underset{x \sim \pi_{\beta}}{\E} \left[H(x)\right]^2 = \underset{x \sim \pi_{\beta}}{\Var} \left[H(x)\right].\qedhere
        \end{align*}
    \end{proof}

    The mean and variance of the norm-squared of a point sampled from the Gibbs distribution will show up at various places in the analysis of the volume estimation algorithm. We therefore prove rigorous bounds on these quantities and record them here for future use.


    \begin{theorem} \label{thm:variance-bounds}
        Let $d \in \N$, $R \geq 1$ and $B_d \subseteq K \subseteq RB_d$ convex. Let $H(x) = \norm{x}^2$. Then, for all $\beta \geq 0$, we have
        \begin{align*}
            \Omega\left(\min\left\{\frac{d}{\beta}, 1\right\}\right) &\ni \underset{x \sim \pi_{\beta}}{\E} \left[H(x)\right] \leq \min\left\{\frac{d}{2\beta}, R^2\right\}. \\
            \Omega\left(\min\left\{\frac{d}{\beta^2},\frac{1}{d^2}\right\}\right) &\ni \underset{x \sim \pi_{\beta}}{\Var} \left[H(x)\right] \leq \min\left\{\frac{d}{\beta^2}, \frac{2R^2}{\beta}, \frac{R^4}{4}\right\}.
        \end{align*}
    \end{theorem}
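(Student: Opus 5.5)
The proof naturally splits into the two upper bounds on the mean, the three upper bounds on the variance, and the two lower bounds. Each is a property of a Gaussian restricted to a convex body containing the unit ball.

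\emph{Upper bounds.} The bound $\E[H]\le R^2$ is immediate from $K\subseteq RB_d$. For $\E[H]\le d/(2\beta)$ I would use \Cref{lem:log-part-derivatives}: $-L'(\beta)=\E_{\pi_\beta}[H]$. By convexity of $K$ and $0\in K$, the map $t\mapsto \Vol(K\cap tS)$ behaves well under radial scaling. Concretely, write $Z(\beta)=\int_K e^{-\beta\|x\|^2}\,\mathrm{d}x$ and substitute $x=y/\sqrt{\beta}$ to get
\[
Z(\beta)=\beta^{-d/2}\int_{\sqrt{\beta}K}e^{-\|y\|^2}\,\mathrm{d}y .
\]
Since $\sqrt{\beta}K$ increases with $\beta$ (as $0\in K$), the function $\beta\mapsto\beta^{d/2}Z(\beta)$ is nondecreasing. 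Hence $L'(\beta)\ge -d/(2\beta)$, which gives the mean bound.

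For the variance I would similarly use that $\beta^{d/2}Z(\beta)=G(\sqrt\beta)$, where $G(s)=\int_{sK}e^{-\|y\|^2}\,\mathrm{d}y$. The plan is to show that $\log G(e^{u})$ is concave in $u$, or to use the Brascamp--Lieb/Poincaré inequality for the log-concave density $\pi_\beta$, which is $2\beta$-strongly log-concave. The Poincaré route gives $\Var[f]\le \E\|\nabla f\|^2/(2\beta)$ with $f=H$, so $\Var[H]\le \E[4\|x\|^2]/(2\beta)=2\E[H]/\beta\le \min\{d/\beta^2,2R^2/\beta\}$, using the mean bounds. Finally, $\Var[H]\le R^4/4$ holds because $H\in[0,R^2]$ (Popoviciu's inequality).

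\emph{Lower bounds.} For the mean, $K\supseteq B_d$ gives comparison with the Gaussian restricted to $B_d$. I would bound the mass of $\pi_\beta$ on the small ball $cB_d$ where $c\asymp\min\{\sqrt{d/\beta},1\}/2$. Using radial coordinates and the fact that the Gaussian density is decreasing radially, for each direction $\theta$ the radial density $r^{d-1}e^{-\beta r^2}$ on $[0,\rho_K(\theta)]$, with $\rho_K\ge1$, puts at most a constant fraction on $[0,c]$. This holds since $r^{d-1}$ strongly favours larger radii up to $\min\{1,\sqrt{(d-1)/(2\beta)}\}$. Hence $\P[H\ge c^2]=\Omega(1)$, which gives $\E[H]=\Omega(c^2)$.

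For the variance lower bound I would apply the same one-dimensional radial reduction. Conditional on the direction, $r$ has the log-concave density $r^{d-1}e^{-\beta r^2}$ on $[0,\rho]$ with $\rho\ge1$, and $\Var[H]\ge \E_\theta\Var[r^2\mid\theta]$. A direct one-dimensional computation around the mode should give $\Var[r^2\mid\theta]=\Omega(\min\{d/\beta^2,1/d^2\})$. When $\beta\gtrsim d$ the mode $\sqrt{(d-1)/(2\beta)}<1$ is interior and the Laplace approximation gives width $\asymp 1/\sqrt\beta$, so $\Var[r^2]\asymp (d/\beta)(1/\beta)$. Otherwise the density is increasing near $\rho$, and the truncated mass sits in a window of width $\gtrsim 1/d$ below $\rho\ge1$, giving $\Var[r^2]\gtrsim 1/d^2$.

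I expect the main obstacle to be this last case analysis. It has to control the one-dimensional log-concave density uniformly over the truncation point $\rho\in[1,R]$ and handle the regime $\beta\approx d$ cleanly; I would treat it with the standard fact that a log-concave density on an interval has variance comparable to the square of the length of the interval carrying $\Omega(1)$ of the mass.
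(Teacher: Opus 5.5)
Your proposal is correct. Apart from one step it follows the paper's proof. The variance upper bounds come from Brascamp--Lieb (your Poincar\'e inequality for the $2\beta$-strongly log-concave $\pi_\beta$) together with Popoviciu. The lower bounds come from polar coordinates, the conditional density $s^{d/2-1}e^{-\beta s}$ of $s=\norm{x}^2$ on $[0,\rho(v)^2]$, and the law of total variance. The paper makes your Laplace and window heuristics rigorous by bounding the sup-norm of this truncated density and using $\Var[S]\gtrsim 1/\norm{p}_\infty^2$, which holds for any bounded density without log-concavity.

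The step that genuinely differs is $\E_{\pi_\beta}[H]\le d/(2\beta)$.
\begin{itemize}
\item \emph{Your route.} You use only that $\beta\mapsto\beta^{d/2}Z(\beta)=\int_{\sqrt{\beta}K}e^{-\norm{y}^2}\,\mathrm{d}y$ is nondecreasing. This is a first-order fact that needs nothing beyond star-shapedness of $K$ about the origin.
\item \emph{The paper's route.} It proves that the same function is log-concave via Pr\'ekopa--Leindler. Read with the signs in its display corrected, the second-derivative condition says $\Var_{\pi_\beta}[H]\le d/(2\beta^2)$. The mean bound then follows by integrating $\frac{\mathrm{d}}{\mathrm{d}s}\E_{\pi_s}[H]=-\Var_{\pi_s}[H]$ from $\beta$ to $\infty$, using $\E_{\pi_s}[H]\to 0$.
\end{itemize}
Your route is shorter and avoids both Pr\'ekopa--Leindler and the limit at $\beta=\infty$. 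The paper's route is heavier, but as a by-product it gives a variance bound a factor $2$ better than Brascamp--Lieb.

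One small correction in your variance lower bound. The dichotomy ``$\beta\gtrsim d$, interior mode'' versus ``otherwise the density is increasing near $\rho$'' is not exhaustive uniformly in $\rho\in[1,R]$. For $\beta\lesssim d$ and $\rho^2>(d-1)/(2\beta)$ the mode is still interior. Split on $\beta\rho^2$ versus $d$ instead, as the paper does with $\beta L\le 2\alpha$ versus $\beta L>2\alpha$. The interior-mode case gives $\Var[r^2\mid\theta]\gtrsim d/\beta^2$, and the boundary case gives $\Var[r^2\mid\theta]\gtrsim\rho^4/d^2\ge 1/d^2$. Either is at least $\min\{d/\beta^2,1/d^2\}$.
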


    \begin{proof}
        See \Cref{sec:app}.
    \end{proof}

    Next, we use the characterization of these statistical quantities from \Cref{thm:variance-bounds} to bound the overlap between the Gibbs distributions at different inverse temperatures.

    \begin{lemma}
        \label{lem:gibbs-overlap}
        Let $d \in \N$, $R \geq 1$ and $B_d \subseteq K \subseteq RB_d$ be a convex body. For all $\beta,\beta' \geq 0$, let $\overline{\beta} = (\beta + \beta')/2$. Then,
        \[|\braket{\pi_{\beta}}{\pi_{\beta'}}| = \frac{Z(\overline{\beta})}{\sqrt{Z(\beta)Z(\beta')}}.\]
        Moreover, if $\beta' \leq \beta$ and $\beta - \beta' \leq \max\{1/(2\sqrt{d}), 1/(2R\sqrt{\beta})\}\beta$, then the latter is at least $15/16$.
    \end{lemma}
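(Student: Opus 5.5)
The identity is a direct computation from the definitions, and the lower bound comes from log-convexity of $Z$ together with the variance bounds of \Cref{thm:variance-bounds}. For the identity, write $\ket{\pi_\beta} = \int_K \sqrt{e^{-\beta\norm{x}^2}/Z(\beta)}\,\ket{x}\,\mathrm{d}x$. The inner product is then
\[\int_K \frac{e^{-(\beta+\beta')\norm{x}^2/2}}{\sqrt{Z(\beta)Z(\beta')}}\,\mathrm{d}x = \frac{Z(\overline{\beta})}{\sqrt{Z(\beta)Z(\beta')}}.\]
This quantity is positive, so the absolute value is harmless.

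For the bound, take logarithms and use $L=\ln Z$ from \Cref{def:log-part}:
\[\ln|\braket{\pi_\beta}{\pi_{\beta'}}| = L(\overline{\beta}) - \tfrac12\bigl(L(\beta)+L(\beta')\bigr).\]
This is minus the midpoint convexity gap of $L$. Set $h=(\beta-\beta')/2$ and apply Taylor's theorem around $\overline{\beta}$; the first-order terms cancel. By \Cref{lem:log-part-derivatives}, $L''=\Var_{\pi_b}[H]$, so the gap is at most $\tfrac12 h^2 \sup_{b\in[\beta',\beta]}\Var_{\pi_b}[H]$. Equivalently, $\ln$ of the overlap is at least $-\tfrac{h^2}{2}\sup_b \Var_{\pi_b}[H]$. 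It therefore suffices to show $h^2\sup_b\Var_{\pi_b}[H] \le 2\ln(16/15)\approx 0.129$.

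Now insert the upper bound $\Var_{\pi_b}[H]\le\min\{d/b^2,\,2R^2/b\}$ from \Cref{thm:variance-bounds}. The step-size hypothesis gives $\beta'\ge\beta/2$ in all cases, because the admissible relative step is at most $1/2$ (using $d\ge1$, and, in the second branch, $R\sqrt{\beta}\ge 1$; the general-case caveat is discussed below). Hence every $b\in[\beta',\beta]$ satisfies $b\ge\beta/2$. Split according to which term attains the maximum in the hypothesis.

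\emph{Case $\beta-\beta'\le\beta/(2\sqrt{d})$.} Here $h\le\beta/(4\sqrt d)$ and $\Var\le d/b^2\le 4d/\beta^2$, so $h^2\Var\le 1/4$.

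\emph{Case $\beta-\beta'\le\sqrt{\beta}/(2R)$.} Here $h^2\le\beta/(16R^2)$ and $\Var\le 2R^2/b\le 4R^2/\beta$, so $h^2\Var\le 1/4$.

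The main obstacle is the constants. The bound $1/4$ only gives $e^{-1/8}\approx 0.88$, which falls short of $15/16$. I would tighten this in three ways:
\begin{itemize}
\item Replace the crude bound $b\ge\beta/2$ by the exact integral, writing the gap as $\int$ of $L''$ against the triangular kernel.
\item Use $\ln x\ge 1-1/x$.
\item Note that the admissible step in the binding regime is smaller than the worst case suggests. In particular, if $R\sqrt\beta<1$, the second branch would allow $\beta-\beta'>\beta/2$; there I would use $\Var\le R^4/4$ instead, giving $h^2\Var\le \beta^2R^4/64\cdot$ (step factor)$^2$, which is small.
\end{itemize}
Integrating $d/b^2$ exactly over the kernel gives a factor of $\ln$-type integrals rather than $4$. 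If this is still insufficient, I would check that the constant in the step hypothesis is meant for the sharper bound, and if necessary settle for a slightly weaker absolute constant, since downstream only $\Omega(1)$ overlap is needed.
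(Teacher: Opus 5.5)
Your proof of the identity is fine. So is the second-order bound $\ln|\braket{\pi_\beta}{\pi_{\beta'}}| \ge -\tfrac12 h^2\sup_b \Var_{\pi_b}[H]$, and this is essentially the paper's route: a second-order mean value theorem plus \Cref{thm:variance-bounds}, with Popoviciu when $\beta\le 1/R^2$.

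\textbf{The gap.} You never reach $15/16$. Your estimates give $h^2\sup_b\Var_{\pi_b}[H]\le 1/4$, hence only $|\braket{\pi_\beta}{\pi_{\beta'}}|\ge e^{-1/8}\approx 0.88$, and the repairs you list are not carried out. Your diagnosis is correct, and the paper's proof has the same hole. Its display bounds $\Var_{\pi_\xi}[\norm{x}^2]$ by $\min\{d/(2\beta^2),R^2/\beta\}$, which does not follow from \Cref{thm:variance-bounds} for $\xi\in[\beta/2,\beta]$. What does follow is $\min\{4d/\beta^2,4R^2/\beta\}$, and that reproduces exactly your $1/8$ instead of $1/16$.

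\textbf{Your proposed repairs.} The inequality $\ln x\ge 1-1/x$ cannot help, since passing from $\ln x\ge -g$ to $x\ge e^{-g}$ is already lossless. The claim that the admissible step is smaller in the binding regime is false: for $1\le R^2\beta\le d$ the step may be exactly $\sqrt{\beta}/(2R)$.

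\textbf{The case $\beta-\beta'\le\beta/(2\sqrt d)$ is easy to fix.} The log-concavity of $\beta\mapsto\beta^{d/2}Z(\beta)$, proved inside \Cref{lem:ub}, gives $L''(b)\le d/(2b^2)$. With this, even the crude bound $b\ge\beta/2$ makes the midpoint gap $\tfrac12(L(\beta)+L(\beta'))-L(\overline{\beta})$ at most $1/16$, so the overlap is at least $e^{-1/16}>15/16$.

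\textbf{The other case needs a new ingredient.} Apply your exact-kernel idea with $L''(b)\le 2R^2/b$, i.e.\ compare $L$ with $\psi(b)=2R^2(b\ln b-b)$. At the maximal step this bounds the gap by $R^2(\beta\ln\beta+\beta'\ln\beta'-2\overline{\beta}\ln\overline{\beta})\approx \tfrac1{16}+\tfrac{1}{64R\sqrt\beta}$. That exceeds $\ln(16/15)\approx 0.0645$ for $R^2\beta$ between roughly $8$ and $60$ (about $0.068$ at $R^2\beta=10$). Popoviciu only gives $R^2\beta/128$, which also fails there.

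One way through is to combine Brascamp--Lieb with $\Var[H]\le \E[H](R^2-\E[H])$, valid because $0\le H\le R^2$. Together they give $L''(b)\le 2R^2/b-4/b^2$ whenever $bR^2\ge 4$. With the exact kernel, this keeps the gap below about $0.063$ for all $R^2\beta\ge 8$, and Popoviciu covers $R^2\beta\le 8$.

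\textbf{On settling for a weaker constant.} That would not prove the statement as written, and it is not harmless downstream. \Cref{lem:overlap} feeds $\sqrt{1-(15/16)^2}\le 1/(2\sqrt2)$ into a total-variation triangle inequality. With $e^{-1/8}$ that term becomes about $0.47$, which pushes the sum above $1$.
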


    \begin{proof}
        For the first claim, we observe that
        \[|\braket{\pi_{\beta}}{\pi_{\beta'}}| = \frac{\int_K \sqrt{\exp(-\beta\norm{x}^2)} \cdot \sqrt{\exp(-\beta'\norm{x}^2)} \;\mathrm{d}x}{\sqrt{Z(\beta)Z(\beta')}} = \frac{Z(\overline{\beta})}{\sqrt{Z(\beta)Z(\beta')}}.\]
        Next, we use the mean value theorem to rewrite the right-hand side, from which we deduce that there exists a $\xi \in [\beta',\beta]$ such that
        \[\ln\left[\frac{\sqrt{Z(\beta)Z(\beta')}}{Z(\overline{\beta})}\right] = \frac12\left(L(\beta) + L(\beta') - 2L(\overline{\beta})\right) = \frac12L''(\xi)\left(\frac{\beta - \beta'}{2}\right)^2 = \underset{x \sim \pi_{\xi}}{\Var}[\norm{x}^2] \cdot \frac{(\beta - \beta')^2}{8}\]
        Now, if $\beta \geq 1/R^2$, then $\beta/\beta' \leq 2$, and we use the bounds from \Cref{thm:variance-bounds} to find that
        \[\underset{x \sim \pi_{\xi}}{\Var}[\norm{x}^2] \cdot \frac{(\beta - \beta')^2}{8} \leq \min\left\{\frac{d}{2\beta^2}, \frac{R^2}{\beta}\right\} \cdot \max\left\{\frac{1}{4d}, \frac{1}{4R^2\beta}\right\} \cdot \frac{\beta^2}{8} \leq \frac{1}{16}.\]
        On the other hand, if $\beta \leq 1/R^2$, then we use that $\beta' \geq 0$, and so again using \Cref{thm:variance-bounds}, we find that
        \[\underset{x \sim \pi_{\xi}}{\Var}[\norm{x}^2] \cdot \frac{(\beta - \beta')^2}{8} \leq \frac{R^4}{4} \cdot \frac{\beta^2}{8} \leq \frac{R^4}{4} \cdot \frac{1}{8R^4} < \frac{1}{16}.\]
        Thus, in both cases, we obtain that $|\braket{\pi_{\beta}}{\pi_{\beta'}}| \geq \exp(-1/16) > 15/16$.
    \end{proof}

    We can now combine this proximity result with the perturbations arising from the finite step sizes and clipping parameters. This is the objective of the following lemma.

    \begin{lemma}
        \label{lem:overlap}
        Let $d \in \N$, $R \geq 1$ and $B_d \subseteq K \subseteq RB_d$ be a convex body. Let $0 \leq \beta' \leq \beta$, such that $\beta - \beta' \leq \max\{1/(2\sqrt{d}),1/(2R\sqrt{\beta})\}\beta$. Let $r \geq \sqrt{d/(2\beta)}\xi$, and $r' \geq \sqrt{d/(2\beta')}\xi$, where $\xi = 1 + \sqrt{1 + c\sqrt{\ln(20)}/\sqrt{d}}$, and $c$ is the constant in \cite[Corollary~5.2]{CV18}. Let $0 < \delta \leq \min\{1,1/\sqrt{2\beta}\}/(2^{18}\sqrt{d})$ and $0 < \delta' \leq \min\{1,1/\sqrt{2\beta'}\}/(2^{18}\sqrt{d})$. Then,
        \[|\braket{\pi_{\beta,\delta,r}}{\pi_{\beta',\delta',r'}}| \geq \frac{1}{22}.\]
    \end{lemma}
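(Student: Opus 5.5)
The overlap will be bounded through a chain of intermediate distributions. The Hellinger affinity $\braket{p}{q}=\sum\sqrt{p q}$ (or its integral analogue) is tied to total-variation distance by $1-\braket{p}{q}\le \norm{p-q}_{\mathrm{TV}}$ and $\norm{p-q}_{\mathrm{TV}}\le\sqrt{1-\braket{p}{q}^2}$. The angle $\theta(p,q)=\arccos\braket{p}{q}$ is a metric on q-samples, so it obeys the triangle inequality. I will therefore pass through the chain
\[\pi_{\beta,\delta,r}\;\to\;\pi_{\beta,r}\;\to\;\pi_{\beta}\;\to\;\pi_{\beta'}\;\to\;\pi_{\beta',r'}\;\to\;\pi_{\beta',\delta',r'},\]
where $\pi_{\beta,r}$ denotes the Gibbs distribution on the clipped body $K\cap rB_d$. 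I will bound each link's angle and sum them. For the result to be at least $1/22$, the total angle must be at most $\arccos(1/22)\approx 1.525$.

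For the middle link, \Cref{lem:gibbs-overlap} gives $\braket{\pi_\beta}{\pi_{\beta'}}\ge 15/16$, hence an angle of at most $\arccos(15/16)\approx 0.355$.

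For the clipping links, $\pi_{\beta,r}$ is just $\pi_\beta$ conditioned on $rB_d$. If $m=\pi_\beta(rB_d)$, then $\braket{\pi_{\beta,r}}{\pi_\beta}=\sqrt{m}$. I will use \cite[Corollary~5.2]{CV18}, the Gaussian-on-convex-body norm concentration with constant $c$, applied with the threshold $r\ge\sqrt{d/(2\beta)}\,\xi$. This should give $1-m\le 1/20$ (the $\ln 20$ inside $\xi$ is designed precisely for this), so $\braket{\pi_{\beta,r}}{\pi_\beta}\ge\sqrt{19/20}$. The same argument applies to $\beta'$ and $r'$. If $B_d\not\subseteq rB_d$ issues arise, note that $r\ge1$ for the relevant regime, so $B_d\subseteq K\cap rB_d$.

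For the local-conductance links, I will apply \Cref{lem:avg-local-conductance}, or more precisely its proof, to the convex body $K\cap rB_d$, which still contains $B_d$. The step-size hypothesis on $\delta$ is exactly the one in that lemma, and the proof there shows directly that $\braket{\pi_{\beta,\delta,r}}{\pi_{\beta,r}}\ge\sqrt{15/16}$. The analogous bound holds with primes.

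The arithmetic then goes as follows. The two step-size links contribute $2\arccos\sqrt{15/16}\approx 0.505$, the two clipping links contribute $2\arccos\sqrt{19/20}\approx 0.451$, and the middle link contributes about $0.355$. The total is roughly $1.31<1.525$, so the overlap is at least $\cos(1.31)\approx 0.26\ge 1/22$.

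If instead one works with TV distances and the crude conversion between affinity and TV, the bound degrades. Given how weak $1/22$ is, I suspect the paper does exactly that: it chains TV bounds of $1/4$, $1/20$, and so on, and converts at the end. Either route should close.

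The main obstacle is the clipping step. I need to extract the correct tail bound from \cite[Corollary~5.2]{CV18} in the form $\pi_\beta(\norm{x}>\sqrt{d/(2\beta)}\,\xi)\le 1/20$, with $\xi$ as defined. I also need to check that it applies to a Gaussian restricted to a convex body containing $B_d$ for every $\beta$, including small $\beta$, where $r$ is large and clipping is vacuous because $K\subseteq RB_d$.
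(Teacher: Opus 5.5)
Your argument is correct and has the same skeleton as the paper's proof. Both use the five-link chain through $\pi_{\beta,r}$, $\pi_\beta$, $\pi_{\beta'}$, $\pi_{\beta',r'}$, and both rely on \Cref{lem:avg-local-conductance} (applied to the clipped body) and \Cref{lem:gibbs-overlap}. The clipping tail bound you flag as the main obstacle is already packaged in the paper as \Cref{lem:clipping-perturbation}. It is valid for all $\beta\in[0,\infty]$, and with $t^2=\ln 20$ it gives exactly $1/20$.

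The difference is the metric used for the triangle inequality. As you guessed, the paper converts every link to a total-variation bound: $1/4$, $1/20$, $\sqrt{1-(15/16)^2}\le 1/(2\sqrt2)$, $1/20$, $1/4$. It sums these to below $21/22$ and converts back with $1-\braket{p}{q}\le\norm{p-q}_{\mathrm{TV}}$.

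\begin{itemize}
\item \emph{Paper's route.} It uses the lemmas' TV statements as black boxes, but it has essentially no slack: the sum is about $0.9536$ against $21/22\approx 0.9545$.
\item \emph{Your route.} The Bhattacharyya-angle chain needs the sharper affinity form of each link. You correctly take $\sqrt{15/16}$ from inside the proof of \Cref{lem:avg-local-conductance} and compute $\sqrt{m}\ge\sqrt{19/20}$ directly for clipping. The weaker conversions $1-1/4$ and $1-1/20$ would not close the angle chain. In return you get about $0.26$ with a comfortable margin.
\end{itemize}

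That margin matters. The proof of \Cref{lem:gibbs-overlap}, run with the bound $\Var_{\pi_\xi}[\norm{x}^2]\le\min\{d/\xi^2,2R^2/\xi\}$ from \Cref{thm:variance-bounds} and only $\xi\ge\beta'\ge\beta/2$, supports an exponent of $1/8$ rather than $1/16$. That gives an overlap of $e^{-1/8}\approx 0.88$ instead of $15/16$. With that value your chain still gives about $0.12>1/22$, whereas the five TV terms would sum to more than $1$.

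Finally, both routes tacitly need $r,r'\ge 1$, so that $B_d\subseteq K\cap rB_d$ when invoking \Cref{lem:avg-local-conductance}. The statement does not enforce this, but it is assumed where the lemma is used (\Cref{thm:anneal}).
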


    \begin{proof}
        First, we observe from \Cref{lem:gibbs-overlap} that 
        \[\norm{\pi_{\beta} - \pi_{\beta'}}_{\mathrm{TV}} \leq \sqrt{1 - |\braket{\pi_{\beta}}{\pi_{\beta'}}|^2} \leq \sqrt{1-\left(\frac{15}{16}\right)^2} \leq \frac{1}{2\sqrt{2}}.\]
        By the triangle inequality, and using \Cref{lem:clipping-perturbation,lem:avg-local-conductance}, we observe that
        \begin{align*}
            \norm{\pi_{\beta,\delta,r} - \pi_{\beta',\delta',r'}}_{\mathrm{TV}} &\leq \norm{\pi_{\beta,\delta,r} - \pi_{\beta,r}}_{\mathrm{TV}} + \norm{\pi_{\beta,r} - \pi_{\beta}}_{\mathrm{TV}} + \norm{\pi_\beta - \pi_{\beta'}}_{\mathrm{TV}} \\
            &\qquad + \norm{\pi_{\beta'} - \pi_{\beta',r'}}_{\mathrm{TV}} + \norm{\pi_{\beta',r'} - \pi_{\beta',\delta',r'}}_{\mathrm{TV}} \\
            &\leq \frac14 + \frac{1}{20} + \frac{1}{2\sqrt{2}} + \frac{1}{20} + \frac14 < \frac{21}{22}.
        \end{align*}
        Thus, we find that
        \[1 - |\braket{\pi_{\beta,\delta,r}}{\pi_{\beta',\delta',r'}}| \leq \norm{\pi_{\beta,\delta,r} - \pi_{\beta',\delta',r'}}_{\mathrm{TV}} \leq \frac{21}{22}.\qedhere\]
    \end{proof}

    Now that we know when $\ket{\pi_{\beta,\delta,r}}$ and $\ket{\pi_{\beta',\delta',r'}}$ have constant overlap, we can use \Cref{thm:two-state-ampl} to map one to the other with constant overhead. For this construction, we need to be able to reflect through both of these states. 
    To analyze the cost of these reflections, we first provide two geometrical lemmas.

    \begin{restatable}{lemma}{volumebound}
        \label{lem:volume-bound}
        Let $d \in \N$, and $B_d \subseteq K \subseteq \R^d$ be a convex body. Let $r \geq 1$, and $\delta \leq 1/\sqrt{d}$. Then, for all $x \in K \cap rB_d$,
        \[\frac{\Vol((x + \delta B_d) \cap K)}{\Vol((x + \delta B_d) \cap K \cap rB_d)} \in O(1).\]
    \end{restatable}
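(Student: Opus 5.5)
The plan is to compare both volumes, by an explicit volume-controlled map, with the volume of a fixed reference set inside the denominator. Write $S := (x + \delta B_d) \cap K$ and $S_r := S \cap rB_d$. Since $S_r \subseteq S$, the ratio is at least $1$, so only the upper bound is needed. The approach is to construct a map $\varphi$ that sends $S$ into $S_r$ with Jacobian bounded below by an absolute constant. Then $\Vol(S_r) \geq \Vol(\varphi(S)) \geq c\Vol(S)$, which gives the claim.

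The natural candidate uses the two facts available: $0 \in K$, since $B_d \subseteq K$, and $x \in K \cap rB_d$. I would take
\[
\varphi(y) := (1-\alpha)\bigl(x + \lambda(y-x)\bigr), \qquad \lambda = 1 - \tfrac{1}{d}, \quad \alpha = \tfrac{c_0}{d},
\]
for a suitable constant $c_0$. This map is affine with Jacobian $((1-\alpha)\lambda)^d \geq e^{-O(1)}$. Containment in $K$ follows from convexity twice. First, $x + \lambda(y-x) \in K$ because $x, y \in K$. Second, multiplying by $1-\alpha$ is a convex combination with $0 \in K$. It then remains to check two metric conditions for every $y \in S$: that $\norm{\varphi(y)} \leq r$ and that $\norm{\varphi(y) - x} \leq \delta$.

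For these checks I would not use crude triangle inequalities. Those lose a full factor $\lambda\delta$ in the radial direction, and that loss forces $\lambda$ to be around $1/2$, which costs $2^{-d}$. Instead I would expand the squares. Write $u = y - x$ with $\norm{u} \leq \delta \leq 1/\sqrt{d}$, and let $\hat{x} = x/\norm{x}$. Then
\[
\norm{x + \lambda u}^2 = \norm{x}^2 + 2\lambda\, x\cdot u + \lambda^2\norm{u}^2 ,
\]
and the contraction by $1-\alpha$ removes roughly $2\alpha\norm{x}^2$. The second-order term $\lambda^2\norm{u}^2 \leq 1/d$ is absorbed by $2\alpha\norm{x}^2 \gtrsim c_0/d$ once $\norm{x}$ is of order $1$, using $r \geq 1$. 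The first-order term $2\lambda\, x\cdot u$ is the genuine difficulty. When $u$ points outward, the bound $\norm{\varphi(y)} \leq r$ can fail for $x$ near the sphere $r\,\partial B_d$.

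To handle this, I would split $S$ by the sign of $x\cdot u$, and more precisely by whether $\hat{x}\cdot u \leq \delta/\sqrt{d}$. On the part where $\hat{x}\cdot u$ is small, the map above works after a short calculation. The distance condition $\norm{\varphi(y) - x} \leq \delta$ also holds there, since $\norm{\varphi(y)-x}^2 = \norm{\lambda u - \alpha(x+\lambda u)}^2$ and the cross term $-2\alpha\lambda\, u\cdot(x+\lambda u)$ is controlled by $\alpha r\delta$ against the slack $(1-\lambda^2)\delta^2 \approx 2\delta^2/d$. For the outward part, where $\hat{x}\cdot u > \delta/\sqrt{d}$, I would use a different affine map: the reflection-free shear $y \mapsto y - 2(\hat{x}\cdot u)_+\hat{x}$ is not available because it need not stay in $K$. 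So instead I plan to use the radial contraction toward the origin by a factor $1 - O(\delta/r)$, applied only on this slab. I would then argue, via the Gaussian-type concentration of the $\hat{x}$-coordinate of a uniform point in a $d$-dimensional ball of radius $\delta$, that this slab carries at most a constant fraction of the mass of $S$. It is not yet clear to me that this holds for an arbitrary convex $S$, so this needs checking.

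The main obstacle is exactly this outward-radial part, uniformly over the shape of $K$ near $x$. A convex $S$ could concentrate its mass in the outward direction, for example when $x$ sits at a corner of $K$ pointing away from the origin. If the slab argument fails, the fallback is to use the specific structure that $0$ lies deep inside $K$. Then the cone from $x$ toward $B_d$ lies in $K$, which should force $S$ to contain a substantial inward piece comparable in volume to its outward piece. One would then transport the outward piece onto the inward piece with an $O(1)$-Jacobian map, for instance the map $y \mapsto x + \lambda(y-x) - t\hat{x}$ with $t \approx \delta/\sqrt{d}$, whose image stays in $K$ by the cone property.
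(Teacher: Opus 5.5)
Your overall strategy, an explicit map from $S$ into $S_r$ with Jacobian bounded below, is the right kind of argument. However, both halves of the execution have genuine gaps, and the idea that closes the second one is a map you explicitly discarded.

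\emph{The map $\varphi$ on the non-slab part does not land in $x+\delta B_d$.}

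\begin{itemize}
\item Take $y=x-\delta\hat{x}$. It lies in $S$: it is on $[0,x]$, or in $B_d$ if $\norm{x}<\delta$. It has $\hat{x}\cdot u=-\delta$, so it belongs to your good part.
\item Then $\norm{\varphi(y)-x}=\alpha\norm{x}+(1-\alpha)\lambda\delta$. This is at most $\delta$ only if $\alpha(\norm{x}-\lambda\delta)\le(1-\lambda)\delta$.
\item With $\alpha=c_0/d$ and $\lambda=1-1/d$, that condition reads $\norm{x}\le(\lambda+1/c_0)\delta$. It fails whenever $\norm{x}$ is of order $r\ge1$, for all but the smallest $d$.
\item Your estimate compared the cross term $\alpha r\delta\approx c_0r\delta/d$ with the slack $2\delta^2/d$. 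The former is larger by a factor of order $c_0\sqrt d$.
\item No tuning of $\alpha,\lambda$ rescues a single map of this form. Take $K=2B_d$, $\norm{x}=r=1$, $\delta=1/\sqrt d$. The norm condition at a good-part point with $\hat{x}\cdot u=\delta/\sqrt d$, $\norm{u}=\delta$ forces $\alpha\gtrsim1/d$. The distance condition at $x-\delta\hat{x}$ then forces $1-\lambda\gtrsim1/\sqrt d$, so the Jacobian is $e^{-\Omega(\sqrt d)}$.
\item The fix is to leave $S_r$ untouched and move only points of $S\setminus S_r$ that lie in the inward half $H=\{y\in S:\langle y-x,\hat{x}\rangle\le0\}$. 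For such $y$ you have $r<\norm{y}$ and $\norm{y}^2\le r^2+\delta^2$.
\item The plain radial contraction $y\mapsto\alpha y$ with $\alpha=r/\sqrt{r^2+\delta^2}$ then works. It stays in $K$ by convexity with $0\in K$. It satisfies $\norm{\alpha y-x}^2=\alpha\norm{y-x}^2+(1-\alpha)\norm{x}^2-\alpha(1-\alpha)\norm{y}^2\le\alpha\delta^2+(1-\alpha)^2r^2\le\delta^2$, using $\norm{x}\le r<\norm{y}$ and $1-\alpha\le\delta^2/r^2$. Its Jacobian factor satisfies $\alpha^{-d}\le e^{d\delta^2/(2r^2)}\le\sqrt e$.
\end{itemize}

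\emph{The outward part is missing its key idea.}

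\begin{itemize}
\item Concentration of the $\hat{x}$-coordinate under the uniform measure on $x+\delta B_d$ says nothing about an arbitrary convex subset $S$.
\item Radial contraction by $1-O(\delta/r)$ has Jacobian $e^{-O(d\delta/r)}$, which is $e^{-\Theta(\sqrt d)}$ for $r=1$, $\delta=1/\sqrt d$.
\item What closes the gap is precisely the reflection $y\mapsto y-2(\hat{x}\cdot u)_+\hat{x}$, which you rejected; it \emph{does} stay in $K$. Let $u=y-x$ with $\hat{x}\cdot u>0$ and set $u'=u-2(\hat{x}\cdot u)\hat{x}$.
\begin{itemize}
\item $\norm{u'}=\norm{u}\le\delta\le1$, so $u'\in B_d\subseteq K$.
\item $x+u'$ lies on the segment from $u'$ to $y$, at parameter $\norm{x}/(\norm{x}+2\hat{x}\cdot u)$. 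Hence it lies in $K$ by convexity, and obviously in $x+\delta B_d$.
\end{itemize}
\item So the outward half of $S$ injects volume-preservingly into $H$, giving $\Vol(S)\le2\Vol(H)$. Your slab therefore does carry at most half the mass, but for this geometric reason.
\item The relevant fact is $\mathrm{conv}(\{y\}\cup\delta B_d)\subseteq K$. It is not the cone from $x$ to $B_d$ invoked in your fallback, which contains no outward points.
\item The fallback shift by $t\approx\delta/\sqrt d$ also fails. For example, it sends $x+\delta\hat{x}$ to a point of norm $\norm{x}+\lambda\delta-t>r$ when $\norm{x}=r$.
\end{itemize}

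Combining the reflection with the radial contraction above gives $\Vol(S)\le2(1+\sqrt e)\Vol(S_r)$, which is exactly the paper's proof.
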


    \begin{proof}
        See \Cref{app:volume-bound}.
    \end{proof}

    \begin{lemma}
        \label{lem:step-size-update-warmness}
        Let $d \in \N$, $K \subseteq \R^d$ be a convex body. Let $0 < \delta < \delta'$ and $0 < r < r'$. Then, for any $x \in K \cap rB_d$,
        \[\frac{\ell_{\delta,r}(x)}{\ell_{\delta',r'}(x)} \leq \left(\frac{\delta'}{\delta}\right)^d.\]
    \end{lemma}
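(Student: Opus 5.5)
The ratio unpacks to
\[
\frac{\ell_{\delta,r}(x)}{\ell_{\delta',r'}(x)} = \frac{\Vol((x+\delta B_d)\cap K\cap rB_d)}{\Vol((x+\delta' B_d)\cap K\cap r'B_d)}\cdot \frac{\Vol(\delta' B_d)}{\Vol(\delta B_d)},
\]
where $\ell_{\delta,r}$ is the local conductance of the clipped body $K\cap rB_d$. The second factor equals $(\delta'/\delta)^d$ exactly, by homogeneity of volume. So the plan is to show the first factor is at most $1$.

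To bound the first factor, I use a plain set inclusion. Since $\delta<\delta'$, we have $x+\delta B_d\subseteq x+\delta' B_d$. Since $r<r'$, we have $rB_d\subseteq r'B_d$. Therefore
\[
(x+\delta B_d)\cap K\cap rB_d \subseteq (x+\delta' B_d)\cap K\cap r'B_d,
\]
and monotonicity of volume gives a numerator no larger than the denominator.

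The only point needing care is that the denominator is positive, so the ratio is well defined. Because $x\in K\cap rB_d$ and $K$ is a convex body, the set $(x+\delta B_d)\cap K\cap rB_d$ is the intersection of three convex bodies containing $x$. I would argue it has positive volume using the same cone construction as in \Cref{lem:local-conductance-lipschitz}: take a small ball inside $K\cap rB_d$ and form its convex hull with $x$. The larger set then has positive volume too.

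Multiplying the two factors yields the bound $(\delta'/\delta)^d$. No step here is a genuine obstacle; the main thing is to state the normalization convention for $\ell_{\delta,r}$ consistently with \Cref{def:avg-local-conductance} applied to the body $K\cap rB_d$.
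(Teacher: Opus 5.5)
Your proposal is correct and is essentially the paper's proof: the inclusion $(x+\delta B_d)\cap K\cap rB_d\subseteq(x+\delta' B_d)\cap K\cap r'B_d$ bounds the volume ratio by $1$, which leaves exactly the factor $\Vol(\delta' B_d)/\Vol(\delta B_d)=(\delta'/\delta)^d$. The paper skips the positivity check you add; only the denominator needs to be positive, and this follows directly because $x\in K$ lies in the interior of $r'B_d$ and $K$ has nonempty interior, whereas your route through a ball inside $K\cap rB_d$ also needs that set to have nonempty interior (true in the paper's setting $B_d\subseteq K$, $r\geq 1$; otherwise the numerator may vanish and the bound holds trivially).
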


    \begin{proof}
        We observe that
        \[(x + \delta B_d) \cap K \cap rB_d \subseteq (x + \delta'B_d) \cap K \cap r'B_d,\]
        and so
        \[\frac{\ell_{\delta,r}(x)}{\ell_{\delta',r'}(x)} = \frac{\Vol((x + \delta B_d) \cap K \cap rB_d)\Vol(\delta'B_d)}{\Vol((x + \delta'B_d) \cap K \cap r'B_d)\Vol(\delta B_d)} \leq \frac{\Vol(\delta'B_d)}{\Vol(\delta B_d)} = \left(\frac{\delta'}{\delta}\right)^d.\qedhere\]
    \end{proof}

    Note that the latter lemma can be used to bound the warmness of two speedy walk distributions. Indeed, for speedy walk distributions with parameters $\beta \geq \beta' \geq 0$, $0 < \delta < \delta'$ and $0 < r < r'$, we have
    \begin{equation}
        \label{eq:speedy-walk-warmness}
        M(\pi_{\beta,\delta,r}, \pi_{\beta',\delta',r'}) := \sup_{S \subseteq K \cap r'B_d} \frac{\pi_{\beta,\delta,r}(S)}{\pi_{\beta',\delta',r'}(S)} \in O\left(\sup_{x \in K \cap rB_d} \frac{\ell_{\delta,r}(x)}{\ell_{\delta',r'}(x)} \cdot \underbrace{\frac{\exp(-\beta\norm{x}^2)}{\exp(-\beta'\norm{x}^2)}}_{\leq 1}\right) \subseteq O\left(\left(\frac{\delta'}{\delta}\right)^d\right),
    \end{equation}
    where the inequality comes from the ratio between the normalization factors of these distributions. The annealing procedure in Cousins and Vempala~\cite{CV18} implicitly requires that the warmness of two consecutive speedy walk distributions is constant. Indeed, in their approach $\delta'/\delta \in 1 + O(1/d)$, so \Cref{eq:speedy-walk-warmness} justifies this claim.
    \footnote{Cousins and Vempala use the result from \cite{lovasz1993random} that characterizes mixing time from a warm start and apply it to the speedy walk. However, they only prove warmness of the distributions arising from the ball walk. It is our understanding one would additionally need warmness of consecutive speedy walk distributions, which we prove in \Cref{lem:step-size-update-warmness}.
    }

    These two geometrical lemmas allow us to upper bound the transduction complexities that show up in annealing from one speedy walk distribution to another.

    \begin{lemma}
        \label{lem:annealing-average-local-conductance}
        Let $d \in \N$, and $B_d \subseteq K \subseteq \R^d$ be a convex body. Let $\beta,\beta' \geq 0$, $1 \leq r \leq r'$, $0 < \delta \leq \min\{1,1/\sqrt{2\beta}\}/(4096\sqrt{d})$, and $\delta \leq \delta' \leq \min\{1,1/\sqrt{2\beta'}\}/(4096\sqrt{d})$. Then,
        \[\underset{x \sim \pi_{\beta,\delta,r}}{\E} \left[\frac{1}{\ell_{\delta',r'}(x)}\right] \in O\left(\left(\frac{\delta'}{\delta}\right)^d\right), \qquad \text{and} \qquad \underset{x \sim \pi_{\beta',\delta',r'}}{\E} \left[\frac{\mathbbm{1}_{x \in rB_d}}{\ell_{\delta,r}(x)}\right] \in O(1).\]
    \end{lemma}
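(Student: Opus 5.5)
Both bounds reduce to the Cousins–Vempala average local conductance bound \cite[Lemma~6.11]{CV18}, already used in the proof of \Cref{thm:speedy-walk-reflection}. It states that for a convex body $K'$ containing $B_d$ and a step size $\delta\le\min\{1,1/\sqrt{2\beta}\}/(4096\sqrt d)$, we have $\E_{x\sim\pi_{\beta,\delta}}[1/\ell_\delta(x)]\in O(1)$ for the speedy walk on $K'$. We apply it with $K'=K\cap rB_d$, which contains $B_d$ since $r\ge1$. Then $\E_{x\sim\pi_{\beta,\delta,r}}[1/\ell_{\delta,r}(x)]\in O(1)$, and likewise with primes.

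For the first bound, we pass from $\ell_{\delta',r'}$ to $\ell_{\delta,r}$ by a pointwise comparison. For $x\in K\cap rB_d$, \Cref{lem:step-size-update-warmness} gives $\ell_{\delta,r}(x)\le(\delta'/\delta)^d\,\ell_{\delta',r'}(x)$, i.e.\ $1/\ell_{\delta',r'}(x)\le(\delta'/\delta)^d/\ell_{\delta,r}(x)$. The distribution $\pi_{\beta,\delta,r}$ is supported on $K\cap rB_d$, so this holds on its support. Taking expectations and applying the Cousins–Vempala bound gives $O((\delta'/\delta)^d)$. This part is routine.

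The second bound is the main step. We cannot simply bound $1/\ell_{\delta,r}$ by $1/\ell_{\delta',r'}$ pointwise, because the step size shrinks, and $\ell_{\delta,r}$ can be much smaller than $\ell_{\delta',r'}$ near corners. So we write the expectation with densities. Let $f_\beta(x)=e^{-\beta\|x\|^2}$, let $Z'=\int_{K\cap r'B_d} f_{\beta'}\ell_{\delta',r'}$ and $Z_r=\int_{K\cap rB_d} f_{\beta}\ell_{\delta,r}$. Then
\[
\E_{x\sim\pi_{\beta',\delta',r'}}\!\left[\frac{\mathbbm 1_{x\in rB_d}}{\ell_{\delta,r}(x)}\right]=\frac{1}{Z'}\int_{K\cap rB_d}\frac{f_{\beta'}(x)\,\ell_{\delta',r'}(x)}{\ell_{\delta,r}(x)}\,\mathrm dx .
\]
I would try to prove $\int_{K\cap rB_d} f_{\beta'}\,\ell_{\delta',r'}/\ell_{\delta,r} \lesssim \int_{K\cap rB_d} f_{\beta'}/\ell_{\delta,r}$, and then control the latter by $O(Z')$. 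Two ingredients are needed.

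First, bound the ratio $\ell_{\delta',r'}/\ell_{\delta,r}$ by $O(1)$ wherever $\ell_{\delta,r}$ is small. By \Cref{lem:volume-bound} we may drop the clipping at $rB_d$ up to a constant, so it suffices to compare $\Vol((x+\delta'B_d)\cap K)/\Vol(\delta'B_d)$ with $\Vol((x+\delta B_d)\cap K)/\Vol(\delta B_d)$. I don't expect this ratio to be $O(1)$ pointwise. Instead I would mimic the proof of \cite[Lemma~6.11]{CV18}: there the integral $\int f/\ell_\delta$ is controlled by $\Vol(K)$ via the identity $\int_K \frac{1}{\ell_\delta(x)}\cdot\frac{\Vol((x+\delta B_d)\cap K)}{\Vol(\delta B_d)}\,\mathrm dx=\Vol(K)$ together with a bound on the mass near the boundary. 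The analogous computation with weight $f_{\beta'}\ell_{\delta',r'}$ in place of $f_\beta\ell_{\delta,r}$ should go through, since the argument only uses $\ell\le1$ and the step-size condition $\delta\le\min\{1,1/\sqrt{2\beta}\}/(4096\sqrt d)$.

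Second, compare $f_{\beta'}$ with $f_\beta$ and the normalizations. On $K\cap rB_d$ we cannot assume $\beta'\ge\beta$, so I would use only that both step sizes satisfy their Cousins–Vempala conditions. By \Cref{lem:avg-local-conductance}, $Z'\ge\frac{15}{16}\int_{K\cap r'B_d}f_{\beta'}\ge\frac{15}{16}\int_{K\cap rB_d}f_{\beta'}$. This reduces the claim to $\int_{K\cap rB_d} f_{\beta'}/\ell_{\delta,r}\in O(\int_{K\cap rB_d}f_{\beta'})$. That is exactly \cite[Lemma~6.11]{CV18} for temperature $\beta'$, body $K\cap rB_d$ and step $\delta$. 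It applies provided $\delta\le\min\{1,1/\sqrt{2\beta'}\}/(4096\sqrt d)$, which follows from $\delta\le\delta'$ and the hypothesis on $\delta'$. I expect verifying that the Cousins–Vempala lemma holds with this mixed pair, namely the Gaussian at $\beta'$ and the step $\delta$, to be the crux. Once that holds, the factor $\ell_{\delta',r'}\le1$ in the numerator can simply be dropped, and the bound is $O(1)$.
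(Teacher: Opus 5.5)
Your first bound is correct and is exactly the paper's argument: the pointwise estimate from \Cref{lem:step-size-update-warmness}, followed by the average-local-conductance bound of \cite[Lemma~6.11]{CV18} on $K\cap rB_d$.

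The second bound has a genuine gap, at exactly the step you flag as the crux. After you drop $\ell_{\delta',r'}\le 1$, what remains to be shown is
\[
\frac{\int_{K\cap rB_d} f_{\beta'}/\ell_{\delta,r}}{\int_{K\cap rB_d} f_{\beta'}}=\underset{x\sim\pi_{\beta',r}}{\E}\left[\frac{1}{\ell_{\delta,r}(x)}\right]\in O(1).
\]
This is an expectation under the clipped \emph{Gibbs} distribution. The Cousins--Vempala bound, in the form you quote yourself, is about the \emph{speedy-walk} distribution $\pi_{\beta',\delta,r}\propto f_{\beta'}\ell_{\delta,r}$. There, $\E_{\pi_{\beta',\delta,r}}[1/\ell_{\delta,r}]=\int f_{\beta'}/\int f_{\beta'}\ell_{\delta,r}$ is just the reciprocal of the average local conductance. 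By Jensen, $\E_{\pi_{\beta',r}}[1/\ell_{\delta,r}]\ge 1/\E_{\pi_{\beta',r}}[\ell_{\delta,r}]$. So your reduced claim is stronger than what you cite and does not follow from it. It would require controlling the Gibbs mass where $\ell_{\delta,r}$ is tiny, not merely the mean of $\ell_{\delta,r}$. The identity you propose to mimic does not help either: it is a tautology, since its second factor is $\ell_\delta(x)$ itself. By dropping $\ell_{\delta',r'}$ from the numerator, you discarded precisely the weight that would have produced a speedy-walk normalization.

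The missing idea is that your intuition about the ratio is backwards: $\ell_{\delta',r'}/\ell_{\delta,r}$ \emph{is} $O(1)$ pointwise. For $x\in K$, convexity of $K-x\ni 0$ gives
\[
(x+\delta'B_d)\cap K\subseteq x+\tfrac{\delta'}{\delta}\bigl[\delta B_d\cap(K-x)\bigr].
\]
Hence local conductance is non-increasing in the step size: $\ell_{\delta'}(x)\le\ell_\delta(x)$ for $\delta\le\delta'$. A smaller ball never sees relatively more of a corner. Combining this with \Cref{lem:volume-bound} (applicable since $\delta\le 1/\sqrt d$) gives, for $x\in K\cap rB_d$,
\[
\ell_{\delta',r'}(x)\le\ell_{\delta'}(x)\le\ell_\delta(x)\le 2(1+\sqrt e)\,\ell_{\delta,r}(x).
\]
Now keep $\ell_{\delta',r'}$ in the numerator and apply this bound. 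The expectation is at most $O(1)\cdot\int_{K\cap r'B_d}f_{\beta'}\big/\int_{K\cap r'B_d}\ell_{\delta',r'}f_{\beta'}$. This is the reciprocal of the average local conductance at parameters $(\beta',\delta',r')$, and it is $O(1)$ by \cite[Lemma~6.11]{CV18} under the hypothesis on $\delta'$. This is the paper's argument.

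A minor point: \Cref{lem:avg-local-conductance} assumes the constant $2^{18}$. Under the present $4096$ hypothesis, \cite{CV18} gives $Z'\ge\frac12\int_{K\cap r'B_d}f_{\beta'}$ rather than $\frac{15}{16}$, which of course suffices.
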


    \begin{proof}
        Let $x \in K \cap rB_d$. For the first expectation value, we use \Cref{lem:step-size-update-warmness} to upper bound it as
        \[\underset{x \sim \pi_{\beta,\delta,r}}{\E} \left[\frac{1}{\ell_{\delta',r'}(x)}\right] = \frac{\int_{K \cap rB_d} \frac{\ell_{\delta,r}(x)}{\ell_{\delta',r'}(x)} \exp(-\beta\norm{x}^2) \;\mathrm{d}x}{\int_{K \cap rB_d} \ell_{\delta,r}(x)\exp(-\beta\norm{x}^2) \;\mathrm{d}x} \leq \left(\frac{\delta'}{\delta}\right)^d \cdot \frac{1}{\lambda} \in O\left(\left(\frac{\delta'}{\delta}\right)^d\right),\]
        where we used \cite[Lemma~6.11]{CV18} to argue that $\lambda \geq 1/2$.

        For the second expectation value, we use the inclusions 
        \[(x + \delta'B_d) \cap K \cap r'B_d \subseteq (x + \delta'B_d) \cap K \subseteq x + \frac{\delta'}{\delta}\left[\delta B_d \cap (K - x)\right],\]
        where the last inclusion uses the convexity of $K$.
        We then combine this with \Cref{lem:volume-bound} to obtain that
        \begin{align*}
            \frac{\ell_{\delta',r'}(x)}{\ell_{\delta,r}(x)} &= \frac{\Vol((x + \delta'B_d) \cap K \cap r'B_d)\Vol(\delta B_d)}{\Vol((x + \delta B_d) \cap K \cap rB_d)\Vol(\delta'B_d)} \leq \frac{\Vol\left(\frac{\delta'}{\delta} [\delta B_d \cap (K-x)]\right)}{\Vol((x + \delta B_d) \cap K \cap rB_d)} \cdot \left(\frac{\delta}{\delta'}\right)^d \\
            &= \frac{\Vol((x + \delta B_d) \cap K)}{\Vol((x + \delta B_d) \cap K \cap rB_d)} \in O(1),
        \end{align*}
        and so, by a similar argument relying on \cite[Lemma~6.11]{CV18}, we obtain that
        \[\underset{x \sim \pi_{\beta',\delta',r'}}{\E} \left[\frac{\mathbbm{1}_{x \in rB_d}}{\ell_{\delta,r}(x)}\right] = \frac{\int_{K \cap rB_d} \frac{\ell_{\delta',r'}(x)}{\ell_{\delta,r}(x)} \exp(-\beta'\norm{x}^2) \;\mathrm{d}x}{\int_{K \cap r'B_d} \ell_{\delta',r'}(x)\exp(-\beta'\norm{x}^2) \;\mathrm{d}x} \in O\left(1\right).\qedhere\]
    \end{proof}

    Now, we can put all the ingredients together and implement a transducer that maps $\ket{\pi_{\beta,\delta,r,\zeta}}$ to $\ket{\pi_{\beta',\delta',r',\zeta}}$.

    \begin{theorem}
        \label{thm:anneal}
        Let $d \in \N$, $R \geq 1$ and $B_d \subseteq K \subseteq RB_d$ be a convex body, accessible through a membership oracle on a grid with sufficiently small spacing parameter $\zeta > 0$. Let $0 \leq \beta' \leq \beta$, such that $\beta - \beta' \leq \max\{1/(2\sqrt{d}),1/(2R\sqrt{\beta})\}\beta$. Let $r \geq \max\{1,\sqrt{d/(2\beta)}\xi\}$, and $r' \geq \max\{r,\sqrt{d/(2\beta')}\xi\}$, where $\xi = 1 + \sqrt{1 + c\sqrt{\ln(20)}/\sqrt{d}}$, and $c$ is the constant in \cite[Corollary~5.2]{CV18}. Next, we let $0 < \delta \leq \min\{1/(4\beta r), 1/(2^{18}\sqrt{2d\beta}), 1/(2^{18}\sqrt{d})\}$ and $\delta \leq \delta' \leq \min\{1/(4\beta'r'), 1/(2^{18}\sqrt{2d\beta'}), 1/(2^{18}\sqrt{d})\}$. Then, we can construct a bidirectional canonical transducer $A$ that maps
        \[\ket{\pi_{\beta,\delta,r,\zeta}} \overset{A}{\rightsquigarrow} \ket{\pi_{\beta',\delta',r',\zeta}}, \qquad \text{with complexity} \qquad O\left(\sqrt{\frac{d}{\beta'\delta^2} + \left(\frac{\delta'}{\delta}\right)^d}\right).\]
    \end{theorem}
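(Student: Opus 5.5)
The approach is to instantiate \Cref{thm:two-state-ampl} with $\ket{\varphi} := \ket{\pi_{\beta,\delta,r,\zeta}}$ and $\ket{\chi} := \ket{\pi_{\beta',\delta',r',\zeta}}$. The reflection transducers $R_{\ket{\varphi}}$ and $R_{\ket{\chi}}$ come from \Cref{cor:speedy-walk-refl-distrs}, applied to the clipped convex bodies $K \cap rB_d$ and $K \cap r'B_d$ respectively; these are convex bodies containing $B_d$ and contained in $rB_d$ and $r'B_d$. The step-size hypotheses on $\delta,\delta'$ are exactly those of \Cref{cor:speedy-walk-refl-distrs} with $R$ replaced by $r$ or $r'$, since $2^{18} \ge 4096$. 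The overlap $|\braket{\chi}{\varphi}|$ is bounded below by $1/22$ via \Cref{lem:overlap}, whose hypotheses match ours. In the discrete setting this holds up to a vanishing error as $\zeta \downarrow 0$, by \Cref{thm:naive-discretization} and \Cref{lem:smoothness}, so for sufficiently small $\zeta$ the overlap is $\Omega(1)$. The $1/|\braket{\chi}{\varphi}|$ factor in \Cref{thm:two-state-ampl} is therefore a constant.

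It then remains to bound the four transduction complexities in \Cref{thm:two-state-ampl}.
\begin{itemize}
\item By \Cref{cor:speedy-walk-refl-distrs}, $W(R_{\ket{\varphi}},\ket{\varphi})$ and $W(R_{\ket{\chi}},\ket{\chi})$ are both $O(1)$.
\item $\ket{\overline{\chi}}$ is the normalized component of $\ket{\varphi}$ orthogonal to $\ket{\chi}$. Since the normalization factor is $\sqrt{1-|\braket{\chi}{\varphi}|^2}$ and appears multiplied against $(1-|\braket{\chi}{\varphi}|^4)$, \Cref{cor:speedy-walk-refl-distrs} with $\psi := \pi_{\beta,\delta,r}$ gives $O\bigl(d/(\beta'\delta'^2) + \E_{x\sim\pi_{\beta,\delta,r}}[1/\ell_{\delta',r'}(x)]\bigr)$. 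The first expectation in \Cref{lem:annealing-average-local-conductance} bounds the latter by $O((\delta'/\delta)^d)$.
\item Symmetrically, $\ket{\overline{\varphi}}$ uses $\psi := \pi_{\beta',\delta',r'}$ restricted to the support $K\cap rB_d$ of $\ket{\varphi}$, since the reflection lives on $(K\cap rB_d)_\zeta$ and the component outside is handled separately. This gives $O\bigl(d/(\beta\delta^2) + \E_{x\sim\pi_{\beta',\delta',r'}}[\mathbbm{1}_{x\in rB_d}/\ell_{\delta,r}(x)]\bigr)$, which the second part of \Cref{lem:annealing-average-local-conductance} bounds by $O(d/(\beta\delta^2)+1)$.
\end{itemize}
Since $\beta'\le\beta$ and $\delta\le\delta'$, every spectral-gap term is dominated by $d/(\beta'\delta^2)$.

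There is a bookkeeping point I expect to be the main obstacle. The two reflections act on different state spaces, $(K\cap rB_d)_\zeta$ and $(K\cap r'B_d)_\zeta$. The states must therefore be embedded in the common space $\C^{(RB_d)_\zeta}$, and $R_{\ket{\varphi}}$ extended to act as $-I$ on basis states outside $rB_d$; this extension costs nothing, because membership in $rB_d$ is query-free. One then checks that $\ket{\overline{\varphi}}$ decomposes into a part on $(K\cap rB_d)_\zeta$, which is orthogonal to $\ket{\varphi}$ and controlled as above, and a part outside, which costs zero. A further check is that the discrete sums in \Cref{thm:speedy-walk-reflection} converge to the continuous expectations as $\zeta\downarrow0$, which follows from the same argument as in \Cref{cor:speedy-walk-refl-distrs}.

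Summing the four terms gives $O\bigl(d/(\beta'\delta^2) + (\delta'/\delta)^d\bigr)$. Taken literally, though, \Cref{thm:two-state-ampl} gives a complexity linear in these costs, not the square root claimed. The square root should come from rebalancing. The construction in \Cref{cor:ampl-ampl} mixes the $\ket{\psi}$- and $\ket{\overline\psi}$-type costs, and because the reflection problems satisfy \cref{eq:balance-condition}, \Cref{lem:rebalancing} (as in \Cref{thm:balancing-plus-minus}) allows scaling the catalysts by $\alpha$ and $\alpha^{-1}$. With $O(1)$ cost on the stationary states and cost $S$ on the orthogonal states, this yields $O(\sqrt{S})$. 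Concretely, I will first rebalance $R_{\ket{\varphi}}$ and $R_{\ket{\chi}}$ so that each has cost $O(\sqrt{S})$ on both its $+1$ and $-1$ eigenstates. This is possible because $S$ is a known input-independent parameter bound, unlike the situation noted after \Cref{thm:rejection-sampling}. Feeding these rebalanced transducers into \Cref{thm:two-state-ampl} then gives the stated $O\bigl(\sqrt{d/(\beta'\delta^2)+(\delta'/\delta)^d}\bigr)$.
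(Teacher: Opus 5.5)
Your proposal is correct and follows the paper's proof essentially step for step. You use reflection transducers from \Cref{cor:speedy-walk-refl-distrs}, with the one for $\ket{\pi_{\beta,\delta,r,\zeta}}$ extended by $-I$ on $(r'B_d\setminus rB_d)_\zeta$ via parallel composition; the two bounds of \Cref{lem:annealing-average-local-conductance} for the orthogonal components; an input-independent rebalancing of each reflection (\Cref{thm:balancing-plus-minus}) to get the square roots; and finally \Cref{thm:two-state-ampl}, with the constant overlap from \Cref{lem:overlap} transferred to the grid as $\zeta \downarrow 0$. One point to make explicit: choose the rebalancing constant from the costs on the \emph{unnormalized} orthogonal vectors, since these are bounded independently of $K$. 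The prefactors $1-|\braket{\chi}{\varphi}|^4$ and $1-|\braket{\chi}{\varphi}|^2$ in \Cref{thm:two-state-ampl} then absorb the $K$-dependent normalization, as your remark on normalization indicates.
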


    \begin{proof}
        We define the (unnormalized) states
        \[\ket{\overline{\pi}} := \ket{\pi_{\beta',\delta',r',\zeta}} - \braket{\pi_{\beta,\delta,r,\zeta}}{\pi_{\beta',\delta',r',\zeta}}\ket{\pi_{\beta,\delta,r,\zeta}}, \quad \text{and} \quad \ket{\overline{\pi}'} := \ket{\pi_{\beta,\delta,r,\zeta}} - \braket{\pi_{\beta',\delta',r',\zeta}}{\pi_{\beta,\delta,r,\zeta}}\ket{\pi_{\beta',\delta',r',\zeta}}.\]
        Next, we use \Cref{cor:speedy-walk-refl-distrs} to obtain a transducer $U'$ that acts as
        \begin{align*}
            \ket{\pi_{\beta',\delta',r',\zeta}} \overset{U'}{\rightsquigarrow} \ket{\pi_{\beta',\delta',r',\zeta}}, \quad \text{with complexity} \quad & O(1), \\
            \ket{\overline{\pi}'} \overset{U'}{\rightsquigarrow} -\ket{\overline{\pi}'}, \quad \text{with complexity} \quad & O\left(\frac{d}{\beta'(\delta')^2} + \underset{x \sim \pi_{\beta.\delta,r}}{\E} \left[\frac{1}{\ell_{\delta',r'}(x)}\right]\right).
        \end{align*}
        Finally, we apply \Cref{lem:annealing-average-local-conductance} to the final transduction complexity, to obtain a term of $(\delta'/\delta)^d$.

        Conversely, we observe that the support of $\pi_{\beta,\delta,r,\zeta}$ is contained in $(rB_d)_{\zeta}$, whereas the support of $\ket{\overline \pi}$ can have overlap with $(r' B_d \setminus rB_d)_{\zeta}$. On the part of $\ket{\overline \pi}$ that is supported on $(r' B_d \setminus rB_d)_{\zeta}$, 
        we simply apply $-I$, which can be done with transduction complexity $0$, and we use \Cref{cor:speedy-walk-refl-distrs} to construct a transducer that reflects through $\ket{\pi_{\beta,\delta,r,\zeta}}$ on $(K \cap rB_d)_{\zeta}$. We compose these together using parallel composition (\Cref{thm:parallel-composition}) to obtain a transducer $U$ that acts as
        \begin{align*}
            \ket{\pi_{\beta,\delta,r,\zeta}} \overset{U}{\rightsquigarrow} \ket{\pi_{\beta,\delta,r,\zeta}}, \quad \text{with complexity} \quad & O\left(1\right), \\
            \ket{\overline{\pi}} \overset{U}{\rightsquigarrow} -\ket{\overline{\pi}}, \quad \text{with complexity} \quad & O\left(\frac{d}{\beta\delta^2} + \underset{x \sim \pi_{\beta',\delta',r'}}{\E} \left[\frac{\mathbbm{1}_{x \in rB_d}}{\ell_{\delta,r}(x)}\right]\right).
        \end{align*}
        Now, we use the second statement of \Cref{lem:annealing-average-local-conductance} to upper bound the final term with $O(1)$.

        Next, observe that both $U'$ and $U$ are reflection transducers, i.e., their transduction action is a reflection operator. This means we can use \Cref{thm:balancing-plus-minus} to rebalance their complexities with an input-independent constant. Since the complexities only depend on parameters of the walk, i.e., $\beta$, $\beta'$, $\delta$, $\delta'$, $r$, and $r'$, after rebalancing we obtain $V$ and $V'$ that act as
        \begin{align*}
            \ket{\pi_{\beta',\delta',r',\zeta}} \overset{V'}{\rightsquigarrow} \ket{\pi_{\beta',\delta',r',\zeta}}, \quad \text{with complexity} \quad & O\left(\sqrt{\frac{d}{\beta'(\delta')^2} + \left(\frac{\delta'}{\delta}\right)^d}\right), \\
            \ket{\overline{\pi}'} \overset{V'}{\rightsquigarrow} -\ket{\overline{\pi}'}, \quad \text{with complexity} \quad & O\left(\sqrt{\frac{d}{\beta'(\delta')^2} + \left(\frac{\delta'}{\delta}\right)^d}\right), \\
            \ket{\pi_{\beta,\delta,r,\zeta}} \overset{V}{\rightsquigarrow} \ket{\pi_{\beta,\delta,r,\zeta}}, \quad \text{with complexity} \quad & O\left(\sqrt{\frac{d}{\beta\delta^2} + 1}\right), \\
            \ket{\overline{\pi}} \overset{V}{\rightsquigarrow} -\ket{\overline{\pi}}, \quad \text{with complexity} \quad & O\left(\sqrt{\frac{d}{\beta\delta^2} + 1}\right).
        \end{align*}

        Finally, we plug $V$ and $V'$ into \Cref{thm:two-state-ampl} to obtain the transducer $A$ that implements the mapping $\ket{\pi_{\beta,\delta,r}} \rightsquigarrow \ket{\pi_{\beta',\delta',r'}}$. The final complexity then becomes the sum of the above expressions, divided by the overlap between the two states. To evaluate this, observe that $\pi_{\beta,\delta,r}$ and $\pi_{\beta',\delta',r'}$ are both Lipschitz and globally bounded on $K \cap rB_d$. Thus, we use a similar argument as in \Cref{thm:speedy-walk-reflection} to argue that
        \begin{align*}
            &|\braket{\pi_{\beta,\delta,r}}{\pi_{\beta',\delta',r'}} - \braket{\pi_{\beta,\delta,r,\zeta}}{\pi_{\beta',\delta',r',\zeta}}| \\
            &\quad = \left|\int_{K \cap rB_d} \sqrt{\pi_{\beta,\delta,r}(x)\pi_{\beta',\delta',r'}(x)} \;\mathrm{d}x - \zeta^d \sum_{x \in (K \cap rB_d)_{\zeta}} \sqrt{\pi_{\beta,\delta,r,\zeta}(x)\pi_{\beta',\delta',r',\zeta}(x)}\right| \downarrow 0, \qquad (\zeta \downarrow 0).
        \end{align*}
        As such, for a sufficiently small choice of $\zeta$, we obtain that the two are equal up to constants. Moreover, inner product of the continuous distributions is constant due to \Cref{lem:overlap}. Finally, since $\beta' \leq \beta$, and $\delta' \geq \delta$, the complexity statement follows.
    \end{proof}


    The above theorem 
    essentially tells us how to efficiently map $\ket{\pi_{\beta,\delta,r,\zeta}}$ to $\ket{\pi_{\beta',\delta',r',\zeta}}$ whenever the parameters are close to each other. The core idea of annealing is to combine a sequence of such operations 
    to go from any parameter setting to any other. Especially whenever the gap between $\delta$ and $\delta'$ is big, it is beneficial to do the annealing in multiple steps. This is the objective of the following theorem.

    \begin{theorem}
        \label{cor:anneal-small-steps}
        Let $d \in \N$, $R \geq 1$ and $B_d \subseteq K \subseteq RB_d$ be a convex body, accessible through a membership oracle. Let $0 \leq \beta' \leq \beta$, such that $\beta - \beta' \leq \max\{1/(2\sqrt{d}),1/(2R\sqrt{\beta})\}\beta$. Let $r \geq \max\{1,\sqrt{d/(2\beta)}\xi\}$, and $r' \geq \max\{r,\sqrt{d/(2\beta')}\xi\}$, where $\xi = 1 + \sqrt{1 + c\sqrt{\ln(20)}/\sqrt{d}}$, and $c$ is the constant in \cite[Corollary~5.2]{CV18}. Next, we let $0 < \delta \leq \min\{1/(4\beta r), 1/(2^{18}\sqrt{2d\beta}), 1/(2^{18}\sqrt{d})\}$ and $\delta \leq \delta' \leq \min\{1/(4\beta'r'), 1/(2^{18}\sqrt{2d\beta'}), 1/(2^{18}\sqrt{d})\}$. Then, we can implement a bidirectional canonical transducer $A$ that maps
        \[\ket{\pi_{\beta,\delta,r,\zeta}} \overset{A}{\rightsquigarrow} \ket{\pi_{\beta',\delta',r',\zeta}}, \qquad \text{with complexity} \qquad O\left(\sqrt{\frac{d}{\beta'\delta^2}} \cdot \left(1 + d\ln\left(\frac{\delta'}{\delta}\right)\right)\right).\]
    \end{theorem}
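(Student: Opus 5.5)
The plan is to break the single annealing step of \Cref{thm:anneal} into a chain of smaller steps, sequentially composed via \Cref{thm:sequential-composition}. The first step changes $\beta,r$ together with a tiny increase of $\delta$. The remaining steps only increase the step size, each by a factor $(1+1/d)$, so that the term $(\delta_{i+1}/\delta_i)^d \le e$ stays constant.

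Concretely, set $k := \lceil d\ln(\delta'/\delta)\rceil$ and define $\delta_i := \delta\cdot(\delta'/\delta)^{i/k}$ for $i \in \{0,\dots,k\}$. Then $\delta_{i+1}/\delta_i \le e^{1/d}$, hence $(\delta_{i+1}/\delta_i)^d \le e$.

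\textbf{First step.} Apply \Cref{thm:anneal} with parameters $(\beta,\delta_0,r)\to(\beta',\delta_1,r')$. We must check that $\delta_1$ satisfies the upper constraint $\delta_1 \le \min\{1/(4\beta'r'),\dots\}$. This holds since $\delta_1\le\delta'$. The cost is $O(\sqrt{d/(\beta'\delta^2)+e})$.

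\textbf{Remaining steps.} For each $i\ge1$, apply \Cref{thm:anneal} with $\beta=\beta'$, $r=r'$, and step sizes $\delta_i\to\delta_{i+1}$. The hypotheses hold trivially: $\beta-\beta'=0$, and $\delta_i,\delta_{i+1}$ both lie below the bound for $(\beta',r')$ because they are at most $\delta'$. Each step costs $O(\sqrt{d/(\beta'\delta_i^2)+e}) \subseteq O(\sqrt{d/(\beta'\delta^2)})$, using $\delta_i\ge\delta$. We also need $d/(\beta'\delta^2)\gtrsim 1$, which follows from $\delta\le 1/(2^{18}\sqrt{2d\beta})$ and $\beta'\le\beta$. (If $\beta'=0$ the expression is infinite and the bound is vacuous.)

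\textbf{Combining.} Summing over the $k+1$ steps by \Cref{thm:sequential-composition} gives
\[
O\!\left(\sqrt{\tfrac{d}{\beta'\delta^2}}\,(1+k)\right)=O\!\left(\sqrt{\tfrac{d}{\beta'\delta^2}}\,\bigl(1+d\ln(\delta'/\delta)\bigr)\right).
\]
One subtlety is that sequential composition requires all transducers to share the same public and private spaces. I would handle this by embedding everything in $\C^{(RB_d)_\zeta}$, padding with zeros, and by noting that \Cref{thm:anneal} yields canonical transducers on a common oracle. A second subtlety is that the grid spacing $\zeta$ must be small enough for every step simultaneously. This is fine because there are only finitely many steps, so we can take $\zeta$ to be the minimum of the finitely many thresholds.

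The step I expect to be the main obstacle is this bookkeeping on the hypotheses rather than the complexity sum. In particular, \Cref{thm:anneal} requires $\delta_{i+1}\ge\delta_i$ and requires $\delta_i$ to satisfy the bounds relative to $(\beta_i,r_i)$. For the first step, $\delta$ must obey the $(\beta,r)$ constraints while $\delta_1$ obeys the $(\beta',r')$ ones. I would verify that the constraint set for $(\beta',r')$ is not automatically weaker than that for $(\beta,r)$, since $r'\ge r$ while $\beta'\le\beta$. Because all intermediate $\delta_i\le\delta'$, which satisfies the $(\beta',r')$ constraints by assumption, this reduces to a direct check.
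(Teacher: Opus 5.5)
Your proposal is correct and follows essentially the paper's argument: chain \Cref{thm:anneal} via sequential composition with geometrically growing step sizes, each step costing $O(\sqrt{d/(\beta'\delta^2)})$, with $O(1+d\ln(\delta'/\delta))$ steps. The differences are cosmetic. The paper first changes $(\beta,r)$ at fixed $\delta$, then uses the coarser ratio $C^{1/d}$ with $C=d/\sqrt{\beta'\delta^2}\le d/(\beta'\delta^2)$, so $(\delta_{j+1}/\delta_j)^d$ is absorbed into the leading term and fewer steps are needed; your ratio $e^{1/d}$ gives the same stated bound, provided you treat $\delta'=\delta$ separately, since your $\delta_i$ are undefined when $k=0$.
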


    \begin{proof}
        We write $C = d/\sqrt{\beta'\delta^2}$, and we observe that $C \geq \sqrt{2}$. Next, we set $\delta_0 := \delta$, and we compute a sequence $\delta_{j+1} = C^{1/d}\delta_j$, and we let $k$ be the smallest integer such that $\delta_k \geq \delta'$. Now, we first use \Cref{thm:anneal} to map $\ket{\pi_{\beta,\delta,r,\zeta}}$ to $\ket{\pi_{\beta',\delta_0,r'}}$, and then we employ the same theorem iteratively to map $\ket{\pi_{\beta',\delta_j,r',\zeta}}$ to $\ket{\pi_{\beta',\delta_{j+1},r',\zeta}}$. In the final step, we map $\ket{\pi_{\beta',\delta_{k-1},r',\zeta}}$ to $\ket{\pi_{\beta',\delta',r',\zeta}}$. We observe that the total complexity is
        \[O\left(\sqrt{\frac{d}{\beta'\delta^2} + \left(C^{\frac1d}\right)^d}(1 + k)\right) \subseteq O\left(\sqrt{\frac{d}{\beta'\delta^2}}(1 + k)\right),\]
        and so it remains to upper bound the value of $k$. To that end, observe that
        \[C^{\frac{k-1}{d}} \leq \frac{\delta'}{\delta} \Leftrightarrow \frac{k-1}{d} \leq \frac{\ln\left(\frac{\delta'}{\delta}\right)}{\ln(C)}, \qquad \text{which implies} \qquad k \in O\left(d\ln\left(\frac{\delta'}{\delta}\right)\right).\qedhere\]
    \end{proof}

    \subsection{Mapping the speedy walk distribution to the Gibbs distribution}

    Now that we know how to prepare and reflect through the stationary distribution of the speedy walk, we turn to preparing and reflecting through an approximation of the Gibbs distribution. We follow the approach outlined by Cousins and Vempala~\cite[Section~6.4]{CV18}. The idea is to use a rejection-sampling routine that removes a tiny outer shell of the convex body, and scales up the inner part, on which the local conductance is essentially uniform.

    We introduce the following notion of a clipped version of the Gibbs distribution, where we post-select on the norm of the point $x$ being below some constant $r > 0$.

    \begin{definition}
        Let $d \in \N$ and $K \subseteq \R^d$ be a convex body. For any inverse temperature $\beta \in [0,\infty]$ and clipping radius $r > 0$, we define the Gibbs distribution clipped at radius $r$, denoted by $\pi_{\beta,r}$ to be the Gibbs distribution for the convex body $K \cap rB_d$ at inverse temperature $\beta$.
    \end{definition}

    Next we introduce the rejection sampling routine.

    \begin{theorem}
        \label{thm:speedy-to-ball}
        Let $d \in \N$, $R \geq 1$ and $B_d \subseteq K \subseteq RB_d$ be a convex body, accessible through a membership oracle on a grid with sufficiently small spacing parameter $\zeta > 0$. Let $\beta \geq 0$, $1 \leq r \leq R$, $\eta > 0$ and $0 < \delta \leq \min\{1/(4\beta r), 1/(4096\sqrt{2d\beta}), 1/(4096\sqrt{d\ln(10d/\eta)})\}$. For all $x \in K \cap rB_d$, let
        \[a(x) := \frac{\exp(-\frac{\beta}{\gamma^2}\norm{x}^2)}{\exp(-\beta\norm{x}^2)}\mathbbm{1}_{\frac{x}{\gamma} \in K}, \qquad \text{where} \qquad \gamma := 1 - \frac{1}{2d}.\]
        Then,
        \[\norm{\widetilde{\pi}_{\beta,r} - \pi_{\beta,r}}_{\mathrm{TV}} \leq \eta, \qquad \text{where} \qquad \widetilde{\pi}_{\beta,r}(x) := \frac{\gamma^d\pi_{\beta,\delta,r}(\gamma x)a(\gamma x)}{A}, \qquad \text{and} \qquad A := \underset{x \sim \pi_{\beta,\delta,r}}{\E} [a(x)].\]
        Let $\widetilde{\pi}_{\beta,r,\zeta/\gamma}$ be the renormalized discretized distribution of $\widetilde{\pi}_{\beta,r}$ on $(K \cap rB_d)_{\zeta/\gamma}$. Let $\ket{\overline{\psi}} \in \C^{(K \cap rB_d)_{\zeta/\gamma}}$ be orthogonal to $\ket{\widetilde{\pi}_{\beta,r,\zeta/\gamma}}$, and we define $\overline{\psi} \in \C^{(K \cap rB_d)_{\zeta/\gamma}}$ such that
        \[\ket{\overline{\psi}} = \sum_{x \in (K \cap rB_d)_{\zeta/\gamma}} \frac{\overline{\psi}(x)}{\sqrt{\widetilde{\pi}_{\beta,r,\zeta/\gamma}(x)}} \ket{x}.\]
        Then, we can construct bidirectional canonical transducers $C$ and $V$ that map
        \begin{align*}
            \ket{\pi_{\beta,\delta,r,\zeta}} \overset{C}{\rightsquigarrow} \ket{\widetilde{\pi}_{\beta,r,\zeta/\gamma}}, & \qquad \text{with complexity} \qquad O\left(\sqrt{\frac{d}{\beta\delta^2}}\right), \\
            \ket{\widetilde{\pi}_{\beta,r,\zeta/\gamma}} \overset{V}{\rightsquigarrow} \ket{\widetilde{\pi}_{\beta,r,\zeta/\gamma}}, & \qquad \text{with complexity} \qquad O\left(1\right), \\
            \ket{\overline{\psi}} \overset{V}{\rightsquigarrow} -\ket{\overline{\psi}}, & \qquad \text{with complexity} \qquad O\left(\frac{d}{\beta\delta^2} + \sum_{x \in (\gamma(K \cap rB_d))_{\zeta}} \frac{|\overline{\psi}(x/\gamma)|^2}{\pi_{\beta,\delta,r,\zeta}(x)\ell_{\delta,r,\zeta}(x)}\right).
        \end{align*}
    \end{theorem}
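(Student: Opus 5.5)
The plan is to split the statement into its distributional part and its transducer part. The first is a continuous-space rejection-sampling estimate that we import from Cousins and Vempala. The second is a direct instantiation of \Cref{thm:rejection-sampling}, with the reflection through the speedy-walk stationary state supplied by \Cref{thm:speedy-walk-reflection} and \Cref{cor:speedy-walk-refl-distrs}.

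\textbf{Total variation bound.} By the change of variables $y=\gamma x$, $\widetilde{\pi}_{\beta,r}$ is exactly the law of $y/\gamma$, where $y \sim \pi_{\beta,\delta,r}$ is accepted with probability $a(y)$. Since the density of $\pi_{\beta,\delta,r}$ is proportional to $\ell_{\delta,r}(y)e^{-\beta\norm{y}^2}$, the accepted law of $x=y/\gamma$ has density proportional to $\ell_{\delta,r}(\gamma x)e^{-\beta\norm{x}^2}\mathbbm{1}_{x\in K}$.

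Two facts then control the distance to $\pi_{\beta,r}$:
\begin{itemize}
\item $\ell_{\delta,r}(\gamma x)=1$ for all but an $\eta$-fraction of the $\pi_{\beta,r}$-mass. This holds because shrinking by $\gamma=1-1/(2d)$ moves points at distance $\Omega(1/d)$ into the interior. With $\delta \le 1/(4096\sqrt{d\ln(10d/\eta)})$, a ball of radius $\delta$ around $\gamma x$ lies in $K$ except on a set of small Gaussian mass. This is the content of \cite[Lemma~6.13]{CV18}, possibly with a clipping adjustment via \Cref{lem:volume-bound}.
\item The acceptance probability $A$ is $\Omega(1)$. This uses $\gamma^{-2}-1 = O(1/d)$ together with $\E_{\pi_\beta}[\beta\norm{x}^2] \le d/2$ from \Cref{thm:variance-bounds}.
\end{itemize}
I expect this step to be the main obstacle. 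The cited CV18 lemma is phrased for their exact $\delta$ and unclipped body, so it has to be rechecked with the $\ln(10d/\eta)$ factor and the clipping at radius $r$. My first attempt would be to bound the total variation by the $\pi_{\beta,r}$-mass of the set $\{x : \ell_{\delta,r}(\gamma x)<1\}$, normalized by $A$.

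\textbf{Transducers $C$ and $V$.} Apply \Cref{thm:rejection-sampling} with $\pi=\pi_{\beta,\delta,r,\zeta}$ on $(K\cap rB_d)_\zeta$ and acceptance $a$. The acceptance $a$ is computable with one membership query at $x/\gamma$ on the grid $\zeta/\gamma$, so it costs $O(1)$ transduction complexity.

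For the reflection $U$ through $\ket{\pi_{\beta,\delta,r,\zeta}}$, use \Cref{cor:speedy-walk-refl-distrs}:
\begin{itemize}
\item $W(U,\ket{\pi})=O(1)$.
\item $W(U,\ket{\overline\pi})=O(d/(\beta\delta^2)+\E[1/\ell])$ for the state $\ket{\overline{\pi}}$ of \Cref{thm:rejection-sampling}. That state has entries $\sqrt{\pi}(A-a)/\sqrt{A(1-A)}$ with $A=\Omega(1)$, so $|\overline\pi(x)|^2/\pi(x) = O(\pi(x)/(1-A))$ and the $\E[1/\ell]$ term is $O(1)$ by \cite[Lemma~6.11]{CV18}.
\end{itemize}

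Before applying \Cref{thm:rejection-sampling}, rebalance $U$ with \Cref{thm:balancing-plus-minus}; the balancing factor depends only on $\beta,\delta$, so no input-dependent knowledge is needed. After rebalancing, both costs are $O(\sqrt{d/(\beta\delta^2)})$, and the bound for $C$ follows because $A=\Omega(1)$.

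For $V$, the complexity on the target state is $O(W(U,\ket{\pi})/A)=O(1)$, using the unbalanced $U$. On $\ket{\overline\psi}$, the cost is $A$ times $W(U,\cdot)$ evaluated on the pulled-back vector. Expanding that cost via \Cref{thm:speedy-walk-reflection} gives exactly the stated sum, after reindexing $x\mapsto x/\gamma$, which identifies the grid $(K\cap rB_d)_{\zeta/\gamma}$ with $(\gamma(K\cap rB_d))_\zeta$.

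Finally, pass from the discrete quantities to the continuous ones (the acceptance rate $A$ and the expectations) using \Cref{lem:smoothness}, exactly as in the proof of \Cref{thm:speedy-walk-reflection}, taking $\zeta$ sufficiently small.
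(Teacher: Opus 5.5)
This is the paper's proof: as in the paper, you import the total-variation bound from \cite[Lemma~6.13]{CV18}, rebalance the speedy-walk reflection by the input-independent factor $\alpha=\sqrt{d/(\beta\delta^2)}$ before feeding it into \Cref{thm:rejection-sampling} to get $C$, use the unbalanced reflection for $V$, and relabel $x\mapsto x/\gamma$ (the paper's isometry $S$). You also note that evaluating $a$ costs a membership query, which \Cref{thm:rejection-sampling} assumes away; the paper's proof skips this, and your $O(1)$ overhead for it is correct.

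Several of your supporting claims need correcting, though.

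\begin{itemize}
\item The cost on $\ket{\overline{\pi}}$ comes from \Cref{thm:speedy-walk-reflection}, not \Cref{cor:speedy-walk-refl-distrs}. Your computation of $|\overline{\pi}(x)|^2/\pi(x)$ already uses the former, and the corollary only covers orthogonal vectors built from probability distributions.

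\item Your heuristic that $\ell_{\delta,r}(\gamma x)=1$ off an $\eta$-fraction of the mass is false once $\delta>1/(2d)$. The largest allowed $\delta\approx 1/(4096\sqrt{d\ln(10d/\eta)})$ exceeds $1/(2d)$ for large $d$. For a slab $\{|x_1|\leq 1\}$ with $\beta=O(1)$, the set $\{x:\ell_{\delta,r}(\gamma x)<1\}$ then has mass of order $\delta$, which exceeds $\eta$ once $\eta\ll\delta$. So your ``first attempt'' fails. What holds, and what the cited lemma supplies, is only that $1-\ell_{\delta,r}(\gamma x)$ is tiny on average. The reason is that a hyperplane at distance $1/(2d)$ cuts off only an $\exp(-\Omega(1/(d\delta^2)))$ fraction of a $\delta$-ball, and this is where the $\ln(10d/\eta)$ in $\delta$ enters.

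\item Your Jensen argument for $A$ ignores the indicator $\mathbbm{1}_{x/\gamma\in K}$ and the $\ell$-weighting of $\pi_{\beta,\delta,r}$. A correct replacement: substitute $x=\gamma y$ and use $\ell_{\delta,r}(\gamma y)\geq\gamma^d\ell_{\delta,r}(y)$, which holds by convexity since $0\in K\cap rB_d$. This gives $A\geq\gamma^{2d}\,\E_{y\sim\pi_{\beta,r}}[\ell_{\delta,r}(y)]\geq 1/8$, using the average local conductance bound $\lambda\geq 1/2$ of \cite{CV18} as in \Cref{lem:annealing-average-local-conductance}.

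\item Your clipping worry is justified, and \Cref{lem:volume-bound} cannot resolve it, since it only controls ratios up to constants. Because $a$ tests $x/\gamma\in K$ rather than $x/\gamma\in K\cap rB_d$, $\widetilde{\pi}_{\beta,r}$ is supported on $K\cap(r/\gamma)B_d$. For general $1\leq r\leq R$ a constant fraction of it can lie in the shell $r<\norm{x}\leq r/\gamma$; for example, with $r=\beta=1$ and $K=RB_d$, $R\geq 2$, roughly $1-e^{-1/2}$ of the mass sits there for large $d$. You must therefore either test membership in $K\cap rB_d$, so that the cited lemma applies to the convex body $K\cap rB_d\supseteq B_d$, or restrict to the radii $r\approx\sqrt{d/(2\beta)}\,\xi$ used later, where the shell is negligible by \Cref{lem:clipping-perturbation}. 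The paper's proof leaves this implicit as well.
\end{itemize}
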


    \begin{proof}
        First, we observe from \cite[Lemma~6.13]{CV18} that the acceptance probability of this rejection sampling routine is at least $9/20$, i.e., $A \geq 9/20$. The bound on the total-variation distance between $\widetilde{\pi}_{\beta,r}$ and $\pi_{\beta,r}$ is also proved in \cite[Lemma~6.13]{CV18}.
        
        In the discretized setting, we observe that the total acceptance probability is
        \[A_{\zeta} := \underset{x \sim \pi_{\beta,\delta,r,\zeta}}{\E} [a(x)].\]
        Thus, we obtain that
        \[|A - A_{\zeta}| = \left|\underset{x \sim \pi_{\beta,\delta,r}}{\E} [a(x)] - \underset{x \sim \pi_{\beta,\delta,r,\zeta}}{\E} [a(x)]\right| = \left|\int_{K \cap rB_d} \pi_{\beta,\delta,r}(x)a(x) \;\mathrm{d}x - \sum_{x \in (K \cap rB_d)_{\zeta}} \pi_{\beta,\delta,r,\zeta}(x)a(x)\right| \downarrow 0,\]
        as $\zeta \downarrow 0$, where we used a similar argument as in \Cref{thm:speedy-walk-reflection}. Thus, $A$ and $A_{\zeta}$ are equal up to constants for sufficiently small choices of $\zeta$. Moreover, we observe that the resulting discretized probability distribution satisfies $\pi_{\zeta}^{(a)}(x) := \pi_{\beta,\delta,r,\zeta}(x)a(x)/A_{\zeta} = \widetilde{\pi}_{\beta,r,\zeta/\gamma}(x/\gamma)$.

        We take the speedy-walk reflection transducer $U$ from \Cref{thm:speedy-walk-reflection}. Since this is a reflection transducer, we can use \Cref{thm:balancing-plus-minus} to rebalance the witness sizes with an input-independent constant $\alpha > 0$. Consequently, we obtain the transducer $U_{\alpha}$ that acts as
        \begin{align*}
            \ket{\pi_{\beta,\delta,r,\zeta}} \overset{U_{\alpha}}{\rightsquigarrow} \ket{\pi_{\beta,\delta,r,\zeta}}, & \qquad \text{with complexity} \qquad O(\alpha), \\
            \ket{\overline{\pi}} \overset{U_{\alpha}}{\rightsquigarrow} -\ket{\overline{\pi}}, & \qquad \text{with complexity} \qquad O\left(\frac{d\norm{\ket{\overline{\pi}}}^2}{\alpha\beta\delta^2} + \sum_{x \in (K \cap rB_d)_{\zeta}} \frac{|\overline{\pi}(x)|^2}{\alpha\pi_{\beta,\delta,r,\zeta}(x)\ell_{\delta,r,\zeta}(x)}\right),
        \end{align*}
        where $\ket{\overline{\pi}} \in \C^{(K \cap rB_d)_{\zeta}}$ is such that $\braket{\overline{\pi}}{\pi_{\beta,\delta,r}} = 0$ and
        \[\ket{\overline{\pi}} = \sum_{x \in (K \cap rB_d)_{\zeta}} \frac{\overline{\pi}(x)}{\sqrt{\pi_{\beta,\delta,r,\zeta}(x)}} \ket{x}.\]

        Next, we use rejection sampling (\Cref{thm:rejection-sampling}) to obtain bidirectional canonical transducers $C_{\alpha}$ and $V_{\alpha}$ that act as
        \begin{align*}
            \ket{\pi_{\beta,\delta,r,\zeta}} \overset{C_{\alpha}}{\rightsquigarrow} \ket{\pi_{\zeta}^{(a)}}, & \qquad \text{with complexity} \qquad O\left(W(U_{\alpha}, \ket{\pi_{\beta,\delta,r,\zeta}}) + (1-A_{\zeta}) \cdot W(U_{\alpha}, \ket{\overline{\pi}})\right), \\
            \ket{\pi_{\zeta}^{(a)}} \overset{V_{\alpha}}{\rightsquigarrow} \ket{\pi_{\zeta}^{(a)}}, & \qquad \text{with complexity} \qquad O(W(U_{\alpha}, \ket{\pi_{\beta,\delta,r,\zeta}})), \\
            \ket{\overline{\pi}^{(a)}} \overset{V_{\alpha}}{\rightsquigarrow} -\ket{\overline{\pi}^{(a)}}, & \qquad \text{with complexity} \qquad O\left(W\left(U_{\alpha}, \sum_{x \in (\gamma(K \cap rB_d))_{\zeta}} \frac{\overline{\pi}^{(a)}(x)}{\sqrt{\pi_{\beta,\delta,r,\zeta}(x)}} \ket{x}\right)\right),
        \end{align*}
        where $\ket{\overline{\pi}^{(a)}} \in \C^{(\gamma(K \cap rB_d))_{\zeta}}$ is such that $\braket{\overline{\pi}^{(a)}}{\pi_{\zeta}^{(a)}} = 0$ and $\overline{\pi}^{(a)} \in \C^{(\gamma(K \cap rB_d))_{\zeta}}$ such that
        \[\ket{\overline{\pi}^{(a)}} = \sum_{x \in (\gamma(K \cap rB_d))_{\zeta}} \frac{\overline{\pi}^{(a)}(x)}{\sqrt{\pi_{\zeta}^{(a)}(x)}} \ket{x}, \qquad \text{and} \qquad \ket{\overline{\pi}} = \sum_{x \in (K \cap rB_d)_{\zeta}} \sqrt{\pi_{\beta,\delta,r,\zeta}(x)} \cdot \frac{A_{\zeta} - a(x)}{\sqrt{A_{\zeta}(1-A_{\zeta})}}.\]

        We now analyze the transduction complexity of $C_{\alpha}$. First, observe that $W(U_{\alpha}, \ket{\pi_{\beta,\delta,r,\zeta}}) \in O(a)$. Next, we have $\overline{\pi}(x) = \pi_{\beta,\delta,r,\zeta}(x) \cdot (A_{\zeta} - a(x))/\sqrt{A_{\zeta}(1-A_{\zeta})}$, and so
        \[\norm{\ket{\overline{\pi}}}^2 = \underset{x \sim \pi_{\beta,\delta,r,\zeta}}{\E} \left[\frac{(A_{\zeta} - a(x))^2}{A_{\zeta}(1-A_{\zeta})}\right] = \frac{\underset{x \sim \pi_{\beta,\delta,r,\zeta}}{\Var} \left[a(x)\right]}{A_{\zeta}(1-A_{\zeta})} \leq 1.\]
        We also observe that
        \[\sum_{x \in (K \cap rB_d)_{\zeta}} \frac{|\overline{\pi}(x)|^2}{\alpha\pi_{\beta,\delta,r,\zeta}(x)\ell_{\delta,r,\zeta}(x)} = \frac{1}{\alpha A_{\zeta}(1-A_{\zeta})} \sum_{x \in (K \cap rB_d)_{\zeta}} \frac{(A_{\zeta} - a(x))^2\pi_{\beta,\delta,r,\zeta}(x)}{\ell_{\delta,r,\zeta}(x)} \leq \frac{\underset{x \sim \pi_{\beta,\delta,r,\zeta}}{\E} \left[\frac{1}{\ell_{\delta,r,\zeta}(x)}\right]}{\alpha A_{\zeta}(1-A_{\zeta})}.\]
        The latter expectation value is $O(1)$ due to \cite[Lemma~6.11]{CV18}, and the same argument as in \Cref{thm:speedy-walk-reflection} relating the discrete setting to the continuous one. Thus,
        \[W(U_{\alpha}, \ket{\overline{\pi}}) \in O\left(1 + \frac{1}{\alpha(1-A_{\zeta})}\right),\]
        and so we let $C' := C_{\alpha}$ where $\alpha = \sqrt{d/(\beta\delta^2)}$, to obtain that
        \[\ket{\pi_{\beta,\delta,r,\zeta}} \overset{C'}{\rightsquigarrow} \ket{\pi_{\zeta}^{(a)}}, \quad \text{with complexity} \quad O\left(\alpha + (1-A_{\zeta}) \cdot \left(\frac{d}{\alpha\beta\delta^2} + 1 + \frac{1}{\alpha(1-A)}\right)\right) \subseteq O\left(\sqrt{\frac{d}{\beta\delta^2}}\right).\]

        Next, we turn to the transduction complexity of $V_{\alpha}$. To that end, observe that
        \[\norm{\sum_{x \in (K \cap rB_d)_{\zeta}} \frac{\overline{\pi}^{(a)}(x)}{\sqrt{\pi_{\beta,\delta,r,\zeta}(x)}}\ket{x}}^2 \leq \frac{1}{A_{\zeta}} \norm{\sum_{x \in (K \cap rB_d)_{\zeta}} \frac{\overline{\pi}^{(a)}(x)}{\sqrt{\pi^{(a)}(x)}} \ket{x}}^2 \in O(1).\]
        Thus, we choose $V' := V_1$ to obtain that
        \begin{align*}
            \ket{\pi_{\zeta}^{(a)}} \overset{V'}{\rightsquigarrow} \ket{\pi_{\zeta}^{(a)}}, & \qquad \text{with complexity} \qquad O(1), \\
            \ket{\overline{\pi}_{\zeta}^{(a)}} \overset{V'}{\rightsquigarrow} -\ket{\overline{\pi}_{\zeta}^{(a)}}, & \qquad \text{with complexity} \qquad O\left(\frac{d}{\beta\delta^2} + \sum_{x \in (\gamma(K \cap rB_d))_{\zeta}} \frac{|\overline{\pi}^{(a)}(x)|^2}{\pi_{\beta,\delta,r,\zeta}(x)\ell_{\delta,r,\zeta}(x)}\right).
        \end{align*}
        
        Finally, we consider the isometry that scales the space by $\gamma$, i.e., $S : \C^{(\gamma(K \cap rB_d))_{\zeta}} \to \C^{(K \cap rB_d)_{\zeta/\gamma}}$, defined as $S : \ket{x} \mapsto \ket{x/\gamma}$. It is immediate that $S$ maps $\ket{\pi^{(a)}_{\zeta}}$ to $\ket{\widetilde{\pi}_{\beta,r,\zeta/\gamma}}$. Thus, the transducers $C = SC'$ and $V = SV'S^{\dagger}$ act as in the theorem statement.
        
        It remains to analyze the transduction complexity of $V$ acting on $\ket{\overline{\psi}}$. To that end, observe that $W(V,\ket{\overline{\psi}}) = W(V', S^{\dagger}\ket{\overline{\psi}})$, and so
        \[S^{\dagger}\ket{\overline{\psi}} = \sum_{x \in (\gamma(K \cap rB_d))_{\zeta}} \frac{\overline{\psi}(x/\gamma)}{\sqrt{\widetilde{\pi}_{\beta,r,\zeta/\gamma}(x/\gamma)}} \ket{x} = \sum_{x \in (\gamma(K \cap rB_d))_{\zeta}} \frac{\overline{\psi}(x/\gamma)}{\sqrt{\pi_{\zeta}^{(a)}(x)}} \ket{x},\]
        and so we obtain the complexity by setting $\overline{\pi}^{(a)}(x) = \overline{\psi}(x/\gamma)$.
    \end{proof}

    The transducers we constructed allow us to prepare and reflect around approximations to the Gibbs distribution at a given inverse temperature $\beta$. However, when $\beta = 0$, the complexities blow up. To resolve this, we follow the proof of \cite[Theorem~1.2]{CV18} and use a different rejection sampling routine.

    \begin{corollary}
        \label{cor:speedy-to-uniform}
        Let $d \in \N$, $R \geq 1$ and $B_d \subseteq K \subseteq RB_d$ be a convex body, accessible through a membership oracle on a grid with sufficiently small spacing parameter $\zeta > 0$. Let $0 \leq \beta \leq 1/R^2$, $1 \leq r \leq R$, $0 < \eta < 1/(2e)$ and $0 < \delta \leq \min\{1/(4\beta r), 1/(4096\sqrt{2d\beta}), 1/(4096\sqrt{d\ln(10d/\eta)})\}$. For all $x \in K \cap rB_d$, let $a(x) := \exp(\beta\norm{x}^2-1)$. Let $\gamma$, $\widetilde{\pi}_{\beta,r}$ and $\widetilde{\pi}_{\beta,r,\zeta/\gamma}$ be as in~\Cref{thm:speedy-to-ball}. Then,
        \[\norm{\widetilde{\pi}_{0,r} - \pi_{0,r}}_{\mathrm{TV}} < 10\eta, \qquad \text{where} \qquad \widetilde{\pi}_{0,r}(x) := \frac{\widetilde{\pi}_{\beta,r}(x)a(x)}{A}, \qquad \text{and} \qquad A := \underset{x \sim \widetilde{\pi}_{\beta,r}}{\E} [a(x)].\]
        Furthermore, let $\widetilde{\pi}_{0,r,\zeta/\gamma}$ be the normalized discretized version of $\widetilde{\pi}_{0,r}$ on $(K \cap rB_d)_{\zeta/\gamma}$. Let $\ket{\overline{\varphi}} \in \C^{(K \cap rB_d)_{\zeta/\gamma}}$ be orthogonal to $\ket{\widetilde{\pi}_{0,r,\zeta/\gamma}}$. We let $\overline{\varphi} \in \C^{(K \cap rB_d)_{\zeta/\gamma}}$ be such that
        \[\ket{\overline{\varphi}} = \int_{K \cap rB_d} \frac{\overline{\varphi}(x)}{\sqrt{\widetilde{\pi}_{0,r,\zeta/\gamma}(x)}} \ket{x} \;\mathrm{d}x.\]
        Then, we can construct bidirectional canonical transducers $C$ and $U$ that act as
        \begin{align*}
            \ket{\widetilde{\pi}_{\beta,r,\zeta/\gamma}} \overset{C}{\rightsquigarrow} \ket{\widetilde{\pi}_{0,r,\zeta/\gamma}}, & \qquad \text{with complexity} \qquad O\left(\sqrt{\frac{d}{\beta\delta^2}}\right), \\
            \ket{\widetilde{\pi}_{0,r,\zeta/\gamma}} \overset{U}{\rightsquigarrow} \ket{\widetilde{\pi}_{0,r,\zeta/\gamma}}, & \qquad \text{with complexity} \qquad O(1), \\
            \ket{\overline{\varphi}} \overset{U}{\rightsquigarrow} -\ket{\overline{\varphi}}, & \qquad \text{with complexity} \qquad O\left(\frac{d\norm{\ket{\overline{\varphi}}}^2}{\beta\delta^2} + \sum_{x \in (\gamma(K \cap rB_d))_{\zeta}} \frac{|\overline{\varphi}(x/\gamma)|^2}{\pi_{\beta,\delta,r,\zeta}(x)\ell_{\delta,r,\zeta}(x)}\right).
        \end{align*}
    \end{corollary}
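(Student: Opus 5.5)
The plan is to mirror the proof of \Cref{thm:speedy-to-ball}, now with the reflection transducer $V$ from that theorem in the role of the base reflection and the new acceptance function $a(x)=\exp(\beta\norm{x}^2-1)$. The TV bound comes first and is purely classical. Since $\beta\le 1/R^2$ and $\norm{x}\le r\le R$, we have $a(x)\in[e^{-1},1]$. Writing $\widetilde{\pi}_{0,r}\propto \widetilde{\pi}_{\beta,r}\,a$ and $\pi_{0,r}\propto\pi_{\beta,r}\,a$ (the latter because $\pi_{\beta,r}(x)e^{\beta\norm{x}^2}$ is uniform on $K\cap rB_d$), reweighting a pair of distributions at TV distance $\eta$ by a function with values in $[1/e,1]$ gives TV distance at most $2e\eta<10\eta$. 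Here we use \Cref{thm:speedy-to-ball} for $\norm{\widetilde{\pi}_{\beta,r}-\pi_{\beta,r}}_{\mathrm{TV}}\le\eta$. The same bounds give $A\ge 1/e$, and by the discretization argument used in \Cref{thm:speedy-walk-reflection}, the discrete acceptance probability $A_\zeta$ is also $\Omega(1)$ for $\zeta$ small enough.

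Next come the transducers. Let $V$ be the reflection transducer from \Cref{thm:speedy-to-ball}. Its complexity is $O(1)$ on $\ket{\widetilde{\pi}_{\beta,r,\zeta/\gamma}}$, and on orthogonal states $\ket{\overline\psi}$ it is given by the displayed expression. We rebalance $V$ with \Cref{thm:balancing-plus-minus} using an input-independent $\alpha$ (here $\alpha=\sqrt{d/(\beta\delta^2)}$ for $C$ and $\alpha=1$ for $U$), exactly as before. We then apply \Cref{thm:rejection-sampling} with distribution $\widetilde{\pi}_{\beta,r,\zeta/\gamma}$ and acceptance $a$, which yields $C$ and $U$.

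For the complexity of $C$, we need $W(V_\alpha,\ket{\overline\pi})$ for $\ket{\overline\pi}=\sum_x\sqrt{\widetilde{\pi}(x)}\,(A_\zeta-a(x))/\sqrt{A_\zeta(1-A_\zeta)}\ket{x}$. This state has norm at most $1$, and its ``$\overline\psi$-weights'' satisfy $|\overline\psi(x)|^2/\widetilde{\pi}(x)\le \widetilde{\pi}(x)/(A_\zeta(1-A_\zeta))$. The sum term in $V$'s complexity therefore reduces to an expectation of $1/\ell_{\delta,r,\zeta}$ against the rejection-sampled speedy-walk distribution, with extra factor $\gamma$-rescaling and $a\le1$. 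That expectation is bounded by $\E_{\pi_{\beta,\delta,r}}[1/\ell_\delta]/A'$, and it is $O(1)$ by \cite[Lemma~6.11]{CV18} and the discretization limit. Choosing $\alpha=\sqrt{d/(\beta\delta^2)}$ then balances the terms to $O(\sqrt{d/(\beta\delta^2)})$.

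For $U$, \Cref{thm:rejection-sampling} gives complexity $O(1)$ on the target, since $A_\zeta=\Omega(1)$. On an orthogonal $\ket{\overline\varphi}$, the reflection $V$ is invoked on $\sum_x \overline{\varphi}(x)/\sqrt{\widetilde{\pi}_{\beta,r,\zeta/\gamma}(x)}\ket{x}$ times $A$. Because $a\in[1/e,1]$, the densities $\widetilde{\pi}_{\beta,r,\zeta/\gamma}$ and $\widetilde{\pi}_{0,r,\zeta/\gamma}$ are pointwise within constant factors, so substituting into $V$'s orthogonal-complexity bound gives the claimed expression up to constants. The norm term contributes $d\norm{\ket{\overline\varphi}}^2/(\beta\delta^2)$.

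The main obstacle is this last substitution. The state fed to $V$ need not be orthogonal to $\ket{\widetilde{\pi}_{\beta,r,\zeta/\gamma}}$, so I would decompose it into its parallel and orthogonal components. I would bound the parallel component's $O(1)$ cost by its norm squared, which is $O(\norm{\ket{\overline\varphi}}^2)$, and bound the orthogonal part via $|\overline\psi|^2\le 2|\text{unprojected}|^2+2|\text{projection}|^2$. The projection term then yields only an $O(\E[1/\ell])=O(1)\cdot\norm{\ket{\overline\varphi}}^2$ contribution, which is absorbed into the stated bound.
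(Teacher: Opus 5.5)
Your proposal is correct and follows the paper's proof. In both, the TV bound comes from reweighting by $a\in[1/e,1]$ (the paper does the same estimate explicitly through $A$ and $A'$). In both, the transducers come from feeding the rebalanced reflection $V$ of \Cref{thm:speedy-to-ball} into \Cref{thm:rejection-sampling}, with the final complexity read off by taking $\overline\psi=\sqrt{A_\zeta}\,\overline\varphi$.

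The obstacle you flag at the end does not arise, so the decomposition is unnecessary. Orthogonality of $\ket{\overline\varphi}$ to $\ket{\widetilde\pi_{0,r,\zeta/\gamma}}$ is equivalent to $\sum_x\overline\varphi(x)=0$. Hence the state handed to $V$, namely $\sqrt{A_\zeta}\sum_x \overline\varphi(x)/\sqrt{\widetilde\pi_{\beta,r,\zeta/\gamma}(x)}\,\ket{x}$, is automatically orthogonal to $\ket{\widetilde\pi_{\beta,r,\zeta/\gamma}}$.
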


    \begin{proof}
        First, we observe that $a(x) \geq 1/e$ everywhere, so $A \geq 1/e$. Next, observe that
        \[\pi_{0,r}(x) = \frac{\pi_{\beta,r}(x)a(x)}{A'}, \qquad \text{where} \qquad A' := \underset{x \sim \pi_{\beta,r}}{\E} [a(x)].\]
        Moreover, we have $|A - A'| \leq \norm{\widetilde{\pi}_{\beta,r} - \pi_{\beta,r}}_{\mathrm{TV}} \leq \eta$, and so $A' \geq 1/e - 1/(2e) = 1/(2e)$. Thus,
        \begin{align*}
            \norm{\widetilde{\pi}_{0,r} - \pi_{0,r}}_{\mathrm{TV}} &= \frac12 \int_{K \cap rB_d} |\widetilde{\pi}_{0,r}(x) - \pi_{0,r}(x)| \;\mathrm{d}x = \frac12 \int_{K \cap rB_d} \left|\frac{\widetilde{\pi}_{\beta,r}(x)}{A} - \frac{\pi_{\beta,r}(x)}{A'}\right|a(x) \;\mathrm{d}x \\
            &\leq \frac12 \int_{K \cap rB_d} \left|\frac{\widetilde{\pi}_{\beta,r}(x)}{A} - \frac{\pi_{\beta,r}(x)}{A}\right| + \left|\frac{\pi_{\beta,r}(x)}{A} - \frac{\pi_{\beta,r}(x)}{A'}\right| \;\mathrm{d}x = \frac{\norm{\widetilde{\pi}_{\beta,r} - \pi_{\beta,r}}_{\mathrm{TV}}}{A} + \frac{|A - A'|}{2AA'} \\
            &\leq e\eta + e^2\eta < 10\eta.
        \end{align*}
        
        Next, we run an analogous argument as in \Cref{thm:speedy-to-ball} to generate the transducers $C$ and $U$ from the corollary statement. The only difference is that we base the construction on $U$ from \Cref{thm:speedy-to-ball}, rather than from \Cref{thm:speedy-walk-reflection}. All analyses carry over except for the final complexity. There, we observe that
        \[\ket{\overline{\varphi}} = \sum_{x \in (K \cap rB_d)_{\zeta/\gamma}} \frac{\overline{\varphi}(x)}{\sqrt{\widetilde{\pi}_{0,r,\zeta/\gamma}(x)}} \ket{x} = \sum_{x \in (K \cap rB_d)_{\zeta/\gamma}} \frac{\overline{\varphi}(x)}{\sqrt{\frac{\widetilde{\pi}_{\beta,r,\zeta/\gamma}(x)a(x)}{A}}} \ket{x} \mapsto \sum_{x \in (K \cap rB_d)_{\zeta/\gamma}} \frac{\sqrt{A_{\zeta}} \cdot \overline{\varphi}(x)}{\sqrt{\widetilde{\pi}_{\beta,r,\zeta/\gamma}(x)}} \ket{x},\]
        and so we take $\overline{\psi}(x) = \sqrt{A_{\zeta}} \cdot \overline{\varphi}(x)$ in \Cref{thm:speedy-to-ball}. Since $A_{\zeta} \leq 1$, this only affects the final term in the complexity statement up to a constant.
    \end{proof}

    \subsection{Uniform sampling from a convex body}
    \label{subsec:sampling}

    We are now ready to present a transducer that anneals through the inverse temperature domain and samples uniformly from a convex body. We start by analyzing the length of the cooling schedule.
    
    \begin{lemma}
        \label{lem:schedule-length}        
        Let $d \in \N$, $R \geq 1$. Let $\beta_0 \geq d/2$, and recursively define
        \[\beta_{j+1} = \beta_j - \max\left\{\frac{1}{8\sqrt{d}}, \frac{1}{8R\sqrt{\beta_j}}\right\}\beta_j.\]
        Let $\ell_1 := \min\{j \in \N : \beta_j \leq d/R^2\}$, and $\ell_2 := \min\{j \in \N : \beta_{j+\ell_1} \leq 1/(8R)^2\}$. For all $0 \leq a < b < \ell_1$, we have
        \begin{equation}
            \label{eq:stage-1}
            b-a \leq 16\sqrt{d}\ln\frac{\beta_a}{\beta_b}, \qquad \text{and} \qquad \sum_{j=a}^{b-1} \frac{1}{\sqrt{\beta_j}} \leq \frac{16\sqrt{d}}{\sqrt{\beta_{b-1}}}\ln\frac{\beta_a}{\beta_b}.
        \end{equation}
        For all $\ell_1 \leq a < b < \ell_1 + \ell_2$, we have
        \begin{equation}
            \label{eq:stage-2}
            b - a \leq 16R\sqrt{\beta_a} \ln\frac{\beta_a}{\beta_b}, \qquad \text{and} \qquad \sum_{j=a}^{b-1} \frac{1}{\sqrt{\beta_j}} \leq 16R\ln\frac{\beta_a}{\beta_b}.
        \end{equation}
        We have $\ell_1 \leq 1 + 16\sqrt{d}\ln(\beta_0R^2/d) \in \widetilde{O}(\sqrt{d})$, and $\ell_2 \leq 1 + 16\sqrt{d}\ln(d) \in \widetilde{O}(\sqrt{d})$, and consequently $\ell = \ell_1 + \ell_2 + 1 \in \widetilde{O}(\sqrt{d})$.
    \end{lemma}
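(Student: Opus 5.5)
The plan is to treat the recursion as a multiplicative decrease $\beta_{j+1} = (1-c_j)\beta_j$ with $c_j := \max\{1/(8\sqrt d), 1/(8R\sqrt{\beta_j})\}$. The basic tool is the elementary inequality $\ln(1/(1-c)) \geq c$ for $c\in[0,1)$. Summing it over a range gives, for any $a<b$,
\[
\ln\frac{\beta_a}{\beta_b} = \sum_{j=a}^{b-1}\ln\frac{1}{1-c_j} \geq \sum_{j=a}^{b-1} c_j .
\]
I first check that $c_j<1$ throughout the two stages, so that all logarithms are finite. In stage 2 we have $\beta_j > 1/(8R)^2$, hence $1/(8R\sqrt{\beta_j})<1$; the other term of the maximum is at most $1/8$.

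For stage 1 ($j<\ell_1$, so $\beta_j > d/R^2$), I lower bound $c_j \geq 1/(8\sqrt d)$. The displayed inequality then gives $(b-a)/(8\sqrt d) \leq \ln(\beta_a/\beta_b)$, which is the first claim of \eqref{eq:stage-1}, with room to spare in the constant. For the sum, the sequence is decreasing, so every term satisfies $1/\sqrt{\beta_j}\leq 1/\sqrt{\beta_{b-1}}$. Multiplying by the count bound gives the second claim.

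For stage 2 ($\ell_1\le j<\ell_1+\ell_2$, so $\beta_j\le d/R^2$), the maximum is attained by $1/(8R\sqrt{\beta_j})$, since $R\sqrt{\beta_j}\le\sqrt d$. The displayed inequality then yields
\[
\sum_{j=a}^{b-1}\frac{1}{\sqrt{\beta_j}}\le 8R\ln\frac{\beta_a}{\beta_b},
\]
which is the second claim of \eqref{eq:stage-2}. Since $\sqrt{\beta_j}\le\sqrt{\beta_a}$ for $j\ge a$, we get $(b-a)/\sqrt{\beta_a}\le \sum_j 1/\sqrt{\beta_j}$, and the count bound follows.

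For the length bounds, I apply \eqref{eq:stage-1} with $a=0$ and $b=\ell_1-1$. Since $\beta_{\ell_1-1}>d/R^2$, this gives $\ell_1\le 1+16\sqrt d\ln(\beta_0R^2/d)$. This is $\widetilde O(\sqrt d)$ provided $\beta_0$ is polynomial, as is implicit in the paper's choice $\beta^*\in\widetilde\Theta(d)$.

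For $\ell_2$, I apply \eqref{eq:stage-2} with $a=\ell_1$ and $b=\ell_1+\ell_2-1$, which gives $\ell_2\le 1+16R\sqrt{\beta_{\ell_1}}\ln(\beta_{\ell_1}/\beta_b)$. Here $R\sqrt{\beta_{\ell_1}}\le\sqrt d$, and $\beta_b>1/(64R^2)$ while $\beta_{\ell_1}\le d/R^2$. So the logarithm is at most $\ln(64d)$, giving $\ell_2\le 1+16\sqrt d\ln(64d)$, which matches the stated bound up to the constant inside the log.

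The main obstacle is bookkeeping rather than ideas. The boundary indices need care: the definitions of $\ell_1,\ell_2$ take minima over $\mathbb N$ starting at 1, and $b$ must be the last index strictly inside each stage. The constants also need to be matched, with the slack between $8$ and $16$ absorbing the off-by-one and the $\ln 64$ factor. The final $+1$ in $\ell=\ell_1+\ell_2+1$ accounts for the last step to $\beta_\ell=0$.
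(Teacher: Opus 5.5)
Your proposal is correct and follows the paper's route: telescope $\ln(\beta_j/\beta_{j+1})$ against a lower bound on the per-step decrement $c_j$, and use that $\beta_j$ is decreasing. Your inequality $\ln\frac{1}{1-c}\ge c$ gives the constant $8$ where the paper's $\ln(1+c)\ge c/2$ gives $16$.

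The paper never derives the bounds on $\ell_1,\ell_2$, and your derivation is the natural one. However, your claim that the $8$-versus-$16$ slack absorbs the $\ln 64$ factor holds only for $d\ge 64$. The stated $\ell_2\le 1+16\sqrt d\ln d$ genuinely fails for tiny $d$; for instance, it asserts $\ell_2\le 1$ at $d=1$. Similarly, the $\ell_1$ bound needs $\beta_0\ge d/R^2$ when $\ell_1=1$. Neither issue affects the $\widetilde O(\sqrt d)$ conclusions.
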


    \begin{proof}
        For all $0 \leq a < b < \ell_1$, we have
        \[b-a \leq \sum_{j=a}^{b-1} \frac{16\sqrt{d}}{16\sqrt{d}} \leq 16\sqrt{d} \sum_{j=a}^{b-1} \ln\left(1 + \frac{1}{8\sqrt{d}}\right) \leq 16\sqrt{d}(\ln(\beta_a) - \ln(\beta_b)) \leq 16\sqrt{d}\ln\frac{\beta_a}{\beta_b}.\]
        The second expression follows as $\beta_j$ is decreasing. On the other hand, for all $\ell_1 \leq a < b < \ell-1$, we have
        \[\sum_{j=a}^{b-1} \frac{1}{\sqrt{\beta_j}} \leq \sum_{j=a}^{b-1} \frac{16R}{16R\sqrt{\beta_j}} \leq 16R\sum_{j=a}^{b-1} \ln\left(1 + \frac{1}{8R\sqrt{\beta_j}}\right) = 16R\sum_{j=a}^{b-1} (\ln(\beta_j) - \ln(\beta_{j+1})) = 16R\ln\frac{\beta_a}{\beta_b},\]
        and the first expression follows as $\beta_j$ is decreasing.
    \end{proof}

    Now, we are ready to state the uniform sampling algorithm.

    \begin{breakablealgorithm}
        \caption{Transducer for sampling from a convex body}
        \label{alg:sampling-algo}
        
        \noindent\textbf{Input:}
        \begin{enumerate}[nosep]
            \item $3 \leq d \in \N$, $R \geq 1$, $0 < \epsilon < 1/(20e)$, and $\zeta > 0$ sufficiently small.
            \item A convex body $B_d \subseteq K \subseteq RB_d$, accessible through a membership oracle $O_{K,\zeta}$.
        \end{enumerate}

        \noindent\textbf{Parameters:}
        \begin{enumerate}[nosep]
            \item $c$: the constant from \cite[Corollary~5.2]{CV18}.
            \item $\xi := 1 + \sqrt{1 + c\sqrt{\ln(2/\epsilon^2)/d}}$.
        \end{enumerate}

        \noindent\textbf{Output:} An $\epsilon$-approximation of the uniform superposition over $K$ in total-variation distance.

        \noindent\textbf{Procedure:}
        
        \begin{enumerate}[nosep]
            \item Initialize $\beta_0 = d\xi^2/2$, $r_0 = 1$, $\delta_0 = \min\{1/(4\beta_0r_0), 1/(2^{18}\sqrt{2d\beta_0}), 1/(2^{18}\sqrt{d\ln(100d/\epsilon)})\}$ and $j = 0$.
            \item Initialize a quantum register in $\ket{\pi_{\beta_0,\delta_0,1,\zeta}}$.
            \item Use \Cref{cor:anneal-small-steps} to map $\ket{\pi_{\beta_0,\delta_0,1,\zeta}}$ to $\ket{\pi_{\beta_0,\delta_0,r_0,\zeta}}$
            \item While $\beta_j > 0$:
            \begin{enumerate}[nosep]
                \item Set $\beta_{j+1} = \beta_j \cdot \max\{1 - \max\{1/(8\sqrt{d}), 1/(8R\sqrt{\beta_j})\}, 0\}$.
                \item If $\beta_{j+1} > 0$,
                \begin{enumerate}
                    \item Set $r_{j+1} = \sqrt{d/(2\beta_{j+1})}\xi$.
                    \item Set $\delta_{j+1} = \min\{1/(4\beta_{j+1}r_{j+1}), 1/(2^{18}\sqrt{2d\beta_{j+1}}), 1/(2^{18}\sqrt{d\ln(100d/\epsilon)})\}$.
                    \item Use \Cref{cor:anneal-small-steps} to map $\ket{\pi_{\beta_j,\delta_j,r_j,\zeta}}$ to $\ket{\pi_{\beta_{j+1},\delta_{j+1},r_{j+1},\zeta}}$.
                \end{enumerate}
                \item Increment $j$.
            \end{enumerate}
            \item Use \Cref{thm:speedy-to-ball,cor:speedy-to-uniform} to map $\ket{\pi_{\beta_{\ell},\delta_{\ell},r_{\ell},\zeta}}$ to $\ket{\widetilde{\pi}_{\beta_j,r_{\ell},\zeta}}$ and then to $\ket{\widetilde{\pi}_{0,r_{\ell},\zeta}}$.
        \end{enumerate}
    \end{breakablealgorithm}
    
    \begin{theorem}
        \label{thm:uniform-sampling}
        Let $3 \leq d \in \N$, $R \geq 1$, $\epsilon > 0$, and $B_d \subseteq K \subseteq RB_d$ be a convex body, accessible through a membership oracle on a grid with sufficiently small spacing parameter $\zeta > 0$. The bidirectional canonical transducer in \Cref{alg:sampling-algo} acts as
        \[\ket{0} \overset{C}{\rightsquigarrow} \ket{\widetilde{\pi}_{0,r_{\ell},\zeta}}, \qquad \text{where} \qquad \norm{\ket{\widetilde{\pi}_{0,r_{\ell},\zeta}} - \frac{1}{\sqrt{|K_{\zeta}|}} \sum_{x \in K_{\zeta}} \ket{x}}_{\mathrm{TV}} \leq \epsilon,\]
        with transduction complexity $\widetilde{O}\left(dR + d^2\right)$.
    \end{theorem}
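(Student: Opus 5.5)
The proof has two parts. First we show that \Cref{alg:sampling-algo} is a well-defined sequential composition of transducers whose final state is $\epsilon$-close to uniform. Then we sum the transduction complexities via \Cref{thm:sequential-composition} and bound the sum using the schedule estimates of \Cref{lem:schedule-length}.

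\textbf{Correctness.} Each call in lines~3 and~4(c) satisfies the hypotheses of \Cref{cor:anneal-small-steps}. Consecutive inverse temperatures satisfy $\beta_j-\beta_{j+1}\le\max\{1/(8\sqrt d),1/(8R\sqrt{\beta_j})\}\beta_j$, which is within the allowed ratio. The radii $r_j=\sqrt{d/(2\beta_j)}\xi$ are nondecreasing because $\beta_j$ decreases. The step sizes $\delta_j$ are nondecreasing: each of the three terms in the minimum is nonincreasing in $\beta$, and $\beta_j r_j\propto\sqrt{\beta_j}$. Each $\delta_j$ also respects the upper bounds demanded by the corollary. The initial state $\ket{\pi_{\beta_0,\delta_0,1,\zeta}}$ is supported inside the unit ball, where $K$ agrees with $B_d$, so it is independent of $K$ and can be prepared with zero queries. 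The last finite temperature $\beta_\ell$ satisfies $\beta_\ell\le 1/(8R)^2\le 1/R^2$. Therefore \Cref{thm:speedy-to-ball} followed by \Cref{cor:speedy-to-uniform} produces $\widetilde\pi_{0,r_\ell}$ with total-variation error $O(\eta)$ with respect to $\pi_{0,r_\ell}$. Choosing $\eta$ proportional to $\epsilon$ through $\delta$'s $\ln(100d/\epsilon)$ term gives error at most $\epsilon/2$. Since $\beta_\ell\le 1/R^2$, we have $r_\ell\ge R$, so $\pi_{0,r_\ell}$ is uniform on $K$. Passing to the grid costs a further error that vanishes as $\zeta\downarrow0$, by the arguments of \Cref{thm:speedy-walk-reflection}.

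\textbf{Complexity.} The $j$-th annealing step costs
\[
O\Big(\sqrt{d/(\beta_{j+1}\delta_j^2)}\,\big(1+d\ln(\delta_{j+1}/\delta_j)\big)\Big).
\]
We split this into two sums. In the first, the factor $\sqrt{d}/(\sqrt{\beta_{j+1}}\delta_j)$ is bounded as follows. In the regime where $\delta_j\approx 1/(2^{18}\sqrt{2d\beta_j})$, the factor is $O(d)$ per step; summed over $\widetilde O(\sqrt d)$ steps this gives $\widetilde O(d^{1.5})$. In the regime where $1/(4\beta r)$ is active, we have $\delta_j\propto 1/(\sqrt{d\beta_j})$ as well. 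In the regime where the cap $1/(2^{18}\sqrt{d\ln})$ is active, the factor becomes $\widetilde O(d/\sqrt{\beta_j})$. Summing $1/\sqrt{\beta_j}$ with \eqref{eq:stage-2} gives $\widetilde O(R)$ in stage two. In stage one, $\beta_j\ge d/R^2$, so the same sum is $\widetilde O(\sqrt d\cdot R/\sqrt d)=\widetilde O(R)$ up to logs. Both stages thus contribute $\widetilde O(dR)$.

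The second sum, $\sum_j d\sqrt{d/(\beta_{j+1}\delta_j^2)}\ln(\delta_{j+1}/\delta_j)$, is the main obstacle. Here each factor $\sqrt{d/(\beta\delta^2)}$ is $O(d)$ while $\delta\propto 1/\sqrt{d\beta}$. The $\delta_j$ increase monotonically and are capped, so $\sum_j\ln(\delta_{j+1}/\delta_j)=\ln(\delta_\ell/\delta_0)=O(\ln d)$. Bounding each prefactor by its maximum $O(d)$, the sum is $\widetilde O(d\cdot d)=\widetilde O(d^2)$. Once the cap is active the $\delta_j$ stop changing, so the larger prefactor $d/\sqrt\beta$ from that regime never multiplies a nonzero log increment.

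Finally, line~3 costs $\widetilde O(d\cdot d)$ by the same reasoning, and line~5 costs $O(\sqrt{d/(\beta_\ell\delta_\ell^2)})=\widetilde O(dR)$. The overall complexity is therefore $\widetilde O(dR+d^2)$. The one delicate point is to check the regime boundaries carefully, so that $\sqrt{d/(\beta\delta^2)}$ is never multiplied by $d\ln(\delta'/\delta)$ in the cap regime, which the monotonicity and capping of $\delta_j$ guarantee.
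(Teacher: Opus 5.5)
Your argument is correct and is essentially the paper's. Both use sequential composition of the \Cref{cor:anneal-small-steps} steps with per-step cost $O\bigl(\sqrt{d/(\beta_{j+1}\delta_j^2)}\,(1+d\ln(\delta_{j+1}/\delta_j))\bigr)$, split according to whether the cap $1/(2^{18}\sqrt{d\ln(100d/\epsilon)})$ on $\delta_j$ is active, and use \Cref{lem:schedule-length} to sum $1/\sqrt{\beta_j}$.

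The one real difference is how you handle the $d\ln(\delta_{j+1}/\delta_j)$ terms. You telescope, $\sum_j\ln(\delta_{j+1}/\delta_j)=\ln(\delta_\ell/\delta_0)=\widetilde{O}(1)$, against a prefactor that is $\widetilde{O}(d)$ at every step where $\delta$ still changes. The paper instead bounds each increment by $O(\max\{1/\sqrt d,1/(R\sqrt{\beta_j})\})$ and sums with the schedule lemma in two cases ($\ell^*<\ell_1$ or not). Your version is shorter and shows directly that the $d^2$ term comes only from the total logarithmic growth of the step size. You also check the hypotheses of the corollary and the query-free preparation of the initial state, which the paper leaves implicit.

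There is one step that both you and the paper gloss over. Your $\widetilde{O}(dR)$ bound for line~5 needs $\beta_\ell=\Omega(1/R^2)$. Writing $\widetilde{O}(d/\sqrt{\beta_j})$ in place of $\widetilde{O}(d/\sqrt{\beta_{j+1}})$ in the capped regime needs $\beta_{j+1}=\Theta(\beta_j)$.

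As written, however, the last positive temperature is $\beta_\ell=s(s-1/(8R))$ with $s=\sqrt{\beta_{\ell-1}}>1/(8R)$. For suitable $R$, $s$ can lie arbitrarily close to $1/(8R)$. Then the bounds of the form $\sqrt{d/(\beta\delta^2)}$ from \Cref{cor:anneal-small-steps} and \Cref{thm:speedy-to-ball} blow up, both for the last annealing step and for line~5.

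Stopping the loop at the first $\beta_j\le 1/R^2$ repairs this. That $\beta_j$ lies in $[7/(8R^2),1/R^2]$, still satisfies the hypothesis of \Cref{cor:speedy-to-uniform}, and gives the $\widetilde{O}(dR)$ you claim.
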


    \begin{proof}
        We analyze the total transduction complexity using sequential composition (\Cref{thm:sequential-composition}). In the $j$th iteration, the total transduction complexity from \Cref{cor:anneal-small-steps} equals
        \[O\left(\sqrt{\frac{d}{\beta_{j+1}\delta_j^2}}\left(1 + d\ln\frac{\delta_{j+1}}{\delta_j}\right)\right).\]
        Let $\ell^* = \min\{j \in \N : \beta_j \leq \ln(100d/\epsilon)/2\}$. For all $j \geq \ell^*$, we have $\delta_{j+1} = \delta_j$, and for all $j < \ell^*$, we have $\delta_{j+1}/\delta_j = \sqrt{\beta_j/\beta_{j+1}} \in O(1 + \max\{1/(8\sqrt{d}),1/(8R\sqrt{\beta_j})\})$. Thus, we analyze the total cost as
        \[\sum_{j=0}^{\ell^*-1} \sqrt{\frac{d}{\beta_j\delta_j^2}} \cdot \left(1 + d\max\left\{\frac{1}{\sqrt{d}}, \frac{1}{R\sqrt{\beta_j}}\right\}\right) + \sum_{j=\ell^*}^{\ell_1 + \ell_2 - 1} \sqrt{\frac{d}{\beta_j\delta_j^2}}.\]
        We evaluate this expression using \Cref{lem:schedule-length}. We distinguish two cases. If $\ell^* < \ell_1$, the above expression becomes
        \[\sum_{j=0}^{\ell^*-1} d \cdot (1 + \sqrt{d}) + \sum_{j = \ell^*}^{\ell_1 + \ell_2 - 1} \frac{d}{\sqrt{\beta_j}} \in \widetilde{O}\left(d^2 + dR\right).\]
        On the other hand, if $\ell^* \geq \ell_1$, then we have
        \[\sum_{j=0}^{\ell_1-1} d \cdot \left(1 + \sqrt{d}\right) + \sum_{j=\ell_1}^{\ell^*-1} d \cdot \left(1 + \frac{d}{R\sqrt{\beta_j}}\right) + \sum_{j=\ell_*}^{\ell_1+\ell_2-1} \frac{d}{\sqrt{\beta_j}} \in \widetilde{O}\left(d^2 + dR\right).\]
        
        The remaining contributions to the transduction complexities are insignificant compared to this term. The approximation is guaranteed by \Cref{cor:speedy-to-uniform}.
    \end{proof}

    The above construction produces a transducer that prepares an approximation to the uniform superposition over $K$. This is a weaker result compared to a quantum circuit doing the same thing, as turning a transducer into a quantum circuit up to precision $\varepsilon$ necessarily incurs a $O(1/\varepsilon^2)$ overhead. However, since transducers have favorable composition properties, this sampling routine can still be used as a subroutine in more general algorithms without incurring this overhead.

    Finally, we note that a $\widetilde{O}(d^2)$ complexity can also be attained via different means. One can employ the hit-and-run walk~\cite{lovasz2006simulated} combined with the techniques introduced in \cite{cornelissen2023sublinear} to achieve a sampling routine from the uniform distribution over $K$ in a similar complexity. In short, the idea is that the hit-and-run cooling schedule is shorter, i.e., only of length $\widetilde{O}(\sqrt{d})$, but the classical mixing time is longer, i.e., $\widetilde{O}(d^3)$. By using the quadratic speed-up for mixing, one can then anneal in a number of queries that satisfies $\widetilde{O}(\sqrt{d} \cdot d^{3/2}) = \widetilde{O}(d^2)$.

    \section{Quantum volume estimation}
    \label{sec:vol-est}
    
    We now compile all building blocks we developed throughout this work to into a quantum volume estimation algorithm based on the state-of-the-art randomized algorithm presented in \cite[Section~7]{CV18}.
    
    \subsection{Mean estimation over Gibbs distributions}
    
    A core component of the volume estimation algorithm runs mean estimation over a probability space formed by the Gibbs distribution. We start with a lemma that explores the cost of reflecting around the approximation of the ball-walk distribution starting from a random variable.

    \begin{lemma}
        \label{lem:speedy-refl-random-variable}
        Let $d \in \N$, $R \geq 1$ and $B_d \subseteq K \subseteq RB_d$ be a convex body, accessible through a membership oracle on a grid with sufficiently small spacing parameter $\zeta > 0$. Let $\beta \geq 0$, $1 \leq r \leq R$, $\eta > 0$ and $0 < \delta \leq \min\{1/(4\beta r), 1/(4096\sqrt{2d\beta}), 1/(4096\sqrt{d\ln(10d/\eta)})\}$. Let $\gamma$, $\widetilde{\pi}_{\beta,r}$ and $\widetilde{\pi}_{\beta,r,\zeta/\gamma}$ be as in \Cref{thm:speedy-to-ball}, and let $X : K \cap rB_d \to [0,1]$ be a piecewise Lipschitz continuous random variable, and 
        \[\ket{\overline{\pi}(X)} = \sum_{x \in (K \cap rB_d)_{\zeta/\gamma}} \sqrt{\widetilde{\pi}_{\beta,r,\zeta/\gamma}(x)} \cdot \frac{\widetilde{\mu} - X(x)}{\sqrt{\widetilde{\mu}(1-\widetilde{\mu})}} \ket{x}, \qquad \text{where} \qquad \widetilde{\mu} := \underset{x \sim \widetilde{\pi}_{\beta,r,\zeta/\gamma}}{\E} [X(x)].\]
        Then, we can construct a bidirectional canonical transducer $V$ that acts as
        \begin{align*}
            \ket{\widetilde{\pi}_{\beta,r,\zeta/\gamma}} \overset{V}{\rightsquigarrow} \ket{\widetilde{\pi}_{\beta,r,\zeta/\gamma}}, & \qquad \text{with complexity} \qquad O\left(1\right), \\
            \ket{\overline{\pi}(X)} \overset{V}{\rightsquigarrow} -\ket{\overline{\pi}(X)}, & \qquad \text{with complexity} \qquad O\left(\frac{d}{\beta\delta^2} + \frac{\eta}{\widetilde{\mu}(1-\widetilde{\mu})}\right).
        \end{align*}
    \end{lemma}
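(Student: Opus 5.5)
The plan is to take $V$ to be exactly the reflection transducer constructed in \Cref{thm:speedy-to-ball}, and to bound its transduction complexity on the specific orthogonal state $\ket{\overline{\pi}(X)}$. The first line of the claim is immediate from that theorem. For the second line, write $\ket{\overline{\pi}(X)}$ in the form used there, i.e.\ $\overline{\psi}(x) = \widetilde{\pi}_{\beta,r,\zeta/\gamma}(x)\,(\widetilde{\mu}-X(x))/\sqrt{\widetilde{\mu}(1-\widetilde{\mu})}$. Orthogonality to $\ket{\widetilde{\pi}_{\beta,r,\zeta/\gamma}}$ holds by the choice of $\widetilde{\mu}$, and
\[\norm{\ket{\overline{\pi}(X)}}^2 = \frac{\Var_{\widetilde{\pi}_{\beta,r,\zeta/\gamma}}[X]}{\widetilde{\mu}(1-\widetilde{\mu})} \leq 1,\]
since $X \in [0,1]$. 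This already accounts for the $d/(\beta\delta^2)$ term.

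The remaining work is the sum
\[\sum_{x \in (\gamma(K\cap rB_d))_\zeta} \frac{|\overline{\psi}(x/\gamma)|^2}{\pi_{\beta,\delta,r,\zeta}(x)\,\ell_{\delta,r,\zeta}(x)}.\]
Substituting $\widetilde{\pi}_{\beta,r,\zeta/\gamma}(x/\gamma) = \pi_{\beta,\delta,r,\zeta}(x)\,a(x)/A_\zeta$ and using $|\widetilde{\mu}-X|\le 1$ and $a\le 1$ bounds it by
\[\frac{1}{A_\zeta^2\,\widetilde{\mu}(1-\widetilde{\mu})} \sum_x \pi_{\beta,\delta,r,\zeta}(x)\,\frac{a(x)^2(\widetilde{\mu}-X(\gamma^{-1}x))^2}{\ell_{\delta,r,\zeta}(x)}.\]
The key point is that $a$ vanishes unless $x/\gamma \in K$. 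For such $x$, the point $x$ lies at distance $\Omega(1/d)$ inside $K$, because $B_d \subseteq K$ and $\gamma = 1-1/(2d)$. Since $\delta \ll 1/\sqrt{d\ln(d/\eta)}$, the local conductance there is essentially $1$. More precisely, I would invoke the estimates underlying \cite[Lemma~6.13]{CV18}, which give $\ell_{\delta}(x) \geq 1 - O(\eta)$ on the scaled-in region $\gamma K$, transferred to $\ell_{\delta,r,\zeta}$ via \Cref{lem:discrete-local-conductance-proximity}. After this, the factor $1/\ell$ can be dropped at the cost of a $1+O(\eta)$ multiplicative error.

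With $A_\zeta \geq 9/20$ (up to constants), what remains is
\[\frac{(1+O(\eta))\,\E_{\widetilde{\pi}}[(\widetilde{\mu}-X)^2]}{\widetilde{\mu}(1-\widetilde{\mu})} \leq 1 + \frac{O(\eta)}{\widetilde{\mu}(1-\widetilde{\mu})}.\]
The constant part is absorbed into $d/(\beta\delta^2)$, which is at least $1$ here. The discrete-versus-continuous passage is handled exactly as in \Cref{thm:speedy-walk-reflection}, using piecewise Lipschitz continuity of $X$ together with \Cref{lem:smoothness}.

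I expect the main obstacle to be the local conductance bound on $\gamma K$. If a uniform bound $1-O(\eta)$ is unavailable, the fallback is to split the sum into points with $\ell \geq 1/2$, which contribute $O(1)$ times the variance term, and points with $\ell < 1/2$. For the latter, bound the sum by $\pi_{\beta,\delta,r}$-mass-weighted $1/\ell$ restricted to the shell where $a>0$ but $\ell$ is small. That mass is at most $O(\eta)$ by the tail estimates in \cite[Lemma~6.13]{CV18}, and dividing by $\widetilde{\mu}(1-\widetilde{\mu})$ yields the claimed $\eta/(\widetilde{\mu}(1-\widetilde{\mu}))$ term.
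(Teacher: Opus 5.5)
Your main argument fails at the claim that $\ell_\delta(x)\ge 1-O(\eta)$ whenever $a(x)>0$. From $x/\gamma\in K$ and $B_d\subseteq K$ you only get $x+\frac{1}{2d}B_d\subseteq K$. For large $d$ this radius is much smaller than $\delta\approx 1/\sqrt{d\ln(d/\eta)}$, so it does not control how much of $x+\delta B_d$ leaves $K$.

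For a concrete counterexample, take $K=\mathrm{conv}(B_d\cup\{p\})$ with $\norm{p}=R$, $r=R$ and $x=\gamma p$. Then $a(x)>0$. Near the apex, however, $K$ is a cone of half-angle $\arcsin(1/R)$, and its cross-sections within distance $\delta$ of $x$ have radius about $\frac{1}{2d}+\frac{\delta}{R}$. So only an exponentially small fraction of $x+\delta B_d$ lies in $K$. \cite[Lemma~6.13]{CV18} gives only averaged statements (acceptance probability and total-variation distance), so no pointwise bound of this kind is available. The ``$1+O(\eta)$ multiplicative error'' step therefore cannot be justified.

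Your fallback can be rescued, and its part with $\ell$ bounded below is fine. The low-conductance part, however, depends on an identity you never state. The discretized speedy-walk stationary distribution satisfies $\pi_{\beta,\delta,r,\zeta}(x)=e^{-\beta\norm{x}^2}\ell_{\delta,r,\zeta}(x)/N_{\beta,\delta,r,\zeta}$, so the factor $1/\ell_{\delta,r,\zeta}$ cancels exactly. Without this, ``$\pi_{\beta,\delta,r}$-mass-weighted $1/\ell$'' is not a mass at all, since $1/\ell$ can be exponentially large on the bad set.

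With the cancellation, $a^2\le a$, and $e^{-\beta\norm{x}^2}a(x)\le e^{-\beta\norm{x/\gamma}^2}$, the bad set $S$ contributes $O\bigl(\pi_{\beta,r}(S/\gamma)/(\widetilde\mu(1-\widetilde\mu))\bigr)$. Here the normalizing constants are compared via \cite[Lemma~6.11]{CV18}.

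The bound $\pi_{\beta,r}(S/\gamma)=O(\eta)$ must then come from the total-variation guarantee of \Cref{thm:speedy-to-ball}, not from a tail estimate. Set $g(y)=\ell_{\delta,r}(\gamma y)$ and $m=\E_{\pi_{\beta,r}}[g]$. Then essentially $\widetilde\pi_{\beta,r}=(g/m)\,\pi_{\beta,r}$, so $\eta\ge\E_{\pi_{\beta,r}}[(1-g/m)_+]\ge\frac12\pi_{\beta,r}(\{g<m/2\})$. Hence take the threshold proportional to $m$ (e.g.\ $m/2$); with a fixed $1/2$ you would additionally need $m>1/2$.

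Once you have the cancellation, though, the split is unnecessary, and this is the paper's proof. It applies $a^2\le a$ and the cancellation to the whole sum. It then compares $e^{-\beta\norm{x/\gamma}^2}/N_{\beta,\delta,r,\zeta}$ with $\pi_{\beta,r,\zeta/\gamma}(x/\gamma)$, where the ratio is $O(1)$ by \cite[Lemma~6.11]{CV18}, $A_\zeta=\Omega(1)$ and $\gamma^d=\Theta(1)$. This reduces everything to $O\bigl(\E_{\pi_{\beta,r}}[(\widetilde\mu-X)^2]/(\widetilde\mu(1-\widetilde\mu))\bigr)$. The $\eta/(\widetilde\mu(1-\widetilde\mu))$ term then arises from changing measure from $\pi_{\beta,r}$ to $\widetilde\pi_{\beta,r}$ at total-variation cost $\eta$.
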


    \begin{proof}
        We take $V$ to be the transducer from \Cref{thm:speedy-to-ball}, and for all $x \in (K \cap rB_d)_{\zeta/\gamma}$, we let
        \[\overline{\psi}(x) = \widetilde{\pi}_{\beta,r,\zeta/\gamma}(x) \cdot \frac{\widetilde{\mu} - X(x)}{\sqrt{\widetilde{\mu}(1-\widetilde{\mu})}}.\]
        Recall from \Cref{thm:speedy-to-ball} that for all $x \in (\gamma(K \cap rB_d))_{\zeta}$, we have $\pi_{\beta,\delta,r,\zeta}(x)a(x)/A_{\zeta} = \widetilde{\pi}_{\beta,r,\zeta/\gamma}(x/\gamma)$, and so using $a(x) \leq 1$,
        \[\frac{\widetilde{\pi}_{\beta,r,\zeta/\gamma}(x/\gamma)^2}{\pi_{\beta,\delta,r,\zeta}(x)\ell_{\delta,r,\zeta}(x)} = \frac{\pi_{\beta,\delta,r,\zeta}(x)a(x)^2}{A_{\zeta}^2\ell_{\delta,r,\zeta}(x)} \leq \frac{\pi_{\beta,\delta,r,\zeta}(x)\exp(-\frac{\beta}{\gamma^2}\norm{x}^2)}{A_{\zeta}^2\ell_{\delta,r,\zeta}(x)\exp(-\beta\norm{x}^2)} = \frac{\exp(-\beta\norm{x/\gamma}^2)}{A_{\zeta}^2N_{\beta,\delta,r,\zeta}}.\]
        On the other hand, let $\pi_{\beta,r,\zeta/\gamma}$ be the discretization of $\pi_{\beta,r}$ on $(K \cap rB_d)_{\zeta/\gamma}$. Then, for any $x \in (\gamma(K \cap rB_d))_{\zeta}$, we obtain that
        \[\pi_{\beta,r,\zeta/\gamma}(x/\gamma) = \frac{\exp(-\beta\norm{x/\gamma}^2)}{N_{\beta,r,\zeta/\gamma}}.\]
        By combining both equations, we observe that
        \[\frac{\widetilde{\pi}_{\beta,r,\zeta/\gamma}(x/\gamma)^2}{\pi_{\beta,\delta,r,\zeta}(x)\ell_{\delta,r,\zeta}(x)} \leq \pi_{\beta,r,\zeta/\gamma}(x/\gamma) \cdot \frac{N_{\beta,r,\zeta/\gamma}}{A_{\zeta}^2N_{\beta,\delta,r,\zeta}}.\]
        We analyze the ratio of the normalization constants. To that end, observe that both distributions are piecewise Lipschitz continuous, and so by \Cref{lem:smoothness}, their fraction is well-approximated in the continuous setting. Hence, we observe from \cite[Lemma~6.11]{CV18} that
        \[\frac{\zeta^dN_{\beta,\delta,r,\zeta}}{(\zeta/\gamma)^dN_{\beta,r,\zeta/\gamma}} \to \frac{\int_{K \cap rB_d} \ell_{\delta,r}(x) \exp(-\beta\norm{x}^2) \;\mathrm{d}x}{\int_{K \cap rB_d} \exp(-\beta\norm{x}^2) \;\mathrm{d}x} \in \Omega(1), \qquad (\zeta \downarrow 0).\]
        As such, for sufficiently small choices of $\zeta$, the expression simplifies to
        \[\frac{\widetilde{\pi}_{\beta,r,\zeta/\gamma}(x/\gamma)^2}{\pi_{\beta,\delta,r,\zeta}(x)\ell_{\delta,r,\zeta}(x)} \leq \frac{\pi_{\beta,r,\zeta/\gamma}(x/\gamma)}{A_{\zeta}^2\gamma^d} + \varepsilon_{\zeta}, \qquad \text{where} \qquad \varepsilon_{\zeta} \downarrow 0, \qquad (\zeta \downarrow 0).\]
        Finally, recall from \Cref{thm:speedy-to-ball} that $A_{\zeta} \in \Omega(1)$ for sufficiently small choices of $\zeta$, and $\gamma^d \in \Theta(1)$. Thus,
        \begin{align*}
            &\sum_{x \in (\gamma(K \cap rB_d))_{\zeta}} \frac{|\overline{\psi}(x/\gamma)|^2}{\pi_{\beta,\delta,r,\zeta}(x)\ell_{\delta,r,\zeta}(x)} = \sum_{x \in (\gamma(K \cap rB_d))_{\zeta}} \frac{\widetilde{\pi}_{\beta,r,\zeta}(x/\gamma)^2(\widetilde{\mu} - X(x/\gamma))^2}{\pi_{\beta,\delta,r,\zeta}(x)\ell_{\delta,r,\zeta}(x)\widetilde{\mu}(1-\widetilde{\mu})} \\
            &\qquad \in O\left(\sum_{x \in (\gamma(K \cap rB_d))_{\zeta}} \pi_{\beta,r,\zeta/\gamma}\left(\frac{x}{\gamma}\right) \cdot \frac{(\widetilde{\mu} - X(x/\gamma))^2}{\widetilde{\mu}(1-\widetilde{\mu})}\right) = O\left(\underset{x \sim \pi_{\beta,r,\zeta/\gamma}}{\E} \left[\frac{(\widetilde{\mu} - X(x))^2}{\widetilde{\mu}(1 - \widetilde{\mu})}\right]\right).
        \end{align*}
        Since both $X$ and the density underlying $\pi_{\beta,r}$ are piecewise Lipschitz continuous, we can use \Cref{lem:smoothness} to replace this expectation with its continuous version, as long as the spacing parameter $\zeta$ is small enough. Next, we evaluate this expectation by
        \[\underset{x \sim \pi_{\beta,r}}{\E} \left[\frac{(\widetilde{\mu} - X(x))^2}{\widetilde{\mu}(1 - \widetilde{\mu})}\right] \leq \frac{\norm{\pi_{\beta,r} - \widetilde{\pi}_{\beta,r}}_{\mathrm{TV}}}{\widetilde{\mu}(1 - \widetilde{\mu})} + \underset{x \sim \widetilde{\pi}_{\beta,r}}{\E} \left[\frac{(\widetilde{\mu} - X(x))^2}{\widetilde{\mu}(1 - \widetilde{\mu})}\right] \leq \frac{\eta}{\widetilde{\mu}(1-\widetilde{\mu})} + 1,\]
        Thus, putting everything together yields
        \[W(V, \ket{\overline{\pi}(X)}) \in O\left(\frac{d}{\beta\delta^2} + \frac{\eta}{\widetilde{\mu}(1-\widetilde{\mu})}\right).\qedhere\]
    \end{proof}
    
    Next, we follow the approach outlined by Cornelissen and Hamoudi~\cite{cornelissen2023sublinear}, and non-destructively implement a transducer that computes the mean of a random variable.

    \begin{theorem}
        \label{thm:mean-est}
        Let $d \in \N$, and $B_d \subseteq K \subseteq \R^d$ be a convex body, accessible through a membership oracle on a grid with sufficiently small spacing parameter $\zeta > 0$. Let $0 < \epsilon < 1$, and $t \geq 1$. Let $\beta \geq 0$, $r \geq 1$, and $0 < \delta \leq \min\{1/(4\beta r), 1/(4096\sqrt{2d\beta}), 1/(4096\sqrt{d\ln(10d/\min\{\epsilon^2/2^{16}, \epsilon^2/(2^{35}t^2)\})})\}$. Let $X : K \cap rB_d \to \R$ be a piecewise Lipschitz continuous random variable with mean $\mu$ and variance $\sigma^2$ over $\pi_{\beta,r}$. Let $\widetilde{\sigma} > 0$ such that $\sigma \leq \widetilde{\sigma}$, and let $\widetilde{m} \in \R$ such that $|\mu - \widetilde{m}| \leq 17\widetilde{\sigma}/2$. Then, we can construct a bidirectional canonical transducer $V$ that acts as
        \[\ket{\widetilde{\pi}_{\beta,r,\zeta/\gamma}} \overset{V}{\rightsquigarrow} \ket{\widetilde{\pi}_{\beta,r,\zeta/\gamma}}\ket{\overline{\mu}}, \qquad \text{with complexity} \qquad \widetilde{O}\left(\sqrt{\frac{d}{\beta\delta^2}} \cdot t \cdot \polylog(t,1/\epsilon)\right),\]
        such that
        \[|\E[\overline{\mu}] - \mu| \leq \epsilon\widetilde{\sigma}, \qquad \text{and} \qquad \Var[\overline{\mu}] \leq \left(\frac{\widetilde{\sigma}}{t}\right)^2.\]
    \end{theorem}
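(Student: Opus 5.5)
The plan is to instantiate the generic mean-estimation transducer (\Cref{thm:transducer-mean-est}) with the distribution $\widetilde{\pi} := \widetilde{\pi}_{\beta,r,\zeta/\gamma}$ and target $\pi := \pi_{\beta,r}$. The reflection oracle through $\ket{\widetilde{\pi}}$ will come from \Cref{lem:speedy-refl-random-variable}. The error parameter will be $\eta := \min\{\epsilon^2/2^{16}, \epsilon^2/(2^{35}t^2)\}$, exactly as encoded in the step-size assumption on $\delta$. By \Cref{thm:speedy-to-ball}, this choice gives $\norm{\widetilde{\pi}_{\beta,r} - \pi_{\beta,r}}_{\mathrm{TV}} \leq \eta$. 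For sufficiently small $\zeta$, the discretized version is within $2\eta$ of the discretization of $\pi_{\beta,r}$, by the same \Cref{lem:smoothness}-type argument used repeatedly before.

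First, I would apply \Cref{thm:transducer-mean-est} with parameters $\widetilde{\sigma}$, $\widetilde{m}$ (note $17\widetilde{\sigma}/2 \leq 17\widetilde{\sigma}$), and $t,\epsilon$ replaced by constant multiples $2t$ and $\epsilon/2$. Its bias bound is $(\epsilon' + 2^{14}\eta/\epsilon')\widetilde{\sigma}$, and with our $\eta$ this is at most $\epsilon\widetilde{\sigma}$. Its variance bound is $(1/t'^2 + 2^{32}\eta/\epsilon'^2)\widetilde{\sigma}^2$, and with $\eta \leq \epsilon^2/(2^{35}t^2)$ this is at most $(\widetilde{\sigma}/t)^2$ once constants are tuned. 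The remaining task is to bound the transduction complexity appearing in that theorem, namely $\sum_{j}(W(U,\ket{\widetilde{\pi}}) + W(U,\ket{\overline{\pi}(Y_j^{\pm})}))\cdot t\polylog$.

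Second, each $Y_j^\pm$ is a $[0,1]$-valued piecewise Lipschitz function: it is a truncated affine function of $X$ on finitely many pieces. Hence \Cref{lem:speedy-refl-random-variable} applies and gives $W(U,\ket{\widetilde{\pi}}) \in O(1)$ and $W(U,\ket{\overline{\pi}(Y_j^\pm)}) \in O(d/(\beta\delta^2) + \eta/(\widetilde{\mu}_j(1-\widetilde{\mu}_j)))$. The proof of \Cref{thm:transducer-mean-est} shows $\widetilde{\mu}_j \geq 2^{-2j} \geq 2^{-2k}$ with $k \in O(\log(1/\epsilon))$, so $\widetilde{\mu}_j \in \Omega(\epsilon^{O(1)})$. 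Also $\widetilde{\mu}_j \leq 1 - B_j/4 - \ldots$, and in any case $1-\widetilde{\mu}_j \geq 1/4$, since $Y_j^\pm \leq 1 - B_j/4 \le 3/4$ when $B_j = 1$. I would need to double-check this bound for small $B_j$. The term $\eta/(\widetilde{\mu}_j(1-\widetilde{\mu}_j))$ is then $O(\eta\cdot 4^{k})$. Since $4^k \in O(1/\epsilon^2)$ and $\eta \leq \epsilon^2$, it is $O(1)$.

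Third, the two sides of each reflection are unbalanced: they cost $O(1)$ and $O(d/(\beta\delta^2))$ respectively. I would apply \Cref{thm:balancing-plus-minus} (equivalently \Cref{lem:rebalancing}) with the input-independent factor $\alpha^2 = \sqrt{d/(\beta\delta^2)}$, so that both sides cost $O(\sqrt{d/(\beta\delta^2)})$. Summing over the $O(k) = O(\log(1/\epsilon))$ variables $Y_j^\pm$ gives the claimed $\widetilde{O}(\sqrt{d/(\beta\delta^2)}\cdot t\polylog(t,1/\epsilon))$. The main obstacle is making the lower bound on $\widetilde{\mu}_j(1-\widetilde{\mu}_j)$ rigorous uniformly over $j$, and under the discretized distribution rather than $\pi$, so that the $\eta$-term stays constant. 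If the $2^{-2j}$ bound proves too weak, I would instead shrink $\eta$ by a further $\mathrm{poly}(\epsilon)$ factor, which only costs logarithmic factors through $\delta$.
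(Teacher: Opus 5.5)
Your proposal is correct and follows the paper's proof essentially step for step. You rebalance the reflection transducer from \Cref{thm:speedy-to-ball} by the input-independent factor $\sqrt{d/(\beta\delta^2)}$, plug it into \Cref{thm:transducer-mean-est} with rescaled $t,\epsilon$, and use \Cref{lem:speedy-refl-random-variable} to show each $\eta$-term is $O(1)$; the paper does the same, using $4t$, $\epsilon/4$ and scale $2\widetilde{\sigma}$ to absorb the discretization slack in the variance premise. The obstacle you flag is resolved by the pointwise bounds $B_j/4 \le Y_j^{\pm} \le 1 - B_j/4$, which give $\widetilde{\mu}_j^{\pm}(1-\widetilde{\mu}_j^{\pm}) \ge B_j/8 \in \Omega(\epsilon^2)$ under any distribution, discretized or not, exactly as in the paper; your fallback of shrinking $\eta$ is therefore unnecessary, and it would not be available anyway, since $\eta$ is pinned by the hypothesis on $\delta$.
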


    \begin{proof}
        We choose $\eta = \min\{\epsilon^2/2^{17}, \epsilon^2/(2^{39}t^2)\}$, and we take the transducer $V$ from \Cref{thm:speedy-to-ball} that reflects around $\ket{\widetilde{\pi}_{\beta,r,\zeta/\gamma}}$. Since this is a reflection transducer, we can use \Cref{lem:rebalancing} to rebalance the complexities with an input-independent constant $\alpha > 0$, to obtain a transducer $V_{\alpha}$ that acts as
        \begin{subequations}
        \begin{align}
            \label{eq:ball-refl-a}
            \ket{\widetilde{\pi}_{\beta,r,\zeta/\gamma}} \overset{V_{\alpha}}{\rightsquigarrow} \ket{\widetilde{\pi}_{\beta,r,\zeta/\gamma}}, & \qquad \text{with complexity} \qquad O(\alpha), \\
            \label{eq:ball-refl-b}
            \ket{\overline{\psi}} \overset{V_{\alpha}}{\rightsquigarrow} \ket{\overline{\psi}}, & \qquad \text{with complexity} \qquad O\left(\frac{d}{\alpha\beta\delta^2} + \sum_{x \in (\gamma(K \cap rB_d))_{\zeta}} \frac{|\overline{\psi}(x/\gamma)|^2}{\alpha\pi_{\beta,\delta,r,\zeta}(x)\ell_{\delta,r,\zeta}(x)} \;\mathrm{d}x\right),
        \end{align}
        \end{subequations}
        for all states $\ket{\overline{\psi}} \in \C^{(K \cap rB_d)_{\zeta/\gamma}}$ that satisfy $\braket{\overline{\psi}}{\widetilde{\pi}_{\beta,r,\zeta/\gamma}} = 0$, where we write $\overline{\psi} \in \C^{(K \cap rB_d)_{\zeta/\gamma}}$ such that
        \[\ket{\overline{\psi}} = \sum_{x \in (K \cap rB_d)_{\zeta/\gamma}} \frac{\overline{\psi}(x)}{\sqrt{\widetilde{\pi}_{\beta,r,\zeta/\gamma}(x)}} \ket{x}.\]

        Next, we write
        \[\mu_{\zeta} := \underset{x \sim \widetilde{\pi}_{\beta,r,\zeta/\gamma}}{\E} \left[X(x)\right], \qquad \text{and} \qquad \sigma_{\zeta}^2 := \underset{x \sim \widetilde{\pi}_{\beta,r,\zeta/\gamma}}{\Var}[X(x)].\]
        Since both $X$ and $\widetilde{\pi}_{\beta,r}$ are piecewise Lipschitz continuous, we can use \Cref{lem:smoothness} to replace these quantities with their continuous counterparts with negligible impact. We then use this transducer in \Cref{thm:transducer-mean-est}, to construct a transducer $V = \mathtt{MeanEst}(X,V_{\alpha},\widetilde{m},2\widetilde{\sigma},t',\epsilon')$, where $t' = 4t$ and $\epsilon' = \epsilon/4$. Then, we have
        \[|\E[\overline{\mu}] - \mu| \leq \left(\epsilon' + 2^{14} \cdot \frac{\eta}{\epsilon'}\right) \cdot 2\widetilde{\sigma} \leq \left(\frac{\epsilon}{4} + \frac{\epsilon}{4}\right)\widetilde{\sigma} = \epsilon\widetilde{\sigma}, \; \text{and} \; \Var[\overline{\mu}] \leq \left(\frac{1}{(t')^2} + 2^{32} \cdot \frac{\eta}{(\epsilon')^2}\right)(2\widetilde{\sigma})^2 \leq \left(\frac{\widetilde{\sigma}}{t}\right)^2.\]
        
        It remains to compute the transduction complexity. To that end, for any $j \in \{0, \dots, k\}$, we immediately observe that $W(V_{\alpha}, \ket{\widetilde{\pi}_{\beta,r,\zeta/\gamma}}) \in O(\alpha)$. For the other term, we have
        \[Y_j^{\pm}(\omega) := \frac{B_j}{4} + \left(1 - \frac{B_j}{2}\right) \cdot \pm\frac{X(\omega) - \widetilde{m}}{a_j} \cdot \mathbbm{1}_{\pm(X(\omega) - \widetilde{m}) \in (a_{j-1},a_j]}, \qquad \text{and} \qquad \widetilde{\mu}_j^{\pm} := \underset{x \sim \widetilde{\pi}_{\beta,r}}{\E} \left[Y_j^{\pm}(x)\right],\]
        where $B_j = \min\{1,580/2^{2j}\}$. Note that all these random variables are piecewise Lipschitz continuous. We use \Cref{lem:speedy-refl-random-variable} to compute that
        \[W(V_{\alpha}, \ket{\overline{\pi}(Y_j^{\pm})}) \in O\left(\frac{d}{\alpha\beta\delta^2} + \frac{\eta}{\alpha\widetilde{\mu}_j^{\pm}(1-\widetilde{\mu}_j^{\pm})}\right) \subseteq \left(\frac{d}{\alpha\beta\delta^2}\right).\]
        In the last step, we used that $B_j/4 \leq \widetilde{\mu}_j^{\pm} \leq 1-B_j/4 \in \Omega(1/\epsilon^2)$, and $\eta \in O(\epsilon^2)$.
        Thus, by choosing $\alpha = \sqrt{d/(\beta\delta^2)}$, we obtain that both terms are $O(\sqrt{d/(\beta\delta^2)})$.
    \end{proof}

    We have now essentially recovered \cite[Algorithm~4]{cornelissen2023sublinear} as a transducer. In order to use this algorithm, though, we must supply an initial estimate to the mean, $\widetilde{m}$, satisfying $|\widetilde{m} - \mu| \leq 17\widetilde{\sigma}$. In order to find this initial estimate, we modify \cite[Algorithm~5]{cornelissen2023sublinear} to the continuous setting with a known anti-concentration bound.\footnote{Note that \cite{cornelissen2023sublinear} considers the setting where the energy levels are discrete, which is not the case in the volume estimation problem. This necessitates a modification of \cite[Algorithm~5]{cornelissen2023sublinear}, which is a detail that said paper forgoes. We describe such a procedure here.}
    
    \begin{theorem}
        \label{thm:median-estimation}
        Let $d \in \N$, and $B_d \subseteq K \subseteq \R^d$ be a convex body, accessible through a membership oracle on a grid with sufficiently small spacing parameter $\zeta > 0$. Let $\beta \geq 0$, $r \geq 1$, $0 < \rho < 1$ and $0 < \delta \leq \min\{1/(4\beta r), 1/(4096\sqrt{2d\beta}), 1/(4096\sqrt{d\ln(2560d)})\}$. Let $0 < \Delta \leq M$, and let $X : K \cap rB_d \to [0,M]$ be a piecewise Lipschitz continuous random variable, such that
        \[Q_X\left(\frac{9}{16}\right) - Q_X\left(\frac{7}{16}\right) \geq \Delta, \qquad \text{where} \qquad Q_X(p) := \sup_{t \in [0,M]} \underset{x \sim \pi_{\beta,r}}{\P} \left[X(x) \leq p\right].\]
        Then, there is a bidirectional canonical a transducer that when run on $\ket{\widetilde{\pi}_{\beta,r,\zeta/\gamma}}$ restores it and outputs a value $\widetilde{m} \in [0,M]$ with probability at least $1-\rho$ with transduction complexity that scales as $\widetilde{O}((1+\log(M/\Delta))\log(1/\rho)\sqrt{d/(\delta\beta^2)})$, such that $|\widetilde{m} - \mu| \leq \sqrt{3}\sigma$, where
        \[\mu := \underset{x \sim \pi_{\beta,r}}{\E} \left[X(x)\right], \qquad \text{and} \qquad \sigma^2 := \underset{x \sim \pi_{\beta,r}}{\Var} \left[X(x)\right].\]
    \end{theorem}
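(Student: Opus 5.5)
The plan is to instantiate the quantile estimator (\Cref{thm:quantile-est}) with $p = 1/2$ and $\epsilon = 1/48$, so that $3\epsilon = 1/16$. The reflection transducer it requires is the one through $\ket{\widetilde{\pi}_{\beta,r,\zeta/\gamma}}$ from \Cref{thm:speedy-to-ball}, with the TV parameter $\eta$ fixed to a small constant; the constraint $\delta \leq 1/(4096\sqrt{d\ln(2560d)})$ corresponds to $\eta = 1/256$.

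The quantile estimator then returns, with probability at least $1-\rho$, a value $\widetilde{m}$ with $\P_{x\sim\widetilde{\pi}_{\beta,r,\zeta/\gamma}}[X(x)\le\widetilde{m}]\in[1/2-1/16,1/2+1/16]$. To transfer this to $\pi_{\beta,r}$, I proceed in two steps.
\begin{itemize}
\item First, I replace the discrete distribution by its continuous counterpart $\widetilde{\pi}_{\beta,r}$. Since $X$ and the densities are piecewise Lipschitz, \Cref{lem:smoothness} makes the error vanish as $\zeta\downarrow0$.
\item Second, I pass from $\widetilde{\pi}_{\beta,r}$ to $\pi_{\beta,r}$ at a cost of $\eta$ in total variation.
\end{itemize}
This gives $\P_{\pi_{\beta,r}}[X\le\widetilde{m}]\in[3/8,5/8]$ up to slack. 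The budget between $3/16$ and $1/8$ is adjusted so that the hypothesis $Q_X(9/16)-Q_X(7/16)\ge\Delta$ is exactly what the binary search needs, after shrinking $\epsilon$ by a constant if necessary. The success-probability hypothesis of \Cref{thm:quantile-est} also holds, because the interval of good thresholds has width at least $\Delta$.

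Next I show that any such $\widetilde{m}$ satisfies $|\widetilde{m}-\mu|\le\sqrt3\sigma$. This is a Chebyshev/Cantelli argument. If $\widetilde{m} > \mu+\sqrt3\sigma$, then $\P[X\ge\widetilde{m}]\ge 3/8$. But Cantelli's inequality gives $\P[X-\mu\ge\sqrt3\sigma]\le 1/(1+3)=1/4<3/8$, a contradiction. The case $\widetilde{m}<\mu-\sqrt3\sigma$ is symmetric.

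The complexity comes from plugging into \Cref{thm:quantile-est}. It is the factor $(1+\log(M/\Delta))\log(1/\rho)/\epsilon$ with $\epsilon$ constant, multiplied by $W(U,\ket{\widetilde{\pi}})+\sup_t W(U,\ket{\overline{\pi}(X_t)})$. Here I first rebalance the reflection transducer via \Cref{lem:rebalancing} with $\alpha=\sqrt{d/(\beta\delta^2)}$, exactly as in \Cref{thm:mean-est}. Then \Cref{lem:speedy-refl-random-variable} applied to $X_t = 1/4+\mathbbm{1}(X\le t)/2$ bounds the second term by $O(d/(\alpha\beta\delta^2)+\eta/(\alpha\widetilde{\mu}(1-\widetilde{\mu})))$. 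Since $\widetilde{\mu}\in[1/4,3/4]$, this is where the shift by $1/4$ is needed, and the term is $O(\sqrt{d/(\beta\delta^2)})$.

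One technicality is that $X_t$ is piecewise Lipschitz, since the indicator $\mathbbm{1}(X\le t)$ is piecewise constant. I expect this to need a remark that the level-set boundary $\{X=t\}$ has measure zero for almost all $t$, or that the binary search can avoid bad thresholds. This is also the step I expect to be the main obstacle, together with keeping the constants in the TV-transfer chain consistent with the $7/16$ and $9/16$ quantile window.
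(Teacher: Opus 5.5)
Your overall route is the paper's. You rebalance the reflection transducer of \Cref{thm:speedy-to-ball} with $\alpha=\sqrt{d/(\beta\delta^2)}$ and run \Cref{thm:quantile-est} with $p=1/2$. You then transfer the output guarantee to $\pi_{\beta,r}$ via \Cref{lem:smoothness} and the total-variation bound, finish with Cantelli (the quantile inequality), and bound the cost through \Cref{lem:speedy-refl-random-variable} using $\mu_t\in[1/4,3/4]$.

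The gap is the choice $\epsilon=1/48$. Invoked as a black box, \Cref{thm:quantile-est} then requires $Q_X(25/48)-Q_X(23/48)>\Delta$. This does not follow from the hypothesis $Q_X(27/48)-Q_X(21/48)\geq\Delta$. The distribution of $X$ may put all its mass between the $23/48$- and $25/48$-quantiles into an interval much shorter than $\Delta$, so the binary search at resolution $\Delta$ is not guaranteed to stop. Your proposed repair, shrinking $\epsilon$, narrows the window further and goes the wrong way.

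Take $\epsilon=1/16$ as the paper does, so that $p\pm\epsilon$ is exactly the hypothesis window. You then only get $\P[X\le\widetilde m]\in[5/16,11/16]$ under $\widetilde{\pi}_{\beta,r,\zeta/\gamma}$, hence in $[1/4,3/4]$ under $\pi_{\beta,r}$. That suffices. If $\widetilde m=\mu+\lambda$ with $\lambda>\sqrt3\sigma$, Cantelli gives $\P[X\ge\widetilde m]\le\sigma^2/(\sigma^2+\lambda^2)<1/4\le\P[X>\widetilde m]$, a contradiction; the case $\widetilde m<\mu-\sqrt3\sigma$ is symmetric. So the narrower window $[3/8,5/8]$ you were aiming for is unnecessary. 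Obtaining it would anyway require reproving \Cref{thm:quantile-est} with the estimation precision decoupled from the stopping window.

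A related point you assert without justification: the interval of good thresholds in \Cref{thm:quantile-est} is measured under the distribution being reflected through, $\widetilde{\pi}_{\beta,r,\zeta/\gamma}$, whereas the hypothesis concerns $\pi_{\beta,r}$. Passing between them shifts probability levels by $\eta'=\eta+o(1)$ (total variation plus discretization). Hence the hypothesis yields a good interval of length $\Delta$ for the window $[1/2-\epsilon,1/2+\epsilon]$ under $\widetilde{\pi}_{\beta,r,\zeta/\gamma}$ only when $\epsilon\geq 1/16+\eta'$. With your $\eta=1/256$, taking $\epsilon=1/16+\eta'$ still works, since the final window $[1/2-3/16-4\eta',\,1/2+3/16+4\eta']$ lies strictly inside $(1/4,3/4)$. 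The paper's proof passes over this step too, but it is a second reason why $\epsilon$ cannot go below $1/16$.
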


    \begin{proof}
        Let $\eta = 1/32$. Let $V$ be the bidirectional canonical transducer from \Cref{thm:speedy-to-ball} that reflects around $\ket{\widetilde{\pi}_{\beta,r,\zeta/\gamma}}$. Since $V$ is a reflection transducer, we can use \Cref{thm:balancing-plus-minus} to obtain a transducer $V_{\alpha}$ with rebalanced complexities by an input-independent constant $\alpha > 0$, such that it acts as in \Cref{eq:ball-refl-a,eq:ball-refl-b}.
        
        Next, we plug $V_{\alpha}$ into the construction from \Cref{thm:quantile-est} to construct a bidirectional canonical transducer $T = \mathtt{QuantileEst}(X,V_{\alpha},1/2,1/16,M,\Delta,\rho)$. Then, with probability at least $1-\rho$, the transducer outputs a value $\widetilde{m}$ that satisfies
        \[\underset{x \sim \widetilde{\pi}_{\beta,r,\zeta}}{\P} \left[X(x) \leq \widetilde{m}\right] \in \left[5/16, 11/16\right].\]
        Since $\mathbbm{1}_{X(x) \leq \widetilde{m}}$ and $\widetilde{\pi}_{\beta,r}$ are both piecewise Lipschitz, we can use \Cref{lem:smoothness} to obtain that for a sufficiently small choice of $\zeta$,
        \[\underset{x \sim \widetilde{\pi}_{\beta,r}}{\P} \left[X(x) \leq \widetilde{m}\right] \in [9/32,23/32].\]
        Next, since $\norm{\widetilde{\pi}_{\beta,r} - \pi_{\beta,r}}_{\mathrm{TV}} \leq \eta = 1/32$, we have
        \[\underset{x \sim \pi_{\beta,r}}{\P} \left[X(x) \leq \widetilde{m}\right] \in \left[9/32 - \eta, 23/32 + \eta\right] = \left[1/4,3/4\right],\]
        and so by the quantile inequality (see, e.g., \cite[Section~3.1]{bagui2004glimpses}), we observe that $|\widetilde{m} - \mu| \leq \sqrt{3}\sigma$.

        It remains to compute the transduction complexity. To that end, we have $W(V_{\alpha}, \ket{\widetilde{\pi}_{\beta,r}}) \in O(\alpha)$, and for any $t \in [0,M]$, we write
        \[X_t := \frac14 + \frac12 \cdot \mathbbm{1}_{X(x) \leq t}, \qquad \text{and} \qquad \mu_t := \underset{x \sim \widetilde{\pi}_{\beta,r}}{\E} [X_t(x)].\]
        Thus, we have $1/4 \leq \mu_t \leq 3/4$, and so by \Cref{lem:speedy-refl-random-variable}, we obtain
        \[W(V_{\alpha}, \overline{\pi}(X_t)) \in O\left(\frac{d}{\alpha\beta\delta^2} + \frac{\eta}{\alpha\mu_t(1-\mu_t)}\right) \subseteq O\left(\frac{d}{\alpha\beta\delta^2}\right).\]
        Setting $\alpha = \sqrt{d/(\beta\delta^2)}$ completes the proof.
    \end{proof}

    \subsection{Volume estimation}

    Finally, we port the volume estimation algorithm from Cousins and Vempala~\cite{CV18} to the quantum setting. To that end, we recall the estimators that they use in their construction. We introduce them here and give a simplified analysis of their relative variance~\cite[Lemma~7.8]{CV18}.

    \begin{lemma}
        \label{lem:relative-variance}
        Let $d \in \N$, $R \geq 1$ and $B_d \subseteq K \subseteq RB_d$. We let $\Omega = K$ and $H : \Omega \to \R_{\geq 0}$, defined as $H(x) = \norm{x}^2$. Let $0 \leq \beta' < \beta'$, and $X_{\beta',\beta}(x) := \exp(-(\beta'-\beta)H(x))$. Then, there exists $\xi \in [\beta',\beta]$ such that
        \[\underset{x \sim \pi_{\beta}}{\E} \left[X_{\beta',\beta}(x)\right] = \frac{Z(\beta')}{Z(\beta)} = \exp\left(\underset{x \sim \pi_{\xi}}{\E} \left[H(x)\right] (\beta - \beta')\right).\]
        Similarly, there exists a $\xi \in [2\beta'-\beta,\beta]$ such that
        \[\frac{\underset{x \sim \pi_{\beta}}{\Var} \left[X_{\beta',\beta}(x)\right]}{\underset{x \sim \pi_{\beta}}{\E} \left[X_{\beta',\beta}(x)\right]^2} = \frac{Z(2\beta'-\beta)Z(\beta)}{Z(\beta')^2} - 1 = \exp\left(\underset{x \sim \pi_{\xi}}{\Var} \left[H(x)\right] (\beta' - \beta)^2\right) - 1.\]
    \end{lemma}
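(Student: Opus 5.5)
The plan is to reduce both identities to statements about the log-partition function $L = \ln Z$ and then apply the mean value theorem, using the derivative formulas of \Cref{lem:log-part-derivatives}. Throughout, $0\le\beta'<\beta$, and we work with $L$ on the interval where the relevant arguments live.

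For the first identity, compute directly:
\[
\underset{x\sim\pi_\beta}{\E}[X_{\beta',\beta}(x)] = \frac{1}{Z(\beta)}\int_K e^{-(\beta'-\beta)\norm{x}^2}e^{-\beta\norm{x}^2}\,\mathrm{d}x = \frac{Z(\beta')}{Z(\beta)}.
\]
Taking logarithms gives $L(\beta')-L(\beta)$. Since $L$ is differentiable (differentiation under the integral is justified because $K$ is bounded), the mean value theorem yields $\xi\in[\beta',\beta]$ with $L(\beta')-L(\beta) = -L'(\xi)(\beta-\beta')$. By \Cref{lem:log-part-derivatives}, $-L'(\xi) = \E_{\pi_\xi}[H]$, and exponentiating gives the claim.

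For the second identity, first compute
\[
\E_{\pi_\beta}[X_{\beta',\beta}^2] = \frac{1}{Z(\beta)}\int_K e^{-(2\beta'-\beta)\norm{x}^2}\,\mathrm{d}x = \frac{Z(2\beta'-\beta)}{Z(\beta)}.
\]
Dividing by the squared mean then gives $\Var/\E^2 = Z(2\beta'-\beta)Z(\beta)/Z(\beta')^2 - 1$. Note that $2\beta'-\beta$ may be negative. Here we use that $Z$ is still finite and smooth on all of $\R$ because $K\subseteq RB_d$ is bounded, and that \Cref{lem:log-part-derivatives} extends verbatim to negative inverse temperatures, where $\pi_\xi$ is the same tilted density.

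The log of the ratio is $L(2\beta'-\beta)+L(\beta)-2L(\beta')$. This is a symmetric second difference centred at $\beta'$ with step $h=\beta-\beta'$. The step I expect to need care is producing a single $\xi$ for this second difference. Define $g(s) = L(\beta'+s)+L(\beta'-s)-2L(\beta')$. Then $g(0)=g'(0)=0$, and Taylor's theorem with Lagrange remainder gives $g(h) = \tfrac12 g''(s^*)h^2$ for some $s^*\in(0,h)$, where $g''(s^*) = L''(\beta'+s^*)+L''(\beta'-s^*)$.

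This expression is an average of $L''$ at two points, times $h^2$. Since $L''$ is continuous, the intermediate value theorem produces $\xi\in[\beta'-s^*,\beta'+s^*]\subseteq[2\beta'-\beta,\beta]$ with $\tfrac12(L''(\beta'+s^*)+L''(\beta'-s^*)) = L''(\xi)$. Hence $g(h) = L''(\xi)(\beta-\beta')^2 = \Var_{\pi_\xi}[H](\beta'-\beta)^2$ by \Cref{lem:log-part-derivatives}. Exponentiating finishes the proof.
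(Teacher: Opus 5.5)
Your proof is correct and follows the paper's argument: express the first and second moments of $X_{\beta',\beta}$ as ratios of partition functions, then apply the mean value theorem and the symmetric second-order mean value theorem to $L=\ln Z$, using \Cref{lem:log-part-derivatives}. Your version is slightly more careful than the paper's in two respects: you derive the second-difference mean value theorem explicitly via $g(s)=L(\beta'+s)+L(\beta'-s)-2L(\beta')$, and you state that $L$ and \Cref{lem:log-part-derivatives} must be extended to negative inverse temperatures when $2\beta'-\beta<0$, which the paper uses without comment.
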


    \begin{proof}
        We use the mean value theorem and \Cref{lem:log-part-derivatives} to obtain that there exists a $\xi$ in between $\beta,\beta'$ such that
        \begin{align*}
            \underset{x \sim \pi_{\beta}}{\E} \left[X_{\beta',\beta}(x)\right] &= \int_K \exp(-(\beta'-\beta)H(x)) \frac{\exp(-\beta H(x))}{Z(\beta)} \;\mathrm{d}x = \int_K \frac{\exp(-\beta' H(x))}{Z(\beta)} \;\mathrm{d}x = \frac{Z(\beta')}{Z(\beta)} \\
            &= \exp(L(\beta') - L(\beta)) = \exp\left(L'(\xi)(\beta' - \beta)\right) = \exp\left(\underset{x \sim \pi_{\xi}}{\E} \left[H(x)\right] (\beta' - \beta)\right).
        \end{align*}
        Similarly, we use the (second order) mean value theorem again with \Cref{lem:log-part-derivatives} to show that there exists a $\xi$ between $2\beta' - \beta$ and $\beta$ such that
        \begin{align*}
            \frac{\underset{x \sim \pi_{\beta}}{\Var} \left[X_{\beta',\beta}(x)\right]}{\underset{x \sim \pi_{\beta}}{\E} \left[X_{\beta',\beta}(x)\right]^2} &= \frac{\underset{x \sim \pi_{\beta}}{\E}\left[X_{\beta',\beta}(x)^2\right]}{\underset{x \sim \pi_{\beta}}{\E}\left[X_{\beta',\beta}(x)\right]^2} - 1 = \frac{Z(\beta)^2}{Z(\beta')^2} \int_K \exp(-2(\beta'-\beta)H(x)) \cdot \frac{\exp(-\beta H(x))}{Z(\beta)} \;\mathrm{d}x - 1 \\
            &= \frac{Z(2\beta' - \beta)Z(\beta)}{Z(\beta')^2} - 1 = \exp\left(L(2\beta - \beta') + L(\beta) - 2L(\beta')\right) - 1 \\
            &= \exp\left(L''(\xi)(\beta' - \beta)^2\right) - 1 = \exp\left(\underset{x \sim \pi_{\xi}}{\Var} \left[H(x)\right] (\beta' - \beta)^2\right) - 1.\qedhere
        \end{align*}
    \end{proof}

    Note that we can combine the expressions proved in the above lemma with the bounds on the expectation and variance provided in \Cref{thm:variance-bounds}. We will see later, however, that these bounds are not strong enough, and that we need an anti-concentration inequality on the random variable $\norm{x}^2$ over the Gibbs distribution instead. We provide it in the following lemma, whose proof is provided in \cref{lem:lb}.

    \begin{lemma}
        \label{lem:anti-concentration}
        Let $4 \leq d \in \N$, $B_d \subseteq K \subseteq \R^d$, and $\beta > 0$. Then, there exists a universal constant $C>0$ such that for any $0 \leq a < b$, we have
        \[\underset{x \sim \pi_{\beta}}{\P} \left[a \leq \norm{x}^2 \leq b\right] \leq C \max\{d,\beta/\sqrt{d}\} \cdot (b-a).\]
    \end{lemma}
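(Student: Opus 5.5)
The plan is to bound the density of the random variable $H(x)=\norm{x}^2$ under $\pi_\beta$ pointwise, and then integrate it over $[a,b]$. Write the pushforward density of $H$ as
\[
g(s) = \frac{e^{-\beta s}\,\sigma(s)}{2\sqrt{s}\,Z(\beta)},
\]
where $\sigma(s)$ is the $(d-1)$-dimensional surface measure of $K\cap \sqrt{s}\,S^{d-1}$. It then suffices to show $\sup_s g(s) \in O(\max\{d,\beta/\sqrt d\})$.

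\textbf{Radial decomposition.} Since $K$ is convex and contains the origin, it is star-shaped. Let $\rho(\theta)\ge 1$ be its radial function, so that $\sigma(s)=s^{(d-1)/2}\,\mathrm{Area}(\{\theta: \rho(\theta)\ge\sqrt s\})$. In polar coordinates,
\[
Z(\beta)=\int_{S^{d-1}}\int_0^{\rho(\theta)} t^{d-1}e^{-\beta t^2}\,\mathrm{d}t\,\mathrm{d}\theta .
\]
Hence $g(s)$ is a mixture over directions $\theta$ of one-dimensional densities
\[
h_\theta(s)\propto s^{d/2-1}e^{-\beta s}\,\mathbbm{1}_{s\le \rho(\theta)^2},
\]
each normalized on $[0,\rho(\theta)^2]$, with mixture weights proportional to their normalizers $Z_\theta$. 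It therefore suffices to bound $\sup_s h_\theta(s)$ uniformly over truncation radii $\rho\ge1$.

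\textbf{One-dimensional bound.} Fix the truncated gamma-type density $h(s)=s^{k}e^{-\beta s}/N$ on $[0,\rho^2]$, with $k=d/2-1\geq 1$ (this is where $d\geq 4$ is used). The function $\phi(s)=s^k e^{-\beta s}$ is unimodal with mode $s^*=k/\beta$. I would split into cases.

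\emph{Case $\rho^2 \ge s^*$ (mode inside).} Compare $N$ to the mass near the mode: on the interval $[s^*-\sqrt{k}/\beta,\,s^*]$ (or on $[\max(s^*/2,\cdot),s^*]$), log-concavity gives $\phi\ge c\,\phi(s^*)$. This yields $\sup h\lesssim \beta/\sqrt{k}\approx \beta/\sqrt d$. When $\sqrt k/\beta$ exceeds $s^*$, one uses instead the interval $[s^*/2,s^*]$, giving a bound of order $\beta/k\le\beta/\sqrt d$.

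\emph{Case $\rho^2<s^*$ (truncated on the increasing part).} Here $\sup h=\phi(\rho^2)/N$. Using log-concavity in reverse, $\phi(s)\ge\phi(\rho^2)\,(s/\rho^2)^{k}$ on the increasing side; more precisely $\phi(s)/\phi(\rho^2)\ge e^{-k(\rho^2-s)/\rho^2}\cdot$const. Integrating over $[\rho^2(1-1/k),\rho^2]$ gives $N\gtrsim \phi(\rho^2)\rho^2/k$, hence $\sup h\lesssim k/\rho^2\le k\lesssim d$, using $\rho\ge1$ (from $B_d\subseteq K$).

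Combining the two cases gives $\sup h_\theta\le C\max\{d,\beta/\sqrt d\}$ uniformly in $\theta$. Averaging over $\theta$ with the mixture weights gives the same bound for $g$, and integrating over $[a,b]$ finishes the proof.

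\textbf{Main obstacle.} The delicate step is justifying the mixture representation carefully. Each direction contributes $\rho(\theta)^{\cdot}$-truncated densities, and the normalizations $Z_\theta$ must be exactly the weights, so that $g=\sum_\theta w_\theta h_\theta$ with $\sum_\theta w_\theta=1$. Concretely, this is the change of variables $s=t^2$ in the polar integral, which turns $t^{d-1}\,\mathrm{d}t$ into $\tfrac12 s^{d/2-1}\,\mathrm{d}s$. Beyond that, the second case needs the lower bound $\rho\ge1$, and it is the only place where the $d$ term (rather than the $\beta/\sqrt d$ term) arises; I would double-check the boundary regime $\rho^2\approx s^*$, where both estimates must glue.
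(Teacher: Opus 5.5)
Your proposal is correct and is essentially the paper's argument. You condition on the direction $v$, reduce to truncated Gamma densities proportional to $s^{d/2-1}e^{-\beta s}\mathbbm{1}_{[0,\rho(v)^2]}$ with $\rho(v)\ge 1$, bound their sup-norm uniformly by $O(\max\{d,\beta/\sqrt{d}\})$ in two cases, and average over directions; the paper splits instead on $\beta\rho(v)^2\le d$ versus $\beta\rho(v)^2>d$, and handles the second case via the untruncated Gamma density bound plus Markov's inequality rather than your direct lower bound on the normalizer near the mode.

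Two small repairs are needed:
\begin{itemize}
\item For $d=4$ (so $k=1$), the interval $[s^*-\sqrt{k}/\beta,s^*]$ starts at $0$, where $\phi$ vanishes, so the claimed bound $\phi\ge c\,\phi(s^*)$ fails there. Use $[s^*-\sqrt{k}/(2\beta),s^*]$ instead, on which $\phi\ge e^{-1/4}\phi(s^*)$ for every $k\ge 1$.
\item In the truncated case, simply use $\phi(s)\ge\phi(\rho^2)(s/\rho^2)^k$ for all $s\le\rho^2$. This gives $N\ge\phi(\rho^2)\rho^2/(k+1)$, hence $\sup h\le (k+1)/\rho^2\le d/2$, without any further case distinction in $k$.
\end{itemize}
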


    We upper bound the total variation distance between the clipped and non-clipped version of the Gibbs distribution. Informally, for any inverse temperature $\beta > 0$, a clipping radius $r \sim \sqrt{2d/\beta}$ ensures that the Gibbs distribution is mostly supported on a ball of radius $r$.

    \begin{lemma}
        \label{lem:clipping-perturbation}
        Let $d \in \N$, and $B_d \subseteq K \subseteq \R^d$ be a convex body. There exists a universal constant $c>0$ such that the following holds. 
        For any inverse temperature $\beta \in [0,\infty]$, parameter $t > 0$ and clipping radius $r \geq \sqrt{d/(2\beta)}(1 + \sqrt{1 + ct/\sqrt{d}})$, we have
        \[\norm{\pi_{\beta,r} - \pi_{\beta}}_{\mathrm{TV}} \leq \exp(-t^2).\]
    \end{lemma}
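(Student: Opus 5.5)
The bound should follow from concentration of a Gaussian truncated to a convex set. The plan has three steps: reduce the total-variation distance to a tail probability, bound the tail for the Gaussian density $e^{-\beta\norm{x}^2}$ restricted to $K$, and match the parameters to the condition on $r$.

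\emph{Step 1 (reduction to a tail bound).} Since $\pi_{\beta,r}$ is $\pi_\beta$ conditioned on the event $\{\norm{x}\le r\}$, the standard identity for conditioning gives
\[\norm{\pi_{\beta,r}-\pi_\beta}_{\mathrm{TV}} = \underset{x\sim\pi_\beta}{\P}\left[\norm{x}>r\right].\]
Hence it suffices to show $\P_{\pi_\beta}[\norm{x}>r]\le \exp(-t^2)$. The degenerate cases are immediate. If $\beta=\infty$, then $\pi_\beta$ is concentrated at the origin. If $\beta=0$, then $r$ is infinite and there is nothing to prove.

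\emph{Step 2 (concentration of $\norm{x}$).} The density of $\pi_\beta$ is $e^{-\beta\norm{x}^2}$ times the indicator of the convex set $K$, so $\pi_\beta$ is $2\beta$-strongly logconcave. I would invoke \cite[Corollary~5.2]{CV18}, the result whose constant $c$ is referred to in the statement. My understanding is that it states the following: for a Gaussian $\mathcal N(0,\sigma^2 I)$ restricted to a convex body containing the origin, $\P[\norm{x}^2 > \sigma^2 d(1+\ldots)]$ decays like $\exp(-t^2)$ at the scale used in the hypothesis, with $\sigma^2 = 1/(2\beta)$. The mechanism behind such a bound is as follows. Restricting to a convex set only increases concentration; this is made precise by the localization lemma, or by the fact that a strongly logconcave measure satisfies a log-Sobolev inequality with constant $1/(2\beta)$. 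Consequently $\norm{x}$, a $1$-Lipschitz function, has sub-Gaussian tails around its mean, namely $\P[\norm{x}>\E\norm{x}+s]\le e^{-\beta s^2}$. In addition, $\E\norm{x}\le\sqrt{\E\norm{x}^2}\le \sqrt{d/(2\beta)}$ by \Cref{thm:variance-bounds}.

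\emph{Step 3 (matching parameters).} Set $s=t/\sqrt\beta$, so that $e^{-\beta s^2}=e^{-t^2}$. It then remains to check that
\[\sqrt{d/(2\beta)} + t/\sqrt{\beta} \le \sqrt{d/(2\beta)}\left(1+\sqrt{1+ct/\sqrt d}\right).\]
This needs $t\sqrt{2/d}\le \sqrt{1+ct/\sqrt d}$, which holds for bounded $t$ but not for large $t$. This is the main obstacle: the stated form of $r$ is tailored to the quadratic threshold in Cousins–Vempala's corollary, not to the linear Lipschitz bound.

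To handle it, I would first try to quote \cite[Corollary~5.2]{CV18} directly. I would solve $\norm{x}^2\le \frac{d}{2\beta}(1+\ldots)$, that is, a quadratic inequality in $\norm{x}$, for the radius; its positive root should produce exactly the expression $\sqrt{d/(2\beta)}(1+\sqrt{1+ct/\sqrt d})$. If their corollary is phrased instead for the unrestricted Gaussian, I would transfer it to $K$ by the standard argument that conditioning a radially decreasing, logconcave measure on a convex set containing the origin stochastically decreases $\norm{x}$. If that transfer proves delicate, the fallback is to use the sub-Gaussian bound of Step 2 and absorb the mismatch into the universal constant $c$ via the case split $t\lesssim\sqrt d$ versus $t\gtrsim\sqrt d$. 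In the latter regime $\sqrt{1+ct/\sqrt d}\gtrsim \sqrt{ct}/d^{1/4}$, and I would verify directly that this dominates $t\sqrt{2/d}$ after choosing $c$ large enough.
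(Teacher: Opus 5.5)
The step that fails is your fallback for $t\gtrsim\sqrt d$. The inequality $\sqrt{ct}/d^{1/4}\ge t\sqrt{2/d}$ is equivalent to $t\le c\sqrt d/2$, so no universal $c$ makes it hold for all $t$. For fixed $d$ the lemma's radius grows like $\sqrt t$, while the Lipschitz radius $t/\sqrt\beta$ grows linearly.

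You correctly identified the obstacle, but no argument can remove it, because the statement is false for $t\gg c\sqrt d$. Take $d=1$, $\beta=1/2$, $r=1+\sqrt{1+ct}$ and $K=[-2r,2r]$, and let $Z\sim\mathcal N(0,1)$. Then $\norm{\pi_{\beta,r}-\pi_\beta}_{\mathrm{TV}}\ge\P[r\le|Z|\le 2r]$, which is of order $e^{-r^2/2}/r$ with $r^2/2=1+ct/2+\sqrt{1+ct}$. This is far larger than $e^{-t^2}$ once $t\gg c$.

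The paper's proof has the same hidden restriction. Its last step, attributed to \cite[Corollary~5.2]{CV18}, asserts $\P[2\beta\norm{x-\mu}^2\ge d+ct\sqrt d]\le e^{-t^2}$. For $K$ a huge ball, the left-hand side is essentially a $\chi^2_d$ tail of roughly $e^{-ct\sqrt d/2}$, which is not below $e^{-t^2}$ when $t\gg c\sqrt d$. What can actually be proved is therefore one of two things:
\begin{itemize}
\item the lemma restricted to $t\lesssim\sqrt d$;
\item the version with radius $\sqrt{d/(2\beta)}+t/\sqrt\beta$, which your Step~2 already gives for every $t>0$.
\end{itemize}

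Within that range your argument is correct and differs from the paper's. The paper centers at the vector mean $\mu=\E_{\pi_\beta}[x]$ and bounds $\norm{\mu}^2\le\E\norm{x}^2\le d/(2\beta)$ by Jensen and \Cref{thm:variance-bounds}. It then uses $\norm{x}\le\norm{x-\mu}+\norm{\mu}$ and applies the Cousins--Vempala concentration to $\norm{x-\mu}^2$. The leading $1$ in $1+\sqrt{1+ct/\sqrt d}$ is $\norm{\mu}$. So the shape of the radius comes from this centering, not from solving an uncentered inequality on $\norm{x}^2$, as your primary plan in Step~3 guesses.

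Your route replaces the black box by Lipschitz concentration for the $2\beta$-strongly logconcave measure $\pi_\beta$ around $\E\norm{x}\le\sqrt{d/(2\beta)}$. This implies the stated bound whenever $t^2\le d/2+ct\sqrt d/2$, in particular for all $t\le c\sqrt d/2$.

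Your transfer alternative is also valid. Conditioned on the direction $v=x/\norm{x}$, the variable $\norm{x}^2$ has the law of $\norm{G}^2$ with $G\sim\mathcal N(0,I/(2\beta))$, truncated to $[0,\rho(v)^2]$. Truncation from above only decreases upper tails, so $\norm{x}$ under $\pi_\beta$ is stochastically dominated by $\norm{G}$. A $\chi^2_d$ tail bound (e.g.\ Laurent--Massart) then finishes the proof in the same range of $t$. Both of your routes are more self-contained than the paper's, since neither depends on the precise form of \cite[Corollary~5.2]{CV18}.
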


    \begin{proof}
        We observe that
        \[\norm{\pi_{\beta,r} - \pi_{\beta}}_{\mathrm{TV}} = \int_{K \setminus rB_d} \frac{\exp(-\beta\norm{x}^2)}{Z(\beta)} \;\mathrm{d}x = \underset{x \sim \pi_{\beta}}{\P} \left[\norm{x} \geq r\right].\]
        Next, we use Jensen's inequality and observe from \Cref{thm:variance-bounds} that
        \[\norm{\mu}^2 \leq \underset{x \sim \pi_\beta}{\E} \left[\norm{x}^2\right] \leq \frac{d}{2\beta}, \qquad \text{where} \qquad \mu := \underset{x \sim \pi_\beta}{\E} \left[x\right].\]
        Thus, we observe from \cite[Corollary~5.2]{CV18} that there exists a universal constant $c>0$ such that 
        \[\underset{x \sim \pi_{\beta}}{\P} \left[\norm{x} \geq r\right] \leq \underset{x \sim \pi_{\beta}}{\P} \left[\norm{x - \mu} \geq r - \sqrt{\frac{d}{2\beta}}\right] \leq \underset{x \sim \pi_{\beta}}{\P} \left[\norm{\sqrt{2\beta}(x - \mu)}^2 \geq d + ct\sqrt{d}\right] \leq \exp(-t^2).\qedhere\]
    \end{proof}

    Next, we describe our canonical transducer for volume estimation.

    \begin{breakablealgorithm}
        \caption{Canonical quantum transducer for volume estimation}
        \label{alg:vol-est-algo}
        
        \noindent\textbf{Input:}
        \begin{enumerate}[nosep]
            \item $3 \leq d \in \N$, $R \geq 1$, $0 < \varepsilon < 1/4$ and a sufficiently small $\zeta = \zeta(d,R,\varepsilon) > 0$.
            \item A convex body $B_d \subseteq K \subseteq RB_d$, accessible through a membership oracle $O_{K,\zeta}$.
        \end{enumerate}

        \noindent\textbf{Parameters:}
        \begin{enumerate}[nosep]
            \item $c$: the universal constant from \cite[Corollary~5.2]{CV18}.
            \item $C$: the universal constant from \Cref{lem:anti-concentration}.
            \item $\gamma = 1-1/(2d)$.
            \item $\xi_0 := 1 + \sqrt{1 + c\sqrt{\ln(8/\varepsilon)/d}}$.
            \item $\ell_{\max} := 3 + 4\sqrt{d}(\ln(\xi_0^2R^2/2) + \ln(d))$.
            \item $\nu := \varepsilon/(24e\ell_{\max})$.
            \item $\xi := 1 + \sqrt{1 + c\sqrt{\ln(2/\nu^2)/d}}$.
            \item $\kappa := \sqrt{d\ln(10d/\min\{\nu/37120,\nu^2/(5824 \cdot 1160^2)\})}$.
            \item $t := 14\sqrt{\ell_{\max}}/\varepsilon$.
        \end{enumerate}

        \noindent\textbf{Output:} An $\varepsilon$-relative estimate of $\Vol(K)$.

        \noindent\textbf{Success probability:} At least $2/3$.

        \noindent\textbf{Procedure:}
        
        \begin{enumerate}[nosep]
            \item Initialize $\beta_0 = d\xi_0^2/2$, $r_0 = \sqrt{d/(2\beta_0)}\xi$, $\delta_0 = \min\{1/(4\beta_0r_0), 1/(2^{18}\sqrt{2d\beta_0}), 1/(2^{18}\kappa)\}$ and $j = 0$.
            \item Initialize a quantum register in $\ket{\pi_{\beta_0,\delta_0,1,\zeta}}$.
            \item Use \Cref{cor:anneal-small-steps} to map $\ket{\pi_{\beta_0,\delta_0,1,\zeta}}$ to $\ket{\pi_{\beta_0,\delta_0,r_0,\zeta}}$
            \item While $\beta_j > 0$:
            \begin{enumerate}[nosep]
                \item Set $\beta_{j+1} = \beta_j \cdot \max\{1 - \max\{1/(8\sqrt{d}), 1/(8R\sqrt{\beta_j})\}, 0\}$.
                \item Use \Cref{thm:speedy-to-ball} to map $\ket{\pi_{\beta_j,\delta_j,r_j,\zeta}}$ to $\ket{\widetilde{\pi}_{\beta_j,r_j,\zeta/\gamma}}$.
                \item Use \Cref{thm:median-estimation} to calculate a value $\widetilde{m}$, with random variable $x \mapsto \norm{x}^2$, $M = R^2$, and $\Delta = 1/(8C\max\{d,\beta/\sqrt{d}\})$ and $\rho = 1/(12\ell_{\max})$.
                \item Use \Cref{thm:mean-est} to calculate a value $\overline{X}_j$, where we take the random variable $X_j := X_{\beta_{j+1},\beta_j}$ on $\ket{\widetilde{\pi}_{\beta_j,r_j,\zeta/\gamma}}$, with $\widetilde{m}_j = \exp(-(\beta_{j+1}-\beta_j)\widetilde{m})$, $\widetilde{\sigma}_j = \widetilde{m}_j$, repetition parameter $t$, and precision parameter $\nu$.
                \item Use \Cref{thm:speedy-to-ball} in reverse to map $\ket{\widetilde{\pi}_{\beta_j,r_j,\zeta/\gamma}}$ to $\ket{\pi_{\beta_j,\delta_j,r_j,\zeta}}$.
                \item If $\beta_{j+1} > 0$,
                \begin{enumerate}
                    \item Set $r_{j+1} = \sqrt{d/(2\beta_{j+1})}\xi$.
                    \item Set $\delta_{j+1} = \min\{1/(4\beta_{j+1}r_{j+1}), 1/(2^{18}\sqrt{2d\beta_{j+1}}), 1/(2^{18}\kappa)\}$.
                    \item Use \Cref{cor:anneal-small-steps} to map $\ket{\pi_{\beta_j,\delta_j,r_j,\zeta}}$ to $\ket{\pi_{\beta_{j+1},\delta_{j+1},r_{j+1},\zeta}}$.
                \end{enumerate}
                \item Increment $j$.
            \end{enumerate}
            \item Output $\overline{X} = \prod_{j=0}^{\ell-1} \overline{X}_j \cdot \int_{B_d} \exp(-\beta_0\norm{x}^2) \;\mathrm{d}x$, where $\ell$ is the number of iterations.
        \end{enumerate}
    \end{breakablealgorithm}

    We now analyze \Cref{alg:vol-est-algo}.

    \begin{theorem}
        \label{thm:vol-est-algo}
        Let $3 \leq d \in \N$, $R \geq 1$ and $0 < \varepsilon < 1$. Then, there is a bounded-error quantum algorithm that returns an $\varepsilon$-relative estimate of $\Vol(K)$ with a number of membership queries that satisfies
        \[\widetilde{O}\left(\frac{d^{5/4}(\sqrt{d} + R)}{\varepsilon} + d^2\right),\]
        where the tilde hides polylogarithmic factors in $d$, $R$ and $1/\varepsilon$.
    \end{theorem}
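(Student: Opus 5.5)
The proof analyzes \Cref{alg:vol-est-algo} as one bidirectional canonical transducer, built by sequential composition (\Cref{thm:sequential-composition}). It is then converted into a bounded-error algorithm with \Cref{thm:transducer-to-alg}, using a constant operator-norm precision, so that only a constant-factor overhead is paid. There are two things to establish: correctness, meaning the output is $\varepsilon$-relatively accurate with probability $\geq 2/3$, and complexity, meaning the sum of transduction complexities over the schedule. I would carry them out in that order.

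\textbf{Correctness.} First, \Cref{lem:schedule-length} (with the factor $1/8$ instead of $1/2$) gives $\ell \leq \ell_{\max}$. Each consecutive pair then satisfies the hypotheses of \Cref{lem:gibbs-overlap}, and hence of \Cref{thm:anneal}/\Cref{cor:anneal-small-steps}. The clipping radius $r_j=\sqrt{d/(2\beta_j)}\xi$ with \Cref{lem:clipping-perturbation} yields $\norm{\pi_{\beta_j,r_j}-\pi_{\beta_j}}_{\mathrm{TV}}\leq \nu^2/2$. The choice of $\kappa$ in $\delta_j$ makes \Cref{thm:speedy-to-ball} produce $\widetilde\pi_{\beta_j,r_j}$ within the $\eta$ required by \Cref{thm:mean-est}.

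Next I would obtain the initial estimate $\widetilde m$. \Cref{lem:anti-concentration} gives $Q(9/16)-Q(7/16)\geq \Delta$, so \Cref{thm:median-estimation} produces $\widetilde m$ with $|\widetilde m-\E\norm{x}^2|\leq\sqrt3\,\sigma_H$. Each failure has probability $\rho=1/(12\ell_{\max})$, so a union bound keeps the total failure probability at most $1/12$.

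To show that $\widetilde m_j$ is a valid center and that $\widetilde\sigma_j=\widetilde m_j$ dominates the standard deviation of $X_j$, I would combine \Cref{lem:relative-variance} with \Cref{thm:variance-bounds}:
\begin{itemize}
\item $(\beta_j-\beta_{j+1})^2\Var[H]=O(1)$, so the relative variance of $X_j$ is constant;
\item $(\beta_j-\beta_{j+1})\sigma_H=O(1)$, so $\widetilde m_j=\Theta(\E X_j)$ and $|\widetilde m_j-\E X_j|\leq 17\widetilde\sigma_j/2$.
\end{itemize}
\Cref{thm:mean-est} then gives estimators $\overline X_j$ that are independent across $j$, with relative bias $O(\nu)$ and relative variance $O(1/t^2)=O(\varepsilon^2/\ell_{\max})$. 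Multiplying over $j$, the product has relative bias $(1+O(\nu))^{\ell}-1=O(\varepsilon)$ and relative variance $\prod_j(1+\varepsilon^2/(196\ell_{\max}))-1=O(\varepsilon^2)$. Chebyshev's inequality then gives an $\varepsilon$-estimate of $Z(0)/Z(\beta_0)$. Clipping at the last step adds another $O(\varepsilon)$ TV error, and $Z(\beta_0)$ is approximated by the Gaussian integral via \cite[Lemma~7.9]{CV18}. The discretization errors vanish as $\zeta\downarrow0$ by the same \Cref{lem:smoothness} arguments used earlier.

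\textbf{Complexity.} Per iteration $j$, the cost is
\[
\sqrt{d/(\beta_j\delta_j^2)}\Big(t\,\polylog + \log(M/\Delta)\log(1/\rho) + 1 + d\ln(\delta_{j+1}/\delta_j)\Big).
\]
Here $\sqrt{d/(\beta_j\delta_j^2)}=\widetilde O(d\max\{1,1/\sqrt{\beta_j}\})$ up to the $\beta R$ factor in $\delta_j$.

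For the mean-estimation terms I would split the sum at $\ell_1$, as in the proof of \Cref{thm:uniform-sampling}:
\begin{itemize}
\item over stage 1, $\sum_j d/\sqrt{\beta_j}\cdot t$ is $\widetilde O(d\sqrt d\, t)=\widetilde O(d^{1.5}\cdot d^{1/4}/\varepsilon)$, using $t=\widetilde O(d^{1/4}/\varepsilon)$ and $\ell=\widetilde O(\sqrt d)$;
\item over stage 2, \eqref{eq:stage-2} gives $\sum_j 1/\sqrt{\beta_j}\leq 16R\ln(\cdot)$, hence $\widetilde O(dR\,t)=\widetilde O(d^{5/4}R/\varepsilon)$.
\end{itemize}
The annealing terms $d\ln(\delta_{j+1}/\delta_j)$ contribute $\widetilde O(d^2+dR)$ exactly as in \Cref{thm:uniform-sampling}; this is where the additive $d^2$ comes from. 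Summing the two parts gives $\widetilde O(d^{5/4}(\sqrt d+R)/\varepsilon+d^2)$.

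\textbf{Main obstacle.} I expect the hardest step to be verifying that $\widetilde m_j,\widetilde\sigma_j$ meet the hypotheses of \Cref{thm:mean-est} uniformly across both schedule regimes. This requires showing $\E[X_j]=\Theta(\widetilde m_j)$ despite the exponential in $\widetilde m$, via the bound $(\beta_j-\beta_{j+1})\sqrt{\Var H}=O(1)$ from \Cref{thm:variance-bounds}, and handling the last step where $\beta_\ell=0$ with a separate case.
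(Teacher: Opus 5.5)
Your proposal follows the paper's proof essentially step for step. It uses the same transducer built by sequential composition and converted with \Cref{thm:transducer-to-alg}, the same per-iteration factor $\sqrt{d/(\beta_j\delta_j^2)}\,\bigl(t+d\ln(\delta_{j+1}/\delta_j)\bigr)$ up to polylogarithmic terms, the same stage split via \Cref{lem:schedule-length}, and the same Chebyshev finish.

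The one local difference is how you certify $\widetilde m_j$. The paper applies $|\widetilde m_j-\widetilde\mu_j|\le\sqrt{3\Var[X_j]}$ directly to $X_j$. This is justified because the proof of \Cref{thm:median-estimation} shows $\widetilde m$ is a $[1/4,3/4]$-quantile of $\norm{x}^2$, and $s\mapsto e^{(\beta_j-\beta_{j+1})s}$ is increasing, so $\widetilde m_j$ is a $[1/4,3/4]$-quantile of $X_j$. Relative variance at most $1/15$ then gives $\widetilde m_j\in[(1-1/\sqrt2)\widetilde\mu_j,(1+1/\sqrt2)\widetilde\mu_j]$ at once. You instead push the stated guarantee $|\widetilde m-\E\norm{x}^2|\le\sqrt3\,\sigma_H$ through the exponential. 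You use $(\beta_j-\beta_{j+1})\sigma_H\le 1/\sqrt{32}$, together with Jensen and a second-order expansion of $\ln Z$, to compare $e^{(\beta_j-\beta_{j+1})\E\norm{x}^2}$ with $\E X_j$. This works: it keeps $\E X_j/\widetilde m_j$ at most about $1.4$, which suffices both for $\widetilde\sigma_j\ge\sigma(X_j)$ and for the $17\widetilde\sigma_j/2$ condition. It also needs only the theorem's statement. But it is exactly the constant-tracking you flagged as your main obstacle, and the quantile route sidesteps it.

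The genuine gap is the clipping. \Cref{thm:mean-est} controls the bias relative to $\widetilde\mu_j=\E_{\pi_{\beta_j,r_j}}[X_j]$, which is a ratio of partition functions of $K\cap r_jB_d$. Because $r_j$ changes with $j$, these clipped ratios do not telescope. So you need $|\mu_j-\widetilde\mu_j|=O(\nu\mu_j)$, with $\mu_j=Z(\beta_{j+1})/Z(\beta_j)$, at \emph{every} step, not a single extra error ``at the last step''. The TV bound $\nu^2/2$ from \Cref{lem:clipping-perturbation} does not give this by itself. On $K\setminus r_jB_d$ the variable $X_j$ can be as large as $e^{(\beta_j-\beta_{j+1})R^2}$, exponentially larger than its mean, so a bound of the form $\eta\cdot\sup X_j$ is useless. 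The paper closes this by viewing $\pi_{\beta_j,r_j}$ as $\pi_{\beta_j}$ conditioned on an event of probability $1-\eta$. This yields $|\mu_j-\widetilde\mu_j|\le\bigl(\sigma_{\pi_{\beta_j}}(X_j)+\sigma_{\pi_{\beta_j,r_j}}(X_j)\bigr)\sqrt{\eta/(1-\eta)}$. With the constant relative variance this is $O(\sqrt\eta\,\mu_j)$, which is why the TV target is $\nu^2/2$ rather than $\nu$.

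Finally, there is a problem your sketch shares with the paper's written argument, whose cost sums stop at $j=\ell-2$: the last iteration is not covered by \eqref{eq:stage-2}. The last positive temperature $\beta_{\ell-1}=\beta_{\ell-2}\bigl(1-1/(8R\sqrt{\beta_{\ell-2}})\bigr)$ can be arbitrarily close to $0$ for some values of $R$. The bounds for annealing into $\beta_{\ell-1}$, and for rejection sampling, median and mean estimation at $\beta_{\ell-1}$, all carry a factor $\widetilde\Theta(d/\sqrt{\beta_{\ell-1}})$, which is then unbounded.

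A fix is to cap each decrease at a factor $1/2$, i.e.\ $\beta_{j+1}\ge\beta_j/2$ as long as $\beta_j>1/(8R)^2$, then jump to $0$. This forces $\beta_{\ell-1}>1/(128R^2)$. It preserves the overlap and relative-variance conditions, since the cap is only active when $\beta_j<1/(16R^2)$. It also makes the last step cost $\widetilde O(dR\,t)$, which is absorbed. (Minor: in stage 1 the per-step factor is $\widetilde O(d\max\{1,1/\sqrt{\beta_j}\})$, and \eqref{eq:stage-1} then gives $\widetilde O(d^{3/2}+dR)$ for that stage rather than $\widetilde O(d^{3/2})$; the final bound is unchanged.)
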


    \begin{proof}
    Our algorithm is based on the canonical transducer algorithm shown in \cref{alg:vol-est-algo}.
        We start by analyzing the length of the (inverse) cooling schedule. To that end, we write $\ell_1 := \min\{j \in \N : \beta_j \leq d/R^2\}$, and $\ell_2 := \min\{j \in \N : \beta_j \leq 1/(8R)^2\}$. Then, from \Cref{lem:schedule-length}, we obtain that $\ell_1 \leq 1 + 16\sqrt{d}\ln(\xi_0^2R^2/2)$ and $\ell_2 \leq 1 + 16\sqrt{d}\ln(d)$. Thus, $\ell = \ell_1 + \ell_2 + 1 \leq 3 + 16\sqrt{d}(\ln(\xi_0^2R^2/2) + \ln(d)) = \ell_{\max}$.
        
        Next, we analyze the variance of the random variables. To that end, we write
        \[\mu_j := \underset{x \sim \pi_{\beta}}{\E} [X_j], \qquad \text{and} \qquad \widetilde{\mu}_j := \underset{x \sim \pi_{\beta,r}}{\E} [X_j].\]
        We observe that $\widetilde{\mu}_j \leq \mu_j$. For all $j \in \{0,\dots,\ell-2\}$, we have that $\beta_j \leq 2\beta_{j+1}$, and so by \Cref{lem:relative-variance,thm:variance-bounds} there exists a $\xi \in [\beta_{j+1},\beta_j]$ such that
        \begin{align*}
            \frac{\underset{x \sim \pi_{\beta_j}}{\Var} \left[X_j\right]}{\underset{x \sim \pi_{\beta_j}}{\E} \left[X_j\right]^2} &\leq \exp\left(\underset{x \sim \pi_{\xi}}{\Var} \left[\norm{x}^2\right] \cdot (\beta_j - \beta_{j+1})^2\right) - 1 \\
            &\leq \exp\left(\min\left\{\frac{d}{\beta_{j+1}^2}, \frac{2R^2}{\beta_{j+1}}\right\} \cdot \beta_j^2 \cdot \max\left\{\frac{1}{8\sqrt{d}}, \frac{1}{8R\sqrt{\beta_j}}\right\}^2\right) - 1 \leq \exp\left(\frac{1}{16}\right) - 1 \leq \frac{1}{15}.
        \end{align*}
        Similarly, we have $\beta_{\ell-1} \leq 1/(8R)^2$, and so by \Cref{lem:relative-variance,thm:variance-bounds} there exists a $\xi \in [0,\beta_{\ell-1}]$ such that
        \[\frac{\underset{x \sim \pi_{\beta_{\ell-1}}}{\Var} \left[X_{\ell-1}\right]}{\underset{x \sim \pi_{\beta_{\ell-1}}}{\E} \left[X_{\ell-1}\right]^2} \leq \exp\left(\underset{x \sim \pi_{\xi}}{\Var} \left[\norm{x}^2\right] \cdot \beta_{\ell-1}^2\right) - 1 \leq \exp\left(\frac{R^4}{4} \cdot \frac{1}{(8R)^4}\right) - 1 \leq \exp\left(\frac{1}{16}\right) - 1 \leq \frac{1}{15}.\]
        Since $\pi_{\beta_j,r_j}$ is a Gibbs distribution over $K \cap rB_d$, the last two equations also hold for this distribution. Thus, for all $j \in \{0, \dots, \ell-1\}$ that
        \[\underset{x \sim \pi_{\beta_j,r_j}}{\Var} [X_j] \leq \frac{\widetilde{\mu}_j^2}{15} \leq \frac{\widetilde{m}_j^2}{15\left(1 - \frac{1}{\sqrt{2}}\right)^2} < \widetilde{m}_j^2 = \widetilde{\sigma}_j^2,\]
        and so the premise of \Cref{thm:mean-est} is satisfied.

        Next, we analyze the perturbation of the median estimate. We start with the anti-concentration bound. In the $j$th iteration, we consider the random variable $x \mapsto \norm{x}^2$ over $\pi_{\beta_j,r_j}$. Let $a := Q_{\norm{x}^2}(7/16)$ and $b := Q_{\norm{x}^2}(9/16)$ be the $7/16$- and $9/16$-quantile of this random variable. We observe from \Cref{lem:anti-concentration} that
        \[\frac18 \leq \underset{x \sim \pi_{\beta_j,r_j}}{\P} \left[a \leq \norm{x}^2 \leq b\right] \leq C\max\{d,\beta/\sqrt{d}\} \cdot (b-a).\]
        Hence, we obtain that $b-a \geq 1/(8C\max\{d,\beta/\sqrt{d}\}) = \Delta$, and so the premise of \Cref{thm:median-estimation} is satisfied. This also means that with probability at least $11/12$, every run of the median estimation algorithm succeeds. Next, we observe from \Cref{thm:median-estimation} that and using the previously-developed variance bounds to deduce that for all $j \in \{0, \dots, \ell-1\}$,
        \[|\widetilde{m}_j - \widetilde{\mu}_j| \leq \sqrt{3\underset{x \sim \pi_{\beta_j,r_j}}{\Var}[X_j]} \leq \sqrt{\frac{3}{15}} \cdot \widetilde{\mu}_j \leq \frac{\widetilde{\mu}_j}{\sqrt{2}},\]
        from which we find that $|\widetilde{m}_j - \widetilde{\mu}_j| \leq \sqrt{3}\widetilde{\sigma}_j \leq 17\widetilde{\sigma}_j$, as required by the premise of \Cref{thm:mean-est}. Moreover, since $\widetilde{\sigma}_j = \widetilde{m}_j$, we obtain that $(1-1/\sqrt{2})\widetilde{\mu}_j \leq \widetilde{m}_j = \widetilde{\sigma}_j \leq 2\widetilde{\mu}_j$.
        
        Next, we observe from \Cref{lem:clipping-perturbation} that $\norm{\pi_{\beta_j,r_j} - \pi_{\beta_j}}_{\mathrm{TV}} \leq \nu^2/2 =: \eta$. We observe that $\eta < 1/4$. Thus, by combining a perturbation inequality with the variance bounds we just obtained, we deduce that for all $j \in \{0, \dots, \ell-1\}$,
        \[|\mu_j - \widetilde{\mu}_j| \leq \left[\sqrt{\underset{x \sim \pi_{\beta_j,r_j}}{\Var} [X_j]} + \sqrt{\underset{x \sim \pi_{\beta_j}}{\Var} [X_j]}\right] \cdot \sqrt{\frac{\eta}{1 - \eta}} \leq \sqrt{\frac{\eta}{15}} \cdot (\widetilde{\mu}_j + \mu_j) \leq \frac{\nu\mu_j}{2}.\]
        Thus, using \Cref{lem:relative-variance} to obtain that $\mu_j = Z(\beta_{j+1})/Z(\beta_j)$, we employ \Cref{thm:mean-est} to observe that for all $j \in \{0, \dots, \ell-1\}$,
        \[\left|\frac{\E[\overline{X}_j]}{\frac{Z(\beta_{j+1})}{Z(\beta_j)}} - 1\right| = \frac{|\E[\overline{X}_j] - \mu_j|}{\mu_j} \leq \frac{|\E[\overline{X}_j] - \widetilde{\mu}_j|}{\mu_j} + \frac{|\widetilde{\mu}_j - \mu_j|}{\mu_j} \leq \frac{\nu\widetilde{\sigma}_j}{\mu_j} + \frac{\nu\mu_j}{2\mu_j} \leq \frac{2\nu\widetilde{\mu}_j}{\mu_j} + \frac{\nu}{2} \leq 2\nu\left(1 + \frac{\nu}{2}\right) + \frac{\nu}{2} < 3\nu.\]
        We also write $C_0 := \int_{B_d} \exp(-\beta_0\norm{x}^2) \;\mathrm{d}x$ for short, and observe from \Cref{lem:clipping-perturbation} that
        \[\left|\frac{C_0}{Z(\beta_0)} - 1\right| = \underset{x \sim \pi_{\beta_0}}{\P} [\norm{x} \geq 1] \leq \frac{\varepsilon}{8}.\]
        Merging these observation together, we obtain
        \begin{align*}
            \left|\frac{\E[\overline{X}]}{\Vol(K)} - 1\right| &= \left|\prod_{j=0}^{\ell-1} \frac{\E[\overline{X}_j]}{\frac{Z(\beta_{j+1})}{Z(\beta_j)}} \cdot \frac{C_0}{Z(\beta_0)} - 1\right| \leq \left(3\ell\nu + \frac{\varepsilon}{8}\right)\left(1 + 3\nu\right)^{\ell}\left(1 + \frac{\varepsilon}{8}\right) \leq \frac{\varepsilon}{4}.
        \end{align*}
        Next, observe that for all $j \in \{0, \dots, \ell-1\}$, we have
        \[\frac{\widetilde{\mu}_j}{\E[\overline{X}_j]} \leq \frac{\widetilde{\mu}_j}{\widetilde{\mu}_j - \left|\E[\overline{X}_j] - \widetilde{\mu}_j\right|} \leq \frac{1}{1 - \frac{\nu\widetilde{\sigma}_j}{\widetilde{\mu_j}}} \leq \frac{1}{1 - 2\nu} < 2,\]
        and so from \Cref{thm:mean-est}, we find that the variance of the estimator $\overline{X}$ satisfies
        \begin{align*}
            1 + \frac{\Var[\overline{X}]}{\E[\overline{X}]^2} &= \prod_{j=0}^{\ell-1} \left(1 + \frac{\Var[\overline{X}_j]}{\E[\overline{X}_j]^2}\right) \leq \prod_{j=0}^{\ell-1} \left(1 + \frac{\widetilde{\sigma}_j^2}{t^2 \E[\overline{X}_j]^2}\right) \leq \prod_{j=0}^{\ell-1} \left(1 + \frac{4\widetilde{\mu}_j^2}{t^2\E[\overline{X}_j]^2}\right) \\
            &< \left(1 + \frac{8}{t^2}\right)^{\ell} \leq \left(1 + \frac{\varepsilon^2}{24\ell}\right)^{\ell} \leq \exp\left(\frac{\varepsilon^2}{24}\right) \leq 1 + \frac{\varepsilon^2}{16}.
        \end{align*}

        Finally, using Chebyshev's inequality, we obtain that
        \[\P\left[\left|\frac{\overline{X}}{\Vol(K)} - 1\right| > \varepsilon\right] \leq \P\left[\left|\frac{\overline{X}}{\E[\overline{X}]} - 1\right| > \frac{\varepsilon}{2}\right] \leq \frac{4\Var[\overline{X}]}{\E[\overline{X}]^2\varepsilon^2} \leq \frac14,\]
        and so the total success probability is at least $3/4 - 1/12 = 2/3$.
        
        It remains to analyze the total transduction complexity. To that end, we adopt the convention that tildes are hiding polylogarithmic factors in terms of $d$, $R$ and $1/\varepsilon$ in the big-$O$-notation. For the parameters in the algorithm, we obtain that
        \[\xi_0 \in \widetilde{O}(1), \quad \ell_{\max} \in \widetilde{O}(\sqrt{d}), \quad \epsilon \in \widetilde{\Omega}(\varepsilon/\sqrt{d}), \quad \xi \in \widetilde{O}(1), \quad  \zeta \in \widetilde{O}(\sqrt{d}), \quad \text{and} \quad t \in \widetilde{O}(d^{1/4}/\varepsilon).\]

        In the $j$th iteration, the individual steps have the following costs:
        \begin{enumerate}[nosep]
            \item The rejection sampling step costs $O(\sqrt{d/(\beta_j\delta_j^2)})$.
            \item For the median estimation, observe that $\beta_j \leq \beta_0 \in O(d)$, and so $\Delta \in \widetilde{\Omega}(d)$. Thus, the total cost becomes $\widetilde{O}(\sqrt{d/(\beta_j\delta_j^2)})$.
            \item The mean estimation costs $\widetilde{O}(\sqrt{d/(\beta_j\delta_j^2)} \cdot d^{1/4}/\varepsilon)$.
            \item The annealing step costs $\widetilde{O}(\sqrt{d/(\beta_j\delta_j^2)} \cdot (1 + d\log(\delta_{j+1}/\delta_j)))$.
        \end{enumerate}
        Thus, the total cost for the $j$th iteration is
        \[\widetilde{O}\left(\sqrt{\frac{d}{\beta_j\delta_j^2}} \cdot \left(\frac{d^{1/4}}{\varepsilon} + d\log\frac{\delta_{j+1}}{\delta_j}\right)\right).\]

        Next, for all $j \in \{0, \dots, \ell-1\}$, observe that
        \[\delta_j = \min\left\{\frac{C_1}{\sqrt{\beta_j}}, C_2\right\}, \qquad \text{where} \qquad C_1 = \frac{\min\{1/2^{18},1/(2\xi)\}}{\sqrt{2d}}, \qquad \text{and} \qquad C_2 = \frac{1}{2^{18}\kappa}.\]
        Suppose that $\beta_j \geq (C_1/C_2)^2$, then
        \[\frac{\delta_{j+1}}{\delta_j} - 1 \leq \sqrt{\frac{\beta_j}{\beta_{j+1}}} - 1 \in O\left(\max\left\{\frac{1}{\sqrt{d}}, \frac{1}{R\sqrt{\beta_j}}\right\}\right).\]
        On the other hand, if $\beta_j \leq (C_1/C_2)^2$, then $\delta_{j+1}/\delta_j = 1$.

        Next, we distinguish between four cases:
        \begin{enumerate}[nosep]
            \item Suppose $\beta_j \geq \max\{(C_1/C_2)^2,d/R^2\}$. Then, we have $\ln(\delta_{j+1}/\delta_j) \in O(1/\sqrt{d})$, and $\delta_j = C_1/\sqrt{\beta_j} \in \widetilde{\Omega}(1/\sqrt{d\beta_j})$, and so $\sqrt{d/(\beta_j\delta_j^2)} \in \widetilde{O}(d)$.
            \item Suppose $(C_1/C_2)^2 \geq \beta_j \geq d/R^2$. Then, we have $\ln(\delta_{j+1}/\delta_j) = 0$ and $\delta_j = C_2 \in \Omega(1/\kappa) \subseteq \widetilde{\Omega}(1/\sqrt{d})$ and so $\sqrt{d/(\beta_j\delta_j^2)} \in O(d/\sqrt{\beta_j})$.
            \item Suppose $d/R^2 \geq \beta_j \geq (C_1/C_2)^2$. Then, we have $\ln(\delta_{j+1}/\delta_j) \in O(1/(R\sqrt{\beta_j}))$, and $\delta_j = C_1/\sqrt{\beta_j} \in \widetilde{\Omega}(1/\sqrt{d\beta_j})$, and so $\sqrt{d/(\beta_j\delta_j^2)} \in \widetilde{O}(d)$.
            \item Suppose $\beta_j \leq \min\{(C_1/C_2)^2,d/R^2\}$. Then, we have $\ln(\delta_{j+1}/\delta_j) = 0$ and $\delta_j = C_2 \in \Omega(1/\kappa) \subseteq \widetilde{\Omega}(1/\sqrt{d})$ and so $\sqrt{d/(\beta_j\delta_j^2)} \in O(d/\sqrt{\beta_j})$.
        \end{enumerate}

        Finally, we analyze the algorithm in two different cases:
        \begin{enumerate}[nosep]
            \item If $(C_1/C_2)^2 \geq d/R^2$, then we write $\ell_1^{(a)} = \min\{j \in [\ell] : \beta_j \leq C_1/C_2\}$, and $\ell_1^{(b)} = \ell_1 - \ell_1^{(a)} - 1$. Then, the total cost is $\widetilde{O}(T)$, where we use \Cref{eq:stage-1,eq:stage-2} to observe that $T$ satisfies
            \begin{align*}
                T &:= \frac{d^{1/4}}{\varepsilon}\sum_{j=0}^{\ell-1} \sqrt{\frac{d}{\beta_j\delta_j^2}} + d\sum_{j=0}^{\ell-1} \sqrt{\frac{d}{\beta_j\delta_j^2}} \log\frac{\delta_{j+1}}{\delta_j} \\
                &= \frac{d^{1/4}}{\varepsilon}\left[\sum_{j=0}^{\ell_1^{(a)}-1} d + \sum_{j=\ell_1^{(a)}}^{\ell_1-1} \frac{d}{\sqrt{\beta_j}} + \sum_{j=\ell_1}^{\ell-2} \frac{d}{\sqrt{\beta_j}}\right] + d\left[\sum_{j=0}^{\ell_1^{(a)}-1} \sqrt{d}\right] \\
                &\leq \frac{d^{1/4}}{\varepsilon}\left[d \cdot 4\sqrt{d}\ln\frac{\beta_0}{\beta_{\ell_1^{(a)}-1}} + d \cdot \frac{4\sqrt{d}}{\sqrt{\beta_{\ell_1-1}}} \ln\frac{\beta_{\ell_1^{(a)}}}{\beta_{\ell_1}} + d \cdot 4R \ln\frac{\beta_{\ell_1}}{\beta_{\ell-1}}\right] + d\left[\sqrt{d} \cdot 4\sqrt{d}\ln\frac{\beta_0}{\beta_{\ell_1^{(a)}-1}}\right] \\
                &\in \widetilde{O}\left(\frac{d^{1/4}}{\varepsilon}\left[d^{3/2} + dR\right] + d \cdot d\right) \subseteq \widetilde{O}\left(\frac{d^{5/4}(\sqrt{d} + R)}{\varepsilon} + d^2\right).
            \end{align*}
            \item If $(C_1/C_2)^2 \leq d/R^2$, then we write $\ell_2^{(a)} = \min\{j \in [\ell] : \beta_j \leq C_1/C_2\} - \ell_1$, and $\ell_2^{(b)} = \ell_2 - \ell_2^{(a)}$. Then, again we use \Cref{eq:stage-1,eq:stage-2} to analyze the total complexity $\widetilde{O}(T)$, where $T$ satisfies
            \begin{align*}
                T &:= \frac{d^{1/4}}{\varepsilon}\sum_{j=0}^{\ell-1} \sqrt{\frac{d}{\beta_j\delta_j^2}} + d \sum_{j=0}^{\ell-1} \sqrt{\frac{d}{\beta_j\delta_j^2}}\log\frac{\delta_{j+1}}{\delta_j} \\
                &= \frac{d^{1/4}}{\varepsilon}\left[\sum_{j=0}^{\ell_1-1} d + \sum_{j=\ell_1}^{\ell-2} \frac{d}{\sqrt{\beta_j}}\right] + d\left[\sum_{j=0}^{\ell_1-1} \sqrt{d} + \sum_{j=\ell_1}^{\ell_2^{(a)}-1} \frac{d}{R\sqrt{\beta_j}}\right] \\
                &\leq \frac{d^{1/4}}{\varepsilon}\left[d \cdot 4\sqrt{d} \ln\frac{\beta_0}{\beta_{\ell_1}} + d \cdot 4R\ln\frac{\beta_{\ell_1}}{\beta_{\ell-1}}\right] + d\left[\sqrt{d} \cdot 4\sqrt{d}\ln\frac{\beta_0}{\beta_{\ell_1}} + \frac{d}{R} \cdot 4R\ln\frac{\beta_{\ell_1}}{\beta_{\ell-1}}\right] \\
                &\in \widetilde{O}\left(\frac{d^{1/4}}{\varepsilon}\left[d^{3/2} + dR\right] + d\left[d + d\right]\right) \subseteq \widetilde{O}\left(\frac{d^{5/4}(\sqrt{d} + R)}{\varepsilon} + d^2\right).
            \end{align*}
        \end{enumerate}

This proves that \cref{alg:vol-est-algo} is a canonical transducer (call it $U$) that correctly approximates the volume with the claimed success probability and transduction complexity $W \in \widetilde{O}\left(\frac{d^{5/4}(\sqrt{d} + R)}{\varepsilon} + d^2\right)$.
It remains to note that we can now invoke \cref{thm:transducer-to-alg} with $K \in O(W)$ calls to $U$ to turn this into an actual quantum algorithm with constant state error $\eta$, which suffices to return the estimate with constant probability.
Since $U$ is a canonical transducer, it makes a single query, and hence the resulting quantum algorithm makes the claimed number of queries.
    \end{proof}

    We remark here that the description of $\zeta > 0$ ``sufficiently small'' is rather vague. One could conceivably go through the entire analysis and figure out exactly what choice of $\zeta$ would work. However, since the exact choice of $\zeta$ does not impact the transduction complexity (and hence the query complexity), we neglect this analysis in this work. On the other hand, tracking the time complexity of our algorithm necessitates doing this analysis, but we leave this for future work.
    
    

\clearpage 

    \section*{Acknowledgments}

    AC is supported by ERC Starting Grant 101163189.
    SA is supported by the French ANR project QUOPS (ANR-22-CE47-0003-01) and the French PEPR integrated project HQI (ANR-22-PNCQ-0002).

    The key ideas of this paper were developed over the course of the past year, without the use of AI. When finishing the write-up, we used AI to survey the relevant literature and GPT-6 Astra to complete the proofs of \cref{lem:truncated-Gamma,lem:volume-bound}.

    \bibliographystyle{alphaurl}
    \bibliography{references}

    \appendix

    \section{Gaussian norm bounds}
    \label{sec:app}

    Here we prove \Cref{thm:variance-bounds}, we establish the upper and lower bounds separately in \cref{lem:ub,lem:lb}. 
    
    \begin{lemma} \label{lem:ub}
        Let $d \in \N$, $R \geq 1$ and $B_d \subseteq K \subseteq RB_d$ convex. Let $\Omega = K$, and $H(x) = \norm{x}^2$. Then, for all $\beta > 0$, we have
        \begin{align*}
             \underset{x \sim \pi_{\beta}}{\E} \left[H(x)\right] \leq \min\left\{\frac{d}{2\beta}, R^2\right\}, \qquad  \underset{x \sim \pi_{\beta}}{\Var} \left[H(x)\right] \leq \min\left\{\frac{d}{\beta^2}, \frac{2R^2}{\beta}, \frac{R^4}{4}\right\}.
        \end{align*}
    \end{lemma}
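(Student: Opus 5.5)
The plan is to prove the four bounds separately. The trivial ones come from $\norm{x}\le R$ on $K$. The $\beta$-dependent ones come from \Cref{lem:log-part-derivatives} together with a log-concavity/scaling property of the partition function.

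\emph{Trivial bounds.} Since $0\le H\le R^2$ on $K$, we get $\E[H]\le R^2$ immediately. Any random variable with values in an interval of length $R^2$ has variance at most $R^4/4$, by Popoviciu's inequality. Keeping this step elementary is deliberate.

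\emph{Mean bound $d/(2\beta)$.} We use a scaling argument. Substituting $x=y/\sqrt{\beta}$ gives
\[
Z(\beta)=\beta^{-d/2}\int_{\sqrt\beta K} e^{-\norm{y}^2}\,\mathrm{d}y .
\]
Because $K$ is convex and contains $0$, the set $\sqrt\beta K$ grows with $\beta$. So $g(\beta):=\int_{\sqrt\beta K}e^{-\norm{y}^2}\,\mathrm{d}y$ is nondecreasing, which means $L(\beta)=-\frac d2\ln\beta+\ln g(\beta)$ with $(\ln g)'\ge 0$. Hence $L'(\beta)\ge -d/(2\beta)$, and by \Cref{lem:log-part-derivatives} this is exactly $\E_{\pi_\beta}[H]\le d/(2\beta)$.

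\emph{Variance bound $2R^2/\beta$.} Write $\Var[H]=L''(\beta)=-\frac{\mathrm{d}}{\mathrm{d}\beta}\E_{\pi_\beta}[H]$. A more direct route: $\Var[H]\le \E[H^2]\le R^2\,\E[H]$. This is at most $R^2 d/(2\beta)$, which is not quite the claimed bound, so we refine the argument. Apply the scaling identity with the derivative taken in the radial direction. Define $\phi(t)=\ln\int_K e^{-\beta t\norm{x}^2}$; we will use $t\mapsto \E_{\pi_{\beta t}}[\beta t\,H]$. The key fact is that the law of $\beta H$ under $\pi_\beta$ is that of $\norm{y}^2$ under a standard-type Gaussian restricted to $\sqrt\beta K$. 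For a Gaussian restricted to a convex set containing the origin, the radial density of $s=\norm{y}^2$ is $s^{d/2-1}e^{-s}$ multiplied by a nonincreasing factor $\sigma(\{u:\sqrt{s}u\in\sqrt\beta K\})$. We then show $\Var[s]\le \E[s]\cdot 2\sup s$ and $\Var[s]\le d$. The claimed bounds follow: $\Var[H]=\beta^{-2}\Var[s]\le d/\beta^2$, and $\Var[H]\le\beta^{-2}\cdot 2\beta R^2\cdot\E[s]/(\ldots)$, which gives $2R^2/\beta$ after using $\E[s]\le d/2$ together with $\sup s\le\beta R^2$. Concretely, we would prove $\Var[s]\le 2\sup s$ directly via the Brascamp–Lieb-type bound.

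\emph{Main obstacle: $\Var[H]\le d/\beta^2$.} This is equivalent to $\Var[s]\le d$ for the truncated density $p(s)\propto s^{d/2-1}e^{-s}h(s)$ with $h$ nonincreasing. A plain Gamma$(d/2)$ has variance exactly $d/2$. Multiplying by a nonincreasing $h$ keeps the density log-concave in the variable $u=\ln s$ only partially, so the first thing to try is this: write $h$ as a mixture of indicators $\mathbbm 1_{s\le c}$ (layer-cake), and show that each truncated Gamma $s^{d/2-1}e^{-s}\mathbbm 1_{s\le c}$ has variance at most $d/2$. That last claim holds because such a density is log-concave with log-density having second derivative $\le -(d/2-1)/s^2$, combined with a direct comparison. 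The mixture could in principle inflate the variance through its between-component term. We would control that term using the fact that all components have means in $[0,d/2]$ and that the weights come from a monotone $h$. This yields the factor-$2$ slack, $\Var[s]\le d$. Convexity/log-concavity of $\pi_\beta$ (a log-concave measure), with Brascamp–Lieb applied to $H$, is the fallback: Brascamp–Lieb gives $\Var[H]\le \E[\norm{\nabla H}^2]/(2\beta)=2\E[H]/\beta$. That directly yields both $\Var[H]\le 2R^2/\beta$ (via $\E H\le R^2$) and $\Var[H]\le d/\beta^2$ (via $\E H\le d/(2\beta)$). I would therefore actually lead with Brascamp–Lieb, the potential $\beta\norm{x}^2+\iota_K$ being $2\beta$-strongly convex, and keep the layer-cake argument only as backup.
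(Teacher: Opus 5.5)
Your proof is correct. For the variance it ends up where the paper does: the paper also applies Brascamp--Lieb to get $\Var_{\pi_\beta}[H]\le \frac{2}{\beta}\E_{\pi_\beta}[H]$, substitutes the two mean bounds, and uses Popoviciu for $R^4/4$. You should lead with that and delete the radial/layer-cake digression. As written it is not a proof: the between-component variance of the mixture is never controlled, and the log-concavity of the Gamma factor fails for $d=1$.

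The genuine difference is the bound $\E_{\pi_\beta}[H]\le d/(2\beta)$. You and the paper work with the same function $g(\beta)=\beta^{d/2}Z(\beta)=\int_{\sqrt\beta K}e^{-\norm{y}^2}\,\mathrm{d}y$, but use different properties of it.

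\begin{itemize}
\item The paper shows that $g$ is \emph{log-concave}, via Pr\'ekopa--Leindler applied to $(\beta,x)\mapsto e^{-\norm{x}^2}\mathbbm{1}_K(x/\sqrt\beta)$. Since $(\ln g)''=-\frac{d}{2\beta^2}+\Var_{\pi_\beta}[H]$, this is the second-order statement $\Var_{\pi_\beta}[H]\le \frac{d}{2\beta^2}$. The paper then integrates from $\beta$ to $\infty$, using $\E_{\pi_s}[H]\to 0$, to recover the mean bound. Its display has compensating sign slips, but that is the content.
\item You use only that $g$ is \emph{nondecreasing}, i.e.\ $(\ln g)'\ge 0$. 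This gives the mean bound in one line and needs only that $K$ is star-shaped about the origin. It avoids both Pr\'ekopa--Leindler and the limit at infinity.
\end{itemize}

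What you give up is the second-order information, so your route needs Brascamp--Lieb for every variance bound. The paper's log-concavity step already contains $\Var_{\pi_\beta}[H]\le d/(2\beta^2)$, which is sharper than the stated $d/\beta^2$. The paper does not exploit this and invokes Brascamp--Lieb anyway.
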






    \begin{proof} 
        The upper bound of $R^2$ for $\E[\norm{x}^2]$ is trivial, so it remains to prove that $\E[\norm{x}^2] \leq d/\beta$. To that end, consider the function $f : \R_{\geq 0} \times \R^d \mapsto \R_{\geq 0}$, defined by $f(\beta,x) = \exp(-\norm{x}^2) \cdot \mathbbm 1_K(x/\sqrt{\beta})$. Then we can see directly that $f$ is logconcave, and hence, by the Prékopa-Leindler inequality, so are all its marginals. By integrating the last $d$ dimensions, we obtain that $g : \R_{\geq 0} \to \R_{\geq 0}$, defined by $g(\beta) = \beta^{d/2}\int_K \exp(-\beta\norm{x}^2)\;\mathrm{d}x = \beta^{d/2}Z(\beta)$, is logconcave.\footnote{Note that Cousins and Vempala claim on page 1266 that the function $z(a) = a^{d+1}\int_K \exp(-a\norm{x}^2/2) \;\mathrm{d}x$ is logconcave, but this does not follow from their earlier-stated Lemma~7.11. What is true, though, is that $z(a) = a^{d/2}\int_K \exp(-a\norm{x}^2) \;\mathrm{d}x$ is logconcave (as proven above), and this is enough to fix their proof.} From the logconcavity of $g$, we find that
        \[0 \geq \frac{\mathrm{d}^2}{\mathrm{d}\beta^2} \left[\log\left(\beta^{\frac{d}{2}}Z(\beta)\right)\right] = -\frac{d}{2\beta^2} + \frac{\mathrm{d}}{\mathrm{d}\beta} \left[\frac{Z'(\beta)}{Z(\beta)}\right] = -\frac{d}{2\beta^2} + \frac{\mathrm{d}}{\mathrm{d}\beta} \underset{x \sim \pi_{\beta}}{\E} \left[\norm{x}^2\right].\]
        As $\E_{x \sim \pi_{\beta}}[\norm{x}^2] \to 0$ as $\beta \to \infty$, we have
        \[\underset{x \sim \pi_{\beta}}{\E}\left[\norm{x}^2\right] = \int_\beta^{\infty} \frac{\mathrm{d}}{\mathrm{d}s} \underset{x \sim G_s}{\E} \left[\norm{x}^2\right] \;\mathrm{d}s \leq \int_\beta^{\infty} \frac{d}{2s^2} \;\mathrm{d}s = \frac{d}{2\beta}.\]
        This completes the upper bounds for $\E_{x \sim \pi_{\beta}}[\norm{x}^2]$.
    
        For the next claim, we consider the Brascamp-Lieb inequality, \cite[Theorem~4.1]{brascamp1976extensions}. We let $f(x) = \beta\norm{x}^2$ on $K$, and let it go to infinity arbitrarily fast outside of $K$. Let $h(x) = \norm{x}^2$ on $K$, and let it go to $0$ arbitrarily fast outside. Then on $K$, we have $\nabla^2f(x) = 2\beta I$ and $\nabla h(x) = 2x$, and so
        \[\underset{x \sim \pi_{\beta}}{\Var}[\norm{x}^2] \leq 2\underset{x \sim \pi_{\beta}}{\E}\left[x^T \frac{I}{\beta}x\right] = \frac{2}{\beta} \underset{x \sim \pi_{\beta}}{\E} \left[\norm{x}^2\right].\]
        Then, the first two upper bounds on $\Var_{x \sim \pi_{\beta}}[\norm{x}^2]$ follow from the upper bounds on $\E_{x \sim \pi_{\beta}}[\norm{x}^2]$, and the last one follows from Popoviciu's inequality.
    \end{proof}

    To prove the lower bounds, we will use the following properties of the truncated Gamma distribution. 

    \begin{lemma}[Truncated Gamma distribution] \label{lem:truncated-Gamma}
        Let $\alpha \geq 2$, $\beta,L>0$ and consider a random variable $S$ drawn from the truncated Gamma distribution whose density is $p(s) = \frac{s^{\alpha-1}e^{-\beta s}}{\int_0^L t^{\alpha-1}e^{-\beta t} \, dt}$ for $s \in (0,L]$.
        Then there exist universal constants $c,C>0$ such that 
        \begin{enumerate}
            \item $\E[S] \geq c \min\{L,\frac{\alpha}{\beta}\}$.
            \item $\Var[S] \geq c \min\{\frac{L^2}{\alpha^2}, \frac{\alpha}{\beta^2}\}$.
            \item  For all $0 \leq a < b \leq L$ we have $\P(a\leq S \leq b) \leq C \max\{\frac{\alpha}{L}, \frac{\beta}{\sqrt{\alpha}}\} (b-a)$.
        \end{enumerate}
    \end{lemma}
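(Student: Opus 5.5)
We rescale $S$ to reduce every claim to the standard Gamma density, and then split according to whether the truncation point $L$ lies above or below the mode region of the untruncated distribution. Write $m := (\alpha-1)/\beta$ for the mode of $s^{\alpha-1}e^{-\beta s}$. The log-density $\phi(s) = (\alpha-1)\ln s - \beta s$ is concave, with $\phi''(s) = -(\alpha-1)/s^2$. Hence $S$ is log-concave on $(0,L]$, and we will use this repeatedly.

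\emph{Case 1: $L \geq 2\alpha/\beta$.} The truncation is then mild. The untruncated Gamma$(\alpha,\beta)$ variable has mean $\alpha/\beta$ and variance $\alpha/\beta^2$. By a Chebyshev or standard Gamma tail bound, it places at least a constant fraction of its mass in $[\alpha/(2\beta), L]$ once $\alpha \geq 2$. Conditioning on $(0,L]$ therefore changes the law only by a constant factor on that window.

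For (1), we get $\E[S] \geq \frac{\alpha}{2\beta}\P[S \geq \alpha/(2\beta)] \gtrsim \alpha/\beta$.

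For (2), near the mode we have $|\phi''| \approx \beta^2/\alpha$. So on the interval $[m - \sqrt{\alpha}/\beta,\, m + \sqrt{\alpha}/\beta] \cap (0,L]$, the density stays within constant factors of its maximum. Both halves $[m - \sqrt{\alpha}/(2\beta), m]$ and $[m + \sqrt{\alpha}/(4\beta), m + \sqrt{\alpha}/(2\beta)]$ lie in $(0,L]$ and each carry constant mass. The spread between them gives $\Var[S] \gtrsim \alpha/\beta^2$.

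For (3), the maximum density of a Gamma$(\alpha,\beta)$ variable is $\Theta(\beta/\sqrt{\alpha})$ by Stirling. Dividing by the constant normalization loss bounds $\P(a \leq S \leq b)$ by $O(\beta/\sqrt{\alpha})(b-a)$.

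\emph{Case 2: $L < 2\alpha/\beta$.} We substitute $s = Lu$, so that $u \in (0,1]$ has density proportional to $u^{\alpha-1}e^{-\beta L u}$ with $\beta L < 2\alpha$. On $[1 - 1/\alpha, 1]$, the factor $u^{\alpha-1}$ is at least $e^{-1}$ times its value at $1$. The factor $e^{-\beta L u}$ varies by at most $e^{2}$ there.

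We compare with the whole mass. Since $u^{\alpha-1}e^{-\beta L u}$ is log-concave, its mass on $(0,1]$ is at most of order $\max\{1/\alpha,\, \sqrt{\alpha}/(\beta L)\}$ times the peak, where the peak is taken over $(0,1]$. Under $\beta L \lesssim \alpha$ this is $O(1/\alpha)$ when the peak sits at or near $1$. If instead $\beta L \in [\alpha-1, 2\alpha)$, the peak is interior and the effective width is $\sqrt{\alpha}/(\beta L)$, which is comparable to $1/\sqrt{\alpha}$.

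In the first subcase, a constant fraction of the mass lies in $[1 - 1/\alpha, 1]$. This yields $\E[S] \gtrsim L$. It also yields $\Var[S] \gtrsim L^2/\alpha^2$, because the mass on $[1-1/\alpha, 1-1/(2\alpha)]$ and on $[1-1/(4\alpha), 1]$ are both constant. The density bound is $O(\alpha/L)$. In the second subcase we reuse the Case 1 estimates with the mode in $(L/4, L]$, which produces the same $\min$ and $\max$ expressions.

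The main obstacle is the bookkeeping in Case 2 when $\beta L$ is close to $\alpha$. There the peak sits near $u = 1$, and neither the boundary picture nor the interior picture is clean. To handle this, we would use a single log-concave lemma: for a one-dimensional log-concave density $p$ on an interval, $\sup p \cdot \sigma \asymp 1$, where $\sigma$ is the standard deviation. We would then bound $\sigma$ from below by the width of the window on which $\phi$ drops by at most $1$ from its maximum. That width is $\gtrsim \min\{L/\alpha,\, \sqrt{\alpha}/\beta\}$, obtained from the curvature $|\phi''| \leq \alpha/s^2$ together with the boundary at $L$.

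This lemma gives (2) and (3) at once, since $\sigma \gtrsim \min\{L/\alpha, \sqrt{\alpha}/\beta\}$ and $\sup p \lesssim 1/\sigma$. Item (1) then follows from the general bound $\E[S] \gtrsim$ the location of the peak region, which is $\gtrsim \min\{L, \alpha/\beta\}$ because $\alpha \geq 2$ keeps the mass away from $0$.
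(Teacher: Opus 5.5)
Your fallback argument in the last two paragraphs is essentially the paper's proof. You bound $\|p\|_\infty$ by exhibiting a window below the maximizer $s^*=\min\{L,(\alpha-1)/\beta\}$ on which $h(s)=s^{\alpha-1}e^{-\beta s}$ stays within a constant of $h(s^*)$. The paper uses the window $[(1-\frac{1}{2(\alpha-1)})s^*,s^*]$ when $\beta L\le 2\alpha$, and the Gamma mode bound plus Markov when $\beta L>2\alpha$. It then reads off (2) and (3) from $\Var[S]\gtrsim 1/\|p\|_\infty^2$ and $\P(a\le S\le b)\le\|p\|_\infty(b-a)$.

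Two remarks on that part. First, only the direction $\sigma\gtrsim 1/\sup p$ is used, and it holds for every density, so log-concavity is not needed. Second, at a boundary maximum $s^*=L$ the slope $\phi'(L)$ can be as large as $(\alpha-1)/L$. So the $L/\alpha$ window comes from the first-order bound $\phi(s)-\phi(s^*)\ge(\alpha-1)\ln(s/s^*)$ for $s\le s^*$, not from curvature.

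The Case~2 argument you give before that fallback, however, is wrong and should be discarded rather than patched. Write $g(u)=u^{\alpha-1}e^{-\beta Lu}$.

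\begin{itemize}
\item Your claim that a constant fraction of the mass of $g$ sits in $[1-1/\alpha,1]$ needs $\int_0^1 g=O(g(1)/\alpha)$. That requires a slope $\alpha-1-\beta L\gtrsim\alpha$ at $u=1$.
\item This fails for large $\alpha$ when $\beta L=\alpha-1-\alpha^{3/4}$. The mass then spreads over width $\approx\alpha^{-3/4}$, so the fraction is $\approx\alpha^{-1/4}\to0$; at $\beta L=\alpha-1$ it is $\approx\alpha^{-1/2}$.
\item Your stated bound $\max\{1/\alpha,\sqrt\alpha/(\beta L)\}$ is at least $1/(2\sqrt\alpha)$ whenever $\beta L<2\alpha$, so it never gives $O(1/\alpha)$.
\item The ``reuse Case~1'' subcase does not cover this range either. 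Once $L$ lies many standard deviations below the mode, the normalization $\P(G\le L)$ is no longer bounded below by a constant.
\end{itemize}

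So a whole range $\beta L\in((1-c)\alpha,\alpha-1)$ is uncovered, not just a bookkeeping issue at $\beta L\approx\alpha$. The paper's route avoids this because, in this regime, it only needs a \emph{lower} bound on the normalization, which the window provides, and never an upper bound.

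Finally, item (1) in your unified version is only gestured at. The clean argument is that $h$ is nondecreasing on $(0,s^*]$, so $\E[S\mid S\le s^*]\ge s^*/2$. Hence $\E[S]\ge s^*/2\ge\frac14\min\{L,\alpha/\beta\}$, using $\alpha\ge2$ in the form $(\alpha-1)/\beta\ge\alpha/(2\beta)$. This also replaces your Case~1 claim that $G$ has constant mass on $[\alpha/(2\beta),L]$. The claim is true, but Chebyshev alone gives nothing for it when $\alpha\le4$.
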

    \begin{proof}
 For the second and third item, we will use the elementary fact that if a density function $p$ is upper bounded by $M$, then $\P[a \leq S \leq b] \leq \|p\|_\infty (b-a)$ and $\Var[S] \in \Omega(1/\|p\|_{\infty}^2)$, where $\|p\|_\infty = \sup_{s \in [0,L]} p(s)$. Throughout, we let $h(s) = s^{\alpha-1}e^{-\beta s}$ so that $p(s) = h(s)/Z$ where $Z = \int_0^L h(s) \ ds$.

We now distinguish two cases: 
\begin{enumerate}
    \item[(i)] $\beta L \leq 2\alpha$: 
    By setting the derivative to zero, we find that the maximum of $h$ on $[0,L]$ is attained at $s^* = \min\{L,\frac{\alpha-1}{\beta}\}$. We note that $s^* \geq L/4$ since $\beta L \leq 2\alpha$ and $\alpha \geq 2$.   
    The function $h$ is non-decreasing on $[0,s^*]$, which implies that 
    \[
    \E[S] \geq \P(S \leq s^*) \cdot \frac{s^*}{2} + \P(S > s^*) s^* \geq s^*/2 \geq L/8.
    \]
    Here we use that the conditional expectation of $S$ when $S \leq s^*$ is at least $\frac{s^*}{2}$ since the density is non-decreasing. We finally upper bound the density. 
    To do so, note that for $\gamma s^* \leq s \leq s^*$ we have 
    \[
    \frac{h(s)}{h(s^*)} = \left(\frac{s}{s^*}\right)^{\alpha-1} \exp(-\beta(s-s^*)) \geq \gamma^{\alpha-1}
    \]
    Setting $\gamma = (1-\frac{1}{2(\alpha-1)})$, we have $h(s) \geq \frac{1}{2} h(s^*)$. Hence, $Z \geq \int_{\gamma s^*}^{s^*} h(s) \geq  h(s^*) \frac{s^*}{2(\alpha-1)}$, and thus $\|p\|_\infty \leq \frac{2(\alpha-1)}{s^*} \leq \frac{8(\alpha-1)}{L}$.
    In conclusion, for all $\alpha \geq 2$ we have $\E[S] = \Omega(L)$ and $\|p\|_\infty \leq C \alpha/L$.
    \item[(ii)] $\beta L> 2\alpha$: 
    We let $G$ be distributed according the the untruncated Gamma distribution with the same parameters. It follows that $\E[G] = \frac{\alpha}{\beta}$. Markov's inequality gives $\P(G \geq L) \leq \frac{\alpha}{\beta L} < \frac{1}{2}$. Hence, $\P(G \leq L) \geq \frac12$. The density of the Gamma distribution can be shown to be at most $C \frac{\beta}{\sqrt{\alpha}}$ for some universal constant $C>0$. Restricting to an event with probability $\geq 1/2$ increases the density by at most a factor $2$ and hence $\|p\|_{\infty} \leq C \frac{\beta}{\sqrt{\alpha}}$. It remains to prove the lower bound on $\E[S]$. To do so, note that the maximizer of the density of $G$ is $s^* = \frac{\alpha-1}{\beta} \in [0,L]$. 
    Using the same argument as in part (i), we obtain $\E[S] \geq s^*/2 \geq \frac{\alpha}{4\beta}$. 
     In conclusion, for all $\alpha \geq 2$ we have $\E[S] = \Omega(\alpha/\beta)$ and $\|p\|_\infty \leq C \beta/\sqrt{\alpha}$.
\end{enumerate}

The lemma then follows from the established lower bounds on $\E[S]$ and the upper bounds on $\|p\|_\infty$. 
    \end{proof}

    \begin{lemma} \label{lem:lb}
        Let $d \in \N$, $d \geq 4$, $R \geq 1$ and $B_d \subseteq K \subseteq RB_d$ convex. Let $\Omega = K$, and $H(x) = \norm{x}^2$. Then, for all $\beta > 0$, we have
        \begin{align*}
             \underset{x \sim \pi_{\beta}}{\E} \left[H(x)\right] \in \Omega\left(\min\left\{1, \frac{d}{\beta}\right\}\right), \qquad  \underset{x \sim \pi_{\beta}}{\Var} \left[H(x)\right] \in \Omega\left(\min\left\{\frac{1}{d^2},\frac{d}{\beta^2}\right\}\right).
        \end{align*}
        Furthermore, there exists a universal constant $C>0$ such that for any $0 \leq a <b$ we have 
        \[
        \underset{x \sim \pi_{\beta}}{\P} \left[a \leq \norm{x}^2 \leq b\right] \leq C \max\{d,\beta/\sqrt{d}\} \cdot (b-a).
        \]
    \end{lemma}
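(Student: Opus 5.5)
The plan is to reduce everything to the one-dimensional law of $S := \norm{x}^2$ and then invoke \Cref{lem:truncated-Gamma}. Polar coordinates give $x = \rho u$ with $u \in S^{d-1}$. Write $\rho_K(u) \in [1,R]$ for the radial function of $K$; it is at least $1$ because $B_d \subseteq K$, and at most $R$ because $K \subseteq RB_d$. Under $\pi_\beta$, the direction $u$ then has density proportional to $\int_0^{\rho_K(u)} \rho^{d-1}e^{-\beta\rho^2}\,\mathrm{d}\rho$. Conditionally on $u$, the squared radius $S=\rho^2$ has density proportional to $s^{d/2-1}e^{-\beta s}$ on $(0,L_u]$ with $L_u=\rho_K(u)^2\in[1,R^2]$. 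This is exactly the truncated Gamma law of \Cref{lem:truncated-Gamma} with $\alpha = d/2 \geq 2$, which is where the hypothesis $d\ge 4$ enters. So $\pi_\beta$ restricted to $S$ is a mixture over $u$ of truncated Gamma distributions with common $\alpha,\beta$ and truncation points $L_u \geq 1$.

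The expectation bound follows by averaging. Item 1 of \Cref{lem:truncated-Gamma} gives $\E[S\mid u] \geq c\min\{L_u, \alpha/\beta\} \geq c\min\{1, d/(2\beta)\}$ for every $u$, and integrating over $u$ yields $\E[H]\in\Omega(\min\{1,d/\beta\})$.

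The anti-concentration bound also follows by averaging. Item 3 gives $\P(a\le S\le b\mid u)\le C\max\{\alpha/L_u,\beta/\sqrt\alpha\}(b-a)\le C'\max\{d,\beta/\sqrt d\}(b-a)$, using $L_u\ge1$. Two cases need a word. If $b>L_u$, the interval is clipped to $[a,\min\{b,L_u\}]$, which only decreases its length. If $a > L_u$, the conditional probability is zero. Averaging over $u$ then gives the claimed inequality.

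The variance is the main obstacle, since a mixture's variance is not controlled by the conditional variances' worst case in a naive way. By the law of total variance, however, $\Var[S]\ge \E_u[\Var(S\mid u)]$, which avoids the problem. Item 2 of \Cref{lem:truncated-Gamma} gives $\Var(S\mid u) \geq c\min\{L_u^2/\alpha^2, \alpha/\beta^2\} \geq c\min\{4/d^2, d/(2\beta^2)\}$ pointwise in $u$, again using $L_u\ge 1$. Averaging gives $\Var[H]\in\Omega(\min\{1/d^2,d/\beta^2\})$.

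The remaining technical points are light. First, the polar decomposition must be justified, which is possible because $K$ is star-shaped about the origin: it is convex and contains $B_d$, so every ray from $0$ meets $K$ in a segment $[0,\rho_K(u)]$. Second, the mixture weights should be checked to be legitimate probability densities on the sphere. Nothing in the argument depends on the specific weights, since all bounds hold uniformly in $u$.
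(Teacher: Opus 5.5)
Your proposal is correct and is essentially the paper's proof. You disintegrate $\pi_\beta$ by direction, observe that conditionally on $u$ the squared norm follows a truncated Gamma law with $\alpha = d/2 \geq 2$ and truncation point $\rho(u)^2 \geq 1$, apply \Cref{lem:truncated-Gamma} uniformly in $u$, and average, using the law of total variance for the variance bound. Your explicit treatment of intervals extending past $\rho(u)^2$ in the anti-concentration step is a small point the paper leaves implicit.
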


\begin{proof}
    For every $d$-dimensional unit vector $v \in \mathbb S^{d-1}$, let $\rho(v)$ be the length of the line segment from $0$ until the boundary of $K$ in the direction of $v$, i.e., $\rho(v) := \sup\{r \geq 1 : rv \in K\}$. 
    
    Now, for $x \sim \pi_\beta$, let $v = x/\|x\|$ be its direction. Let $q_\beta$ be the induced probability distribution on the sphere $\mathbb S^{d-1}$. Then, conditioned on $x$ being a positive multiple of $v$, the squared norm $s = \norm{x}^2$ has density proportional to  
    \[
    p_v(s) = s^{\frac{d}{2}-1} \exp(-\beta s) \mathbbm{1}_{[0,\rho(v)^2]}.
    \]
    From \Cref{lem:truncated-Gamma}, using that $\rho(v)\geq1$, we obtain that there exist universal constants $c,C>0$ such that
    \begin{align*}
        \E_{s \sim p_v}[s] &\geq c \min\left\{\rho(v)^2, \frac{d}{2\beta}\right\} \geq c \min\left\{1, \frac{d}{2\beta}\right\} \\
        \Var_{s \sim p_v}[s] &\geq c \min\left\{\frac{4\rho(v)^4}{d^2},\frac{d}{2\beta^2}\right\} \geq c \min\left\{\frac{4}{d^2},\frac{d}{2\beta^2}\right\},
    \end{align*}
    and for all $0 \leq a <b \leq \rho(v)^2$ we have 
    \[
    \P_{s \sim p_v}(a\leq s \leq b) \leq C \max\left\{\frac{d}{2\rho(v)^2}, \frac{\beta}{\sqrt{d/2}}\right\} (b-a) \leq C\sqrt{2} \max\left\{d, \frac{\beta}{\sqrt{d}}\right\} (b-a).
    \]
    
    Finally, using the previous lower bounds that are uniform in $v$, we have 
    \begin{align*}
    \E_{x \sim \pi_\beta} [\norm{x}^2] &= \E_{v \sim q_\beta}[\E_{s \sim p_v}s] \geq c \min\left\{1, \frac{d}{2\beta}\right\},
    \end{align*}
    and, using the law of total variance, we have 
    \begin{align*}
    \Var_{x \sim \pi_\beta}[\norm{x}^2] &\geq \E_{v \sim q_\beta}\left[\Var_{s \sim p_v}[s] \right] \geq c \min\left\{\frac{4}{d^2},\frac{d}{2\beta^2}\right\}, 
    \end{align*}
    and finally, for all $0 \leq a <b$, we have $\underset{x \sim \pi_{\beta}}{\P} \left[a \leq \norm{x}^2 \leq b\right] \leq C \max\{d,\beta/\sqrt{d}\} \cdot (b-a)$.    
\end{proof}

\section{Proof of \Cref{lem:volume-bound}}
\label{app:volume-bound}

Here we provide the deferred proof of \cref{lem:volume-bound}, which we restate here for convenience. 

\volumebound*

\begin{proof}
Write
\[
S=(x+\delta B_d)\cap K,
\qquad
G=S\cap rB_d.
\]
If $x=0$, then $S\subseteq rB_d$ and the claim is immediate.
Otherwise, write $x=\rho v$, where $\rho>0$ and $\|v\|=1$, and let
\[
H=\{y\in S:\langle y-x,v\rangle\le 0\}.
\]

We first show that $\operatorname{Vol}(S)\le 2\operatorname{Vol}(H)$.
For any $y=x+u+tv\in S$ with $u\perp v$ and $t>0$, its reflection
$y'=x+u-tv$ lies in $x+\delta B_d$. Moreover,
\[
u-tv\in\delta B_d\subseteq K,
\qquad
y'\in[u-tv,y]\subseteq K.
\]
Thus reflection across the hyperplane through $x$ perpendicular to $v$
maps $S\setminus H$ injectively into $H$ and preserves volume. Hence, $\operatorname{Vol}(S)\le 2\operatorname{Vol}(H)$.

We now bound $\operatorname{Vol}(H)$ in terms of $\operatorname{Vol}(G)$. To do so, set
\[
\alpha=\frac{r}{\sqrt{r^2+\delta^2}}.
\]
We claim that $\alpha(H\setminus G)\subseteq G$.
Indeed, for $y\in H\setminus G$,
\[
\|y\|^2
=\|x\|^2+2\langle x,y-x\rangle+\|y-x\|^2
\le r^2+\delta^2,
\]
so $\alpha y\in rB_d$. Also, $\alpha y\in K$ by convexity and
$0\in K$. Finally, using $\|y\|>r$ and $\|x\|\le r$,
\begin{align*}
\|\alpha y-x\|^2
&=\alpha\|y-x\|^2+(1-\alpha)\|x\|^2
  -\alpha(1-\alpha)\|y\|^2\\
&\le \alpha\delta^2+(1-\alpha)^2r^2\\
&\le \delta^2,
\end{align*}
where the last step follows from
\[
1-\alpha
=1-\left(1+\frac{\delta^2}{r^2}\right)^{-1/2}
\le \frac{\delta^2}{r^2}.
\]
This proves the claim. Consequently,
\[
\operatorname{Vol}(H)
\le \operatorname{Vol}(G)+\operatorname{Vol}(H\setminus G)
\le (1+\alpha^{-d})\operatorname{Vol}(G).
\]
Since
\[
\alpha^{-d}
=\left(1+\frac{\delta^2}{r^2}\right)^{d/2}
\le \exp\left(\frac{d\delta^2}{2r^2}\right)
\le \sqrt{e},
\]
we conclude that
\[
\operatorname{Vol}(S)
\le 2\operatorname{Vol}(H)
\le 2(1+\sqrt{e})\operatorname{Vol}(G)
\in O(\operatorname{Vol}(G)). \qedhere
\]
\end{proof}

\end{document}